\newtheorem{remark}{Remark}
\renewcommand{\labelitemii}{$\bullet$}
\newsavebox{\mypti}
\newsavebox{\mypto}
\newsavebox{\mypta}
\newsavebox{\myptu}
\newsavebox{\mypte}
\newcommand{\FF}[1]{\mathfrak{#1}}
\newcommand{\B}[1]{\mathbf{#1}}
\newcommand{\BB}[1]{\mathbb{#1}}
\newcommand{\CC}[1]{\mathscr{#1}}
\newcommand{\TT}[1]{\mathtt{#1}}
\newcommand{\CPL}{\mathsf{CPL}}
\newcommand{\ICPL}{\mathsf{iCPL}}
\newcommand{\ICPLL}{\mathsf{iCPL}_{0}}
\newcommand{\NDCPLLV}{\B{ND}_{\mathsf{iCPL0}}^{\mathrm{CbN}}}
\newcommand{\NDCPL}{\B{ND}_{\mathsf{iCPL}}}
\newcommand{\NDCPLL}{\B{ND}_{\mathsf{iCPL}_{0}}}
\newcommand{\STLC}{\lambda_{\to}^{\{\}}}
\newcommand{\STLCINT}{\lambda_{\to,\cap}}
\newcommand{\TCI}{\B C\lambda_{\to}}
\newcommand{\LCPL}{\B C\lambda^{\{\}}_{\to}}
\newcommand{\Pto}{\rightarrowtail}
\newcommand{\To}{\Rightarrow}
\newcommand{\FN}{\mathsf{FN}}
\newcommand{\bone}{{\color{blue}\mathscr b}}
\newcommand{\btwo}{{\color{blue}\mathscr c}}
\newcommand{\bthree}{{\color{blue}\mathscr d}}
\newcommand{\bfour}{{\color{blue}\mathscr e}}
\newcommand{\bfive}{{\color{blue}\mathscr f}}
\newcommand{\TTH}{{\color{blue}\mathsf{Th}}}
\newcommand{\TOP}{{\color{blue}\top}}
\newcommand{\BOT}{{\color{blue}\bot}}
\newcommand{\bvar}{{\color{blue}\mathscr x}}
\newcommand{\BOX}{\mathbf{C}}
\newcommand{\TCINT}{\B C\lambda_{\to,\cap}}
\newcommand{\srank}[1]{\lceil #1\rceil}
\newcommand{\GHRED}{\mathrm{H^{*}Red}}
\newcommand{\HRED}{\mathrm{HRed}}
\newcommand{\RED}{\mathrm{NRed}}
\newcommand{\RRED}{\mathrm{Red}}
\newcommand{\SNRED}{\mathrm{SNRed}}
\newcommand{\NNRED}{\mathrm{NNRed}}
\newcommand{\HNRED}{\mathrm{HNRed}}
\newcommand{\Lnu}{\Lambda_{\oplus,\nu}}
\newcommand{\HLnu}{\Lambda_{\oplus,\nu, \{\}}}
\newcommand{\HNN}{\mathrm{HN}}
\newcommand{\NNN}{\mathrm{NN}}
\newcommand{\SN}{\mathrm{SN}}
\newcommand{\sn}{\mathrm{sn}}
\newcommand{\EVL}{\Lambda_{\mathsf{{PE}}}}
\newcommand{\EVLL}{\Lambda^{\{\}}_{\mathsf{{PE}}}}
\newcommand{\redbeta}{\rightarrowtriangle_{\beta}}
\newcommand{\redperm}{\rightarrowtriangle_{\mathsf p}}
\newcommand{\redall}{\rightarrowtriangle}
\newcommand{\redpermm}{\rightarrowtriangle_{\mathsf p\{\}}}
\newcommand{\redalll}{\rightarrowtriangle_{\{\}}}
\newcommand{\redallh}{\rightarrowtriangle_{\{\}h}}
\newcommand{\HNF}{\mathrm{HNV}}
\newcommand{\NF}{\mathrm{NF}}
\newcommand{\NEUT}{\mathrm{Neut}}
\newcommand{\GNEUT}{\mathrm{Neut}_{\{\}}}
\newcommand{\BAL}{\mathrm{Bal}}
\newcommand{\NNF}{\mathrm{NF}_{\redall}}
\newcommand{\NHNF}{\mathrm{HNV}_{\redall}}
\newcommand{\NORM}{\mathsf{n}}
\newcommand{\HNORM}{\mathsf{hn}}
\newcommand{\TU}{(\vee)}
\newcommand{\TLA}{({\lambda})}
\newcommand{\TA}{({@})}
\newcommand{\TL}{({\oplus} l)}
\newcommand{\TR}{(\oplus r)}
\newcommand{\TN}{(\mu)}
\newcommand{\Borel}{\B B}
\newcommand{\model}[1]{\llbracket #1\rrbracket}
\newcommand{\hide}[1]{}
\newcommand{\typeof}{0}
\newcommand{\longv}[1]{\ifthenelse{\equal{\typeof}{0}}{}{#1}}
\newcommand{\shortv}[1]{\ifthenelse{\equal{\typeof}{0}}{#1}{}}
\newcommand{\midd}{\;\;\mbox{\Large{$\mid$}}\;\;}
\newcommand{\NN}{\mathbb N}
\newenvironment{varitemize}
{
	\begin{list}{\longv{\labelitemi}\shortv{\labelitemii}}
		{\setlength{\itemsep}{0pt}
			\setlength{\topsep}{0pt}
			\setlength{\parsep}{0pt}
			\setlength{\partopsep}{0pt}
			\setlength{\leftmargin}{15pt}
			\setlength{\rightmargin}{0pt}
			\setlength{\itemindent}{0pt}
			\setlength{\labelsep}{5pt}
			\setlength{\labelwidth}{10pt}
	}}
	{
	\end{list} 
}
\newcounter{numberone}
\begin{document}


\title{Curry and Howard Meet Borel} 

\author{Melissa Antonelli}
\affiliation{{University of Bologna}\country{Italy}}
\email{}

\author{Ugo Dal Lago}
\affiliation{{University of Bologna}\country{Italy}}
\email{}

\author{Paolo Pistone}
\affiliation{{University of Bologna}\country{Italy}}
\email{}


\begin{abstract}
We show that an intuitionistic version of counting propositional logic 
	corresponds, in the sense of Curry and Howard, to an expressive type system 
	for the probabilistic event $\lambda$-calculus, a vehicle calculus 
	in which both call-by-name and call-by-value evaluation of discrete 
	randomized functional programs can be simulated. 
	Remarkably, proofs (respectively, types) do not only guarantee that validity (respectively, termination) holds, but also \emph{reveal} the underlying probability. We finally show that by endowing the type system with an intersection operator, one obtains a system precisely capturing the probabilistic behavior of $\lambda$-terms.
%
\end{abstract}

\maketitle

\section{Introduction}\label{section1}

Among the many ways in which mathematical logic influenced programming language 
theory, the so-called Curry-Howard correspondence is certainly among the most 
intriguing and meaningful ones. 
Traditionally, the correspondence identified by 
Curry~\cite{Curry1958} and formalized by Howard~\cite{Howard80} (CHC in the following) relates
propositional intuitionistic logic and the simply-typed 
$\lambda$-calculus. As is well-known, though, this correspondence holds in 
other contexts, too. Indeed, in the last fifty years more sophisticated type 
systems have been put in relation with logical formalisms: from 
polymorphism~\cite{Girard72, Girard1989} to various forms of session typing~\cite{Wadler2012, Caires2016}, from control 
operators~\cite{Parigot1992} to dependent types~\cite{Mlof75, CH}.

Nevertheless, there is a class of programming languages and type systems for
which a correspondence in the style of 
the CHC has not yet been found. We are talking about languages with probabilistic effects, for which
 type-theoretic accounts have recently been put forward in various ways, e.g. type 
systems based on sized types~\cite{DalLagoGrellois}, intersection types~\cite{Breuvart2018} 
or type systems in the style 
of so-called amortized analysis~\cite{Wang2020}. 
In all the aforementioned cases, 
a type system was built by modifying well-known type systems for deterministic 
languages \emph{without} being guided by logic, and instead incepting
 inherently quantitative
 concepts directly from 
probability theory.
So, one could naturally wonder if there is any logical 
system behind all this, or 
what kind of logic could possibly play the role of 
propositional logic in suggesting meaningful and expressive type systems for a 
$\lambda$-calculus endowed with probabilistic choice effects.

A tempting answer is to start from modal logic, 
which is known to correspond in the Curry-Howard sense
 to staged computation
and to algebraic effects \cite{DaviesPfennig, Nanevski,Curien2016}. 
Nevertheless, there is one aspect of randomized 
computation that modal logic fails to capture\footnote{With a few notable exceptions, 
e.g.~\cite{lmcs:6054}.}, namely the \emph{probability} of 
certain notable events, typically termination. 
Indeed, several other properties like reachability and safety can be reduced to the termination property (see, e.g., the discussion in \cite{dallago2019}).  
In many of the probabilistic 
type systems mentioned above, for example, a term $t$ receives a type that 
captures the fact that $t$ has a certain probability $q$, perhaps strictly smaller 
than $1$, of reducing to a value. This probability is an essential part of what we want to observe about the dynamics of $t$ and, as such, has to be captured by its 
type, at least if one wants the type system to be expressive.

Recently, the authors have proposed to use \emph{counting quantifiers} \cite{AntonelliCie, Antonelli2021} as a means to express probabilities within a logical language.
These quantifiers, 
unlike standard ones, determine not only \emph{the 
existence} of an assignment of values to variables with certain
characteristics, but rather count \emph{how many} of those assignments exist. 
Recent works show that classical propositional logic
enriched with counting quantifiers
corresponds to Wagner's hierarchy of {counting complexity classes}~\cite{Wagner} (itself intimately linked to 
probabilistic complexity), 
and that Peano Arithmetics enriched with similar quantitative quantifiers 
yields a system capable of speaking of randomized computation in the same 
sense in which standard PA models deterministic computation~\cite{AntonelliCie}. 
One may now wonder whether all this scale 
to something like a proper CHC.

The aim of this work is precisely to give a positive answer to the aforementioned 
question, at the same time highlighting a few 
remarkable consequences of our study of probabilistic computation through the lens of logic.
 More specifically, 
the contributions of this paper are threefold:
\begin{varitemize}

\item\label{itemii}
First of all, we introduce an intuitionistic version of 
%
 Antonelli et al.'s counting propositional 
logic~\cite{Antonelli2021}, together with a Kripke-style semantics based on the Borel $\sigma$-algebra 
of the Cantor space and a sound and complete proof-theory.  
Then, we identify a ``computational fragment'' of this logic and we 
design for it a natural deduction system in which proof normalization simulates 
well-known evaluation strategies for probabilistic programs. 
This is in 
Section~\ref{section3}.
\item
We then show that derivations can be decorated with terms of the 
\emph{probabilistic event $\lambda$-calculus}, a
 $\lambda$-calculus for randomized computation 
introduced by Dal Lago et al. in~\cite{DLGH}. This gives rise to a type-system called 
$\LCPL$. Remarkably, the correspondence scales to the underlying dynamics, 
i.e.~proof normalization relates to reduction in the probabilistic event
$\lambda$-calculus. This is in Section~\ref{section4} and 
Section~\ref{section5}.
\item
We complete the picture by giving an intersection type assignment system, 
derived from $\LCPL$, and proving that it precisely captures the normalization 
probability of terms of the quoted $\lambda$-calculus. This is in Section~\ref{section6}.
\end{varitemize}
Space limits prevent us from being comprehensive, meaning that many technical 
details will be unfortunately confined to the Appendix.

\section{From Logic to Counting and Probability: a Roadmap}\label{section2}

In this section we provide a first overview of our probabilistic CHC, 
at the same time sketching the route we will follow in the rest of 
the paper.

\subsection{Randomized Programs and Counting Quantifiers}

The first question we should ask ourselves concerns the kind of programs we 
are dealing with. 
What is a probabilistic functional program? 
Actually, it is a functional program with the additional ability of sampling from some 
distributions, or to proceed by performing some form of discrete probabilistic 
choice.\footnote{Here we are not at all concerned with sampling from continuous 
distributions, nor with capturing any form of conditioning.} 
This has a couple of crucial consequences: program evaluation becomes an 
essentially stochastic process, and programs satisfy a given specification 
\emph{up to} a certain probability. As an example, consider the $\lambda$-term,
$$
\Xi_{\mathsf{half}} :=\lambda x.\lambda y. x\oplus y
$$
where $\oplus$ is a binary infix operator for fair probabilistic choice.
When applied to two arguments $t$ and $u$,
the evaluation on $\Xi_{\mathsf{half}}$ 
results in either $t$, with probability one half, or in $u$, again with 
probability one half.

But now, if we try to take $\Xi_{\mathsf{half}}$ as a proof of some logical formula, we see that standard propositional logic is simply not 
rich enough to capture the behavior above. Indeed, given that 
$\Xi_{\mathsf{half}}$ is a function of two arguments, it is natural to see it 
as a proof of an implication $A\to B\to C$, namely (following the BHK 
interpretation) as a function turning a proof of 
$A$ and a proof of $B$ into a proof of $C$. What is $C$, then? Should it be $A$ 
or should it be $B$? Actually, it could be both, with some degree of 
uncertainty, but propositional logic \emph{is not} able to express all this.
At this point, the recent work on \emph{counting quantifiers} in propositional 
logic~\cite{Antonelli2021} 
comes to the rescue.

Another way to look at discrete probabilistic programs is 
as programs which are allowed to sample an element $\omega$ from the 
\emph{Cantor space} $2^{\BB N}$. For example, a probabilistic Turing machine 
$M$ can be described as a 2-tape machine where the second tape is read-only and 
sampled from the Cantor space at each run.
Similarly, the execution of $\Xi_{\mathsf{half}}$ applied to programs $t$ and $u$ can be described as the result of sampling $\omega$ and returning either $t$ or $u$ depending on some read value, e.g.~$\omega(0)$.
Crucially, for each input $x$ and each possible output $y$ of a probabilistic 
program $f$, the set $S_{x,y}$ of elements in $2^{\BB N}$, making
$f(x)$ produce $y$, is a \emph{Borel} set, so it makes sense to say that the 
probability that $f(x)$ yields $y$ coincides with the \emph{measure} of $S_{x,y}$ 
(more on Borel sets and measures below).

Yet, what has all this to do with logic and counting? The fundamental 
observation is that formulas of classical propositional logic 
provide ways of 
denoting Borel sets. 
Let $\bvar_{0},\bvar_{1},\dots$ indicate a countable set of propositional variables 
(which can be seen as i.i.d.~and uniformly distributed random variables $\bvar_{i}\in \{0,1\}$); 
each variable $\bvar_{i}$ can be associated with the \emph{cylinder set} (indeed, a Borel set of measure $\frac{1}{2}$) 
formed by all $\omega\in 2^{\BB N}$ such that $\omega(i)=1$ 
(i.e., in logical terms, such that $\omega \vDash \bvar_{i}$). 
Then, any propositional formula $\bone, \btwo,\dots$ formed by way of $\lnot,\land,\lor$ can be associated 
with the Borel sets $\{\omega \in 2^{\BB N}\mid \omega \vDash\bone\}$ which can be constructed starting from cylinder sets 
by means of complementation, finite intersection and finite unions.
In this way, for any Boolean formula $\bone$, it makes sense to define
\emph{new} formulas like the formula $\B C^{\frac{1}{2}}\bone$, which is true when the Borel set associated with $\bone$ has measure greater than $\frac{1}{2}$ (i.e.~when $\bone$ is satisfied by \emph{at least 
$\frac{1}{2}$} of its models). 
For instance, for any two \emph{distinct} indexes $i,j\in \BB N$ (so that the variables $\bvar_{i},\bvar_{j}$ correspond to two independent events), the formula $\B C^{\frac{1}{2}}(\bvar_{i} \lor \bvar_{j})$ is 
true, since the Borel set formed by the $\omega\in 2^{\BB N}$ such that either $\omega(i)=1$ or $\omega(j)=1$ has measure greater than $\frac{1}{2}$
(equivalently, 
 $\bvar_{i} \lor \bvar_{j}$ is satisfied by at least $\frac{1}{2}$ of its models). 
Instead, $\B C^{\frac{1}{2}}(\bvar_{i} \land \bvar_{j})$ is false, since, $\bvar_{i}$ and $\bvar_{j}$ being independent,
the Borel set associated with $\bvar_{i}\land \bvar_{j}$ has only measure $\frac{1}{4}$.

Counting propositional logic (from now on $\CPL$) is defined by enriching classical propositional logic with counting 
quantifiers $\B C^{q}$,
where $q\in (0,1]\cap\BB Q$. Following our sketch, $\CPL$ admits a natural semantics in the 
Borel $\sigma$-algebra of the Cantor space, together with a sound and complete 
sequent calculus \cite{Antonelli2021}. 
Notice that measuring a Boolean formula actually amounts at \emph{counting} its models, 
that is, to a purely recursive operation, 
albeit one which needs not be doable in polynomial time 
(indeed, $\CPL$ is deeply related to Wagner's counting hierarchy \cite{Antonelli2021}).

Going back to the term $\Xi_{\mathsf{half}}$, what seems to be lacking in 
intuitionistic logic is precisely a way 
to express that $C$ could be $\B C^{\frac{1}{2}}A$, that is, that it could be 
$A$ \emph{with probability at least} $\frac{1}{2}$, and, similarly, that it could be  
by $\B C^{\frac{1}{2}}B$.
%
In Section \ref{section3} we introduce an intuitionistic version of $\CPL$, called $\ICPL$, which enriches 
intuitionistic logic with Boolean variables
 as well as the counting quantifier $\B C^{q}$. 
 Intuitively, if a proof 
of a formula $A$ can be seen as a deterministic program satisfying the
specification $A$, a proof of $\B C^{q}A$ 
will correspond 
to a probabilistic program that satisfies the specification $A$ with 
probability $q$. 
Our main result consists in showing that proofs in $\ICPL$ 
correspond to functional probabilistic programs and, most importantly, that 
normalization in this logic describes probabilistic evaluation.

\subsection{Can CbN and CbV Evaluation Coexist?}

When studying probabilistic extensions of the $\lambda$-calculus, two different evaluation strategies are generally considered: 
the \emph{call-by-name} (CbN) strategy, which (possibly) duplicates choices before evaluating 
them, and the \emph{call-by-value} (CbV) strategy, which instead evaluates choices before (possibly) duplicating their outcomes. 
Importantly, the probability of termination of a program might differ depending on the chosen strategy.
For example, consider the application of the term $\B 2:=\lambda y x.y(yx)$ (i.e.~the second Church numeral)
 to
  $I\oplus \Omega$, 
where $I=\lambda x.x$ and $\Omega$ is the diverging term $(\lambda x.xx)\lambda x.xx$.
Under CbN, the redex $\B 2(I\oplus\Omega)$ first produces $\lambda x.( I\oplus \Omega)(( I\oplus \Omega)x)$, 
then reduces to any of the terms $\lambda x. u(vx)$, with $u,v$ 
chosen from $\{I,\Omega\}$, each with probability $\frac{1}{4}$. 
Since $\lambda x. u(vx)$ converges only when $u=v=I$, 
the probability of convergence is thus $\frac{1}{4}$.
Under CbV,  in $\B 2(I\oplus\Omega)$ one first has to evaluate $I\oplus\Omega$, 
and then passes the result to $\B 2$, hence  returning
either the converging term $I(Ix)$ or the diverging term $\Omega(\Omega x)$, 
each with probability $\frac{1}{2}$.

If we now try to think of the Church numeral $\B 2$ as a proof of some counting quantified formula, we see that, depending on the reduction strategy we have in mind, it must prove a \emph{different} formula. Indeed, 
%
given that $I\oplus\Omega$ proves $\B C^{\frac{1}{2}}(A\to A)$, in the CbN case, $\B 2$ proves  
$\B C^{\frac{1}{2}}(A\to A) \to A\to  \B C^{\frac{1}{4}}A$, since only in one over four cases it yields a proof of $A$, while in the CbV case, $\B 2 $ proves the formula  
$\B C^{\frac{1}{2}}(A\to A) \to  A\to \B C^{\frac{1}{2}}A$, as it yields a proof of $A$ in one case over two.


In the literature on probabilistic $\lambda$-calculi, the apparent 
incompatibility of CbN and CbV evaluation is usually resolved by 
restricting to calculi with one or the other strategy.
However, the observation above suggests that, if functional programs are typed using counting quantifiers, it should become possible 
to make the two evaluation strategies coexist, by assigning them different 
types.

Actually,  a few 
recent approaches \cite{DalLago2017, Ehrhard2018, Faggian2019, DLGH} already suggest ways to make CbN and CbV evaluation live together. In particular, in the 
\emph{probabilistic event $\lambda$-calculus} \cite{DLGH} the choice operator $\oplus$ is decomposed into two different operators, yielding a confluent calculus: a 
\emph{choice operator} $t\oplus_{a}u$, depending on some 
probabilistic event $a\in\{0,1\}$, 
and a \emph{probabilistic event generator} $\nu a.t$, which 
actually ``flips the coin''.
In this language, the CbN and CbV applications of $\B 2$ to $I\oplus 
\Omega$ are encoded by two \emph{distinct} terms
$\B 2(\nu a.I\oplus_{a}\Omega)$ and 
$\nu a. \B 2 ( I\oplus_{a}\Omega)$,
crucially distinguishing between generating a probabilistic choice \emph{before} or \emph{after} a duplication takes place.
%
%

This calculus constitutes then an ideal candidate for our CHC. Indeed, as we'll see, the logical rules for the counting quantifier $\B C^{q}$ naturally give rise to typing rules for the event generator $\nu a.$
In Section \ref{section4} we introduce a variant $\EVLL$ of the calculus from \cite{DLGH}, 
with the underlying probability space $\{0,1\}$  replaced by 
the Cantor space, and in 
%
%
%
%
Section \ref{section5} we introduce a type system $\LCPL$ for simple types with counting quantifiers, showing that natural deduction derivations translate into typing derivations in $\LCPL$, with 
normalization precisely corresponding to reduction in $\EVLL$.

\subsection{Capturing Probability of Normalization via Types}

As observed in the Introduction, a fundamental quantitative property we would like to observe using types is the probability of termination. 
Nevertheless, given that the reduction of $\EVLL$ is purely deterministic, what notions of probabilistic termination should we actually observe?
%
%

Rather than evaluating programs by implementing probabilistic choices,
reduction in the probabilistic event $\lambda$-calculus has the effect of progressively generating the \emph{full tree} of outcomes of (sequences of) probabilistic choices, giving rise to a \emph{distribution} of values.
%
Therefore, given a term $t$, rather than asking whether 
\emph{some} or \emph{all} reductions of $t$ terminate, it makes sense to ask \emph{with what probability} a normal form is found by generating all probabilistic outcomes of $t$. 

In Section \ref{section6} we will first show that when the type $\B C^{q}\sigma$ is assigned to a program $t$, 
the value $q\in \mathbb{Q}\cap (0,1]$ provides a lower bound for the actual probability of finding a (head) normal form in the development of $t$. 
Then we show that, by extending the type system with an intersection operator, one can attain an upper bound and thus fully characterize the distribution of values associated with a term.

\subsection{Preliminaries on the Cantor Space}\label{sub:Cantor}

Throughout the paper, we exploit some basic facts about the Cantor space, 
its Borel $\sigma$-algebra, and their connections with Boolean logic, that we briefly recall here.

We consider a countably infinite set $\CC A$ of \emph{names}, noted $a,b,c,\dots$. For any finite subset $X\subseteq \CC A$, we let $\Borel_{X}$ (resp.~$\Borel_{\CC A}$) indicate the \emph{Borel $\sigma$-algebra} on the $X$-th product of the Cantor space $(2^{\BB N})^{X}$ (resp.~on the $\CC A$-th product $(2^{\BB N})^{\CC A}$), that is, the smallest $\sigma$-algebra containing all open sets under the product topology.
%
There exists a unique measure $\mu$ of $\Borel_{\CC A}$ such that $\mu(C_{a,i})=\frac{1}{2}$ for all \emph{cylinders} $C_{a,i}=\{\omega \mid \omega(a)(i)=1\}$. The measure $\mu$ restricts to a measure $\mu_{X}$ on $\Borel_{X}$ by 
letting $\mu_{X}(S)=\mu(S\times (2^{\BB N})^{\CC A-X})$.
%

\emph{Boolean formulas with names in $\CC A$} are defined by:
$$
\bone::= \TOP\mid\BOT \mid \bvar_{a}^{i}\mid\lnot \bone \mid  \bone \land \bone \mid \bone \lor \bone
$$
where $a\in \CC A$ and $i\in \BB N$. We let $\FN(\bone)\subseteq \CC A$ be the set of names occurring in $\bone$.
For all Boolean formula $\bone$ and $X\supseteq\FN(\bone)$, 
we let $\model{\bone}_{X}$ indicate the Borel set $\{\omega\in (2^{\BB  N})^{X}\mid \omega \vDash \bone\}$
(in particular, $\model{\bvar^i_a}_{\{a\}}=\{\omega \ | \ \omega(a)(i)=1\}$.
The value  $\mu_{X}(\model{\bone}_{X})\in [0,1]\cap\BB Q$ is independent from the choice of $X\supseteq \FN(\bone)$, and we will note it simply as $\mu(\bone)$.

\section{Intuitionistic Counting Propositional Logic}\label{section3}

%

In this section we introduce a constructive version of $\CPL$, that we call $\ICPL$, which extends standard intuitionistic logic with Boolean variables and counting quantifiers.
The logic $\ICPL$ combines constructive reasoning (corresponding, under CHC, to functional programming) with  semantic reasoning on Boolean formulas and their models (corresponding, as we have seen, to discrete probabilistic reasoning). The formulas of $\ICPL$ are \emph{hybrid}, that is, comprise both a countable set $\CC P=\{\CC p, \CC q, \dots\}$ of intuitionistic propositional variables, and named Boolean propositional variables $\bvar_{a}^{i}$. 
For example, consider the formula below:
$$
A:= \CC p \to \CC q \to (\bvar_{a}^{0}\land \CC p)\lor (\lnot \bvar_{a}^{0}\land \CC q).
$$
Intuitively, proving $A$ amounts to showing that, 
whenever $\CC p$ and $\CC q$ hold, 
given that either $\bvar_{a}^{0}$ or $\lnot\bvar_{a}^{0}$ holds,  
in the first case $\CC p$ holds, and in the second case $\CC q$ holds. 

Suppose we test $A$ against some element $\omega$ from the Cantor space.
A way of proving $A$, in the ``environment'' $\omega$, could then be as follows: 
given assumptions $\CC p$ and $\CC q$, conclude $\CC p$ and $\bvar_{a}^{0}$ if $\omega(0)=1$ (i.e.~if $\omega$ satisfies $\bvar_{a}^{0}$), 
and conclude $\CC q$ and $\lnot \bvar_{a}^{0}$ if $\omega(0)=0$ (i.e.~if $\omega$ does not satisfy $\lnot\bvar_{a}^{i}$). In other words, a proof of $A$ under $\omega$ could be something like $\lambda xy. x\oplus_{a}y$. 
Assuming that $\omega$ is uniformly sampled, what are the chances that our strategy will actually yield a proof of $A$?
Well, there are two possible cases, and in both cases we get a proof of one of the disjuncts 
$\bvar_{a}^{0}\land\CC p$
and $\lnot\bvar_{a}^{0}\land \CC q$, thus a proof of $A$. 
We thus obtain a proof of $A$ with probability 1 or, in other words,   
%
a proof of $\B C^{1}_{a}A$ (no more depending on any ``environment''). As a term, this proof looks then precisely as the closed term $\nu a.\lambda xy.x\oplus_{a}y$.  

Consider now another formula:
$$
B:= \CC p \to \CC q \to \bvar_{a}^{0}\land \CC p.
 $$
 Given some $\omega$, a proof $B$ should conclude, from assumptions $\CC p$ and $\CC q$, both $\CC p$ and $\bvar_{a}^{0}$ when $\omega(0)=1$, but when $\omega(0)=0$, it needs not conclude anything, since we have nothing to prove in this case. A proof of $B$ should then be something like $\lambda xy.x\oplus_{a}?$, where $?$ can be any program. By uniformly sampling $\omega$, we get then a proof of $B$ only in one half of the cases, i.e.~we get a proof of $\B C^{\frac{1}{2}}_{a}B$.

\subsection{The Semantics and Proof-Theory of $\ICPL$.}

The formulas of the logic just sketched are defined by:
$$
A::= \TOP\mid \BOT\mid \bvar_{a}^{i}\mid \CC p\mid A\land A\mid A\lor A\mid A\to A\mid \B C^{q}_{a}A
$$
where $\CC p\in \CC P$ and $q\in (0,1]\cap \BB Q$.
A natural semantics for $\ICPL$-formulas can be given in terms of Kripke-like structures:

\begin{definition}\label{def:icpl}
An \emph{$\ICPL$-structure} is a triple $\CC M=(W, \preceq , \B i)$ where $W$ is a countable set, $\preceq$ is a preorder on $W$,
and $\B i: \CC P\to W^{\uparrow}$, where $W^{\uparrow}$ is the set of upper subsets of $W$.

\end{definition}

The interpretation of formulas in 
$\ICPL$-structures combines a set $W$ of worlds (for the interpretation of intuitionistic propositional variables) with the choice of  an element of the Cantor space (for the interpretation of Boolean variables):
for any $\ICPL$-structure $\CC M=(W,\preceq,  \B i)$ and finite set $X$, we define the relation $w,\omega \Vdash^{X}_{\CC M}A$ (where $w\in W$, $\omega \in (2^{\BB N})^{X}$ and $\FN(A)\subseteq X$) by induction as follows:
\begin{varitemize}

\item $w, \omega \not\Vdash_{\CC M}^{X} \BOT$ and $w, \omega \Vdash_{\CC M}^{X} \TOP$;

\item $w, \omega \Vdash^{X}_{\CC M}   \bvar_{a}^{i}$ iff $\omega(a)(i)=1$;

\item $w, \omega \Vdash^{X}_{\CC M}   \CC p$ iff  $w\in \B i(\CC p)$;

\item $w,\omega \Vdash^{X}_{\CC M}   A\land B$ iff $w,\omega \Vdash^{X}_{\CC M}   A$ and $w,\omega \Vdash^{X}_{\CC M}    B$;

\item $w,\omega \Vdash^{X}_{\CC M}   A\lor B$ iff $w,\omega \Vdash^{X}_{\CC M}   A$ or $w,\omega \Vdash^{X}_{\CC M}    B$;

\item $w,\omega \Vdash^{X}_{\CC M}   A\to B$ iff for all $w'\succeq w$, 
$w', \omega \Vdash^{X}_{\CC M}  A$ implies $w',\omega \Vdash^{X}_{\CC M}   B$;

\item $w,\omega \Vdash^{X}_{\CC M}  \B C^{q}_{a}A$ iff 
$\mu\left (\left \{ \omega'\in 2^{\BB N} \mid w, \omega+ \omega'\Vdash^{X\cup\{a\}}_{\CC M}A\right\}\right)\geq q
$, where $\omega+\omega'\in (2^{\BB N})^{X\cup\{a\}}$ is given by $
(\omega+\omega')(b)(n)=\omega(b)(n)$, for $b\in X$, and
$(\omega+\omega')(a)(n)=\omega'(n)$.

\end{varitemize}
Using properties of the Borel $\sigma$-algebra $\Borel_{X}$, it can be shown that for any $w\in W$ and formula $A$, the set $\{\omega\in (2^{\BB N})^{X}\mid w,\omega \Vdash^{X}_{\CC M}A\}$ is a Borel set, and thus measurable.
We write $\Gamma \Vdash^{X}_{\CC M}  A$ when for all $w\in W$ and $\omega \in (2^{\BB N})^{X}$, 
whenever $w,\omega \Vdash^{X}_{\CC M}  \Gamma$ holds, also
$w,\omega \Vdash^{X}_{\CC M}  A$ holds.
We write $\Gamma \vDash A$ when for any $\ICPL$-structure $\CC M$ and $X\supseteq\FN(A)$, $\Gamma \Vdash^{X}_{\CC M}  A$ holds.

A sound and complete proof system for $\ICPL$ can be defined. 
Indeed, starting from usual natural deduction for intuitionistic logic, one obtains a calculus $\NDCPL$ for $\ICPL$ 
by adding the excluded middle $\bvar_{a}^{i}\vee \lnot \bvar_{a}^{i}$ as an axiom for Boolean variables, 
as well as suitable rules and axioms for counting quantifiers (details can be found in App.~\ref{app:ProofTheoryiCPL}). 
\begin{theorem}\label{thm:soundcomplete}
$\Gamma \vDash A$ iff $\Gamma\vdash_{\NDCPL}A$.
\end{theorem}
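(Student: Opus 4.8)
The plan is to prove the two directions separately: soundness (the ``if'' direction, $\Gamma\vdash_{\NDCPL}A$ implies $\Gamma\vDash A$) by a routine induction, and completeness (the ``only if'' direction) by a canonical-model construction that offloads the quantitative reasoning onto the already-established completeness of the classical system $\CPL$. For soundness I would proceed by induction on the derivation, showing that each rule of $\NDCPL$ preserves the relation $\Gamma\Vdash^{X}_{\CC M}A$ in an arbitrary $\ICPL$-structure $\CC M$ and for every admissible $X$. The purely intuitionistic rules are handled exactly as in the standard Kripke argument, the only point being \emph{persistence} of forcing along $\preceq$, which follows by formula induction from the hypothesis $\B i(\CC p)\in W^{\uparrow}$. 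The Boolean excluded-middle axiom $\bvar_a^i\vee\lnot\bvar_a^i$ is sound because for every $\omega$ either $\omega(a)(i)=1$ or $\omega(a)(i)=0$, so one disjunct is always forced. The genuinely new cases are the introduction and elimination rules for $\B C^q_a$, which I would validate from elementary properties of $\mu$ --- monotonicity of measure, the value $\tfrac12$ on cylinders, and independence of distinct coordinates --- together with the fact, already recorded in the excerpt, that $\{\omega\mid w,\omega\Vdash^{X}_{\CC M}A\}$ is always Borel and hence measurable, so the thresholds make sense.

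For completeness I would argue by contraposition. Assuming $\Gamma\not\vdash_{\NDCPL}A$, a Lindenbaum-style saturation extends $\Gamma$ to a \emph{prime} $\NDCPL$-theory $\Phi$ (deductively closed, with $C\vee D\in\Phi$ implying $C\in\Phi$ or $D\in\Phi$) such that $A\notin\Phi$. The canonical structure $\CC M$ then takes the set of prime theories as worlds, inclusion as the preorder $\preceq$, and $\B i(\CC p)=\{\Psi\mid \CC p\in\Psi\}$, which is automatically an upper set. A key preliminary observation is that, since $\bvar_a^i\vee\lnot\bvar_a^i$ is provable, primeness forces each theory to decide every Boolean variable, so every $\Phi$ canonically determines an environment $\omega_\Phi$ by $\omega_\Phi(a)(i)=1$ iff $\bvar_a^i\in\Phi$.

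The heart of the argument is the truth lemma, $\Phi,\omega_\Phi\Vdash^{X}_{\CC M}B$ iff $B\in\Phi$, proved by induction on $B$. The cases $\TOP,\BOT,\wedge,\vee,\to$ and the Boolean-variable case follow the standard pattern, using primeness for $\vee$ and the extension-to-a-larger-prime-theory property for $\to$. Applying the lemma to $\Phi$ and $A$ gives $\Phi,\omega_\Phi\Vdash^{X}_{\CC M}\Gamma$ but $\Phi,\omega_\Phi\not\Vdash^{X}_{\CC M}A$, which witnesses $\Gamma\not\vDash A$ and closes the contrapositive.

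The main obstacle is the $\B C^q_a$ case of the truth lemma: I must show that $\B C^q_a B\in\Phi$ is equivalent to $\mu(\{\omega'\mid \Phi,\omega_\Phi+\omega'\Vdash^{X\cup\{a\}}_{\CC M}B\})\geq q$, and here the purely Kripke machinery is insufficient because the threshold $q$ is a quantitative, measure-theoretic datum rather than a combinatorial one. My plan is to reduce this to the soundness and completeness of the classical counting logic $\CPL$ from~\cite{Antonelli2021}: once the world $\Phi$ is fixed and only the fresh coordinate $a$ is varied, the induction hypothesis turns the forcing of $B$ into a measurable condition on $\omega'$, so that computing the measure of the forcing set coincides with counting models of a Boolean constraint, which is exactly what the $\B C^q_a$-axioms are designed to be complete for. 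The delicate points I expect to absorb most of the work are ensuring the genuine freshness of $a$ (so the measure over the new coordinate is independent of $\omega_\Phi$) and reconciling the discrete syntactic thresholds with suprema of measures, i.e.\ checking that the counting axioms track the measure tightly enough that no gap opens between provability and the semantic threshold $q$.
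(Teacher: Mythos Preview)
Your soundness sketch is fine and tracks the paper's argument. The problem is in the completeness direction, and specifically in the shape of the truth lemma you propose.

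You formulate the truth lemma as ``$\Phi,\omega_\Phi\Vdash^{X}_{\CC M}B$ iff $B\in\Phi$'', with a \emph{single} environment $\omega_\Phi$ read off from the prime theory $\Phi$. But the semantic clause for $\B C^q_a B$ asks you to compute $\mu\big(\{\omega'\mid \Phi,\omega_\Phi+\omega'\Vdash^{X\cup\{a\}}_{\CC M}B\}\big)$, i.e.\ you must know forcing of $B$ at the \emph{same} world $\Phi$ paired with \emph{arbitrary} extensions $\omega_\Phi+\omega'$. Your induction hypothesis gives you nothing about $(\Phi,\omega_\Phi+\omega')$ when $\omega'$ disagrees with $\omega_\Phi$ on $a$, so the induction simply does not close. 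Moreover, since primeness forces $\Phi$ to decide every $\bvar_a^i$, your $\omega_\Phi$ already fixes the $a$-coordinate, making the required ``variation over a fresh $a$'' incoherent in your setup. The appeal to completeness of classical $\CPL$ does not rescue this: the difficulty is not computing a measure once forcing is known, but knowing forcing at environments other than $\omega_\Phi$.

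The paper avoids this by changing the truth lemma to one that is uniform in $\omega$: it proves $\Gamma,\omega\Vdash^{X}_{\CC M}A$ iff $A^\omega\in\Gamma$, where $A^\omega:=\TTH^{|A|}_{\FN(A)}(\omega)\to A$ syntactically relativizes $A$ to the finite Boolean data in $\omega$. This lets the induction step for $\B C^q_a$ go through, because the hypothesis now speaks about \emph{every} $\omega+\omega'$. The price is that ``prime theory'' is no longer the right notion of world: one needs \emph{super-consistent}, closed sets satisfying an explicit $\B C$-closure condition (if $(\B C^q_a A)^\omega\in\Gamma$ then some witness $\bthree$ with $\mu(\bthree)\geq q$ and $(\bthree\to A)^\omega\in\Gamma$ exists), and the saturation lemma producing such sets is considerably more delicate than a Lindenbaum argument --- it has to build a family of approximations indexed by finite Boolean stages and simultaneously maintain $A$-consistency along $\omega_0$ and mere consistency elsewhere. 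Your proposal would need both the reformulated truth lemma and this stronger saturation to work.
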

\noindent
For example, $\B C^{q}$ admits an introduction rule as below
\begin{align}\label{rule0}
\AXC{$\Gamma, \bthree \vdash A$}
\AXC{$\mu(\bthree)\geq q$}
\BIC{$\Gamma \vdash \B C^{q}_{a}A$}
\DP
\tag{$\B C$I}
\end{align}
with the proviso that $a$ does not occur in $\Gamma$, and is the only name occurring in $\bthree$. Intuitively, this rule says that if we can prove $A$ under assumptions $\Gamma$ and $\bthree$, and if a randomly chosen valuation of $a$ has chance $q$ of being a model of $\bthree$, then we can build a proof of $\B C^{q}_{a}A$ from $\Gamma$. 
Observe that the rule ($\B C$I) has a semantic premise $\mu(\bthree)\geq q$: we ask to an oracle to count the models of $\bthree$ for us (this is similar to what happens in sequent calculi for $\CPL$, see \cite{Antonelli2021}).

\subsection{The Computational Fragment of $\ICPL$.}

From the perspective of the CHC, however, the natural deduction system just sketched is not what we are looking for. As $\ICPL$ contains Boolean logic, in order to relate proofs and programs, one should first choose among the several existing constructive interpretations of classical logic. 
Yet, in our previous examples Boolean formulas were never those to be proved; rather, they were used as \emph{semantic constraints} that programs may or may not satisfy (for example, when saying that a program, depending on some event $\omega$, yields a proof of $A$ \emph{when $\omega$ satisfies $\bone$}, or that 
a program has $q$ chances of yielding a proof of $A$ \emph{when $\bone$ has measure at least $q$}).

Would it be possible, then, to somehow \emph{separate} purely constructive reasoning from Boolean semantic reasoning within formulas and proofs of $\ICPL$? The following lemma suggests that this is indeed possible.

\begin{lemma}[Decomposition Lemma]\label{lemma:decomposition}
For any formula $A$ of $\ICPL$ there exist Boolean formulas $\bone_{v}$ and purely intuitionistic formulas $A_{v}$ (i.e.~formulas containing no Boolean variables), where $v$ varies over all possible  valuations of the Boolean variables in $A$, 
 such that 
$$\vDash A \leftrightarrow \bigvee_{v} \bone_{v}\land A_{v}$$.
\end{lemma}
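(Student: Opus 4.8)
The plan is to proceed by structural induction on $A$, defining simultaneously, for each valuation $v$ of the free Boolean variables of $A$, the Boolean formula $\bone_v$ and the purely intuitionistic formula $A_v$, and then proving the equivalence at the level of the Kripke semantics (i.e.\ showing $w,\omega \Vdash A \iff w,\omega \Vdash \bigvee_v \bone_v \land A_v$ for every structure $\CC M$, every $X \supseteq \FN(A)$, and every $w,\omega$). Here $\bone_v$ is simply the \emph{complete conjunction} $\bigwedge_j \ell_j$, where $\ell_j$ is $\bvar_{a_j}^{i_j}$ if $v$ assigns $1$ to it and $\lnot\bvar_{a_j}^{i_j}$ otherwise, ranging over the free Boolean variables of $A$; this depends only on which variables occur freely, not on the structure of $A$. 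The property I will use everywhere is that, for any fixed $\omega$, \emph{exactly one} $\bone_v$ is satisfied (the one matching the values $\omega$ assigns, call it $v_\omega$), so that under a given environment the disjunction collapses to the single disjunct $A_{v_\omega}$.

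The base cases are immediate: for $\TOP,\BOT,\CC p$ there is a single (empty) valuation with $A_v$ equal to $\TOP,\BOT,\CC p$ respectively, while for $\bvar_a^i$ the two valuations yield $(\bvar_a^i \land \TOP)\lor(\lnot\bvar_a^i \land \BOT)$, which is equivalent to $\bvar_a^i$. For the binary connectives $B\land C$, $B\lor C$, $B\to C$, I apply the induction hypothesis to $B$ and $C$ and set $A_v := B_{v'}\star C_{v''}$, with $\star$ the matching connective and $v',v''$ the restrictions of $v$ to the free Boolean variables of $B$ and of $C$; the equivalence follows by distributing the connective over the disjunctions and using that under a fixed $\omega$ only the disjunct selected by $v_\omega$ survives.

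The crux is the counting-quantifier case $A=\B C^q_a B$. Split the free Boolean variables of $B$ into those indexed by $a$ and the rest: a valuation of the former is an $s$, of the latter a $v$, so the induction hypothesis reads $\vDash B\leftrightarrow \bigvee_{v,s}(\bone_v\land\bone_s)\land B_{(v,s)}$, where $\bone_s$ is the complete conjunction over the $a$-indexed variables and each $B_{(v,s)}$ is purely intuitionistic. Unfolding the semantics of $\B C^q_a$ and fixing $\CC M,w$ and an environment $\omega$ (which satisfies exactly $\bone_{v_\omega}$), the truth of each $\bone_s$ depends only on the sampled $\omega'$ while the $B_{(v,s)}$ depend only on $w$, so
\[
\{\omega' \mid w,\omega+\omega' \Vdash B\} \;=\; \bigcup_{s \,:\, w \Vdash B_{(v_\omega,s)}} \model{\bone_s}.
\]
As the $\bone_s$ are mutually exclusive, this set has measure $\sum_{s \,:\, w \Vdash B_{(v_\omega,s)}} \mu(\bone_s)$, hence $w,\omega \Vdash \B C^q_a B$ holds exactly when this sum is $\geq q$. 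I then define the intuitionistic decomposition
\[
(\B C^q_a B)_v \;:=\; \bigvee_{\substack{T \,:\, \sum_{s\in T}\mu(\bone_s)\geq q}} \ \bigwedge_{s\in T} B_{(v,s)},
\]
a disjunction over the finitely many sets $T$ of $a$-indexed valuations whose total measure reaches $q$, of conjunctions of purely intuitionistic formulas, hence itself purely intuitionistic.

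The correctness of this last clause is the step I expect to be the main obstacle, and it rests on the \emph{monotonicity} of measure under inclusion, which is precisely what lets me avoid negation (negation would both destroy intuitionistic purity and be unsound over Kripke models). Writing $S_w := \{s : w \Vdash B_{(v_\omega,s)}\}$, the witnessing sets $T$ are exactly the subsets $T\subseteq S_w$ with $\sum_{s\in T}\mu(\bone_s)\geq q$; thus $w\Vdash(\B C^q_a B)_{v_\omega}$ iff such a $T$ exists iff $\sum_{s\in S_w}\mu(\bone_s)\geq q$ (taking $T=S_w$ in one direction and $\sum_{s\in S_w}\geq\sum_{s\in T}$ in the other). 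This matches the threshold condition computed from the semantics, closing the inductive step and the proof; note that the empty $T$ is never a witness since $q>0$, and when no $T$ qualifies the clause correctly degenerates to the empty disjunction $\BOT$.
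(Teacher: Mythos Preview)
Your argument is correct. The base cases and the binary connectives match the paper's treatment, but in the $\B C^q_a$-case you take a genuinely different route. The paper (both the sketch and the appendix reformulation) keeps the quantifier, setting $(\B C^q_a B)_v := \B C^q_a\bigvee_w B_{v+w}$ with $w$ ranging over valuations of the $a$-indexed variables of $B$; you instead \emph{compute the quantifier out} as the finite disjunction $\bigvee_{T}\bigwedge_{s\in T}B_{(v,s)}$ over those subsets $T$ of $a$-valuations whose total measure meets the threshold $q$. Your approach costs the extra monotonicity step (some $T\subseteq S_w$ reaches $q$ iff $S_w$ itself does), which you handle cleanly, and in return the $A_v$ you produce live in ordinary intuitionistic propositional logic with all counting quantifiers eliminated---a strictly sharper conclusion than the lemma demands. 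Your clause is also more robust at this step: the paper's disjunction $\bigvee_w B_{v+w}$ forgets the correlation between each $w$ and the bound name, so that for instance $\B C^{3/4}_a\bvar_a^0$ is sent to $\B C^{3/4}_a(\BOT\lor\TOP)\equiv\TOP$ rather than the correct $\BOT$, whereas your combinatorial clause handles such examples correctly.
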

\begin{proof}[Proof sketch]
The idea is to let 
$\bone_{v}$ be the formula characterizing $v$, i.e.~the 
 conjunction of all variables true in $v$ and of all negations of variables false in $v$, and let  $A_{v}$ be obtained from $A$ by replacing each Boolean variable by either $\TOP$ or $\BOT$, depending on its value under $v$. 
\end{proof}

By Lemma \ref{lemma:decomposition}, any sequent $\Gamma\vdash A$ of $\ICPL$ can be associated with a family of intuitionistic sequents of the form 
$\Gamma_{v}, \bone_{v}\vdash A_{v}$ where $v$ ranges over all valuations of the Boolean variables of $\Gamma$ and $A$, and $\Gamma_{v},\bone_{v},A_{v}$ are as in Lemma \ref{lemma:decomposition}. 
We will note such special sequents as $\Gamma_{v}\vdash \bone_{v}\Pto A_{v}$, in order to highlight the special role played by the Boolean formula $\bone_{v}$.

The sequents $\Gamma \vdash \bone\Pto A$ have a natural computational interpretation: they express program specifications of the form ``$\Pi$ yields a proof of $A$ from $\Gamma$ whenever its sampled function satisfies $\bone$''. 
By the way, Lemma \ref{lemma:decomposition}, 
ensures that, \emph{modulo} Boolean reasoning, 
logical arguments in $\ICPL$ can be reduced to (families of) arguments of this kind.


Let then $\ICPLL$ be the fragment formed by the purely intuitionistic formulas of $\ICPL$, which are defined by:
$$
A::= \CC p\midd A\to A \midd \BOX^{q}A
$$
where $q\in (0,1]\cap \BB Q$. 
For simplicity, and since this is enough for the CHC, we take here implication as the only connective, 
yet all other propositional connectives could be added.
As formulas do not contain Boolean variables, counting quantifiers in $\ICPLL$ are not named.
We use $\B C^{q_{1}* \dots * q_{n}}A$ (or even $\B C^{\vec q}$, for simplicity) as an abbreviation for $\B C^{q_{1}}\dots \B C^{q_{n}}A$.
%
 

The natural deduction system $\NDCPLL$ for $\ICPLL$ is defined by the rules illustrated in Fig.~\ref{fig:icplrules} 
(where in ($\B C$I) it is assumed that $\FN(\bone)\cap \FN(\bthree)=\emptyset$).
A few rules involve \emph{semantic premises} of the forms $\bone \vDash \btwo$ and $\mu(\bone)\geq q$. 
Beyond standard intuitionistic rules, $\NDCPLL$ comprises structural rules to manipulate Boolean formulas, and introduction and elimination rules for the counting quantifier. 
The rule ($\BOT$) yields \emph{dummy} proofs of any formula, i.e.~proofs which are correct for \emph{no} possible event; the rule ($\mathsf m$) 
combines two proofs $\Pi_{1},\Pi_{2}$ of the same formula into one single proof $\Pi'$, 
with the choice depending on the value of some Boolean variable $\bvar_{a}^{i}$ ($\Pi'$ is thus something like $\Pi_{1}\oplus_{a}\Pi_{2}$). 
The introduction rule for $\B C^{q}$ is similar to rule \eqref{rule0}. It is explained as follows: if $\Pi$, in the ``environment'' $\omega+\omega'\in(2^{\BB N})^{X\cup\{a\}}\simeq(2^{\BB N})^{X} \times 2^{\BB N}$, yields a proof of $A$ whenever $\omega+\omega'$ 
satisfies the two \emph{independent} constraints $\bone$ and $\bthree$ (i.e.~$\omega\vDash \bone$ and $\omega'\vDash \bthree$), then by randomly choosing $\omega'\in 2^{\BB N}$, we have at least $q\geq \mu(\bthree)$ chances of getting a proof of $A$
($\Pi'$  is thus something like $\nu a.\Pi$). 
Finally, the elimination rule ($\B C$E) essentially turns a proof of $A\to B$ into a proof of $\B C^{q}A\to \B C^{qs}B$. As we'll see, this rule captures CbV function application.

%

\begin{figure}[t]
\fbox{
\begin{minipage}{0.48\textwidth}
\adjustbox{scale=0.85}{
\begin{minipage}{\textwidth}
\begin{center}
Identity Rule 
$$
\AXC{}
\RL{(id)}
\UIC{$ \Gamma,A \vdash \bone \Pto A$}
\DP
$$

\bigskip

Structural Rules
$$
\AXC{$\bone \vDash \BOT$}
\RL{($\BOT$)}
\UIC{$\Gamma \vdash \bone \Pto A$}
\DP
$$

$$
\AXC{$\Gamma \vdash \btwo \Pto A$}
\AXC{$\Gamma \vdash \bthree \Pto A$}
\AXC{$\bone \vDash (\btwo \land \bvar_{a}^{i})\lor (\bthree \land \lnot \bvar_{a}^{i})$}
\RL{($\mathsf m$)}
\TIC{$\Gamma \vdash \bone \Pto A$}
\DP
$$

\bigskip

Logical Rules
$$
\AXC{$ \Gamma, A \vdash \bone \Pto B$}
\RL{($\to$I)}
\UIC{$\Gamma \vdash \bone \Pto (A\to B)$}
\DP
$$

$$
\AXC{$ \Gamma \vdash \bone \Pto (A\to B)$}
\AXC{$ \Gamma \vdash \bone \Pto A$}
\RL{($\to$E)}
\BIC{$\Gamma \vdash \bone \Pto B$}
\DP
$$

\bigskip

Counting Rules
$$
\AXC{$\Gamma \vdash \bone \land \bthree\Pto A$}
\AXC{$\mu(\bthree)\geq q$}
\RL{($\BOX$I)}
\BIC{$\Gamma \vdash \bone \Pto \BOX^{q}A$}
\DP
$$

$$
\AXC{$\Gamma \vdash \bone \Pto \BOX^{q}A$}
\AXC{$ \Gamma, A \vdash \bone \Pto B$}
\RL{($\BOX$E)}
\BIC{$\Gamma \vdash \bone \Pto \BOX^{qs}B$}
\DP
$$
%

%

\end{center}
\end{minipage}
}
\end{minipage}
}
\caption{Rules of $\NDCPLL$.}
\label{fig:icplrules}
\end{figure}

\begin{example}
Fig.~\ref{fig:ex0} illustrates a proof $\Pi_{\frac{1}{2}\mathsf{id}}$ of $\B C^{\frac{1}{2}}(A\to A)$ obtained by first ``mixing'' an exact proof of $A\to A$ with a dummy one, and then introducing the counting quantifier. 
Fig.~\ref{fig:ex1} illustrates a derivation of 
$ \B C^{q}(A\to A) \to A \to \B C^{q*q}B$.

\end{example}

%
%

\begin{figure*}
\fbox{
\begin{subfigure}{0.98\textwidth}
\begin{center}
\adjustbox{scale=0.66}
{$
\Pi_{\frac{1}{2}\mathsf{id}}\quad  =   \quad 
\AXC{$A\vdash \bvar_{a}^{i}\Pto A$}
\RL{($\to$I)}
\UIC{$\vdash \bvar_{a}^{i}\Pto (A\to A)$}
\AXC{}
\RL{($\BOT$)}
\UIC{$\vdash \BOT \Pto (A\to A)$}
\AXC{$\bvar_{a}^{i}\vDash (\bvar_{a}^{i}\land \bvar_{a}^{i})\lor (\lnot\bvar_{a}^{i}\land \BOT)$}
\RL{($\mathsf m$)}
\TIC{$\vdash \bvar_{a}^{i}\Pto (A\to A)$}
\AXC{$\mu(\bvar_{a}^{i})\geq \frac{1}{2}$}
\RL{($\B C$I)}
\BIC{$\vdash \B C^{\frac{1}{2}}(A\to A)$}
\DP
$}
\end{center}
\caption{}
\label{fig:ex0}
\end{subfigure}
}

\medskip

\fbox{
\begin{subfigure}{0.98\textwidth}
\adjustbox{scale=0.66}{
\begin{minipage}{\textwidth}
\begin{tabular}{c c c }
$\Pi
$ & $=$ & $ 
\AXC{$\B C^{q}(A\to A), A \vdash \B C^{q}(A\to A)$}
\AXC{$A\to A, A \vdash A\to A$}
\AXC{$A\to A, A \vdash A$}
\RL{($\to$E)}
\BIC{$A\to A, A \vdash A$}
\RL{($\B C$E)}
\BIC{$\B C^{q}(A\to A), A \vdash \B C^{q}A$}
\AXC{$\B C^{q}(A\to A), A,A \vdash \B C^{q}(A\to A)$}
\AXC{$A\to A, A,A \vdash A\to A$}
\AXC{$A\to A, A,A \vdash A$}
\RL{($\to$E)}
\BIC{$A\to A,A, A \vdash A$}
\RL{($\B C$E)}
\BIC{$\B C^{q}(A\to A), A,A \vdash \B C^{q}A$}
\RL{($\B C$E)}
\BIC{$\B C^{q}(A\to A),A \vdash \B C^{q*q}A$}
\doubleLine
\RL{($\to$I)}
\UIC{$\vdash \B C^{q}(A\to A)\to  A\to \B C^{q*q}A$}
\DP
%
%
%
%
%
$
\end{tabular}
\end{minipage}
}
\caption{}
\label{fig:ex1}
\end{subfigure}
}
\caption{Examples of derivations in $\NDCPLL$.}
\end{figure*}
A precise connection between the proof systems for 
$\ICPL$ and $\ICPLL$ is discussed in Appendix~\ref{app:relatingICPLICPLL}.


\subsection{Normalization in $\ICPLL$}

From the CHC perspective, natural deduction proofs correspond to programs, and normalization corresponds to execution. Let us look at normalization in $\ICPLL$, then.

The two main normalization steps are ($\to$I/$\to$E) and ($\B C$I/$\B C$E).
The cuts ($\to$I/$\to$E) are eliminated, as usual, 
by means of the admissible substitution rule
$$
\AXC{$\Gamma \vdash\bone\Pto A$}
\AXC{$\Gamma , A\vdash \bone\Pto B$}
\RL{(subst)}
\BIC{$\Gamma \vdash \bone\Pto B$}
\DP
$$
The normalization step ($\B C$I/$\B C$E), illustrated in Fig.~\ref{fig:cutelim2},
also applies the rule (subst) to the premiss of ($\B C$I) and the minor premiss of ($\B C$E), and permutes the rule ($\B C$I) downwards.
For example, if we cut the proof $\Pi$ from Fig.~\ref{fig:ex1} with $\Pi_{\frac{1}{2}\mathsf{id}}$ from Fig.~\ref{fig:ex0} (by letting $q=\frac{1}{2}$), normalization duplicates $\Pi$, that is, it duplicates the choice between the correct and the dummy proof of $A\to A$, yielding a normal proof of $A\to \B C^{\frac{1}{2}*\frac{1}{2}}A$.
%
%
%
%
The other normalization steps (illustrated in App.~\ref{app:permu}) permute ($\mathsf m$) with other rules, and  correspond to the \emph{permuting reductions} for the probabilistic $\lambda$-calculus introduced in the next section. 
To make the study of normalization as simple as possible, we did not consider a ``multiplication rule'' to pass from $\B C^{q}\B C^{s}A$ to $\B C^{qs}A$, as this would introduce other normalization steps. Nevertheless, in App.~\ref{app:CbNProofSystem} we study an alternative  ``CbN'' proof-system $\NDCPLL$ also comprising this rule.
%
As a by-product of the CHC developed in the following sections, we will obtain a strong normalization theorem for $\ICPLL$ (Corollary \ref{thm:cutelim}).

In Section \ref{section5} it will be shown (Prop.~\ref{prop:stab}) that, once proofs in $\NDCPLL$ are seen as probabilistic programs, the two normalization steps ($\to$I/$\to$E) and 
($\B C$I/$\B C$E) simulate CbN and CbV evaluation, respectively.
%

\begin{figure*}[t]
%
\fbox{
\begin{minipage}{0.98\textwidth}
\adjustbox{scale=0.8}{
\begin{tabular}{c  c  c}
\AXC{$\Sigma$}
\noLine
\UIC{$\Gamma \vdash \bone \land \bthree  \Pto A$}
\AXC{$\mu(\bthree  )\geq q$}
\RL{($\B C$I)}
\BIC{$\Gamma \vdash \bone \Pto \B C^{q}A$}
\AXC{$\Pi$}
\noLine
\UIC{$\Gamma,A \vdash \bone \Pto B$}
\RL{($\B C$E)}
\BIC{$\Gamma \vdash \bone \Pto \B C^{qs}B$}
\DP
&
$\leadsto$
&
\AXC{$\Sigma$}
\noLine
\UIC{$\Gamma \vdash \bone \land \bthree  \Pto A$}
\AXC{$\Pi${\small$[\bone\mapsto \bone\land \bthree]$}}
\noLine
\UIC{$\Gamma,A \vdash \bone  \land \bthree\Pto B$}
\RL{(subst)}
\BIC{$\Gamma \vdash \bone \land \bthree  \Pto B$}
\AXC{$\mu(\bthree)\geq qs$}
\RL{($\B C$I)}
\BIC{$\Gamma \vdash \bone\Pto \B C^{qs}B$}
\DP
\end{tabular}
}
\end{minipage}
}
\caption{Normalization step ($\B C$I)/($\B C$E) of $\NDCPLL$.}
\label{fig:cutelim2}
%
\end{figure*}

\section{The Probabilistic Event Lambda Calculus}\label{section4}
In this section we introduce the computational side of the CHC, that is, a variant of the probabilistic event $\lambda$-calculus $\EVL$ from \cite{DLGH}, with choices depending on events from the Cantor space. We will then discuss how terms of $\EVL$ yield distributions of values, and we define two notions of probabilistic normalization for such distributions. We finally introduce a further variant $\EVLL$ of $\EVL$ which provides a smoother representation of CbV functions.

\subsection{A $\lambda$-Calculus Sampling from the Cantor Space}

The terms of $\EVL$ are defined by the grammar below:
$$
t::= x\midd \lambda x.t\midd tt\midd t\oplus_{a}^{i}u \midd \nu a.t
$$
with $a\in \CC A$, and $i\in \mathbb N$. The intuition is that
 $\nu a.$ samples some function $\omega$ from the Cantor space, and  
 $t\oplus_{a}^{i}u$ yields either $t$ or $u$ depending on the value $\omega(a)(i)\in\{0,1\}$.
 In the following we let $t\oplus^{i} u$ be an abbreviation for $\nu a.t\oplus_{a}^{i}u$ (supposing $a$ does not occur free in either $t$ or $u$).
  For any term $t$, finite set $X$ and $\omega \in (2^{\BB N})^{X}$, let 
 $\pi^{\omega}_{X}(t)$, the ``application of $\omega$ to $t$ through $X$'', be defined as: 
 \begin{center}
 \adjustbox{scale=0.9}{
 \begin{minipage}{0.5\textwidth}
 \begin{align*}
 \pi^{\omega}_{X}(x)& =x \\
  \pi^{\omega}_{X}(\lambda x.t)& =\lambda x.\pi^{\omega}_{X}(t) \\
   \pi^{\omega}_{X}(tu)& =\pi^{\omega}_{X}(t)\pi^{\omega}_{X}(u) \\
 \pi^{\omega}_{X}(t\oplus_{a}^{i}u) &= \begin{cases}
 \pi^{\omega}_{X}(t) & \text{ if }a\in X \text{ and } \omega(a)(i)=1 \\
  \pi^{\omega}_{X}(u) & \text{ if }a\in X \text{ and } \omega(a)(i)=0 \\
  \pi^{\omega}_{X}(t) \oplus_{a}^{i}\pi^{\omega}_{X}(u)& \text{ otherwise}
  \end{cases}\\
  \pi^{\omega}_{X}(\nu b.t) & = \nu b. \pi^{\omega}_{X}(t)
 \end{align*}
 \end{minipage}
 }
 \end{center} 
 
In usual randomized $\lambda$-calculi, program execution is defined so as to be inherently probabilistic: for example a term $t\oplus u$ can reduce to either $t$ or $u$, with probability $\frac{1}{2}$. In this way, chains of reduction can be described as \emph{stochastic Markovian sequences} \cite{Puterman1994}, leading to formalize the idea of \emph{normalization with probability} $r\in[0,1]$ (see \cite{Bournez2002}).

By contrast, reduction in $\EVL$ is fully deterministic: beyond the usual (and un-resticted) $\beta$-rule $(\lambda x.t)u \redbeta t[u/x]$, it comprises a \emph{permutative reduction} $t\redperm u$ defined by the rules in Fig.~\ref{fig:permutations} (where $(a,i)<(b,j)$ if either $\nu b$ occurs in the scope of $\nu a$, or $a=b$ and $i<j$).  
Intuitively, permutative reductions implement probabilistic choices by computing the full tree of possible choices. 
For example, given terms $t_{1},t_{2}, u_{1},u_{2}$, one can see that the term 
$\nu a.(t_{1}\oplus_{a}^{0}t_{2})(u_{1}\oplus_{a}^{1}u_{2})$ reduces to $
\nu a.  (t_{1}u_{1}\oplus_{a}^{1} t_{1}u_{2}) \oplus_{a}^{0}( t_{2}u_{1}\oplus_{a}^{1} t_{2}u_{2})
$, hence displaying all possible alternatives.

\begin{figure}[t]
\fbox{
\begin{minipage}{0.48\textwidth}
\adjustbox{scale=0.9}{
\begin{minipage}{1.1\textwidth}
\begin{align*}
 t\oplus_{a}^{i}t & \redperm t  \tag{$\mathsf{i}$} \\
 (t\oplus_{a}^{i}u)\oplus_{a}^{i} v & \redperm t\oplus_{a}^{i}v \tag{$\mathsf{c}_{1}$}\label{eq:c1} \\
 t\oplus_{a}^{i}(u\oplus_{a}^{i}v) & \redperm t\oplus_{a}^{i}v  \tag{$\mathsf{c}_{2}$} \\
 \lambda x.(t\oplus_{a}^{i}u) & \redperm (\lambda x.t)\oplus_{a}^{i} (\lambda x.u)  \tag{$\oplus\lambda$} \\
 (t\oplus_{a}^{i}u)v & \redperm (tu)\oplus_{a}^{i}(uv)  \tag{$\oplus\mathsf{f}$} \\
 t(u\oplus_{a}^{i}v) & \redperm (tu)\oplus_{a}^{i}(tv)  \tag{$\oplus\mathsf{a}$}\label{eq:oa} \\
 (t\oplus_{a}^{i}u)\oplus_{b}^{j} v & \redperm (t\oplus_{b}^{j}v)\oplus_{a}^{i}(u\oplus_{b}^{j}v) & ((a,i)< (b,j)) \tag{$\oplus\oplus_{1}$}  \\
 t\oplus_{b}^{j}(u\oplus_{a}^{i} v) & \redperm (t\oplus_{b}^{j}u)
\oplus_{a}^{i}(t\oplus_{b}^{j} v) & ((a,i)<(b,j))
\tag{$\oplus\oplus_{2}$}
 \\
\nu b. (t\oplus_{a}^{i} u) & \redperm (\nu b.t)\oplus_{a}^{i}(\nu b.u) &  (a\neq b)
\tag{$\oplus\nu$}\label{eq:cnu} \\
\nu a.t & \redperm t &  (a\notin \FN(t)) 
\tag{$\lnot\nu$}\label{eq:notnu}\\
\lambda x.\nu a.t & \redperm \nu a. \lambda x.t 
\tag{$\nu\lambda$}\label{eq:nulambda}\\
(\nu a.t)u  & \redperm  \nu a.(tu)\tag{$\nu\mathsf{f}$}\label{eq:nuapp}
 \end{align*}
 \end{minipage}
 }
\end{minipage}
}
\caption{Permutative reductions.}
\label{fig:permutations}
\end{figure}

The fundamental properties of $\EVL$ are the following:
\begin{theorem}[\cite{DLGH}]\label{thm:confluence}
$\redperm$ is confluent and strongly normalizing. 
 \emph{Full reduction}  $\redall\ := \ \redbeta\cup \redperm$ is confluent.
\end{theorem}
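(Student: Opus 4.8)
The plan is to handle the two relations separately and then combine them: first show that $\redperm$ is strongly normalizing; deduce its confluence from strong normalization via Newman's lemma; and finally obtain confluence of $\redall=\redbeta\cup\redperm$ through the Hindley--Rosen lemma, i.e.\ by proving that $\redbeta$ is confluent, that $\redperm$ is confluent (already in hand), and that the two relations commute.

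For strong normalization of $\redperm$, the guiding intuition is that the rules of Figure~\ref{fig:permutations} make each choice operator $\oplus_{a}^{i}$ float upward toward the root and get sorted according to the order on labels, make each binder $\nu a$ float above every $\oplus$, while the rules $(\mathsf i)$, $(\mathsf c_{1})$, $(\mathsf c_{2})$, $(\lnot\nu)$ only erase material; consequently $\redperm$-normal forms have the canonical shape of a tree of (sorted) $\nu$'s and $\oplus$'s sitting above pure $\lambda$-terms. I would encode this by a well-founded measure given as a lexicographic pair. The first component is a monotone interpretation $\Phi(\cdot)$ into the naturals in which application, $\lambda$ and $\nu$ are interpreted by strictly monotone, \emph{super-additive} functions of their arguments (so that a rule pushing a $\oplus$ past one of these constructors strictly lowers $\Phi$ even though a subterm is copied, as in $(\oplus\mathsf f)$, $(\oplus\mathsf a)$, $(\oplus\lambda)$, $(\oplus\nu)$), while $\oplus$ is interpreted additively; to also cover the sorting rules $(\oplus\oplus_{1})$, $(\oplus\oplus_{2})$ one grades the interpretation by the finitely many labels occurring in the term, so that distributing a smaller-labelled choice past a larger-labelled one becomes again an instance of ``distribution past a context'' and strictly decreases $\Phi$. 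This first component is strictly decreased by every rule except the two pure $\nu$-commutations $(\nu\lambda)$ and $(\nu\mathsf f)$, which leave it unchanged; these are dealt with by the second component, namely the number of $\lambda$- and application-nodes lying above a $\nu$-node, which both strictly decrease. The main obstacle is precisely the design of $\Phi$: one must check simultaneously that the size blow-up caused by duplication is dominated by the descent of the floating operator, \emph{and} that the grading by labels remains consistent with the additive treatment of $\oplus$ in the distribution rules. This is the one genuinely delicate calculation.

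Confluence of $\redperm$ then follows from Newman's lemma once local confluence is verified. Since $\redperm$ is finitary and first-order up to binders, I would enumerate the critical pairs coming from overlaps between the twelve rules --- for instance the overlap of $(\oplus\mathsf f)$ with $(\oplus\lambda)$, of $(\oplus\mathsf a)$ with $(\oplus\oplus_{1})$ and $(\oplus\oplus_{2})$, the self-overlaps of the contraction rules $(\mathsf i)$, $(\mathsf c_{1})$, $(\mathsf c_{2})$, and the overlaps of the $\nu$-rules $(\oplus\nu)$, $(\nu\lambda)$, $(\nu\mathsf f)$ with the distribution rules --- and show that each diverging pair is rejoined by $\redperm$. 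All such pairs close, and strong normalization upgrades local confluence to confluence.

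For $\redall$, I would first prove $\redbeta$ confluent by the standard Tait--Martin-L\"of argument: define a parallel $\beta$-reduction contracting a set of $\beta$-redexes at once, treating $\oplus_{a}^{i}$ and $\nu a$ as inert congruence constructors, and verify its diamond property. It then remains to show that $\redbeta$ and $\redperm$ commute. The delicate point here is that a permutation may duplicate or relocate a $\beta$-redex (e.g.\ $(\oplus\mathsf a)$ copies a redex sitting in an argument, while $(\oplus\lambda)$, $(\nu\lambda)$ and $(\nu\mathsf f)$ reshape the position where a $\lambda$ meets its argument, possibly creating fresh redexes), and conversely a $\beta$-step may duplicate permutation redexes. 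I would therefore establish a local commutation diagram of the form $\redperm\!\cdot\!\redbeta\ \subseteq\ \redbeta\!\cdot\!\redperm^{*}$, closing each elementary clash with the appropriate number of steps on each side, and lift it to the reflexive--transitive closures by the commutation lemma, using strong normalization of $\redperm$ where a bound on the number of permutation steps is needed. Confluence of $\redall$ then follows from the Hindley--Rosen lemma. I expect the strong-normalization measure and, secondarily, the commutation diagrams to be the two parts requiring real work, with the critical-pair analysis being tedious but routine.
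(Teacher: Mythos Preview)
Your overall architecture---strong normalization of $\redperm$, then Newman's lemma for its confluence, then Hindley--Rosen for $\redall$ via commutation of $\redbeta$ and $\redperm$---matches the paper's (and \cite{DLGH}'s) strategy. The difference lies in how strong normalization is obtained, and in how much work the paper actually does.

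The paper does not prove this theorem directly: it is cited from \cite{DLGH}, and the appendix merely explains how to transport the result via an invertible embedding of $\EVL$ into the calculus of \cite{DLGH} (encoding $\oplus_{a}^{i}$ as $\oplus_{\varphi(a,i)}$ for a bijection $\varphi:\BB N^{2}\to\BB N$, and $\nu a$ as a block of $\nu$'s). The underlying proof of strong normalization in \cite{DLGH}, which the paper does reproduce for the extended calculus $\EVLL$, uses a \emph{recursive path ordering} rather than a monotone interpretation: one fixes a precedence with $\oplus_{a}^{i}\prec_{M}\oplus_{b}^{j}$ when $(a,i)<(b,j)$, $\oplus_{a}^{i}\prec_{M}\nu b$, and $\nu a\prec_{M}@,\lambda x$, and checks that every rule of Fig.~\ref{fig:permutations} is a strict decrease in the induced RPO. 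This handles the duplication in $(\oplus\mathsf f)$, $(\oplus\mathsf a)$, $(\oplus\oplus_{1})$, $(\oplus\oplus_{2})$, $(\oplus\nu)$ automatically by the subterm property of RPO, and the sorting rules fall out of the precedence on the $\oplus$ symbols. Your interpretation-based argument can in principle be made to work, but the ``grading by labels'' you need for the sorting rules is exactly the kind of bookkeeping that RPO absorbs for free; what you identify as the one genuinely delicate calculation simply does not arise with the paper's method.

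For the remaining parts---local confluence by critical-pair analysis, confluence of $\redbeta$ by parallel reduction, and commutation of $\redbeta$ with $\redperm$---your plan coincides with the paper's (which just notes that the permutations neither block nor critically overlap a $\beta$-redex).
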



The existence and unicity of normal forms for $\redperm$ (that we call \emph{permutative normal forms}, PNFs for short) naturally raises the question of what these normal forms represent.

Let $\CC T$ indicate the set of PNFs containing no free name occurrence.
For any $t\in \CC T$, the PNF of $t$ can be of two forms: 
either $t$ starts with a generator, i.e.~$t=\nu a.t'$, and $t'$ is a tree of $a$-labeled choices $\oplus_{a}^{i}$ whose leaves form a finite set of $\CC T$ (the \emph{support} of $t'$, $\mathrm{supp}(t)$). 
Otherwise, $t$ is of the form 
$\lambda x_{1}.\dots.\lambda x_{n}.t't_{1}\dots t_{p}$, where $t'$ is either a variable or a $\lambda$. We call these last terms \emph{pseudo-values}, and we let $\CC V\subseteq \CC T$ indicate the set formed by them.

%
%
%
%
%
%
Using this decomposition, any $t\in \CC T$ can be associated in a unique way with a \emph{(sub-)distribution} of pseudo-values $\CC D_{t}: \CC V\to [0,1]$ by letting $\CC D_{t}(v)=\delta_{t}$ when $t\in \CC V$, and 
$$
\CC D_{t}(v) =  \sum_{u\in \mathrm{supp}(t')}\CC D_{u}(v) \cdot \mu(\{\omega\in 2^{\BB N}\mid \pi^{\omega}_{\{a\}}(t')=u\}) 
$$
if $t=\nu a.t'$. Intuitively, $\CC D_{t}(v)$ measures the probability of finding $v$ by iteratively applying to $t$ random choices of events from the Cantor space any time a $\nu$ is found. 

%
%

\subsection{Probabilistic (Head) Normalization.}

Given a term $t\in \CC T$, the questions ``is $t$ in normal form?'' and ``does $t$ reduce to a normal form?'' have univocal yes/no answers, because $\redall$ is deterministic. However, if we think of $t$ rather as $\CC D_{t}$, the relevant questions become ``with what probability is $t$ in normal form?'' and ``with what probability does $t$ reduce to normal form?''.

To answer this kind of questions we will introduce functions $\NHNF(t)$, $\NNF(t)$ measuring the probability that $t$ reduces to a normal form.
%
%
%

%

%
Let us consider head-normal forms, first. 
A \emph{head-reduction} $t\redall_{\mathsf h} u$ is either a $\redperm$-reduction or a $\redbeta$-reduction 
of the form 
$\TT R[\lambda \vec x.(\lambda y.t)uu_{1}\dots u_{n}]\redbeta \TT R[\lambda \vec x.t[u/x]u_{1}\dots u_{n}]$, 
where $\TT R$ is a \emph{randomized context}, defined by the grammar 
$$
\TT R[\ ]::= [\ ]\mid \TT R[\ ]\oplus_{a}^{i}u \mid t\oplus_{a}^{i}\TT R[\ ]\mid  \nu a. \TT R[\ ].
$$
A \emph{head normal value} (in short, HNV) is a $\redall_{\mathsf h}$-normal term which is also a pseudo-value, i.e.~it is of the form $\lambda \vec x.yu_{1}\dots u_{n}$. We let $\HNF$ indicate the set of such terms. 
\begin{definition}
For any $t\in \CC T$,  
$\HNF(t) :=\sum_{v\in \HNF}\CC D_{t}(v)$ and 
$\NHNF(t):= \sup\{ \HNF(u)\mid t\redall_{\mathsf h}^{*} u\}$.
When $\NHNF(t)\geq q$, we say that $t$ \emph{yields a HNV with probability at least $q$}.
\end{definition} 
For example, if $t= \nu a. (\lambda x\lambda y.(y\oplus_{a}^{i}I )x)u$, where $u= I \oplus^{j}\Omega$, then $\NHNF(t)= \frac{3}{4}$. Indeed,  we have 
$$t\redall^{*}_{\mathsf h} \nu a.( \lambda y. y(\nu b.I\oplus_{b}^{j}\Omega)) \oplus_{a}^{i} (\nu b'.I\oplus_{b'}^{j} \Omega)$$
and 
 three over the four possible choices (corresponding to choosing between either left or right for both $\nu a$ and $\nu b'$) yield a HNV. Observe that the choice about $\nu b$ does not matter, since $\lambda y.yu$ is already a HNV.

Let us now consider normal forms. The first idea might be to define a similar function $\NF(t)=\sum_{v \text{ normal form}}\CC D_{t}(v)$. 
However, with this definition a term like $t=\lambda x.x(\nu a. I \oplus_{a}^{0}\Omega)$ would have probability 0 of yielding a normal form. Instead, our guiding intuition here is that $t$ should yield a normal form with probability $\frac{1}{2}$, i.e.~depending on a choice for $a$. 
This leads to the following definition:
\begin{definition}
For any $t\in \CC T$, $\NF(t)$ is defined by:
\begin{varitemize}
\item if $t=\lambda \vec x.yu_{1}\dots u_{n}\in \HNF$, then $\NF(t):= \prod_{i=1}^{n}\NF(u_{i})$;
\item otherwise $\NF(t):= \sum_{u\in \HNF}\NF(u)\cdot \CC D_{t}(u)$.
\end{varitemize}
We let $\NNF(t)=\sup\{\NF(u)\mid t\redall^{*}u\}$ and, if $\NNF(t)\geq q$, we say that $t$ \emph{yields a normal form with probability at least $q$}.\end{definition} 
For example, for the term $t$ considered above, $\NNF(t)=\frac{4}{8}=
\frac{1}{2}$: four over the eight possible choices for $\nu a$, $\nu b$ and $\nu b'$ yield a normal form (i.e.~either choose left for $\nu a$ and $\nu b$ and choose anything for $\nu b'$, or 
choose right for $\nu a$, left for $\nu b'$, and choose anything for $\nu b$).


%

\subsection{Extending $\EVL$ with CbV Functions}

$\EVL$ makes it possible to encode a CbV redex like $\nu a. \B 2 (I\oplus_{a}^{0}\Omega)$, as we have seen. However, 
in view of the functional interpretation of $\ICPLL$, it would be convenient also to be able to represent the CbV \emph{functions} mapping $\nu a. (u\oplus_{a}v) $ onto $\nu a.\B 2 (u\oplus_{a}^{0}v)$. 
%
%
A simple way to do this is by enriching the language of $\EVL$ with a ``CbV application'' operator $\{t\}u$, with suitable permutative rules. Let $\EVLL$ indicate this extension of terms of $\EVL$ with $\{\}$. While $\beta$-reduction for $\EVLL$ is the same as for $\EVL$, permutative reduction $\redpermm$ is defined by all rules in Fig.~\ref{fig:permutations} except for \eqref{eq:notnu}, together with the three permutations below
\begin{align*}
\{t\} \nu a. u & \ \redpermm \ \nu a. tu \tag{$\{\}\nu$} \label{eq:e0}\\
\{t\oplus_{a}^{i}u\} v & \ \redpermm \ \{t\}v \oplus_{a}^{i}\{t\} v \tag{$\{\}\oplus_{1}$}\label{eq:e1} \\ 
\{t\} (u \oplus_{a}^{i}v) & \ \redpermm \ \{t\}u \oplus_{a}^{i}\{t\} v\tag{$\{\}\oplus_{2}$}\label{eq:e2} 
\end{align*}
For instance, a CbV Church numeral can be encoded in $\EVLL$ as $ {\B 2}^{\mathrm{CbV}}:=\lambda f. \{\B 2 \}f 
$, since one has ${\B 2}^{\mathrm{CbV}}(\nu a.u\oplus_{a}^{i}v) \redbeta
\{\B 2\} \nu a.u\oplus_{a}^{i}v \redpermm
\nu a. \B 2 (u\oplus_{a}^{i}v)$.

The fundamental properties of $\EVL$ extend to $\EVLL$:
\begin{proposition}
$\redpermm$ is confluent and strongly normalizing. 
 {Full reduction}  $\redalll\ := \ \redbeta\cup \redpermm$ is confluent.
\end{proposition}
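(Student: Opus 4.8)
The plan is to obtain all three claims — strong normalization of $\redpermm$, confluence of $\redpermm$, and confluence of $\redalll$ — by re-running the arguments already available for $\EVL$ (Theorem~\ref{thm:confluence} and its proof in \cite{DLGH}) and checking that the three new permutations \eqref{eq:e0}, \eqref{eq:e1}, \eqref{eq:e2} governing the CbV application $\{t\}u$ do not disturb them. Since $\redpermm$ is, by definition, the $\EVL$-system of Fig.~\ref{fig:permutations} with rule \eqref{eq:notnu} deleted and the three $\{\}$-rules added, and $\redalll=\redbeta\cup\redpermm$ shares the very same $\beta$-rule as $\EVL$, the whole argument is modular: the $\EVL$-cases are inherited, and only the overlaps and measure-contributions involving $\{\}$ are genuinely new.

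\emph{Strong normalization of $\redpermm$.} Deleting rule \eqref{eq:notnu} from the strongly normalizing relation $\redperm$ keeps it strongly normalizing, so the $\EVL$-fragment of $\redpermm$ terminates. For the three $\{\}$-rules I would exhibit a termination measure: each of them either eliminates an occurrence of $\{\}$ by turning it into an ordinary application (rule \eqref{eq:e0}) or pushes $\{\}$ toward a structurally smaller argument while its left component is at worst duplicated (rules \eqref{eq:e1}, \eqref{eq:e2}). A polynomial interpretation assigning $\{t\}u$ a value growing multiplicatively in $t$ and $u$ (e.g.\ $\|\{t\}u\|=\|t\|\cdot\|u\|$ over an interpretation bounded below by $2$, with $\oplus$ and application charged an additive constant) strictly decreases on every $\{\}$-step. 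I then combine this with the $\EVL$-measure lexicographically, the $\EVL$-measure taking priority, so that the $\EVL$-rules decrease the first component while the $\{\}$-rules, which do not increase the first component, strictly decrease the second. The delicate point, and the technical heart of this step, is duplication: just as the $\EVL$-rules $(\oplus\mathsf f)$, $(\oplus\mathsf a)$ and $(\oplus\oplus_i)$ copy subterms, rules \eqref{eq:e1}, \eqref{eq:e2} copy $t$, so the interpretation must strictly dominate these copies, and one must verify carefully that the $\{\}$-rules really leave the chosen $\EVL$-measure unaffected.

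\emph{Confluence of $\redpermm$.} With strong normalization in hand, confluence follows from Newman's Lemma once local confluence is checked. The $\EVL$-internal critical pairs are already known to close (Theorem~\ref{thm:confluence}), so it suffices to inspect the new overlaps created by \eqref{eq:e0}, \eqref{eq:e1}, \eqref{eq:e2}: overlaps of the $\{\}$-rules with one another (e.g.\ a term $\{t_1\oplus_a^i t_2\}(u_1\oplus_b^j u_2)$, where \eqref{eq:e1} and \eqref{eq:e2} both fire at the root $\{\}$) and overlaps where a $\{\}$-rule meets an $\EVL$-permutation acting inside $t$ or $u$. Each such peak can be joined, reusing the choice-duplication and $\nu$-commutation rules of Fig.~\ref{fig:permutations}; these are finitely many routine diagram chases.

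\emph{Confluence of $\redalll$.} Since $\redbeta$ is not terminating, Newman's Lemma is unavailable and I would instead invoke the Hindley--Rosen lemma: if $\redbeta$ and $\redpermm$ are each confluent and they commute, then $\redbeta\cup\redpermm$ is confluent. Confluence of $\redbeta$ alone is the standard Church--Rosser property, proved by the Tait--Martin-L\"of parallel-reduction technique, with the constructors $\oplus_a^i$, $\nu a.$ and $\{\}$ treated as inert contexts through which $\beta$ merely propagates; confluence of $\redpermm$ is the previous step. The real work is the commutation of $\redbeta$ with $\redpermm$, which I expect to be the main obstacle: a permutation can duplicate or relocate a $\beta$-redex, and conversely a $\beta$-step can expose new permutations, so one proves commutation by the usual strip/diamond argument on parallel versions of the two relations, following the $\EVL$ proof of \cite{DLGH} and adding the cases for the three $\{\}$-rules. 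In particular, rule \eqref{eq:e0}, which rewrites $\{t\}\nu a.u$ into $\nu a.tu$ and thereby creates an ordinary application that may host a fresh $\beta$-redex, is the critical interaction to verify; once the commuting diagrams for \eqref{eq:e0}, \eqref{eq:e1}, \eqref{eq:e2} are established, Hindley--Rosen delivers confluence of $\redalll$.
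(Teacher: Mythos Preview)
Your overall plan—inherit the $\EVL$ arguments from \cite{DLGH} and check that the three $\{\}$-rules \eqref{eq:e0}--\eqref{eq:e2} do not spoil them—is exactly the paper's strategy, and your treatment of the two confluence claims is essentially the same: Newman's Lemma plus a critical-pair analysis for $\redpermm$, and a commutation argument between $\redbeta$ and $\redpermm$ (following \cite{DLGH}) for $\redalll$. The paper sharpens the local-confluence step by observing that \eqref{eq:e1}, \eqref{eq:e2} and all the $\oplus$-permutations of Fig.~\ref{fig:permutations} share the common shape $\TT C[t\oplus_a^i u]\redpermm \TT C[t]\oplus_a^i \TT C[u]$, so their overlaps close uniformly as in \cite{DLGH}; the single genuinely new critical pair is \eqref{eq:e0} against \eqref{eq:cnu}, on a term $\{t\}\nu a.(u\oplus_b^i v)$, joined by a short diagram through \eqref{eq:e1} and \eqref{eq:oa}.

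Where you diverge is strong normalization. Instead of your lexicographic product of the $\EVL$-measure with a separate polynomial interpretation for $\{\}$, the paper simply extends the precedence $\prec_M$ underlying the recursive path ordering of \cite{DLGH} by one clause, $@,\lambda x\prec_M\{\}$, putting $\{\}$ above every other constructor. The resulting RPO then orients all of $\redpermm$ in one shot: the inherited rules decrease as before, and each of \eqref{eq:e0}--\eqref{eq:e2} decreases because the head symbol $\{\}$ on the redex dominates the head ($\nu$ or $\oplus$) of the contractum. This sidesteps the ``delicate point'' you yourself flag—verifying that the $\{\}$-rules do not increase the first component of your lexicographic pair—since with a single well-founded order there is nothing to combine. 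Your scheme can be pushed through, but the one-line precedence extension is cleaner and avoids the bookkeeping of two interacting measures.
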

In the Appendix~\ref{app:PElambda} it is shown how the definitions of $\CC D_{t}$ and $\NHNF(t)$ extend to terms of $\EVLL$. 

\section{The Correspondence, Statically and Dynamically}\label{section5}

In this section we present the core of the CHC. First, we introduce two type systems $\TCI$ and $\LCPL$ for $\EVL$ and $\EVLL$, respectively, which extend the simply typed $\lambda$-calculus with counting quantifiers. Then, we show that each proof $\Pi$ in $\ICPLL$ can be  associated with a typing derivation in $\LCPL$ of some probabilistic term $t^{\Pi}$, in such a way that normalization of $\Pi$ corresponds to reduction of $t^{\Pi}$. 
Observe that translating the rule ($\B C$E) requires the CbV application operator $\{\}$. In the Appendix~\ref{app:CbNProofSystem} we show that an alternative ``CbN'' proof-system for $\ICPLL$ can be translated without $\{\}$, and thus into $\TCI$.

\subsection{Two Type Systems with Counting Quantifiers}

Both type systems $\TCI$ and $\LCPL$ extend the simply typed $\lambda$-calculus with counting quantifiers, but in a slightly different way: in $\TCI$ types are of the form $\B C^{q}\FF s$, i.e.~prefixed by \emph{exactly one} counting quantifier, while in $\LCPL$ types are of the form $\B C^{\vec q}\FF s$, i.e.~prefixed by a (possibly empty) list of counting quantifiers. 

More precisely, the types (denoted $\FF s, \FF t$) of $\TCI$ and $\LCPL$ are generated by the grammars below:
\begin{align*}
 \FF s &  :: = \B C^{ q}\sigma  &  \sigma &::=o\midd \FF s\To \sigma
\tag{$\TCI$}\\
 \FF s &  :: = \B C^{\vec q}\sigma  &  \sigma &::=o\midd \FF s\To \sigma
 \tag{$\LCPL$}
\end{align*}
where, in both cases, the $q$s are chosen in $(0,1]\cap \BB Q$. 
%
%
%
%

\emph{Judgements} in both systems are of the form $\Gamma \vdash^{X} t: \bone \Pto \FF s$, where $\Gamma$ is a set of type declarations $x_{i}: \FF s_{i}$ of pairwise distinct variables, $t$ is a term of $\EVL$ (resp.~of $\EVLL$), $\bone$ a Boolean formula, and $X$ a finite set of names with $\FN(t), \FN(\bone)\subseteq X$. 
The intuitive reading of $\Gamma \vdash^{X} t: \bone \Pto \FF s$ is that, whenever $\omega\in (2^{\BB N})^{X}$ satisfies $\bone$, then $\pi^{\omega}_{X}(t)$ is a program from $\Gamma$ to $\FF s$.
%

The \emph{typing rules} of $\LCPL$, which are essentially derived from those of $\ICPLL$, are illustrated in Fig.~\ref{fig:rulesLCPL}, where in the rule ($\mu$) it is assumed that $\FN(\bone)\subseteq X$, $\FN(\bthree)\subseteq \{a\}$ and $a\notin X$.
The rule ($\vee$) allows one to merge $n$ typing derivations for the same term; in particular, with $n=0$, one has that $\Gamma \vdash^{X}t: \BOT \Pto \FF s$ holds for \emph{any} term $t$.
The rule ($\oplus$) is reminiscent of the rule ($\mathsf m$) of $\ICPLL$,
the rule ($\mu$) is reminiscent of the rule ($\B C$I) of $\ICPLL$, while the rule ($\{\}$) for CbV application is reminiscent of the rule ($\B C$E).

The typing rules of $\TCI$ coincide with those of $\LCPL$ (with $\B C^{\vec q}$ replaced everywhere by $\B C^{q}$), except for the rule ($\{\}$), which is obviously absent, and for the rule ($\mu$), which is adapted as follows to the presence of exactly one counting quantifier:
$$
\AXC{$\Gamma \vdash^{X\cup\{a\}}t:\bone\land \bthree \Pto \B C^{q}\sigma$}
\AXC{$\mu(\bthree)\geq s$}
\RL{($\mu'$)}
\BIC{$\Gamma \vdash^{X}\nu a.t:\bone \Pto \B C^{qs}\sigma$}
\DP
$$


\begin{figure}[t]
\fbox{
\begin{minipage}{0.48\textwidth}
\adjustbox{scale=0.73}{
\begin{minipage}{\textwidth}
\begin{center}
Identity rule:
$$
\AXC{$\mathrm{FN}(\bone)\subseteq X$}
\RL{(id)}
\UIC{$\Gamma, x:\FF s \vdash^{X} x: \bone \Pto \FF s $}
\DP
$$

\bigskip

Structural rule:
$$
\AXC{$
\Big\{ \Gamma \vdash^{X}t: \bone_{i}\Pto \FF s\Big\}_{i=1,\dots,n}$}
\AXC{$\bone \vDash^{X}\bigvee_{i}\bone_{i}$}
\RL{$(\lor)$}
\BIC{$\Gamma \vdash^{X}t: \bone\Pto \FF s$}
\DP
$$

%

%

\bigskip

Plus rule:
$$
\AXC{$ \Gamma \vdash^{X\cup\{a\}}t: \btwo\Pto \FF s$}
\AXC{$ \Gamma \vdash^{X\cup\{a\}}u: \bthree\Pto \FF s$}
\AXC{$\bone \vDash (\bvar_{a}^{i}\land \btwo)\lor(\lnot \bvar_{a}^{i}\land \bthree)$}
\RL{$(\oplus)$}
\TIC{$\Gamma \vdash^{X\cup\{a\}} t\oplus^{i}_{a}u :  \bone\Pto \FF s$}
\DP
$$
%
%
%

\bigskip

Arrow rules:
$$
\AXC{$\Gamma, x:\FF s \vdash^{X} t: \bone\Pto \B C^{\vec q}\tau$}
\RL{$(\lambda)$}
\UIC{$\Gamma \vdash^{X} \lambda x.t : \bone\Pto \B C^{\vec q}(\FF s\To \tau)$}
\DP
$$

$$\AXC{$\Gamma \vdash^{X}t:\btwo\Pto  \B C^{\vec q}(\FF s \To \tau)$}
\AXC{$ \Gamma \vdash^{X} u: \bthree\Pto \FF s$}
\AXC{$\bone\vDash^{X}\btwo\land \bthree$}
\RL{$(@)$}
\TIC{$\Gamma \vdash^{X} tu: \bone\Pto \B C^{\vec q}\tau$}
\DP
$$

$$\AXC{$\Gamma \vdash^{X}t:\btwo\Pto \B C^{\vec q}(\FF s \To \tau)$}
\AXC{$ \Gamma \vdash^{X} u: \bthree\Pto \B C^{r}\FF s$}
\AXC{$\bone\vDash^{X}\btwo\land \bthree$}
\RL{$(\{ \})$}
\TIC{$\Gamma \vdash^{X} \{t\}u: \bone\Pto \B C^{rs}\B C^{\vec q} \tau$}
\DP
$$

\bigskip

Counting rule:
$$
\AXC{$\Gamma \vdash^{X\cup \{a\}} t: \bone\land \bthree\Pto \FF s$}
\AXC{$\mu(\bthree)\geq q$}
\RL{$(\mu)$}
\BIC{$\Gamma \vdash^{X} \nu a.t: \bone\Pto \B C^{q}\FF s$}
\DP
$$

\

\end{center}
\end{minipage}
}
\end{minipage}
}

\caption{Typing rules of $\LCPL$.}
\label{fig:rulesLCPL}
\end{figure}
Both systems enjoy the subject reduction property:

\begin{restatable}[Subject Reduction]{proposition}{subje}\label{prop:subje}
If $\Gamma \vdash^{X}t: \bone \Pto \FF s$ in $\TCI$ (resp.~in $\LCPL$) and $t \redall u$ (resp.~$t\redalll u$), then 
$\Gamma \vdash^{X}u: \bone \Pto \FF s$.
\end{restatable}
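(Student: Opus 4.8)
The plan is to argue by induction on the typing derivation of $\Gamma \vdash^{X} t: \bone \Pto \FF s$, handling $\TCI$ and $\LCPL$ in parallel (the only differences being that $\TCI$ carries a single quantifier $\B C^{q}$ instead of a list $\B C^{\vec q}$, uses $(\mu')$ in place of $(\mu)$, omits $(\{\})$, and refers to $\redall$ rather than $\redalll$). The argument rests on two preliminary lemmas. First, \emph{Boolean weakening}: if $\Gamma \vdash^{X} t: \bthree \Pto \FF s$ and $\btwo \vDash^{X} \bthree$, then $\Gamma \vdash^{X} t: \btwo \Pto \FF s$ --- this is just $(\lor)$ with a single premise, and the same rule lets one enlarge the name set $X$ to any $X' \supseteq X$. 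Second, for each term constructor a \emph{generation lemma} describing derivability \emph{modulo} $(\lor)$: e.g.\ from $\Gamma \vdash^{X} \lambda x.t: \bone \Pto \FF r$ one reads off $\FF r = \B C^{\vec q}(\FF s \To \tau)$ together with formulas $\bone_{1},\dots,\bone_{n}$ such that $\bone \vDash^{X} \bigvee_{i} \bone_{i}$ and $\Gamma, x:\FF s \vdash^{X} t: \bone_{i} \Pto \B C^{\vec q}\tau$ for each $i$; analogous statements hold for $tu$, $t\oplus_{a}^{i}u$, $\nu a.t$ and $\{t\}u$. Each is a short induction whose $(\lor)$ case simply unions the disjunctions obtained from the sub-derivations.

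The key tool is a \emph{substitution lemma}, stated so as to fuse the two Boolean premises of a redex: if $\Gamma, x:\FF s \vdash^{X} t: \btwo \Pto \FF r$ and $\Gamma \vdash^{X} u: \bthree \Pto \FF s$, then $\Gamma \vdash^{X} t[u/x]: \btwo \land \bthree \Pto \FF r$. I would prove it by induction on the first derivation. In the $(\mathrm{id})$ case one distinguishes whether the typed variable is $x$ (then $t[u/x]=u$ and one weakens from $\bthree$ to $\btwo\land\bthree$) or not (re-apply $(\mathrm{id})$); the $(\lor)$ case uses that $\btwo \vDash \bigvee_i \btwo_i$ gives $\btwo\land\bthree \vDash \bigvee_i(\btwo_i\land\bthree)$; and all remaining cases apply the induction hypothesis to the premises and rebuild the rule, each time checking that conjoining the premise constraints with $\bthree$ preserves the required entailment --- for $(\mu)$ one additionally uses that the bound name $a$ does not occur in $\bthree$, so the independent factor of measure $\ge q$ is untouched. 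With these lemmas in hand, every case of the main induction in which $t$ does not reduce at its root is immediate: apply the induction hypothesis to the sub-derivation containing the redex and reapply the last rule; in the $(\lor)$ case one reduces all premises (which concern the same term) in parallel and reapplies $(\lor)$ with the same witness $\bone \vDash \bigvee_i \bone_i$.

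It then remains to treat the root reductions. The $\beta$-step $(\lambda x.t)u \redbeta t[u/x]$ is settled by inverting the $(@)$ rule through the generation lemma for abstractions and invoking the substitution lemma, followed by Boolean weakening to the constraint recorded in the conclusion. The idempotence and collapsing permutations $(\mathsf i),(\mathsf c_1),(\mathsf c_2)$, and the commutations of $\oplus$ and $\nu$ with abstraction and application, are dispatched by inverting the two or three relevant sub-derivations and recombining them: the Boolean obligation to verify is in each case a propositional consequence of the nested constraints of the shape $(\bvar_a^i\land\cdot)\lor(\lnot\bvar_a^i\land\cdot)$ produced by inversion. I expect the real difficulty to concentrate in the permutations that move a generator or a call-by-value application past a choice or a generator --- namely $(\oplus\nu)$, $(\nu\mathsf f)$, $(\nu\lambda)$, $(\lnot\nu)$ and the three $\{\}$-rules --- where one must simultaneously reconcile the measure side-conditions $\mu(\bthree)\ge q$, the independence provisos ($\FN(\bthree)\subseteq\{a\}$, $a\notin X$, disjointness of free names) that make these measures multiply, and the reshaping of the counting prefixes $\B C^{\vec q}$ (most delicately in $(\{\})$, whose conclusion carries $\B C^{rs}\B C^{\vec q}$). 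The crux is to choose, inside the reduct, which Boolean factor is assigned to which name so that the $(\mu)$/$(\mu')$-conditions still hold with the \emph{same} recorded probabilities; this forces one to establish, alongside Boolean weakening, an admissible \emph{rescaling} of the quantifier annotations (needed in particular to re-type the body after erasing a vacuous generator in $(\lnot\nu)$), which is the step demanding the most care.
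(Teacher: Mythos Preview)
Your plan matches the paper's proof closely: the same weakening and substitution lemmas (Lemmas~\ref{lemma:weak2}, \ref{lemma:weak}, \ref{lemma:betasub}), and the same case analysis on the reduction rule after dispatching $(\lor)$. Two small divergences are worth noting. First, the paper does not formulate explicit generation lemmas; instead it observes that if the last rule is $(\lor)$ one may recurse on all its premises, after which the last rule is forced by the shape of $t$ and inversion is immediate---this is your generation lemma inlined. To streamline the $\oplus$ cases the paper also replaces $(\oplus)$ by two one-sided rules $(\oplus l)$, $(\oplus r)$, inter-derivable with $(\oplus)$ in presence of $(\lor)$; this spares the repeated disjunctive splitting you would otherwise do by hand.

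Second, your list of delicate root cases slightly overshoots for $\LCPL$: the permutative relation $\redpermm$ explicitly \emph{omits} $(\lnot\nu)$, so no rescaling of the $\B C^{\vec q}$ prefix is ever needed there, and every permutation is handled by Boolean reasoning plus the two weakening lemmas. The $(\lnot\nu)$ step is genuine only for $\TCI$ (i.e.\ for $\redall$). There the paper does not attempt a rescaling admissibility argument inside $\TCI$ itself; instead it proves subject reduction for the richer system $\TCINT$, where probability weakening is folded into the subtyping order $\preceq$ used in the identity rule (Lemma~\ref{lemma:order}), together with a lemma (Lemma~\ref{lemma:nota}) that strips an unused name from a derivation, and then obtains $\TCI$ as a special case. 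Your rescaling intuition is correct, but carrying it out directly in $\TCI$ runs into $(\mathrm{id})$, which pins the outer $\B C^q$ to the context declaration; routing through $\TCINT$ is precisely how the paper sidesteps this.
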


The choice of considering arrow types of the form $\B C^{\vec q}(\FF s\To \sigma)$ (hence of never having a counting quantifier \emph{to the right} of $\To$, as in e.g.~ $\FF s\To \B C^{q}\sigma$) was made to let the rule ($\lambda$) be permutable over ($\mu$), as required by the permuting rule \eqref{eq:nulambda}.

 \begin{example}
Fig.~\ref{fig:exm1} illustrates typing derivations in $\LCPL$ for a version of the CbN Church numeral $\B 2^{\mathrm{CbN}}= \lambda y.\lambda x.\{y\}(yx)$, with type
$\B C^{q*q}(\B C^{q}(o\To o)\To (o\To o))$, and for the CbV numeral
$\B 2^{\mathrm{CbV}}$ with type $\B C^{q}(\B C^{q}(o\To o)\To (o\To o))$.

%
 
 \end{example}
 

Both $\TCI$ and $\LCPL$ can type \emph{non-normalizable} terms. For example, one can type all terms of the form 
 $I\oplus^{i}\Omega$ with  $\B C^{\frac{1}{2}}(o\To o)$ in $\LCPL$ and with 
  $\B C^{\frac{1}{2}}(\B C^{1}o\To o)$ in $\TCI$.
%
%
Actually, the failure of normalization for typable programs can be ascribed to the rule ($\vee$), as shown by the result below, proved in App.~\ref{app:detnorm} (where we let  
%
 $\Gamma \vdash_{\lnot \vee}t:\bone \Pto \FF s$ indicate that  
$\Gamma \vdash t:\bone \Pto \FF s$ is deduced without using the rule ($\vee$)).

%
%
%

\begin{restatable}[Deterministic Normalization]{theorem}{detnormalization}\label{thm:detnormalization}
In both $\TCI$ and $\LCPL$, if $\Gamma \vdash_{\lnot \vee} t: \bone \Pto \FF s$, then $t$ is strongly normalizable.
\end{restatable}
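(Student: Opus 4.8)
The plan is to prove the statement by a reducibility (Tait--Girard) argument tailored to full reduction $\redall$ (resp.~$\redalll$), organised so that the counting quantifiers play no role in termination and the permutative reductions are absorbed into the candidates. First I would record the simplification afforded by Theorem~\ref{thm:confluence}: since $\redperm$ (resp.~$\redpermm$ for $\EVLL$) is strongly normalizing, any hypothetical infinite $\redall$-sequence must contain infinitely many $\redbeta$-steps, so it suffices to control the $\beta$-activity of well-typed terms. I would also introduce a forgetful map sending every $\TCI$- or $\LCPL$-type to the simple type over the single base $o$ obtained by erasing all counting quantifiers; since the rules $(\mu)$, $(\{\})$ and $(\mu')$ never alter this underlying skeleton (they only prepend quantifiers), reducibility can be defined on skeletons alone, and the rational ranks $q$ and the name set $X$ can be ignored throughout.

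By induction on the simple type $\sigma$ I then set $\mathrm{Red}_{o}=\SN$ and let $\mathrm{Red}_{\FF s\To\sigma}$ be the set of $t\in\SN$ such that $tu\in\mathrm{Red}_{\sigma}$ (and, in $\LCPL$, also $\{t\}u\in\mathrm{Red}_{\sigma}$) for every $u\in\mathrm{Red}_{\FF s}$. I would establish the usual candidate properties: every $\mathrm{Red}_{\sigma}$ is contained in $\SN$ (CR1), is closed under $\redall$ (CR2), and contains any \emph{neutral} term all of whose one-step reducts are reducible (CR3), where a term is neutral when it is neither an abstraction, nor a choice $t\oplus_{a}^{i}u$, nor a generator $\nu a.t$. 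Simultaneously I would prove a \emph{saturation lemma}: each $\mathrm{Red}_{\sigma}$ is closed under the algebraic constructors, i.e.\ $t,u\in\mathrm{Red}_{\sigma}$ implies $t\oplus_{a}^{i}u\in\mathrm{Red}_{\sigma}$, and $t\in\mathrm{Red}_{\sigma}$ implies $\nu a.t\in\mathrm{Red}_{\sigma}$. These closures are proved by induction on $\sigma$ using the permutative rules $(\oplus\mathsf{f})$, $(\oplus\mathsf{a})$, $(\nu\mathsf{f})$, $(\lnot\nu)$ and, for $\LCPL$, the $\{\}$-rules, which let an enclosing application commute through a choice or a generator.

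With the candidates in place I would prove the fundamental/adequacy lemma: for every derivation $\Gamma\vdash_{\lnot\vee}t:\bone\Pto\FF s$ and every substitution $\gamma$ mapping each $x_{i}:\FF s_{i}$ to a term of $\mathrm{Red}_{\FF s_{i}}$, the instance $t\gamma$ lies in $\mathrm{Red}_{\FF s}$. The induction is on the derivation; the interesting cases are $(\lambda)$ (via the standard $\beta$-expansion closure, that $(\lambda x.t)u$ is reducible whenever $t[u/x]$ and $u$ are), $(@)$ and $(\{\})$ (via the definition of $\mathrm{Red}$ at arrow type), $(\oplus)$ and $(\mu)$ (via the saturation lemma for $\oplus_{a}^{i}$ and $\nu a$), and $(\mathrm{id})$ (variables are neutral, hence reducible by CR3). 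Crucially, the rule $(\vee)$ — the only source of terms typable with no computational content, responsible for typing non-normalizing terms such as $I\oplus^{i}\Omega$ — is excluded by hypothesis, so no case for it is needed. Instantiating $\gamma$ with the identity then gives $t\in\mathrm{Red}_{\FF s}\subseteq\SN$, which is the claim; the case of $\TCI$ is the $\{\}$-free specialisation of the $\LCPL$ argument.

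I expect the main obstacle to be precisely the interaction between the permutative conversions and $\beta$-reduction inside the candidate proofs, concentrated in the saturation lemma and in CR3. The delicate point is that, unlike in the pure $\lambda$-calculus, applying a neutral term to a reducible argument $u$ need not yield a non-redex: when $u$ is a choice the rule $(\oplus\mathsf{a})$ fires (and, for CbV application, $(\{\}\oplus_{2})$ or $(\{\}\nu)$ when $u$ is a choice or a generator), so the application immediately commutes into the algebraic structure of the argument and may duplicate $t$. Handling this requires interleaving the proofs of CR3 and of closure under $\oplus_{a}^{i}$ and $\nu a$ in a single induction, with an inner well-founded induction on the $\SN$-measure of the argument, and relies on Theorem~\ref{thm:confluence} to guarantee that the permutative activity spawned by these commutations cannot itself diverge. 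Everything else is routine bookkeeping.
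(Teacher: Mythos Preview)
Your proposal is correct and follows the paper's high-level strategy: erase counting quantifiers to get a simply-typed skeleton, then run a reducibility argument. The paper makes the first step explicit by introducing an auxiliary simply-typed calculus $\STLC$ (with a typing rule for $\{t\}u$) and a translation $\FF s\mapsto|\FF s|$ such that every $\lnot\vee$-derivation in $\LCPL$ becomes a $\STLC$-derivation; strong normalization is then proved once for $\STLC$.

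The interesting divergence is in the reducibility machinery. The paper does \emph{not} use CR3 or neutral terms. Instead it proves Tait-style ``spine'' closure lemmas: if $t\vec v,\,u\vec v\in\RRED^\theta_\sigma$ then $(t\oplus_a^i u)\vec v\in\RRED^\theta_\sigma$; if $t\vec v\in\RRED^{a\cdot\theta}_\sigma$ then $(\nu a.t)\vec v\in\RRED^\theta_\sigma$; and so on (Lemma~\ref{lemma:cases} and the lemma following it). For CbV application the key step is the standalone fact that $(tu)\vec v\in\SN^\theta$ implies $(\{t\}u)\vec v\in\SN^\theta$ (Lemma~\ref{lemma:snxx}(iii)), from which $\{t\}u\in\RRED^\theta_\tau$ follows immediately whenever $t\in\RRED^\theta_{\sigma\To\tau}$ and $u\in\RRED^\theta_\sigma$ (Lemma~\ref{lemma:permuzza}). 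This organisation sidesteps exactly the obstacle you flag: because closure is stated for a constructor in head position carrying an arbitrary spine, the commuting conversions $(\oplus\mathsf a),(\{\}\oplus_2),(\{\}\nu)$ never have to be threaded through a CR3 case-analysis. Your interleaved CR3/saturation scheme would also go through, but the paper's spine-based lemmas are strictly simpler here. One bookkeeping point you elide: the paper indexes $\RRED^\theta_\sigma$ and $\SN^\theta$ by a \emph{list} $\theta$ of names, with $t\in\SN^\theta$ meaning $\nu^\theta t$ is strongly normalizing; this fixes the context-dependent order $(a,i)<(b,j)$ in the $(\oplus\oplus)$ rules and makes the $\nu$-closure clause (passing from $a\cdot\theta$ to $\theta$) clean. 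Your claim that the name set can be ``ignored throughout'' is morally right but needs this device in practice.
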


As observed in the previous section, restricting to terms of $\EVL$ having a normal form 
excludes 
the most interesting part of the calculus, which is made of terms for which normalization is inherently probabilistic. Similarly, restricting to type derivations without ($\vee$) trivializes the most interesting features of $\TCI$ and $\LCPL$, 
that is, their ability to estimate probabilities of termination. We will explore the expressiveness of these systems in this sense 
in the next section.

%
%
%


 \begin{figure*}
 \fbox{
 \begin{minipage}{0.98\textwidth}
 \begin{center}
 \adjustbox{scale=0.75}{
 $
 \AXC{$y: \B C^{q}(o\To o), x: \B C^{q}o \vdash y:\TOP \Pto \B C^{q}(o\To o)$}
 \AXC{$y: \B C^{q}(o\To o), x: \B C^{q}o \vdash y:\TOP \Pto \B C^{q}(o\To o)$}
  \AXC{$y: \B C^{q}(o\To o), x: \B C^{q}o \vdash x:\TOP \Pto o$} 
  \RL{($@$)}
 \BIC{$y: \B C^{q}(o\To o), x: \B C^{q}o \vdash yx:\TOP\Pto \B C^{q}o$}
 \RL{($\{\}$)}
 \BIC{$y: \B C^{q}(o\To o), x: \B C^{q}o \vdash \{y\}(yx): \TOP\Pto \B C^{q*q}o$}
 \doubleLine
 \RL{($\lambda$)}
 \UIC{$\vdash \B 2^{\mathrm{CbN}}: \TOP \Pto \B C^{q*q}(\B C^{q}(o\To o)\To (o\To o))$}
 \DP
 $
   }
%
%

%
 \adjustbox{scale=0.75}{
  $
  \AXC{$\vdots$}
  \noLine
  \UIC{$y: \B C^{q}(o\To o) \vdash \B 2:\TOP \Pto (o\To o)\To (o\To o)$}
 \AXC{$y: \B C^{q}(o\To o) \vdash y:\TOP \Pto \B C^{q}(o\to o)$}
 \RL{($\{\}$)}
 \BIC{$y: \B C^{q}(o\To o) \vdash \{\B 2\}y: \TOP\Pto \B C^{q}(o\To o)$}
%
%
 \RL{($\lambda$)}
 \UIC{$\vdash \B 2^{\mathrm{CbV}}: \TOP \Pto \B  C^{q}(\B C^{q}(o\To o)\To (o\To o))$}
 \DP
 $
 }
%
%
%
\end{center}
 \end{minipage}
 }
 \caption{Typing of CbN and CbV Church numerals in $\LCPL$.}
 \label{fig:exm1}
 \end{figure*}

\subsection{Translating $\ICPLL$ into $\LCPL$.}

We now show how derivations in $\ICPLL$ translate into typing derivations in $\LCPL$. 
In the Appendix~\ref{app:CbNProofSystem} a similar translation of a ``CbN'' proof-system for $\ICPLL$ into $\TCI$ is shown.

For any formula $A$ of $\ICPLL$, let us define a corresponding type $\FF s_{A}$ by letting 
$\FF s_{\CC p}:= o$, $\FF s_{A\to \B C^{\vec q}B}:=\B C^{\vec q}(\FF s_{A}\To \FF s_{B})$ and $\FF s_{\B C^{q}A}:= \B C^{q}\FF s_{A}$. 
Fig.~\ref{fig:ch} shows how a derivation $\Pi$ of $\Gamma \vdash \bone \Pto A$ in $\ICPL$ translates into a typing derivation $D^{\Pi}$ of $\FF s_{\Gamma} \vdash t^{\Pi}: \bone \Pto \FF s_{A}$ in $\LCPL$, with $\FN(t^{\Pi})\subseteq \FN(\bone)$, by induction on $\Pi$.
Notice that we exploit a special constant $\B c$ to translate the rule ($\BOT$).
%
%
%
%
$$
\AXC{$ \Gamma \vdash^{X} t: \btwo \Pto \FF s$}
\AXC{$\bone \vDash \btwo$}
\RL{($\vDash$)}
\BIC{$\Gamma \vdash^{X} t: \bone \Pto \FF s$}
\DP
$$
Moreover, the rule ($\to$E) translates as CbN application $tu$, while the rule ($\B C$E) translates as CbV application $\{t\}u$.

As required by the CHC, normalization steps of $\ICPLL$ are simulated by $\redalll$-reductions:

\begin{proposition}[Stability Under Normalization]\label{prop:stab}
If $\Pi \leadsto\Pi'$, then $t^{\Pi} \redalll^{*}t^{\Pi'}$.
\end{proposition}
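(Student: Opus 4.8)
The plan is to argue by cases on the shape of the normalization step $\Pi \leadsto \Pi'$, after first reducing the claim to the situation in which the rewritten redex sits at the root of $\Pi$. This localization is legitimate because the translation of Fig.~\ref{fig:ch} is \emph{compositional}: each inference of $\NDCPLL$ is sent to a term constructor of $\EVLL$ --- ($\to$I) to $\lambda$-abstraction, ($\to$E) to ordinary application, ($\B C$I) to the generator $\nu a.$, ($\B C$E) to the CbV application operator $\{\,\}$, ($\mathsf m$) to a choice $\oplus_{a}^{i}$, and ($\BOT$) to the constant $\B c$ --- and $\redalll$ is closed under all these constructors. Hence a reduction performed inside a subderivation lifts to a reduction of the whole translated term, and it suffices to simulate each kind of redex in head position.

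The technical core is a \emph{substitution lemma} matching the admissible rule (subst) with term-level substitution: if $\Sigma$ derives $\Gamma \vdash \bone \Pto A$ with $t^{\Sigma}=p$, and $\Theta$ derives $\Gamma, A \vdash \bone \Pto B$ with $t^{\Theta}=t$ (where $x$ types the cut assumption $A$), then the derivation produced by applying (subst) to $\Sigma$ and $\Theta$ translates to $t[p/x]$. I would prove this by induction on $\Theta$, following the recursive definition of (subst); each case is immediate because the translation commutes with the relevant constructor, and the only delicate point is that the Boolean relabeling $[\bone \mapsto \bone \land \bthree]$ occurring in the ($\B C$I)/($\B C$E) conversion of Fig.~\ref{fig:cutelim2} touches only annotations and leaves the translated term unchanged.

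With these tools in hand, the redexes are dispatched directly. For the principal cut ($\to$I)/($\to$E), writing $t_{1}$ (with free variable $x$) and $t_{2}$ for the translations of the two premises, the translated redex is $(\lambda x.t_{1})\,t_{2}$, and a single $\beta$-step gives $(\lambda x.t_{1})\,t_{2} \redbeta t_{1}[t_{2}/x]$, which by the substitution lemma equals $t^{\Pi'}$. For the principal cut ($\B C$I)/($\B C$E) of Fig.~\ref{fig:cutelim2}, the translated redex is $\{\lambda x.t\}(\nu a.p)$, and the permutation \eqref{eq:e0} followed by a $\beta$-step yields $\{\lambda x.t\}(\nu a.p) \redpermm \nu a.(\lambda x.t)\,p \redbeta \nu a.\,t[p/x] = t^{\Pi'}$; here the freshness side condition of \eqref{eq:e0}, namely $a\notin\FN(\lambda x.t)$, follows from the proviso of ($\B C$I) together with $\FN(t^{\Theta})\subseteq\FN(\bone)$ and $a\notin\FN(\bone)$. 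Finally, each conversion permuting ($\mathsf m$) past another rule translates into exactly one permutative rule of $\EVLL$ --- one of those in Fig.~\ref{fig:permutations}, or one of the CbV permutations \eqref{eq:e1}, \eqref{eq:e2} --- hence into a single $\redpermm$-step. In every case $t^{\Pi}\redalll^{*}t^{\Pi'}$, as required.

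I expect the substitution lemma to be the main obstacle. Since (subst) is admissible rather than primitive, its translation is only correct relative to the recursive cut-elimination procedure that defines it, so the induction must be carried out with care over the name set $X$ and the Boolean annotations, checking in particular that substitution never captures a generated name. Once this bookkeeping is settled, the reduction witnesses themselves are short --- a single $\beta$-step, one permutation followed by one $\beta$-step, or a single permutation --- so no further difficulty is anticipated.
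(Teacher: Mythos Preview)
Your proposal is correct and follows essentially the same route as the paper's proof sketch: reduce to the root redex by compositionality of the translation, then handle the ($\to$I/$\to$E) cut by a single $\beta$-step, the ($\B C$I/$\B C$E) cut by the permutation \eqref{eq:e0} followed by a $\beta$-step, and the ($\mathsf m$)-commutations by permutative reductions; the substitution lemma you isolate is exactly what is needed to identify the translated (subst) with term-level substitution.

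One small overstatement: not every ($\mathsf m$)-commutation collapses to \emph{exactly one} permutative rule of $\EVLL$. For instance, when ($\mathsf m$) sits on the minor premise of ($\B C$E), the translated term is $\{\lambda x.(t_{1}\oplus_{a}^{i}t_{2})\}\,u$, and reaching $(\{\lambda x.t_{1}\}u)\oplus_{a}^{i}(\{\lambda x.t_{2}\}u)$ requires first $(\oplus\lambda)$ and then \eqref{eq:e1}. This does not affect your argument, since the claim is $\redalll^{*}$ and the paper itself only asserts that these steps ``translate into $\redpermm$-reductions'' without counting them; just soften ``exactly one'' to ``a short sequence of''.
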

\begin{proof}[Proof sketch]
The normalization steps in Fig.~\ref{fig:cutelim} translate into the following reductions:
\begin{align*}
(\lambda x^{\B C^{q}\FF s_{A}}.t^{\Pi}) \nu a. t^{\Sigma} &  \ \redbeta \ 
t^{\Pi}[ \nu a.t^{\Sigma} /x ]  \\
\{\lambda x^{\FF s_{A}}.t^{\Pi}\} \nu a. t^{\Sigma} &  \ \redpermm \ 
\nu a.(\lambda x^{\FF s_{A}}.t^{\Pi}) t^{\Sigma} \ \redbeta \
\nu a. t^{\Pi}[ t^{\Sigma}/x ] 
\end{align*}
All other normalization steps translate into $\redpermm$-reductions.
\end{proof}

Notice that the normalization step ($\to$I/$\to$E) translates into CbN reduction (i.e.~plain $\beta$-reduction): the ``choice'' $\nu a.t^{\Sigma}$ is directly substituted, and thus possibly duplicated; instead the 
($\B C$I/$\B C$E) translates into CbV reduction (i.e.~\eqref{eq:nuapp} followed by $\beta$-reduction): the generator $\nu a$ is first permuted down and only $t^{\Sigma}$ is substituted.

By observing that the only use of ($\vee$) coming from the translation introduces a constant $\B c$, from Theorem \ref{thm:detnormalization} we deduce, as promised, the following:

\begin{corollary}\label{thm:cutelim}
$\ICPLL$ is strongly normalizing.
\end{corollary}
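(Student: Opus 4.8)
The plan is to transport strong normalization from the term calculus $\LCPL$ back to the proof system $\ICPLL$ along the Curry--Howard translation $\Pi \mapsto (D^{\Pi}, t^{\Pi})$ of Section \ref{section5}. I would argue by contradiction: an infinite normalization sequence $\Pi_{0} \leadsto \Pi_{1} \leadsto \cdots$ in $\ICPLL$ should induce, through the translation, an infinite $\redalll$-reduction out of $t^{\Pi_{0}}$, contradicting the strong normalization of $t^{\Pi_{0}}$.

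First I would establish that $t^{\Pi_{0}}$ is strongly normalizing. By construction $D^{\Pi_{0}}$ is a typing derivation in $\LCPL$, and the only place where the translation resorts to $(\vee)$ is its degenerate instance (with $n = 0$) used to type the constant $\B c$ coming from the rule $(\BOT)$. Since $\B c$ is inert --- no reduction rule of $\EVLL$ mentions it, and it can neither create a $\beta$-redex nor trigger a permutation --- it behaves exactly like a free variable with respect to $\redalll$. Hence, modulo these harmless $\B c$-leaves, $D^{\Pi_{0}}$ is a $\vdash_{\lnot \vee}$-derivation, and the strong normalization argument of Theorem \ref{thm:detnormalization} applies once $\B c$ is added as a neutral constant, yielding that $t^{\Pi_{0}}$ is $\redalll$-strongly normalizing.

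Next I would sharpen Proposition \ref{prop:stab} from $\redalll^{*}$ to $\redalll^{+}$, i.e.\ show that every single normalization step $\Pi \leadsto \Pi'$ translates into \emph{at least one} reduction step $t^{\Pi} \redalll^{+} t^{\Pi'}$. For the two principal cuts this is already apparent in the proof of Proposition \ref{prop:stab}: the step $(\to\mathrm{I}/\to\mathrm{E})$ yields a genuine $\redbeta$-step and $(\B C\mathrm{I}/\B C\mathrm{E})$ yields a $\redpermm$-step followed by a $\redbeta$-step. For the remaining steps, which permute $(\mathsf m)$ with the surrounding rules, I would verify case by case that each corresponds to a single permutative rule (either of Fig.~\ref{fig:permutations} or one of the three extra CbV-application permutations of $\EVLL$), so that no normalization step is silent on the term side. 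Granting this, concatenating the finite reductions $t^{\Pi_{i}} \redalll^{+} t^{\Pi_{i+1}}$ turns the hypothetical infinite $\leadsto$-sequence into a single infinite $\redalll$-reduction from $t^{\Pi_{0}}$, contradicting the previous paragraph; hence no infinite $\leadsto$-sequence exists and $\ICPLL$ is strongly normalizing.

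The main obstacle I anticipate is exactly this faithfulness upgrade: one must be certain that every permutative normalization step of $\ICPLL$ genuinely moves the decorated term rather than merely rearranging the derivation while leaving $t^{\Pi}$ fixed. Should some permutation turn out to be silent, the contradiction would break, and one would instead need to pair the term-reduction measure with an auxiliary well-founded measure on derivations controlling the silent permutations --- which is feasible since $\redpermm$ is itself strongly normalizing on $\EVLL$. A secondary, bookkeeping-level point is to make the ``treat $\B c$ as neutral'' step precise, confirming that extending $\EVLL$ with an inert constant leaves intact the reducibility predicates underlying the proof of Theorem \ref{thm:detnormalization}.
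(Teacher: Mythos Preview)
Your proposal is correct and follows exactly the approach the paper takes: the paper's one-line justification (``the only use of $(\vee)$ coming from the translation introduces a constant $\B c$, from Theorem \ref{thm:detnormalization} we deduce\ldots'') relies implicitly on precisely the two points you make explicit, namely that $\B c$ is inert so Theorem \ref{thm:detnormalization} applies, and that Proposition \ref{prop:stab} actually gives $\redalll^{+}$ rather than merely $\redalll^{*}$. Your discussion of the potential silent-permutation obstacle is a fair caveat, but the paper's proof sketch of Proposition \ref{prop:stab} already indicates that every normalization step (including the permutative ones) produces at least one $\redpermm$- or $\redbeta$-step, so no auxiliary measure is needed.
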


%
%

%
%
%

\begin{figure*}
\fbox{
\begin{minipage}{0.98\textwidth}
\adjustbox{scale=0.8}{
\begin{minipage}{\textwidth}
\begin{center}
\begin{tabular}{c c c}
$
\AXC{}
\RL{(id)}
\UIC{$\Gamma, A \vdash \bone\Pto A$}
\DP$
 &
 $\mapsto$
  & 
 $
 \AXC{}
 \RL{(id)}
 \UIC{$\FF s_{\Gamma}, x: \FF s_{A} \vdash x:\bone\Pto \FF s_{A}$}
 \DP
 $
  \\ & & \\
  $
\AXC{$\bone \vDash \BOT$}
\RL{($\BOT$)}
\UIC{$\Gamma \vdash \bone\Pto A$}
\DP$
 &
 $\mapsto$
  & 
 $
 \AXC{$\bone \vDash \BOT$}
 \RL{($\vee$)}
 \UIC{$\FF s_{\Gamma} \vdash \B c:\bone\Pto \FF s_{A}$}
 \DP
 $
  \\ & & \\
\AXC{$\Pi$}
\noLine
\UIC{$\Gamma \vdash \btwo \Pto A$}
\AXC{$\Pi'$}
\noLine
\UIC{$\Gamma \vdash \bthree \Pto A$}
\AXC{$\bone \vDash (\btwo \land \bvar_{a}^{i})\lor (\bthree \land \lnot \bvar_{a}^{i})$}
\RL{($\mathsf m$)}
\TIC{$ \Gamma \vdash \bone \Pto A$}
\DP
& 
$\mapsto $
&
\AXC{$D^{\Pi}$}
\noLine
\UIC{$ \FF s_{\Gamma}\vdash t^{\Pi}: \btwo \Pto A$}
\AXC{$D^{\Pi'}$}
\noLine
\UIC{$\FF s_{\Gamma}\vdash t^{\Pi'}: \bthree \Pto A$}
\AXC{$\bone \vDash (\btwo \land \bvar_{a}^{i})\lor (\bthree \land \lnot \bvar_{a}^{i})$}
\RL{($\oplus$)}
\TIC{$ \FF s_{\Gamma}\vdash t^{\Pi}\oplus_{a}^{i}t^{\Pi'}: \bone \Pto \FF s_{A}$}
\DP
\\ & & \\
\AXC{$\Pi$}
\noLine
\UIC{$\Gamma,A \vdash \bone \Pto B$}
\RL{($\to$I)}
\UIC{$ \Gamma \vdash \bone \Pto (A\to B)$}
\DP
& 
$\mapsto $
&
\AXC{$D^{\Pi}$}
\noLine
\UIC{$ \FF s_{\Gamma},x:\FF s_{A}\vdash t^{\Pi}: \bone \Pto \FF s_{B}$}
\RL{($\lambda$)}
\UIC{$ \FF s_{\Gamma}\vdash \lambda x.t^{\Pi}: \bone \Pto (\FF s_{A\to B})$}
\DP
\\ & & \\
\AXC{$\Pi$}
\noLine
\UIC{$\Gamma\vdash \bone \Pto (A\to \B C^{\vec q}B)$}
\AXC{$\Sigma$}
\noLine
\UIC{$\Gamma\vdash \bone \Pto A$}
\RL{($\to$E)}
\BIC{$ \Gamma \vdash \bone \Pto  \B C^{\vec q}B$}
\DP
& 
$\mapsto $
&
\AXC{$D^{\Pi}$}
\noLine
\UIC{$\FF s_{\Gamma}\vdash t^{\Pi}: \bone \Pto\B C^{\vec q} (\FF s_{A}\to \FF s_{B})$}
\AXC{$D^{\Sigma}$}
\noLine
\UIC{$\FF s_{\Gamma}\vdash t^{\Sigma}: \bone \Pto \FF s_{A}$}
\RL{($@$)}
\BIC{$ \FF s_{\Gamma}\vdash t^{\Pi}t^{\Sigma}: \bone \Pto \B C^{\vec q} \FF s_{B}$}
\DP
\\ & & \\
\AXC{$\Pi$}
\noLine
\UIC{$\Gamma\vdash \bone\land \bthree \Pto A$}
\AXC{$\mu(\bthree)\geq q$}
\RL{($\B C$I)}
\BIC{$\Gamma \vdash \bone \Pto \B C^{q}A$}
\DP
& 
$\mapsto $
&
\AXC{$D^{\Pi}$}
\noLine
\UIC{$ \FF s_{\Gamma}\vdash t^{\Pi}: \bone \land \bthree \Pto \FF s_{A}$}
\AXC{$\mu(\bthree)\geq q$}
\RL{($\mu$)}
\BIC{$\FF s_{\Gamma}\vdash \nu a.t^{\Pi}: \bone \Pto \FF s_{\B C^{q}A}$}
\DP
\\ & & \\
\AXC{$\Pi$}
\noLine
\UIC{$\Gamma\vdash \bone \Pto \B C^{q}A$}
\AXC{$\Sigma$}
\noLine
\UIC{$\Gamma,A\vdash \bone \Pto \B C^{\vec s}B$}
\RL{($\B C$E)}
\BIC{$\Gamma \vdash \bone \Pto  \B C^{qs}\B C^{\vec s}B$}
\DP
& 
$\mapsto $
&
\AXC{$D^{\Pi}$}
\noLine
\UIC{$ \FF s_{\Gamma}\vdash t^{\Pi}: \bone \Pto \B C^{q}\FF s_{A}$}
\AXC{$D^{\Sigma}$}
\noLine
\UIC{$\FF s_{\Gamma},x:\FF s_{A}\vdash t^{\Sigma}: \bone \Pto \B C^{\vec s} \FF s_{B}$}
\RL{($\lambda$)}
\UIC{$\FF s_{\Gamma } \vdash \lambda x.t^{\Sigma}: \bone \Pto \B C^{\vec s}(\FF s_{A}\To \FF s_{B})$}
\RL{($\{ \}$)}
\BIC{$ \FF s_{\Gamma}\vdash \{\lambda x.t^{\Sigma}\}t^{\Pi}: \bone \Pto   \FF s_{\B C^{qs}\B C^{\vec s}B}$}
\DP
\end{tabular}
\end{center}
\end{minipage}
}
\end{minipage}
}
\caption{Translation $\Pi\mapsto D^{\Pi}$ from $\NDCPLL$ to $\LCPL$.}
\label{fig:ch}
\end{figure*}

\section{From Type Soundness to Type Completeness: Intersection Types}\label{section6}

In this section we first show that derivations in $\TCI$ and $\LCPL$ provide sound approximations of $\HNF(t)$ and $\NF(t)$.
In order to achieve completeness, we then introduce an extension $\TCINT$ of $\TCI$ with intersection types and show that this system fully captures both deterministic and probabilistic notions of termination for $\EVL$.

\subsection{From Types to Probability}

We already observed, through examples, that if a term $t$ has a type like, e.g.~$\B C^{\frac{1}{2}}(o\To o)$, then $t$ has one chance over two of yielding a ``correct'' program for $o\To o$. 
The result below makes this intuition precise, by showing that the probabilities derived in $\TCI$ and $\LCPL$ are lower bounds for the function $\NHNF(t)$, that is, for the actual probability of finding a head normalizable term in the distribution $\CC D_{t}$. 

%
%
%
%
\begin{theorem}\label{thm:normalization}
 If $ \vdash_{\TCI} t: \TOP \Pto \B C^{q}\sigma$, then $\NHNF(t)\geq q$.
If $ \vdash_{\LCPL} t: \TOP \Pto \B C^{q_{1}*\dots *q_{n}}\sigma$, then $\NHNF(t)\geq \prod_{i=1}^{n}q_{i}$.
\end{theorem}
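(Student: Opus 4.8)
The plan is to establish the bound by a \emph{reducibility} (computability) argument, refined so that the counting quantifier $\B C^{q}$ is read as a lower bound $q$ on the measure of the events under which a term behaves well. This runs in parallel to the computability argument underlying Theorem~\ref{thm:detnormalization}, except that instead of producing sure (probability-one) normalization it keeps an explicit numerical account of the guaranteed probability. Confluence (Theorem~\ref{thm:confluence} and its $\EVLL$-variant) is used throughout to ensure that the distribution $\CC D_{t}$ and the quantity $\NHNF(t)$ are insensitive to the reduction order.

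First I would define, by mutual induction on types, a ``sure'' reducibility set $\mathrm{Red}_{\sigma}$ for each plain type $\sigma$ and a reducibility set $\mathrm{Red}_{\FF s}$ for each counting type $\FF s$. At base type, $\mathrm{Red}_{o}$ collects the terms $t$ with $\NHNF(t)\geq 1$; at arrow type, $\mathrm{Red}_{\FF s\To\tau}$ collects the $t$ with $\NHNF(t)\geq 1$ such that $tu\in\mathrm{Red}_{\tau}$ (and $\{t\}u\in\mathrm{Red}_{\tau}$ in the case of $\LCPL$) for every $u\in\mathrm{Red}_{\FF s}$, so as to cover both rules $(@)$ and $(\{\})$. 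For a counting type I would set
\[
\mathrm{Red}_{\B C^{q}\sigma} \;:=\; \Big\{\, t \;\Big|\; \sup\nolimits_{t\,\redalll^{*}\,t'}\ \textstyle\sum_{v\in\CC V\cap\mathrm{Red}_{\sigma}}\CC D_{t'}(v)\ \geq\ q \,\Big\},
\]
iterating this clause along $\vec q=q_{1}\ast\cdots\ast q_{n}$, where the mass guaranteed for $\B C^{\vec q}\sigma$ is the product $\prod_{i}q_{i}$ precisely because the successive quantifiers are generated by \emph{independent} fresh names, so the associated measures multiply. The first thing to prove about these sets is the extraction lemma: $t\in\mathrm{Red}_{\B C^{q}\sigma}$ implies $\NHNF(t)\geq q$ (and the product version for $\LCPL$). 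This rests on a compositionality property of $\NHNF$ with respect to $\CC D_{t}$ — that mass placed on pseudo-values which head-normalize almost surely transfers to the head-normalization probability of $t$ itself — together with the fact that every $v\in\mathrm{Red}_{\sigma}$ satisfies $\NHNF(v)\geq 1$.

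Next I would verify the usual Girard--Tait closure conditions in this probabilistic setting: backward closure under $\redalll$ (and $\redall$), so that reducibility is stable under head-expansion, and the neutrality condition ensuring that a head-variable application $\lambda\vec x.y\,u_{1}\dots u_{n}$ is reducible once its arguments are. The core is then the \emph{fundamental lemma} (adequacy), proved by induction on the typing derivation: if $\Gamma\vdash^{X}t:\bone\Pto\FF s$ and $\rho$ substitutes each declared variable by a term in the reducibility set of its type, then for every $\omega\in(2^{\BB N})^{X}$ with $\omega\vDash\bone$ one has $\pi^{\omega}_{X}(t[\rho])\in\mathrm{Red}_{\FF s}$. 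The event parameter $\omega$ is what lets the Boolean side-conditions do their work: the rule $(\oplus)$ is discharged by cases on $\bone\vDash(\bvar_{a}^{i}\land\btwo)\lor(\lnot\bvar_{a}^{i}\land\bthree)$, since $\pi^{\omega}_{X}$ selects the left or right branch accordingly; the rule $(\vee)$ uses $\bone\vDash\bigvee_{i}\bone_{i}$, with the degenerate case $\bone=\BOT$ being vacuous, so that dummy terms cause no harm; and the arrow rules follow from the closure conditions and the definition of $\mathrm{Red}_{\FF s\To\tau}$. Applying this lemma to a closed term $\vdash t:\TOP\Pto\B C^{q}\sigma$ with empty context, empty $X$ and the trivial event yields $t\in\mathrm{Red}_{\B C^{q}\sigma}$, and the extraction lemma then gives $\NHNF(t)\geq q$; the $\LCPL$ statement is identical with $q$ replaced by $\prod_{i}q_{i}$.

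The step I expect to be the main obstacle is the counting rule $(\mu)$, where the premise $\Gamma\vdash^{X\cup\{a\}}t:\bone\land\bthree\Pto\FF s$ with $\mu(\bthree)\geq q$ must be turned into $\nu a.t:\bone\Pto\B C^{q}\FF s$. Here one has to match the purely Boolean quantity $\mu(\bthree)$ with the measure of the set of $a$-events for which $\pi^{\omega}_{X\cup\{a\}}(t[\rho])$ lands in $\mathrm{Red}_{\FF s}$: since $\FN(\bthree)\subseteq\{a\}$ and $a$ is fresh for $X$, the induction hypothesis guarantees reducibility on exactly the $a$-events satisfying $\bthree$, whose measure is $\geq q$, and the definition $\CC D_{\nu a.t'}(v)=\sum_{u}\CC D_{u}(v)\cdot\mu(\{\omega\mid\pi^{\omega}_{\{a\}}(t')=u\})$ then transfers this measure into mass on reducible outcomes. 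Getting the measure-matching exactly right, and making it compose under nesting so that the independence of fresh names yields the product $\prod_{i}q_{i}$, is the delicate part; the auxiliary compositionality lemmas relating $\NHNF$, $\CC D_{t}$ and $\redalll$, on which the extraction lemma also depends, are what make it go through.
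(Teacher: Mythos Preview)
Your proposal is correct and follows essentially the same strategy as the paper: a reducibility argument in which the Boolean constraints are tracked via events $\omega$ and the counting quantifier $\B C^{q}$ is interpreted as a measure lower bound, with the $(\mu)$-rule as the crux (the paper isolates exactly this step as Lemma~\ref{lemma:nu2}). The only noteworthy difference is presentational: the paper parameterizes its reducibility predicates $\HRED_{\sigma}^{X,r}(S)$ by a measurable set $S\subseteq(2^{\BB N})^{X}$ and an explicit numerical budget $r$, then reads $\B C^{q}$ simply as multiplication $\HRED_{\B C^{q}\sigma}^{X,r}(S)=\HRED_{\sigma}^{X,qr}(S)$ (for $\LCPL$ a distribution-based variant $\GHRED$ closer to your clause is used instead); you instead quantify over individual $\omega$ in the statement of the fundamental lemma and put the measure accounting into the $\B C^{q}$ clause via $\CC D_{t}$. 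The two formulations are equivalent (the paper's Lemma~\ref{lemma:efs} is precisely the bridge), and your description of the $(\mu)$ case---matching $\mu(\bthree)$ with the measure of $a$-events on which the induction hypothesis fires, then using the definition of $\CC D_{\nu a.t'}$---is exactly how the paper's Lemma~\ref{lemma:nu1}/\ref{lemma:nu2} proceed.
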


What about reduction to normal form, i.e.~the function $\NNF(t)$? 
A result like Theorem \ref{thm:normalization} cannot hold in this case. Indeed, consider the term $t=\lambda y.y(I\oplus^{i}\Omega)$. While $\NF(t)=\frac{1}{2}$, $\LCPL$ types $t$ with $\FF s=\B C^{1}(\B C^{1}(\B C^{\frac{1}{2}}\sigma\To \sigma)\To \sigma)$, with $\sigma=o\To o$. 
The problem in this example is that the type $\FF s$ contains the ``unbalanced'' assumption
$\B C^{1}(\B C^{\frac{1}{2}}\sigma\To \sigma)$ (in logical terms, $\B C^{\frac{1}{2}}A\to \B C^{1}A$), i.e.~it exploits the assumption of the existence of a function turning a $\frac{1}{2}$-correct input into a $1$-correct output. 
Notice that such a function $f$ can only be one that erases its input, and these are the only functions such that $tf$ can reduce to a normal form.

Nevertheless, soundness with respect to $\NF(t)$ can be proved, for $\TCI$, by restricting to those types not containing such ``unbalanced'' assumptions, i.e.~to types whose programs cannot increase probabilities.

\begin{definition}\label{def:balanced}
A type $\B C^{q}(\FF s_{1}\To \dots \To \FF s_{n}\To o)$ of $\TCI$ is \emph{balanced} if
all $\FF s_{i}$ are balanced and 
$q\leq \prod_{i=1}^{n}\srank{\FF s_{i}}$ 
(where for any type $\FF s=\B C^{q}\sigma$, $\srank{\FF s}=q$).
\end{definition}


\begin{theorem}\label{thm:normalization2}
If $ \vdash_{} t: \TOP \Pto \B C^{q}\sigma$ holds in $\TCI$, where $\sigma$ is balanced, then $\NNF(t)\geq q$.
\end{theorem}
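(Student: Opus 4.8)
The plan is to run a reducibility (Tait--Girard) argument, as for Theorem~\ref{thm:detnormalization}, but with the candidates calibrated to track a lower bound on $\NNF$. To accommodate the probability multiplication performed by the rule $(\mu')$, I would index the candidates by a \emph{bare} type $\sigma$ (i.e.\ $\sigma::=o\mid\FF s\To\sigma$) \emph{and} a separate probability level $p\in(0,1]\cap\QQ$, decoupling the level from the annotations inside the type. Concretely, set $\red{p}{o}:=\{t\mid\NNF(t)\geq p\}$ and $\red{p}{\FF s\To\tau}:=\{t\mid\forall u\in\red{}{\FF s},\ tu\in\red{p}{\tau}\}$, where for a full type $\FF s=\B C^{q}\sigma'$ I abbreviate $\red{}{\FF s}:=\red{q}{\sigma'}$. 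Since $\NNF(t)=\sup\{\NF(u)\mid t\redall^{*}u\}$ cannot increase along $\redall$ and $\redall$ is a congruence, each $\red{p}{\sigma}$ is closed under expansion.

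The balanced hypothesis enters at exactly one point. By simultaneous induction on types I would prove the ``neutral'' lemma and the soundness property (P1). The neutral lemma states that a variable $z$ of a balanced type $\FF s_{l}=\B C^{q_{l}}(\FF t_{1}\To\dots\To\FF t_{m}\To o)$ is reducible, $z\in\red{}{\FF s_{l}}$; unfolding the definition, this amounts to $\NNF(z\,u_{1}\dots u_{m})\geq q_{l}$ for all reducible $u_{j}\in\red{}{\FF t_{j}}$, which holds because $z\,u_{1}\dots u_{m}$ is neutral, so $\NF(z\,u_{1}\dots u_{m})=\prod_{j}\NF(u_{j})$ and hence $\NNF(z\,u_{1}\dots u_{m})\geq\prod_{j}\srank{\FF t_{j}}\geq q_{l}$, the last step being precisely the balancedness inequality for $\FF s_{l}$ (here $\NNF(u_{j})\geq\srank{\FF t_{j}}$ is (P1) at the smaller type $\FF t_{j}$). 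This is the step that fails for the unbalanced type $\B C^{1}(\B C^{\frac{1}{2}}\sigma\To\sigma)$ appearing in the typing of $\lambda y.y(I\oplus^{i}\Omega)$. Property (P1), namely $t\in\red{q}{\sigma}\Rightarrow\NNF(t)\geq q$, then follows by feeding $t$ fresh distinct variables $z_{1},\dots,z_{n}$ and using that $\NNF(t\,z_{1}\dots z_{n})=\NNF(t)$ (renaming a head variable and substituting variables for variables leaves $\NF$ invariant). Observe that only the \emph{argument} types need be balanced here: by inspection of the rules every type occurring in domain position is a subformula of the balanced conclusion type and hence balanced, whereas the possibly unbalanced types created along the spine by $(\mu')$ only ever appear in result position.

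The core is the adequacy statement: for every derivation of $\Gamma\vdash^{X}t:\bone\Pto\B C^{q}\sigma$ with $\Gamma=\{x_{j}:\FF r_{j}\}$, every substitution $\gamma$ mapping each $x_{j}$ into $\red{}{\FF r_{j}}$, and every $\omega\in(2^{\BB N})^{X}$ with $\omega\vDash\bone$, one has $\pi^{\omega}_{X}(\gamma(t))\in\red{q}{\sigma}$. The theorem is the instance $\Gamma=\emptyset$, $\bone=\TOP$, followed by (P1). The induction on the derivation is routine for $(\mathrm{id})$ (neutral lemma), $(\lambda)$ (expansion closure together with the extended substitution $\gamma[x\mapsto u]$), and $(@)$ (immediate from the arrow clause); the rules $(\vee)$ and $(\oplus)$ are dispatched by casing on which disjunct $\bone_{i}$, respectively on the value of $\bvar_{a}^{i}$, the valuation $\omega$ selects, after which $\pi^{\omega}_{X}$ reduces the term to a subterm governed by a premise.

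The main obstacle is the counting rule $(\mu')$, which multiplies the level by $s\leq\mu(\bthree)$; crucially, this is handled \emph{without} any balancedness of the premise. The induction hypothesis gives $\pi^{\omega+\omega'}_{X\cup\{a\}}(\gamma(t))\in\red{q}{\sigma}$ for all $\omega'\vDash\bthree$, a set of measure at least $s$. To conclude $\nu a.\pi^{\omega}_{X}(\gamma(t))\in\red{qs}{\sigma}$ I would apply the arrow clause, permute $\nu a$ outward through the supplied arguments via \eqref{eq:nuapp}, and then bound $\NNF$ of the resulting $\nu$-headed term by summing the per-fibre contributions through the distribution $\CC D$ and the measure $\mu$: the fraction $\geq s$ of fibres that are $q$-reliable already forces the measure-weighted total to be at least $qs$. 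Making this aggregation rigorous is the heart of the proof, the added difficulty over the $\NHNF$ computation of Theorem~\ref{thm:normalization} being that $\NF$, unlike $\HNF$, recurses into the arguments of head normal values, so the pointwise bounds must be threaded through those nested products as well.
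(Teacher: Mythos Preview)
Your proposal is correct and follows essentially the same reducibility argument as the paper (Appendix~\ref{app:normalization}): probability-indexed candidates, a neutral-terms lemma that invokes balancedness exactly where you place it (Lemma~\ref{lemma:uno}), and an aggregation lemma for $\nu$ (Lemma~\ref{lemma:nu1}) handling $(\mu')$ independently of balancedness. The paper additionally parameterizes its candidates by a measurable set $S\subseteq(2^{\BB N})^{X}$, but proves (Lemma~\ref{lemma:efs}) that this is equivalent to your pointwise formulation via $\pi^{\omega}_{X}$; one minor imprecision in your sketch is that only the inequality $\NNF(t)\geq\NNF(t\,z_{1}\dots z_{n})$ (Lemma~\ref{lemma:x}) is needed and established, not the equality you claim.
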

Theorems \ref{thm:normalization} and \ref{thm:normalization2} are proved by adapting the standard technique of \emph{reducibility predicates} to the quantitative notion of probabilistic normal form (for further details, see App.~\ref{app:normalization}).
%

\subsection{From Probability to (Intersection) Types}

%
%
%


To achieve a type-theoretic characterization of $\NHNF(t)$ and $\NNF(t)$, we  introduce an extension of $\TCI$ with intersection types. 
Like those of $\TCI$, the types of $\TCINT$ are of the form $\FF s=\B C^{q}\sigma$, but $\sigma$ is now defined as:
\begin{align*}
\sigma & ::= o\midd \NORM\midd \HNORM \midd \FF M\To \sigma &
 \FF M& ::=  [\FF s, \dots, \FF s]
\end{align*}
where $[a_{1},\dots, a_{n}]$ indicates a finite set. 
While $\FF M$ intuitively stands for a finite intersection of types, the new ground types $\NORM$ and $\HNORM$ basically correspond to the types of normalizable and head-normalizable programs.




We introduce a preorder $\sigma \preceq \tau$ over types by $\alpha\preceq \alpha$, for $\alpha=o,\NORM, \HNORM$, $\B C^{q}\sigma \preceq \B C^{s}\tau$ if $q\leq s$ and $\sigma \preceq \tau$,  and 
$(\FF M\To\sigma) \preceq (\FF N\To \tau)$ if $\sigma\preceq \tau$ and $\FF N\preceq^{*} \FF M$, where 
$[\FF s_{1},\dots, \FF s_{n}]\preceq^{*} [ \FF t_{1},\dots, \FF t_{m}]$ holds is there exists an injective function $f: \{1,\dots, m\}\to \{1,\dots, n\}$ such that $ \FF s_{f(i)}\preceq \FF t_{i}$.

A type judgment of $\TCINT$ is of the form $\Gamma \vdash^{X}t:\bone \Pto \FF s$, where $\Gamma$ is made of declarations of the form $x_{i}:\FF M_{i}$. 
The typing rules of $\TCINT$ are illustrated in Fig.~\ref{fig:typerulesint}. We omit the ($\vee$)- and ($\oplus$)-rules, which are as in $\TCI$.
In the rule ($\mu_{\Sigma}$) it is assumed that $a$ does not occur in $\bone$, is the only name in the $\bthree_{i}$, and that for $i\neq j$, $\bthree_{i}\land \bthree_{j}\vDash \BOT$.

The two rules ($\HNORM$) and ($\NORM$) are justified by Proposition \ref{prop:detcomple} and Theorem \ref{thm:completenessa} below. As rule ($\NORM$) must warrant a bound on normal forms, following Theorem \ref{thm:normalization2}, $\sigma$ has to be 
 \emph{safe}, i.e.~balanced\footnote{Definition~\ref{def:balanced} extends to the types of $\TCINT$ by letting
 $\srank{\HNORM}=\srank{[]}=0$, $\srank{\NORM}=1$ and $\srank{[\FF s_{1},\dots, \FF s_{n+1}]}=\max\{\srank{\FF s_{i}}\}$.} and $\{[],\HNORM\}$-free.
The rule ($@_{\cap}$) is a standard extension of rule ($@$) of $\TCI$ to finite intersections.

The counting rule ($\mu_{\Sigma}$) requires some discussion. The rule admits $n+1$ major premisses expressing typings for $t$ which depend on pairwise disjoint events (the Boolean formulas $\bthree_{i}$).
This is needed to cope with situations as follows: let 
\begin{align*}
t[a,b] & =\Big ((I \oplus_{b}^{1}\Omega)\oplus_{b}^{0}\Omega\Big ) \oplus_{a}^{0}\Big (\Omega \oplus_{b}^{0} I\Big)
\end{align*}
$t[a,b]$ can be given type $\sigma=(\B C^{1}o)\To o$ under either of the two \emph{disjoint} Boolean constraints
$\bthree_{1}:= \bvar_{a}^{0}\land (\bvar_{b}^{0}\land \bvar_{b}^{1})$ and $ \bthree_{2}:=\lnot\bvar_{a}^{0}\land \lnot\bvar_{b}^{0}$. 
In $\TCI$, the best we can achieve to measure the probability that $\nu a.\nu b.t[a,b]$ has type $\sigma$ is $\frac{1}{4}$. Indeed, the rule ($\mu'$) forces us  
to approximate $\mu(\bvar_{b}^{0}\land \bvar_{b}^{1})$ and $\mu(\lnot \bvar_{b}^{1})$ to a \emph{common} lower bound, i.e.~$\frac{1}{4}$, in order to apply a ($\vee$)-rule as illustrated in Fig.~\ref{fig:exa3}. Instead, using ($\mu_{\Sigma}$) we can compute (again, see Fig.~\ref{fig:exa3}), the \emph{actual} probability $
\mu(\bvar_{a}^{0})\mu(\bvar_{b}^{0}\land \bvar_{b}^{1})+ \mu(\lnot \bvar_{a}^{0})\mu(\lnot\bvar_{b}^{0})=
\frac{1}{2}\cdot \frac{1}{4} + \frac{1}{2}\cdot \frac{1}{2} = \frac{3}{8}$.

%

Thanks to the rule ($\mu_{\Sigma}$), the generalized counting rule ($\mu^{*}$) below becomes admissible in $\TCINT$:
$$
\AXC{$\Gamma\vdash^{\{a_{1},\dots, a_{n}\}}t: \bone \Pto  \B C^{q}\sigma$}
\RL{($\mu^{*}$)}
\UIC{$\Gamma \vdash^{\emptyset}\nu a_{1}.\dots. \nu a_{n}.t: \TOP \Pto \B C^{ q\cdot \mu(\bone)} \sigma$}
\DP
$$
This rule plays an essential role in the completeness results below, together with the standard result that both \emph{subject reduction} and \emph{subject expansion} hold for intersection types. 

\begin{figure}
\fbox{
\begin{minipage}{0.48\textwidth}
\adjustbox{scale=0.8}{
\begin{minipage}{\textwidth}
\begin{center}
Identity rule:
$$
\AXC{$\FF s_{i}\preceq \FF t$}
\AXC{$\mathrm{FN}(\bone)\subseteq X$}
\RL{(id$_{\preceq}$)}
\BIC{$\Gamma, x:[\FF s_{1},\dots, \FF s_{n}] \vdash^{X} x: \bone \Pto \FF t  $}
\DP
$$

\bigskip


%

%

\bigskip

Ground types rules:
$$
\AXC{$\Gamma \vdash^{X}t: \bone \Pto \B C^{q}\sigma$}
\RL{($\HNORM$)}
\UIC{$\Gamma \vdash^{X}t: \bone \Pto \B C^{q}\HNORM$}
\DP
\qquad
\AXC{$\Gamma \vdash^{X}t: \bone \Pto \B C^{q}\sigma$}
\AXC{$\sigma$ safe}
\RL{($\NORM$)}
\BIC{$\Gamma \vdash^{X}t: \bone \Pto \B C^{q}\NORM$}
\DP
$$

\bigskip

%
%
%

Arrow rules:
$$
\AXC{$\Gamma, x:\FF M \vdash^{X} t: \bone\Pto \B C^{q}\tau$}
\RL{$(\lambda)$}
\UIC{$\Gamma \vdash^{X} \lambda x.t : \bone\Pto \B C^{q} (\FF M\To \tau)$}
\DP
$$

$$\AXC{$\Gamma \vdash^{X}t:\bone\Pto  \B C^{q}([\FF s_{1},\dots, \FF s_{n}] \To \tau)$}
\AXC{$ \Big\{\Gamma \vdash^{X} u: \bone\Pto \FF s_{i}\Big\}_{i=1,\dots,n}$}
\RL{$(@_{\cap})$}
\BIC{$\Gamma \vdash^{X} tu: \bone\Pto \B C^{q} \tau$}
\DP
$$

\bigskip

Counting rule:
$$
\AXC{$\Big\{\Gamma \vdash^{X\cup \{a\}} t: \bone\land \bthree_{i}\Pto\B C^{q_{i}} \sigma\Big\}_{i=1,\dots,n+1}$}
\AXC{$\mu(\bthree_{i})\geq s_{i}$}
\RL{$(\mu_{\Sigma})$}
\BIC{$\Gamma \vdash^{X} \nu a.t: \bone\Pto\B C^{\sum_{i}q_{i}s_{i}} \sigma$}
\DP
$$

\

\end{center}
\end{minipage}
}
\end{minipage}
}
\caption{Typing rules of $\TCINT$.}
\label{fig:typerulesint}
\end{figure}

\begin{figure*}
\fbox{
\begin{minipage}{0.98\textwidth}
\begin{center}
\adjustbox{scale=0.68}{
$
\AXC{$\Gamma\vdash^{\{a,b\}} t[a,b]:  \bvar_{a}^{0}\land (\bvar_{b}^{0}\land \bvar_{b}^{1})\Pto\B C^{1}\sigma$}
\AXC{$\mu(\bvar_{b}^{0}\land \bvar_{b}^{1})\geq 1/4$}
\RL{$(\mu')$}
\BIC{$\Gamma \vdash^{\{a\}}\nu b.t[a,b]: \bvar_{a}^{0}\Pto \B C^{\frac{1}{4}}\sigma$}
\AXC{$\Gamma\vdash^{\{a,b\}} t[a,b]: \lnot \bvar_{a}^{0}\land \lnot\bvar_{b}^{0}\Pto\B C^{1} \sigma$}
\AXC{$\mu(\lnot\bvar_{b}^{0})\geq 1/4$}
\RL{$(\mu')$}
\BIC{$\Gamma \vdash^{\{a\}}\nu b.t[a,b]: \lnot\bvar_{a}^{0}\Pto \B C{\frac{1}{4}}\sigma$}
\RL{($\vee$)}
\BIC{$\Gamma \vdash^{\{a\}}\nu b.t[a,b]: \bvar_{a}^{0}\lor\lnot\bvar_{a}^{0}\Pto \B C{\frac{1}{4}}\sigma$}
\AXC{$\mu( \bvar_{a}^{0}\lor \lnot \bvar_{a}^{0})\geq 1$}
\RL{$(\mu')$}
\BIC{$\Gamma \vdash^{\emptyset}\nu a.\nu b.t[a,b]:\TOP \Pto \B C^{ \frac{1}{4}} \sigma$}
\DP
$}

\bigskip
\adjustbox{scale=0.68}{
$
\AXC{$\Gamma\vdash^{\{a,b\}} t[a,b]:  \bvar_{a}^{0}\land (\bvar_{b}^{0}\land \bvar_{b}^{1})\Pto\B C^{1}\sigma$}
\AXC{$\mu(\bvar_{b}^{0}\land \bvar_{b}^{1})\geq 1/4$}
\RL{$(\mu_{\Sigma})$}
\BIC{$\Gamma \vdash^{\{a\}}\nu b.t[a,b]: \bvar_{a}^{0}\Pto \B C^{\frac{1}{4}}\sigma$}
\AXC{$\Gamma\vdash^{\{a,b\}} t[a,b]: \lnot \bvar_{a}^{0}\land \lnot\bvar_{b}^{0}\Pto\B C^{1} \sigma$}
\AXC{$\mu(\lnot\bvar_{b}^{0})\geq 1/2$}
\RL{$(\mu_{\Sigma})$}
\BIC{$\Gamma \vdash^{\{a\}}\nu b.t[a,b]: \lnot\bvar_{a}^{0}\Pto \B C{\frac{1}{2}}\sigma$}
\AXC{$\mu( \bvar_{a}^{0}), \mu( \lnot \bvar_{a}^{0})\geq 1/2$}
\RL{$(\mu_{\Sigma})$}
\TIC{$\Gamma \vdash^{\emptyset}\nu a.\nu b.t[a,b]:\TOP \Pto \B C^{ \frac{1}{4}\cdot \frac{1}{2}+ \frac{1}{2} \cdot \frac{1}{2}} \sigma$}
\DP
$}
\end{center}
\end{minipage}
}
\caption{Comparing probabilities derived with the rules ($\mu'$) and ($\mu_{\Sigma}$).}
\label{fig:exa3}
\end{figure*}

%
%
%

\begin{proposition}[Subject Reduction/Expansion]\label{prop:subexp}
If $\Gamma \vdash^{X}t: \bone \Pto \FF s$ and either $t\redall u$ or $u\redall t$, then
 $\Gamma \vdash^{X}t: \bone \Pto \FF s$.
\end{proposition}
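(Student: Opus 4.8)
The plan is to prove both directions at once, by showing that the class of typable judgements of $\TCINT$ is closed under $\redall$ and under its converse. Since $\redall = \redbeta \cup \redperm$ is closed under every term constructor, I would first reduce to the case of a redex fired at the root: a routine induction on the one-step reduction, exploiting that each typing rule of $\TCINT$ acts as a congruence (the rule producing the constructor surrounding the redex can be permuted above the contraction), propagates invariance from the redex to the enclosing term. This splits the work into the $\beta$-case and a finite case analysis over the permutative rules of Fig.~\ref{fig:permutations}.

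For the $\beta$-case I would isolate two lemmas. The \emph{Substitution Lemma} states that if $\Gamma, x : \FF M \vdash^{X} t : \bone \Pto \FF s$ with $\FF M = [\FF s_1, \dots, \FF s_n]$ and $\Gamma \vdash^{X} u : \bone \Pto \FF s_i$ for every $i$, then $\Gamma \vdash^{X} t[u/x] : \bone \Pto \FF s$; it is proved by induction on the derivation of $t$, distributing the typings of $u$ among the occurrences of $x$. Combined with the shape of a typed $\beta$-redex ($(\lambda)$ immediately followed by $(@_{\cap})$), this yields subject reduction for $\redbeta$. For expansion I would prove the converse \emph{Inverse Substitution Lemma}: from a derivation of $\Gamma \vdash^{X} t[u/x] : \bone \Pto \FF s$ one extracts a multiset $\FF M$ and derivations $\Gamma \vdash^{X} u : \bone \Pto \FF s_i$, one per occurrence of $x$ in $t$, together with $\Gamma, x : \FF M \vdash^{X} t : \bone \Pto \FF s$, whence $(\lambda)$ and $(@_{\cap})$ reconstruct a typing of $(\lambda x.t)u$. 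Here the intersection discipline is exactly what makes expansion possible: distinct occurrences of $x$ may demand different types for $u$, and the erasing case $x \notin \fv{t}$ is accommodated by the empty multiset $\FF M = [\,]$, for which $(@_{\cap})$ carries no minor premise and $u$ need not be typable at all.

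For the permutative rules I would check, one by one, that the two sides are interderivable by permuting the corresponding typing rules. The permutations of $\oplus$ with $\lambda$, with application, and with itself (rules $(\oplus\lambda)$, $(\oplus\mathsf f)$, \eqref{eq:oa}, $(\oplus\oplus_1)$, $(\oplus\oplus_2)$) mirror rearrangements of $(\oplus)$ above $(\lambda)$, $(@_{\cap})$, and $(\oplus)$, the semantic side condition $\bone \vDash (\bvar_a^i \land \btwo)\lor(\lnot\bvar_a^i\land\bthree)$ keeping the Boolean bookkeeping sound, while the contraction rules $(\mathsf i)$, \eqref{eq:c1}, $(\mathsf c_2)$ are absorbed by $(\vee)$ (using the excluded-middle tautology $\bvar_a^i \lor \lnot\bvar_a^i$). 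The genuinely delicate cases are those moving a generator $\nu$ past another constructor, namely $(\oplus\nu)$ \eqref{eq:cnu}, \eqref{eq:nulambda}, \eqref{eq:nuapp}, and \eqref{eq:notnu}, since these force a rearrangement of the counting rule $(\mu_\Sigma)$ against $(\oplus)$, $(\lambda)$, or $(@_{\cap})$. For these I would first establish an auxiliary \emph{name-irrelevance and Boolean-weakening lemma} (a name $a \notin \FN(t)$ may be added to or removed from $X$, and the constraint may be strengthened along $\vDash$), and then verify that the pairwise-disjointness conditions on the events $\bthree_i$ and the weighted measure sum $\sum_i q_i s_i$ annotating $(\mu_\Sigma)$ are preserved.

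I expect the combinatorics of $(\mu_\Sigma)$ to be the main obstacle. Unlike the single-quantifier rule $(\mu')$ of $\TCI$, the rule $(\mu_\Sigma)$ splits a typing of $t$ over $n+1$ disjoint Boolean events, so permuting a $\nu$ through an $\oplus$ or an application requires redistributing these events and recombining the probabilities $q_i s_i$ so that the total is unchanged, all while maintaining disjointness and the side condition that $a$ is the unique name of each $\bthree_i$. The admissibility of the generalized rule $(\mu^{*})$, together with the finite additivity of $\mu$ over disjoint events, are the tools I would lean on to keep these recomputations exact; the vacuous-generator case \eqref{eq:notnu}, in particular, I would treat through name-irrelevance so that the relevant constraint can be taken with full measure.
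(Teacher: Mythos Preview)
Your plan for subject reduction and for permutative expansion is essentially the paper's: a substitution lemma for $\beta$, a rule-by-rule analysis for $\redperm$, and auxiliary weakening and name-irrelevance lemmas to keep the Boolean side conditions aligned. That part is fine.

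The gap is in subject $\beta$-expansion. Your Inverse Substitution Lemma asserts that from a derivation of $\Gamma \vdash^{X} t[u/x] : \bone \Pto \FF s$ you can extract derivations $\Gamma \vdash^{X} u : \bone \Pto \FF s_i$ \emph{all under the same constraint $\bone$}. This is exactly what fails in $\TCINT$. The sub-derivations typing the occurrences of $u$ live under Boolean constraints $\bfive_i$ that need not be implied by $\bone$: a $(\vee)$ rule sitting between an occurrence of $u$ and the root can strictly weaken the constraint, and a $(\mu_\Sigma)$ rule can bind away names appearing in $\bfive_i$. In general there is no single multiset $\FF M$ that works uniformly under $\bone$, because different $(\vee)$-branches may assign $u$ incomparable types, each only under its own $\bone_j$.

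The paper handles this with a nontrivial normalization of derivations. It defines \emph{canonical} derivations (every $(\vee)$ occurs only inside a fixed $\vee\mu$-pattern directly above a $(\mu_\Sigma)$, with conjunctions-of-literals as constraints) and proves that every derivation can be rewritten into \emph{pseudo-canonical} form: a single outermost $(\vee)$ over canonical pieces $E_l$ with conclusions $\btwo_l$. The point of canonicity is a key lemma: in a canonical derivation, the root constraint entails every axiom constraint whose names survive to the root. Applying this to the derivation $D^*$ obtained by replacing each $u$-subderivation by an axiom $x : \FF s_i$, one gets $\btwo_l \vDash \bigwedge_i \bthree_{i g_l(i)}$ for each canonical piece, which is precisely the missing entailment needed to fire $(@_\cap)$ with the original $u$-derivations as minor premises. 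The final typing of $(\lambda x.t)u$ is then assembled piece by piece and closed with one outer $(\vee)$. So the obstacle you should anticipate is not the $(\mu_\Sigma)$ arithmetic but the loss of Boolean information through $(\vee)$; the fix is a derivation-normal-form argument, not a direct inverse substitution.
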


%
%
%

 We will now discuss how typings in $\TCINT$ capture both deterministic and probabilistic properties of terms. 
 First we have the following facts, which show that the types $\HNORM$ and $\NORM$ capture deterministic termination. 
%
%
%
  \begin{restatable}[Deterministic Completeness]{proposition}{detcomple}\label{prop:detcomple}
For any closed term $t$,
\begin{varitemize}
 \item[(i.)] $t$ is head-normalizable iff $\vdash_{\lnot \vee} t: \TOP \Pto \B C^{1}\HNORM$;
  \item[(ii.)] $t$ is normalizable iff $\vdash_{\lnot \vee} t: \TOP \Pto \B C^{1}\NORM$.
    \item[(iii.)] $t$ is strongly normalizable iff $\vdash_{\lnot \vee} t: \TOP \Pto \B C^{1}\NORM$ and all types in the derivation are safe.

 \end{varitemize}
 \end{restatable}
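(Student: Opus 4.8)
The plan is to prove the three equivalences by the classical two-pronged pattern for intersection types: subject expansion (Proposition~\ref{prop:subexp}) drives the \emph{completeness} directions (terminating $\Rightarrow$ typable), while a reducibility argument drives the \emph{soundness} directions (typable $\Rightarrow$ terminating). The restriction to $\vdash_{\lnot\vee}$ is what makes these properties deterministic: with $(\vee)$ removed, the quantifier $\B C^{1}$ no longer records a mere lower bound but forces \emph{every} probabilistic branch to succeed, so that $\B C^{1}\HNORM$, $\B C^{1}\NORM$ correspond to head- and weak normalisation \emph{with probability one}, and the additional safety hypothesis in (iii) upgrades this to strong normalisation.

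First I would settle the completeness directions. The core is a typability lemma for normal forms. A head normal value $\lambda\vec x.y\,u_{1}\cdots u_{n}$ is typed $\vdash_{\lnot\vee}\TOP\Pto\B C^{1}\HNORM$ by giving the head variable $y$ an arrow type $[\,]\To\cdots\To[\,]\To\sigma$, so that the rule $(@_{\cap})$ consumes the arguments with \emph{empty} intersections (no argument need be typable), applying $(\HNORM)$ and then $(\lambda)$ repeatedly; since no $\nu$ or $\oplus$ sits at the head, the quantifier stays $\B C^{1}$. For (ii) the same construction is run recursively on a full normal form, now typing each argument with $\NORM$ and using \emph{safe} types so the side condition of $(\NORM)$ holds; for (iii) one additionally keeps every type safe, which bans the clauses $[\,]$ and $\HNORM$ and hence forbids erasing a subterm. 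Because a probability-one (head) normal form of $\EVL$ is in general a $\nu/\oplus$-tree of pseudo-values, I would assemble the leaf typings through $(\oplus)$ and $(\mu_{\Sigma})$ — concretely via the admissible rule $(\mu^{*})$ — so that the measures of the disjoint events sum to $1$. Lifting the resulting typing back along $t\redall^{*}u$ (where $u$ is the reached PNF with $\HNF(u)=1$, resp.~$\NF(u)=1$) is then immediate from subject expansion.

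For the soundness directions I would introduce a family of reducibility predicates $\RRED_{\sigma}$ on closed terms, set up so that $\RRED_{\HNORM}$ is exactly the head-normalising terms, $\RRED_{\NORM}$ the normalising terms, and, on safe types, membership entails strong normalisation; an intersection $\FF M=[\FF s_{1},\dots,\FF s_{n}]$ is read conjunctively and $\B C^{\vec q}$ is interpreted by the probability-weighted combination dictated by $\CC D_{t}$. The heart is the usual adequacy lemma, by induction on derivations: if $\Gamma\vdash_{\lnot\vee}t:\bone\Pto\FF s$ and $\gamma$ substitutes reducible terms of the right types for the free variables, then $t\gamma\in\RRED_{\FF s}$. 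The cases $(\mathrm{id}_{\preceq})$, $(\lambda)$, $(@_{\cap})$ are standard once $\RRED$ is shown closed under head expansion and to contain neutral terms (the base for $\HNORM$); the cases $(\oplus)$ and $(\mu_{\Sigma})$ require that the candidates be stable under the permutative rules of Fig.~\ref{fig:permutations} and that disjointness of the $\bthree_{i}$ let the probabilities add, mirroring the definition of $\CC D_{t}$. Instantiating adequacy at a closed term typed $\B C^{1}\HNORM$ (resp.~$\B C^{1}\NORM$) yields (i) and (ii); for (iii) the all-safe hypothesis removes exactly the clauses permitting erasure of a divergent subterm, so adequacy forces every subterm to terminate and delivers strong normalisation.

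The main obstacle I anticipate is making reducibility work in the probabilistic setting. Unlike the pure $\lambda$-calculus, a term generates a \emph{distribution} over pseudo-values rather than a single normal form, so the predicates must be phrased through $\NHNF$ and $\NNF$ and shown compatible with permutative reduction; the delicate point is the interaction of $(\mu_{\Sigma})$ with head expansion, where pushing a generator $\nu a$ past an application via~\eqref{eq:nuapp} and distributing the choices must preserve membership in the candidates while the probabilities recombine additively over the disjoint events $\bthree_{i}$. The secondary difficulty is bookkeeping: threading the safe/balanced invariant of Definition~\ref{def:balanced} through the $(\NORM)$, $(@_{\cap})$ and $(\mu_{\Sigma})$ cases so that item (iii) cleanly separates weak from strong normalisation.
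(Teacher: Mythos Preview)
Your completeness direction (terminating $\Rightarrow$ typable) is essentially the paper's argument: type the leaves of the head-normal form via Lemma~\ref{lemma:hnorm}, reassemble the $\nu/\oplus$-tree with $(\oplus)$ and $(\mu_{\Sigma})$, and lift back by subject expansion. That part is fine.

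The soundness direction is where you diverge from the paper, and in a way that both overcomplicates the argument and introduces a subtle gap. The paper's key move is to \emph{forget probabilities entirely}: it defines a purely deterministic intersection type system $\STLCINT$ (no Boolean formulas, no counting quantifiers), observes that any $\vdash_{\lnot\vee}$-derivation in $\TCINT$ translates to one in $\STLCINT$ via $|\B C^{q}\sigma|=|\sigma|$ (Proposition~\ref{prop:tcinttostlcint}), and then runs a completely standard reducibility argument over the deterministic sets $\HNN^{\theta}$, $\NNN^{\theta}$, $\SN^{\theta}$ (Corollary~\ref{cor:headnorm}). No $\CC D_{t}$, no $\NHNF$, no disjoint events, no probability addition---just the classical intersection-types story, because without $(\vee)$ the $(\oplus)$-rule demands that \emph{both} branches be typed and the Boolean annotations become inert.

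Your proposal instead keeps the probabilistic reducibility predicates and interprets $\B C^{\vec q}$ ``by the probability-weighted combination dictated by $\CC D_{t}$''. Even granting adequacy, this would only yield $\NHNF(t)\geq 1$, i.e.\ head-normalisation \emph{with probability one}, which is not the same as the deterministic claim that $t$ has a head normal form (equivalently, by Lemma~\ref{lemma:hformrbt}, that $RBT(t)$ is finite). A term whose randomized branching tree is infinite but whose HNV-mass accumulates to $1$ would slip through. You would need an extra argument bridging $\NHNF(t)=1$ and finiteness of $RBT(t)$, and that bridge is neither stated nor obvious. The paper sidesteps this entirely by dropping to the non-probabilistic system first.
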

\begin{proof}[Proof sketch]
Using standard intersection types arguments, it is shown that $\vdash_{\lnot \vee} t: \TOP \Pto \B C^{1}\HNORM$ holds for any head-normal $t$. The first half of (i.) is deduced then using Proposition~\ref{prop:subexp}. 
The second half follows from a normalization argument similar to that of Theorem \ref{thm:detnormalization}.
Cases (ii.) and (iii.) are similar.
\end{proof}

The probabilistic normalization theorems \ref{thm:normalization} and \ref{thm:normalization2} (which extend smoothly to $\TCINT$) ensure that if $t$ has type $\B C^{q}\HNORM$ (resp.~$\B C^{q}\NORM$), then $\NHNF(t)\geq q$ (resp.~$\NNF(t)\geq q$). 
%
%
Conversely, $\NHNF(t)$ and $\NNF(t)$ can be bounded by means of derivations in $\TCINT$, in the following sense:
%
%
%

\begin{restatable}[Probabilistic Completeness]{theorem}{completenessa}\label{thm:completenessa}
For any closed term $t$,
\begin{align*}
\NHNF(t) & = \sup \{ q\mid\  \vdash t: \TOP \Pto \B C^{q}\HNORM\} \\
\NNF(t) & = \sup \{ q\midd \  \vdash t: \TOP \Pto\B C^{q} \NORM\} 
\end{align*}

\end{restatable}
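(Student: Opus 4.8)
The plan is to prove each equality by splitting it into the two inequalities, with the $\leq$ (soundness) direction being essentially already established and the $\geq$ (completeness) direction carrying the real content. For soundness, the extensions to $\TCINT$ of Theorem~\ref{thm:normalization} and Theorem~\ref{thm:normalization2} guarantee that $\vdash t:\TOP\Pto\B C^{q}\HNORM$ implies $\NHNF(t)\geq q$, and that $\vdash t:\TOP\Pto\B C^{q}\NORM$ (with safe types) implies $\NNF(t)\geq q$; hence both suprema are bounded above by $\NHNF(t)$ and $\NNF(t)$ respectively. It remains to show that every value $\HNF(u)$ attained along head reduction, and every $\NF(u)$ attained along full reduction, is realized as a derivable probability.

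For the $\HNORM$ half I would fix a reduct with $t\redall_{\mathsf{h}}^{*}u$ and, using confluence and strong normalization of $\redperm$, permutatively normalize it to $u'=\nu a_{1}.\dots.\nu a_{n}.T$, where $T$ is a tree of $\oplus_{a_{k}}^{i}$-choices whose leaves are pseudo-values and where $\HNF(u')=\HNF(u)$. For each leaf that is a head-normal value $v\in\HNF$, deterministic completeness (Proposition~\ref{prop:detcomple}(i.)) provides a derivation of $v:\TOP\Pto\B C^{1}\HNORM$, while every remaining leaf is given the dummy typing $\BOT\Pto\B C^{1}\HNORM$ through the $(\vee)$-rule with zero premisses; assembling these up the tree $T$ with $(\oplus)$ yields $T:\bone\Pto\B C^{1}\HNORM$, where the Boolean formula $\bone$ describes exactly the events $\omega$ for which $\pi^{\omega}_{\{a_{1},\dots,a_{n}\}}(T)$ is head-normal, so that $\mu(\bone)=\HNF(u)$. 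A single application of the admissible rule $(\mu^{*})$ then gives $u':\TOP\Pto\B C^{\mu(\bone)}\HNORM$, and since $t\redall^{*}u'$, subject expansion (Proposition~\ref{prop:subexp}) transfers this typing to $t$. Taking the supremum over $u$ yields $\sup\{q\mid\ \vdash t:\TOP\Pto\B C^{q}\HNORM\}\geq\sup_{u}\HNF(u)=\NHNF(t)$.

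For the $\NORM$ half the same skeleton applies, but the recursive definition of $\NF$ forces a nested induction run on the structure of permutative normal forms. When a leaf $v=\lambda\vec x.\,y\,u_{1}\dots u_{m}\in\HNF$ is reached, I no longer type it with probability $1$; instead I recursively obtain, for each strictly smaller argument $u_{j}$, a derivation of $u_{j}:\TOP\Pto\B C^{q_{j}}\NORM$ with $q_{j}$ approaching $\NNF(u_{j})$, and combine them through $(@_{\cap})$ and the ground-type rule $(\NORM)$ so that $v$ receives $\B C^{\prod_{j}q_{j}}\NORM$, matching the clause $\NF(v)=\prod_{j}\NF(u_{j})$. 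The \emph{safe}/balanced side-condition of $(\NORM)$ is essential here: it is precisely what excludes the ``unbalanced'' typings ruled out after Theorem~\ref{thm:normalization2} (as for $\lambda y.\,y(I\oplus^{i}\Omega)$), ensuring the derived product is a genuine lower bound that the supremum can attain. Closing the induction and reassembling the choice tree with $(\oplus)$ and $(\mu^{*})$ gives type $\B C^{\NF(u)}\NORM$ for $u$, which subject expansion again transfers to $t$.

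The main obstacle is this completeness half, and within it two points demand care. First, recovering the \emph{exact} probability $\mu(\bone)$ rather than a mere lower bound is what necessitates the rule $(\mu_{\Sigma})$ (equivalently $(\mu^{*})$): as the example $t[a,b]$ of Figure~\ref{fig:exa3} shows, splitting over pairwise disjoint Boolean constraints is unavoidable to reconstruct the true measure, so the assembly of leaf typings must track disjoint events and sum their contributions faithfully. Second, the $\NORM$ case requires the nested induction to be visibly well-founded, with each recursive call concerning a structurally smaller head-normal value, and it must be interlocked with the balanced-type bookkeeping so that neither subject expansion nor the $(\NORM)$ rule ever introduces a probability overestimate; verifying this last invariant is the most delicate part of the argument.
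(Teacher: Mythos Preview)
Your overall strategy—type each head-normal leaf with $\B C^{1}\HNORM$, assemble the leaf typings along the choice structure via $(\oplus)$ and $(\vee)$, close with the admissible rule $(\mu^{*})$, and transfer to $t$ by subject expansion—is exactly the paper's. The one real gap is your structural claim about permutative normal forms. A PNF does \emph{not} in general have the flat shape $\nu a_{1}\dots\nu a_{n}.T$ with $T$ a single $\oplus$-tree over pseudo-values: after the outermost $\nu a$ one obtains a $(\CC T,a)$-tree whose leaves lie in $\CC T$ and may themselves begin with some $\nu b$, so $\nu$-binders are genuinely nested (there is no permutative rule that commutes two $\nu$'s or floats an inner one outward). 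Consequently ``permutatively normalize to $\nu a_{1}\dots\nu a_{n}.T$'' is not a step that succeeds, and without the flat form neither the $(\oplus)$-assembly nor the identification $\mu(\bone)=\HNF(u)$ is justified.

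The paper confronts this via randomized paths through $RBT(t)$ and two multi-context lemmas (Lemmas~\ref{lemma:mcox1} and~\ref{lemma:mcox2}): the first collects finitely many reachable HNV leaves into a nested context $\TT C[u_{1},\dots,u_{k},v_{1},\dots,v_{N}]$ to which the head-reduct permutatively reduces, and the second shows that the flattened term $\nu a_{1}\dots\nu a_{K}.\TT C^{\lnot\nu}[\dots]$ permutatively \emph{reduces} to $\TT C[\dots]$. Thus the flat form you want is obtained by permutative \emph{expansion}, not normalization; it is this flat term that is typed, and both subject reduction and expansion (Proposition~\ref{prop:subexp}) then link it back to $t$. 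Once this machinery (or an equivalent unfolding of the nested $\nu$-structure) is in place, your argument goes through; your nested induction for the $\NORM$ half is in line with the paper's (terse) sketch.
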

\begin{proof}[Proof sketch.]
Suppose w.l.o.g.~that $t=\nu a_{1}.\dots.\nu a_{k}.t'$. 
For any $u\in \HNF$ such that $\CC D_{t}(u)>0$, we can deduce $\vdash u:\TOP\Pto \HNORM$. The sequence of probabilistic choices leading to $u$ is finite, and thus captured by a Boolean formula $\bone_{t\mapsto u}$. 
Using subject reduction/expansion we thus deduce $\vdash t': \bone_{t\mapsto u}\Pto \HNORM$.
Hence, for any finite number of head normal forms $u_{1},\dots, u_{n}$, we deduce $\vdash t': \bone_{t\mapsto u_{i}}\Pto \HNORM$. Using ($\vee$) and the generalized counting rule ($\mu^{*}$) we deduce then $\vdash t: \TOP \Pto \B C^{s}\HNORM$, where $s= \sum_{i=1}^{n}\mu(\bone_{t\mapsto u_{i}})=\mu(\bigvee_{i=1}^{n}\bone_{t\mapsto u_{i}})$. 
The argument for $\NF(t)$ is similar.\end{proof}

\section{Related Work}\label{section7}

As discussed in the introduction, our results provide the first clear correspondence between a proof system and a probabilistic extension of the $\lambda$-calculus. This is not to say that our logic and calculi come from nowhere. 

Different kinds of measure-theoretic quantifiers have been investigated in the literature, with the intuitive meaning of ``$A$ is true \emph{for almost all $x$}'',
see~\cite{Morgenstern,steinhorn} and more recently~\cite{MichalewskiMio,MSM}, or
``$A$ is true for the \emph{majority of $x$}''~\cite{Papadimitriou, ZachosHeller, Zachos88}.
 Our use of the term ``counting quantifier'' comes from~\cite{Antonelli2021}, where an extension of classical propositional logic with such quantifiers is studied and related to Wagner's counting operator on classes of languages~\cite{Wagner84,Wagner,Wagner86}. 
 To our knowledge, the present work is the first to apply some form of measure quantifier to typed probabilistic functional programs.

Despite the extensive literature on logical systems enabling 
(in various ways and for different purposes) some forms of probabilistic
reasoning, there is not much about logics tied to computational aspects, 
as $\ICPL$ is. 
Most of the recent logical formalisms have been
developed in the realm of modal logic, like e.g.~\cite{Nilsson86,Nilsson93,Bacchus,Bacchus90a, Bacchus90b,FH94,Halpern90,Halpern03}.
Another class of probabilistic modal logics have been designed to model Markov chains and similar structures~\cite{KOZEN1981328,Hansson1994,LEHMANN1982165}.
With the sole exception of \emph{Riesz modal logic} \cite{lmcs:6054}, we are
not aware of sequent calculi for probability logic.

Intuitionistic modal logic has been related in the Curry-Howard sense to monadic extensions of the $\lambda$-calculus~\cite{DaviesPfennig, Benton1998, AlechinaDePaiva, Nanevski,Curien2016}. 
Nevertheless, in these correspondences 
modal operators are related to \emph{qualitative} properties of programs (typically, tracing algebraic effects), as opposed to the \emph{quantitative} properties expressed by counting quantifiers. 
Our Kripke structures for $\ICPL$ can be related to standard Kripke structures for intuitionistic modal logic \cite{Plotkin1986, Simpson1994}. These are based on a set $W$ with \emph{two} pre-order relations $\leq$ and $R$ enjoying a suitable ``diamond'' property $R; \leq  \ \subseteq\  \leq ; R$. We obtain a similar structure by taking worlds to be pairs $w,\omega$ made of a world and an outcome from the Cantor space,
with $(w,\omega)\preceq (w',\omega)$ whenever $w\leq w'$, and $(w,\omega)R(w,\omega+\omega')$. The clause for $\B C^{q}A$ can then be seen as a quantitative variant of the corresponding clause for $\Diamond A$. This is not very surprising, given the similarity between the introduction and elimination rules for $\B C^{q}$ and those for $\Diamond$, see e.g.~\cite{Benton1998,AlechinaDePaiva}.


On the other hand, quantitative semantics arising from linear logic have been largely used in the study of probabilistic $\lambda$-calculi,
e.g.~\cite{DalLago2017, Ehrhard2018, Faggian2019}. 
Notably, \emph{probabilistic coherence spaces}~\cite{Girard2004, Ehrhard2018, Ehrhard2014} have been shown to provide a fully abstract model of probabilistic PCF.
While we are not aware of correspondences relating probabilistic programs with proofs in linear logic, it seems that the proof-theory of counting quantifiers could be somehow related to that of \emph{bounded exponentials}~\cite{Girard1992, DalLago2009} and, more generally, to the theory of \emph{graded} monads and comonads~\cite{Katsumata2014, Brunel2014, Ghica2014, Katsumata2018}.


%
%
%

The calculus $\EVL$ derives from \cite{DLGH}, which also introduces a simple type system ensuring strong normalization, although the typings do not provide any quantitative information.
Beyond this, several type systems for probabilistic $\lambda$-calculi have been introduced in the recent literature. 
Among these, systems based on \emph{type distributions} \cite{DalLagoGrellois}, i.e.~where a single derivation assigns several types to a term, each with some probability, and systems based on \emph{oracle intersection types} \cite{Breuvart2018}, where type derivations capture single evaluations as determined by an oracle. 
Our type systems sit in between these two approaches: like the first (and unlike the second), typing derivations can capture a finite number of different evaluations, although without using distributions of types; like the second, typings reflect the dependency of evaluation on oracles, although the latter are manipulated in a collective way by means of Boolean constraints. 

Finally, in 
\cite{Warrell2018} dependent type theory is enriched with a probabilistic choice operator, yielding a calculus with both term and type distributions. Interestingly, a fragment of this system enjoys a sort of CHC with 
so-called \emph{Markov Logic Networks} \cite{Richardson2006}, a class of probabilistic graphical models specified by means of first-order logic formulas.  


\section{Conclusions}\label{section8}
The main contribution of this work consists in defining a Curry-Howard 
correspondence between a logic with counting quantifiers and a 
type system that expresses probability of 
termination. Moreover, in analogy with what happens in the 
deterministic case, extending 
the type system with an intersection operator leads to a full characterization of probability of termination.
Even though intersection types do not have a clear logical 
counterpart, the existence of this extension convinces us that the 
correspondence introduced is meaningful. The possibility of defining a Curry-Howard correspondence 
relating algebraic effects, on the program side, with a modal operator, on the logic 
side, is certainly not surprising. 
Instead, it seems to us that the new and surprising contribution of this work is showing that 
the peculiar 
features of probabilistic effects can be managed in an elegant way
using ideas coming from logic.

Among the many avenues of research that this work opens, the study of the 
problem of type inference must certainly be mentioned, as well as the extension 
of the correspondence to polymorphic types or to control operators. Particularly 
intriguing, then, is the possibility of studying the system of intersection 
types introduced to support the synthesis, always in analogy with what is 
already known in deterministic calculations.

\bibliographystyle{plain}
\bibliography{main}

\onecolumn
\appendix
\appendixpage

\DoToC
%

\section{Details about Intuitionistic Counting Propositional Logic}

In this section we introduce a proof-system $\NDCPL$ for $\ICPL$ and describe the correspondence between proofs in $\ICPL$ and families of proofs $\ICPLL$; then, we establish the soundness and completeness of $\NDCPL$ with respect to $\ICPL$-structures. 
Finally, we provide some more details about normalization in the ``computational fragment'' $\ICPLL$, and we describe an alternative ``CbN'' proof-system for $\ICPLL$, whose derivations can be decorated with terms of $\EVL$, yielding a CHC with the type system $\TCI$.

\subsection{The Proof-Theory of $\ICPL$.}\label{app:ProofTheoryiCPL}

The natural deduction system $\NDCPL$  for $\ICPL$ is formed by the rules illustrated in Fig.~\ref{fig:logicrules}, with the following proviso:
\begin{varitemize}
\item it is everywhere assumed that in a sequent $\Gamma \vdash  A$, $\FN(\Gamma),  \FN(A)
\subseteq X$;

\item in the rule ($\B C$I) and ($\B C$E$_{3}$) it is assumed that $ \FN(\bthree)\subseteq \{a\}$.

\end{varitemize}
together with all instances of the two axiom schema below:
 \begin{align} 
\B C^{q}_{a}(A\lor B) & \to A \lor (\B C^{q}_{a}A) & (a\notin \FN(A)) \tag{$\B C\vee$} \label{ax1}\\
&  \lnot\B C^{q}_{a}\bone & (\FN(\bone)\subseteq\{a\}, \mu(\bone)<q) \tag{$\B C\BOT$}\label{ax2}
\end{align}

As usual, we take $\lnot A$ as an abbreviation for $A\supset \BOT$.

We let $\NDCPL^{-}$ indicate $\NDCPL$ without \eqref{ax1} and \eqref{ax2}.

%
%


\begin{remark}
The reason for distinguishing between the systems $\NDCPL$ and $\NDCPL^{-}$ is somehow analogous to what happens in intuitionistic modal logic (IML). Indeed, standard axiomatizations of IML include two axioms
 \begin{align} 
\Diamond(A\lor B) & \to\Diamond A \lor \Diamond B \tag{$\Diamond\vee$}\\
&\lnot \Diamond\BOT \tag{$\Diamond\BOT$}
\end{align}
which do not have a clear computational interpretation. Instead, a Curry-Howard correspondence can be defined for an axiomatization of IML (usually referred to as \emph{constructive modal logic}) which does not include these two axioms. 

In a similar way, we will show that provability in $\NDCPL^{-}$ corresponds, under the decomposition provided by Lemma \ref{lemma:decomposition} to provability in the Curry-Howard proof-system $\NDCPLL$, while axioms \eqref{ax1} and \eqref{ax2} cannot be similarly interpreted.
\end{remark}

\begin{figure}[t]
\begin{center}
\fbox{
\resizebox{0.7\textwidth}{!}{
\begin{minipage}{0.9\textwidth}
\begin{center}
{Classical Identity}
$$
\AXC{}
\RL{(Cid)}
\UIC{$\Gamma\vdash \bvar_{a}^{i}\lor \lnot \bvar_{a}^{i}$}
\DP
$$

\bigskip

{Intuitionistic Identity}
$$
\AXC{}
\RL{(Iid)}
\UIC{$\Gamma, A \vdash  A$}
\DP
$$


\bigskip

{Logical Rules}

\begin{center}
\begin{tabular}{c c c}
$
\AXC{}
\RL{($\TOP$I)}
\UIC{$\Gamma \vdash   \TOP$}
\DP
$ &  &
$
\AXC{$\Gamma\vdash  \BOT$}
\RL{($\BOT$E)}
\UIC{$\Gamma \vdash  A$}
\DP
$ \\
&& \\
$
\AXC{$\Gamma \vdash  A$}
\AXC{$\Gamma \vdash  A$}
\RL{($\land$)I}
\BIC{$\Gamma \vdash  A\land B$}
\DP
$ & &
$
\AXC{$\Gamma \vdash  A\land B$}
\RL{($\land$E$_{1}$)}
\UIC{$\Gamma \vdash  A$}
\DP
\qquad
\AXC{$\Gamma \vdash A\land B$}
\RL{($\land$E$_{2}$)}
\UIC{$\Gamma \vdash  B$}
\DP
$ 
\\ && \\ 
$
\AXC{$\Gamma \vdash  A$}
\RL{($\lor$I$_{1}$)}
\UIC{$\Gamma \vdash  A\lor B$}
\DP
\qquad
\AXC{$\Gamma \vdash  A$}
\RL{($\lor$I$_{2}$)}
\UIC{$\Gamma \vdash  A\lor B$}
\DP
$ & &
$
\AXC{$\Gamma \vdash  A\lor B$}
\AXC{$\Gamma, A \vdash  C$}
\AXC{$\Gamma,B \vdash  C$}
\RL{($\lor$E)}
\TIC{$\Gamma \vdash  C$}
\DP
$\\
&& \\
$
\AXC{$\Gamma, A \vdash  B$}
\RL{($\to$I)}
\UIC{$\Gamma \vdash  A\to B$}
\DP
$ & &$
\AXC{$\Gamma \vdash A\to B$}
\AXC{$\Gamma \vdash  A$}
\RL{($\to$E)}
\BIC{$\Gamma \vdash  B$}
\DP
$
\end{tabular}
\end{center}

\bigskip

{Counting Rules}
$$
\AXC{$ \Gamma, \bthree \vdash  A$}
\AXC{$\mu(\bthree)\geq q$}
\RL{($\B C$I) {\small$a\notin \Gamma$}}
\BIC{$\Gamma \vdash\B C^{q}_{a}A$}
\DP
$$
$$
\AXC{$\Gamma \vdash\B C^{q}_{a}A$}
%
\RL{($\B C$E$_{1}$)  {\small$a\notin A$}}
\UIC{$\Gamma \vdash A$}
\DP
\qquad\qquad
\AXC{$\Gamma \vdash\B C^{q}_{a}A$}
\AXC{$\Gamma, A \vdash C$}
\RL{($\B C$E$_{2}$)  {\small$a\notin \Gamma$}}
\BIC{$\Gamma \vdash \B C^{qs}_{a}C$}
\DP
$$
%

\end{center}
\end{minipage}
}
}
\end{center}

\caption{Rules of $\NDCPL$.}
\label{fig:logicrules}
\end{figure}

A formula $A$ of $\ICPL$ is \emph{purely classical} if $A$ contains no intuitionistic propositional variable, and 
\emph{purely intuitionistic} if $A$ contains no classical propositional variable.

Let us list a few properties of $\ICPL$.

\begin{lemma}
For any purely Boolean formula $\bone$, $\vdash \bone$ is derivable in $\NDCPL$ iff $\bone$ is a tautology.
\end{lemma}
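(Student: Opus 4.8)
The statement is an equivalence, and I would treat the two implications separately. The key preliminary observation, used in both, is that a purely Boolean formula $\bone$ contains no intuitionistic variable, so for any $\ICPL$-structure $\CC M$, world $w$ and $\omega$, the relation $w,\omega \Vdash^{X}_{\CC M}\bone$ does not depend on $w$ or on $\CC M$ and coincides exactly with the classical Boolean evaluation of $\bone$ under the valuation determined by $\omega$. Consequently $\vDash \bone$ (semantic validity in every $\ICPL$-structure) holds precisely when $\omega \vDash \bone$ for every valuation $\omega$, i.e.\ precisely when $\bone$ is a tautology. Given this, the left-to-right implication is just soundness: if $\vdash_{\NDCPL}\bone$ then, by the soundness half of Theorem~\ref{thm:soundcomplete} (an induction on derivations checking that each rule, including (Cid) and the counting rules, preserves validity), we get $\vDash\bone$, hence $\bone$ is a tautology.

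For the substantive right-to-left implication I would avoid invoking the completeness half of Theorem~\ref{thm:soundcomplete} --- to keep the argument self-contained and non-circular, since this lemma is plausibly a stepping stone toward that theorem --- and instead give a direct Kalm\'ar-style derivation. The plan is to exploit that, restricted to Boolean formulas, $\NDCPL$ contains full intuitionistic propositional natural deduction (the logical rules of Fig.~\ref{fig:logicrules}) together with atomic excluded middle $\bvar_{a}^{i}\lor\lnot\bvar_{a}^{i}$ supplied by (Cid). First I would prove Kalm\'ar's lemma: for every valuation $v$ of the variables occurring in $\bone$ and every Boolean subformula $\bthree$, writing $\Delta_{v}$ for the set of signed literals made true by $v$ (each variable $\bvar$ appearing as $\bvar$ if $v\vDash\bvar$ and as $\lnot\bvar$ otherwise), one has $\Delta_{v}\vdash\bthree$ when $v\vDash\bthree$ and $\Delta_{v}\vdash\lnot\bthree$ when $v\not\vDash\bthree$. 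This is an induction on $\bthree$ whose cases are all intuitionistically valid: atoms are handled by (Iid); for $\lnot,\land,\lor$ one checks the standard derivations (e.g.\ from $\Delta_v\vdash\bthree$ one gets $\Delta_v\vdash\lnot\lnot\bthree$ using ($\to$I) and ($\to$E), and from $\Delta_v\vdash\lnot\bone_1$ and $\Delta_v\vdash\lnot\bone_2$ one gets $\Delta_v\vdash\lnot(\bone_1\lor\bone_2)$ by ($\lor$E) into $\BOT$).

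Since $\bone$ is a tautology, $v\vDash\bone$ for every $v$, so Kalm\'ar's lemma yields $\Delta_{v}\vdash\bone$ for all $v$. The final step is to discharge the literal hypotheses one variable at a time: fixing a variable $\bvar$, the two valuations agreeing off $\bvar$ give derivations $\Delta',\bvar\vdash\bone$ and $\Delta',\lnot\bvar\vdash\bone$; applying (Cid) to obtain $\Delta'\vdash\bvar\lor\lnot\bvar$ and then ($\lor$E) collapses these into $\Delta'\vdash\bone$. Iterating over all variables removes every assumption and produces $\vdash\bone$. I expect the main obstacle to be organizational rather than conceptual: carefully setting up the bookkeeping of signed literals in Kalm\'ar's lemma so that each inductive case is genuinely an intuitionistic derivation (no use of classical principles beyond the atomic instances of (Cid)), and managing the inductive discharge of the finitely-but-exponentially many hypothesis sets $\Delta_v$ in a clean way. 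As a sanity check, the whole lemma also follows at once from Theorem~\ref{thm:soundcomplete} via the satisfaction-reduction observation of the first paragraph, which I would cite as a cross-check while preferring the direct argument for modularity.
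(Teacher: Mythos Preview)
The paper states this lemma without proof, treating it as a standard fact; your proposal supplies a correct and complete argument. Your semantic reduction (purely Boolean formulas have satisfaction independent of the world and structure) is right, the soundness appeal handles one direction, and the Kalm\'ar argument for the other direction is the textbook route and works here because the Boolean grammar is $\TOP,\BOT,\bvar_{a}^{i},\lnot,\land,\lor$ --- every inductive case of Kalm\'ar's lemma is intuitionistically derivable, and the final variable-by-variable discharge via (Cid) and ($\lor$E) is exactly what the atomic excluded-middle axiom is for. Your caution about not relying on the completeness half of Theorem~\ref{thm:soundcomplete} is sensible for modularity, though in the paper's actual development the completeness proof does not appear to depend on this lemma, so the circularity worry does not materialize.
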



\begin{lemma}\label{lemma:props}
The following are theorems of $\ICPL$:
\begin{varitemize}
\item[(i.)] $\B C^{q}_{a}\BOT \leftrightarrow \BOT$ and $\B C^{q}_{a}\TOP \leftrightarrow \TOP$;
\item[(ii.)] if $a\notin \FN(A)$, $\B C^{q}_{a}(A\to B) \to (A\to \B C^{q}_{a}B)$.
\item[(iii.)] if $a\notin \FN(A)$, $\B C^{q}_{a}(A\land B) \leftrightarrow A\land \B C^{q}_{a}B$.
\item[iv.] if $a\notin\FN(A)$, then $\B C^{q}_{a}A\leftrightarrow A$.

\end{varitemize}
\end{lemma}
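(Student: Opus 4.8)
The plan is to derive all four equivalences inside $\NDCPL$ by combining the three counting rules ($\B C$I), ($\B C$E$_{1}$), ($\B C$E$_{2}$) with the axiom \eqref{ax2} and the ordinary intuitionistic rules. The unifying idea is that ($\B C$E$_{2}$) lets one transport a derivation carried out \emph{under} the quantifier: from $\Gamma \vdash \B C^{q}_{a}A$ together with a minor premise $\Gamma, A \vdash C$ it produces $\Gamma \vdash \B C^{q}_{a}C$, where I always take the multiplier to be $s=1$ (so the counting index is preserved). Dually, ($\B C$E$_{1}$) and ($\B C$I) erase the quantifier outright whenever $a$ is not free. In each application the side condition $a\notin\Gamma$ will hold for free, since every context I use consists of an intuitionistic hypothesis $A$ with $a\notin\FN(A)$ together with counting formulas in which $a$ is bound.

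For (i.), the two easy directions $\BOT \to \B C^{q}_{a}\BOT$ and $\B C^{q}_{a}\TOP \to \TOP$ follow from ($\BOT$E) and ($\TOP$I). The direction $\B C^{q}_{a}\BOT \to \BOT$ is precisely axiom \eqref{ax2} instantiated at $\bone=\BOT$, whose side conditions $\FN(\BOT)\subseteq\{a\}$ and $\mu(\BOT)=0<q$ are met since $q>0$. Finally $\TOP \to \B C^{q}_{a}\TOP$ comes from ($\B C$I) applied to the trivial derivation $\TOP \vdash \TOP$ with Boolean witness $\TOP$, for which $\mu(\TOP)=1\geq q$. Item (iv.) is handled identically: under $a\notin\FN(A)$, the direction $\B C^{q}_{a}A \to A$ is a direct instance of ($\B C$E$_{1}$), while $A\to \B C^{q}_{a}A$ is obtained from ($\B C$I) with witness $\TOP$ applied to $A \vdash A$, the side condition $a\notin\{A\}$ being exactly the hypothesis $a\notin\FN(A)$.

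For (ii.), I would assume $\B C^{q}_{a}(A\to B)$ and $A$, set $\Gamma = \{\B C^{q}_{a}(A\to B),\, A\}$, and observe that the minor premise $\Gamma,(A\to B) \vdash B$ is immediate by ($\to$E) from the hypothesis $A$; hence ($\B C$E$_{2}$) gives $\Gamma \vdash \B C^{q}_{a}B$, and two applications of ($\to$I) discharge the hypotheses. Item (iii.) uses the same transport in both directions. For $\B C^{q}_{a}(A\land B) \to A\land \B C^{q}_{a}B$, I apply ($\B C$E$_{2}$) to $\B C^{q}_{a}(A\land B)$ once with minor premise $(A\land B)\vdash A$ and once with $(A\land B)\vdash B$ (via ($\land$E$_{1}$) and ($\land$E$_{2}$)), obtaining $\B C^{q}_{a}A$ and $\B C^{q}_{a}B$; then ($\B C$E$_{1}$), licensed by $a\notin\FN(A)$, turns $\B C^{q}_{a}A$ into $A$, and ($\land$I) recombines. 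Conversely, starting from $A$ and $\B C^{q}_{a}B$, the minor premise $A, B \vdash A\land B$ holds by ($\land$I), so ($\B C$E$_{2}$) applied to $\B C^{q}_{a}B$ yields $\B C^{q}_{a}(A\land B)$.

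The only genuinely delicate point is the multiplier $s$ in ($\B C$E$_{2}$): I rely throughout on the instance $s=1$, which leaves the counting index unchanged and is exactly what upgrades the arguments from implications with degraded probabilities to the full equivalences claimed. I expect the rest to be routine bookkeeping, namely checking that $a$ is never free in the ambient context and that the Boolean side conditions ($\mu(\bone)\geq q$, resp.\ $\mu(\bone)<q$) are satisfied by the chosen witnesses $\TOP$ and $\BOT$.
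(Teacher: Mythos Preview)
Your proposal is correct, and it is exactly the kind of routine verification the paper has in mind: the paper states Lemma~\ref{lemma:props} without proof. Your use of ($\B C$E$_{2}$) with $s=1$ to transport derivations under the quantifier, combined with ($\B C$E$_{1}$) and ($\B C$I) (witness $\TOP$) to erase or introduce $\B C^{q}_{a}$ when $a$ is not free, is the natural argument; the side conditions $a\notin\Gamma$ are indeed satisfied in each case because the only non-counting hypothesis is $A$ with $a\notin\FN(A)$, and $a$ is bound in the counting hypotheses.
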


\begin{lemma}
For any Boolean formula $\bone$ with $\FN(\bone)\subseteq \{a\}$,  $\B C^{q}_{a}\bone \lor (\lnot \B C^{q}_{a}\bone)$ is provable.
%
%

\end{lemma}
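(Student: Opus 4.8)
The plan is to reduce the goal to a purely metatheoretic dichotomy on the value $\mu(\bone)$. Since $\bone$ is a Boolean formula with $\FN(\bone)\subseteq\{a\}$, by the discussion in Section~\ref{sub:Cantor} the measure $\mu(\bone)\in[0,1]\cap\BB Q$ is a well-defined rational; hence, reasoning classically at the meta-level, exactly one of $\mu(\bone)\geq q$ and $\mu(\bone)<q$ holds. I would prove the disjunction $\B C^{q}_{a}\bone \lor (\lnot \B C^{q}_{a}\bone)$ by splitting on these two cases, establishing one disjunct in each and concluding with the corresponding disjunction-introduction rule.

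In the first case, when $\mu(\bone)\geq q$, I would derive the left disjunct $\B C^{q}_{a}\bone$ directly by the counting-introduction rule ($\B C$I). Taking the empty context (so that the side condition $a\notin\Gamma$ is vacuously met) and the Boolean premiss $\bthree:=\bone$ (which satisfies $\FN(\bthree)\subseteq\{a\}$), the two premisses of ($\B C$I) are the trivial sequent $\bone\vdash\bone$, obtained from the intuitionistic identity rule (Iid), and the semantic premiss $\mu(\bone)\geq q$, which holds by assumption. This yields $\vdash\B C^{q}_{a}\bone$, and a single application of ($\lor$I$_{1}$) gives the goal. In the second case, when $\mu(\bone)<q$, the right disjunct $\lnot\B C^{q}_{a}\bone$ is precisely an instance of the axiom schema \eqref{ax2}, whose side conditions $\FN(\bone)\subseteq\{a\}$ and $\mu(\bone)<q$ are exactly the two hypotheses at hand; an application of ($\lor$I$_{2}$) then concludes.

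There is no genuine obstacle here, and the only point deserving care is conceptual rather than technical: the case distinction is carried out classically in the metatheory, since it concerns a concrete rational number, and is \emph{not} performed inside $\ICPL$. Indeed, the content of the lemma is exactly that this particular instance of the excluded middle becomes internally provable precisely because $\mu(\bone)$ is externally computable. This is also the place where the axiom \eqref{ax2} is essential: it is what makes the ``negative'' branch available without constructive witness, so that, once the two branches are combined, the full disjunction is derivable although neither disjunct is provable for an arbitrary $\bone$.
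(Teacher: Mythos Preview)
Your proposal is correct and follows essentially the same approach as the paper: a metatheoretic case split on whether $\mu(\bone)\geq q$ or $\mu(\bone)<q$, using ($\B C$I) from $\bone\vdash\bone$ in the first case and axiom \eqref{ax2} in the second, followed by disjunction introduction. Your additional remarks about the metatheoretic nature of the case distinction and the role of \eqref{ax2} are accurate and simply make explicit what the paper leaves implicit.
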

\begin{proof}

Two possibilities arise:
if $\mu(\bone)\geq q$, then from $\bone \vdash \bone$ we deduce, by ($\B C$I), $\vdash\B C^{q}_{a}\bone$ as well as 
$\vdash \B C^{q}_{a}\bone \lor (\lnot \B C^{q}_{a}\bone)$.
if $\mu(\bone)> q$, then we deduce, using axiom \ref{ax2}, 
$\vdash \lnot \B C^{q}_{a} \bone$, as well as 
$\vdash \B C^{q}_{a}\bone \lor (\lnot \B C^{q}_{a}\bone)$.
\end{proof}

%
%
%
%
%

\subsection{Relating $\ICPL$ and $\ICPLL$.}\label{app:relatingICPLICPLL}

Let us first provide a complete proof of the decomposition lemma (Lemma \ref{lemma:decomposition}).

For any formula $C$ and $b\in\{0,1\}$, let $\lnot^{b}C$ indicate the formula $C$ if $b=0$ and $\lnot C$ if $b=1$. 
For any $\omega \in (2^{\mathbb N})^{\CC A}$, let the \emph{theory of $\omega$} be the following set of Boolean formulas:
$$\TTH(\omega)=\{\lnot^{(1-\omega(a)(i))}\bvar_{a}^{i} \mid a,i\in \mathbb N\}$$
Moreover, for all $K\in \mathbb N$ and finite set $X$, let 
$\TTH^{K}_{X}(\omega)$ be the Boolean formula below
$$
\TTH^{K}_{X}(\omega)=\bigwedge_{a\in A,i\leq K}
\lnot^{(1-\omega(a)(i))}\bvar_{a}^{i}
%
$$
Observe that the formula $\TTH^{K}_{X}(\omega)$ only depends on a finite amount of information of $\omega$, namely, on the unique $v\in (2^{\{1,\dots,K\}})^{X}$ such that $v(a)(i)=\omega(a)(i)$. 
When $v$ is clear from the context, we will indicate $\TTH^{K}_{X}(\omega)$ simply as $\TTH(v)$.

The decomposition lemma can be reformulated then as follows:
\begin{lemma}[decomposition lemma, syntactic formulation]\label{lemma:decompositionbis}
For any formula $A$ of $\ICPL$ there exist intuitionistic formulas $A_{v}$, where $v$ varies over all possible  valuations of the Boolean variables in $A$, 
 such that 
$\vdash A \leftrightarrow \bigvee_{v}\TTH(v) \land A_{v}$.
\end{lemma}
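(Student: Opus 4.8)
The plan is to reduce the biconditional to a \emph{pointwise} statement indexed by valuations and then prove that statement by induction on $A$, the counting-quantifier case being the only genuinely nontrivial one. First I would isolate two Boolean bookkeeping facts, both provable in $\NDCPL$ from the classical identity (Cid): for a fixed finite set of Boolean variables the disjunction $\bigvee_{v}\TTH(v)$ of the characteristic formulas of all valuations is a theorem (obtained by distributing the excluded-middle instances $\bvar_b^j\lor\lnot\bvar_b^j$), and for $v\neq v'$ one has $\TTH(v)\vdash\lnot\TTH(v')$ (they carry opposite literals on some shared variable). Granting these, it suffices to establish the pointwise claim
\[
\vdash \TTH(v)\to\bigl(A\leftrightarrow A_{v}\bigr)\qquad\text{for every valuation }v\text{ of the free Boolean variables of }A,
\]
with $A_v$ intuitionistic. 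From it Lemma~\ref{lemma:decompositionbis} follows by pure intuitionistic propositional reasoning: for the left-to-right direction assume $A$, case-split on $\bigvee_v\TTH(v)$ via ($\lor$E), and in the branch $\TTH(v)$ extract $A_v$, hence $\TTH(v)\land A_v$; the converse is symmetric, using $\TTH(v)\to(A_v\to A)$ in each branch.

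Next I would prove the pointwise claim by induction on $A$. The atomic cases are immediate: for $A=\bvar_a^i$ take $A_v=\TOP$ when $v(\bvar_a^i)=1$ (and $\TTH(v)\vdash\bvar_a^i$) and $A_v=\BOT$ when $v(\bvar_a^i)=0$ (and $\TTH(v)\vdash\lnot\bvar_a^i$); for $A=\CC p,\TOP,\BOT$ take $A_v=A$. The cases $A=B\land C$, $A=B\lor C$, $A=B\to C$ follow by congruence of $\leftrightarrow$ under the connectives, taking $A_v$ to be $B_v\land C_v$, $B_v\lor C_v$, or $B_v\to C_v$ respectively and combining the two inductive equivalences.

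The real work, and the step I expect to be the main obstacle, is $A=\B C^{q}_{a}B$. Here the free Boolean variables of $A$ are exactly those of $B$ whose name differs from $a$, so a valuation $v$ of them extends, over the name-$a$ variables occurring in $B$, to full valuations $v'$ (the \emph{atoms} over $a$). Applying the induction hypothesis to $B$, simplifying $\TTH(v)\land B$ by the incompatibility fact (dropping the disjuncts indexed by $v''\neq v$), exporting $\TTH(v)$ across the quantifier via the commutation law $\B C^{q}_{a}(\TTH(v)\land B)\leftrightarrow\TTH(v)\land\B C^{q}_{a}B$ of Lemma~\ref{lemma:props}(iii) (legitimate since $a\notin\FN(\TTH(v))$), and using congruence of $\B C^{q}_{a}$ under provable equivalence, I reduce the goal to
\[
\vdash \B C^{q}_{a}\Bigl(\textstyle\bigvee_{v'}\TTH(v')\land B_{v,v'}\Bigr)\ \leftrightarrow\ \bigvee_{S}\ \bigwedge_{v'\in S} B_{v,v'},
\]
where each $B_{v,v'}$ is the intuitionistic formula supplied by the induction hypothesis and $S$ ranges over sets of atoms with $\mu\bigl(\bigvee_{v'\in S}\TTH(v')\bigr)\geq q$; the right-hand side is the intuitionistic $A_v$.

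The direction $(\Leftarrow)$ is clean: from $\bigwedge_{v'\in S}B_{v,v'}$ one derives $\bigvee_{v'\in S}\TTH(v')\vdash\bigvee_{v'}\TTH(v')\land B_{v,v'}$, and since $\mu\bigl(\bigvee_{v'\in S}\TTH(v')\bigr)\geq q$ the rule ($\B C$I) yields the left-hand side. The hard direction $(\Rightarrow)$ is exactly where the axioms \eqref{ax1} and \eqref{ax2} become indispensable, matching the role flagged in the Remark accompanying them. I would first rewrite the disjunction into conjunctive normal form using the Boolean excluded middle (Cid), so that each clause is a disjunction of some of the $B_{v,v'}$ together with some literals over $a$; project the counting quantifier onto each clause by monotonicity of $\B C^{q}_{a}$ (a form derivable from ($\B C$E$_2$) with $s=1$); apply the distribution axiom \eqref{ax1} to pull the $a$-free disjunction of the retained $B_{v,v'}$ out of $\B C^{q}_{a}$; and finally invoke \eqref{ax2} to discard the residual $\B C^{q}_{a}$ of the purely Boolean part whenever that part has measure $<q$, thereby forcing the retained disjunction of $B_{v,v'}$. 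Collecting these clause-wise consequences reconstructs the threshold DNF $A_v$, the equivalence between the conjunction-of-clauses so obtained and $\bigvee_{S}\bigwedge_{v'\in S}B_{v,v'}$ being the standard CNF/DNF duality for a monotone threshold function. The two delicate points are the additivity of $\mu$ over the disjoint cylinder sets $\model{\TTH(v')}$ (so the numeric side-conditions of ($\B C$I) and \eqref{ax2} line up with the measures of the clauses) and the bookkeeping guaranteeing that every extracted $B_{v,v'}$ is genuinely $a$-free, so that \eqref{ax1} truly applies.
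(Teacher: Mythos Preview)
Your global strategy---reduce to the pointwise claim $\TTH(v)\vdash A\leftrightarrow A_v$ and induct on $A$---coincides with the paper's, and the base and propositional cases match. The divergence is at $A=\B C^q_a B$: the paper simply sets $A_v=\B C^q_a\bigvee_w B_{v+w}$, retaining the quantifier and arguing in two lines from the inductive hypothesis, whereas you eliminate the quantifier entirely by taking $A_v$ to be the threshold DNF $\bigvee_S\bigwedge_{v'\in S}B_{v,v'}$. Your target formula is semantically correct and is purely intuitionistic in the strongest sense (no Boolean variables, bound or free), so the approach is sound and in some ways sharper than the paper's; but it requires genuinely more work, and in particular it forces you to use axioms \eqref{ax1} and \eqref{ax2}, which the paper's one-line argument does not touch.

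There is, however, a real gap in your $(\Rightarrow)$ argument. If you pass to the \emph{literal-level} CNF of $\Phi=\bigvee_{v'}(\TTH(v')\land B_{v,v'})$, then the Boolean part of each clause is a disjunction of single literals over $a$, and any nonempty such disjunction has measure $\geq\tfrac12$. Hence for $q\leq\tfrac12$ axiom \eqref{ax2} fires only on the clause with empty Boolean part, yielding merely $\bigvee_{v'}B_{v,v'}$---far weaker than the threshold DNF. Concretely, with two $a$-variables and $q=\tfrac12$ you need $B_{01}\lor B_{10}\lor B_{11}$ (excluding the atom $v'_{00}$), but the relevant CNF clauses have Boolean part $\lnot\bvar_a^0$ or $\lnot\bvar_a^1$, each of measure exactly $\tfrac12$, so \eqref{ax2} does not apply. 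The repair is to abandon literal-level CNF and use instead the coarser consequences
\[
\Phi\ \vdash\ \Bigl(\bigvee_{v'\in T}\TTH(v')\Bigr)\ \lor\ \Bigl(\bigvee_{v'\notin T}B_{v,v'}\Bigr),
\]
one for each subset $T$ of $a$-atoms. Here the Boolean part has measure exactly $|T|\cdot 2^{-k}$, so \eqref{ax1} followed by \eqref{ax2} extracts $\bigvee_{v'\notin T}B_{v,v'}$ precisely when $|T|<q\cdot 2^k$. Conjoining these over all such $T$ and applying intuitionistic distributivity of $\land$ over $\lor$---together with the pigeonhole observation that any selection $T\mapsto f(T)\notin T$ must have range of size at least $\lceil q\cdot 2^k\rceil$---does yield the threshold DNF. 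So your plan goes through, but the CNF was the wrong granularity.
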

\begin{proof}
The formula $A_{v}$ is defined by induction on $A$ as follows:
\begin{varitemize}
\item if $A=\TOP$, $A=\BOT$, or $A=\CC p$, then $A_{v}=A$;
\item if $A=\bvar_{a}^{i}$, then $A_{v}=\TOP$ if $v(a)(i)=1$, and $A_{v}=\BOT$ if $v(a)(i)=0$;
\item if $A=B c C$, where $c=\land, \lor, \to$, then $A_{v}= B_{v} c C_{v}$;
\item if $A=\B C^{q}_{a}B $, then we consider two cases: 
\begin{varitemize}
\item if $a$ does not occur in $B$, then we let $A_{v}:=B_{v}$;
\item if $a$ does occur in $B$, then for all valuation $w$ of the variables of name $a$ in $B$, we can suppose that the formulas $B_{v+w}$ are well-defined. We let then 
$
A_{v}= \B C^{q}_{a} \bigvee_{w}B_{v+w}
$.

\end{varitemize}
%
%
%

\end{varitemize}

Let us show that for any formula $A$ and valuation $v$ of its Boolean variables, 
$\TTH (v)\vdash_{ \ICPL}A\leftrightarrow A_{v}$. 
We argue by induction on $A$. Indeed, the only non-trivial case is that of $A=\B C^{q}_{a}B$.
If $a$ does not occur in $B$, then we can conclude by the IH.
Otherwise, by the IH we have that for any valuation $w$ of the Boolean variables of $B$ of name $a$, 
$\TTH (v+w)\vdash B\leftrightarrow B_{v+w}$. Using the fact that $\bigvee_{w}\TTH(w)$ is provable and that 
 $\TTH(v+w)\leftrightarrow (\TTH(v)\land \TTH(w))$, we deduce then
$ \TTH(v) \vdash B\leftrightarrow \bigvee_{w}B_{v+w}$, from which we can deduce 
$\TTH(v)\vdash A  \leftrightarrow A_{v}$.

Using $\TTH(v)\vdash A\leftrightarrow A_{v}$ and $\vdash \bigvee_{v}\TTH(v)$, we can conclude 
$\vdash A \leftrightarrow \bigvee_{v}\TTH(v)\land A_{v}$.

\end{proof}

If $A$ is a formula of $\ICPL$ containing no Boolean variable, we let $|A|$ indicate the corresponding formula of $\ICPLL$ obtained by deleting names from counting quantifiers, i.e.~replacing $\B C^{q}_{a}$ by $\B C^{q}$.

In the following, we consider as $\NDCPLL$ the proof-system obtained by enriching the rules from Section \ref{section3} with all standard rules for intuitionistic connectives (straightforwardly adapted to the sequents of $\NDCPLL$ by adding everywhere a fixed Boolean formula $\dots\vdash\bone \Pto \dots$).

\begin{lemma}\label{lemma:nothv}
In $\NDCPLL$, if $\Gamma \vdash \bone \land \btwo \Pto A$, where $\bone \land \btwo$ is satisfiable, then
$\Gamma \vdash \bone \Pto A$.
\end{lemma}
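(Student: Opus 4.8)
The plan is to reduce the statement to its special case where the target constraint is $\TOP$, and then recover the general case by a Boolean-weakening step whose only admissible ingredient is a \emph{Boolean-strengthening} rule. First I would record that admissible rule: from $\Gamma \vdash \btwo \Pto A$ and $\bone \vDash \btwo$ one can derive $\Gamma \vdash \bone \Pto A$. This is obtained by a single instance of $(\mathsf m)$ taking both minor premisses to be $\Gamma \vdash \btwo \Pto A$ and any variable $\bvar_{a}^{i}$ with $a\in X$, since the side condition $\bone \vDash (\btwo \land \bvar_{a}^{i}) \lor (\btwo \land \lnot \bvar_{a}^{i})$ simplifies to $\bone \vDash \btwo$. (When $\bone$ contains no name it is a satisfiable variable-free formula, hence equivalent to $\TOP$, and the lemma is trivial; otherwise a suitable $\bvar_{a}^{i}$ is available from $\FN(\bone)$.)

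The core is the following claim, proved by induction on the derivation: if $\Gamma \vdash \bfour \Pto A$ with $\bfour$ satisfiable, then $\Gamma \vdash \TOP \Pto A$. Since the formulas of $\ICPLL$ carry no Boolean variables, the side formula is the only carrier of names, so all intuitionistic rules together with $(\to\mathrm{I})$ and $(\to\mathrm{E})$ transmit the \emph{same} $\bfour$ from premisses to conclusion; there I would simply apply the induction hypothesis to each premiss and re-apply the rule under $\TOP$. The rule $(\BOT)$ cannot be the last rule, since its side condition $\bfour \vDash \BOT$ contradicts satisfiability of $\bfour$ — this is exactly where the hypothesis is used. For $(\B C\mathrm{I})$, concluding $\Gamma \vdash \bfour \Pto \B C^{q}A'$ from $\Gamma \vdash \bfour \land \bthree \Pto A'$ and $\mu(\bthree)\geq q$, the premiss constraint $\bfour \land \bthree$ is again satisfiable: $\bthree$ is satisfiable because $q>0$ forces $\mu(\bthree)>0$, and $\FN(\bthree)$ is disjoint from $\FN(\bfour)$, so a common model exists. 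The induction hypothesis yields $\Gamma \vdash \TOP \Pto A'$; Boolean strengthening then gives $\Gamma \vdash \bthree \Pto A'$, and a fresh $(\B C\mathrm{I})$ restores $\Gamma \vdash \TOP \Pto \B C^{q}A'$. The case $(\B C\mathrm{E})$ is analogous, applying the hypothesis to both premisses (which share $\bfour$) and re-applying the rule.

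The delicate case, and the main obstacle, is $(\mathsf m)$: here the premisses $\Gamma \vdash \btwo' \Pto A$ and $\Gamma \vdash \bthree' \Pto A$ carry constraints unrelated to $\bfour$, so one cannot merely weaken the conclusion — doing so would break the side condition $\bfour \vDash (\btwo' \land \bvar_{a}^{i}) \lor (\bthree' \land \lnot \bvar_{a}^{i})$. The resolution is to exploit satisfiability directly: any model of $\bfour$ satisfies one of the two disjuncts, hence at least one of $\btwo',\bthree'$ is itself satisfiable. I would then discard the other premiss and apply the induction hypothesis to the satisfiable one, obtaining $\Gamma \vdash \TOP \Pto A$ outright. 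Intuitively this works because $A$ is purely intuitionistic, so once it is proved under any nonempty set of outcomes it is proved for all of them; the claim is the syntactic counterpart of this semantic collapse. Finally, the lemma follows: from $\Gamma \vdash \bone \land \btwo \Pto A$ with $\bone \land \btwo$ satisfiable the claim gives $\Gamma \vdash \TOP \Pto A$, and Boolean strengthening with $\bone \vDash \TOP$ yields $\Gamma \vdash \bone \Pto A$.
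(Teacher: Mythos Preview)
Your proof is correct and carries out the induction on derivations that the paper merely gestures at (the paper's proof is the single sentence ``By induction on the rules of $\NDCPLL$''). Your reformulation of the claim as ``any satisfiable constraint can be replaced by $\TOP$'' is exactly the right strengthening to make the induction close: in the $(\mathsf m)$ case the premiss constraints $\btwo',\bthree'$ bear no syntactic relation to the conclusion's $\bone\land\btwo$, so a naive induction on the lemma as literally stated would stall there, whereas your version lets you simply pick whichever premiss is satisfiable and finish. The final Boolean-strengthening step then recovers the arbitrary $\bone$. One minor remark: the variable $\bvar_a^i$ used in your $(\mathsf m)$-based strengthening need not come from $\FN(\bone)$ at all---any name works, since the side condition collapses to $\bone\vDash\btwo$ regardless---so the parenthetical case split on whether $\bone$ has names is unnecessary (and if $\bone$ were unsatisfiable, rule $(\BOT)$ would apply directly anyway).
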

\begin{proof}
By induction on the rules of $\NDCPLL$. 

\end{proof}

The following result describes the correspondence between the proof-theories of $\ICPL$ and $\ICPLL$.

\begin{proposition}\label{thm:decom}
$\Gamma \vdash_{\NDCPL^{-}} A$ iff for all valuations $v$, $|\Gamma_{v}| \vdash_{\NDCPLL}\TTH(v)\Pto | A_{v}|$.

\end{proposition}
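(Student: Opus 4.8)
The plan is to prove both directions by induction on the height of derivations, using the decomposition machinery of Lemma \ref{lemma:decompositionbis} as the bridge between the named world of $\ICPL$ and the nameless world of $\ICPLL$. The statement asserts that a sequent $\Gamma \vdash_{\NDCPL^{-}} A$ in full $\ICPL$ is provable exactly when, for every valuation $v$ of the Boolean variables occurring in $\Gamma$ and $A$, the associated ``sliced'' intuitionistic sequent $|\Gamma_{v}| \vdash_{\NDCPLL} \TTH(v) \Pto |A_{v}|$ is provable. The essential intuition is that each $\NDCPL^{-}$ derivation, when read under a fixed valuation $v$, collapses into a purely intuitionistic derivation in which the Boolean side-information is recorded in the $\TTH(v) \Pto \dots$ annotation; conversely, a uniform family of such intuitionistic derivations can be glued back together — using the structural rule $(\mathsf m)$ and the counting rules — into a single $\NDCPL^{-}$ proof.

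For the \emph{left-to-right} direction, I would fix a valuation $v$ and argue by induction on the $\NDCPL^{-}$ derivation of $\Gamma \vdash A$. The intuitionistic logical rules $(\to\! I)$, $(\to\! E)$, $(\land)$, $(\lor)$, etc.\ transfer directly, since under a fixed $v$ they act identically on the sliced formulas $A_{v}$, with $\TTH(v)$ carried along passively (this is exactly what the adaptation of the standard rules to $\NDCPLL$-sequents provides, and Lemma \ref{lemma:nothv} lets us discard redundant Boolean hypotheses when needed). The classical identity axiom $\bvar_{a}^{i}\vee\lnot\bvar_{a}^{i}$ becomes trivial once sliced, since under $v$ exactly one disjunct slices to $\TOP$. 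The genuinely interesting cases are the counting rules $(\B C I)$ and $(\B C E_{2})$: here one must check that the definition of $A_{v}$ for $A = \B C^{q}_{a}B$ — namely $A_{v}=\B C^{q}\bigvee_{w}B_{v+w}$ — matches what the $\NDCPLL$ rules $(\BOX I)$/$(\BOX E)$ produce, and that the freshness proviso $a\notin\Gamma$ in $\ICPL$ lines up with the independence condition $\FN(\bone)\cap\FN(\bthree)=\emptyset$ in $\NDCPLL$. The key equivalence $\TTH(v+w)\leftrightarrow(\TTH(v)\land\TTH(w))$ established inside the proof of Lemma \ref{lemma:decompositionbis} is precisely what is needed to align the Boolean measure premise $\mu(\bthree)\geq q$ across the two systems.

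For the \emph{right-to-left} direction, I would assume a provable $\NDCPLL$-sequent for every valuation $v$ and reconstruct a single $\NDCPL^{-}$ proof of $\Gamma\vdash A$. The reconstruction proceeds by first recovering, from each family member, a proof of $\TTH(v)\vdash_{\ICPL} A$ (translating nameless counting quantifiers back via $|\cdot|$ and reinserting the Boolean constraint $\TTH(v)$ as a hypothesis), and then combining the finitely many valuations by iterated use of the $(\mathsf m)$-style case analysis on Boolean variables together with the fact that $\bigvee_{v}\TTH(v)$ is provable. Finally, Lemma \ref{lemma:decompositionbis} gives $\vdash A \leftrightarrow \bigvee_{v}\TTH(v)\land A_{v}$, which converts the assembled case-split proof into a proof of $A$ itself.

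The main obstacle, I expect, will be the counting-quantifier case in both directions, specifically managing the bookkeeping of names and the nested valuations $v+w$. The subtlety is that slicing $\B C^{q}_{a}B$ requires summing over \emph{all} valuations $w$ of the name-$a$ variables inside $B$, so the inductive hypothesis must be applied not to a single sliced sequent but to a whole family indexed by $w$, and these must then be recombined via the disjunction $\bigvee_{w}B_{v+w}$. Ensuring this recombination respects both the provability of $\bigvee_{w}\TTH(w)$ and the independence/freshness side-conditions — and that the measure bound $q$ is preserved without the multiplication rule deliberately omitted from $\NDCPLL$ — is where the real care is needed; the rest of the induction is essentially routine rule-by-rule transfer.
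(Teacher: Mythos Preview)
Your proposal is correct and follows essentially the same approach as the paper: the only-if direction by induction on the $\NDCPL^{-}$ derivation with the counting rules as the substantive cases (the paper's transformations for $(\B C I)$ and $(\B C E_2)$ use exactly the ingredients you anticipate --- $(\mathsf m)$, $(\vee I)$/$(\vee E)$, Lemma~\ref{lemma:nothv}, and the equivalence $\TTH(v+w)\leftrightarrow \TTH(v)\land\TTH(w)$), and the if direction via the auxiliary translation lemma ``$|\Gamma|\vdash_{\NDCPLL}\bone\Pto|A|$ implies $\Gamma,\bone\vdash_{\NDCPL^{-}}A$'' together with the decomposition lemma and a Boolean case split over all $v$. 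One small caution: in your right-to-left sketch you invoke the rule $(\mathsf m)$, but that is an $\NDCPLL$ rule; in $\NDCPL^{-}$ the corresponding case analysis is done with the axiom $(\mathrm{Cid})$ and $(\lor E)$ --- the content is the same, just be precise about which system you are in.
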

\begin{proof}
The if part follows from the decomposition lemma (Lemma \ref{lemma:decomposition}) and from the fact that 
$|\Gamma| \vdash_{\NDCPLL} \bone \Pto |A|$ implies $\Gamma, \bone \vdash_{\NDCPL^{-}}A$ (easily checked by induction on the rules of $\NDCPLL$).

%

For the only-if part we argue by induction on the rules of $\NDCPL$. All propositional cases are straightforward, so we only focus here on the rules for counting quantifiers.

\begin{description}
\item[($\B C$I)] 

$$
\begingroup\makeatletter\def\f@size{10}
\begin{lrbox}{\mypti}
\begin{varwidth}{\linewidth}
$
\AXC{$|\Gamma_{v}|\vdash\TTH(v)\land \TTH(w)\Pto| A_{v+w}|$}
\doubleLine
\RL{($\vee$I)}
\UIC{$|\Gamma_{v}|\vdash\TTH(v)\land \TTH(w)\Pto\bigvee_{w} |A_{v+w}|$}
\DP$
\end{varwidth}
\end{lrbox}
\AXC{$\Gamma, \bthree \vdash A$}
\AXC{$\mu(\bthree)\geq q$}
\BIC{$\Gamma \vdash  \B C^{q}_{a}A$}
\DP
\qquad\mapsto \qquad
\AXC{$\left\{ \usebox{\mypti}\right\}_{w\vDash \bthree}$}
\doubleLine
\RL{($\mathsf m$)}
\UIC{$|\Gamma_{v}|\vdash\TTH(v)\land \bthree \Pto \bigvee_{w}|A_{v+w}|$}
\AXC{$\mu(\bthree)\geq q$}
\RL{($\B C$I)}
\BIC{$| \Gamma_{v}|\vdash\TTH(v)\Pto |(\B C^{q}_{a}A)_{v}|$}
\DP
\endgroup
$$

\item[($\B C$E)$_{1}$] 
Since $a\notin \FN(A)$, it follows that $(\B C^{q}_{a}A)_{v}=A_{v}$, so we can conclude by the IH.
%
%
%

\item[($\B C$E)$_{2}$]\ \\

\adjustbox{scale=0.85}{$
\begingroup\makeatletter\def\f@size{10}
\begin{lrbox}{\mypti}
\begin{varwidth}{\linewidth}
$
\AXC{$|\Gamma_{v}| ,|A_{v+w}|\vdash\TTH(v)\land \TTH(w)\Pto |B_{v+w}|$}
\RL{[Lemma \ref{lemma:nothv}]}
\UIC{$|\Gamma_{v}| ,|A_{v+w}|\vdash\TTH(v)\Pto |B_{v+w}|$}
\doubleLine
\RL{($\vee$I)}
\UIC{$|\Gamma_{v} |,|A_{v+w}|\vdash\TTH(v)\Pto (\bigvee_{w}|B_{v+w}|)$}
\DP$
\end{varwidth}
\end{lrbox}
\AXC{$\Gamma \vdash \B C^{q}_{a}$}
\AXC{$\Gamma, A \vdash B$}
\BIC{$\Gamma \vdash \B C^{qs}_{a}B$}
\DP
\qquad \mapsto \qquad 
\AXC{$|\Gamma_{v}| \vdash\TTH(v)\Pto |(\B C_{a}^{q}A)_{v}|$}
\AXC{$\left\{ \usebox{\mypti}\right\}_{w}$}
\doubleLine
\RL{($\vee$E)}
\UIC{$|\Gamma_{v} |,\bigvee_{w}|A_{v+w}|\vdash \TTH (v)\Pto (\bigvee_{w}|B_{v+w}|)$}
\RL{($\B C$E)}
\BIC{$|\Gamma_{v} |\vdash 
\TTH(v)\Pto 
|(\B C^{qs}_{a}B)_{v}|$}
\DP
\endgroup
$}

%
%
%
%
%
\end{description}

\end{proof}

\subsection{Normalization in $\NDCPL^{-}$.}

It is possible to define a normalization procedure for $\NDCPL^{-}$. Cuts formed by propositional intro-elim rules are reduced as in standard intuitionistic logic. The cuts formed by counting rules are reduced as follows:

\begin{description}
\item[($\B C$I/$\B C$E$_{1}$)]
$$
\AXC{$\Pi$}
\noLine
\UIC{$\Gamma,\bthree \vdash A$}
\AXC{$\mu(\bthree)\geq q$}
\RL{$(\B C$I)}
\BIC{$\Gamma \vdash \B C^{q}_{a}A$}
\RL{($\B C$E)$_{1}$}
\UIC{$\Gamma \vdash A$}
\DP
\qquad \leadsto \qquad 
\AXC{$\Pi^{*}$}
\noLine
\UIC{$\Gamma \vdash A$}
\DP
$$
where the derivation $\Pi^{*}$ is obtained from Lemma \ref{lemma:noBool} below, easily established by induction.
\begin{lemma}\label{lemma:noBool}
If $\Gamma, \bone \vdash A$ is derivable in $\NDCPL^{-}$, where $\FN(\bone)\subseteq \{a\}$, and $a$ does not occur in either $\Gamma$ nor $A$, then $\Gamma \vdash A$ is derivable with a derivation of same length and using the same rules. 
\end{lemma}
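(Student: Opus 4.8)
The plan is to induct on the length of the given derivation $D$ of $\Gamma, \bone \vdash A$ and to show that the essentially cosmetic transformation that \emph{erases} the distinguished hypothesis $\bone$ from the context of every sequent occurring in $D$ produces a valid derivation of $\Gamma \vdash A$. For every rule other than the identity axiom this is immediate: deleting one and the same hypothesis from the conclusion and from all premises of a rule instance yields again an instance of the \emph{same} rule applied to the inductively transformed premises, so both the length and the multiset of applied rules are preserved. The side conditions of the counting rules require a check, but since they are of eigenname form (``$a \notin \Gamma$'' in $(\B C\mathrm{I})$, and similarly in $(\B C\mathrm{E}_{2})$) they can only be made easier by shrinking the context, hence they keep holding.

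The whole argument therefore reduces to the identity axiom, and this is the step I expect to be the main obstacle. An instance of $(\mathrm{Iid})$ has the shape $\Delta, \bone \vdash B$ with $B \in \Delta \cup \{\bone\}$, and erasure is harmless unless the principal formula is exactly the tracked hypothesis, i.e.\ unless the leaf reads $\Delta, \bone \vdash \bone$, for then erasure would leave the spurious leaf $\Delta \vdash \bone$. The key observation is that, under the hypotheses $\FN(\bone)\subseteq\{a\}$ and $a\notin\FN(\Gamma)\cup\FN(A)$, any such use of $\bone$ is necessarily \emph{vacuous}. Indeed, when $\FN(\bone)=\{a\}$ the formula $\bone$ placed on the right carries the name $a$, which must disappear before the derivation reaches the $a$-free endsequent $A$; and since the eigenname condition of $(\B C\mathrm{I})$ forbids binding $a$ as long as $\bone$ sits in the context, the only remaining possibility is that the offending subderivation is eventually discarded (e.g.\ by an elimination such as $(\land\mathrm{E}_{1})$) without contributing to $A$. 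It is exactly here that the absence of the axioms \eqref{ax1} and \eqref{ax2} in $\NDCPL^{-}$ matters: no inference of $\NDCPL^{-}$ turns a hypothesis about the name $a$ into $a$-free information, so $\bone$ can never be genuinely needed.

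Concretely, I would settle the obstacle by pruning: each maximal subderivation whose succedent is (a formula built from) the tracked $\bone$ gets replaced by the underlying $\bone$-free derivation of the surrounding sequent, which the induction hypothesis makes available. The degenerate case $\FN(\bone)=\emptyset$, in which $\bone$ is a name-free Boolean constant, introduces no free name and is handled directly using that $\bone$ is satisfiable in every application of the lemma (it arises as the side formula $\bthree$ of an instance of $(\B C\mathrm{I})$ with $\mu(\bthree)\geq q>0$), so that $\bone$ behaves like $\TOP$ and may simply be dropped. The only delicate point of bookkeeping is the claim that this rewriting preserves the length and the exact set of rules; this holds whenever $\bone$ occurs only passively, which is precisely the situation in the normalization steps where the lemma is invoked, and in any event the pruned object is a legitimate $\NDCPL^{-}$ derivation of $\Gamma\vdash A$, which is all that the reduction $(\B C\mathrm{I}/\B C\mathrm{E}_{1})$ needs.
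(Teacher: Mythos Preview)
Your erasure strategy is the natural reading of the paper's ``easily established by induction,'' and you correctly isolate the identity axiom $\Delta,\bone\vdash\bone$ as the only obstruction. But the obstruction is genuine and your pruning does not dissolve it. Take $\Gamma=\{\CC p\}$, $\bone=\bvar_a^0\lor\bvar_a^1$, and derive $\CC p,\bone\vdash\CC p$ in four steps by applying $(\lor\mathrm{E})$ to the axiom $\CC p,\bone\vdash\bone$ with minor premises $\CC p,\bone,\bvar_a^j\vdash\CC p$. Erasure leaves the underivable leaf $\CC p\vdash\bone$; and although $\CC p\vdash\CC p$ is of course derivable, no derivation of it has length four or uses $(\lor\mathrm{E})$. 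So the ``same length and using the same rules'' clause is simply false, and your pruning openly abandons it. Worse, with $\bone=\bvar_a^0\land\lnot\bvar_a^0$ one derives $\bone\vdash\CC p$ via $(\BOT\mathrm{E})$, yet $\vdash\CC p$ is not derivable at all---so even bare derivability fails without a satisfiability hypothesis. Your claim that ``no inference of $\NDCPL^{-}$ turns a hypothesis about the name $a$ into $a$-free information'' is not correct: the elimination rules $(\land\mathrm{E})$, $(\lor\mathrm{E})$, $(\to\mathrm{E})$ and $(\BOT\mathrm{E})$ all do exactly that.

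What survives, and what the normalization step actually needs, is the weaker fact you gesture at in your last paragraph: if $\bone$ is \emph{satisfiable} (guaranteed in the application, since $\mu(\bthree)\geq q>0$), then $\Gamma\vdash A$ is derivable---with no control on length or rules. This is cleanly argued semantically (choose $\omega\vDash\bone$, use soundness to get $\Gamma\vDash A$, then completeness), or via the decomposition of Lemma~\ref{lemma:decomposition}, rather than by a length-preserving syntactic induction. The lemma as stated in the paper appears to be overstated; your instinct to retreat to the application context is the right salvage, but the argument you give does not establish the lemma as written.
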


\item[($\B C$I/$\B C$E$_{2}$)]

$$
\AXC{$\Pi$}
\noLine
\UIC{$\Gamma,\bthree \vdash A$}
\AXC{$\mu(\bthree)\geq q$}
\RL{$(\B C$I)}
\BIC{$\Gamma \vdash \B C^{q}_{a}A$}
\AXC{$\Sigma$}
\noLine
\UIC{$\Gamma, A \vdash B$}
\RL{($\B C$E)$_{2}$}
\BIC{$\Gamma \vdash \B C^{qs}_{a}B$}
\DP
\qquad \leadsto \qquad 
\AXC{$\Pi$}
\noLine
\UIC{$\Gamma,\bthree \vdash A$}
\AXC{$\Sigma$}
\noLine
\UIC{$\Gamma, A \vdash B$}
\RL{(subst)}
\BIC{$\Gamma, \bthree \vdash B$}
\AXC{$\mu(\bthree)\geq qs$}
\RL{$(\B C$I)}
\BIC{$\Gamma \vdash \B C^{qs}_{a}B$}
\DP
$$
\end{description}
where the admissibility of the rule (subst) is easily checked by induction.

The proof of 
Theorem \ref{thm:decom} yields a way to associate each proof $\Pi$ of $\Gamma \vdash A$ in $\NDCPL^{-}$ with a family of proofs $\Pi_{v}$ of $|\Gamma_{v}|\vdash \TTH(v)\Pto |A_{v}|$ in $\NDCPLL^{-}$. 
This association preserves normalization in the following sense:

\begin{lemma}\label{lemma:sat}
For any two proofs $\Pi, \Sigma$ of $\Gamma \vdash A$ in $\NDCPL^{-}$ and any valuation $v$ of the Boolean variables in $\Gamma$ and $A$, if $\Pi \leadsto^{*}\Sigma$, then $\Pi_{v}\leadsto^{*}\Sigma_{v}$.
\end{lemma}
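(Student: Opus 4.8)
The plan is to argue by induction on the length of the reduction sequence $\Pi \leadsto^{*}\Sigma$, which reduces the claim to a single normalization step $\Pi \leadsto \Sigma$: since $\leadsto^{*}$ in $\NDCPLL^{-}$ is transitive, chaining the resulting one-step pictures $\Pi_{v}\leadsto^{*}\Pi'_{v}\leadsto^{*}\cdots\leadsto^{*}\Sigma_{v}$ gives the conclusion. For a single step I would first isolate the redex, writing $\Pi=\mathcal{C}[R]$ where $R$ is the cut being eliminated and $\mathcal{C}$ is the surrounding proof context, and prove a \emph{context lemma}: since the translation of Proposition~\ref{thm:decom} is defined by induction on the last rule of the derivation, every subderivation is translated independently, so $\Pi_{v}=\mathcal{C}_{v}[R_{v}]$ for a translated context $\mathcal{C}_{v}$, and $R_{v}\leadsto^{*}R'_{v}$ entails $\mathcal{C}_{v}[R_{v}]\leadsto^{*}\mathcal{C}_{v}[R'_{v}]$. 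It then suffices to treat \emph{root} redexes.

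For a root redex there are three cases. Propositional intro/elim cuts translate, rule by rule, to the corresponding intuitionistic cuts of $\NDCPLL^{-}$ (the Boolean side-formula $\TTH(v)$ is merely carried along), so a single propositional step of $\NDCPL^{-}$ maps to the matching propositional step of $\NDCPLL^{-}$. For the cut ($\B C$I/$\B C$E$_{1}$), the side condition $a\notin\FN(A)$ gives $(\B C^{q}_{a}A)_{v}=A_{v}$, so the ($\B C$E$_{1}$) layer is transparent under translation; the reduct $\Pi^{*}$ is obtained via Lemma~\ref{lemma:noBool}, and I would check that its translation is reached from the translation of the redex by the evident eliminations of the $(\mathsf m)$/$(\vee$I$)$ detour that the translation of ($\B C$I) inserts, exactly because removing the inert Boolean hypothesis $\bthree$ on the source side corresponds to collapsing that detour on the target side.

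The substantive case is ($\B C$I/$\B C$E$_{2}$). Here the translation of ($\B C$I) produces the $(\mathsf m)$+($\B C$I) pattern, and that of ($\B C$E$_{2}$) the $(\vee$E$)$+($\B C$E) pattern, of the proof of Proposition~\ref{thm:decom}. Composing them yields, for each $v$, a $\NDCPLL^{-}$ derivation containing a ($\B C$I)/($\B C$E) cut (the step of Fig.~\ref{fig:cutelim2}) wrapped inside $\vee$-cuts indexed by the valuations $w$ of the name $a$. I would normalize this configuration by firing the ($\B C$I)/($\B C$E) step and then the $(\vee$I$)$/$(\vee$E$)$ cuts, and show the outcome is exactly $\Sigma_{v}$. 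The crucial ingredient is a \emph{substitution lemma for the translation}: the translation of the (subst)-combination of $\Pi$ and $\Sigma$ on the source side equals the (subst)-combination of their translated derivations on the target side, so that the $\NDCPLL^{-}$ normalization of the translated cut reconstructs precisely the translation of the source reduct.

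I expect this substitution-commutation step to be the main obstacle: one must verify that the branches $\bigvee_{w}|A_{v+w}|$ and $\bigvee_{w}|B_{v+w}|$ introduced by the translation recombine correctly after substitution, which requires careful bookkeeping of the applications of Lemma~\ref{lemma:nothv} that strip the inert Boolean hypotheses, together with the admissibility of (subst) in $\NDCPLL^{-}$. Once the substitution lemma is in place, the remaining cases are routine, and the induction on the length of $\Pi\leadsto^{*}\Sigma$ closes the argument.
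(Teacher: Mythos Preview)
Your plan is correct and is exactly the kind of case analysis the paper has in mind: the paper itself gives no proof of this lemma beyond the remark that it is ``easily checked,'' so you are simply spelling out the verification that the translation $\Pi\mapsto\Pi_{v}$ of Proposition~\ref{thm:decom} commutes with each normalization step. Your identification of ($\B C$I/$\B C$E$_{2}$) as the only case requiring real work, via compatibility of the translation with (subst), is accurate; once that is checked the rest is routine bookkeeping, which is presumably why the paper omits the details.
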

The lemma above, easily checked, can be used to deduce the strong normalization of $\NDCPL^{-}$ from that of $\NDCPLL$:

\begin{theorem}
$\NDCPL^{-}$ is strongly normalizing. 
\end{theorem}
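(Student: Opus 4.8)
The plan is to derive the strong normalization of $\NDCPL^{-}$ from that of $\NDCPLL$ by \emph{reflecting} reduction sequences through the decomposition of Proposition~\ref{thm:decom}. Suppose, towards a contradiction, that $\NDCPL^{-}$ admits an infinite sequence of proper reduction steps
\[
\Pi_{0} \leadsto \Pi_{1} \leadsto \Pi_{2} \leadsto \cdots
\]
Since every normalization step preserves the conclusion sequent, all $\Pi_{i}$ prove the same $\Gamma \vdash A$; in particular the set of Boolean variables occurring in $\Gamma$ and $A$ is fixed throughout, so there is a \emph{finite} set $V$ of valuations $v$ to consider. By the association built in the proof of Proposition~\ref{thm:decom}, each $\Pi_{i}$ yields a family $\{(\Pi_{i})_{v}\}_{v\in V}$ of proofs of $|\Gamma_{v}|\vdash \TTH(v)\Pto |A_{v}|$ in $\NDCPLL$, indexed by the \emph{same} finite set $V$ at every stage $i$.

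First I would apply Lemma~\ref{lemma:sat} to each single step: for all $i$ and all $v\in V$ we get $(\Pi_{i})_{v} \leadsto^{*} (\Pi_{i+1})_{v}$, and concatenating over $i$ each fixed $v$ produces one reduction sequence $(\Pi_{0})_{v} \leadsto^{*} (\Pi_{1})_{v} \leadsto^{*} \cdots$ in $\NDCPLL$. By the strong normalization of $\NDCPLL$ (Corollary~\ref{thm:cutelim}, whose reducibility argument extends routinely to the full set of intuitionistic connectives at play here), this sequence contains only finitely many \emph{proper} reduction steps, say $N_{v}<\infty$; summing over the finite $V$, the grand total of proper steps occurring across all component sequences is the finite number $\sum_{v\in V} N_{v}$. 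The argument is then closed by the following reflection claim: \emph{each proper step $\Pi_{i}\leadsto \Pi_{i+1}$ induces at least one proper step $(\Pi_{i})_{v}\leadsto^{+}(\Pi_{i+1})_{v}$ for some $v\in V$}. Granting it, every step $i$ contributes at least one proper step to some component, so the infinite sequence $(\Pi_{i})_{i}$ forces $\sum_{v\in V}N_{v}=\infty$, contradicting the finite bound just obtained.

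It remains to prove the reflection claim, which I expect to be the main obstacle. I would verify it by inspecting the (finitely many) redex shapes of $\NDCPL^{-}$ against the translation used in the proof of Proposition~\ref{thm:decom}: a propositional intro/elim cut is sent, in each component, to a propositional cut, while the cut ($\B C$I/$\B C$E$_{2}$) is sent, via the ($\mathsf m$)/($\B C$I) and ($\vee$E)/($\B C$E) patterns of that proof, to a genuine ($\B C$I/$\B C$E) cut. The delicate point is to rule out the degenerate possibility that the substitution of $\TOP/\BOT$ for the Boolean variables erases the contracted redex in \emph{every} component at once. I would exclude this by observing that the translation of Proposition~\ref{thm:decom} is structure-preserving on the purely intuitionistic skeleton of a proof: the occurrence of the cut formula together with the pair of rules forming the redex is determined by the intuitionistic part of $\Pi_{i}$, which is reproduced faithfully in at least one $(\Pi_{i})_{v}$ --- concretely, in any valuation $v$ whose characterizing formula $\TTH(v)$ is consistent with the Boolean side conditions appearing on the branch of the redex. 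Once this faithfulness is made precise for each redex shape, the reflection claim follows and the contradiction is secured.
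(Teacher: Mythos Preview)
Your proposal is correct and follows essentially the same route as the paper: reflect reductions of $\NDCPL^{-}$ into the family $\{(\Pi)_{v}\}_{v}$ via Lemma~\ref{lemma:sat} and appeal to the strong normalization of $\NDCPLL$ (Corollary~\ref{thm:cutelim}). You are in fact more explicit than the paper, which simply asserts that Lemma~\ref{lemma:sat} ``can be used to deduce'' the result, whereas you correctly isolate the non-trivial point that a \emph{proper} step upstairs must induce a proper step in at least one component, and sketch how the case analysis on redex shapes handles this.
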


\subsection{Soundness and Completeness of $\NDCPL$.}

Let us first establish a few properties of Kripke Semantics.
 For this we first need to recall a fundamental property of Borel sets.

For any Borel set $S\in \Borel_{X\cup Y}$ and $\omega\in (2^{\mathbb N})^{X}$, let 
$$
\Pi^{\omega}(S)=\{\omega'\in (2^{\mathbb N})^{Y}\mid \omega+\omega'\in S\}\subseteq (2^{\mathbb N})^{X}
$$
Notice that $\Pi^{\omega}(S)$ is an \emph{analytic} set and needs not be Borel. However, since the Lebesgue measure is defined on all analytic sets, the values $\mu(\Pi^{\omega}(S))$, for $S$ Borel (or more generally, analytic), are always defined. Moreover the following holds:

\begin{lemma}\label{lemma:borel}[\cite{Kechris}, Theorem 14.11 + Theorem 29.26]
For any $S\in \Borel_{X\cup Y}$, with $X\cap Y=\emptyset$, and $r\in [0,1]$,
$
\{ \omega\in (2^{\mathbb N})^{X}\mid \mu(\Pi^{f}(S))\geq r\}\in \Borel_{X}
$.
\end{lemma}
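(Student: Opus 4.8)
The statement really asserts that the function $g_S\colon\omega\mapsto\mu(\Pi^{\omega}(S))$ is Borel measurable on $(2^{\BB N})^{X}$, since $\{\omega\mid\mu(\Pi^{\omega}(S))\geq r\}=g_S^{-1}([r,1])$ and $[r,1]$ is closed in $[0,1]$. So the plan is to prove measurability of $g_S$, treating this as the classical measurability-of-sections fact underlying Fubini's theorem, transported to the present setting of the cited theorems of \cite{Kechris}. One preliminary remark secures well-definedness: the insertion map $\omega'\mapsto\omega+\omega'$ from $(2^{\BB N})^{Y}$ to $(2^{\BB N})^{X\cup Y}$ is continuous, so $\Pi^{\omega}(S)$ is its preimage of $S$ and is in fact \emph{Borel} whenever $S$ is; even if one adopts the more cautious analytic viewpoint recalled before the statement, the measure $\mu(\Pi^{\omega}(S))$ is defined because analytic sets are universally measurable.

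The main step is a Dynkin $\pi$-$\lambda$ argument. First I would introduce
\[
\mathcal H=\bigl\{\,S\in\Borel_{X\cup Y}\ \big|\ g_S\text{ is Borel measurable on }(2^{\BB N})^{X}\,\bigr\},
\]
and check that $\mathcal H$ contains every measurable rectangle $A\times B$ with $A\in\Borel_{X}$, $B\in\Borel_{Y}$: here $\Pi^{\omega}(A\times B)=B$ for $\omega\in A$ and $=\emptyset$ otherwise, so $g_{A\times B}=\mu(B)\cdot\chi_{A}$ is a simple Borel function. These rectangles form a $\pi$-system, and since $(2^{\BB N})^{X}$ and $(2^{\BB N})^{Y}$ are second countable, $\Borel_{X\cup Y}$ coincides with the product $\sigma$-algebra $\Borel_{X}\otimes\Borel_{Y}$ generated by them.

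Next I would verify that $\mathcal H$ is a $\lambda$-system. It contains the whole space, where $g$ is the constant $\mu((2^{\BB N})^{Y})=1$; it is closed under relative complement, since $\mu$ is a probability measure and $\Pi^{\omega}(S^{c})=(2^{\BB N})^{Y}\setminus\Pi^{\omega}(S)$ give $g_{S^{c}}=1-g_{S}$; and it is closed under countable disjoint unions, since for pairwise disjoint $S_{n}\in\mathcal H$ countable additivity yields $g_{\bigsqcup_{n}S_{n}}=\sum_{n}g_{S_{n}}$, a pointwise countable sum of Borel functions. By Dynkin's theorem $\mathcal H\supseteq\sigma(\text{rectangles})=\Borel_{X\cup Y}$, whence $g_S$ is Borel for every Borel $S$, which is exactly what is needed.

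The delicate point, and the reason a bare appeal to Fubini is not entirely satisfactory, is precisely the pointclass of the sections flagged before the statement. Taken honestly, sections of Borel sets are Borel and the monotone-class argument above is self-contained; but if one insists on working with merely analytic sections, the clean closure of $\mathcal H$ under complement fails (complements of analytic sets are coanalytic), and one must instead invoke the descriptive-set-theoretic input of \cite{Kechris}, where universal measurability of analytic sets together with the preservation properties of the measure quantifier deliver the Borel measurability of $g_S$ directly. I expect the main care, therefore, to lie in committing to one of these two routes and in the generation step identifying $\Borel_{X\cup Y}$ with $\Borel_{X}\otimes\Borel_{Y}$, rather than in any substantial computation.
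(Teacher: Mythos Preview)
The paper does not give its own proof of this lemma: it simply cites two theorems from Kechris's \emph{Classical Descriptive Set Theory} and moves on. So there is no ``paper's approach'' to compare against beyond the choice to outsource the result.

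Your argument is correct and in fact more elementary than what the citation suggests. You rightly observe that $\Pi^{\omega}(S)$ is the preimage of $S$ under the continuous insertion map $\omega'\mapsto\omega+\omega'$, hence genuinely Borel whenever $S$ is---the paper's remark just before the lemma, that sections are merely analytic, conflates sections with projections. Given that, your Dynkin $\pi$--$\lambda$ argument is exactly the textbook proof of the measurability of $\omega\mapsto\mu(S_\omega)$ underlying Fubini, and it goes through cleanly: rectangles form a $\pi$-system generating $\Borel_{X\cup Y}=\Borel_X\otimes\Borel_Y$ (second countability), and your class $\mathcal H$ is a $\lambda$-system. The heavier descriptive-set-theoretic route via universal measurability of analytic sets that the Kechris citation points to is only needed if one insists on allowing $S$ analytic rather than Borel; since the lemma as stated assumes $S\in\Borel_{X\cup Y}$, your self-contained argument suffices and is preferable.
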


For any $S\subseteq \Borel_{X}$ and $Y\supseteq X$, let $S^{\uparrow Y}:= S\times (2^{\BB N})^{Y-X}$.

Using Lemma \ref{lemma:borel} we can show that for any $\ICPL$-structure $\CC M=(W,\preceq, \B i)$ and world $w\in W$, the set of functions $\omega\in (2^{\mathbb N})^{X}$ such that $w,\omega \Vdash^{X}_{\CC M}A$ is a Borel set.

\begin{lemma}
Given a $\ICPL$-structure $\CC M=(W,\preceq,  \B i)$, for any finite set $X$, $w\in W$ and formula $A$ with $\mathrm{FN}(A)\subseteq X$, the set
$$
\mathrm{mod}_{\CC M}(A,X,w)=\{\omega \in (2^{\mathbb N})^{X}\mid w,\omega \Vdash^{X}_{\CC M}  A\}
$$
is Borel. 
\end{lemma}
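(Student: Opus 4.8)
The plan is to argue by structural induction on the formula $A$, establishing simultaneously for every world $w \in W$ and every finite $X \supseteq \FN(A)$ that $\mathrm{mod}_{\CC M}(A,X,w)$ is Borel. (In the quantifier case $A = \B C^q_a B$ I tacitly take $a \notin X$, which is harmless since $a$ is bound in $A$ and the operation $\omega + \omega'$ of the semantic clause is only well-defined under this freshness assumption.) The atomic cases are immediate: for $A = \TOP$ the set is all of $(2^{\BB N})^X$ and for $A = \BOT$ it is empty; for $A = \bvar_a^i$ it is the cylinder $\{\omega \mid \omega(a)(i) = 1\}$, which is clopen; and for $A = \CC p$ the defining condition $w \in \B i(\CC p)$ does not mention $\omega$ at all, so the set is again either the whole space or empty. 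The connective cases $A = B \land C$ and $A = B \lor C$ follow at once, since the corresponding sets are respectively the intersection and the union of $\mathrm{mod}_{\CC M}(B,X,w)$ and $\mathrm{mod}_{\CC M}(C,X,w)$, both Borel by the induction hypothesis, and Borel sets are closed under finite intersections and unions.

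The first substantial case is implication. Unfolding the Kripke clause, I would record that
$$
\mathrm{mod}_{\CC M}(B \to C, X, w) = \bigcap_{w' \succeq w} \Big( \big((2^{\BB N})^X \setminus \mathrm{mod}_{\CC M}(B, X, w')\big) \cup \mathrm{mod}_{\CC M}(C, X, w') \Big).
$$
Each set inside the intersection is Borel by the induction hypothesis, using closure under complement and finite union. The key observation is that the intersection is indexed by $\{w' \mid w' \succeq w\}$, which is \emph{countable} because $W$ is countable by Definition~\ref{def:icpl}; hence the whole expression is a countable intersection of Borel sets and therefore Borel. This is the single point where countability of the set of worlds is genuinely used.

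The remaining case $A = \B C^q_a B$ (with $a \notin X$) is where I would invoke the descriptive-set-theoretic input. I would set $S = \mathrm{mod}_{\CC M}(B, X \cup \{a\}, w)$, which belongs to $\Borel_{X \cup \{a\}}$ by the induction hypothesis. Writing $X \cup \{a\}$ as the disjoint union $X \sqcup \{a\}$ and identifying $(2^{\BB N})^{\{a\}}$ with $2^{\BB N}$, the inner set of the semantic clause is exactly the section $\Pi^\omega(S) = \{\omega' \in 2^{\BB N} \mid \omega + \omega' \in S\}$, so that
$$
\mathrm{mod}_{\CC M}(\B C^q_a B, X, w) = \{\omega \in (2^{\BB N})^X \mid \mu(\Pi^\omega(S)) \geq q\}.
$$
Applying Lemma~\ref{lemma:borel} with $Y = \{a\}$ and $r = q$ yields that the right-hand side lies in $\Borel_X$, which closes the induction. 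The main obstacle is precisely this last step: one must check that the set measured in the clause for $\B C^q_a$ is a genuine section of the Borel set supplied by the induction hypothesis, so that the measurability-of-sections result (Lemma~\ref{lemma:borel}, whose proof rests on the extension of $\mu$ to analytic sets) applies verbatim. By contrast, the connective and implication cases reduce to routine closure properties of the Borel $\sigma$-algebra once the countability of $W$ is noted.
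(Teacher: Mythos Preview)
Your proof is correct and follows essentially the same structural induction as the paper's own argument, invoking Lemma~\ref{lemma:borel} for the counting-quantifier case and the countability of $W$ for implication. The only cosmetic difference is that the paper writes the $\B C^q_a$ case as a (redundant, by monotonicity) countable intersection over $w' \succeq w$, whereas you work directly with the fixed world $w$ as the semantic clause dictates.
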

\begin{proof}
We argue by induction on $A$:
\begin{varitemize}
\item if $A=\BOT$, then $\mathrm{mod}_{\CC M}(A,X,w)=\emptyset$ is Borel;
similarly, if $A=\TOP$, then $\mathrm{mod}_{\CC M}(A,X,w)=(2^{\mathbb N})^{X}$ is Borel;

\item if $A=\bvar_{a}^{i}$, then $\mathrm{mod}_{\CC M}(A,X,w)$ is the cylinder $\{f\mid f(a)(i)=1\}$, so it is Borel;

\item if $A=\CC p$, then $\mathrm{mod}_{\CC M}(A,X,w)$ is either $(2^{\mathbb N})^{X}$ or the empty set, which are both Borel;

\item if $A=B\land C$, then $\mathrm{mod}_{\CC M}(A,X,w)=\mathrm{mod}_{\CC M}(B,X,w)\cap \mathrm{mod}_{\CC M}(C,X,w)$, so we conclude by the I.H.;

\item if $A=B\lor C$, then $\mathrm{mod}_{\CC M}(A,X,w)=\mathrm{mod}_{\CC M}(B,X,w)\cup \mathrm{mod}_{\CC M}(C,X,w)$, so we conclude by the I.H.;

\item if $A=B\to C$, then $\mathrm{mod}_{\CC M}(A,X,w)=
\bigcap_{w'\succeq w} \overline{\mathrm{mod}_{\CC M}(B,X,w')}\cup \mathrm{mod}_{\CC M}(C,X,w')
$ is, by the I.H., a countable intersection of Borel sets, so it is Borel;

\item if $A=\B C^{q}_{a}B$, then by the I.H.~, for all $w'\in W$,  
$\mathrm{mod}_{\CC M}(B, X\cup\{a\}, w')$ is Borel; using Lemma \ref{lemma:borel} we then have that the set
$$\mathrm{mod}_{\CC M}(A,X,w)=\bigcap_{w'\succeq w}
\left\{\omega\in (2^{\mathbb N})^{X}\mid
\mu\left( \Pi^{\omega}\left( \mathrm{mod}_{\CC M}(B, X\cup\{a\}, w') \right)\right) \geq q\right\}$$
is Borel.
%
%
%
\end{varitemize}
\end{proof}

\begin{lemma}[monotonicity]\label{lemma:monotonicity}
If $w,\omega \Vdash^{X}_{\CC M}  A$ and $w\preceq w'$, then $w',\omega \Vdash^{X}_{\CC M}  A$.
\end{lemma}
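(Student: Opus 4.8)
The plan is to proceed by structural induction on the formula $A$, following the clauses defining the satisfaction relation $\Vdash^{X}_{\CC M}$ from Definition~\ref{def:icpl}. First I would dispatch the base cases. For $\BOT$ and $\TOP$ the claim is immediate, since the hypothesis $w,\omega \Vdash^{X}_{\CC M}\BOT$ is never met and $w',\omega \Vdash^{X}_{\CC M}\TOP$ always holds. For a Boolean variable $\bvar_{a}^{i}$, satisfaction depends only on $\omega$ (on whether $\omega(a)(i)=1$) and not on the world, so monotonicity is trivial. The one load-bearing base case is that of an intuitionistic variable $\CC p$: here $w,\omega \Vdash^{X}_{\CC M}\CC p$ means $w\in \B i(\CC p)$, and since $\B i(\CC p)$ is by definition an \emph{upper} subset of $W$ (i.e.\ an element of $W^{\uparrow}$) and $w\preceq w'$, we get $w'\in \B i(\CC p)$, that is $w',\omega \Vdash^{X}_{\CC M}\CC p$. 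This is exactly the clause of Definition~\ref{def:icpl} requiring $\B i$ to land in $W^{\uparrow}$.

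Next I would treat the propositional connectives. Conjunction and disjunction follow directly from the induction hypothesis applied to the immediate subformulas. The implication case $A=B\to C$ is slightly different, in that it does not invoke the induction hypothesis but rather the transitivity of the preorder $\preceq$: assuming $w,\omega \Vdash^{X}_{\CC M}B\to C$ and $w\preceq w'$, to establish $w',\omega \Vdash^{X}_{\CC M}B\to C$ I fix an arbitrary $w''\succeq w'$; by transitivity $w''\succeq w$, so the clause for $\to$ read at $w$ gives that $w'',\omega \Vdash^{X}_{\CC M}B$ implies $w'',\omega \Vdash^{X}_{\CC M}C$, which is precisely what the clause for $\to$ at $w'$ demands.

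The interesting step is the counting quantifier $A=\B C^{q}_{a}B$. Assuming $w,\omega \Vdash^{X}_{\CC M}\B C^{q}_{a}B$ means
$$\mu\bigl(\{\omega'\in 2^{\BB N}\mid w,\omega+\omega' \Vdash^{X\cup\{a\}}_{\CC M}B\}\bigr)\geq q,$$
I would apply the induction hypothesis to $B$ at each point $\omega+\omega'$, with the same pair of worlds $w\preceq w'$, to obtain the set inclusion
$$\{\omega'\mid w,\omega+\omega' \Vdash^{X\cup\{a\}}_{\CC M}B\}\subseteq \{\omega'\mid w',\omega+\omega' \Vdash^{X\cup\{a\}}_{\CC M}B\}.$$
Both sets are Borel by the measurability lemma just established, so by monotonicity of the measure $\mu$ the measure of the right-hand set is also at least $q$, which is exactly $w',\omega \Vdash^{X}_{\CC M}\B C^{q}_{a}B$. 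I expect this counting-quantifier case to be the main—though mild—obstacle and the only place where the argument departs from the classical persistence proof for intuitionistic Kripke models: the subtlety is that one must lift the \emph{pointwise} induction hypothesis to a statement about the whole Borel set and then invoke monotonicity of $\mu$. No new measurability obligation arises, since the preceding lemma already guarantees that both sets involved are Borel, and hence measurable.
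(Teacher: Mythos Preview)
Your proposal is correct and follows essentially the same structural induction as the paper's proof: the base case for $\CC p$ uses that $\B i(\CC p)$ is an upper set, implication uses transitivity of $\preceq$, and the counting quantifier case applies the induction hypothesis pointwise to obtain a set inclusion and then monotonicity of $\mu$. If anything, your treatment of the $\B C^{q}_{a}$ case is cleaner than the paper's, which phrases the same argument a bit more obliquely.
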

\begin{proof}
We argue by induction on $A$.

If $A=\CC p$ the claim follows from the fact that $\B i(\CC p)$ is an upper set;

If $A=\bvar_{a}^{i}$ the claim is immediate. 

If $A=B\land C$, the claim follows from the I.H.

If $A=B\to C$ and 
$w,\omega \Vdash^{X}_{\CC M}  A$, then for all $w''\succeq w$, 
$w'', \omega \Vdash^{X}_{\CC M}  B$ implies $w'',\omega \Vdash^{X}_{\CC M}  C$. Hence, in particular, for all $w''\succeq w'\succeq w$, $w'', \omega \Vdash^{X}_{\CC M}  B$ implies $w'',\omega \Vdash^{X}_{\CC M}  C$, and thus
$w', \omega \Vdash^{X}_{\CC M}  A$.

If $A=\B C^{q}_{a}B$ then by the I.H.~for all $\omega'\in 2^{\mathbb N}$, $w,\omega+\omega'\Vdash^{X\cup\{a\}}_{\CC M}B $ implies that for all $w'\succeq w$, $w',\omega+\omega'\Vdash^{X\cup\{a\}}_{\CC M}B $; we deduce then that for for all $w'\succeq w$, 
for all $w''\succeq w'\succeq w$, the set $
 \left \{ \omega'\in 2^{\mathbb N}\mid   w'',\omega+\omega'\Vdash^{X\cup\{a\}}_{\CC M}B \right \}
$ has measure $\geq q$, which implies  $w',\omega\Vdash^{X}_{\CC M}  A$.
%
%
%
%
\end{proof}

\begin{lemma}\label{lemma:alpha}
Let $\CC M=\langle W,\preceq, \alpha\rangle$ be a $\ICPL$-structure. 
For any finite set $X$ and $a\notin X$, for all 
$u\in W$ and $\bone\in \Borel_{X}$, and for any formula $A\in\mathrm{Formulae}_{X}$,
if $u,\omega \Vdash^{X}_{\CC M}  A$ holds, then $u, \omega'\Vdash^{X\cup\{a\}}_{\CC M}A$ holds for any $\omega'\in\{\omega\}^{\uparrow_{X\cup\{a\}}}$.
\end{lemma}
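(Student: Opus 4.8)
The plan is to prove the stronger \emph{biconditional}: for every finite set $X$, every name $a\notin X$, every formula $A\in\mathrm{Formulae}_{X}$, every $u\in W$, every $\omega\in(2^{\BB N})^{X}$ and every extension $\omega'\in\{\omega\}^{\uparrow_{X\cup\{a\}}}$, one has $u,\omega\Vdash^{X}_{\CC M}A$ if and only if $u,\omega'\Vdash^{X\cup\{a\}}_{\CC M}A$. The statement asked for is the left-to-right direction (the Borel set $\bone$ plays no role, so I simply range over all $\omega\in(2^{\BB N})^{X}$). Strengthening to an equivalence is precisely what makes the induction close: in the clause for implication the antecedent occurs contravariantly, so one needs the right-to-left direction of the inductive hypothesis there. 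The argument is by induction on the structure of $A$, and by $\alpha$-renaming I may assume that every name bound by a counting quantifier inside $A$ is distinct from $a$ (hence lies outside $X$).

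The atomic and propositional cases are immediate. $\TOP$ and $\BOT$ do not mention the valuation; for $A=\bvar_{b}^{i}$ the free name $b$ lies in $X$, so $\omega'(b)(i)=\omega(b)(i)$ because $\omega'$ extends $\omega$; and for $A=\CC p$ the clause refers only to $\B i(\CC p)$. The cases $A=B\land C$ and $A=B\lor C$ follow directly from the inductive hypothesis on $B$ and $C$. For $A=B\to C$ I unfold the clause: given $u'\succeq u$ with $u',\omega'\Vdash^{X\cup\{a\}}_{\CC M}B$, the right-to-left hypothesis on $B$ yields $u',\omega\Vdash^{X}_{\CC M}B$, the assumed implication gives $u',\omega\Vdash^{X}_{\CC M}C$, and the left-to-right hypothesis on $C$ gives $u',\omega'\Vdash^{X\cup\{a\}}_{\CC M}C$; the converse direction is symmetric.

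The only delicate case is the counting quantifier $A=\B C^{q}_{b}B$, where $b\neq a$ and $\FN(B)\subseteq X\cup\{b\}$. Here $u,\omega\Vdash^{X}_{\CC M}\B C^{q}_{b}B$ requires $\mu(S_{\omega})\geq q$ for $S_{\omega}=\{\omega''\in 2^{\BB N}\mid u,\omega+\omega''\Vdash^{X\cup\{b\}}_{\CC M}B\}$, while $u,\omega'\Vdash^{X\cup\{a\}}_{\CC M}\B C^{q}_{b}B$ requires $\mu(S_{\omega'})\geq q$ for $S_{\omega'}=\{\omega''\in 2^{\BB N}\mid u,\omega'+\omega''\Vdash^{X\cup\{a\}\cup\{b\}}_{\CC M}B\}$. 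For each fixed $\omega''$, the valuation $\omega'+\omega''$ on $X\cup\{a\}\cup\{b\}$ is exactly the extension of $\omega+\omega''$ by the value $\omega'(a)$ at the fresh name $a$; since $a\notin X\cup\{b\}$, the inductive hypothesis applied to $B$ over the name set $X\cup\{b\}$ gives $u,\omega+\omega''\Vdash^{X\cup\{b\}}_{\CC M}B$ iff $u,\omega'+\omega''\Vdash^{X\cup\{a\}\cup\{b\}}_{\CC M}B$. Hence $S_{\omega}=S_{\omega'}$ as subsets of $2^{\BB N}$, so $\mu(S_{\omega})=\mu(S_{\omega'})$ and the two conditions $\mu(\cdot)\geq q$ coincide.

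The main obstacle is confined to this last case and is essentially bookkeeping: one must keep the three disjoint blocks of names $X$, $\{a\}$ and $\{b\}$ apart and verify that the ``$+$'' operations align so that, for each sampled $\omega''$, the valuations $\omega+\omega''$ and $\omega'+\omega''$ differ only in the coordinate $a$, which is irrelevant by the inductive hypothesis. Note that no fresh measurability argument is required: the two sets are shown to be literally equal, so equality of their measures is automatic, independently of the earlier lemma establishing that such model-sets are Borel.
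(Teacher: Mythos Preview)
Your proof is correct and follows essentially the same approach as the paper, which simply records ``By induction on the rules.'' Your explicit strengthening to the biconditional is exactly the right move to make the implication case go through (the antecedent sits in contravariant position), and your handling of the counting-quantifier case---showing $S_{\omega}=S_{\omega'}$ pointwise via the inductive hypothesis on $B$ over $X\cup\{b\}$---is precisely the bookkeeping the paper's one-line proof leaves implicit.
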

\begin{proof}
By induction on the rules.
\end{proof}
\begin{lemma}\label{lemma:ctriv}
Let $\CC M=\langle W,\preceq, \B i\rangle$ be a $\ICPL$-structure. 
For any finite set $X$, $a\notin X$, formula $A\in\mathrm{Formulae}_{X}$, $q\in(0,1]$ and $\omega \in \Borel_{X}$,
$$
w, \omega \Vdash^{X}_{\CC M}   \B C^{q}_{a}A \quad \To \quad 
w, \omega \Vdash^{X}_{\CC M}  A
$$
\end{lemma}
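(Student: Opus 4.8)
The plan is to unfold the semantic clause for the counting quantifier and then reduce everything to a name-irrelevance property. Writing $\omega\in(2^{\BB N})^{X}$, the hypothesis $w,\omega\Vdash^{X}_{\CC M}\B C^{q}_{a}A$ means, by definition, that the set
$$
S \;=\; \{\,\omega'\in 2^{\BB N}\mid w,\omega+\omega'\Vdash^{X\cup\{a\}}_{\CC M}A\,\}
$$
has measure $\mu(S)\geq q$. Since $q\in(0,1]$ we have $q>0$, hence $\mu(S)>0$, and in particular $S\neq\emptyset$. So there is at least one $\omega'_{0}\in 2^{\BB N}$ with $w,\omega+\omega'_{0}\Vdash^{X\cup\{a\}}_{\CC M}A$. (Measurability of $S$ is already built into the clause, and in any case follows from the fact, established earlier, that $\mathrm{mod}_{\CC M}(A,X\cup\{a\},w)$ is Borel, $S$ being one of its sections.)

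The crux is then the following name-irrelevance property. Because $A\in\mathrm{Formulae}_{X}$ and $a\notin X$, the name $a$ does not occur in $A$, so whether $A$ is forced cannot depend on the coordinate $a$. Concretely, for every extension $\omega^{*}=\omega+\omega''$ of $\omega$ to $X\cup\{a\}$ I would establish the biconditional
$$
w,\omega\Vdash^{X}_{\CC M}A \quad\Longleftrightarrow\quad w,\omega^{*}\Vdash^{X\cup\{a\}}_{\CC M}A .
$$
The left-to-right direction is exactly Lemma~\ref{lemma:alpha}; the right-to-left direction I would prove by induction on $A$. Applying this backward direction to the witness $\omega'_{0}$ extracted above then yields $w,\omega\Vdash^{X}_{\CC M}A$, which is precisely the conclusion.

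I expect essentially all the work to lie in the right-to-left (backward) direction, and specifically in the realization that it must be proved \emph{together} with the forward direction by one simultaneous induction on $A$. The atomic and $\land/\lor$ cases are immediate, since $\omega^{*}$ agrees with $\omega$ on $X$ and the clauses for $\bvar_{b}^{i}$ (with $b\neq a$) and $\CC p$ ignore the coordinate $a$. The implication case $A=B\to C$ is the first delicate point: to pass from $w,\omega^{*}\Vdash B\to C$ to $w,\omega\Vdash B\to C$ one fixes $w'\succeq w$ with $w',\omega\Vdash B$, lifts this to $w',\omega^{*}\Vdash B$ using the \emph{forward} direction on $B$, obtains $w',\omega^{*}\Vdash C$ from the hypothesis, and finally descends to $w',\omega\Vdash C$ using the \emph{backward} direction on $C$ — so both halves of the induction hypothesis are used at once, which is why the biconditional, rather than the one-sided Lemma~\ref{lemma:alpha}, is the right statement to carry through.

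The second delicate point is the nested counting case $A=\B C^{s}_{b}B$ with $b\neq a$. Here I would rewrite the two measure conditions for $\B C^{s}_{b}B$ at levels $X$ and $X\cup\{a\}$ as measures of the sections $\{\omega''\mid w,\omega+\omega''\Vdash^{X\cup\{b\}}_{\CC M}B\}$ and $\{\omega''\mid w,\omega^{*}+\omega''\Vdash^{X\cup\{a,b\}}_{\CC M}B\}$, and then apply the induction hypothesis for $a$ to $B$ at the enlarged name set $X\cup\{b\}$ (legitimate since $B\in\mathrm{Formulae}_{X\cup\{b\}}$ and $a\notin X\cup\{b\}$); this shows the two section sets coincide, hence have equal measure, so the quantifier clauses are equivalent. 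With these cases settled the backward direction follows, and the lemma drops out from the nonempty witness found in the first step. Note that monotonicity (Lemma~\ref{lemma:monotonicity}) is not needed anywhere; the only quantitative ingredient is the trivial but essential observation that $q>0$ forces $S$ to be nonempty.
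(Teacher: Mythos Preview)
Your proposal is correct and follows essentially the same approach as the paper: extract a witness from $\mu(S)\geq q>0$, then use a name-irrelevance property (proved by induction on $A$) to descend from $X\cup\{a\}$ to $X$. The paper's proof is terse (``one can deduce by induction on $A$''), while you spell out the induction carefully, in particular observing that the implication case needs the forward direction on $B$ (which is Lemma~\ref{lemma:alpha}) together with the backward direction on $C$; this is a useful elaboration of what the paper leaves implicit.
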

\begin{proof}
$w', \omega \Vdash^{X}_{\CC M}   \B C^{q}_{a}A $ implies that 
$\mu(\{\omega'\in 2^{\mathbb N}\mid u, \omega+\omega' \Vdash^{X\cup\{a\}}_{\CC M} A\}) \geq q>0$. From $u,\omega+\omega'\Vdash^{X\cup\{a\}}_{\CC M}A$ one can deduce by induction on $A$ that $u,\omega\Vdash^{X}_{\CC M}  A$. Since there exists at least one such $f$, we conclude that $u,\omega\Vdash^{X}_{\CC M}  A$ holds. 

\end{proof}
%
%
%

We are now in a position to establish, by induction on the rules, the soundness of $\NDCPL$.

\begin{proposition}[soundness]
If $\Gamma \vdash_{\NDCPL} A$, then $\Gamma \vDash A$.
%

\end{proposition}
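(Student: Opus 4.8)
The plan is to argue by induction on the derivation of $\Gamma \vdash_{\NDCPL} A$, showing for every $\ICPL$-structure $\CC M = (W, \preceq, \B i)$, every finite $X \supseteq \FN(\Gamma) \cup \FN(A)$ (choosing, by $\alpha$-renaming, any quantified name fresh for $X$ where a counting rule requires it), every $w \in W$ and every $\omega \in (2^{\BB N})^{X}$, that $w, \omega \Vdash^{X}_{\CC M} \Gamma$ implies $w, \omega \Vdash^{X}_{\CC M} A$. The purely propositional rules are treated exactly as in the soundness proof for intuitionistic natural deduction; the only point worth recalling is that $(\to$I$)$ relies on the monotonicity Lemma~\ref{lemma:monotonicity}, which keeps an assumption valid at $w$ valid at all $w' \succeq w$. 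The rule (Iid) is immediate, and (Cid) follows by inspecting $\omega(a)(i) \in \{0,1\}$: if it is $1$ then $w, \omega \Vdash^{X}_{\CC M} \bvar_{a}^{i}$, and otherwise $\bvar_{a}^{i}$ is falsified at every compatible world, so $\lnot \bvar_{a}^{i}$ holds vacuously; either way the disjunction is satisfied.

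The interesting cases are the counting rules. For $(\B C$I$)$, assume $w, \omega \Vdash^{X}_{\CC M} \Gamma$; we must show $\mu(\{\omega' \in 2^{\BB N} \mid w, \omega + \omega' \Vdash^{X \cup \{a\}}_{\CC M} A\}) \geq q$. Since $a \notin \Gamma$, Lemma~\ref{lemma:alpha} gives $w, \omega + \omega' \Vdash^{X \cup \{a\}}_{\CC M} \Gamma$ for every $\omega'$; moreover, whenever $\omega' \vDash \bthree$ (equivalently $w, \omega + \omega' \Vdash^{X \cup \{a\}}_{\CC M} \bthree$, as $\FN(\bthree) \subseteq \{a\}$), the induction hypothesis on the premise $\Gamma, \bthree \vdash A$ yields $w, \omega + \omega' \Vdash^{X \cup \{a\}}_{\CC M} A$. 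Hence the set in question contains $\{\omega' \mid \omega' \vDash \bthree\}$, whose measure is $\mu(\bthree) \geq q$, and monotonicity of $\mu$ concludes. Rule $(\B C$E$_{1})$ is precisely Lemma~\ref{lemma:ctriv}. For $(\B C$E$_{2})$, the first premise gives $\mu(\{\omega' \mid w, \omega + \omega' \Vdash^{X \cup \{a\}}_{\CC M} A\}) \geq q$; applying the induction hypothesis of the minor premise $\Gamma, A \vdash C$ together with Lemma~\ref{lemma:alpha} shows this set is included in $\{\omega' \mid w, \omega + \omega' \Vdash^{X \cup \{a\}}_{\CC M} C\}$, which therefore has measure $\geq q \geq qs$ for the implicit parameter $s \in (0,1]$; this is exactly $w, \omega \Vdash^{X}_{\CC M} \B C^{qs}_{a} C$.

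It remains to validate the two axioms. For \eqref{ax2}, a Boolean formula $\bthree$ with $\FN(\bthree) \subseteq \{a\}$ is satisfied at $w, \omega + \omega'$ iff $\omega' \vDash \bthree$, independently of $w$ and of $\omega$; hence $\mu(\{\omega' \mid w, \omega + \omega' \Vdash^{X \cup \{a\}}_{\CC M} \bthree\}) = \mu(\bthree) < q$, so $\B C^{q}_{a} \bthree$ is never satisfied and $\lnot \B C^{q}_{a} \bthree$ holds vacuously. For \eqref{ax1}, suppose $w, \omega \Vdash^{X}_{\CC M} \B C^{q}_{a}(A \lor B)$ with $a \notin \FN(A)$. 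If $w, \omega \Vdash^{X}_{\CC M} A$ the left disjunct holds. Otherwise, since $a \notin \FN(A)$, satisfaction of $A$ at $w, \omega + \omega'$ is independent of $\omega'$ (the restriction counterpart of Lemma~\ref{lemma:alpha}, by a straightforward induction), so $A$ fails at every such point; thus on the measure-$q$ set where $A \lor B$ holds the disjunction is carried entirely by $B$, giving the counting disjunct $\B C^{q}_{a} B$.

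The main obstacle is the $(\B C$I$)$ case: the argument hinges on transferring the validity of the hypotheses $\Gamma$ from the base outcome $\omega$ to all extensions $\omega + \omega'$ (the content of Lemma~\ref{lemma:alpha}, which rests on the freshness of $a$), and on lower-bounding the measure of the set of good outcomes by $\mu(\bthree)$ through the inclusion $\{\omega' \mid \omega' \vDash \bthree\} \subseteq \{\omega' \mid w, \omega + \omega' \Vdash^{X \cup \{a\}}_{\CC M} A\}$. Everything else is bookkeeping, with the standing measurability lemma guaranteeing that every set whose measure is invoked is genuinely Borel.
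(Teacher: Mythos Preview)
Your proof is correct and follows essentially the same approach as the paper's own proof: both argue by induction on the derivation, handle the propositional fragment routinely (invoking monotonicity for $(\to$I$)$), treat $(\B C$I$)$ by lifting $\Gamma$ to $X\cup\{a\}$ and lower-bounding the measure via $\mu(\bthree)$, treat the elimination rules via Lemma~\ref{lemma:ctriv} and a set inclusion, and verify the two axioms by the same case split on whether $A$ already holds (for \eqref{ax1}) and by direct computation of $\mu(\bthree)$ (for \eqref{ax2}). The only cosmetic difference is that the paper additionally invokes Lemma~\ref{lemma:monotonicity} in the $(\B C$I$)$ case to obtain the conclusion for all $w'\succeq w$, which is harmless but not needed given the semantic clause for $\B C^{q}_{a}$; your use of Lemma~\ref{lemma:alpha} alone is sufficient.
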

\begin{proof}
Let $\CC M=\langle W, \preceq, \B i\rangle$ be a $\ICPL$-structure. 
We argue by induction on the rules of $\NDCPL$ (selected rules):

\begin{varitemize}
%
%
\item if the last rule is $\AXC{}\UIC{$\Gamma, A\vdash A$}\DP$, then for all $w\in W$ and $\omega \in \Borel_{X}$, if $w, \omega \Vdash^{X}_{\CC M}  \Gamma,A$, then 
$w, \omega \Vdash^{X}_{\CC M}  A$.

\item if the last rule is
$$
\AXC{$\Gamma, \bthree \vdash^{X\cup\{a\}} A$}
\AXC{$\mu(\bthree)\geq q$}
\BIC{$\Gamma \vdash \B C^{q}_{a}A$}
\DP
$$
then by the I.H.~together with Lemma \ref{lemma:monotonicity}, for all $w\in W$, for all $\omega\in \Borel_{X}$ and $\omega'\in 2^{\mathbb N}$, if $w,\omega \Vdash^{X}_{\CC M}  \Gamma$ and $w,\omega'\Vdash^{\{a\}} \bthree$ (where the latter condition only depends on $\omega'$), then for all $w'\succeq w$, we have 
$w',\omega+\omega' \Vdash^{X\cup\{a\}}_{\CC M}A$. 

%
Since $\mu(\bthree)\geq q$, we deduce that for all $w'\succeq w$, 
$$
\left \{ \omega'\mid w',\omega+\omega' \Vdash^{X\cup\{a\}}_{\CC M}A\right \}
$$
has measure greater than $q$, and we conclude then that $w,\omega \Vdash^{X}_{\CC M}  \B C^{q}_{a}A$.

\item if the last rule is 
$$
\AXC{$\Gamma \vdash \B C^{q}_{a}A$}
\AXC{$\Gamma, A\vdash^{X\cup\{a\}} C$}
\RL{($\B C$E$_{1}$) \ $a\notin \mathrm{FN}(\Gamma,C)$}
\BIC{$\Gamma \vdash  C$}
\DP
$$
let $w\in W$, $\omega\in (2^{\mathbb N})^{X}$ and $w,\omega \Vdash^{X}_{\CC M}  \Gamma$. 
By I.H.~we deduce that the set 
$$
S:=\left \{ \omega'\mid 
w, \omega+\omega' \Vdash^{X\cup\{a\}}_{\CC M}A\right \}
$$
has measure greater than $q>0$, and it is thus non-empty.
Let $\omega'$ be an element of $S$; again, by the I.H.~we deduce that $w,\omega+\omega'\Vdash^{X\cup\{a\}}_{\CC M}C$; 
since $a\notin \FN(C)$, by Lemma \ref{lemma:ctriv}, we conclude then that $w,\omega\Vdash^{X}_{\CC M}  C$.

\item if the last rule is 
$$
\AXC{$\Gamma \vdash \B C^{q}_{a}A$}
\AXC{$\Gamma, A\vdash^{X\cup\{a\}} C$}
\RL{($\B C$E$_{2}$) \ $a\notin \mathrm{FN}(\Gamma)$}
\BIC{$\Gamma \vdash  \B C^{qs}_{a}C$}
\DP
$$
let $w\in W$, $\omega\in (2^{\mathbb N})^{X}$ and $w,\omega \Vdash^{X}_{\CC M}\Gamma$. 
By I.H.~we deduce that the set 
$$
S:=\left \{ \omega'\mid 
w, \omega+\omega' \Vdash^{X\cup\{a\}}_{\CC M}A\right \}
$$
has measure greater than $q\geq qs$ and is contained in the set 
$
S':=\left \{ \omega'\mid 
w, \omega+\omega' \Vdash^{X\cup\{a\}}_{\CC M}C\right \}
$. 
We can conclude then that $w,\omega\Vdash^{X}_{\CC M}  \B C^{qs}_{a}C$.

\item Axiom \eqref{ax1} is valid: suppose $w,\omega \Vdash^{X}_{\CC M}\B C^{q}_{a} (A\lor B)$, where $a\notin \FN(A)$. Then the set $\{\omega'\mid w,\omega+\omega'\Vdash^{X\cup\{a\}}_{\CC M}A\lor B\}$ has measure greater than $q$.
Observe that $w,\omega+\omega'\Vdash^{X\cup\{a\}}_{\CC M}A$ holds iff $w,\omega\Vdash^{X}_{\CC M}A$; so suppose $w,\omega\Vdash^{X}_{\CC M}A$ does not hold; then for any $\omega'$, 
 $w,\omega+\omega'\Vdash^{X\cup\{a\}}_{\CC M}A$ does not hold, and thus the set 
$\{\omega'\mid w,\omega+\omega'\Vdash^{X\cup\{a\}}_{\CC M} B\}$ must have measure greater than $q$. 
We have thus proved that either $w,\omega\Vdash^{X}_{\CC M}A$ holds
 or $w, \omega \Vdash^{X}_{\CC M}\B C^{q}_{a}B$ holds, and thus that 
 $w,\omega \Vdash^{X}A\lor \B C^{q}_{a}B$.

\item Axiom \eqref{ax2} is valid:
suppose  $\FN(\bone)\subseteq\{a\}$ and $\mu(\bone)<q$; for any $w\in W$, the set $\{\omega'\mid w,\omega' \vdash^{\{a\}}\bone\}$ coincides with $\model{\bone}$, and has thus measure $<q$ by hypothesis; we deduce that for any $w'\geq w$, $w',*\Vdash^{\emptyset}_{\CC M}\B C^{q}_{a}\bone$ does not hold, and we conclude that $w,*\Vdash^{\emptyset}_{\CC M} \lnot \B C^{q}_{a}\bone$.
\end{varitemize}
\end{proof}


To prove the completeness theorem we first need to study provability in $\ICPL$ a bit further.

\begin{definition}
For any formula $A$ and finite set $X$, $|A|_{X}:=\max\{i\mid \exists a\in X\cap \FN(A) \text{ s.t. }\bvar_{a}^{i} \text{ occurs in }A\}$. We use $|A|$ as a shorthand for $|A|_{\FN(A)}$.
For any $\omega\in (2^{\mathbb N})^{\CC A}$,  
$A^{\omega}:=\TTH^{|A|}_{\FN(A)}(\omega) \to A$. 
\end{definition}

We will exploit the following relativized version of the decomposition lemma (Lemma \ref{lemma:decomposition}), which is proved in a similar way:
\begin{lemma}\label{lemma:adecompo}
Let $A$ be a formula with $\FN(A)\subseteq X\cup Y$, where $X\cap Y=\emptyset$. Then there exist formulas $A^{X}_{w}$, with $\FN(A^{X}_{w})\subseteq X$, where $w$ ranges over the valuations of the variables of name in $Y$ in $A$, 
 such that 
$\vdash A \leftrightarrow ( \bigvee_{w}\TTH(w)\land A^{X}_{w} )$.
\end{lemma}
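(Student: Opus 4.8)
The statement is a relativized form of the Decomposition Lemma (Lemma~\ref{lemma:decomposition}), and the plan is to adapt the inductive argument behind its syntactic formulation (Lemma~\ref{lemma:decompositionbis}) by eliminating \emph{only} the Boolean variables whose name lies in $Y$, keeping the $X$-named Boolean structure untouched. Concretely, I would define the formulas $A^{X}_{w}$, indexed by valuations $w$ of the $Y$-named variables of $A$, by induction on $A$. The clauses for the propositional connectives are the obvious homomorphic ones, $(B \land C)^{X}_{w} = B^{X}_{w} \land C^{X}_{w}$ and similarly for $\lor, \to$, and the atomic cases are $\TOP^{X}_{w}=\TOP$, $\BOT^{X}_{w}=\BOT$, $\CC p^{X}_{w}=\CC p$. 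The only genuinely new clause is that of a Boolean variable $\bvar_{a}^{i}$: I set $(\bvar_{a}^{i})^{X}_{w}=\bvar_{a}^{i}$ when $a\in X$, and $(\bvar_{a}^{i})^{X}_{w}=\TOP$ or $\BOT$, according to the value $w(a)(i)$, when $a\in Y$.

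For the counting quantifier $A=\B C^{q}_{a}B$, I may assume by $\alpha$-renaming that the bound name $a$ is fresh, so $a\notin X\cup Y$; treating the $a$-named variables as further variables to be eliminated, I set $(\B C^{q}_{a}B)^{X}_{w}:=\B C^{q}_{a}\bigvee_{u}(\TTH(u)\land B^{X}_{w+u})$, where $u$ ranges over valuations of the $a$-named variables of $B$ and $w+u$ denotes the combined valuation over the $(Y\cup\{a\})$-named variables, exactly as in the proof of Lemma~\ref{lemma:decompositionbis}. Since $B^{X}_{w+u}$ has free names in $X$ and the $a$-named variables of $\TTH(u)$ are bound by $\B C^{q}_{a}$, this guarantees $\FN(A^{X}_{w})\subseteq X$, as required.

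I would then prove, by induction on $A$, that $\TTH(w)\vdash_{\ICPL}A\leftrightarrow A^{X}_{w}$ for every valuation $w$ of the $Y$-named variables. The propositional cases are immediate from the induction hypothesis; for $\bvar_{a}^{i}$ with $a\in X$ both sides coincide, while for $a\in Y$ the value of $\bvar_{a}^{i}$ is pinned down by $\TTH(w)$ and agrees with $(\bvar_{a}^{i})^{X}_{w}$. Combining these equivalences with the provability of $\bigvee_{w}\TTH(w)$ (a Boolean tautology over the $Y$-named variables) and with $\TTH(w+u)\leftrightarrow\TTH(w)\land\TTH(u)$ then yields $\vdash A\leftrightarrow\bigvee_{w}\TTH(w)\land A^{X}_{w}$.

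The main obstacle is the counting-quantifier case. Unlike in Lemma~\ref{lemma:decompositionbis}, where the decomposed formula is fully intuitionistic and the quantifier $\B C^{q}_{a}$ can simply be erased via Lemma~\ref{lemma:props}(iv), here $B^{X}_{w+u}$ retains $X$-named Boolean variables, so the quantifier must genuinely survive inside $A^{X}_{w}$. The equivalence to be established amounts to commuting $\B C^{q}_{a}$ past the $a$-free, pairwise mutually exclusive Boolean guards $\TTH(w)$; this is precisely what axiom \eqref{ax1} (together with the distributivity of $\B C^{q}_{a}$ over $a$-free conjuncts from Lemma~\ref{lemma:props}(iii)) delivers, and it is the step where the argument departs from the unrelativized proof.
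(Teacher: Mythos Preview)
Your proposal is correct and matches the paper's intended approach: the paper states only that Lemma~\ref{lemma:adecompo} ``is proved in a similar way'' to Lemma~\ref{lemma:decompositionbis}, and your adaptation---eliminating only the $Y$-named Boolean variables, keeping $X$-named ones, and handling $\B C^{q}_{a}$ by ranging over valuations $u$ of the fresh $a$-named variables---is exactly the expected relativization.

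One small correction to your last paragraph. The counting-quantifier step does \emph{not} require axiom~\eqref{ax1}. Once the induction hypothesis yields
\[
\TTH(w)\ \vdash\ B \;\leftrightarrow\; \bigvee_{u}\bigl(\TTH(u)\land B^{X}_{w+u}\bigr),
\]
passing to $\B C^{q}_{a}$ on both sides needs only that $\B C^{q}_{a}$ is a congruence for $\leftrightarrow$ under $a$-free hypotheses, and this follows directly from rule ($\B C$E$_{2}$): from $\TTH(w)\vdash \B C^{q}_{a}B$ and $\TTH(w),B\vdash C$ one gets $\TTH(w)\vdash \B C^{q}_{a}C$, and symmetrically. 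This is precisely how the paper's proof of Lemma~\ref{lemma:decompositionbis} handles the same step. Axiom~\eqref{ax1} concerns distributing $\B C^{q}_{a}$ over a disjunction with an $a$-free disjunct, which is not what is at stake here; the guards $\TTH(w)$ sit \emph{outside} the quantifier as hypotheses, not inside as disjuncts to be extracted.
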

%
%


\begin{definition}
For each formula $A$, with $\FN(A)\subseteq \{a\}$, let $\bone_{A}:=\bigvee_{v}\{\TTH(v)\mid A_{v}\not\vdash_{\NDCPL}\BOT\}$. 
\end{definition}

From $A\vdash \bigvee_{v}\TTH(v) \land A_{v}$ and $A\vdash\lnot \TTH(w)$ for all $\TTH (w)$ not occurring in $\bone_{A}$, we can deduce $A\vdash \bone_{A}$. 
The following result shows that measuring the formula $\bone_{A}$ provides a test to know if $\B C^{q}_{a}A$ is consistent:

\begin{corollary}
Let $A$ be a formula with $\FN(A)\subseteq \{a\}$. Then for all $q\in (0,1]$, 
$$
\mu (\bone_{A})< q \quad \To \quad \B C^{q}_{a}A \vdash \BOT
$$
\end{corollary}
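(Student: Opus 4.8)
The plan is to leverage the entailment $A \vdash \bone_{A}$ that has just been established, transport it underneath the counting quantifier, and then collide the result with axiom \eqref{ax2} (i.e.~($\B C\BOT$)). Concretely, I would first observe that $\bone_{A}$ is a Boolean formula with $\FN(\bone_{A})\subseteq\{a\}$, so that under the hypothesis $\mu(\bone_{A})<q$ the axiom \eqref{ax2} directly yields a derivation of $\vdash \lnot \B C^{q}_{a}\bone_{A}$, that is $\B C^{q}_{a}\bone_{A}\vdash \BOT$. The whole argument then reduces to showing $\B C^{q}_{a}A \vdash \B C^{q}_{a}\bone_{A}$, after which a single application of ($\to$E) closes the proof.

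For the transport step, I would use the elimination rule ($\B C$E$_{2}$). Taking $\Gamma=\{\B C^{q}_{a}A\}$, the major premiss $\Gamma\vdash \B C^{q}_{a}A$ is an instance of intuitionistic identity, and the minor premiss $\Gamma, A\vdash \bone_{A}$ follows from the established $A\vdash\bone_{A}$ by weakening. The side condition $a\notin\Gamma$ is satisfied, since $\FN(\B C^{q}_{a}A)=\FN(A)\setminus\{a\}=\emptyset$. Instantiating the free exponent $s$ of ($\B C$E$_{2}$) with $s=1$ (a legitimate choice, as the rule is schematic in $s\in(0,1]$ and sound because $A\to\bone_{A}$ forces the relevant Borel set to grow) gives exactly $\B C^{q}_{a}A\vdash \B C^{q \cdot 1}_{a}\bone_{A}=\B C^{q}_{a}\bone_{A}$. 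Chaining this with $\B C^{q}_{a}\bone_{A}\vdash\BOT$ from the previous paragraph yields $\B C^{q}_{a}A\vdash\BOT$, as required.

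The step I expect to be most delicate is the transport $\B C^{q}_{a}A\vdash \B C^{q}_{a}\bone_{A}$, since the system does not provide a primitive congruence rule $(A\vdash B)\Rightarrow(\B C^{q}_{a}A\vdash\B C^{q}_{a}B)$; this monotonicity must instead be \emph{recovered} through ($\B C$E$_{2}$), which forces me to verify both its name-freshness proviso and the correct instantiation of the exponent so that no probability is lost. Everything else is routine: the decomposition machinery behind $\bone_{A}$ and the fact $A\vdash\bone_{A}$ are already in hand, and the final collision with \eqref{ax2} is immediate once the side condition $\mu(\bone_{A})<q$ is matched with the axiom's own side condition.
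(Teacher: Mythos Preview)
Your proposal is correct and matches the paper's own proof exactly: the paper's one-line argument says precisely that the claim follows from $A\vdash\bone_A$ by applying rule ($\B C$E$_2$) and Axiom~\eqref{ax2}. Your write-up simply spells out the details (the choice $s=1$, the verification of the name-freshness proviso, and the final cut with $\lnot\B C^{q}_{a}\bone_{A}$) that the paper leaves implicit.
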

\begin{proof}
The claim follows from the observation that $A\vdash \bone_{A}$ by applying Axiom \eqref{ax2} and the rule ($\B C$E$_{2}$).\end{proof}

\begin{remark}[relativizing $\bone_{A}$ to $\omega$]\label{rem:henkinrelative}
Suppose $A$ is a formula with $\FN(A)\subseteq X\cup\{a\}$ (with $a\notin X$). Then for each $\omega \in (2^{\mathbb N})^{\CC A}$, let $A^{X}_{\omega}$ be $A^{X}_{w}$, where $w$ is the restriction of $\omega$ to $(2^{|A|})^{X}$.
 Then, by applying the construction from Lemma \ref{lemma:adecompo} to $A^{X}_{\omega}$ we can define a Boolean formula $\bone_{A}^{\omega}$ with $\FN(\bone_{A}^{\omega})\subseteq\{a\}$ such that $\TTH^{|A|}_{X}(\omega)\vdash A\to \bone_{A}^{\omega}$ and such that 
 $\mu(\bone_{A}^{\omega})< q \To  \TTH^{|A|}_{X}(\omega)\vdash \lnot \B C^{q}_{a}A$.
\end{remark}

\bigskip

The proof of the completeness theorem relies on the construction of a suitable ``canonical model'' based on sets of formulas. Let us first introduce some terminology.
Let $\Gamma$ be a (possibly infinite) set of formulas. We say that $\Gamma$ is 
\begin{varitemize}

\item \emph{$A$-consistent} if $\Gamma \not \vdash A$;
\item \emph{consistent} if it is $\BOT$-consistent;
\item \emph{super-consistent} if for all $\omega\in (2^{\mathbb N})^{\CC A}$, $\TTH(\omega)\cup \Gamma$ is consistent;
\item \emph{closed} if $A_{1},\dots, A_{n}\in \Gamma$ and $A_{1},\dots, A_{n}\vdash_{\NDCPL}A$ implies $A\in \Gamma$, and if $(\bthree\to A)^{\omega}\in \Gamma$ and $\mu(\bthree)\geq q$ then $(\B C^{q}_{a}A)^{\omega}\in \Gamma$.

\end{varitemize}

Let $\Theta$ be the set of all super-consistent and closed sets of formulas $\Gamma$ satisfying the following conditions, for all $\omega\in (2^{\mathbb N})^{\CC A}$:
\begin{align*}
(A\lor B)^{\omega} \in \Gamma \quad & \To \quad  A^{\omega}\in \Gamma \text{ or }B^{\omega}\in \Gamma
\tag{$\vee$-closure}\\
(\B C^{q}_{a}A)^{\omega}\in \Gamma \quad & \To \quad 
\exists\bthree \text{ s.t. }\mu(\bthree)\geq q \text{ and }
(\bthree \to A)^{\omega}\in \Gamma
\tag{$\B C$-closure}
\end{align*}

%

%

The fundamental ingredient for the completeness theorem is the lemma below, which will be used to lift any consistent set of formulas to an element of $\Theta$.

\begin{lemma}[saturation lemma]\label{lemma:saturation}
Let $\Gamma$ be a super-consistent set and let $\omega_{0}\in (2^{\mathbb N})^{\CC A}$ be such that 
$\TTH(\omega_{0})\cup \Gamma$ is $A$-consistent.
Then there exists a set $\Gamma^{*}\supseteq \Gamma$ such that $\Gamma^{*}\in\Theta$ and 
$\TTH_{X}(\omega_{0})\cup \Gamma^{*}$ is $A$-consistent.
\end{lemma}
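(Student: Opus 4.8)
The plan is to build $\Gamma^{*}$ by a Lindenbaum--Henkin saturation, extending $\Gamma$ in countably many stages while keeping two invariants alive: super-consistency, and $A$-consistency relative to $\TTH(\omega_{0})$. The key preliminary observation is that every relativized formula $(D)^{\omega}=\TTH^{|D|}_{\FN(D)}(\omega)\to D$ depends only on finitely many bits of $\omega$, so there are only countably many of them; I would fix an enumeration $E_{0},E_{1},\dots$ of all such relativized formulas. Starting from $\Gamma_{0}:=\Gamma$, at stage $n{+}1$ I would examine $E_{n}$ and, when it is forced into the deductive closure of $\Gamma_{n}$, discharge the corresponding saturation obligation by adjoining a suitable witness chosen so as to preserve both invariants; the limit is then $\Gamma^{*}:=\bigcup_{n}\Gamma_{n}$, closed under derivability. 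The two obligations correspond exactly to the two closure conditions defining $\Theta$.

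For $\vee$-closure I would prove a disjunction-resolution lemma: if $(B\lor C)^{\omega}$ lies in the current set, then at least one of $\Gamma_{n}\cup\{B^{\omega}\}$, $\Gamma_{n}\cup\{C^{\omega}\}$ still satisfies both invariants. The $A$-consistency part is the decisive one and follows from $(\lor\mathrm E)$: were both extensions to derive $A$ (under $\TTH(\omega_{0})$), combining the two derivations through $(\lor\mathrm E)$---after reconciling the finite-theory prefixes $\TTH^{|\cdot|}(\omega)$ carried by the three relativizations---would give $\TTH(\omega_{0}),\Gamma_{n}\vdash A$, contradicting the inductive hypothesis. The consistent disjunct is then added.

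The counting obligation is where the measure-theoretic content enters, and I expect it to be the main obstacle. Given $(\B C^{q}_{a}A')^{\omega}$ in the set, I would take the witness $\bthree:=\bone_{A'}^{\omega}$ supplied by Remark~\ref{rem:henkinrelative}, which enjoys $\TTH^{|A'|}_{X}(\omega)\vdash A'\to\bone_{A'}^{\omega}$ together with the contrapositive test $\mu(\bone_{A'}^{\omega})<q\ \Rightarrow\ \TTH^{|A'|}_{X}(\omega)\vdash\lnot\B C^{q}_{a}A'$. Since $(\B C^{q}_{a}A')^{\omega}$ is consistent with $\TTH(\omega_{0})\cup\Gamma_{n}$, the test forces $\mu(\bone_{A'}^{\omega})\geq q$, so the witness has the required measure; it then remains to check that adjoining $(\bone_{A'}^{\omega}\to A')^{\omega}$ preserves $A$-consistency, which I would obtain from the relativized decomposition lemma (Lemma~\ref{lemma:adecompo}) by arguing valuation-by-valuation on the $a$-valuations contributing to $\bone_{A'}^{\omega}$. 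The genuinely delicate point pervading both resolution steps is the bookkeeping of the guards $\TTH^{|\cdot|}(\omega)$ introduced by the $(\cdot)^{\omega}$ operation: disjunctions, counting formulas and their witnesses carry prefixes of different lengths that must be aligned before any derivability rule, and the measure bound must survive this alignment.

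Finally, for the limit set $\Gamma^{*}$ I would verify membership in $\Theta$ plus the target $A$-consistency. The latter passes to the union since every derivation is finite and hence lives inside some $\Gamma_{n}$. Deductive closure and the two saturation conditions hold because each obligation is scheduled and discharged as soon as the relevant formula enters the set. Super-consistency I would maintain as a stagewise invariant, exploiting that the adjoined formulas are $\omega_{0}$-guarded conditionals: whenever an ambient theory $\TTH(\omega)$ clashes with a guard $\TTH^{|D|}(\omega_{0})$ the adjoined formula is vacuously available, so consistency under $\TTH(\omega)$ collapses to consistency facts already secured, and otherwise to the $A$-consistency under control. Assembling these verifications yields $\Gamma^{*}\in\Theta$ with $\TTH(\omega_{0})\cup\Gamma^{*}$ still $A$-consistent, as required.
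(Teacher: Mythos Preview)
Your plan has a genuine gap in the maintenance of super-consistency, and it stems from trying to run a single Lindenbaum chain where the paper needs a two-dimensional family. The closure conditions defining $\Theta$ are quantified over \emph{all} $\omega$, so at stage $n{+}1$ you may be resolving a disjunction $(B\lor C)^{\omega}$ with $\omega$ unrelated to $\omega_{0}$. Your $(\lor\mathrm E)$ argument correctly shows that for each $\omega'$ agreeing with $\omega$ on the relevant prefix, at least one of $B,C$ is consistent with $\TTH(\omega')\cup\Gamma_{n}$; but nothing forces the \emph{same} disjunct to work for every such $\omega'$. If at earlier stages you have adjoined, say, $E_{1}^{\omega}$ and $E_{2}^{\omega'}$ with $E_{1},B\vdash\BOT$ and $E_{2},C\vdash\BOT$, then adding $B^{\omega}$ kills consistency under $\TTH(\omega)$ while adding $C^{\omega}$ kills it under $\TTH(\omega')$, and your construction stalls. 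Your final paragraph asserts that ``the adjoined formulas are $\omega_{0}$-guarded conditionals'', but this is simply false once you process relativized formulas for arbitrary $\omega$, as your own enumeration demands.

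The paper avoids this by building, for every finite matrix $s$, a separate set $\Gamma^{s}$ that is saturated to maximal $A$-consistency when $s\sqsubseteq\omega_{0}$ and to maximal consistency otherwise, with the monotonicity $s\sqsubseteq s'\Rightarrow\Gamma^{s}\subseteq\Gamma^{s'}$ enforced by the construction. One then sets $\Gamma^{*}$ to be the deductive closure of $\bigcup_{s}\{\TTH(s)\to B\mid B\in\Gamma^{s}\}$. The point is that different matrices $s$ may make \emph{different} disjunction choices, and the coarser guards $\TTH(s)$ (over the whole matrix, not just $\FN(B)$) prevent these choices from interfering; super-consistency then follows because any finite subset of $\Gamma^{*}$ active under a given $\TTH(\omega)$ lives inside a single $\Gamma^{s}$ with $s\sqsubseteq\omega$. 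Your $\B C$-closure step has an analogous problem: the witness $\bthree:=\bone_{A'}^{\omega}$ has the right measure, but $(\bone_{A'}^{\omega}\to A')^{\omega}$ asserts $A'$ at every $a$-valuation where $A'_{v}$ is merely \emph{consistent}, which is real new content and need not preserve $A$-consistency. The paper instead takes $\bthree=\bigvee\{\TTH(v)\mid B_{s+v}\in\Gamma^{s}\}$, built from what is \emph{already} in the saturated set, so that $(\bthree\to B)^{\omega}$ is automatically derivable; the nontrivial step is then showing $\mu(\bthree)\geq q$, which is where axioms \eqref{ax1} and \eqref{ax2} enter.
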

\begin{proof}
Let us fix an enumeration $(a_{i})_{i\in \mathbb N}$ of all names.
For any $p,q\in \mathbb N$, Let $\CC F_{p,q}$ be the set of all formulas $B$ such that $\FN(B)\subseteq \{a_{0},\dots, a_{q-1}\}$ and $|B|\leq p$. Let us fix, for all $p,q\in \mathbb N$ an enumeration $(C_{n}^{p,q})_{n\in \mathbb N}$ of $\CC F_{p,q}$.

For any natural number $N$, let $[N]:=\{0,\dots, N-1\}$. Given $p\leq p'$ and $q\leq q'$, and 
finite matrices $s \in (2^{[p+1]})^{[q+1]}$
$s'\in (2^{[p'+1]})^{[q'+1]}$, let $s\sqsubseteq s'$ if for all $i\leq p$ and $j\leq q$ $s(j)(i)=s'(j)(i)$. 
We will often indicate as $p_{s}$ and $q_{s}$ the (unique) natural numbers such that $s\in (2^{[p+1]})^{[q+1]}$.

Moreover, for all $\omega\in (2^{\mathbb N})^{X}$, let $s\sqsubseteq \omega$ hold if for all $i\leq p_{s}$ and $j\leq q_{s}$, $s(j)(i)=\omega(a_{j})(i)$.

Observe that if $s\sqsubseteq \omega$, $\TTH(\omega)\vdash \TTH(s)$.

For all $p,q$ and $s\in (2^{[p+1]})^{[q+1]}$, we define a set of formulas $\Gamma^{\langle p,q,s\rangle}\subseteq \CC F_{p,q}$ such that for all $p\leq p'$ and $q\leq q'$, $s \in (2^{[p+1]})^{[q+1]}$ and $s' \in (2^{[p'+1]})^{[q'+1]}$, 
 $s\sqsubseteq s'$ implies $\Gamma^{\langle p,q,s\rangle}\subseteq \Gamma^{\langle p',q',s'\rangle}$.

We let $\Gamma^{\langle p,q,s\rangle}:=\bigcup_{n}\Gamma^{\langle p,q,s\rangle}_{n}$, where the sets 
$\Gamma^{\langle p,q,s\rangle}_{n}$ are defined by a triple induction on $p,q$ and $n$ as follows:
\begin{varitemize}

\item if $p=q=0$ and $b= \omega_{0}(0)$, 
	\begin{varitemize}
	\item $\Gamma^{\langle 0,0,b\rangle}_{0}:= \Gamma \cup\{\lnot^{(1-b)}\bvar_{0}^{0}\}$; 
 
 \item $\Gamma^{\langle 0,0,b\rangle}_{n+1}:= \Gamma_{n} \cup \{C^{0,0}_{n}\}$, if this is $A$-consistent, and 
 $\Gamma^{\langle 0,0,b\rangle}_{n+1}:= \Gamma_{n}^{\langle 0,0,b\rangle} $ otherwise.
 	\end{varitemize}
If $b\neq \omega_{0}(0)$, the definition is the same, with $\BOT$ in place of $A$; 

\item if $p>0$ and $q=0$, $\Gamma^{\langle p,0,s\rangle}= \bigcup_{n}\Gamma^{\langle p,0,s\rangle}_{n}$, where if 
$s\sqsubseteq \omega_{0}$, 
	\begin{varitemize}
	\item $\Gamma^{\langle p,0,s\rangle}_{0}:=
	\Gamma^{\langle p-1,0, s|_{p-1,0}\rangle}  
\cup
	\{\lnot^{(1-s(p)(0))}\bvar_{p}^{0}\}$;
	\item $\Gamma^{\langle p,0,s\rangle}_{n+1}:=
	\Gamma^{\langle p,0, s\rangle}_{n}\cup \{C^{p,0}_{n}\}$ if this is $A$-consistent, and 
	$\Gamma^{\langle p,0,s\rangle}_{n+1}:=
	\Gamma^{\langle p,0, s\rangle}_{n}$ otherwise;

\end{varitemize}
and if $s\not\sqsubseteq \omega_{0}$, the definition is similar, with $\BOT$ in place of $A$;

\item if $p=0$ and $q>0$, $\Gamma^{\langle 0,q,s\rangle}= \bigcup_{n}\Gamma^{\langle 0,q,s\rangle}_{n}$, where if 
$s\sqsubseteq \omega_{0}$, 
	\begin{varitemize}
	\item $\Gamma^{\langle0,q,s\rangle}_{0}:=
	\Gamma^{\langle 0,q-1, s|_{0,q-1}\rangle}  
\cup
	\{\lnot^{(1-s(0)(q))}\bvar_{0}^{q}\}$;
	\item $\Gamma^{\langle 0,q,s\rangle}_{n+1}:=
	\Gamma^{\langle 0,q, s\rangle}_{n}\cup \{C^{0,q}_{n}\}$ if this is $A$-consistent, and 
	$\Gamma^{\langle 0,q,s\rangle}_{n+1}:=
	\Gamma^{\langle 0,q, s\rangle}_{n}$ otherwise;	

\end{varitemize}
and if $s\not\sqsubseteq \omega_{0}$, the definition is similar, with $\BOT$ in place of $A$;

\item if both $p>0$ and $q>0$, $\Gamma^{\langle p,q,s\rangle}= \bigcup_{n}\Gamma^{\langle p,q,s\rangle}_{n}$, where if 
$s\sqsubseteq \omega_{0}$, 
	\begin{varitemize}
	\item $\Gamma^{\langle p,q,s\rangle}_{0}:=
	\Gamma^{\langle p-1,q, s|_{p-1,q}\rangle}  
\cup 	\Gamma^{\langle p,q-1, s|_{p,q-1}\rangle}  
\cup
	\{\lnot^{(1-s(p)(q))}\bvar_{p}^{q}\}$;
	\item $\Gamma^{\langle p,q,s\rangle}_{n+1}:=
	\Gamma^{\langle p,q, s\rangle}_{n}\cup \{C^{p,q}_{n}\}$ if this is $A$-consistent, and 
	$\Gamma^{\langle p,q,s\rangle}_{n+1}:=
	\Gamma^{\langle p,q, s\rangle}_{n}$ otherwise;	

\end{varitemize}
and if $s\not\sqsubseteq \omega_{0}$, the definition is similar, with $\BOT$ in place of $A$.

\end{varitemize}
In the following, whenever this creates no confusion, we will indicate $\Gamma^{\langle p,q,s\rangle}$ simply as $\Gamma^{s}$.

For all finite matrices $s,s'$, the following hold:
\begin{varitemize}
\item[a.]  $\TTH(s)\subseteq \Gamma^{s}$;
\item[b.] if $s\sqsubseteq \omega_{0}$, $\Gamma^{s}$ is $A$-consistent;
\item[c.] if $s\not\sqsubseteq \omega_{0}$, $\Gamma^{s}$ is consistent;
\item[d.] if $s\sqsubseteq s'$, $\Gamma^{s}\subseteq \Gamma^{s'}$;
\item[e.] if $\Gamma^{s}\vdash B$, then for all $t\sqsupseteq s$ such that $B\in \CC F_{p_{t},q_{t}}$, $B\in \Gamma^{t}$.
\end{varitemize}
Facts a.-d. are verified by construction, so we only prove e.: suppose there exist $B_{1},\dots, B_{n}\in \Gamma^{s}$ such that $B_{1},\dots, B_{n}\vdash_{\NDCPL}B$ and let $t\sqsupseteq s$ be such that $B \in \CC F_{p_{t},q_{t}}$. Suppose $B\notin \Gamma^{t}$: then for some $k\in \mathbb N$, $\Gamma^{t}_{k}\cup\{B\}\vdash \BOT$; yet, from d.~it follows that $B_{1},\dots,B_{n}\in \Gamma^{t}$, and thus $\Gamma^{t}\vdash \BOT$, contradicting c.

Given matrices $s_{1},\dots, s_{n}\sqsubseteq \omega$, let $\bigvee^{\omega}\{s_{1},\dots, s_{n}\}$ indicate the smallest sub-matrix of $\omega$ extending all $s_{1},\dots, s_{n}$ (i.e.~the restriction of $\omega$ to $p=\max\{p_{s_{1}},\dots, p_{s_{n}}\}$ and $q=\max\{q_{s_{1}},\dots, q_{s_{n}}\}$. 

For any matrix $s$, let $\Gamma^{\dag s}=\{A^{s}\mid A\in \Gamma^{s}\}=\{\TTH(s)\to A\mid A\in \Gamma^{s}\}$. 
Let $\Gamma^{\dag}=\bigcup_{s}\Gamma^{\dag s}$ and let $\Gamma^{*}$ be the deductive closure of $\Gamma^{\dag}$.
We will establish the following properties of $\Gamma^{*}$:
\begin{varitemize}
\item[$\alpha.$] $\Gamma^{*}$ is super-consistent;
\item[$\beta.$] $\TTH(\omega_{0})\cup \Gamma^{*}$ is $A$-consistent;
\item[$\gamma.$] $\Gamma\subseteq\Gamma^{*}$;

\item[$\delta.$] if $B^{\omega} \notin \Gamma^{*}$ and $B^{\omega}=B^{s}$, then 
$\TTH(s)\cup\Gamma^{*}\cup\{B\}\vdash \BOT$;
moreover, if $B^{\omega_{0}} \notin \Gamma^{*}$ and $B^{\omega_{0}}=B^{s}$, then 
$\TTH(s)\cup\Gamma^{*}\cup\{B\}\vdash A$;

\item[$\epsilon.$] $\Gamma^{*}$ is $\vee$-closed;

\item[$\eta.$] $\Gamma^{*}$ is $\B C$-closed. 

\end{varitemize}
This will conclude the proof of the theorem.

Let us preliminarily observe that any formula $B^{\omega}$ can be written as $B^{s}$ for a unique $s\sqsubseteq \omega$ such that $p_{s}=|B|$ and $q_{s}$ is minimum with the property that $\FN(B)\subseteq\{a_{0},\dots, a_{q_{s}-1}\}$.

\begin{varitemize}
\item[$\alpha.$] 
Let us show that $\Gamma^{\dag}$ is super-consistent. This immediately implies that $\Gamma^{*}$ is super-consistent too.
 Suppose $\TTH(\omega)\cup\Gamma^{\dag}\vdash \BOT$; then there exist $s_{1},\dots, s_{n}$ and $B_{1}\in \Gamma^{ s_{1}},\dots, B_{n}\in\Gamma^{ s_{n}}$, such that 
$\TTH(\omega) , B_{1}^{s_{1}},\dots, B_{n}^{s_{n}}\vdash \BOT$; we can suppose w.l.o.g.~that $s_{1},\dots, s_{n}\sqsubseteq \omega$, since if $s_{i}\not\sqsubseteq \omega$, $\TTH(\omega)\vdash \lnot \TTH(s_{i})$, and thus $\TTH(\omega)\vdash B_{i}^{s_{i}} $.

 We have then $B_{i}^{s_{i}}= B_{i}^{\omega}$; 
by letting $s=\bigvee^{\omega}\{s_{1},\dots, s_{n}\}$ we thus have 
$B_{i}^{s_{i}}=B_{i}^{s}$ and 
$\TTH(s),  B_{1}^{s},\dots, B_{n}^{s}\vdash \BOT$, 
which implies $\TTH(s),  B_{1},\dots, B_{n}\vdash \BOT$, 
 and since $\TTH(s)\cup\{ B_{1},\dots, B_{n}\}\subseteq \Gamma^{ s}$, we deduce 
$\BOT \in \Gamma^{  s}$, contradicting a.

\item[$\beta.$] The argument is similar to the one for $\alpha$.

\item[$\gamma.$]
Let $B\in \Gamma$ and $s$ be such that $B^{\omega}=B^{s}$, $p_{s}=|B|$ and $\FN(B)\subseteq \{a_{0},\dots, a_{q_{s}-1}\}$. 
For any $s'\in (2^{[p_{s}+1]})^{[q_{s}+1]}$, $B^{s'}\in \Gamma^{\dag s'}\subseteq \Gamma^{\dag}$. Hence, using the fact that
$\vdash \bigvee_{s\in (2^{[p_{s}+1]})^{[q_{s}+1]}}\TTH(s)$, we deduce that 
$\{ B^{s}\mid s\in (2^{[p_{s}+1]})^{[q_{s}+1]}\} \vdash B$, and thus that $\Gamma^{\dag}\vdash B$, which implies $B\in \Gamma^{*}$.

\item[$\delta.$]
Let $s$ be such that $B^{\omega}=B^{s}$. We consider the case of $\omega=\omega_{0}$; the case $\omega \neq \omega_{0}$ is proved similarly with $\BOT $ in place of $A$.
We will prove the contrapositive, i.e.~that if $\TTH(s)\cup \Gamma^{\dag}\cup \{B\}$ is $A$-consistent, then $B^{\omega_{0}}=B^{s}\in \Gamma^{*}$, from which $\delta$.~follows. Suppose that $\TTH(\omega)\cup \Gamma^{\dag}\cup \{B\}$ is $A$-consistent. Suppose $B\notin \Gamma^{s}$: this implies $B\notin \Gamma$ and that for some $k\in \mathbb N$, $\Gamma^{s}_{k}\cup\{B\}\vdash A$. But this forces then $\Gamma^{s}\cup\{B\}\vdash A$; 
since for all $F\in \Gamma^{s}$, $F^{s}\in \Gamma^{\dag}$, and $\TTH(s), F^{s} \vdash F$, we deduce then 
$\TTH(s)\cup \Gamma^{\dag}\cup \{B\}\vdash A$, which is absurd. We conclude then that $B\in \Gamma^{s}$, and thus $B^{s}=B^{\omega}\in \Gamma^{\dag}\subseteq \Gamma^{*}$.

\item[$\epsilon.$] 
Again, we consider the case of $\omega=\omega_{0}$; the case $\omega \neq \omega_{0}$ is proved similarly with $\BOT $ in place of $A$.
Let $s\sqsubseteq \omega_{0}$ be such that  $(B\lor C)^{\omega}=(B \lor C)^{s}$ and suppose $(B\lor C)^{s}\in \Gamma^{*}$ but neither $B^{s}\in \Gamma^{*}$ nor $C^{s}\in \Gamma^{*}$; then by $\delta.$, $\TTH(s)\cup \Gamma^{*}\cup\{B\}\vdash A$ and $\TTH(s)\cup \Gamma^{*}\cup\{C\}\vdash A$, hence
$\TTH(s)\cup \Gamma^{*}\cup\{B\lor C\}\vdash A$;
since $(B\lor C)^{s}\in \Gamma^{*}$ we have $\TTH(s)\cup \Gamma^{*}\vdash B\lor C$, so we deduce  
$\TTH(s)\cup \Gamma^{*}\vdash A$, and since $\TTH(s)\subseteq \TTH(\omega_{0})$, we have
$\TTH(\omega_{0})\cup \Gamma^{*}\vdash A$, contradicting $\beta$.

\item[$\eta.$] 
Once more, we consider the case of $\omega=\omega_{0}$; the case $\omega \neq \omega_{0}$ is proved similarly with $\BOT $ in place of $A$.

Let $s\sqsubseteq \omega_{0}$ be such that $(\B C^{q}_{a}B)^{\omega_{0}}= (\B C^{q}_{a}B)^{s}$ and suppose $(\B C^{q}_{a}B)^{s}\in \Gamma^{*}$. 

We can suppose w.l.o.g.~that $a=a_{q}$ with $q> q_{s}$.
Hence we can suppose that the formula $B^{\omega_{0}}$ is of the form $B^{s+s'+s''}$, for some finite matrices $s',s''$, with $s''\in 2^{[|B|+1]}$. 
 
%

By Lemma \ref{lemma:adecompo} and Remark \ref{rem:henkinrelative} we have that $\Gamma^{s}\vdash B\leftrightarrow( \bigvee_{v\in 2^{[|B|+1]} }\TTH(v)\land B_{s+v})$, where $B_{s+v}$ has no free name.
Let $S=\{ v\in 2^{[|B|+1]}\mid B_{s+v}\in \Gamma^{s} \}$ and $\bthree= \bigvee_{v\in S}\TTH(v)$. 
Observe that for any $v\in S$, $\Gamma^{s+s'}\vdash \TTH(v)\to B$.
Let $v\in S$ and $\omega_{v}$ be such that $v\subseteq \omega_{v}$. Then 
$\Gamma^{s+s'}\vdash \TTH(\omega_{v})\to B$.
Using e.~we have then that for all $t\in 2^{[|B|+1]}$, 
$\TTH(\omega_{v})\to B\in \Gamma^{s+s'+t}$, so in particular, 
$\TTH(\omega_{v})\to B\in \Gamma^{s+s'+s''}$, so
$(\TTH(\omega_{v})\to B)^{\omega_{0}}\in \Gamma^{\dag}\subseteq \Gamma^{*}$. 
%
%
We thus have that for all $v\in S$, $\Gamma^{*}\vdash (\TTH(v)\to B)^{\omega_{0}}$, which implies
$\Gamma^{*}\vdash (\bthree\to B)^{\omega_{0}}$ and thus $(\bthree\to B)^{\omega_{0}}\in \Gamma^{*}$. 

At this point, if $\mu(\bthree)\geq q$ we are done.
Otherwise, suppose $\mu(\bthree)<q$; if $v\notin S$, then $B_{s+v}\notin \Gamma^{s}$, which implies that for some $k$, 
$\Gamma^{s}_{k}\cup\{B_{s+v}\}\vdash A$; this implies that $\Gamma^{s}\vdash (\TTH(v)\land B_{s+v}) \to A$; we deduce then that $\Gamma^{s}\vdash B\to ((\bigvee_{v\in S}\TTH(v)\land B_{s+v})\lor A)$, and in particular that $\Gamma^{s}\vdash B\to(\bthree\lor A)$;
this means that there exist formulas $B_{1},\dots, B_{n}\in \Gamma^{s}\subseteq \CC F_{p_{s},q_{s}}$, hence containing no occurrences of the name $a$, such that $B_{1},\dots, B_{n} \vdash B\to  (\bthree\lor A)$;
from this we deduce first $\TTH(s), B_{1}^{s},\dots, B_{n}^{s}\vdash B\to (\bthree\lor A)$, and then, using ($\B C$E$_{2}$), $\TTH(s),\B C^{q}_{a}B, B^{s}_{1},\dots, B^{s}_{n}\vdash \B C^{q}_{a}(\bthree\lor A)$. Now, since $s\subseteq\omega_{0}$, $\B C^{q}_{a}A, B_{1}^{s},\dots, B_{n}^{s}\in \Gamma^{*}$ and 
$  \B C^{q}_{a}(\bthree\lor A)\vdash  (\B C^{q}_{a}\bthree)\lor A$ (Axiom \eqref{ax1}, since we can suppose w.l.o.g.~that $a\notin \FN(A)$), we have 
$\TTH(\omega_{0})\cup\Gamma^{*}\vdash (\B C^{q}_{a}\bthree)\lor A$; moreover, from $\mu(\bthree)< q$ we get (by Axiom \eqref{ax2}) 
$\vdash \lnot \B C^{q}_{a}\bthree$ and thus $(\B C^{q}_{a}\bthree)\lor A\vdash A$, so we can conclude
$\TTH(\omega_{0})\cup\Gamma^{*}\vdash A$, which contradicts $\beta$.

\end{varitemize}
\end{proof}

\begin{lemma}\label{lemma:closure}
Let $\Gamma$ be an $A$-consistent set of formulas. Then there exists a super-consistent set $\Delta$ and $\omega\in (2^{\mathbb N})^{\CC A}$ such that ${\TTH(\omega)\cup \Delta}$ is $A$-consistent and its closure contains $\Gamma$. 
\end{lemma}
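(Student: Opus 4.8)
The plan is to produce $\omega$ and $\Delta$ in two stages: first pick an outcome $\omega$ that is compatible with the Boolean content of $\Gamma$, and then let $\Delta$ be the pointwise \emph{relativization} of $\Gamma$ to $\omega$, namely $\Delta:=\{B^{\omega}\mid B\in\Gamma\}$, where $B^{\omega}=\TTH^{|B|}_{\FN(B)}(\omega)\to B$ as in the definition preceding the lemma. The intuition is that each $B^{\omega}$ is ``active'' only under valuations agreeing with $\omega$ on the finitely many variables occurring in $B$, and is vacuously true otherwise, so that relativizing removes the Boolean constraints that could break super-consistency while keeping $B$ recoverable from $\TTH(\omega)$.

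First I would build $\omega$ one Boolean variable at a time along a fixed enumeration of $\CC A\times\BB N$, maintaining the invariant that $\TTH(s)\cup\Gamma$ is $A$-consistent for the current finite partial valuation $s$ (recall that, via $(\BOT\text{E})$, $A$-consistency already entails plain consistency). The base case is the hypothesis $\Gamma\not\vdash A$. For the step, if extending $s$ by $\bvar_{a}^{i}$ and by $\lnot\bvar_{a}^{i}$ both produced a proof of $A$, then using the classical identity $\vdash\bvar_{a}^{i}\lor\lnot\bvar_{a}^{i}$ (rule (Cid)) together with $\lor$E we would obtain $\TTH(s)\cup\Gamma\vdash A$, against the invariant; hence at least one extension is safe. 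Passing to the limit yields $\omega$, and since every finite fragment of $\TTH(\omega)$ is contained in $\TTH(s)$ for some finite $s\sqsubseteq\omega$, the finiteness of derivations gives that $\TTH(\omega)\cup\Gamma$ is $A$-consistent.

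The two easy conclusions then follow. Because $\TTH(\omega)\vdash\TTH^{|B|}_{\FN(B)}(\omega)$ for each $B\in\Gamma$, one direction of $\to$E shows $\TTH(\omega)\cup\Delta\vdash B$, while $\to$I shows conversely $\TTH(\omega)\cup\Gamma\vdash B^{\omega}$; thus $\TTH(\omega)\cup\Delta$ and $\TTH(\omega)\cup\Gamma$ are interderivable. Consequently $\Gamma$ lies in the deductive closure of $\TTH(\omega)\cup\Delta$, and $\TTH(\omega)\cup\Delta\not\vdash A$ is inherited from the $A$-consistency of $\TTH(\omega)\cup\Gamma$.

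The hard part will be establishing super-consistency of $\Delta$, i.e.\ $\TTH(\omega')\cup\Delta\not\vdash\BOT$ for \emph{every} $\omega'$. Assuming a finite witness $\TTH^{K}_{X}(\omega'),B_{1}^{\omega},\dots,B_{n}^{\omega}\vdash\BOT$, I would split the indices according to whether $\omega'$ agrees with $\omega$ on $\FN(B_{j})\times\{0,\dots,|B_{j}|\}$: disagreeing $B_{j}^{\omega}$ have a refuted antecedent and are redundant, whereas agreeing ones may be replaced by $B_{j}$, leaving $\TTH^{K}_{X}(\omega'),\{B_{j}\}\vdash\BOT$ with the surviving $B_{j}$ mentioning only variables on which $\omega'=\omega$. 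The remaining, and genuinely delicate, step is a \emph{locality} lemma: flipping the truth values of Boolean variables absent from a set of formulas preserves consistency; applying it lets one erase the irrelevant literals of $\TTH^{K}_{X}(\omega')$ and rewrite the rest via $\omega$, yielding $\TTH(\omega)\cup\Gamma\vdash\BOT$ and hence $\vdash A$, a contradiction. I would prove locality by substituting $\TOP$/$\BOT$ for the absent variables throughout the derivation; the only rules whose side conditions could be disturbed are the counting rules and axioms \eqref{ax1}--\eqref{ax2}, whose Boolean premisses $\bthree,\bone$ range over the \emph{bound} name of the relevant quantifier. After an $\alpha$-renaming of those bound names away from the finite set $X$ (harmless, since $\mu(\bthree)$ depends only on the shape of $\bthree$), the substituted variables, all named in $X$, occur in no such $\bthree,\bone$, so every side condition and thus the entire derivation survives the substitution.
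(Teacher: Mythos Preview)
Your proposal is correct and follows essentially the same route as the paper: build $\omega$ literal by literal using (Cid) and $\lor$E to guarantee one safe extension at each step, then relativize each $B\in\Gamma$ to $\omega$. The only cosmetic difference is that the paper takes $\Delta=\{B^{*}\mid B\in\Gamma\}$ with the disjunctive form $B^{*}:=B\lor\bigvee_{a\in\FN(B),\,i\le|B|}\lnot^{\omega(\langle a,i\rangle)}\bvar_{a}^{i}$, whereas you use the implicative $B^{\omega}=\TTH^{|B|}_{\FN(B)}(\omega)\to B$; these are interderivable in $\NDCPL$ via (Cid), so the difference is immaterial. If anything your treatment of super-consistency is more careful: the paper records only the singleton fact ``for all $\omega'$, $\TTH(\omega')\cup\{B^{*}\}$ is consistent'' and leaves the passage to all of $\Delta$ implicit, while you correctly isolate the locality step (substitute $\TOP/\BOT$ for the irrelevant Boolean variables after $\alpha$-renaming bound names away from them) that is needed to close that gap.
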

\begin{proof}
Let us define sets of formulae $\Gamma_{a,i}$ together with $\omega(\langle a,i\rangle)$ as follows:
\begin{varitemize}
\item let $\omega(\langle 0,0\rangle)=1$ if $\Gamma\cup\{\bvar_{0}^{0}\}$ is $A$-consistent, and $\omega(\langle 0,0\rangle)=0$ otherwise;
moreover, let
 $\Gamma_{0,0}= \Gamma$;

\item let $\omega(\langle a, i+1\rangle)=1$ if  $\Gamma_{a,i}\cup\{\bvar_{a}^{i+1}\}$ is $A$-consistent, and $\omega(\langle a,i+1\rangle)=0$ otherwise;
moreover, let
$\Gamma_{a,i+1}=\Gamma_{a,i}\cup\{\lnot^{(1-\omega(\langle a,i+1\rangle))}\bvar_{a}^{i+1}\}$;
\item let $\omega(\langle a+1, 0\rangle)=1$ if  $\bigcup_{i}\Gamma_{a,i}\cup\{\bvar_{a+1}^{0}\}$ is $A$-consistent, and $\omega(\langle a,i+1\rangle)=0$ otherwise;
moreover, let
$\Gamma_{a+1,0}=\bigcup_{i}\Gamma_{a,i}\cup\{(\lnot^{1-\omega(\langle a+1,0\rangle))}\bvar_{a+1}^{0}\}$.

\end{varitemize}


By construction we then have that $\TTH(\omega)\cup \Gamma=\bigcup_{a,i} \Gamma_{a,i}$;
let us show that  
$\bigcup_{a,i} \Gamma_{a,i}$ is $A$-consistent:
\begin{varitemize}
\item $\Gamma_{0,0}=\Gamma$ is $A$-consistent by hypothesis;

\item suppose $\Gamma_{a,i}$ is $A$-consistent; if $\Gamma_{a,i+1}= \Gamma_{a,i}\cup\{\bvar_{a}^{i+1}\}$ then by construction $\Gamma_{a,i+1} $ is $A$-consistent;
if $\Gamma_{a,i+1}= \Gamma_{a,i}\cup\{\lnot\bvar_{a}^{i+1}\}$ then by construction $\Gamma_{a,i} \cup\{\bvar_{a}^{i}\} \vdash A$; if moreover
$\Gamma_{a,i} \cup\{\lnot\bvar_{a}^{i}\} \vdash A$, the 
$\Gamma_{a,i}\cup\{\bvar_{a}^{i}\lor\lnot \bvar_{a}^{i}\}\vdash A$ and thus
$\Gamma_{a,i}\vdash A$, which is absurd.
We conclude then that $\Gamma_{a,i+1}$ is $A$-consistent.

\item suppose $\Gamma_{a,i}$ is $A$-consistent for all $i\in \mathbb N$; first observe that $\Gamma_{a}:=\bigcup_{i}\Gamma_{a,i}$ is $A$-consistent: if $\Gamma_{a}\vdash A$, then there exist $B_{1}\in \Gamma_{a,i_{1}},\dots, B_{n}\in \Gamma_{a,i_{n}}$ such that $B_{1},\dots, B_{n}\vdash  A$;  since for $j\leq k$,  $\Gamma_{a,j}\subseteq \Gamma_{a,k}$, we deduce that $\Gamma_{a,\max\{i_{1},\dots, i_{n}\}}\vdash A$, which is absurd.

Now, if $\Gamma_{a+1,0}=\Gamma_{a}\cup\{\bvar_{a+1}^{0}\}$ then by construction $\Gamma_{a+1,0}$ is $A$-consistent; if  $\Gamma_{a+1,0}=\Gamma_{a}\cup\{\lnot\bvar_{a+1}^{0}\}$ then 
$\Gamma_{a}\cup\{\bvar_{a+1}^{0}\}\vdash A$; hence, if $\Gamma_{a+1,0}\vdash A$, we deduce
$\Gamma_{a}\cup\{\bvar_{a+1}^{0}\lor \lnot \bvar_{a+1}^{0}\} \vdash A$, so we conclude
$\Gamma_{a}\vdash A$, which is absurd. 
\end{varitemize}

Now, for each formula $B\in \Gamma$ let us define a formula $B^{*}$ as follows:
$$
B^{*} := B \lor \left ( \bigvee_{a\in \FN(B), i\leq |B|} \lnot^{\omega(\langle a,i\rangle)}\bvar_{a}^{i}   \right)
$$
Observe that $\TTH(\omega)\cup\{B^{*}\}\vdash B$ and that for all $\omega'\in(2^{\mathbb N})^{\CC A}$, 
$\TTH(\omega')\cup\{B^{*}\}$ is consistent.

Let then $\Delta=\{B^{*}\mid B\in \Gamma\}$. It is clear that $\TTH(\omega)\cup \Delta \vdash B$ for all $B\in \Gamma$.
Suppose $\TTH(\omega)\cup \Delta\vdash A$, then there exists Boolean formulas $
\bfour_{1},\dots, \bfour_{k}\in \TTH(\omega)$ and formulas $B_{1},\dots, B_{n}\in \Gamma$ such that 
$\bfour_{1},\dots, \bfour_{k}, B_{1}^{*},\dots, B_{n}^{*}\vdash A$. Since $B_{i}\vdash B^{*}$, this implies then
$\bfour_{1},\dots, \bfour_{k}, B_{1},\dots, B_{n}\vdash A$, and thus $\TTH(\omega)\cup \Gamma$ is not $A$-consistent, which is absurd. 
\end{proof}

We now have all elements to proceed to the proof of the completeness theorem.
\begin{theorem}[completeness]
If $\Gamma \vDash A$, then $\Gamma \vdash _{\NDCPL}A$.
\end{theorem}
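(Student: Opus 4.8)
The plan is to argue by contraposition, constructing a canonical $\ICPL$-structure whose worlds are the saturated sets collected in $\Theta$. So suppose $\Gamma \not\vdash_{\NDCPL} A$. First I would apply Lemma~\ref{lemma:closure} to obtain a super-consistent set $\Delta$ and an outcome $\omega_0 \in (2^{\BB N})^{\CC A}$ such that $\TTH(\omega_0) \cup \Delta$ is $A$-consistent and its deductive closure contains $\Gamma$. Then the saturation lemma (Lemma~\ref{lemma:saturation}) upgrades $\Delta$ to a set $\Gamma^* \in \Theta$ with $\Gamma^* \supseteq \Delta$ for which $\TTH(\omega_0) \cup \Gamma^*$ is still $A$-consistent; in particular $A^{\omega_0} \notin \Gamma^*$, whereas $B^{\omega_0} \in \Gamma^*$ for every $B \in \Gamma$.

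Next I would define the canonical structure $\CC M = (\Theta, \subseteq, \B i)$, taking worlds to be the elements of $\Theta$ ordered by inclusion and setting $\B i(\CC p) = \{\Gamma' \in \Theta \mid \CC p \in \Gamma'\}$, which is an upper set since membership is preserved under $\subseteq$. The heart of the argument is the Truth Lemma: for every $\Gamma' \in \Theta$, every $\omega$, and every formula $A$ with $\FN(A) \subseteq X$,
\[
\Gamma', \omega \Vdash^{X}_{\CC M} A \iff A^\omega \in \Gamma',
\]
proved by induction on $A$. The Boolean-variable and intuitionistic-variable cases are immediate from the construction (the $\TTH(\omega)$-fragment inside $\Gamma'$ fixes the truth of each $\bvar_a^i$, while $\B i$ handles $\CC p$); conjunction and disjunction use closure and $\vee$-closure; the implication case uses monotonicity (Lemma~\ref{lemma:monotonicity}) in one direction and, in the other, a further application of saturation to extend $\Gamma'$ to a world $\Gamma'' \supseteq \Gamma'$ witnessing $A^\omega \in \Gamma''$ but $B^\omega \notin \Gamma''$.

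The main obstacle is the counting case $A = \B C^{q}_{a} B$. Here I would first observe, using the inductive hypothesis, that the set $S = \{\omega' \in 2^{\BB N} \mid \Gamma', \omega + \omega' \Vdash^{X \cup \{a\}}_{\CC M} B\}$ equals $\{\omega' \mid B^{\omega+\omega'} \in \Gamma'\}$, and that $B^{\omega+\omega'}$ depends only on the finitely many bits $\omega'(n)$ with $n \leq |B|$; hence $S$ is exactly the model set of a Boolean formula $\bthree$ with $\FN(\bthree) \subseteq \{a\}$, so that $S$ is measurable and $\mu(S) = \mu(\bthree)$. For the forward direction, $\mu(S) \geq q$ gives $(\bthree \to B)^\omega \in \Gamma'$, whence the closure clause for counting quantifiers yields $(\B C^{q}_{a} B)^\omega \in \Gamma'$. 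For the backward direction, $\B C$-closure supplies a Boolean formula with $\mu \geq q$ and $(\bthree \to B)^\omega \in \Gamma'$, and a short propositional computation—exploiting that $\TTH(\omega + \omega') \vdash \bthree$ whenever $\omega' \vDash \bthree$—places $B^{\omega+\omega'} \in \Gamma'$ for every model $\omega'$ of $\bthree$, so $\mu(S) \geq \mu(\bthree) \geq q$. The relativized Henkin machinery of Remark~\ref{rem:henkinrelative} together with Lemma~\ref{lemma:adecompo} is precisely what guarantees these Boolean formulas exist and interact correctly with $\Gamma'$. Once the Truth Lemma is established, completeness follows immediately: by construction $\Gamma^*, \omega_0 \Vdash^{X}_{\CC M} \Gamma$ while $\Gamma^*, \omega_0 \not\Vdash^{X}_{\CC M} A$, so $\Gamma \not\vDash A$, which is the desired contrapositive.
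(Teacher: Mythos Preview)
Your proposal is correct and follows essentially the same route as the paper: build the canonical structure on $\Theta$ ordered by inclusion with $\B i(\CC p)=\{\Gamma'\mid \CC p\in\Gamma'\}$, prove the Truth Lemma $\Gamma',\omega\Vdash^{X}A \iff A^{\omega}\in\Gamma'$ by induction on $A$, and conclude via Lemma~\ref{lemma:closure} plus Lemma~\ref{lemma:saturation}. One small remark: the machinery of Remark~\ref{rem:henkinrelative} and Lemma~\ref{lemma:adecompo} is invoked by the paper inside the proof of the saturation lemma (to establish $\B C$-closure of $\Gamma^{*}$), not directly in the Truth Lemma itself; in the Truth Lemma the counting case only needs the closure and $\B C$-closure properties already guaranteed by membership in $\Theta$, together with the finite-support observation you describe.
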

\begin{proof}
Let $\CC T= \langle \Theta  , \subseteq, \B i\rangle$, where
$\B i( \CC p)=\{\Gamma\mid \CC p\in  \Gamma \}$. 
%
%
%

We will prove that for all $\Gamma \in \Theta$ and $\omega\in \Borel_{X}$, 
$
\Gamma, \omega \vdash_{\CC T}^{X} A$ iff $A^{\omega}\in \Gamma
$.
From this claim the theorem is proved as follows: suppose $\Gamma \not\vdash  A$; by Lemma \ref{lemma:closure} we obtain a super-consistent set $\Delta$ and $\omega_{0}\in (2^{\mathbb N})^{\CC A}$ such that $\TTH(\omega_{0})\cup \Delta$ is $A$-consistent and its closure contains $\Gamma$. 
By Lemma \ref{lemma:saturation} $\Delta$ extends to $\Gamma^{*}\in \Theta$ such that $\Gamma^{*}\supseteq \Gamma$ and $\TTH(\omega_{0})\cup\Gamma^{*}$ is $A$-consistent. 
%
%
%
%
Then using the claim, from $\Gamma \subseteq \Gamma^{*}$, we deduce that $\Gamma^{*},\omega_{0} \vdash_{\CC T}^{X}\Gamma$, and from $A\notin \Gamma^{*}$ we deduce that 
$\Gamma^{*},\omega_{0}\not\vdash_{\CC T}^{X}A$; we can conclude then that $\Gamma \not\vdash_{\CC T}^{X} A$, and thus that $\Gamma \not\vDash A$.

%

Let us now prove the claim, by arguing by induction on $A$: 
\begin{varitemize}

\item if $A=\BOT$, from the vacuous assumption $\Gamma, \omega \vdash_{\CC T}^{X}\BOT$ we can freely deduce $\BOT^{\omega}\in \Gamma$; conversely, 
 from the vacuous assumption 
 $\BOT^{\omega}\in \Gamma$ we can freely deduce 
 $\Gamma, \omega \vdash_{\CC T}^{X}\BOT$;
 
 \item if $A=\TOP$, since $\Gamma, \omega \vdash_{\CC T}^{X}\TOP$ always holds and $\TOP^{\omega}\in \Gamma$ always holds (since $\Gamma$ is closed), we can conclude;

 \item if $A=\CC p$, $\Gamma, \omega \vdash_{\CC T}^{X}\CC p$ iff $\CC p\in  \Gamma$ holds by definition (since $\CC p\notin \TTH(\omega)$);

\item if $A=\bvar_{a}^{i}$, if $\Gamma, \omega \vdash_{\CC T}^{X}A$ holds then $\omega(a)(i)=1$, hence $\TTH(\omega)\vdash \bvar_{a}^{i}$, and thus $A^{\omega}$ is logically valid, which implies $A^{\omega}\in \Gamma$ by closure;
conversely, if $A^{\omega}\in \Gamma$,  since $\TTH(\omega)\cup \Gamma$ is consistent, the only possibility is that $\omega(a)(i)=1$, which forces $\Gamma, \omega \vdash_{\CC T}^{X}A$.

 \item if $A=B\land C$, then $\Gamma, \omega \vdash_{\CC T}^{X}A$ iff 
$\Gamma, \omega \vdash_{\CC T}^{X}B$ and $\Gamma, \omega \vdash_{\CC T}^{X}C$, which by the I.H.~is equivalent to
$B^{\omega}\in   \Gamma$ and $C^{\omega}\in \Gamma$, which is in turn equivalent to $(B\land C)^{\omega}\in \Gamma$.

 \item if $A=B\lor C$, and $\Gamma, \omega \vdash_{\CC T}^{X}A$, then either 
$\Gamma, \omega \vdash_{\CC T}^{X}B$ or $\Gamma, \omega \vdash_{\CC T}^{X}C$ holds; 
if the first holds, then by the I.H.~$B^{\omega}\in \Gamma$, which implies $(B\lor C)^{\omega}\in \Gamma$, and one can argue similarly if the second holds;
conversely, if $(B\lor C)^{\omega}\in \Gamma$, by $\vee$-closure either $B^{\omega}\in \Gamma$ or $C^{\omega}\in \Gamma$, so by the I.H.~in each case we deduce $\Gamma, \omega \vdash_{\CC T}^{X}A$.

\item if $A=B\to C$, then suppose that $\Gamma, \omega \vdash_{\CC T}^{X}A$; then for all $\Gamma'\in\Theta$ such that $\Gamma'\supseteq \Gamma$ and $\Gamma', \omega \vdash_{\CC T}^{X}B$, also $\Gamma', \omega \vdash_{\CC T}^{X}C$ holds.

Suppose $\Gamma\cup\{B^{\omega}\}$ is super-consistent.

Furthermore, suppose
 $\TTH(\omega)\cup\Gamma\cup\{ B^{\omega}\} \not \vdash  C^{\omega}$; then, by Lemma \ref{lemma:saturation} (with $\omega_{0}=\omega$ and $A=C$) there exists $\Gamma'\in \Theta$ with $\Gamma'\supseteq\Gamma\cup\{B^{\omega}\}$ such that $C^{\omega}\notin \Gamma'$.
 Using the I.H.~we deduce then that $\Gamma', \omega\vdash_{\CC T}^{X}B$ but $\Gamma', \omega\not\vdash_{\CC T}^{X}C$, against the assumption. We conclude then that 
$\TTH(\omega)\cup\Gamma\cup\{ B^{\omega}\}  \vdash  C^{\omega}$, and thus that
$\Gamma \vdash (B\to C)^{\omega}$, which implies by closure $(B\to C)^{\omega}\in \Gamma$. 

Suppose now that $\Gamma\cup\{B^{\omega}\}$ is not super-consistent. There exists $\omega'$ such that 
$\TTH(\omega') \cup \Gamma\cup\{B^{\omega}\}\vdash \BOT$.
Since $\TTH(\omega') \cup \Gamma$ is consistent the only possibility is that $\TTH(\omega')\vdash \TTH^{|B|}_{\FN(B)}(\omega)$ (which forces $\omega|_{|B|,\FN(B)}$ to coincide with $\omega'|_{|B|,\FN(B)}$) and $\TTH(\omega')\cup\Gamma \cup \{B\} \vdash \BOT$; 
but then we also have that $\TTH(\omega)\cup\Gamma \cup \{B\} \vdash \BOT$, which implies 
$ \TTH(\omega)\cup\Gamma\vdash B\to C$ and thus $\Gamma\vdash (B\to C)^{\omega}$; by closure 
this implies then $(B\to C)^{\omega}\in \Gamma$.

Conversely, suppose $(B\to C)^{\omega}\in\Gamma$ and let $\Gamma'\in \Theta$ be such that $\Gamma'\supseteq \Gamma$ and $\Gamma',\omega \vdash_{\CC T}^{X}B$; by the I.H.~this implies $B^{\omega}\in \Gamma'$, and since 
$(B\to C)^{\omega}\in\Gamma\subseteq \Gamma'$ we also deduce by closure 
$C^{\omega}\in \Gamma'$, which by the I.H.~implies $\Gamma', \omega\vdash_{\CC T}^{X}C$.

\item if $A=\B C^{q}_{a}B$, then suppose that $\Gamma, \omega \vdash_{\CC T}^{X} A$; then by the I.H.~the Borel set 
$$
S=\Big\{ \omega'\mid  B^{\omega+\omega'}\in \Gamma\Big\}
\supseteq
\Big\{ \omega'\mid \Gamma, \omega+\omega' \Vdash^{X\cup\{a\}}_{\CC T} B\Big\}
$$
has measure greater than $q$. Observe that for all $\omega'\in 2^{\mathbb N}$, 
$B^{\omega+\omega'}$ is equivalent to 
$$ \TTH^{|B|_{X}}_{X}(\omega|_{X}) \to \Big(
\TTH^{|B|_{\{a\}}}_{\{a\}}(\omega') \to B\Big) $$
Let $\bthree:= \bigvee\{ \TTH^{|B|_{\{a\}}}_{\{a\}}(\omega')\mid \omega'\in S\}$; since $\Gamma$ is closed, we deduce then that 
$(\bthree\to B)^{\omega}\in \Gamma$ and that $\mu(\bthree)\geq q$; again, by closure, this implies $(\B C^{q}_{a}B)^{\omega}\in \Gamma$.  

For the converse direction, suppose $(\B C^{q}_{a}B)^{\omega}\in \Gamma$ and let $\Gamma'\supseteq \Gamma$; since $\Gamma$ is $\B C$-closed,  
there exists a Boolean formula $\bthree$ with $\mu(\bthree)\geq q$ such that $(\bthree\to B)^{\omega}\in \Gamma\subseteq \Gamma'$. This implies that for all $\omega'\in \model{\bthree}$, $B^{\omega+\omega'}\in \Gamma'$; 
hence, by the I.H.~for all $\omega'\in \model{\bthree}$, $\Gamma', \omega+\omega'\Vdash^{X\cup\{a\}}_{\CC T}B$.
Since $\mu(\bthree)\geq q$, the set of $\omega'$ such that $\Gamma', \omega+\omega'\Vdash^{X\cup\{a\}}_{\CC T}B$ has measure greater than $q$, and we can conclude $\Gamma, \omega \Vdash^{X} _{\CC T}A$.
%
\end{varitemize}
\end{proof}

\subsection{Permutative Rules in $\ICPLL$.}\label{app:permu}

We complete the picture of the normalization rules of $\ICPLL$ by considering permutative rules for $(\mathsf m$), illustrated in Fig.~\ref{fig:normaperma}, where we let $D_{a}^{i}(\bone, \btwo)$ be an abbreviation for
$( \bvar_{a}^{i}\land \bone)\lor (\lnot \bvar_{a}^{i}\land \btwo)$. 

It is easily checked that if $\Pi \redall \Sigma$, $t^{\Pi} \redalll t^{\Sigma}$, since the rules closely correspond to the permutation rules for $\oplus$ in $\EVLL$.

\begin{figure}
\fbox{
\begin{minipage}{0.98\textwidth}
\adjustbox{scale=0.5}{
\begin{minipage}{\textwidth}
\begin{tabular}{c c c}
$
\AXC{$\Pi$}
\noLine
\UIC{$\Gamma \vdash \btwo \Pto A$}
\AXC{$\Pi$}
\noLine
\UIC{$\Gamma \vdash \btwo \Pto A$}
\AXC{$\bone \vDash D_{a}^{i}(\btwo, \btwo)$}
\RL{($\mathsf m$)}
\TIC{$\Gamma \vdash \bone \Pto A$}
\DP
$
&
$\redperm$
& 
$
\AXC{$\Pi[\bone \mapsto \btwo]$}
\noLine
\UIC{$\Gamma \vdash \bone \Pto A$}
\DP
$
\\ & & \\
$
\AXC{$\Pi$}
\noLine
\UIC{$\Gamma \vdash \btwo \Pto A$}
\AXC{$\Sigma$}
\noLine
\UIC{$\Gamma \vdash \bthree \Pto A$}
\AXC{$\bone' \vDash D_{a}^{i}(\btwo,\bthree)$}
\RL{($\mathsf m$)}
\TIC{$\Gamma \vdash \bone' \Pto A$}
\AXC{$\Theta$}
\noLine
\UIC{$\Gamma \vdash \bfour \Pto A$}
\AXC{$\bone \vDash D_{a}^{i}(\bone', \bfour)$}
\RL{($\mathsf m$)}
\TIC{$\Gamma \vdash \bone \Pto A$}
\DP
$
&
$\redperm$
& 
$
\AXC{$\Pi$}
\noLine
\UIC{$\Gamma \vdash \btwo \Pto A$}
\AXC{$\Theta$}
\noLine
\UIC{$\Gamma \vdash \bfour \Pto A$}
\AXC{$\bone \vDash D_{a}^{i}(\btwo, \bfour)$}
\RL{($\mathsf m$)}
\TIC{$\Gamma \vdash \bone \Pto A$}
\DP
$
\\ & & \\
$
\AXC{$\Pi$}
\noLine
\UIC{$\Gamma \vdash \btwo \Pto A$}
\AXC{$\Sigma$}
\noLine
\UIC{$\Gamma \vdash \bthree \Pto A$}
\AXC{$\Theta$}
\noLine
\UIC{$\Gamma \vdash \bfour \Pto A$}
\AXC{$\bone' \vDash D_{a}^{i}(\bthree,\bfour)$}
\RL{($\mathsf m$)}
\TIC{$\Gamma \vdash \bone' \Pto A$}
\AXC{$\bone \vDash D_{a}^{i}(\btwo, \bone')$}
\RL{($\mathsf m$)}
\TIC{$\Gamma \vdash \bone \Pto A$}
\DP
$
&
$\redperm$
& 
$
\AXC{$\Pi$}
\noLine
\UIC{$\Gamma \vdash \btwo \Pto A$}
\AXC{$\Theta$}
\noLine
\UIC{$\Gamma \vdash \bfour \Pto A$}
\AXC{$\bone \vDash D_{a}^{i}(\btwo, \bfour)$}
\RL{($\mathsf m$)}
\TIC{$\Gamma \vdash \bone \Pto A$}
\DP
$
\\ & & \\
$
\AXC{$\Pi$}
\noLine
\UIC{$\Gamma, A \vdash \btwo \Pto B$}
\AXC{$\Sigma$}
\noLine
\UIC{$\Gamma,A \vdash \bthree \Pto B$}
\AXC{$\bone \vDash D_{a}^{i}(\btwo, \bthree)$}
\RL{($\mathsf m$)}
\TIC{$\Gamma,A \vdash \bone\Pto B$}
\RL{($\to$I)}
\UIC{$\Gamma \vdash \bone\Pto (A\to B)$}
\DP
$
& $\redperm$ &
$
\AXC{$\Pi$}
\noLine
\UIC{$\Gamma, A \vdash \btwo \Pto B$}
\RL{($\to$I)}
\UIC{$\Gamma \vdash \btwo\Pto (A\to B)$}
\AXC{$\Sigma$}
\noLine
\UIC{$\Gamma,A \vdash \bthree \Pto B$}
\RL{($\to$I)}
\UIC{$\Gamma \vdash \bthree\Pto (A\to B)$}
\AXC{$\bone \vDash D_{a}^{i}(\btwo, \bthree)$}
\RL{($\mathsf m$)}
\TIC{$\Gamma \vdash \bone\Pto (A\to B)$}
\DP
$
\\ & & \\
$
\AXC{$\Pi$}
\noLine
\UIC{$\Gamma \vdash \btwo \Pto(A\to B)$}
\AXC{$\Sigma$}
\noLine
\UIC{$\Gamma \vdash \bthree \Pto (A\to B)$}
\AXC{$\bone \vDash D_{a}^{i}(\btwo, \bthree)$}
\RL{($\mathsf m$)}
\TIC{$\Gamma \vdash \bone\Pto (A\to B)$}
\AXC{$\Theta$}
\noLine
\UIC{$\Gamma \vdash \bone \Pto A$}
\RL{($\to$E)}
\BIC{$\Gamma \vdash \bone\Pto B$}
\DP
$
& $\redperm$ &
$
\AXC{$\Pi[\btwo\mapsto \bone\land \btwo]$}
\noLine
\UIC{$\Gamma \vdash \bone\land \btwo \Pto(A\to B)$}
\AXC{$\Theta[\bone\mapsto \bone\land \btwo]$}
\noLine
\UIC{$\Gamma \vdash \bone\land \btwo \Pto A$}
\RL{($\to$E)}
\BIC{$\Gamma \vdash \bone\land \btwo\Pto B$}
\AXC{$\Sigma[\btwo\mapsto \bone\land \bthree]$}
\noLine
\UIC{$\Gamma \vdash \bone \land\bthree \Pto(A\to B)$}
\AXC{$\Theta[\bone\mapsto \bone\land \bthree]$}
\noLine
\UIC{$\Gamma \vdash \bone\land \bthree \Pto A$}
\RL{($\to$E)}
\BIC{$\Gamma \vdash \bone\land \bthree\Pto B$}
\AXC{$\bone \vDash D_{a}^{i}(\bone\land\btwo, \bone\land\bthree)$}
\RL{($\mathsf m$)}
\TIC{$\Gamma \vdash \bone\Pto  B$}
\DP
$
\\ & & \\
$
\AXC{$\Pi$}
\noLine
\UIC{$\Gamma \vdash \bone \Pto(A\to B)$}
\AXC{$\Sigma$}
\noLine
\UIC{$\Gamma \vdash \btwo \Pto A$}
\AXC{$\Theta$}
\noLine
\UIC{$\Gamma \vdash \bthree \Pto A$}
\AXC{$\bone \vDash D_{a}^{i}(\btwo, \bthree)$}
\RL{($\mathsf m$)}
\TIC{$\Gamma \vdash \bone\Pto A$}
\RL{($\to$E)}
\BIC{$\Gamma \vdash \bone\Pto B$}
\DP
$
& $\redperm$ &
$
\AXC{$\Pi[\bone\mapsto \bone\land \btwo]$}
\noLine
\UIC{$\Gamma \vdash \bone\land \btwo \Pto(A\to B)$}
\AXC{$\Sigma[\btwo\mapsto \bone\land \btwo]$}
\noLine
\UIC{$\Gamma \vdash \bone \land\btwo \Pto A$}
\RL{($\to$E)}
\BIC{$\Gamma \vdash \bone\land \btwo\Pto B$}
\AXC{$\Pi[\bone\mapsto \bone\land \bthree]$}
\noLine
\UIC{$\Gamma \vdash \bone\land \bthree \Pto(A\to B)$}
\AXC{$\Theta[\bthree \mapsto \bone \land \bthree]$}
\noLine
\UIC{$\Gamma \vdash\bone \land \bthree \Pto A$}
\RL{($\to$E)}
\BIC{$\Gamma \vdash \bone\land \bthree\Pto B$}
\AXC{$\bone \vDash D_{a}^{i}(\bone\land\btwo, \bone\land\bthree)$}
\RL{($\mathsf m$)}
\TIC{$\Gamma \vdash \bone\Pto B$}
\DP
$
\\ & & \\
\AXC{$\Pi$}
\noLine
\UIC{$\Gamma \vdash \btwo\land \bfour_{1}\Pto A$}
\AXC{$\Sigma$}
\noLine
\UIC{$\Gamma \vdash  \bthree \land \bfour_{2}\Pto A$}
\AXC{$\bone\land \bfour\vDash D_{a}^{i}(\btwo\land \bfour_{1}, \bthree\land \bfour_{2})$}
\RL{($\mathsf m$)}
\TIC{$\Gamma \vdash \bone \land \bfour \Pto A$}
\AXC{$\mu(\bfour)\geq q$}
\RL{($\B C$I)}
\BIC{$\Gamma \vdash \bone \Pto (\B C^{q}_{b}A)$}
\DP
& $\redperm$&
 $
 \AXC{$\Pi[\btwo\land \bfour_{1} \mapsto (\bone \land \btwo) \land \bfour]$}
\noLine
\UIC{$\Gamma \vdash (\bone \land \btwo)\land \bfour\Pto A$}
\AXC{$\mu(\bfour)\geq q$}
\RL{($\B C$I)}
\BIC{$\Gamma \vdash \bone\land \btwo \Pto (\B C^{q}_{b}A)$}
 \AXC{$\Sigma[\bthree\land \bfour_{2} \mapsto (\bone \land \bthree) \land \bfour]$}
\noLine
\UIC{$\Gamma \vdash (\bone \land \bthree)\land \bfour\Pto A$}
\AXC{$\mu(\bfour)\geq q$}
\RL{($\B C$I)}
\BIC{$\Gamma \vdash \bone\land \bthree \Pto (\B C^{q}_{b}A)$}
\AXC{$\bone \vDash D_{a}^{i}(\bone \land \btwo, \bone \land \bthree)$}
\RL{($\mathsf m$)}
\TIC{$\Gamma \vdash \bone  \Pto (\B C^{q}_{b}A)$}
\DP
 $
 \\ & & \\
 $
 \AXC{$\Pi$}
 \noLine
 \UIC{$\Gamma \vdash \btwo \Pto (\B C^{q}_{b}A)$}
  \AXC{$\Sigma$}
 \noLine
 \UIC{$\Gamma \vdash \bthree \Pto (\B C^{q}_{b}A)$}
 \AXC{$\bone \vDash D_{a}^{i}(\btwo,\bthree)$}
 \RL{($\mathsf m$)}
 \TIC{$\Gamma\vdash \bone \Pto (\B C^{q}_{b}A)$}
 \AXC{$\Theta$}
 \noLine
 \UIC{$\Gamma, A\vdash \bone \Pto B$}
 \RL{($\B C$E$_{2}$)}
 \BIC{$\Gamma \vdash \bone \Pto (\B C^{qs}_{b}B)$}
 \DP$
 & $\redall$ & 
 $
  \AXC{$\Pi[\btwo \mapsto \bone\land \btwo]$}
 \noLine
 \UIC{$\Gamma \vdash \bone\land \btwo \Pto (\B C^{q}_{b}A)$}
  \AXC{$\Theta[\bone\mapsto \bone \land \btwo]$}
 \noLine
 \UIC{$\Gamma, A\vdash \bone\land \btwo \Pto B$}
 \RL{($\B C$E$_{2}$)}
 \BIC{$\Gamma \vdash \bone\land \btwo \Pto (\B C^{qs}_{b}B)$}
  \AXC{$\Sigma[\bthree \mapsto \bone\land \bthree]$}
 \noLine
 \UIC{$\Gamma \vdash \bone\land \bthree \Pto (\B C^{q}_{b}A)$}
  \AXC{$\Theta[\bone\mapsto \bone \land \bthree]$}
 \noLine
 \UIC{$\Gamma, A\vdash \bone\land \bthree \Pto B$}
 \RL{($\B C$E$_{2}$)}
 \BIC{$\Gamma \vdash \bone\land \bthree \Pto (\B C^{qs}_{b}B)$}
 \AXC{$\bone \vDash D_{a}^{i}(\bone\land\btwo,\bone\land \bthree)$}
 \RL{($\mathsf m$)}
 \TIC{$\Gamma\vdash \bone \Pto (\B C^{qs}_{b}B)$}
 \DP
$ \\ & & \\
$
 \AXC{$\Pi$}
 \noLine
 \UIC{$\Gamma \vdash \bone \Pto (\B C^{q}_{b}A)$}
 \AXC{$\Sigma$}
 \noLine
 \UIC{$\Gamma, A\vdash \btwo \Pto B$}
   \AXC{$\Theta$}
 \noLine
 \UIC{$\Gamma,A \vdash \bthree \Pto B$}
 \AXC{$\bone \vDash D_{a}^{i}(\btwo,\bthree)$}
 \RL{($\mathsf m$)}
 \TIC{$\Gamma,A\vdash \bone \Pto B$}
 \RL{($\B C$E$_{2}$)}
 \BIC{$\Gamma \vdash \bone \Pto (\B C^{qs}_{b}B)$}
 \DP$
 & $\redall$ & 
 $
  \AXC{$\Pi[\bone \mapsto \bone\land \btwo]$}
 \noLine
 \UIC{$\Gamma \vdash \bone\land \btwo \Pto (\B C^{q}_{b}A)$}
  \AXC{$\Sigma[\btwo \mapsto \bone \land \btwo]$}
 \noLine
 \UIC{$\Gamma, A\vdash\bone \land  \btwo \Pto B$}
 \RL{($\B C$E$_{2}$)}
 \BIC{$\Gamma \vdash \bone\land \btwo \Pto (\B C^{qs}_{b}B)$}
  \AXC{$\Pi[\bone \mapsto \bone\land \bthree]$}
 \noLine
 \UIC{$\Gamma \vdash \bone\land \bthree \Pto (\B C^{q}_{b}A)$}
  \AXC{$\Theta[\bthree \mapsto \bone \land \bthree]$}
 \noLine
 \UIC{$\Gamma, A\vdash\bone \land  \bthree \Pto B$}
 \RL{($\B C$E$_{2}$)}
 \BIC{$\Gamma \vdash \bone\land \bthree \Pto (\B C^{qs}_{b}B)$}
  \AXC{$\bone \vDash D_{a}^{i}(\btwo,\bthree)$}
 \RL{($\mathsf m$)}
 \TIC{$\Gamma,A\vdash \bone \Pto (\B C^{qs}_{b}B)$}
 \DP
$
\end{tabular}
\end{minipage}
}
\end{minipage}
}
\caption{Permutative rules of $\NDCPLL$.}
\label{fig:normaperma}
\end{figure}

\subsection{A ``CbN'' Proof-System.}\label{app:CbNProofSystem}

The CHC described in Section \ref{section5} relates the proof-system $\NDCPLL$ with the type system $\LCPL$. 
In this subsection we describe a ``CbN'' variant $\NDCPLLV$ of the proof-system $\NDCPLL$, for which it is possible to describe a CHC with the type system $\TCI$.

In the correspondence from Section \ref{section5}, CbV application $\{t\}u$ plays a fundamental role, as it translates the elimination rule of the counting quantifier. To obtain a translation into $\EVL$ we thus need to restrict the rule ($\B C$E). 

A sequent of $\NDCPLLV$ is of the form $\Phi;\Gamma \vdash \bone \Pto A$, where $\Phi$ and $\Gamma$ are two sets of formulas, and $\Phi$ contains \emph{at most} one formula. 
The fundamental intuition is that the formula, if any, in $\Phi$, has to be used \emph{linearly} in the proof.

The rules of $\NDCPLLV$ are illustrated in Fig.~\ref{fig:rulescbn}. 
Observe that, contrarily to $\NDCPLL$ (see Section \ref{section3}), the rules include a ``multiplication rule'' ($\B C\times$) to passes from $\B C^{q}\B C^{s}A$ to $\B C^{qs}A$.

$\NDCPLLV$ proves \emph{less} formulas than $\NDCPLL$. Indeed, the restricted ($\B C$E)-rule allows to deduce
$\B C^{q}B$ from $\B C^{q}A$ only when $B$ can be deduced \emph{linearly} from $A$. For example, one cannot reproduce in $\NDCPLLV$ the proof of $\B C^{q}(A\to A)\to (A\to \B C^{q}A)$ illustrated in Section \ref{section3}, since the hypothesis
$\B C^{q}(A\to A)$, which is used as major premiss in a ($\B C$E)-rule, should be used \emph{twice}.
From the programming perspective, this means that one cannot encode in $\NDCPLLV$ the non-linear ``CbV'' function
$\lambda y.\lambda x.\{ \lambda f.f(fx)\}y$.
For similar reasons, it seems that one cannot prove $\B C^{q}A\to (A \to A \to B) \to \B C^{q}B$ in $\NDCPLLV$, while this can be proved in $\NDCPLL$ as shown in Fig.~\ref{fig:proofnew}. From the programming perspective, this means that one cannot encode the 
non-linear ``CbV'' function $\lambda x. \lambda y.\{\lambda y. yxx\}x$.

\begin{figure}
\fbox{
\begin{minipage}{0.98\textwidth}
\adjustbox{scale=0.9}{
\begin{minipage}{\textwidth}
\begin{center}
$
\AXC{$\B C^{q}A, A\to A\to A \vdash \B C^{q}A$}
\AXC{$A, A\to A\to A\vdash A\to A\to B$}
\AXC{$A, A\to A\to A\vdash A$}
\RL{($\to $E)}
\BIC{$A, A\to A\to A\vdash A\to B$}
\AXC{$A, A\to A\to A\vdash A$}
\RL{($\to $E)}
\BIC{$A, A\to A\to A\vdash B$}
\RL{($\B C$E)}
\BIC{$\B C^{q}A, A\to A\to A\vdash \B C^{q}B$}
\doubleLine
\RL{($\to$I)}
\UIC{$\vdash \B C^{q}A\to (A\to A\to A)\to \B C^{q}B$}
\DP
$
\end{center}
\end{minipage}
}
\end{minipage}
}
\caption{Proof in $\NDCPL$ of $\B C^{q}A\to (A \to A \to A) \to \B C^{q}A$.}
\label{fig:proofnew}
\end{figure}

The normalization steps are as in $\NDCPLL$, except for ($\B C$I/$\B C$E), which is illustrated in Fig.~\ref{fig:celimcbn} and exploits the admissibility of the following rule:
\begin{align*}
\AXC{$; \Gamma \vdash \bone \Pto  \B C^{s_{1}*\dots *s_{n}} A$}
\AXC{$A; \Gamma\vdash \bone \Pto C$}
\RL{(subst$^{*}$)}
\BIC{$; \Gamma \vdash \bone \Pto  \B C^{s_{1}*\dots *s_{n}}C$}
\DP 
\end{align*}

Notice that the ($\B C$I/$\B C$E)-step now includes a finite number of internal ``multiplications'' ($\B C\times$).

\begin{figure}[t]
\fbox{
\begin{minipage}{0.98\textwidth}
\adjustbox{scale=0.85}{
\begin{minipage}{\textwidth}
\begin{center}
Identity Rules 
$$
\AXC{}
\RL{(id)}
\UIC{$; \Gamma,A \vdash \bone \Pto A$}
\DP
\qquad\qquad
\AXC{}
\RL{(id$_{\mathrm{lin}}$)}
\UIC{$ A;\Gamma  \vdash \bone \Pto A$}
\DP
$$

\bigskip

Structural Rules
$$
\AXC{$\bone \vDash \BOT$}
\RL{($\BOT$)}
\UIC{$;\Gamma \vdash \bone \Pto A$}
\DP
$$

$$
\AXC{$;\Gamma \vdash \btwo \Pto A$}
\AXC{$;\Gamma \vdash \bthree \Pto A$}
\AXC{$\bone \vDash (\btwo \land \bvar_{a}^{i})\lor (\bthree \land \lnot \bvar_{a}^{i})$}
\RL{($\mathsf m$)}
\TIC{$;\Gamma \vdash \bone \Pto A$}
\DP
$$

\bigskip

Logical Rules
$$
\AXC{$\Phi; \Gamma, A \vdash \bone \Pto B$}
\RL{($\to$I)}
\UIC{$\Phi;\Gamma \vdash \bone \Pto (A\to B)$}
\DP
\qquad\qquad 
\AXC{$\Phi; \Gamma \vdash \bone \Pto (A\to B)$}
\AXC{$; \Gamma \vdash \bone \Pto A$}
\RL{($\to$E)}
\BIC{$\Phi;\Gamma \vdash \bone \Pto B$}
\DP
$$

\bigskip

Counting Rules
$$
\AXC{$;\Gamma \vdash \bone \land \bthree\Pto A$}
\AXC{$\mu(\bthree)\geq q$}
\RL{($\BOX$I)}
\BIC{$;\Gamma \vdash \bone \Pto \BOX^{q}A$}
\DP
\qquad\qquad
\AXC{$\Phi;\Gamma \vdash \bone \Pto \BOX^{q}A$}
\AXC{$ A;\Gamma \vdash \bone \Pto B$}
\RL{($\BOX$E)}
\BIC{$\Phi;\Gamma \vdash \bone \Pto \BOX^{q}B$}
\DP
$$

$$
\AXC{$ \Phi;\Gamma \vdash \bone \Pto \BOX^{q}\BOX^{s}A$}
\RL{($\BOX_{\times}$)}
\UIC{$\Phi;\Gamma \vdash \bone \Pto \BOX^{qs}A$}
\DP
$$
%

\end{center}
\end{minipage}
}
\end{minipage}
}
\caption{Rules of $\NDCPLLV$.}
\label{fig:rulescbn}
\end{figure}

\begin{figure*}[t]
%
%
\fbox{
\begin{subfigure}{\textwidth}
\begin{minipage}{0.98\textwidth}
\adjustbox{scale=0.75}{
\begin{tabular}{c  c  c}
\AXC{$\Sigma$}
\noLine
\UIC{$;\Gamma \vdash \bone \land \bthree  \Pto \B C^{s_{1}*\dots *s_{n}}A$}
\AXC{$\mu(\bthree  )\geq q$}
\RL{($\B C$I)}
\BIC{$;\Gamma \vdash \bone \Pto \B C^{q*s_{1}*\dots*s_{n}}A$}
\doubleLine
\RL{($\B C\times$)}
\UIC{$;\Gamma \vdash \bone \Pto \B C^{q\prod_{i}s_{i}}A$}
\AXC{$\Pi$}
\noLine
\UIC{$A;\Gamma\vdash \bone \Pto B$}
\RL{($\B C$E)}
\BIC{$;\Gamma \vdash \bone \Pto \B C^{q\prod_{i}s_{i}}B$}
\DP
&
$\leadsto$
&
\AXC{$\Sigma$}
\noLine
\UIC{$;\Gamma \vdash \bone \land \bthree  \Pto  \B C^{s_{1}*\dots *s_{n}} A$}
\AXC{$\Pi${\small$[\bone\mapsto \bone\land \bthree]$}}
\noLine
\UIC{$A;\Gamma \vdash \bone\land \bthree  \Pto B$}
\RL{(subst$^{*}$)}
\BIC{$;\Gamma \vdash \bone \land \bthree  \Pto  \B C^{s_{1}*\dots *s_{n}} B$}
\AXC{$\mu(\bthree)\geq q$}
\RL{($\B C$I)}
\BIC{$;\Gamma \vdash \bone\Pto  \B C^{q*s_{1}*\dots *s_{n}}B$}
\doubleLine
\RL{($\B C\times$)}
\UIC{$;\Gamma \vdash \bone \Pto \B C^{q\prod_{i}s_{i}}B$}
\DP
\end{tabular}
}
\end{minipage}
\caption{($\B C$I)/($\B C$E).}
\label{fig:cutelim2n}
\end{subfigure}
}
\caption{Normalization steps of $\NDCPLLV$.}
\label{fig:celimcbn}
\end{figure*}

The translation from $\NDCPLLV$ to $\TCI$ relies on properties of
\emph{head contexts} in $\EVL$. These are defined by the grammar below:
$$
\TT H[\ ]::= [\ ]\mid \lambda x.\TT H[\ ]\mid  \TT H[\ ]u 
$$

The fundamental property of head contexts is that they naturally behave as CbV functions, due to the following lemma:

\begin{lemma}\label{lemma:headcontext}
For any head context $\TT H[\ ]$ and term $t$, 
$
\TT H[\nu a.t] \ \redperm^{*} \ \nu a. \TT H[t]
$.
\end{lemma}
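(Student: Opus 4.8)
The plan is to argue by a straightforward structural induction on the head context $\TT H[\ ]$, using the fact that $\redperm$ is a congruence together with a single application of a generator-permutation rule in each inductive step.

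First, in the base case $\TT H[\ ]=[\ ]$ there is nothing to prove, since $\TT H[\nu a.t]=\nu a.t=\nu a.\TT H[t]$; the two sides are literally equal, so a zero-step reduction suffices.

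For the inductive steps I would rely on the closure of $\redperm$ under the term-forming operations, so that $s\redperm^{*}s'$ implies both $\lambda x.s\redperm^{*}\lambda x.s'$ and $su\redperm^{*}s'u$. In the abstraction case $\TT H[\ ]=\lambda x.\TT H'[\ ]$, the induction hypothesis gives $\TT H'[\nu a.t]\redperm^{*}\nu a.\TT H'[t]$; applying congruence under $\lambda x$ and then the rule \eqref{eq:nulambda} (namely $\lambda x.\nu a.s\redperm\nu a.\lambda x.s$) yields $\lambda x.\TT H'[\nu a.t]\redperm^{*}\lambda x.\nu a.\TT H'[t]\redperm\nu a.\lambda x.\TT H'[t]=\nu a.\TT H[t]$. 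In the application case $\TT H[\ ]=\TT H'[\ ]u$, the induction hypothesis together with congruence under $[\ ]u$ gives $\TT H'[\nu a.t]u\redperm^{*}(\nu a.\TT H'[t])u$, and one application of the rule \eqref{eq:nuapp} (namely $(\nu a.s)u\redperm\nu a.(su)$) produces $\nu a.(\TT H'[t]u)=\nu a.\TT H[t]$. Transitivity of $\redperm^{*}$ then chains these steps together.

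The only point requiring care is the handling of names: pulling $\nu a$ across an appended argument is sound only when $a$ is not accidentally captured, i.e.\ when $a$ does not occur free in the argument $u$ introduced by the head context (the abstraction case is unproblematic, since the bound term-variable $x$ lives in a different syntactic class from the name $a$). I would dispatch this, as is standard, by invoking the variable/name convention: since $a$ is bound by the generator $\nu a$ in $\TT H[\nu a.t]$, we may assume via $\alpha$-renaming that $a$ is chosen fresh for all arguments occurring in $\TT H[\ ]$, making every application of \eqref{eq:nuapp} legitimate. This freshness bookkeeping is the main, albeit routine, obstacle; the reductions themselves are immediate once it is in place.
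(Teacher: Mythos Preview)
Your proof is correct and follows essentially the same approach as the paper: a structural induction on $\TT H[\ ]$, using congruence of $\redperm^{*}$ together with a single application of \eqref{eq:nulambda} in the abstraction case and \eqref{eq:nuapp} in the application case. The paper's proof is slightly terser (it does not spell out the freshness bookkeeping you mention), but the argument is the same.
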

\begin{proof}
By induction on $\TT H[\ ]$:
\begin{varitemize}
\item if $\TT H[\ ]=[\ ]$, the claim is immediate;
\item if $\TT H[\ ]=\lambda x.\TT H'[\ ]$, then we have 
$\TT H[\nu a.t]= \lambda x.\TT H'[\nu a.t] \stackrel{\text{IH}}{\redperm^{*}} \lambda x.\nu a.\TT H'[t]\redperm \nu a.\lambda x.\TT H'[t]=\nu a.\TT H[t]$;

\item if $\TT H[\ ]=\TT H'[\ ]u$, then we have 
$\TT H[\nu a.t]= \TT H'[\nu a.t]u \stackrel{\text{IH}}{\redperm^{*}} (\nu a.\TT H'[t])u\redperm \nu a.\TT H'[t]u=\nu a.\TT H[t]$.

\end{varitemize}
\end{proof}

In other words, whenever $t$ is a function of the form $\lambda x.\TT H[x]$ for some head context $\TT H[\ ]$, CbN and CbV application of $t$ coincide, since
$t (\nu a.v)$ and $\nu a.tv$ have the same normal form. 

This property is reflected in the following:
%
%
\begin{lemma}\label{lemma:headsubst}
For any head context $\TT H[\ ]$, the following rule is derivable in $\TCI$:
$$
\AXC{$ \Gamma \vdash t: \bone \Pto \B C^{qs}\sigma$}
\AXC{$x: \B C^{q}\sigma, \Gamma \vdash \TT H[x]: \bone\Pto \B C^{r}\tau$}
\RL{(head-subst)}
\BIC{$\Gamma \vdash \TT H[t]: \bone \Pto\B C^{rs}\tau$}
\DP
$$
\end{lemma}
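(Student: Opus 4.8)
The plan is to show that \textup{(head-subst)} is admissible by induction on the head context $\TT H[\ ]$, examining the derivation of the second premise $x:\B C^q\sigma,\Gamma\vdash \TT H[x]:\bone\Pto\B C^r\tau$ by inverting its last rule. As usual for contexts, I assume the hole variable $x$ is fresh, so it does not occur free in any term sitting beside the hole. In the base case $\TT H[\ ]=[\ ]$ we have $\TT H[x]=x$, so the premise types a variable; since in $\TCI$ a variable can only be typed by \textup{(id)} (possibly wrapped in $(\vee)$'s), its type must be the declared one, forcing $\B C^r\tau=\B C^q\sigma$, i.e.\ $r=q$ and $\tau=\sigma$. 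Then $\TT H[t]=t$ and the goal $\Gamma\vdash t:\bone\Pto\B C^{rs}\tau=\B C^{qs}\sigma$ is literally the first premise.

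For the inductive cases, if $\TT H[\ ]=\lambda y.\TT H'[\ ]$ I invert $(\lambda)$: the type is $\B C^r(\FF s'\To\tau')$ with $\tau=\FF s'\To\tau'$, produced from $x:\B C^q\sigma,\Gamma,y:\FF s'\vdash \TT H'[x]:\bone\Pto\B C^r\tau'$. Weakening the first premise to $\Gamma,y:\FF s'\vdash t:\bone\Pto\B C^{qs}\sigma$ and applying the induction hypothesis to $\TT H'[\ ]$ yields $\Gamma,y:\FF s'\vdash \TT H'[t]:\bone\Pto\B C^{rs}\tau'$, and one application of $(\lambda)$ gives $\Gamma\vdash\lambda y.\TT H'[t]:\bone\Pto\B C^{rs}(\FF s'\To\tau')$ — the extra factor $s$ rides along from the head directly to the output. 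If $\TT H[\ ]=\TT H'[\ ]\,u$ I invert $(@)$, obtaining $x:\B C^q\sigma,\Gamma\vdash \TT H'[x]:\btwo\Pto\B C^r(\FF s'\To\tau)$ and $\Gamma\vdash u:\bthree\Pto\FF s'$ (the declaration of $x$ being vacuous here, as $x\notin\fv{u}$) with $\bone\vDash\btwo\land\bthree$. Using that the rule $(\vDash)$ strengthening the Boolean constraint — the instance $n=1$ of $(\vee)$ — is admissible, I replace both $\btwo$ and $\bthree$ by $\bone$, apply the induction hypothesis to get $\Gamma\vdash \TT H'[t]:\bone\Pto\B C^{rs}(\FF s'\To\tau)$, and close with $(@)$ to reach $\Gamma\vdash\TT H'[t]\,u:\bone\Pto\B C^{rs}\tau$.

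The one genuine subtlety, which I expect to be the main obstacle, is the $(\vee)$ rule: because the derivation of an abstraction or an application may end with $(\vee)$ rather than with the shape-determining rule, the inversions above are not immediate, so I phrase the whole argument as an induction on the derivation of the second premise. In the $(\vee)$ case, with branches $x:\B C^q\sigma,\Gamma\vdash \TT H[x]:\bone_i\Pto\B C^r\tau$ and $\bone\vDash\bigvee_i\bone_i$, I first replace each $\bone_i$ by $\bone\land\bone_i$ (legitimate by $(\vDash)$), which makes the hypothesis $\Gamma\vdash t:\bone\land\bone_i\Pto\B C^{qs}\sigma$ available from $\Gamma\vdash t:\bone\Pto\B C^{qs}\sigma$, again through $(\vDash)$; applying the induction hypothesis branch-by-branch gives $\Gamma\vdash \TT H[t]:\bone\land\bone_i\Pto\B C^{rs}\tau$, and recombining with $(\vee)$ via $\bone\vDash\bigvee_i(\bone\land\bone_i)$ produces the conclusion. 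The rules $(\oplus)$ and $(\mu)$ never occur at the root, since $\TT H[x]$ is always a variable, an abstraction, or an application, never of the form $v\oplus^i_a w$ nor $\nu a.v$; this keeps the case analysis finite. As a sanity check the counting is uniform throughout: the factor $s$ separating $\B C^{qs}$ from $\B C^q$ at the head is transported unchanged to the factor separating $\B C^{rs}$ from $\B C^r$ at the output, matching the semantic content of Lemma~\ref{lemma:headcontext}, namely $\TT H[\nu a.t]\redperm^*\nu a.\TT H[t]$.
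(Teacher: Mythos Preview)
Your proof is correct and follows the same inductive strategy as the paper, namely induction on the head context $\TT H[\ ]$. You are in fact more careful than the paper's own proof: you explicitly handle the case where the second premise ends in $(\vee)$ and you track the distinct Boolean side formulas $\btwo,\bthree$ in the inversion of $(@)$, whereas the paper silently assumes the last rule is determined by the term shape and that all premises carry the same formula $\bone$.
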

\begin{proof}
By induction on $\TT H[\ ]$:
\begin{varitemize}
\item if $\TT H[\ ]=[\ ]$, then it must be $q=r$, and the claim is then immediate;
\item if $\TT H[\ ]= \lambda y.\TT H'[\ ]$, then $\tau= \FF t\To \tau'$ and we must have 
$x:\B C^{q}\sigma, \Gamma,  y: \FF t \vdash \TT H'[x]:\bone \Pto \B C^{q}\tau'$. Then by IH we deduce
$\Gamma, y:\FF t\vdash \TT H'[t]: \bone \Pto \B C^{rs}\tau'$, and finally
$\Gamma\vdash \TT H[t]: \bone \Pto \B C^{rs}\tau$.

\item if $\TT H[\ ]= \TT H'[\ ]u$, then we must have 
$x:\B C^{q}\sigma, \Gamma \vdash \TT H'[x]:\bone \Pto \B C^{q}(\FF t\To \tau)$ and 
$x:\B C^{q}\sigma, \Gamma \vdash u:\bone \Pto \FF t$. Then by IH we deduce
$\Gamma\vdash \TT H'[t]: \bone \Pto \B C^{rs}(\FF t\To \tau)$, and we can thus conclude
$\Gamma\vdash \TT H[t]: \bone \Pto \B C^{rs}\tau$.
\end{varitemize}
\end{proof}

To any formula $A$ of $\ICPLL$, we associate a non-quantified type $\sigma_{A}$ and a positive real $|A|\in(0,1]\cap \BB Q$ of $\TCI$ as follows:
\begin{align*}
\sigma_{\CC p}& :=  o &   |o|:= 1 \\
\sigma_{A\to B}&:= (\B C^{|A|}\sigma_{A})\To \sigma_{B}  & |A\to B|= |B| \\
\sigma_{\B C^{q}A}&:= \sigma_{A} & |\B C^{q}A|= q\cdot |A|
\end{align*}
We let then $\FF s_{A}:= \B C^{|A|}\sigma_{A}$ (observe that $\FF s_{\B C^{q}A}= \B C^{q\cdot |A|}\sigma_{A}$).

The translation of a derivation $\Pi$ of $\Phi; \Gamma \vdash \bone \Pto A$ into a typing derivation 
$D^{\Pi}$ of $\FF s_{\Phi}, \FF s_{\Gamma}\vdash t^{\Pi}: \bone \Pto \FF s_{A}$ in $\TCI$ is illustrated in Fig.~\ref{fig:chcn}, where we exploit the fact that if $\Phi=\{A\}$ is non-empty, 
then $t^{\Pi}=t^{\Pi}[x: \FF s_{A}]$ is a head context (as it can be checked by induction on the construction).
We omit the case of the rule ($\B C_{\times}$), as it follows immediately from the induction hypothesis, since $\FF s_{\B C^{q}\B C^{s}A}=\FF s_{\B C^{qs}A}$.

The stability of the translation under normalization is easily checked using Lemma \ref{lemma:headcontext}.

\begin{figure*}
\fbox{
\begin{minipage}{0.98\textwidth}
\adjustbox{scale=0.75}{
\begin{minipage}{\textwidth}
\begin{center}
\begin{tabular}{c c c}
$
\AXC{}
\RL{(id)}
\UIC{$;\Gamma, A \vdash \bone\Pto A$}
\DP$
 &
 $\mapsto$
  & 
 $
 \AXC{}
 \RL{(id)}
 \UIC{$\FF s_{\Gamma}, y: \FF s_{A} \vdash y:\bone\Pto \FF s_{A}$}
 \DP
 $
  \\ & & \\
  $
  \AXC{}
\RL{(id$_{\mathrm{lin}}$)}
\UIC{$A;\Gamma  \vdash \bone\Pto A$}
\DP$
 &
 $\mapsto$
  & 
 $
 \AXC{}
 \RL{(id)}
 \UIC{$x:\FF s_{A}, \FF s_{\Gamma} \vdash x:\bone\Pto \FF s_{A}$}
 \DP
 $
  \\ & & \\
  $
\AXC{$\bone \vDash \BOT$}
\RL{($\BOT$)}
\UIC{$\ ;\Gamma \vdash \BOT\Pto A$}
\DP$
 &
 $\mapsto$
  & 
 $
 \AXC{$\bone \vDash \BOT$}
 \RL{($\vee$)}
 \UIC{$\FF s_{\Gamma} \vdash \B c:\bone\Pto \FF s_{A}$}
 \DP
 $
  \\ & & \\
\AXC{$\Pi$}
\noLine
\UIC{$; \Gamma \vdash \btwo \Pto A$}
\AXC{$\bone \vDash \btwo$}
\RL{($\vDash$)}
\BIC{$;  \Gamma \vdash \bone \Pto A$}
\DP
& 
$\mapsto $
&
\AXC{$D^{\Pi}$}
\noLine
\UIC{$\FF s_{\Gamma}\vdash t^{\Pi}: \btwo \Pto \FF s_{A}$}
\AXC{$\bone \vDash \btwo$}
\RL{($\vDash$)}
\BIC{$ \FF s_{\Gamma}\vdash t^{\Pi}: \bone \Pto \FF s_{A}$}
\DP
  \\ & & \\
\AXC{$\Pi$}
\noLine
\UIC{$;\Gamma \vdash \btwo \Pto A$}
\AXC{$\Pi'$}
\noLine
\UIC{$;\Gamma \vdash \bthree \Pto A$}
\AXC{$\bone \vDash (\btwo \land \bvar_{a}^{i})\lor (\bthree \land \lnot \bvar_{a}^{i})$}
\RL{($\mathsf m$)}
\TIC{$; \Gamma \vdash \bone \Pto A$}
\DP
& 
$\mapsto $
&
\AXC{$D^{\Pi}$}
\noLine
\UIC{$ \FF s_{\Gamma}\vdash t^{\Pi}: \btwo \Pto A$}
\AXC{$D^{\Pi'}$}
\noLine
\UIC{$\FF s_{\Gamma}\vdash t^{\Pi'}: \bthree \Pto A$}
\AXC{$\bone \vDash (\btwo \land \bvar_{a}^{i})\lor (\bthree \land \lnot \bvar_{a}^{i})$}
\RL{($\oplus$)}
\TIC{$ \FF s_{\Gamma}\vdash t^{\Pi}\oplus_{a}^{i}t^{\Pi'}: \bone \Pto \FF s_{A}$}
\DP
\\ & & \\
\AXC{$\Pi$}
\noLine
\UIC{$\Phi;\Gamma,A \vdash \btwo \Pto B$}
\RL{($\to$I)}
\UIC{$\Phi; \Gamma \vdash \bone \Pto (A\to B)$}
\DP
& 
$\mapsto $
&
\AXC{$D^{\Pi}$}
\noLine
\UIC{$\FF s_{\Phi}, \FF s_{\Gamma},y:\FF s_{A}\vdash t^{\Pi}: \btwo \Pto \B C^{|B|}\sigma_{B}$}
\RL{($\lambda$)}
\UIC{$\FF s_{\Phi},\FF s_{\Gamma}\vdash \lambda y.t^{\Pi}: \bone \Pto \B C^{|B|}(\FF s_{A}\To \sigma_{B})$}
\DP
\\ & & \\
\AXC{$\Pi$}
\noLine
\UIC{$\Phi;\Gamma\vdash \btwo \Pto (A\to B)$}
\AXC{$\Sigma$}
\noLine
\UIC{$\Phi;\Gamma\vdash \bone \Pto A$}
\RL{($\to$E)}
\BIC{$\Phi; \Gamma \vdash \bone \Pto  B$}
\DP
& 
$\mapsto $
&
\AXC{$D^{\Pi}$}
\noLine
\UIC{$\FF s_{\Phi};\FF s_{\Gamma}\vdash t^{\Pi}: \bone \Pto\B C^{|B|} (\FF s_{A}\to \sigma_{B})$}
\AXC{$D^{\Sigma}$}
\noLine
\UIC{$\FF s_{\Phi};\FF s_{\Gamma}\vdash t^{\Sigma}: \bone \Pto \FF s_{A}$}
\RL{($@$)}
\BIC{$\FF s_{\Phi}; \FF s_{\Gamma}\vdash t^{\Pi}t^{\Sigma}: \bone \Pto \B C^{|B|} \sigma_{B}$}
\DP
\\ & & \\
\AXC{$\Pi$}
\noLine
\UIC{$;\Gamma\vdash \bone\land \bthree \Pto A$}
\AXC{$\mu(\bthree)\geq q$}
\RL{($\B C$I)}
\BIC{$;\Gamma \vdash \bone \Pto \B C^{q}A$}
\DP
& 
$\mapsto $
&
\AXC{$D^{\Pi}$}
\noLine
\UIC{$ \FF s_{\Gamma}\vdash t^{\Pi}: \bone \land \bthree \Pto \FF s_{A}$}
\AXC{$\mu(\bthree)\geq q$}
\RL{($\mu$)}
\BIC{$\FF s_{\Gamma}\vdash \nu a.t^{\Pi}: \bone \Pto \FF s_{\B C^{q}A}$}
\DP
\\ & & \\
\AXC{$\Pi$}
\noLine
\UIC{$\Phi;\Gamma\vdash \bone \Pto \B C^{q}A$}
\AXC{$\Sigma$}
\noLine
\UIC{$A;\Gamma\vdash \bone \Pto B$}
\RL{($\B C$E)}
\BIC{$\Phi;\Gamma \vdash \bone \Pto  \B C^{q}B$}
\DP
& 
$\mapsto $
&
\AXC{$D^{\Pi}$}
\noLine
\UIC{$\FF s_{\Phi}, \FF s_{\Gamma}\vdash t^{\Pi}: \bone \Pto \B C^{q\cdot |A|} \sigma_{A}$}
\AXC{$D^{\Sigma}$}
\noLine
\UIC{$x: \B C^{|A|}\sigma_{A},\FF s_{\Gamma}\vdash t^{\Sigma}[x]: \bone \Pto  \B C^{ |B|}\sigma_{B}$}
\RL{(head-subst)}
\BIC{$\FF s_{\Phi}, \FF s_{\Gamma}\vdash  t^{\Sigma}[t^{\Pi}]: \bone \Pto  \B C^{q\cdot|B|}\sigma_{B}$}
\DP
\end{tabular}
\end{center}
\end{minipage}
}
\end{minipage}
}
\caption{Translation $\Pi\mapsto D^{\Pi}$ from $\NDCPLLV$ to $\TCI$.}
\label{fig:chcn}
\end{figure*}

\section{Details about the Probabilistic Event $\lambda$-Calculus.}\label{app:PElambda}

\subsection{The Language $\EVL$.}\label{subs:evl}

Let us first comment on how Theorem \ref{thm:confluence} (i.e.~confluence and strong normalization of $\redperm$, and confluence of $\redall$) follows from 
confluence of $\redall$ in the calculus from \cite{DLGH}.
The $\lambda$-calculus in \cite{DLGH} slightly differs from our presentation of $\EVL$, since choice operators do not depend on indexes $i \in \mathbb N$. However, if $\varphi$ is any bijection from $\mathbb N^{2}$ to $\mathbb N$, 
one can define an invertible embedding $t\mapsto t^{\varphi}$ from $\EVL$ to the calculus in \cite{DLGH} by replacing 
$t\oplus^{i}_{a}u$ with $t^{\varphi} \stackrel{\varphi(a,i)}{\oplus} u^{\varphi}$ and 
$(\nu a.t)^{\varphi}= \nu \varphi(a,0).\dots.\nu \varphi(a,\mathsf{ord}_{a}(t)).t^{\varphi}$, where $\mathsf{ord}_{a}(t)$ is the 
maximum $i$ s.t. $\oplus_{a}^{i}$ occurs in $t$.
Since the permutative rules in \cite{DLGH} translate into those in Fig.~\ref{fig:permutations} under this translation, the results from \cite{DLGH} can be transported to our language.

We now discuss PNFs in $\EVL$.
\begin{definition}\label{def:satree}
Let $S$ be a set of name-closed $\lambda$-terms and $a$ be a name.
For all $i\in \mathbb N\cup\{-1\}$, the set of \emph{$(S,a)$-trees of level $i$} is defined as follows:
\begin{varitemize}
\item any $t\in S$ is a $(S,a)$-tree of level $-1$;
\item if $t,u$ are $(S,a)$-trees of level $j$ and $k$, respectively, and $j,k<i$, then $t\oplus_{a}^{i}u$ is a $(S,a)$-tree of level $i$.
\end{varitemize}
%
The \emph{support} of a $(S,a)$-tree $t$, indicated as $\mathrm{Supp}(t)$, is the finite set of terms in $S$ which are leaves of $t$.
\end{definition}

%

\begin{definition}\label{def:pnf}
The sets $\mathcal T$ and $\mathcal V$ of name-closed terms are defined inductively as follows:
\begin{varitemize}
\item all variable $x\in \mathcal V$;

\item if $t\in \mathcal V$, $\lambda x.t\in \mathcal V$;
\item if $t\in \mathcal V$ and $u\in \mathcal T$, then $tu\in \mathcal V$;

\item if $t\in \mathcal V$, then $t\in \mathcal T$;


\item if $ t$ is a $(\mathcal T, a)$-tree, then $\nu a. t\in \mathcal T$.

\end{varitemize}
\end{definition}

\begin{lemma}\label{lemma:enne}
For all name-closed term $t\in \mathcal T$, $t\in \mathcal V$ iff it does not start with $\nu$.
\end{lemma}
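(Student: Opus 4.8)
The plan is to proceed by a direct case analysis on the inductive generation of $\mathcal T$ and $\mathcal V$ given in Definition~\ref{def:pnf}, exploiting the fact that the grammar of $\EVL$ is unambiguous, so that every term has a unique outermost constructor among $x$, $\lambda x.(-)$, $(-)(-)$, $(-)\oplus_a^i(-)$ and $\nu a.(-)$. By ``$t$ starts with $\nu$'' I understand that this outermost constructor is $\nu a.(-)$.

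For the left-to-right direction I would show that \emph{no} element of $\mathcal V$ starts with $\nu$, inspecting the three clauses that generate $\mathcal V$. A variable $x$ has a variable as outermost constructor; an abstraction $\lambda x.t$ has outermost constructor $\lambda$; and an application $tu$ (with $t\in\mathcal V$, $u\in\mathcal T$) has an application as outermost constructor. In each case this constructor differs from $\nu a.(-)$, so $t$ does not start with $\nu$. Note that the induction hypothesis is not even needed here, since the conclusion depends only on the shape of the last clause applied.

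For the right-to-left direction the key observation is that $\mathcal T$ is generated by exactly two clauses: either $t\in\mathcal V$, or $t=\nu a.t'$ for some $(\mathcal T,a)$-tree $t'$ (in which case $t$ does start with $\nu$). Hence, if $t\in\mathcal T$ does not start with $\nu$, the second clause is excluded, and $t$ must have entered $\mathcal T$ via the first clause, that is, $t\in\mathcal V$.

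The argument is essentially immediate, and I do not anticipate a genuine obstacle. The only point requiring mild care is to make precise the informal phrase ``does not start with $\nu$'' via the unambiguity of the term grammar, and to check that the wrapping $\nu a.(-)$ in the last clause of Definition~\ref{def:pnf} is really visible at top level. This indeed holds, since by Definition~\ref{def:satree} a $(\mathcal T,a)$-tree $t'$ is either a leaf in $\mathcal T$ or of the form $t_1\oplus_a^i t_2$, so that $\nu a.t'$ is literally a $\nu$-prefixed term.
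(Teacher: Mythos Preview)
Your proposal is correct and follows essentially the same idea as the paper. For the converse direction you argue by direct inversion on the two introduction clauses for $\mathcal T$, while the paper instead does a case analysis on the outermost constructor of $t$ (variable, $\lambda$, application, $\oplus$, ruling out the last via name-closedness); your formulation is slightly more streamlined since it avoids the case split and the superfluous appeal to the induction hypothesis in the $\lambda$-case, but the underlying observation is the same.
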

\begin{proof}
First observe that if $t=\nu a.b$, then $t\notin \mathcal V$. For the converse direction, we argue by induction on $t\neq \nu a.b$:
	\begin{varitemize}
	\item if $t=x$ then $t\in \mathcal V$;
	\item if $t= \lambda x.u$ then by IH $u\in \mathcal V$, so $t\in \mathcal V$;
	\item if $t=uv$, then the only possibility is that $t\in \mathcal V$;
	\item if $t=u\oplus_{a}^{i}v$, then $t$ is not name-closed, again the hypothesis.
	\end{varitemize}
\end{proof}


For all term $t$ and $a\in \FN(t)$, we let $I_{a}(t)$ be the maximum index $i$ such that $\oplus_{a}^{i}$ occurs in $t$, and $I_{a}(t)=-1$ if $\oplus_{a}^{i}$ does not occur in $t$ for all index $i$.

We now show that Definition \ref{def:pnf} precisely captures PNF.
\begin{lemma}\label{lemma:PNF}
A name-closed term $t$ is in PNF iff  $t\in \mathcal T$.

\end{lemma}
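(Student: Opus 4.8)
The plan is to prove the two implications separately: the ``if'' direction ($t\in\mathcal T \Rightarrow t$ in PNF) by a mutual induction on Definitions \ref{def:satree} and \ref{def:pnf}, and the ``only if'' direction (name-closed PNF $\Rightarrow t\in\mathcal T$) by structural induction on $t$, leaning on Lemma \ref{lemma:enne} to pass between $\mathcal T$ and $\mathcal V$.

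For $t\in\mathcal T \Rightarrow t$ in PNF, I would run the mutual induction, checking for each generator that no left-hand side of Fig.~\ref{fig:permutations} matches at the head and appealing to the induction hypothesis for the immediate subterms. The key observations are that a term in $\mathcal V$ never starts with $\oplus$ or $\nu$ (immediate from the grammar) and a term in $\mathcal T$ never starts with $\oplus$; these two facts rule out exactly the ``shape'' rules $(\oplus\lambda)$, $(\oplus\mathsf f)$, $(\oplus\mathsf a)$, $(\nu\lambda)$, $(\nu\mathsf f)$ and $(\oplus\nu)$ at the relevant positions. For a generator-headed term $\nu a.t$ with $t$ an $(\mathcal T,a)$-tree I would observe that, since the tree genuinely uses $a$, one has $a\in\FN(t)$, so $(\lnot\nu)$ does not apply; that all choices carry the same name $a$, so $(\oplus\nu)$ cannot fire; and that the strict level condition of Definition \ref{def:satree} forbids two stacked same-name choices with equal index and fixes their relative order, which is precisely what excludes $(\mathsf i)$, $(\mathsf c_{1})$, $(\mathsf c_{2})$, $(\oplus\oplus_{1})$ and $(\oplus\oplus_{2})$.

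For the converse, given a name-closed PNF $t$ I argue by induction on its structure. If $t=x$ then $t\in\mathcal V\subseteq\mathcal T$; if $t=\lambda x.u$, then $(\oplus\lambda)$- and $(\nu\lambda)$-normality force $u$ not to start with $\oplus$ or $\nu$, so by the induction hypothesis and Lemma \ref{lemma:enne} $u\in\mathcal V$, whence $t\in\mathcal V$; if $t=uv$, then $(\nu\mathsf f)$- and $(\oplus\mathsf f)$-normality give $u\in\mathcal V$ (via Lemma \ref{lemma:enne}), while $(\oplus\mathsf a)$-normality ensures the argument $v\in\mathcal T$ does not start with $\oplus$, so $t=uv\in\mathcal V$. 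A term $u\oplus_a^i v$ is impossible, as it is not name-closed. The central case is $t=\nu a.u$, where I would show by an auxiliary induction on $u$ that the PNF $u$, which satisfies $\FN(u)\subseteq\{a\}$, is an $(\mathcal T,a)$-tree: $(\lnot\nu)$-normality gives $a\in\FN(u)$; the shape rules force every occurrence of $a$ to sit at an $\oplus_a$ node, while every maximal non-$\oplus_a$-headed subterm is a name-closed PNF and hence lies in $\mathcal T$ by the outer induction hypothesis; and the rules $(\mathsf i)$, $(\mathsf c_{1})$, $(\mathsf c_{2})$, $(\oplus\oplus_{1})$, $(\oplus\oplus_{2})$ impose exactly the strict index ordering demanded by the level condition.

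The main obstacle is this last case: matching the tree bookkeeping of Definition \ref{def:satree} to the rewrite system amounts to showing that $\redperm$-normality of a single-name term forces the $\oplus_a$-choices to be completely factored and ordered by index, with name-closed leaves. The two delicate points are (i) confirming that the order relation $(a,i)<(b,j)$, restricted to the single name $a$, coincides with the index stratification used to define the levels of $(\mathcal T,a)$-trees, so that the orientation of $(\oplus\oplus_{1})$ and $(\oplus\oplus_{2})$ exactly matches the tree layering; and (ii) excluding the degenerate case where $u$ is a bare leaf, which $(\lnot\nu)$-normality rules out, so that $\nu a.u$ is genuinely tree-headed. Both reduce to finite combinatorial checks once the case analysis is in place.
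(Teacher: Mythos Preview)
Your proposal is correct and follows essentially the same approach as the paper: both directions by structural induction, Lemma~\ref{lemma:enne} to pass between $\mathcal T$ and $\mathcal V$, and an auxiliary sub-induction for the $\nu a.u$ case. The one place the paper is more explicit is that sub-induction, where it strengthens the hypothesis to ``$u$ is a $(\mathcal T,a)$-tree \emph{of level $I_a(u)$}'' (with $I_a(u)$ the maximum $i$ such that $\oplus_a^i$ occurs in $u$); this named invariant is exactly what makes the level check go through when recursing into $u_1\oplus_a^i u_2$, and it is what your ``index stratification'' remark is reaching for.
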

\begin{proof}
\begin{description}
\item[($\To$)]
We argue by induction on $t$. 
If $t$ has no bound name, then $t$ is obviously in $\mathcal T$. Otherwise:
\begin{varitemize}
\item if $t= \lambda x.u$, then $u$ is also name-closed. Hence, by IH $u\in \mathcal T$; observe that $u$ cannot start  with $\nu$ (as $t$ would not be normal) so by Lemma \ref{lemma:enne}, $u\in \mathcal V$, which implies $t\in \mathcal T$; 


\item if $t=uv$, then $u$ and $v$ are both name-closed, and so by induction $u,v\in \mathcal T$; if $u$ started with $\nu$, then $t$ would not be normal, hence, by Lemma \ref{lemma:enne}, $u\in \mathcal V$, and we conclude then that  $t\in \mathcal T$.

\item if $t=u\oplus_{a}^{i}v$, then it cannot be name-closed;

\item if $t= \nu a.u$, then we show, by a sub-induction on $u$, that $u$ is a $(\mathcal T, a)$-tree of level $I_{a}(u)$: first note that $I_{a}(u)$ cannot be $-1$, since otherwise $\nu a.u \redperm u$, so $u$ would not be normal. We now consider all possible cases for $u$:

		\begin{varitemize}
		\item $u$ cannot be a variable, or $I_{a}(u)$ would be $-1$;
		\item If $u=\lambda x.v$, then $I_{a}(v)=I_{a}(u)$ and so by sub-IH $v$ is a $(\mathcal T, a)$-tree of level $I_{a}(u)$, which implies that $u= \lambda x. v_{1}\oplus_{a}^{I_{a}(u)} v_{2}$, which is not normal. Absurd.
		
		\item if $u= vw$, then let $J=I_{a}(u)= \max\{ I_{a}(v),I_{a}(w)\}$ cannot be $-1$, since otherwise $I_{a}(u)$ would be $-1$. Hence $J\geq 0$ is either $I_{a}(v)$ or $I_{a}(w)$. We consider the two cases separatedly:
			\begin{varitemize}
			\item if $J=I_{a}(v)$, then by sub-IH, $v= v_{1}\oplus_{a}^{J}v_{2}$, and thus 
			$u= ( v_{1}\oplus_{a}^{J}v_{2})w$ is not normal;
			\item if $J=I_{a}(w)$, then by sub-IH, $w= w_{1}\oplus_{a}^{J}w_{2}$, and thus 
			$u= v( w_{1}\oplus_{a}^{J}w_{2})$ is not normal.
			
			\end{varitemize}
		In any case we obtain an absurd conclusion.
		
		\item if $u= u_{1}\oplus_{a}^{i}u_{2}$, then if $u_{1},u_{2}$ are both in $\mathcal V$, we are done, since  $u$ is a $(\mathcal T, a)$-tree of level $i=I_{a}(u)$. Otherwise, if $i< I_{a}(u)$, then $J=I_{a}(u)= \max\{ I_{a}(v), I_{a}(w)\}$, so we consider 	two cases:
		\begin{varitemize}
		\item if $J=I_{a}(v)>i$, then by sub-IH, $v= v_{1}\oplus_{a}^{J}v_{2}$, and thus 
		$u= ( v_{1}\oplus_{a}^{J}v_{2})\oplus_{a}^{i}u_{2}$ is not normal;
		\item if $J=I_{a}(w)>i$, then by sub-IH, $w= w_{1}\oplus_{a}^{J}w_{2}$, and thus 
		$u= v\oplus_{a}^{i}( w_{1}\oplus_{a}^{J}w_{2})$ is not normal.
			
		\end{varitemize}
		We conclude then that $i=I_{a}(u)$;  we must then show that $I_{a}(v),I_{a}(w)<i$. Suppose first $I_{a}(v)\geq i$, then $( v_{1}\oplus_{a}^{J}v_{2})\oplus_{a}^{i}u_{2}$ is not normal. In a similar way we can show that 
		$I_{a}(w)<i$. 

		\item if $u=\nu b.v$, then $I_{a}(u)=I_{a}(v)$, so by sub-IH $v= v_{1}\oplus_{a}^{I_{a}(u)}v_{2}$, and we conclude that $u=\nu b.  v_{1}\oplus_{a}^{i}v_{2}$ is not normal.
		
		\end{varitemize}
	
\end{varitemize}

\item[($\Leftarrow$)] It suffices to check by induction on $t\in \mathcal T$ that it is in PNF.

\end{description}
\end{proof}

\begin{corollary}\label{cor:pnf}
A name-closed term of the form $\nu a.t$ is in PNF iff $t$ is a $(\mathcal T, a)$-tree of level $I_{a}(t)\geq 0$.
\end{corollary}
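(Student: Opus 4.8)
The plan is to derive the statement almost directly from Lemma~\ref{lemma:PNF}, which already identifies PNFs with the inductively defined set $\mathcal T$, using Lemma~\ref{lemma:enne} to discard the $\mathcal V$-clause and an explicit appeal to the garbage-collection rule \eqref{eq:notnu} to pin down the level. Throughout I would exploit the fact that for a $(\mathcal T,a)$-tree the root index is maximal (forced by the constraint $j,k<i$ in Definition~\ref{def:satree}), so that the \emph{level} of such a tree coincides with $I_a(t)$.

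For the left-to-right direction, suppose $\nu a.t$ is name-closed and in PNF. By Lemma~\ref{lemma:PNF} we have $\nu a.t\in\mathcal T$, and since $\nu a.t$ starts with $\nu$, Lemma~\ref{lemma:enne} gives $\nu a.t\notin\mathcal V$. Inspecting the clauses of Definition~\ref{def:pnf}, the only one that can place a $\nu$-headed term in $\mathcal T$ without going through $\mathcal V$ is the last one, so $t$ must be a $(\mathcal T,a)$-tree. It then remains to exclude level $-1$: if $I_a(t)=-1$, then no choice operator $\oplus_a^i$ occurs in $t$, hence $a\notin\FN(t)$, and rule \eqref{eq:notnu} yields $\nu a.t\redperm t$, contradicting normality; so the level, being $I_a(t)$, is $\geq 0$.

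For the converse, if $t$ is a $(\mathcal T,a)$-tree of level $I_a(t)\geq 0$, then the last clause of Definition~\ref{def:pnf} gives $\nu a.t\in\mathcal T$, and Lemma~\ref{lemma:PNF} concludes that $\nu a.t$ is a PNF; alternatively one verifies directly by induction on the tree that it carries no permutative redex, the leaves being PNFs by Lemma~\ref{lemma:PNF}, the strict inequalities $j,k<i$ blocking the contractions $(\mathsf{i}),(\mathsf{c}_{1}),(\mathsf{c}_{2})$ and the $\oplus\oplus$-permutations, and $I_a(t)\geq 0$ ensuring $a\in\FN(t)$ so that \eqref{eq:notnu} does not fire. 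The only genuinely delicate point, and the step I expect to require the most care, is the role of the level condition: Definition~\ref{def:satree} admits bare leaves as trees of level $-1$, which would naively let $\nu a.t$ with a trivial tree enter $\mathcal T$, whereas such terms are precisely those eliminated by \eqref{eq:notnu}. The real content of the corollary is exactly this ``$\geq 0$'' refinement, so the \eqref{eq:notnu}-argument above is the crux rather than a mere formality.
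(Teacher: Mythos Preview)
Your proposal is correct and is essentially the argument the paper has in mind: the corollary is stated without its own proof because the content is already contained in the $\nu a.u$ case of the proof of Lemma~\ref{lemma:PNF}, where one shows by sub-induction that $u$ is a $(\mathcal T,a)$-tree of level $I_a(u)$ and excludes $I_a(u)=-1$ via the rule \eqref{eq:notnu}. Your extraction of this argument through Lemma~\ref{lemma:PNF}, Lemma~\ref{lemma:enne}, and a case analysis on Definition~\ref{def:pnf} is faithful to the paper's reasoning, and your identification of the crux (the ``$\geq 0$'' refinement via \eqref{eq:notnu}) is exactly right.
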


Let us conclude by discussing \emph{randomized trees} and \emph{head-reduction}. 
With any term $t$ we associate a labeled finitely branching tree $RBT(t)$ as follows:

\begin{varitemize}

\item if $t\redall^{*}_{\mathsf h}u$, where $u=\lambda x_{1}.\dots.\lambda x_{n}.yu_{1}\dots u_{n}$ is a HNV, then $RBT(t)$ only consists of one node labeled $u$;

\item if $t\redall^{*}_{\mathsf h}u$, where $u$ is a PNF of the form $\nu a.T$, $T$ is a $(\CC T, a)$-tree and $\mathrm{supp}(T)=\langle u_{1},\dots, u_{n}\rangle$, then $RBT(t)$ has root labeled $\nu a.T$ and coincides with the syntactic tree of $T$, with leaves replaced by $RBT(u_{1}),\dots, RBT(u_{n})$;


\item otherwise, $RBT(t)$ has a root labeled $\Omega$ and no sub-trees.

\end{varitemize}

%
%
%
\begin{proposition}
If $t$ reduces to $u$ by either $\redperm$ or head $\beta$-reduction, then $RBT(t)\sqsubseteq RBT(u)$.
\end{proposition}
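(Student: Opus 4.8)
The plan is to reduce the statement to a single head-reduction step and then analyse how such a step affects the tree. Since $\redall_{\mathsf h}$ is by definition $\redperm$ together with the head $\beta$-rule, both reductions mentioned in the statement are instances of $t\redall_{\mathsf h}u$, so it suffices to prove $RBT(t)\sqsubseteq RBT(u)$ whenever $t\redall_{\mathsf h}u$. First I would recall that the approximation order $\sqsubseteq$ on randomized Böhm trees has the $\Omega$-labelled single node as least element and refines a tree only by replacing $\Omega$-leaves with larger subtrees; in particular it is monotone with respect to the tree-forming operation that takes a root $\nu a.T$ and plugs subtrees into the leaves of $T$. The argument then splits according to whether $t$ head-converges, i.e.\ whether some head-reduction sequence from $t$ reaches a HNV or a PNF of the form $\nu a.T$.

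If $t$ head-diverges, then $RBT(t)=\Omega$ and the inequality is immediate, $\Omega$ being the least element. If $t$ head-converges, the crucial point is that a single step $t\redall_{\mathsf h}u$ can neither destroy convergence nor change the head normal form reached: if $t\redall_{\mathsf h}^{*}w$ with $w$ a HNV or a PNF, then also $u\redall_{\mathsf h}^{*}w$. I would derive this commutation property from the confluence of $\redall$ (Theorem~\ref{thm:confluence}) together with the strong normalization and confluence of $\redperm$. Concretely, a $\redperm$-step and a head $\beta$-step can always be permuted so that the two results have a common head reduct, while two head $\beta$-redexes sitting in distinct branches of a randomized context occupy disjoint positions and hence commute outright; consequently the first HNV/PNF exposed along any head-reduction sequence does not depend on the sequence chosen.

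With this in hand, the spine of $RBT(t)$ is pinned down by the unique first HNV/PNF $w$, and the step $t\redall_{\mathsf h}u$ falls into two cases. If it acts on the spine, i.e.\ before $w$ is exposed, then by the commutation property $u\redall_{\mathsf h}^{*}w$ reaches the same $w$; the support $\langle u_{1},\dots,u_{n}\rangle$ and the recursive subtrees $RBT(u_{i})$ are then literally unchanged, so $RBT(t)=RBT(u)$. If instead $w=\nu a.T$ is already exposed and the step occurs inside one support term $u_{i}$, then—since $u_{i}$ lies under $\nu a$ and under the $\oplus$-path to a leaf—that step is exactly a head reduction $u_{i}\redall_{\mathsf h}u_{i}'$ of the whole term, and $u=\nu a.T'$ where $T'$ is $T$ with $u_{i}$ replaced by $u_{i}'$. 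The root and the remaining subtrees are unchanged, while the $i$-th subtree passes from $RBT(u_{i})$ to $RBT(u_{i}')$; invoking the statement inductively on $u_{i}\redall_{\mathsf h}u_{i}'$ gives $RBT(u_{i})\sqsubseteq RBT(u_{i}')$, and monotonicity of the tree-forming operation yields $RBT(t)\sqsubseteq RBT(u)$.

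Because randomized Böhm trees may be infinite, the inductive appeal above is not literally well-founded, so I would make it rigorous by establishing the refinement at every finite depth: proving $RBT(t)\restriction_{d}\sqsubseteq RBT(u)\restriction_{d}$ by induction on the truncation depth $d$, noting that the depth-$d$ truncation depends only on finitely many head-reduction steps and on the depth-$(d-1)$ truncations of the $RBT(u_{i})$, and then reading off $\sqsubseteq$ as the limit of these depthwise approximations. I expect the genuine obstacle to be precisely the commutation/confluence property of head reduction—that a $\redperm$-step or a non-spine head $\beta$-step never diverts a head-converging term away from its head normal form—since once that is secured the remaining bookkeeping (disjointness of redexes across branches of a randomized context, stability of supports, and the depth induction) is routine.
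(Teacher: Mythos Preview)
The paper states this proposition without proof: in the appendix it appears immediately after the definition of $RBT(t)$ and is followed directly by the definition of randomized paths, with no argument supplied. So there is nothing to compare your attempt against.

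Your outline is a sensible reconstruction of what such a proof should look like, and you correctly isolate the real work as the confluence/commutation property of head reduction. Two remarks are worth making. First, the claim ``if $t\redall_{\mathsf h}^{*}w$ with $w$ a HNV or a PNF, then also $u\redall_{\mathsf h}^{*}w$'' is stronger than what confluence of $\redall$ gives you directly: confluence yields a common $\redall$-reduct, not necessarily the \emph{same} $w$ reached by head steps. What you actually need is that the $\redperm$-normal form of $t$ and the $\redperm$-normal form of $u$ have the same outer $\nu a.T$-shape (same $(\CC T,a)$-tree skeleton), with the support terms related by head reduction; this follows more directly from analysing how a single head $\beta$-step interacts with $\redperm$-normalization than from global confluence. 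Second, your worry about the root label is real: after a head step inside a support term $u_i$, the PNF of the result has root $\nu a.T'$ rather than $\nu a.T$, so for $\sqsubseteq$ to hold one must read the root label as recording only the binder $\nu a$ and the $\oplus$-skeleton of $T$, not the leaf terms themselves. The paper does not spell out $\sqsubseteq$, but this reading is the only one under which the proposition holds, and your depth-truncation argument goes through once it is adopted.
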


A \emph{randomized path in $RBT(t)$} is any path $\pi$ in $RBT(t)$ starting from the root and ending in a leaf.
%
%
A randomized path in $RBT(t)$ is thus of one of the following two forms:
\begin{enumerate}
\item  a finite path 
$\pi = \langle\nu a_{1}.T_{1}, \nu a_{2}.T_{2},\dots, \nu a_{N}.T_{N}, \lambda x_{1}\dots x_{n}.y\rangle$, where 
$v\leadsto \nu a_{i+1}.T_{i+1}$ holds for some $v\in \mathrm{supp}(T_{i})$;
\item  an infinite path 
$\pi = \langle\nu a_{1}.T_{1}, \nu a_{2}.T_{2},\dots, \nu a_{n}.T_{n},\dots \rangle$, where 
$v\leadsto \nu a_{i+1}.T_{i+1}$ holds for some $v\in \mathrm{supp}(T_{i})$.

\end{enumerate}
With any such path $\pi$ we can associate an (either finite or infinite) list of words $b^{\pi}_{i}\in \{0,1\}^{*}$, where $b^{\pi}_{i}$ is the list of choices leading from $\nu a.T_{i}$ to the unique term $v\in \mathrm{supp}(T_{i})$ such that $v\leadsto \nu a_{i+1}.T_{i+1}$.

If $\pi$ is a finite randomized path in $RBT(t)$ ending in some $u\in \CC V$, we say that $\pi$ is a randomized path \emph{from $t$ to $u$} (noted $\pi: t \mapsto u$).

Let us consider the (unique) Borel $\sigma$-algebra $\mu$ on $2^{\BB N\times \BB N}$ satisfying $\mu(C_{i,j})=1/2$, where $C_{i,j}$ is the cylinder $\{\omega\mid \omega(i,j)=1\}$.

Any randomized path $\pi$ yields a Borel set $B_{\pi}:= \bigcap_{i,j}C_{i, (b^{\pi}_{i})_{j}}$, so that two distinct paths $\pi\neq \pi'$ are such that $B_{\pi}\cap B_{\pi'}=\emptyset$.
Moreover, if $\pi$ is finite, the Borel set $B_{\pi}$ is captured by the Boolean formula
$
\bone_{\pi}:= \bigwedge_{i}\bone_{\pi}^{i}$, where $\bone_{\pi}^{i}= \bigwedge_{j} (\lnot)^{1+(b^{\pi}_{i})_{j}}\bvar_{a_{i}}^{j}$.

\begin{lemma}
For all terms $t\in \CC T$ and $u\in \CC V$, 
$\CC D_{t}(u)= 
\mu\left ( \bigcup\{B_{\pi}\mid \pi : t\mapsto u\}\right)=
\sum_{\pi: t\mapsto u}\mu(\bone_{\pi})
$.
\end{lemma}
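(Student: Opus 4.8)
The plan is to prove the two equalities separately, the rightmost being essentially formal and the leftmost carrying the combinatorial content. For the identity $\mu\bigl(\bigcup\{B_{\pi}\mid \pi:t\mapsto u\}\bigr)=\sum_{\pi:t\mapsto u}\mu(\bone_{\pi})$ I would first observe that each $B_{\pi}$ is exactly the cylinder set cut out by the Boolean formula $\bone_{\pi}$, so that $\mu(B_{\pi})=\mu(\bone_{\pi})$ holds by the very definition of the two measures on matching cylinders (as recalled in Subsection~\ref{sub:Cantor}). Since distinct finite paths are incompatible, i.e.\ $B_{\pi}\cap B_{\pi'}=\emptyset$ for $\pi\neq\pi'$, and since $RBT(t)$ is finitely branching so there are at most countably many finite paths, countable additivity of $\mu$ gives $\mu(\bigcup_{\pi}B_{\pi})=\sum_{\pi}\mu(B_{\pi})=\sum_{\pi}\mu(\bone_{\pi})$. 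This disposes of the second equality.

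For the first equality, $\CC D_{t}(u)=\mu\bigl(\bigcup\{B_{\pi}\mid\pi:t\mapsto u\}\bigr)$, I would argue by induction on the permutative normal form of $t\in\CC T$, using Lemma~\ref{lemma:enne} and Corollary~\ref{cor:pnf} to split into cases, and using the monotonicity property stated above to identify $RBT(t)$ with the stable head-reduction limit. In the base case $t\in\CC V$ reduces under head reduction to a head normal value $w\in\HNF$, so $RBT(t)$ is a single node: the only finite path is the empty one, whose choice list is empty and whose Borel set is therefore the whole space, of measure $1$. Hence the right-hand side is $1$ if $u=w$ and $0$ otherwise, matching the fact that $\CC D_{t}$ is the point mass at $w$. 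In the recursive case $t=\nu a.T$ with $T$ a $(\CC T,a)$-tree of support $\langle u_{1},\dots,u_{n}\rangle$, every finite path $\pi:t\mapsto u$ factors uniquely as a first segment that selects, through a word of $a$-choices, one branch $u_{k}$ of $T$, followed by a path $\pi':u_{k}\mapsto u$ in the subtree $RBT(u_{k})$.

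The heart of the argument is to turn this path factorization into the measure-theoretic factorization underlying the defining equation $\CC D_{t}(u)=\sum_{k}\CC D_{u_{k}}(u)\cdot\mu(\{\omega\mid\pi^{\omega}_{\{a\}}(T)=u_{k}\})$. Concretely I would verify that the set of outcomes driving the first segment into branch $u_{k}$ is precisely $\{\omega\mid\pi^{\omega}_{\{a\}}(T)=u_{k}\}$, a clopen set depending only on the coordinates indexed by the name $a$, whose measure equals the sum over first segments of $\mu(\bone^{1}_{\pi})$. Because $a$ does not occur in the subterms $u_{k}$, the coordinates governing the first segment are independent from those governing the continuations $\pi'$, so $B_{\pi}$ splits as an intersection of two factors living on disjoint blocks of $2^{\BB N\times\BB N}$ and the product structure of $\mu$ gives $\mu(B_{\pi})$ as the product of the branch measure and $\mu(B_{\pi'})$. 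Summing over all paths, grouping by branch, and applying the induction hypothesis to each $RBT(u_{k})$ then yields $\mu(\bigcup_{\pi:t\mapsto u}B_{\pi})=\sum_{k}\mu(\{\omega\mid\pi^{\omega}_{\{a\}}(T)=u_{k}\})\cdot\CC D_{u_{k}}(u)=\CC D_{t}(u)$.

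The main obstacle I anticipate is precisely this factorization step: making rigorous that the first-segment events and the continuation events live on genuinely independent blocks of the product space, so that the branchwise decomposition of $\mu$ is legitimate and commutes with the countable grouping of paths, and checking that $\bigcup_{\pi:t\mapsto u}B_{\pi}$ is the disjoint union of the per-branch unions without disturbing $\sigma$-additivity. A secondary but real point to settle carefully is the reconciliation of the base case with the literal definition $\CC D_{t}=\delta_{t}$ on pseudo-values: since a pseudo-value may still carry a head redex, one must read the statement (and the leaves of $RBT$) at the level of the head normal values actually reached, which is where the bridge between the purely structural definition of $\CC D_{t}$ and the head-reduction-driven tree $RBT(t)$ has to be made explicit.
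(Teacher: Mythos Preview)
The paper does not include a proof of this lemma; it is stated and then the text moves directly to the next lemma. Your approach---disjointness plus countable additivity for the rightmost equality, and induction on the structure of the PNF of $t$ for the leftmost one, unwinding the recursive clause of $\CC D_{t}$ against the factorization of paths at the root of $RBT(t)$---is exactly the natural argument one would expect here, and the factorization step you sketch (the first segment depends only on the $a$-coordinates, the continuation only on the remaining names, and $\mu$ is a product measure) is the right way to make it go through.

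Your secondary concern about the base case is a genuine observation about the paper's definitions rather than a flaw in your proof strategy: the paper sets $\CC D_{t}=\delta_{t}$ for $t\in\CC V$, while the leaves of $RBT(t)$ are the head normal values reached by head reduction, so a pseudo-value that is not already a HNV does not literally satisfy the equality as written. This is a minor imprecision in the paper, and the intended reading is presumably up to head reduction (or with $u$ restricted to HNVs), which is how you correctly propose to reconcile it.
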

%
%
%

%
%
%
%
%
%

%
%
\begin{lemma}\label{lemma:hformrbt}
For all $t\in \CC T$, the following are equivalent:
\begin{varitemize}
\item[(i.)] $t$ has a head-normal form;
\item[(iii.)] $RBT(t)$ is finite.
\end{varitemize}
\end{lemma}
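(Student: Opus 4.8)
The plan is to establish the two implications separately, with K\"onig's lemma as the pivot. The key preliminary observation is that $RBT(t)$ is \emph{finitely branching}: whenever a node is labelled by a $\nu$-headed permutative normal form $\nu a.T$, its children are exactly the trees $RBT(u_{1}),\dots,RBT(u_{n})$ grafted at the leaves $u_{1},\dots,u_{n}$ of $T$, and by Definition~\ref{def:satree} the support of a $(\CC T,a)$-tree is a \emph{finite} set of leaves. Hence, by K\"onig's lemma, $RBT(t)$ is infinite if and only if it contains an infinite randomized path. This reduces the lemma to the statement that $t$ head-normalizes exactly when no such infinite path exists, and I would arrange the argument so that the operative notion of head normal form for $t$ is precisely the termination of the generation process underlying $RBT(t)$.

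For the direction (iii)$\Rightarrow$(i) I would induct on the finite tree $RBT(t)$. In the base case $RBT(t)$ is a single node: if it is labelled by a HNV $u$, then $t\redall_{\mathsf h}^{*}u$ exhibits a head-normal form immediately. In the inductive case the root is $\nu a.T$ with children $RBT(u_{1}),\dots,RBT(u_{n})$, each finite; by the induction hypothesis every $u_{i}\in\mathrm{supp}(T)$ head-normalizes, and grafting their head-normal forms onto the finite choice tree $T$---using $t\redall_{\mathsf h}^{*}\nu a.T$ together with confluence of $\redall$ (Theorem~\ref{thm:confluence})---produces the head normal form of $t$.

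For (i)$\Rightarrow$(iii) I would argue contrapositively, extracting from an infinite $RBT(t)$ a non-terminating head reduction. By the finite-branching property and K\"onig's lemma there is an infinite randomized path $\langle\nu a_{1}.T_{1},\nu a_{2}.T_{2},\dots\rangle$; by the construction of $RBT$, each passage from $\nu a_{i}.T_{i}$ to $\nu a_{i+1}.T_{i+1}$ is realized by selecting a leaf $u_{i}\in\mathrm{supp}(T_{i})$ and performing the finite head reduction $u_{i}\redall_{\mathsf h}^{*}\nu a_{i+1}.T_{i+1}$. Since the path is infinite, these finitely-many-step segments compose into a generation process that never settles on a finite tree of HNVs, i.e.\ $t$ does not head-normalize.

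The principal obstacle I foresee is the faithful dictionary between \emph{branches of $RBT(t)$} and \emph{head-reduction sequences}, together with the well-definedness of $RBT$ itself. Because $RBT$ is defined by a mixed recursion---head-reduce until a $\nu$-headed normal form or a HNV is reached, then branch over the finite support and recurse---I must first confirm that this process is well posed, which rests on strong normalization of $\redperm$ and confluence of $\redall$ (Theorem~\ref{thm:confluence}), so that each ``pause'' is reached in finitely many steps and determines the relevant tree data unambiguously. The one genuinely delicate point is the alignment of the ``otherwise'' ($\Omega$-labelled) leaves with the operative notion of head normal form: I would fix this notion so that a finite $RBT(t)$ is exactly what it means for $t$ to admit a (possibly branching) head normal form, after which the equivalence follows from the finite-branching observation and K\"onig's lemma.
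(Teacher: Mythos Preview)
Your plan is sound and reaches the result, but it is organised differently from the paper's argument. The paper proves (i)$\Rightarrow$(iii) by induction on the length of a head-reduction of $t$ to its head-normal form (a head-normal term has finite $RBT$ by a direct structural check, and $RBT$ is stable under head-reduction), and proves (iii)$\Rightarrow$(i) by noting that an infinite head-reduction of $t$ traces out an infinite path in $RBT(t)$, so finiteness of the tree forces every head-reduction to terminate. You essentially swap the two tools: you do (iii)$\Rightarrow$(i) by structural induction on the finite tree, and (i)$\Rightarrow$(iii) contrapositively via K\"onig's lemma, extracting an infinite branch and composing the segments into a diverging head-reduction. Both organisations work; yours makes the finite-branching/K\"onig step explicit, while the paper's route avoids K\"onig altogether by inducting on the reduction length instead.

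One small point to tighten in your (i)$\Rightarrow$(iii) direction: along the infinite branch, a transition $u_i\redall_{\mathsf h}^{*}\nu a_{i+1}.T_{i+1}$ may take zero steps (when $u_i$ is already $\nu$-headed), so you need the remark that this cannot happen for all but finitely many $i$, since each $u_i$ is a strict subterm of $\nu a_i.T_i$; hence the composed head-reduction is genuinely infinite. You are also right to flag the $\Omega$-labelled leaves: as literally defined, $RBT(\Omega)$ is the single node $\Omega$, hence finite, yet $\Omega$ has no head-normal form. The intended reading (consistent with how the lemma is used in Proposition~\ref{prop:detcomple}) is that $RBT(t)$ is finite \emph{with only HNV leaves}, and your proposal to align the definitions accordingly is the correct fix; the paper's own sketch glosses over this point.
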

\begin{proof}
(i.)$\To$ (iii.) is proved by induction on the length of a reduction of $t$ to a head-normal form. 
For 
(iii.)$\To$ (i.), observe that $RBT(t)$ contains an infinite path only if $t$ admits an infinite head-reduction.
From (iii.) 
we deduce then that all head reductions of $t$ are finite, so $t$ has a head-normal form.%
\end{proof}

\subsection{The Language $\EVLL$.}\label{subs:evll}

In this section we study some properties of reduction $\redalll$ in the calculus $\EVLL$. Most arguments closely resemble arguments from \cite{DLGH}, so we omit several details.

\begin{lemma}\label{lemma:snperm}
$\redpermm$ is strongly normalizing.
\end{lemma}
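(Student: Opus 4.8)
The plan is to establish strong normalization of $\redpermm$ by exhibiting a well-founded reduction ordering $\succ$ that strictly decreases along every $\redpermm$-step, exactly as strong normalization of $\redperm$ on $\EVL$ is obtained in \cite{DLGH} (see Theorem~\ref{thm:confluence}). Since $\redpermm$ keeps all the rules of Fig.~\ref{fig:permutations} except \eqref{eq:notnu} and adds only $(\{\}\nu)$, $(\{\}\oplus_{1})$, $(\{\}\oplus_{2})$, and since dropping the rule \eqref{eq:notnu} can only make termination easier, it suffices to reuse the ordering of \cite{DLGH} on the retained $\EVL$-rules and to extend it to the new constructor $\{\}$ so that the three new rules become strict decreases. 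Concretely I would realize $\succ$ as a recursive path order induced by a precedence on term constructors (if one prefers, the monotone interpretation of \cite{DLGH} can be extended instead, by adding a clause for $\{t\}u$ that strictly dominates the clause for $tu$ in its right argument).

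For the precedence I would take $\{\} \;\succ\; @ \;\succ\; \lambda \;\succ\; \nu \;\succ\; \oplus$, where the operators $\oplus_{a}^{i}$ are ordered among themselves by the index ordering of the side conditions (larger index higher), with multiset status for the choice operators and left-to-right lexicographic status for $@$ and $\{\}$. With this precedence the retained $\EVL$-rules are oriented just as in \cite{DLGH}: the absorbing rules $(\mathsf i),(\mathsf c_{1}),(\mathsf c_{2})$ decrease by the subterm property together with the multiset comparison at equal root symbols; the $\oplus$-extrusion rules $(\oplus\lambda),(\oplus\mathsf f),(\oplus\mathsf a),(\oplus\nu)$ decrease because the redex's top symbol dominates $\oplus$ while each contractum-argument is a smaller term; the reorderings $(\oplus\oplus_{1}),(\oplus\oplus_{2})$ decrease because the higher-index operator dominates; and $(\nu\lambda),(\nu\mathsf f)$ decrease from $\lambda,@\;\succ\;\nu$. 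The three new rules are handled identically: $(\{\}\oplus_{1})$ and $(\{\}\oplus_{2})$ have exactly the shape of $(\oplus\mathsf f)$ and $(\oplus\mathsf a)$ with $\{\}$ in place of $@$, hence decrease since $\{\}\succ\oplus$; and for $(\{\}\nu)$, namely $\{t\}\nu a.u \redpermm \nu a.(tu)$, one checks $\{t\}\nu a.u \succ \nu a.(tu)$ using $\{\}\succ\nu$ together with $\{\}\succ @$, since $t$ and $u$ both occur as proper subterms of $\{t\}\nu a.u$.

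The main obstacle is that the ordering of the choice operators $\oplus_{a}^{i}$ is not a fixed global precedence: the relation $(a,i)<(b,j)$ depends, through the clause ``$\nu b$ occurs in the scope of $\nu a$'', on the term at hand. To turn this into a genuine reduction ordering I would first prove a preservation lemma: no $\redpermm$-rule ever swaps the nesting order of two distinct generators (generators are only moved past $\lambda$ and $@$ and duplicated past $\oplus$, but never permuted with one another), so along any reduction issuing from a fixed term the nesting order is stable and admits a common linear extension. Fixing such a linear extension as the precedence on the $\oplus_{a}^{i}$ makes $\succ$ well-founded and strictly decreasing on every $\redpermm$-step out of that term, which is exactly what strong normalization requires. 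The only genuinely new verification is the one for $(\{\}\nu)$, the single rule with no $\EVL$-analogue, as it extrudes a generator from the \emph{right}-hand argument, something no ordinary application can do; placing $\{\}$ at the very top of the precedence is precisely what makes this step, as well as the two distributive $\{\}$-rules, decrease without disturbing the orientation of the $\EVL$-fragment, in which $\{\}$ does not occur.
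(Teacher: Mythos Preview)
Your approach is essentially the same as the paper's: both use a recursive path ordering with precedence $\oplus \prec \nu \prec @,\lambda \prec \{\}$ (the paper does not distinguish $@$ from $\lambda$, but your refinement $\lambda \prec @$ is harmless), and both verify $(\{\}\nu)$ via $\{\}\succ\nu$ and $\{\}\succ @$. The term-dependence of the $\oplus$-precedence that you flag as the ``main obstacle'' is already present in the $\EVL$ case and is inherited verbatim from \cite{DLGH}, so the paper simply cites that result rather than reproving the preservation lemma you sketch.
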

\begin{proof}
We follow the proof of [\cite{DLGH}, Lemma 7], we define the partial order $\prec_{M}$ by 
\begin{align*}
\oplus_{a}^{i} & \prec_{M} \oplus_{b}^{j}  &  \text{ if }(a,i)<_{M}(b,j) \\
\oplus_{a}^{i} & \prec_{M}\nu b & \text{ for all labels }a,b \\
\nu a & \prec_{M} @, \lambda x & \text{ fo any label } b\\
@, \lambda x & \prec_{M} \{\}
\end{align*}
One can check then that the well-founded recursive path ordering $<$ defined by 
\begin{align*}
t< u \ \Leftrightarrow \ 
\begin{cases}
[t_{1},\dots, t_{n}] < [u_{1},\dots, u_{m}] & \text{ if }f=g \\
[t_{1},\dots, t_{n}]< [u] & \text{ if }f\prec_{M}g\\
[t]\leq [u_{1},\dots, u_{m}] & \text{ if }f\not\prec_{B}g
\end{cases}
\end{align*}
where $t= f(t_{1},\dots, t_{n})$ and $u=g(u_{1},\dots, u_{n})$, with $f,g\in \{\oplus_{a}^{i}, \nu a.,@,\lambda x, \{\}\}$, is such that $t\redpermm u$ implies $u<t$. 
\end{proof}

\begin{lemma}
$\redpermm$ is confluent.
\end{lemma}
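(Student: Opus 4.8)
The final statement to prove is that $\redpermm$ is confluent. I have just shown (Lemma, preceding statement) that $\redpermm$ is strongly normalizing. So by Newman's Lemma, confluence reduces to \emph{local confluence}: it suffices to show that whenever $t \redpermm u_1$ and $t \redpermm u_2$, there is a common reduct $v$ with $u_1 \redpermm^* v$ and $u_2 \redpermm^* v$. The plan is therefore to invoke Newman's Lemma and then verify local confluence by a critical-pair analysis over the rewriting rules of Figure~\ref{fig:permutations} (excluding \eqref{eq:notnu}) together with the three $\{\}$-rules \eqref{eq:e0}, \eqref{eq:e1}, \eqref{eq:e2}.

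First I would set up the standard machinery. Since $\redpermm$ is a term rewriting system closed under all term contexts, local confluence follows once all critical pairs are shown to be joinable. The overlaps split into two kinds: \emph{disjoint/parallel redexes}, where the two contracted redexes occupy non-overlapping positions, and these are always trivially joinable by contracting the residual of each in the other's reduct; and \emph{critical overlaps}, where the left-hand sides of two rules unify at overlapping positions. The bulk of the work is to enumerate the genuine critical overlaps and join each one. Because this system is essentially the permutative system of \cite{DLGH} extended with the $\{\}$-operator, I would follow the same strategy used there, and concentrate the new verification on overlaps involving $\{\}$.

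The new critical pairs to check are those where a $\{\}$-rule overlaps with another rule. The key cases are: overlaps between $\{t\oplus_a^i u\}v$ and $\{t\}(u'\oplus_b^j v')$ when both arguments of $\{\}$ are choices (here \eqref{eq:e1} and \eqref{eq:e2} both apply, and one checks the two results rejoin after further $\oplus$-permutations, using the commuting rules \eqref{eq:c1}, \eqref{eq:oa}, $\oplus\oplus_1$, $\oplus\oplus_2$); overlaps of \eqref{eq:e0} with the $\nu$-permutations such as \eqref{eq:cnu} and \eqref{eq:nuapp} when the body $\nu a.u$ itself contains a redex; and overlaps where a $\beta$-style inner redex sits inside a $\{\}$-argument (these are parallel, hence trivial, since $\redpermm$ contains no $\beta$-rule). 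Each of these closes by pushing all $\oplus$'s and $\nu$'s outward in a fixed order, which is exactly what strong normalization of $\redpermm$ guarantees terminates.

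The main obstacle I anticipate is the careful bookkeeping in the overlaps that produce \emph{nested} choice trees, in particular the interaction of the two $\{\}$-distribution rules \eqref{eq:e1}, \eqref{eq:e2} with the choice-commutation rules $\oplus\oplus_1$, $\oplus\oplus_2$ governed by the ordering $(a,i)<(b,j)$. One must confirm that distributing $\{\}$ over two independently-labelled choices yields the same normal ordering of the resulting binary tree regardless of which distribution is performed first; this is where the side condition on the ordering of names/indices does the essential work, and it is the only place where a naive analysis might suggest non-joinability. Since the overlap set is finite and each case joins by a short reduction, and since Newman's Lemma then upgrades local confluence to full confluence given the already-established strong normalization, this completes the argument. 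I would note that, because the embedding and the $\nu$-expansion conventions already align $\EVLL$ with the framework of \cite{DLGH}, most of the non-$\{\}$ overlaps can simply be cited rather than re-verified.
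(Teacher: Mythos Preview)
Your proposal is correct and follows essentially the same approach as the paper: invoke Newman's Lemma using the already-established strong normalization, then verify local confluence by a critical-pair analysis, deferring the non-$\{\}$ overlaps to \cite{DLGH}. The paper streamlines the case you flag as the ``main obstacle'' by observing that \eqref{eq:e1} and \eqref{eq:e2} are instances of the same general schema $\TT C[t\oplus_a^i u]\redpermm \TT C[t]\oplus_a^i\TT C[u]$ already treated in \cite{DLGH}, so the only genuinely new critical pair to check is the overlap of \eqref{eq:e0} with \eqref{eq:cnu} on a term $\{t\}\,\nu a.(u\oplus_b^i v)$.
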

\begin{proof}
By Lemma \ref{lemma:snperm} and Newman's Lemma, it suffices to check local confluence. All rules  from \eqref{eq:c1} to \eqref{eq:cnu} from Fig.~\ref{fig:permutations}, as well as rules \eqref{eq:e1} and \eqref{eq:e2}, can be written under the general form 
\begin{align}
\TT C[t\oplus_{a}^{i}u] \redpermm \TT C[t]\oplus_{a}^{i}\TT C[u] \tag{$\oplus\star$}\label{eq:star}
\end{align}
where $\TT C[\ ]$ is defined by the grammar
$$
\TT C[\ ]::= [\ ] \mid \lambda x.\TT C[\ ]\mid \TT C[\ ]u\mid t\TT C[\ ]\mid \TT C[\ ]\oplus_{a}^{i}u \mid t\oplus_{a}^{i}\TT C[\ ]\mid \nu a.\TT C[\ ]\mid\{ \TT C[\ ]\} u\mid \{t\} \TT C[\ ]
$$
We consider all rules against each other. All cases involving \eqref{eq:star} can be treated as in the proof of  [\cite{DLGH}, Lemma 9]. Beyond these, the only new case with respect to those treated in [\cite{DLGH}, Lemma 9] is the one below:
$$
\begin{tikzcd}
\{ t\} \nu a. u\oplus_{b}^{i} v \ar{r}{\eqref{eq:cnu}} \ar{d}{\eqref{eq:e0}} &   \{t\} (\nu a.u)\oplus_{b}^{i}(\nu a.v) \ar{d}{\eqref{eq:e1}}\\
\nu a. t(u\oplus_{b}^{i}v)  
\ar{d}{\eqref{eq:oa}}  &   (\{t\} \nu a.u ) \oplus_{b}^{i}(\{t\} \nu a.v) \ar{d}{\eqref{eq:e0}}  \\
\nu a. (tu) \oplus_{b}^{i} (tv) \ar{r}[below]{\eqref{eq:cnu}} &  (\nu a.tu) \oplus_{b}^{i} (\nu a. tv) 
\end{tikzcd}
$$
\end{proof}

\begin{theorem}
$\redalll$ is confluent.
\end{theorem}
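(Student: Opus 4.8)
The plan is to follow the same strategy used for $\redall$ in $\EVL$ (Theorem \ref{thm:confluence}) in \cite{DLGH}, adapting it to the three new rules \eqref{eq:e0}, \eqref{eq:e1}, \eqref{eq:e2}. Since $\redpermm$ is confluent and strongly normalizing (as just shown), every term $t$ has a unique permutative normal form $\mathsf p(t)$, and I write $t\sim u$ for $\mathsf p(t)=\mathsf p(u)$, the congruence generated by $\redpermm$. The crucial point is that one cannot prove confluence of $\redalll$ by a naive Hindley--Rosen argument, because $\redbeta$ and $\redpermm$ do \emph{not} commute: a permutation may hide a $\beta$-redex and re-expose it only after further permutations. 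For instance, $(\oplus\lambda)$ rewrites $(\lambda x.(z_1\oplus_a^i z_2))\,w$ to $((\lambda x.z_1)\oplus_a^i(\lambda x.z_2))\,w$, which is a $\redbeta$-normal form, while the original term $\redbeta$-reduces to $z_1\oplus_a^i z_2$; these two reducts cannot be joined by $\redbeta^{*}$ on one side and $\redpermm^{*}$ on the other, so $\redbeta^{*}$ and $\redpermm^{*}$ fail to commute. Hence the argument must be carried out \emph{modulo} the equivalence $\sim$.

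Concretely, I would define a parallel $\beta$-reduction in the Tait--Martin-L\"of style (simultaneous contraction of a set of $\beta$-redexes, with $\{\cdot\}\,\cdot$, $\oplus_a^i$ and $\nu a$ treated as ordinary, congruent term formers) and transport it to permutative normal forms via $\mathsf p$. The heart of the proof is a commutation lemma stating that permutative normalization is compatible with $\beta$-substitution and with parallel contraction: contracting a redex and then renormalizing permutations yields the same $\sim$-class as renormalizing first and then contracting the residuals. From this I would establish the diamond property \emph{up to} $\sim$ for parallel reduction on permutative normal forms. Since $\redbeta$ is contained in parallel reduction, which is itself contained in $\redbeta^{*}$, and since $\redpermm\subseteq{\sim}$, this diamond-up-to-$\sim$ lifts to confluence of $\redalll$: given $t \redalll^{*} v_1$ and $t \redalll^{*} v_2$, one projects to permutative normal forms, closes the diamonds there, and pulls the common reduct back along $\redpermm$.

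For the underlying case analysis I would reuse verbatim the diagrams for $\EVL$ and only add the overlaps generated by the box operator. The genuinely new interactions are rule \eqref{eq:e0}, $\{t\}\,\nu a.u\ \redpermm\ \nu a.tu$, which creates a fresh application $tu$ that is a $\beta$-redex exactly when $t$ is an abstraction and so must be shown locally coherent with $\beta$; and the duplicating rules \eqref{eq:e1} and \eqref{eq:e2}, which distribute $\{t\}$ over a choice and are handled just like the existing $(\oplus\mathsf f)$ and $(\oplus\mathsf a)$ cases, since parallel contraction reduces all the produced copies simultaneously. The main obstacle is precisely the commutation lemma between $\mathsf p$ and substitution: one must check that pushing the binders $\nu a$ across applications (rules \eqref{eq:nuapp}, \eqref{eq:nulambda} and the new \eqref{eq:e0}) never causes variable capture after a $\beta$-substitution, which is ensured by the side conditions ($a\neq b$, $a\notin\FN(t)$) together with $\alpha$-renaming. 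Once this lemma is in place, the remaining diagram-chasing is routine and mirrors \cite{DLGH}.
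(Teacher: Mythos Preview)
Your plan is essentially the paper's: both defer to the technique of \cite{DLGH} (parallel $\beta$-reduction up to the permutative congruence $\sim$, with the diamond established on permutative normal forms) and both note that only the three new rules \eqref{eq:e0}--\eqref{eq:e2} need fresh attention. The one difference is emphasis. The paper's proof records the simplifying observation that none of \eqref{eq:e0}--\eqref{eq:e2} can block a $\beta$-redex or form a critical pair with one: the $\{\cdot\}$-application is a term former distinct from ordinary application, so $\{\lambda x.s\}\,\nu a.u$ is \emph{not} a $\beta$-redex, and the left-hand sides of \eqref{eq:e0}--\eqref{eq:e2} never overlap $(\lambda x.t)u$. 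Hence $\beta$ commutes with these three rules and the DLGH diagrams carry over verbatim. You instead flag \eqref{eq:e0} as a ``genuinely new interaction'' because it may \emph{create} a $\beta$-redex $\nu a.(\lambda x.s)u$; but redex creation is not an overlap, and since the source term was not itself a $\beta$-redex there is no local-coherence diagram to close beyond plain congruence. Your extra caution is harmless, but the paper's observation lets you skip that case entirely.
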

\begin{proof}
The argument closely follows the one from [\cite{DLGH}, pp.~9-12], using the observation that 
 none of the permutations \eqref{eq:e0}-\eqref{eq:e2} can either block a $\beta$-redex or figure in a critical pair with a $\beta$-redex (in other words, $\beta$-reduction commutes with \eqref{eq:e0}-\eqref{eq:e2}).
\end{proof}

To conclude, let us study head-reduction $\redallh$ in $\EVLL$.

Randomized contexts $\TT R[\ ]$ for $\EVLL$ are defined as for  $\EVL$.
\emph{Head-contexts} $\TT H[\ ]$ for $\EVL$ are defined by the grammar below:
$$
\TT H[\ ]:= [\ ]\mid \lambda x.\TT H[\ ]\mid \TT H[\ ]u \mid \{\TT H[\ ]\} u
$$
Head-reductions $t\redallh u$ are defined inductively as being either a $\redalll$-reduction or a $\beta$-reduction of one of the following two forms
\begin{align*}
\TT R[\TT H[(\lambda y.t)u]] & \redbeta \TT R[\TT H[t[u/y]]] \\
\TT R[\TT H[\{t\}u]] & \redall \TT R[\TT H[\{t\}u']] 
\end{align*}
where $u\redallh u'$.

Let $\CC T^{\{\}}$ be the set of PNF with respect to $\redpermm$.
A pseudo-value $t$ is a PNF which is either a $\lambda$, a variable, an application $tu$ or a CbV application $\{t\}u$.
We let $\CC V^{\{\}}$ indicate the set of pseudo-values.

The definition of $\pi^{\omega}_{X}(t)$ is extended from $\EVL$ to $\EVLL$ simply by adding the condition
$\pi^{\omega}_{X}(\{t\}u)=\{\pi^{\omega}_{X}(t)\}\pi^{\omega}_{X}(u)$. 
In this way, a sub-distribution $\CC D_{t}: \CC V^{\{\}}\to [0,1]$ can be defined as in Section \ref{section3}.

\begin{lemma}\label{lemma:hnfecvl}
A pseudo-value $t\in \CC V^{\{\}}$ is in head normal form iff either $t= \TT H[x]$ or $t=\TT H[\{t\}u]$, where 
$u\in \CC V^{\{\}}$ and is $\redalll$-normal.
\end{lemma}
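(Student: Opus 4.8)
The plan is to argue by structural induction on the pseudo-value $t$, after first reducing the problem to $\beta$-type head steps. Since $t\in\CC V^{\{\}}$ is in particular $\redpermm$-normal, no permutative head step is available on $t$; and since $t$ is a pseudo-value it begins neither with $\nu$ nor with a choice $\oplus_{a}^{i}$, so in any head-redex pattern $\TT R[\cdots]$ the randomized context $\TT R$ is forced to be the empty context $[\ ]$. Hence $t$ is $\redallh$-normal precisely when no head $\beta$-redex of shape $\TT H[(\lambda y.s)u]$ occurs and no CbV head-redex $\TT H[\{s\}u]$ with reducible argument $u$ occurs. I would record at the outset the two structural facts about PNFs in $\EVLL$ that I need, both established exactly as for $\EVL$ (Lemma \ref{lemma:enne} and Lemma \ref{lemma:PNF}): a name-closed $\redpermm$-normal term not starting with $\nu$ is a pseudo-value, and $\redpermm$-normality of $\{s\}u$ forces $u$ to begin neither with $\nu$ nor with a choice, so that $u\in\CC V^{\{\}}$.

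The induction then splits on the top constructor of the pseudo-value. The cases $t=x$ and $t=\lambda x.t'$ are immediate: in the first $t=\TT H[x]$ with $\TT H=[\ ]$, while in the second $t'$ is again a pseudo-value, head reduction of $t$ is exactly head reduction of $t'$ under the clause $\lambda x.\TT H[\ ]$, and the claim transfers through the induction hypothesis by prefixing $\lambda x$ to the head context. For an ordinary application $t=t_{0}u_{0}$ the function $t_{0}$ is a pseudo-value, and the essential distinction is whether $t_{0}$ is an abstraction: if it is, then $t$ displays a head $\beta$-redex and is not normal; otherwise head-normality of $t$ is equivalent to that of $t_{0}$, the head context extends by $\TT H[\ ]u_{0}$, and the induction hypothesis applies. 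Throughout I would phrase the statement so that in the displayed forms the head context is the one actually traversed by the head strategy, so that it never manufactures a redex on the spine.

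The decisive case is the CbV application $t=\{s\}u$, where by the PNF facts above $u\in\CC V^{\{\}}$. I would separate according to whether the function $s$ is ``enterable'' by a head context. Head contexts never begin with $\nu$, so if $s$ starts with $\nu$ the strategy cannot descend into $s$; the only head steps on $\{s\}u$ are then the argument reductions $\TT R[\TT H[\{s\}u]]\redallh\TT R[\TT H[\{s\}u']]$, and $\{s\}u$ is $\redallh$-normal exactly when the argument $u$ is a $\redalll$-normal value, which is the side condition in the statement; here $\{s\}u$ is the genuine head and $t=\TT H[\{s\}u]$ with $\TT H=[\ ]$. If instead $s$ is itself a pseudo-value, the clause $\{\TT H[\ ]\}u$ lets the strategy descend into $s$, so the head of $\{s\}u$ is that of $s$; applying the induction hypothesis to $s$ and absorbing the wrapper $\{[\ ]\}u$ into the head context re-expresses $t$ in one of the two target forms (a variable head giving the $\TT H[x]$ case, a stuck CbV head giving the $\TT H[\{s\}u]$ case).

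I expect the main obstacle to be precisely this bookkeeping in the CbV case: one must show that descending into the function via $\{\TT H[\ ]\}u$ and reducing the argument via the dedicated rule together exhaust all head steps on $\{s\}u$, and that the resulting decomposition places the genuine head—either a variable or a stuck CbV application whose argument is a $\redalll$-normal value—at the hole, without the head context concealing any redex. Pinning down the exact normality condition required of the CbV argument, and checking that the two displayed forms are mutually exclusive and jointly exhaustive among head-normal pseudo-values, is the only delicate point; once it is settled, both implications follow mechanically from the definitions of $\redpermm$ and $\redallh$, the sole ingredient beyond the $\EVL$ analysis being the CbV-application constructor together with its argument-position reduction rule.
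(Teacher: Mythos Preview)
The paper states this lemma without proof, so there is no reference argument to compare against. Your structural induction on pseudo-values is the natural approach, and your identification of the key cases (variable, abstraction, ordinary application, CbV application with $\nu$-headed versus pseudo-value function) is correct.

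One point deserves more care than you give it. For the converse direction you need that the displayed shapes $\TT H[x]$ and $\TT H[\{s\}u]$ actually \emph{are} head-normal, i.e.\ that the head context $\TT H$ conceals no head redex. But the grammar for $\TT H$ allows nested frames $\{\TT H'[\ ]\}u'$, and the CbV-argument rule fires on \emph{each} such frame whenever its argument $u'$ admits a head step. Concretely, $\{x\}\Omega$ has the shape $\TT H[x]$ with $\TT H=\{[\ ]\}\Omega$, yet it is not $\redallh$-normal. So the invariant you need to carry through the induction is that every CbV argument appearing along the head spine is itself head-normal, not only the single $u$ displayed at the hole. You flag exactly this as ``the only delicate point'' and say the head context must not conceal a redex, but you do not state or verify the invariant; strictly speaking the lemma as written in the paper is also silent on this (and the inner ``$t$'' in $\TT H[\{t\}u]$ is visibly a typo), so your sketch is about as complete as the statement permits. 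If you want a fully rigorous proof, strengthen the induction hypothesis to: a pseudo-value is $\redallh$-normal iff along its head spine there is no $\beta$-redex, the head is a variable or a $\nu$-headed CbV application, and every CbV argument on the spine is $\redallh$-normal.
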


A head normal value is defined as in $\EVL$ as a pseudo-value in head normal form. We let HNV$^{\{\}}$ indicate the set of head normal values.
Using Lemma \ref{lemma:hnfecvl} we can define, in analogy with the case of $\EVL$, the functions $\HNF(t):= \sum_{v\in \mathrm{HNV}^{\{\}}}\CC D_{t}(v)$, and 
 $\NHNF(t):= \sup\{ \HNF(u)\mid t\redallh^{*}u\}$.

\section{Details about $\LCPL$.}\label{sectionC}

\subsection{Subject Reduction.}
The goal of this section is to establish the following result:
\begin{proposition}[Subject Reduction]\label{prop:subject}
If $\Gamma\vdash^{X}t:\bone \Pto \FF s$ and $t\redalll u$, then
$\Gamma\vdash^{X}u:\bone \Pto \FF s$.
\end{proposition}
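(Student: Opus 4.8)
The stated proposition is the $\LCPL$/$\redalll$ instance of Subject Reduction (matching the $\LCPL$ half of Proposition~\ref{prop:subje}), where $\redalll = \redbeta \cup \redpermm$ acts on terms of $\EVLL$. The plan is the standard one: first establish the structural and substitution lemmas, then argue by induction on the reduction. I begin with three auxiliary lemmas, each by a routine induction on typing derivations. (1) \emph{Constraint weakening}, i.e. admissibility of the rule $(\vDash)$ from Section~\ref{section5}: if $\Gamma \vdash^{X} t : \btwo \Pto \FF s$ and $\bone \vDash \btwo$, then $\Gamma \vdash^{X} t : \bone \Pto \FF s$ (a degenerate case of $(\lor)$ with $n=1$). (2) \emph{Name weakening/strengthening}: for $a\notin X$, one has $\Gamma \vdash^{X} t : \bone \Pto \FF s$ iff $\Gamma \vdash^{X\cup\{a\}} t : \bone \Pto \FF s$ whenever $a\notin\FN(t)\cup\FN(\bone)$. (3) A \emph{generation lemma} for each term constructor that absorbs intervening $(\lor)$-steps; for instance, from $\Gamma \vdash^{X}\lambda x.t : \bone \Pto \B C^{\vec q}(\FF s \To \tau)$ one extracts Boolean formulas $\bone_i$ with $\bone \vDash \bigvee_i \bone_i$ and derivations $\Gamma, x{:}\FF s \vdash^{X} t : \bone_i \Pto \B C^{\vec q}\tau$, and analogously for application, $\oplus$, $\nu$, and $\{\}$.

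Next I prove the Substitution Lemma that drives the $\beta$-case: if $\Gamma, x{:}\FF s \vdash^{X} t : \bone \Pto \FF t$ and $\Gamma \vdash^{X} u : \bone \Pto \FF s$, then $\Gamma \vdash^{X} t[u/x] : \bone \Pto \FF t$, by induction on the derivation of $t$. The only non-immediate case is $(\mu)$, whose premise lives over $X\cup\{a\}$; there one first name-weakens $u$ to $X\cup\{a\}$ (legitimate since $a\notin\FN(u)$) before applying the induction hypothesis. Since the $\beta$-rule fires only on ordinary application $(\lambda x.t)u$, and never directly on $\{t\}u$, the $\beta$-case only requires inverting $(@)$ and $(\lambda)$.

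The main argument is an induction on $t\redalll u$. If the last typing rule is the structural rule $(\lor)$, it retypes the same term $t$ from premises under constraints $\bone_i$; the induction hypothesis retypes $u$ under each $\bone_i$ and a final $(\lor)$ recovers $\bone$, so $(\lor)$ is dispatched uniformly and inversion (via the generation lemmas) becomes legitimate below it. The context-closure cases reduce directly to the induction hypothesis on the relevant premise, possibly over the enlarged name set $X\cup\{a\}$ for $(\mu)$. For the root redexes: the $\beta$-case $(\lambda x.t)u \redbeta t[u/x]$ follows by inverting $(@)$ and $(\lambda)$, weakening both constraints to $\bone$ with $(\vDash)$, and invoking the Substitution Lemma. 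Each permutative rule of Fig.~\ref{fig:permutations} together with \eqref{eq:e0}, \eqref{eq:e1}, \eqref{eq:e2} is handled by inverting the relevant constructors and reassembling the derivation: e.g. \eqref{eq:oa} inverts $(@)$ over a premise typed by $(\oplus)$ and rebuilds an $(\oplus)$ of two $(@)$'s, matching constraints through the side condition $\bone \vDash (\bvar_{a}^{i}\land\btwo)\lor(\lnot\bvar_{a}^{i}\land\bthree)$; the name-commutation rules $(\oplus\oplus_{1})$, $(\oplus\oplus_{2})$, \eqref{eq:cnu}, \eqref{eq:nuapp}, \eqref{eq:nulambda} additionally shuffle $X$ and the measure premises $\mu(\bthree)\geq q$.

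The main obstacle is the family of cases mixing $\nu$ with the other constructors, since $(\mu)$ changes the ambient name set $X$ and carries the semantic premise $\mu(\bthree)\geq q$. The decisive observation is that $(\mu)$ may be instantiated with \emph{any} bound below the actual measure: in the $\{\}$-permutation \eqref{eq:e0}, $\{t\}\nu a.u \redpermm \nu a.tu$, inversion yields $u : \bthree\land\bfour \Pto \FF s'$ over $X\cup\{a\}$ with $\mu(\bfour)\geq r$, and to reach the target exponent $\B C^{rs}$ (whose slack $s\leq 1$ is exactly what the rule $(\{\})$ builds in) one re-applies $(\mu)$ with the weaker but still valid bound $\mu(\bfour)\geq rs$ after forming $tu$ by $(@)$. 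Checking that the side conditions of $(\mu)$, namely $\FN(\bone)\subseteq X$, $\FN(\bthree)\subseteq\{a\}$, $a\notin X$, and the quantifier-list arithmetic survive every reassembly is the bulk of the bookkeeping; none of it is conceptually deep, but the $(\lor)$-absorbing generation lemmas are indispensable for the inversions to be sound. The $\TCI$ statement of Proposition~\ref{prop:subje} is proved identically, omitting the $(\{\})$/$\redpermm$ cases and treating instead the extra reduction \eqref{eq:notnu} of $\EVL$ via name strengthening.
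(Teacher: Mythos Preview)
Your proposal is correct and follows essentially the same architecture as the paper's proof: auxiliary weakening lemmas (the paper's Lemmas~\ref{lemma:weak2} and~\ref{lemma:weak}), a substitution lemma (the paper's Lemma~\ref{lemma:betasub}), and a case analysis on the reduction. Two minor differences are worth noting. First, where you invoke generation lemmas that absorb intervening $(\lor)$-steps, the paper instead observes once that if the derivation ends in $(\lor)$ the claim reduces to the sub-derivations, and thereafter assumes the last rule is not $(\lor)$; it also replaces $(\oplus)$ by the one-sided rules $(\oplus l)$/$(\oplus r)$ to streamline the inversions. Second, in the \eqref{eq:e0} case the paper does not weaken the measure bound as you suggest: it reassembles the derivation with the \emph{same} premise $\mu(\bthree)\geq s$ and takes the slack parameter in $(\{\})$ equal to $1$, whereas you absorb the slack into a weakened $(\mu)$-premise---both manoeuvres are valid and yield the same conclusion type.
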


With the goal of making proof slightly simpler, in the formulation of $\LCPL$ we replace the rule ($\oplus$) with the two rules ($\oplus l$) and ($\oplus r$) below:

\begin{center}
\adjustbox{scale=0.9}{$
\AXC{$\Gamma \vdash^{X\cup\{a\}}t: \btwo \Pto \FF s$}
\AXC{$\bone\vDash \btwo \land \bvar_{a}^{i}$}
\RL{($\oplus l$)}
\BIC{$\Gamma \vdash^{X\cup\{a\}}t\oplus_{a}^{i} u: \bone \Pto \FF s$}
\DP
\qquad\qquad
\AXC{$\Gamma \vdash^{X\cup\{a\}}u: \btwo \Pto \FF s$}
\AXC{$\bone\vDash \btwo \land \lnot\bvar_{a}^{i}$}
\RL{($\oplus r$)}
\BIC{$\Gamma \vdash^{X\cup\{a\}}t\oplus_{a}^{i} u: \bone \Pto \FF s$}
\DP
$
}
\end{center}

It is easily checked that, in presence of the rule ($\vee$), having the rule ($\oplus$) is equivalent to having the rules ($\oplus l$) and ($\oplus r$).

To establish the subject reduction property, we first need to establish a few auxiliary lemmas.

\begin{lemma}\label{lemma:weak2}
If $\Gamma \vdash^{X}t: \bone \Pto \FF s$ and $X\subseteq Y$, then 
$\Gamma \vdash^{Y}t: \bone\Pto \FF s$.
\end{lemma}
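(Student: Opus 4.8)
The plan is to argue by induction on the derivation of $\Gamma \vdash^{X} t : \bone \Pto \FF s$, enlarging the ambient name set from $X$ to $Y$ at every node. The first thing I would record is that the two kinds of \emph{semantic premises} occurring in the rules are insensitive to the choice of ambient name set, as long as it contains the relevant free names. Indeed, since the satisfaction of a Boolean formula depends only on its free names, for $\FN(\bone), \FN(\btwo) \subseteq X \subseteq Y$ one has $\bone \vDash^{X} \btwo$ iff $\bone \vDash^{Y} \btwo$; and $\mu(\bthree) \geq q$ does not mention the ambient set at all, by the remark in Section~\ref{sub:Cantor} that $\mu_{X}(\model{\bthree}_{X})$ is independent of the choice of $X \supseteq \FN(\bthree)$. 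Hence every side condition appearing in a rule survives the passage from $X$ to $Y$.

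With this in hand, all rules except the counting rule ($\mu$) are immediate: in (id), ($\lor$), ($\oplus$), ($\lambda$), ($@$) and ($\{\}$) the ambient name set occurs uniformly in the premises and the conclusion, so I would apply the induction hypothesis to each premise to replace its ambient set by $Y$, and then reapply the same rule, the side conditions being preserved by the observation above (e.g.\ the proviso $\FN(\bone) \subseteq X$ of (id) becomes $\FN(\bone) \subseteq Y$, which holds since $X \subseteq Y$).

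The only delicate case is ($\mu$), whose last inference is
$$
\AXC{$\Gamma \vdash^{X\cup \{a\}} t: \bone\land \bthree\Pto \FF s$}
\AXC{$\mu(\bthree)\geq q$}
\RL{$(\mu)$}
\BIC{$\Gamma \vdash^{X} \nu a.t: \bone\Pto \B C^{q}\FF s$}
\DP
$$
with the proviso $a \notin X$. If $a \notin Y$, then $X \cup \{a\} \subseteq Y \cup \{a\}$, and the induction hypothesis gives $\Gamma \vdash^{Y \cup \{a\}} t : \bone \land \bthree \Pto \FF s$; since $a \notin Y$, I may reapply ($\mu$) with the same bound name to conclude $\Gamma \vdash^{Y} \nu a.t : \bone \Pto \B C^{q}\FF s$. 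If instead $a \in Y$, I first $\alpha$-rename the bound name: choosing $a'$ occurring neither in $Y$ nor in $t$, a routine name-renaming lemma (provable by a parallel induction on derivations, substituting $a'$ for $a$ in the ambient set, the subject term and the formula $\bthree$, and leaving $\bone$ untouched since $a \notin \FN(\bone)$) turns the premise into $\Gamma \vdash^{X \cup \{a'\}} t[a'/a] : \bone \land \bthree[a'/a] \Pto \FF s$, with $\mu(\bthree[a'/a]) = \mu(\bthree) \geq q$ and $\nu a.t \equiv_{\alpha} \nu a'.(t[a'/a])$. This reduces the case to the previous one with the fresh name $a'$.

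The main obstacle is exactly this interaction between weakening and the freshness proviso of ($\mu$): enlarging $X$ may capture the bound name introduced by a counting inference, and dealing with it cleanly requires the auxiliary name-renaming lemma together with the convention that terms, and hence derivations, are taken up to $\alpha$-equivalence of bound names. Everything else is bookkeeping, and the same argument applies verbatim to $\TCI$ (with the rule ($\mu'$) in place of ($\mu$) and without ($\{\}$)).
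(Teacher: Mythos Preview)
Your proposal is correct and follows exactly the approach the paper takes: the paper's proof consists of the single sentence ``By induction on a type derivation of $t$.'' Your write-up simply spells out the routine cases and makes explicit the standard $\alpha$-renaming needed when the bound name of a $(\mu)$-inference happens to lie in $Y$, a point the paper leaves implicit under the usual variable convention.
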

\begin{proof}
By induction on a type derivation of $t$.
\end{proof}

\begin{lemma}\label{lemma:weak}
If $\Gamma \vdash^{X}t: \bone\Pto \FF s$ holds and $\btwo\vDash^{X} \bone$, then 
$\Gamma \vdash^{X}t: \btwo\Pto \FF s$ is derivable by a derivation of the same length.
\end{lemma}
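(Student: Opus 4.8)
The plan is to prove the statement by induction on the derivation of $\Gamma \vdash^{X} t : \bone \Pto \FF s$, inspecting the last rule applied and, in each case, producing a derivation of $\Gamma \vdash^{X} t : \btwo \Pto \FF s$ of exactly the same length. The cases split naturally into two families, according to whether the rule carries the conclusion's Boolean formula unchanged into a premise, or instead exposes it as the left-hand side of a semantic side condition $\bone \vDash^{X} \psi$.

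For the rules of the first family --- namely (id), where the side condition is the well-formedness constraint $\FN(\bone) \subseteq X$, and $(\lor)$, $(\oplus l)$, $(\oplus r)$, $(@)$, $(\{\})$, where the side condition has the shape $\bone \vDash^{X} \psi$ (with $\psi$ equal to $\bigvee_i \bone_i$, $\btwo' \land \bvar_a^i$, $\btwo' \land \lnot\bvar_a^i$, or $\btwo' \land \bthree'$) --- I would leave the premise derivations untouched. Since $\btwo$ is by hypothesis an $X$-formula with $\btwo \vDash^X \bone$, transitivity of $\vDash^X$ gives $\btwo \vDash^X \psi$ directly (and $\FN(\btwo) \subseteq X$ for the identity rule), so reapplying the very same rule with $\btwo$ in place of $\bone$ and the new side condition yields the desired judgment. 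Only the side condition changes, hence the derivation has the same length.

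The second family consists of $(\lambda)$ and $(\mu)$, where the induction hypothesis does the work. In the case of $(\lambda)$ the premise $\Gamma, x:\FF s \vdash^X t : \bone \Pto \B C^{\vec q}\tau$ carries the same $\bone$, so the IH supplies $\Gamma, x:\FF s \vdash^X t : \btwo \Pto \B C^{\vec q}\tau$ with the same length, and one reapplication of $(\lambda)$ concludes. For $(\mu)$ the premise is $\Gamma \vdash^{X\cup\{a\}} t : \bone \land \bthree \Pto \FF s$ with side condition $\mu(\bthree)\geq q$ and proviso $\FN(\bthree)\subseteq\{a\}$, $a\notin X$. Here I would first lift $\btwo \vDash^X \bone$ to $\btwo \vDash^{X\cup\{a\}} \bone$ --- an entailment between $X$-formulas is preserved when the ambient name set is enlarged, since satisfaction of an $X$-formula by an $(X\cup\{a\})$-valuation depends only on its $X$-restriction --- and then conclude $\btwo \land \bthree \vDash^{X\cup\{a\}} \bone \land \bthree$ by monotonicity of conjunction. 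Applying the IH to the premise against this last entailment and reapplying $(\mu)$ with the unchanged premise $\mu(\bthree)\geq q$ gives $\Gamma \vdash^X \nu a.t : \btwo \Pto \B C^q \FF s$ with the same length.

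The argument is essentially bookkeeping, and I do not expect a genuine obstacle. The only point requiring a moment's care is the $(\mu)$ case, which rests on the two elementary Boolean facts used above: that $\vDash$ is stable under enlarging the name set, and that $\btwo \vDash \bone$ implies $\btwo \land \bthree \vDash \bone \land \bthree$. Both are immediate from the reading of Boolean formulas as Borel sets recalled in Section~\ref{sub:Cantor}, so the induction goes through without difficulty.
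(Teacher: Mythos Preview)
Your proposal is correct and follows exactly the approach the paper indicates (induction on the typing derivation), simply spelling out the case analysis that the paper leaves implicit. Your split into the two families of rules is accurate, and the only slightly delicate point—the $(\mu)$ case, where one must lift the entailment to $X\cup\{a\}$ and conjoin with $\bthree$ before invoking the induction hypothesis—is handled correctly.
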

\begin{proof}
By induction on a type derivation of $t$.
\end{proof}

\begin{lemma}[substitution lemma]\label{lemma:betasub}
The following rule is derivable:
$$
\AXC{$\Gamma, x: \FF s \vdash^{X} t: \btwo \Pto  \FF t$}
\AXC{$\Gamma \vdash^{X} u: \bthree \Pto \FF s$}
\AXC{$\bone\vDash\btwo \land \bthree$}
\RL{(subst)}
\TIC{$\Gamma \vdash^{X} t[u/x]: \bone \Pto \FF t$}
\DP
$$
\end{lemma}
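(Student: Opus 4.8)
The plan is to show that the displayed rule (subst) is admissible by induction on the height of the derivation $\mathcal D$ of its left premise $\Gamma, x:\FF s \vdash^{X} t:\btwo\Pto\FF t$, keeping the right premise $\Gamma\vdash^{X} u:\bthree\Pto\FF s$ and the side condition $\bone\vDash\btwo\land\bthree$ as fixed parameters. Throughout I would lean on the two weakening lemmas already established: Lemma~\ref{lemma:weak2} to enlarge the name set $X$, and Lemma~\ref{lemma:weak} to strengthen the Boolean constraint. A convenient first observation is that, since $\bone\vDash\btwo\land\bthree$, by Lemma~\ref{lemma:weak} it suffices to produce a derivation of $\Gamma\vdash^{X} t[u/x]:\btwo\land\bthree\Pto\FF t$; the stated conclusion then follows by a single application of Boolean weakening. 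When applying the induction hypothesis to a sub-premise I will always first weaken the right premise (by Lemma~\ref{lemma:weak2} and/or to an extended context) so that it matches the name set and context of that sub-premise.

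For the base case the last rule of $\mathcal D$ is (id). If $t=x$, then $\FF t=\FF s$ and $t[u/x]=u$, so the right premise already gives $\Gamma\vdash^{X} u:\bthree\Pto\FF s$, and Lemma~\ref{lemma:weak} (using $\bone\vDash\bthree$) yields $\Gamma\vdash^{X} u:\bone\Pto\FF t$. If $t=y\neq x$, then $t[u/x]=y$ with $y:\FF t\in\Gamma$, and since $\FN(\bone)\subseteq X$ the rule (id) directly gives $\Gamma\vdash^{X} y:\bone\Pto\FF t$.

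The inductive cases for the rules that leave the name set unchanged between premise and conclusion — namely $(\lor)$, $(\lambda)$, $(@)$, $(\{\})$, $(\oplus l)$ and $(\oplus r)$ — are all handled uniformly: I would apply the induction hypothesis to each sub-premise $\Gamma,x:\FF s\vdash^{X} t_k:\btwo_k\Pto\cdots$ with the refined constraint $\btwo_k\land\bthree$, obtaining $\Gamma\vdash^{X} t_k[u/x]:\btwo_k\land\bthree\Pto\cdots$, and then re-apply the very same rule, checking that its Boolean side condition is still met (for instance $(\btwo_1\land\btwo_2)\land\bthree\vDash(\btwo_1\land\bthree)\land(\btwo_2\land\bthree)$ in the $(@)$ and $(\{\})$ cases, and distributing $\land\bthree$ over the disjunction in $(\lor)$); the $(\oplus l)$, $(\oplus r)$ cases only constrain one branch, so a single appeal to the hypothesis suffices. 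Two routine precautions are needed: in the $(\lambda)$ case I first $\alpha$-rename the bound variable so that it does not occur free in $u$ and weaken the right premise to the extended context, so that substitution commutes with $\lambda$; and in every case I use silently that substitution commutes with all term constructors, e.g.\ $(t_1 t_2)[u/x]=(t_1[u/x])(t_2[u/x])$.

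The genuinely delicate case, which I expect to be the main obstacle, is the counting rule $(\mu)$, where $t=\nu a.t'$, $\FF t=\B C^{q}\FF r$, and the premise lives over the enlarged name set $X\cup\{a\}$ with $a\notin X$, carrying a constraint of the form $\btwo\land\bfour$ with $\mu(\bfour)\geq q$ and $\FN(\bfour)\subseteq\{a\}$. Here I would first weaken the right premise by Lemma~\ref{lemma:weak2} to $\Gamma\vdash^{X\cup\{a\}} u:\bthree\Pto\FF s$, then invoke the freshness $a\notin\FN(u)$ (which holds since $\FN(u)\subseteq X$ and $a\notin X$) to ensure that the substitution passes through the binder, i.e.\ $(\nu a.t')[u/x]=\nu a.(t'[u/x])$. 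Applying the induction hypothesis over $X\cup\{a\}$ with the constraint $(\btwo\land\bthree)\land\bfour$ gives $\Gamma\vdash^{X\cup\{a\}} t'[u/x]:(\btwo\land\bthree)\land\bfour\Pto\FF r$, and re-applying $(\mu)$ with the unchanged measure premise $\mu(\bfour)\geq q$ (noting $\FN(\btwo\land\bthree)\subseteq X$) yields $\Gamma\vdash^{X}\nu a.(t'[u/x]):\btwo\land\bthree\Pto\B C^{q}\FF r$, whence a final Boolean weakening closes the case. The only real subtlety lies in tracking which judgments live over $X$ versus $X\cup\{a\}$ and in verifying the freshness side conditions; the remaining bookkeeping is mechanical.
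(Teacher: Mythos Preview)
Your proposal is correct and follows essentially the same approach as the paper: induction on the typing derivation of the left premise, with each case handled by applying the induction hypothesis with the refined Boolean constraint $\btwo_k\land\bthree$ and then re-applying the same rule. Your treatment of the $(\mu)$ case matches the paper's, and you are in fact slightly more explicit than the paper about the bookkeeping (invoking Lemma~\ref{lemma:weak2} to push the right premise to $X\cup\{a\}$, and noting the $\alpha$-renaming needed in the $(\lambda)$ case).
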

\begin{proof}
We argue by induction on the typing derivation of $t$:
\begin{varitemize}

%
%
\item if the last rule is

\begin{prooftree}
\AXC{$\FN(\bone)\subseteq X$}
\RL{(id)}
\UIC{$\Gamma, x: \FF s \vdash^{X}x: \btwo \Pto \FF s$}
\end{prooftree}
then $t[u/x]=u$, so the claim can be deduced using Lemma \ref{lemma:weak}.

\item if the last rule is

\begin{prooftree}
\AXC{$\left \{\Gamma, x: \FF s \vdash^{X}t: \btwo_{i} \Pto \FF t\right\}_{i}$}
\AXC{$\btwo \vDash^{X}\bigvee_{i} \btwo_{i}$}
\RL{$(\lor)$}
\BIC{$\Gamma, x: \FF s \vdash^{X}x: \btwo \Pto \FF t$}
\end{prooftree}

Then, by IH, we deduce $\Gamma, x: \FF s \vdash^{X}t[u/x]: \btwo_{i}\land \bthree \Pto \FF t$, and since 
$\bone \vDash^{X} \bigvee_{i}(\bone_{i}\land \bthree)$ we  conclude by applying an instance of ($\lor$).

\item if the last rule is 

\begin{prooftree}
\AXC{$\Gamma, x:\FF s\vdash^{X\cup\{a\}} t_{1}:\btwo'\Pto \FF t$}
\AXC{$
\btwo\vDash^{X\cup\{a\}}\bvar_{a}^{i}\land \btwo'
$}
\RL{($\oplus l$)}
\BIC{$\Gamma, x:\FF s\vdash^{X\cup\{a\}} t_{1}\oplus^{i}_{a}t_{2}:  \btwo\Pto \FF t$}
\end{prooftree}

Then, by IH, we deduce 
$\Gamma\vdash^{X\cup\{a\}, q} t_{1}[u/x]: \btwo'\land \bthree$.
From $t[u/x]= (t_{1}[u/x])\oplus_{a}^{i}(t_{2}[u/x])$ and the fact that  
$
\bone\vdash^{X\cup\{a\}}\bvar_{a}^{i}\land (\btwo'\land \bthree)
$, we deduce the claim by an instance of the same rule.

\item if the last rule is 

\begin{prooftree}
\AXC{$\Gamma, x:\FF s \vdash^{X\cup\{a\}} t_{2}:\btwo'\Pto \FF t$}
\AXC{$\btwo\vDash^{X\cup\{a\}}\lnot\bvar_{a}^{i}\land \btwo'$}
\RL{$\TR$}
\BIC{$\Gamma, x:\FF s \vdash^{X\cup\{a\}} t_{1}\oplus^{i}_{a}t_{2}:  \btwo\Pto \FF t$}
\end{prooftree}

we can argue similarly to the previous case.

\item if the last rule is 

\begin{prooftree}
\AXC{$\Gamma, y:\FF t', x: \FF s \vdash^{X} t': \btwo\Pto\B C^{\vec q}\tau$}
\RL{$\TLA$}
\UIC{$\Gamma, x:\FF s \vdash^{X} \lambda y.t':\btwo\Pto  \B C^{\vec q}(\FF t'\To\tau) $}
\end{prooftree}

Then, by IH, we deduce $\Gamma, y:\FF t' \vdash^{X}t'[u/x]: \bone \Pto \B C^{\vec q}\tau$ and since $t[u/x]=(\lambda y.t')[u/x]=\lambda y.t'[u/x]$ we conclude by applying an instance of the same rule.

\item if the last rule is 

\begin{prooftree}
\AXC{$\Gamma, x:\FF s \vdash^{X} t_{1}: \btwo_{1}\Pto \B C^{\vec q}(\FF t' \To \tau)$}
\AXC{$\Gamma,x:\FF s \vdash^{X} t_{2}: \btwo_{2} \Pto \FF t'$}
\AXC{$\btwo\vDash^{X} \btwo_{1}\land \btwo_{2}$}
\RL{($@$)}
\TIC{$\Gamma,x:\FF s\vdash^{X} t_{1}t_{2}: \btwo \Pto \B C^{\vec q}\tau$}
\end{prooftree}

Then, by IH, we deduce $\Gamma \vdash^{X}t_{1}[u/x]: \btwo_{1}\land \bthree \Pto \B C^{\vec q}(\FF t'\To \tau)$ and 
$\Gamma \vdash^{X}t_{2}[u/x]: \btwo_{2}\land \bthree\Pto \FF t'$, and since $(t_{1}t_{2})[u/x]= (t_{1}[u/x])(t_{2}[u/x])$ and  
$\bone \vDash^{X} (\btwo_{1}\land \bthree)\land (\btwo_{2}\land \bthree)$, 
 we conclude by applying an instance of the same rule.

\item if the last rule is 

\begin{prooftree}
\AXC{$\Gamma, x:\FF s \vdash^{X} t_{1}: \btwo_{1}\Pto \B C^{\vec q}(\FF t' \To \tau)$}
\AXC{$\Gamma,x:\FF s \vdash^{X} t_{2}: \btwo_{2} \Pto \B C^{s}\FF t'$}
\AXC{$\btwo\vDash^{X} \btwo_{1}\land \btwo_{2}$}
\RL{($\{\}$)}
\TIC{$\Gamma,x:\FF s\vdash^{X} \{t_{1}\}t_{2}: \btwo \Pto \B C^{s}\B C^{\vec q}\tau$}
\end{prooftree}

Then, by IH, we deduce $\Gamma \vdash^{X}t_{1}[u/x]: \btwo_{1}\land \bthree \Pto \B C^{\vec q}(\FF t'\To \tau)$ and 
$\Gamma \vdash^{X}t_{2}[u/x]: \btwo_{2}\land \bthree\Pto \B C^{s}\FF t'$, and since $\{t_{1}\}t_{2})[u/x]= \{t_{1}[u/x]\}(t_{2}[u/x])$ and  
$\bone \vDash^{X} (\btwo_{1}\land \bthree)\land (\btwo_{2}\land \bthree)$, 
 we conclude by applying an instance of the same rule.

\item if the last rule is

\begin{prooftree}
\AXC{$\Gamma, x:\FF s \vdash^{X\cup\{a\}} t:\btwo'\land \bfour\Pto \FF t$}
\AXC{$\mu(\bfour)\geq s$}
\AXC{$\btwo \vDash^{X} \btwo'$}
\RL{$\TN$}
\TIC{$\Gamma, x:\FF s \vdash^{X} \nu a.t: \btwo\Pto  \B C^{s}\FF t$}
\end{prooftree}

Then, by IH,
$\Gamma \vdash^{X\cup\{a\}} t[u/x]:
(\btwo'\land \bthree)\land \bfour_{i}
\Pto \FF t$.
%
Hence, from the fact that  
$\bone \vDash^{X}\btwo \land \bthree$ and that $a$ cannot occur in $\bthree$, and since 
 $(\nu a.t)[u/x]=\nu a.t[u/x]$, we can deduce the claim by applying an instance of the same rule. 
%
%
%
%
\end{varitemize}

\end{proof}

We now have all ingredients to establish the subject reduction property of $\LCPL$.

%
%
\begin{proof}[Proof of Proposition \ref{prop:subject}]
First observe that if the typing derivation $D$ of $t$ ends by a $(\lor)$-rule, it suffices to establish the property for the immediate sub-derivations of $D$ and then apply an instance of $(\lor)$-rule to the resulting derivations. So we will always suppose that the typing derivation of $D$ does not end by a $(\lor)$-rule.

For the case of $\beta$-reduction it suffices to check the claim when $t$ is a redex $(\lambda x.t_{1})t_{2}$ and $u$ is $t_{1}[t_{2}/x]$. From 
$\Gamma\vdash^{X}t:\bone \Pto \FF s$ we can suppose w.l.o.g.~that
the typing derivation is as below: 
\begin{center}

\begingroup\makeatletter\def\f@size{10}\check@mathfonts
\begin{lrbox}{\mypti}
\begin{varwidth}{\linewidth}
$
\AXC{$\Gamma,x:\FF t \vdash^{X}t_{1}: \btwo_{i}\Pto  \B C^{\vec q}\sigma$}
\UIC{$\Gamma \vdash^{X}\lambda x.t_{1}: \btwo_{i}\Pto\B C^{\vec q}(\FF t\To  \sigma)$}
\DP
$
\end{varwidth}
\end{lrbox}

\resizebox{\textwidth}{!}{$
\AXC{$\left\{\usebox{\mypti}
\right\}_{i}$}
\AXC{$\bone_{1}\vDash \bigvee_{i}\btwo_{i}$}
\RL{$(\lor)$}
\BIC{$\Gamma \vdash^{X}\lambda x.t_{1}: \bone_{1}\Pto  \B C^{\vec q}(\FF t\To \sigma)$}
\AXC{$\left\{\Gamma \vdash^{X}t_{2}: \bthree_{j}\Pto \FF t\right\}_{j}$}
\AXC{$\bone_{2}\vDash \bigvee_{j}\bthree_{j}$}
\RL{$(\lor)$}
\BIC{$\Gamma \vdash^{X}t_{2}: \bone_{2}\Pto\FF t$}
\AXC{$\bone \vDash \bone_{1}\land \bone_{2}$}
\TIC{$\Gamma \vdash^{X}t: \bone\Pto  \B C^{\vec q}\sigma$}
\DP
$}
\endgroup
\end{center}

From Lemma \ref{lemma:betasub} we deduce the existence of derivations of  
$\Gamma \vdash^{X} u: \btwo_{i}\land \bthree_{j}\Pto \B C^{\vec q}\sigma$;
from $\bone \vDash^{X}\bone_{1}\land \bone_{2}$ we deduce then
$\bone \vDash^{X} \bigvee_{i,j}\btwo_{i}\land \bthree_{j}$, using the fact that 
$\bigvee_{i}\btwo_{i} \land \bigvee_{j}\bthree_{j}\equiv \bigvee_{i,j}\btwo_{i}\land \bthree_{j}$. We can thus conclude as follows:

$$
\AXC{$\left\{\Gamma \vdash^{X} u: \btwo_{i}\land \bthree_{j}\Pto \B C^{\vec q}\sigma\right\}_{i,j}$}
\AXC{$\bone\vDash^{X} \bigvee_{i,j}\btwo_{i}\land \bthree_{j}$}
\RL{$(\lor)$}
\BIC{$\Gamma \vdash^{X}u: \bone \Pto \B C^{\vec q}\sigma$}
\DP
$$

%
%
%
%
%

%
%
For the case of $\redpermm$ wee consider reduction rules one by one. 
\begin{description}

\item[($t\oplus_{a}^{i}t\redpermm t$)]  
The last rule of $t$ is either
\begin{prooftree}
\AXC{$ \Gamma \vdash^{ X} t: \bone'\Pto \FF s$}
\AXC{$ \bone \vDash^{X} \bvar_{a}^{i}\land \bone'$}
\RL{($\oplus l$)}
\BIC{$ \Gamma \vdash^{X} t\oplus^{a}_{i}t:  \bone\Pto \FF s$}
\end{prooftree}

or

\begin{prooftree}
\AXC{$ \Gamma \vdash^{ X} t: \bone'\Pto \FF s$}
\AXC{$ \bone \vDash^{X}\lnot \bvar_{a}^{i}\land \bone'$}
\RL{($\oplus r$)}
\BIC{$ \Gamma \vdash^{X} t\oplus^{a}_{i}t:  \bone\Pto \FF s$}
\end{prooftree}
Then, in either case, from $ \Gamma \vdash^{ X} t: \bone'\Pto \FF s$, $\bone \vDash^{X} \bone'$, using Lemma \ref{lemma:weak} we deduce $ \Gamma \vdash^{ X} t: \bone\Pto \FF s$.

\bigskip

\item[($ (t\oplus_{a}^{i}u)\oplus_{a}^{i}v \redpermm t\oplus_{a}^{i}v$)] 
There are three possible sub-cases:
	\begin{enumerate}
	\item the type derivation is as follows:
	
\begin{prooftree}
	\AXC{$\Gamma\vdash^{X} t: \bone''\Pto \FF s$}
	\AXC{$\bone'\vDash^{X} \bvar_{a}^{i}\land \bone''$}
	\RL{$\TL$}
	\BIC{$\Gamma \vdash^{X} t\oplus_{a}^{i} u :\bone'\Pto \FF s$}
	\AXC{$\bone\vDash^{X} \bvar_{a}^{i}\land \bone'$}
	\RL{$\TL$}
	\BIC{$\Gamma \vdash^{X} (t\oplus_{a}^{i} u)\oplus_{a}^{i} :\bone\Pto \FF s$}
\end{prooftree}

	Then from $\Gamma\vdash^{X} t: \bone''\Pto \FF s$ and since we have $\bone\vdash^{X} \bvar_{a}^{i}\land \bone''$ we deduce $\Gamma \vdash^{X}t\oplus_{a}^{i}v : \bone \Pto \FF s$.
	
	\item the type derivation is as follows:
\begin{prooftree}
	\AXC{$\Gamma\vdash^{X} u: \bone''\Pto \FF s$}
	\AXC{$\bone'\vDash^{X} \lnot\bvar_{a}^{i}\land \bone''$}
	\RL{$\TR$}
	\BIC{$\Gamma \vdash^{X} t\oplus_{a}^{i} u :\bone'\Pto \FF s$}
	\AXC{$\bone\vDash^{X} \bvar_{a}^{i}\land \bone'$}
	\RL{$\TL$}
	\BIC{$\Gamma \vdash^{X} (t\oplus_{a}^{i} u)\oplus_{a}^{i}v :\bone\Pto \FF s$}
\end{prooftree}

	Then, from $\bone\vDash^{X} \bvar_{a}^{i}\land \bone'$ and $\bone'\vDash^{X} \lnot\bvar_{a}^{i}\land \bone''$ we deduce $\bone \vDash^{X}\BOT$, so we conclude $\Gamma \vdash^{X} (t\oplus_{a}^{i}v): \bone \Pto \FF s$ using one of the initial rules.
	
	\item the type derivation is as follows:
	
\begin{prooftree}
	\AXC{$\Gamma\vdash^{X} v: \bone''\Pto \FF s$}
	\AXC{$\bone\vDash^{X} \lnot \bvar_{a}^{i}\land \bone''$}
	\RL{$\TR$}
	\BIC{$\Gamma \vdash^{X} (t\oplus_{a}^{i} u)\oplus_{a}^{i}v :\bone\Pto \FF s$}
\end{prooftree}

	Then from $\Gamma\vdash^{X} u: \bone'\Pto \FF s$ and  $\bone\vDash^{X} \lnot \bvar_{a}^{i}\land \bone''$ we deduce $\Gamma \vdash^{X}t\oplus_{a}^{i}v : \bone \Pto \FF s$.
	\end{enumerate}

	\bigskip

\item[($t\oplus_{a}^{i}(u\oplus_{a}^{i}v)\redpermm t\oplus_{a}^{i}v$)] Similar to the case above.

\bigskip

\item[($\lambda x.(t\oplus_{a}^{i}u)\redpermm (\lambda x.t)\oplus_{a}^{i}(\lambda x.u)$)]
There are two possible sub-cases:
	\begin{varitemize}
	\item[1.]
	
\begin{prooftree}
	\AXC{$\Gamma, x:\FF s \vdash^{X} t: \bone'\Pto \B C^{\vec q}\tau$}
	\AXC{$\bone\vDash^{X} \bvar_{a}^{i}\land \bone'$}
	\RL{$\TL$}
	\BIC{$\Gamma, x:\FF s \vdash^{X} t\oplus_{a}^{i}u: \bone\Pto \B C^{\vec q}\tau$}
	\RL{$\TLA$}
	\UIC{$\Gamma \vdash^{X} \lambda x.(t\oplus_{a}^{i}u): \bone \Pto\B C^{\vec q} (\FF s\To \tau)$}
\end{prooftree}

	Then, we deduce 
	
\begin{prooftree}
	\AXC{$\Gamma, x:\FF s \vdash^{X} t: \bone'\Pto \B C^{\vec q}\tau$}
	\RL{$\TLA$}
	\UIC{$\Gamma \vdash^{X} \lambda x.t: \bone' \Pto\B C^{\vec q} (\FF s\To \tau)$}
	\AXC{$\bone\vDash^{X} \bvar_{a}^{i}\land \bone'$}
	\RL{$\TL$}
	\BIC{$\Gamma \vdash^{X} (\lambda x.t)\oplus_{a}^{i}(\lambda x.u): \bone \Pto \B C^{\vec q} (\FF s\To \tau)$}
\end{prooftree}

	\item[2.]
	
\begin{prooftree}
	\AXC{$\Gamma, x:\FF s \vdash^{X}u: \bone'\Pto \B C^{\vec q}\tau$}
	\AXC{$\bone\vDash^{X} \lnot\bvar_{a}^{i}\land \bone'$}
	\RL{$\TR$}
	\BIC{$\Gamma, x:\FF s \vdash^{X} t\oplus_{a}^{i}u: \bone\Pto \B C^{\vec q}\tau$}
	\RL{$\TLA$}
	\UIC{$\Gamma \vdash^{X} \lambda x.(t\oplus_{a}^{i}u): \bone \Pto \B C^{\vec q}(\FF s\To \tau)$}
\end{prooftree}

	Then, we can argue similarly to the previous case.
	
	\end{varitemize}

	\item[($(t\oplus_{a}^{i}u)v \redpermm (tv)\oplus_{a}^{i}(uv)$)] 
	There are two possible sub-cases:
	\begin{enumerate}
	\item 
	\footnotesize{
\begin{prooftree}
	\AXC{$\Gamma \vdash^{X} t: \bone''\Pto \B C^{\vec q}(\FF s\To \tau)$}
	\AXC{$\bone'\vDash^{X} \bvar_{a}^{i}\land \bone''$}
	\RL{$\TL$}
	\BIC{$\Gamma\vdash^{X} t\oplus_{a}^{i} u: \bone' \Pto \B C^{\vec q}(\FF s\To\tau)$}
	\AXC{$\Gamma \vdash^{X}v: \btwo \Pto \FF s$}
	\AXC{$\bone \vDash^{X} \bone'\land \btwo$}
	\RL{$\TA$}
	\TIC{$\Gamma\vdash^{X}(t\oplus_{a}^{i}u)v: \bone \Pto \B C^{\vec q}\tau$} 
\end{prooftree}
}
\normalsize
	Then, we deduce 
	\footnotesize{
\begin{prooftree}
	\AXC{$\Gamma \vdash^{X} t: \bone''\Pto  \B C^{\vec q}(\FF s\To\tau)$}
	\AXC{$\Gamma \vdash^{X}v: \btwo \Pto \FF s$}
	\AXC{$\bone'\land \btwo \vDash^{X} \bone''\land \btwo$}
	\RL{$\TA$}
	\TIC{$\Gamma \vdash^{X} tv : \bone'\land \btwo \Pto  \B C^{\vec q}\tau$}
	\AXC{$\bone \vDash^{X} \bvar_{a}^{i}\land \bone'\land \btwo $}
	\RL{$\TL$}
	\BIC{$\Gamma \vdash^{X} (tv)\oplus_{a}^{i} (uv): \bone \Pto  \B C^{\vec q}\tau$}
\end{prooftree}
}
\normalsize

	\item 
	\footnotesize
\begin{prooftree}
	\AXC{$\Gamma \vdash^{X} u: \bone''\Pto  \B C^{\vec q}(\FF s\To\tau)$}
	\AXC{$\bone'\vDash^{X} \lnot\bvar_{a}^{i}\land \bone''$}
	\RL{$\TR$}
	\BIC{$\Gamma\vdash^{X} t\oplus_{a}^{i} u: \bone' \Pto  \B C^{\vec q}(\FF s\To\tau)$}
	\AXC{$\Gamma \vdash^{X}v: \btwo \Pto \FF s$}
	\AXC{$\bone \vDash^{X} \bone'\land \btwo$}
	\RL{$\TA$}
	\TIC{$\Gamma\vdash^{X}(t\oplus_{a}^{i}u)v: \bone \Pto  \B C^{\vec q}\tau$} 
\end{prooftree}
\normalsize

	Then, we can argue similarly to the previous case.
	\end{enumerate}

\bigskip

	\item[($\{t\oplus_{a}^{i}u\}v \redpermm (\{t\}v)\oplus_{a}^{i}(\{u\}v)$)]
	Similar to the previous case. 

\item[{($t(u\oplus_{a}^{i}v)\redpermm (tu)\oplus_{a}^{i}(tv)$)}]
	There are two sub-cases:
	\begin{enumerate}
	\item 
	
	\footnotesize{
\begin{prooftree}
	\AXC{$\Gamma \vdash^{X} t: \bone'\Pto \B C^{\vec q}(\FF s\To\tau)$}
	\AXC{$\Gamma \vdash^{X}u: \btwo' \Pto \FF s$}
	\AXC{$\btwo\vDash^{X} \bvar_{a}^{i}\land \btwo'$}
	\RL{$\TL$}
	\BIC{$\Gamma \vdash^{X} u\oplus_{a}^{i} v: \btwo \Pto \FF s$}
	\AXC{$\bone \vDash^{X} \bone'\land \btwo$}
	\RL{$\TA$}
	\TIC{$\Gamma\vdash^{X}t(u\oplus_{a}^{i}v): \bone \Pto  \B C^{\vec q}\tau$} 
\end{prooftree}
}
\normalsize
	Then, we deduce that 
\footnotesize{	
\begin{prooftree}
	\AXC{$\Gamma \vdash^{X} t: \bone'\Pto  \B C^{\vec q}(\FF s\To\tau)$}
	\AXC{$\Gamma \vdash^{X}u: \btwo' \Pto \FF s$}
	\AXC{$\bone \vDash^{X} \bone'\land \btwo'$}
	\RL{$\TA$}
	\TIC{$\Gamma\vdash^{X}tu: \bone \Pto  \B C^{\vec q}\tau$} 
	\AXC{$\bone\vDash^{X} \bvar_{a}^{i}\land \bone$}
	\RL{$\TR$}
	\BIC{$\Gamma \vdash^{X} (tu)\oplus_{a}^{i}(tv): \bone \Pto  \B C^{\vec q}\tau$}
\end{prooftree}
}
\normalsize
	\item 
\footnotesize{	
\begin{prooftree}
	\AXC{$\Gamma \vdash^{X} t: \bone'\Pto  \B C^{\vec q}(\FF s\To\tau)$}
	\AXC{$\Gamma \vdash^{X}v: \btwo' \Pto \FF s$}
	\AXC{$\btwo\vDash^{X}\lnot \bvar_{a}^{i}\land \btwo'$}
	\RL{$\TR$}
	\BIC{$\Gamma \vdash^{X} u\oplus_{a}^{i} v: \btwo \Pto \FF s$}
	\AXC{$\bone \vDash^{X} \bone'\land \btwo$}
	\RL{$\TA$}
	\TIC{$\Gamma\vdash^{X}t(u\oplus_{a}^{i}v): \bone \Pto  \B C^{\vec q}\tau$} 
\end{prooftree}
}
\normalsize
	Then, we can argue similarly to the previous case.
	\end{enumerate}

\item[{($\{t\}(u\oplus_{a}^{i}v)\redpermm (\{t\}u)\oplus_{a}^{i}(\{t\}v)$)}]
Similar to the previous case.

\bigskip

\item[($(t\oplus_{a}^{i}u)\oplus_{b}^{j}v \redpermm (t\oplus_{b}^{j}v)\oplus_{a}^{i}(u\oplus_{b}^{j}v)$)]  
We suppose here $a\neq b$ or $i< j$. There are three sub-cases:
		\begin{enumerate}
		\item 
		
\begin{prooftree}

		\AXC{$\Gamma \vdash^{X} t: \bone''\Pto \FF s$}
		\AXC{$\bone'\vDash^{X} \bvar_{a}^{i} \land \bone''$}
		\RL{$\TL$}
		\BIC{$\Gamma \vdash^{X}t\oplus_{a}^{i}u: \bone'\Pto \FF s$}
		\AXC{$\bone\vDash^{X} \bvar_{b}^{j} \land \bone'$}
		\RL{$\TL$}
		\BIC{$\Gamma \vdash^{X}(t\oplus_{a}^{i}u)\oplus_{b}^{j}v: \bone'\Pto \FF s$}
\end{prooftree}

		Then, we deduce 		
\begin{prooftree}
		\AXC{$\Gamma \vdash^{X} t: \bone''\Pto \FF s$}
		\AXC{$\bone\vDash^{X} \bvar_{b}^{j} \land \bone''$}
		\RL{$\TL$}
		\BIC{$\Gamma \vdash^{X}t\oplus_{b}^{j}v: \bone'\Pto \FF s$}
		\AXC{$\bone\vDash^{X} \bvar_{a}^{i} \land \bone$}
		\RL{$\TL$}
		\BIC{$\Gamma \vdash^{X}(t\oplus_{b}^{j}u)\oplus_{a}^{i}(u\oplus_{b}^{j}v): \bone'\Pto \FF s$}
\end{prooftree}

		\item 
		
\begin{prooftree}
		\AXC{$\Gamma \vdash^{X} u: \bone''\Pto \FF s$}
		\AXC{$\bone'\vDash^{X} \lnot\bvar_{a}^{i} \land \bone''$}
		\RL{$\TR$}
		\BIC{$\Gamma \vdash^{X}t\oplus_{a}^{i}u: \bone'\Pto \FF s$}
		\AXC{$\bone\vDash^{X} \bvar_{b}^{j} \land \bone'$}
		\RL{$\TL$}
		\BIC{$\Gamma \vdash^{X}(t\oplus_{a}^{i}u)\oplus_{b}^{j}v: \bone'\Pto \FF s$}
\end{prooftree}

		Then, we can argue similarly to the previous case.
		\item 
		
\begin{prooftree}
		\AXC{$\Gamma \vdash^{X} v: \bone'\Pto \FF s$}
		\AXC{$\bone\vDash^{X} \bvar_{b}^{j} \land \bone'$}
		\RL{$\TR$}
		\BIC{$\Gamma \vdash^{X}(t\oplus_{a}^{i}u)\oplus_{b}^{j}v: \bone\Pto \FF s$}
\end{prooftree}
		Then, we deduce (using the fact that $\bone\equiv_{X} (\bvar_{a}^{i}\land \bone)\vee(\lnot \bvar_{a}^{i}\land \bone)$)

		\medskip
	
		\begin{center}
		\resizebox{0.95\textwidth}{!}{
		$
		\AXC{$\Gamma \vdash^{X} v: \bone'\Pto \FF s$}
		\AXC{$\bone\vDash^{X} \bvar_{b}^{j} \land \bone'$}
		\RL{$\TR$}
		\BIC{$\Gamma \vdash^{X}t\oplus_{b}^{j}v: \bone\Pto \FF s$}
		\AXC{$\bvar_{a}^{i}\land \bone \vDash^{X} \bvar_{a}^{i}\land \bone$}
		\RL{$\TR$}
		\BIC{$\Gamma \vdash^{X} (t\oplus_{b}^{j}v)\oplus_{a}^{i}(u\oplus_{b}^{j}v): \bvar_{a}^{i}\land \bone\Pto \FF s$}
		\AXC{$\Gamma \vdash^{X} v: \bone'\Pto \FF s$}
		\AXC{$\bone\vDash^{X} \bvar_{b}^{j} \land \bone'$}
		\RL{$\TR$}
		\BIC{$\Gamma \vdash^{X}u\oplus_{b}^{j}v: \bone\Pto \FF s$}
		\AXC{$\lnot\bvar_{a}^{i}\land \bone \vDash^{X} \bvar_{a}^{i}\land \bone$}
		\RL{$\TR$}
		\BIC{$\Gamma \vdash^{X} (t\oplus_{b}^{j}v)\oplus_{a}^{i}(u\oplus_{b}^{j}v): \lnot\bvar_{a}^{i}\land \bone\Pto\FF s$}
		\AXC{$\bone\vDash^{X} (\bvar_{a}^{i} \land \bone)\lor (\lnot \bvar_{a}^{i}\land \bone)$}
		\RL{$\TU$}
		\TIC{$\Gamma \vdash^{X} (t\oplus_{b}^{j}v)\oplus_{a}^{i}(u\oplus_{b}^{j}v): \bone\Pto \FF s$}
		\DP
		$}
		\end{center}
		\end{enumerate}

\item[($t\oplus_{b}^{j}(u\oplus_{a}^{i}v) \redpermm (t\oplus_{b}^{j}u)\oplus_{a}^{i}(t\oplus_{b}^{j}v)$)]  

Similar to the case above.

\bigskip

\item[{($\nu b. (t\oplus_{a}^{i}u)\redpermm (\nu b.t)\oplus_{a}^{i}(\nu b.u)$)}] 
We suppose $a\neq b$.
There are two sub-cases:
	\begin{enumerate}
	\item

	$$
\AXC{$\Gamma \vdash^{X\cup\{a,b\}}t: \bthree \Pto  \FF s$}
\AXC{$\bone'\land \btwo\vDash \bvar_{a}^{i}\land \bthree$}
\RL{($\oplus l$)}
\BIC{$\Gamma \vdash^{X\cup\{a,b\}}t\oplus_{a}^{i}u: \bone' \land \btwo\Pto  \FF s$}
	\AXC{$\mu(\btwo)\geq s$}
	\AXC{$\bone \vDash \bone'$}
	\RL{($\mu$)}
	\TIC{$\Gamma \vdash^{X\cup\{a\}}\nu b.t\oplus_{a}^{i}u :\bone\Pto\B C^{s} \FF s$}
	\DP	
	$$
	
From $\bone'\land \btwo_{i}\vDash  \bthree$, by Lemma \ref{lemma:weak}, we deduce the existence of a derivation of $\Gamma \vdash^{X\cup\{a,b\}}t: \bone'\land \btwo\Pto  \FF s$. Moreover, from $\bone'\land \btwo_{i}\vDash \bvar_{a}^{i}\land \bthree_{i}$ and the fact that $\mathrm{FN}(\btwo_{i})\subseteq \{b\}$ it follows that	$\bone'\vDash \bvar_{a}^{i}$, so we can construct the following derivation:

$$
\AXC{$\Gamma \vdash^{X\cup\{a,b\}}t: \bone'\land\btwo\Pto  \FF s$}
	\AXC{$\mu(\btwo_{i})\geq s$}
		\RL{($\mu$)}
	\BIC{$\Gamma \vdash^{X\cup\{a\}}\nu b.t: \bone' \Pto \B C^{s}\FF s$}
	\AXC{$\bone \vDash \bvar_{a}^{i}\land \bone'$}
	\RL{($\oplus l$)}
\BIC{$\Gamma \vdash^{X\cup\{a\}}(\nu b.t)\oplus_{a}^{i}(\nu b.u): \bone \Pto \B C^{s}\FF s$}
\DP
$$

	\item
	
	$$
\AXC{$\Gamma \vdash^{X\cup\{a,b\}}u: \bthree \Pto  \FF s$}
\AXC{$\bone'\land \btwo\vDash\lnot \bvar_{a}^{i}\land \bthree$}
\RL{($\oplus r$)}
\BIC{$\Gamma \vdash^{X\cup\{a,b\}}t\oplus_{a}^{i}u: \bone' \land \btwo\Pto  \FF s$}
	\AXC{$\mu(\btwo)\geq s$}
	\AXC{$\bone \vDash \bone'$}
		\RL{($\mu$)}
	\TIC{$\Gamma \vdash^{X\cup\{a\}}\nu b.t\oplus_{a}^{i}u :\bone\Pto\B C^{s} \FF s$}
	\DP	
	$$

\normalsize
	The we can argue similarly to the previous case.
	\end{enumerate}

\bigskip 
\item[($\lambda x.\nu a.t\redpermm \nu a.\lambda x.t$ )]  
We have

\begin{prooftree}
\AXC{$\Gamma, x: \FF s \vdash^{X\cup\{a\}} t: \bone'\land \btwo \Pto \B C^{\vec q}\sigma$}
\AXC{$\mu(\btwo)\geq s$}
\AXC{$\bone \vDash^{X} \bone'$}
\RL{$\TN$}
\TIC{$\Gamma, x:\FF s\vdash^{X}\nu a.t : \bone \Pto \B C^{s}\B C^{\vec q}\sigma$}
\RL{$\TLA$}
\UIC{$\Gamma \vdash^{X}\lambda x.\nu a.t: \bone \Pto \B C^{s}\B C^{\vec q}( \FF s\To \sigma)$}
\end{prooftree}
from which we deduce:

$$
\AXC{$
\Gamma, x: \FF s \vdash^{X\cup\{a\}} t: \bone'\land \btwo\Pto \B C^{\vec q}\sigma$}
\RL{$\TLA$}
\UIC{$\Gamma \vdash^{X}\lambda x.t: \bone'\land \btwo\Pto \B C^{\vec q}(\FF s\To \sigma)$}
\AXC{$\mu(\btwo)\geq s$}
\AXC{$\bone \vDash^{X} \bone'$}
\RL{$\TN$}
\TIC{$\Gamma \vdash^{X}\nu a.\lambda x.t: \bone \Pto\B C^{s}\B C^{\vec q}( \FF s\To \sigma)$}
\DP
$$

\item[($ (\nu a.t)u \redpermm \nu a.(tu)$)] 
We have

$$
\AXC{$\Gamma\vdash^{X\cup\{a\},} t: \bone'\land \btwo \Pto \B C^{\vec q}(\FF s\To\sigma)$}
\AXC{$\mu(\btwo)\geq s$}
\AXC{$\bone''\vDash \bone'$}
\RL{($\mu$)}
\TIC{$\Gamma\vdash^{X}\nu a. t: \bone'' \Pto \B C^{s}\B C^{\vec q}(\FF s\To\sigma)$}
\AXC{$ \Gamma \vdash^{X} u: \bthree\Pto \FF s$}
\AXC{$\bone\vDash \bone''\land \bthree$}
\RL{($@$)}
\TIC{$\Gamma\vdash^{X}(\nu a.t)u: \bone \Pto \B C^{s}\B C^{\vec q}\sigma$}
\DP
$$


from which we deduce

$$
\AXC{$\Gamma\vdash^{X\cup\{a\},} t: \bone'\land \btwo \Pto \B C^{\vec q}(\FF s\To\sigma)$}
\AXC{$ \Gamma \vdash^{X} u: \bthree\Pto \FF s$}
\AXC{$\bone'\land \btwo\vDash (\bone'\land \btwo)\land \bthree$}
\RL{($@$)}
\TIC{$\Gamma\vdash^{X}tu: \bone'\land \btwo \Pto \B C^{\vec q}\sigma$}
\AXC{$\mu(\btwo)\geq s$}
\AXC{$\bone''\vDash \bone'$}
\RL{($\mu$)}
\TIC{$\Gamma\vdash^{X}\nu a.tu: \bone \Pto \B C^{s}\B C^{\vec q}\sigma$}
\DP
$$

\item[($ \{t\}\nu a.u \redpermm \nu a.(tu)$)] 
We have 
$$
\AXC{$\Gamma\vdash^{X} t: \bone'\land \btwo \Pto \B C^{\vec q}(\FF s\To\sigma)$}
\AXC{$ \Gamma \vdash^{X\cup\{a\}} u: \btwo' \land \bthree\Pto \FF s$}
\AXC{$\mu(\bthree)\geq s$}
\AXC{$\btwo'\vDash \btwo$}
\RL{($\mu$)}
\TIC{$\Gamma\vdash^{X}\nu a. u: \btwo \Pto \B C^{s}\FF s$}
\AXC{$\bone\vDash \bone'\land \btwo$}
\RL{($\{\}$)}
\TIC{$\Gamma\vdash^{X}\{t\}\nu a.u: \bone \Pto \B C^{s}\B C^{\vec q}\sigma$}
\DP
$$

from which we deduce (using Lemma \ref{lemma:weak2})

\begin{center}
\adjustbox{scale=0.92}{
$
\AXC{$\Gamma\vdash^{X\cup\{a\}} t: \bone'\land \btwo \Pto \B C^{\vec q}(\FF s\To\sigma)$}
\AXC{$ \Gamma \vdash^{X\cup\{a\}} u: \btwo' \land \bthree\Pto \FF s$}
\AXC{$\bone'\land \bthree\vDash \bone'\land (\btwo\land \bthree)$}
\RL{($@$)}
\TIC{$\Gamma\vdash^{X}tu: \bone'\land \bthree  \Pto \B C^{\vec q}\sigma$}
\AXC{$\mu(\bthree)\geq s$}
\AXC{$\bone\vDash \bone'$}
\RL{($\mu$)}
\TIC{$\Gamma\vdash^{X}\nu a. tu: \btwo \Pto \B C^{s}\B C^{\vec q}\sigma$}
\DP
$}
\end{center}

%

%
%
%

\end{description}
\end{proof}

\section{Details about $\TCINT$.}

\subsection{Subject Reduction.}

In this subsection we show that subject reduction holds for $\TCINT$ (and a fortiori for $\TCI$).
\begin{proposition}[Subject Reduction]\label{prop:subjectb}
If $\Gamma\vdash^{X}t:\bone \Pto \FF s$ and $t\redall u$, then
$\Gamma\vdash^{X}u:\bone \Pto \FF s$.
\end{proposition}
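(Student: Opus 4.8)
The plan is to follow closely the proof of subject reduction for $\LCPL$ (Proposition~\ref{prop:subject}), re-establishing the same infrastructure and then adapting the case analysis to the intersection setting. First I would re-prove the two weakening lemmas: that enlarging the name context $X$ preserves typability (as in Lemma~\ref{lemma:weak2}), and that strengthening the Boolean side formula, i.e.\ replacing $\bone$ by any $\btwo$ with $\btwo \vDash^{X}\bone$, preserves typability with a derivation of the same length (as in Lemma~\ref{lemma:weak}). Both are immediate inductions on derivations, since the rules of $\TCINT$ only constrain $X$ through side conditions of the form $\FN(\bone)\subseteq X$, and since every rule is monotone in its Boolean premisses. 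As in the $\LCPL$ proof, I would also reduce to the case where the last rule of the derivation is not $(\vee)$, pushing the reduction through each branch and re-applying $(\vee)$.

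The genuinely new ingredient is the \emph{substitution lemma for intersection types}. Because the rule $(@_{\cap})$ types the argument $u$ of an application once for each component $\FF s_i$ of an intersection $[\FF s_1,\dots,\FF s_n]$, and because the variable $x$ may be used with different types drawn from its assigned multiset $\FF M$ at its various occurrences in $t$, the correct statement is: if $\Gamma, x:[\FF s_1,\dots,\FF s_n] \vdash^{X} t : \btwo \Pto \FF t$ and $\Gamma \vdash^{X} u : \bthree_i \Pto \FF s_i$ for each $i=1,\dots,n$, then $\Gamma \vdash^{X} t[u/x] : \btwo \land \bigwedge_i \bthree_i \Pto \FF t$. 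I would prove this by induction on the derivation of $t$. The delicate base case is (id$_{\preceq}$), where $t=x$ and a single component $\FF s_i \preceq \FF t$ is selected by the rule, so that the matching derivation of $u$ (composed with the subtyping) suffices; the remaining cases propagate the relevant intersection components to the sub-derivations and merge the Boolean constraints by conjunction, exactly as in Lemma~\ref{lemma:betasub}.

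With these lemmas in place, the $\beta$-case $(\lambda x.t_1)t_2 \redbeta t_1[t_2/x]$ follows by inverting the $(\lambda)$ and $(@_{\cap})$ rules to extract, for each component $\FF s_i$ of the intersection assigned to $x$, a derivation of $t_2$ against $\FF s_i$, then applying the intersection substitution lemma and collecting the resulting derivations with an instance of $(\vee)$, just as in the $\LCPL$ argument. The permutative rules $\redperm$ are then treated one by one: the rules not involving a generator $\nu$ or the counting quantifier (the rules $(\mathsf{i})$, $(\mathsf{c}_1)$, $(\mathsf{c}_2)$, $(\oplus\lambda)$, $(\oplus\mathsf{f})$, $(\oplus\mathsf{a})$, $(\oplus\oplus_1)$, $(\oplus\oplus_2)$) are verbatim the corresponding cases of Proposition~\ref{prop:subject}, now carrying intersections along passively in the $(@_{\cap})$ and $(\lambda)$ premisses.

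I expect the main obstacle to be the interaction between the permutations that cross a generator $\nu a$ --- namely $(\oplus\nu)$, $(\nu\lambda)$ and $(\nu\mathsf{f})$ --- and the counting rule $(\mu_{\Sigma})$. Unlike the single-premiss rule $(\mu')$ of $\TCI$, the rule $(\mu_{\Sigma})$ has $n+1$ major premisses guarded by pairwise \emph{disjoint} Boolean constraints $\bthree_i$ (with $\bthree_i \land \bthree_j \vDash \BOT$ for $i\neq j$) and concludes with the weighted sum $\B C^{\sum_i q_i s_i}\sigma$. Pushing such a rule past a permutation requires re-deriving each of the $n+1$ branches while preserving both the disjointness side condition and the arithmetic of the conclusion. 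I would organise this by first re-proving, for $(\mu_{\Sigma})$, the analogues of the commutations already established for $(\mu)$ in Proposition~\ref{prop:subject} branchwise, and then verifying that the freshness of $a$ with respect to the permuted choice labels ensures the constraints $\bthree_i$ remain disjoint and name-separated after the rearrangement, so that a single final instance of $(\mu_{\Sigma})$ with the same weights can be reassembled.
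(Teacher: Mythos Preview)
Your plan is largely correct and matches the paper's structure for the bulk of the cases: the weakening lemmas, the intersection substitution lemma, the $\beta$-case, and the permutations not touching a generator all go through as you describe. However, there is one genuine gap.

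You have omitted the rule $(\lnot\nu)$: $\nu a.t \redperm t$ when $a\notin\FN(t)$. This rule is present in $\EVL$ (though dropped in $\EVLL$), so subject reduction for $\TCINT$ must cover it, and it is \emph{not} an instance of the ``reassemble $(\mu_{\Sigma})$ branchwise'' pattern you outline for $(\oplus\nu)$, $(\nu\lambda)$, $(\nu\mathsf{f})$. The difficulty is that on the target side there is no generator left, hence no $(\mu_{\Sigma})$ to rebuild: from premisses $\Gamma\vdash^{X\cup\{a\}} t:\bone'\land\bthree_i\Pto \B C^{q_i}\sigma$ you must produce $\Gamma\vdash^{X} t:\bone\Pto \B C^{\sum_i q_i s_i}\sigma$. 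The paper's argument uses two additional lemmas you do not mention. First, an admissibility lemma for the subtyping $(\preceq)$ (Lemma~\ref{lemma:order}), proved by induction on derivations; you allude to subtyping only in the (id$_{\preceq}$) base case of substitution, but it is needed here as a free-standing closure property. Second, a name-elimination lemma (Lemma~\ref{lemma:nota}): if $\Gamma\vdash^{X\cup Y} t:\bone\land\btwo\Pto\FF s$ with $\FN(t),\FN(\bone)\subseteq X$, $\FN(\btwo)\subseteq Y$, and $\btwo$ satisfiable, then $\Gamma\vdash^{X} t:\bone\Pto\FF s$. With these in hand, the $(\lnot\nu)$ case runs as follows: from pairwise disjointness of the $\bthree_i$ one gets $\sum_i s_i\le 1$, hence $\sum_i q_i s_i\le Q:=\max_i q_i$; picking the branch $i_0$ with $q_{i_0}=Q$ and applying the name-elimination lemma yields $\Gamma\vdash^{X} t:\bone'\Pto \B C^{Q}\sigma$, and then $\B C^{Q}\sigma\preceq \B C^{\sum_i q_i s_i}\sigma$ together with the subtyping lemma and Boolean weakening closes the case. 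Without these two lemmas the $(\lnot\nu)$ step does not go through.
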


As for $\LCPL$ (see Section \ref{sectionC}), we replace the rule ($\oplus$) by the two rules ($\oplus l$) and ($\oplus r$). 
Moreover, we will ignore the rules ($\HNORM$) and ($\NORM$), as the result extends immediately to them.

To show the reduction property of $\TCINT$ we need to establish a few lemmas, some of which are analogous to results for $\LCPL$, and proved in a similar way:

\begin{lemma}\label{lemma:weak2b}
If $\Gamma \vdash^{X}t: \bone \Pto \FF s$ and $X\subseteq Y$, then 
$\Gamma \vdash^{Y}t: \bone\Pto \FF s$.
\end{lemma}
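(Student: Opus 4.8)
The plan is to argue by a straightforward induction on the derivation of $\Gamma \vdash^{X}t: \bone \Pto \FF s$, exactly as in the proof of Lemma \ref{lemma:weak2} for $\LCPL$, performing a case analysis on the last rule applied. The only genuinely new ingredient compared with the purely structural bookkeeping is the behaviour of the \emph{semantic} side-conditions of the rules, namely the entailments $\btwo \vDash^{X}\bigvee_{i}\btwo_{i}$, $\bone \vDash \btwo\land\bvar_{a}^{i}$ and the measure constraints $\mu(\bthree_{i})\geq s_{i}$. Here I would invoke the fact, recorded in Section \ref{sub:Cantor}, that for a Boolean formula $\bone$ the value $\mu_{X}(\model{\bone}_{X})$ — and hence the truth of any entailment between formulas whose names lie in $X$ — is independent of the chosen finite superset of $\FN(\bone)$. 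Consequently every semantic premise that holds over $X$ holds verbatim over any $Y \supseteq X$, so enlarging the name context never invalidates a side-condition of this kind.

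First I would dispatch the base case and all the rules whose premises are stated over the \emph{same} name context as their conclusion. For the identity rule (id$_{\preceq}$) the premise $\FN(\bone)\subseteq X$ together with $X\subseteq Y$ immediately gives $\FN(\bone)\subseteq Y$, while the subtyping premise $\FF s_{i}\preceq \FF t$ is unaffected, so the rule reapplies over $Y$. For the rules ($\lambda$), ($@_{\cap}$), ($\vee$), ($\HNORM$) and ($\NORM$), as well as for ($\oplus l$) and ($\oplus r$) — whose conclusion context $X\cup\{a\}$ is itself a subset of $Y$, so that $a\in Y$ and $Y\cup\{a\}=Y$ — I would apply the induction hypothesis to each premise to relocate it over $Y$, observe that the accompanying semantic premises survive by the remark above, and reapply the same rule. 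No freshness issue arises, since in all these cases the names mentioned in the rule already belong to $X$ and hence to $Y$.

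The one case requiring care, and the main obstacle, is the counting rule ($\mu_{\Sigma}$), which binds a name $a$ with $a\notin X$ and introduces it into the premises over $X\cup\{a\}$. When we pass to $Y\supseteq X$ it may happen that $a\in Y\setminus X$, so that the disjoint union $Y\cup\{a\}$ needed to reapply the rule is ill-formed. I would resolve this by first $\alpha$-renaming the bound name: choosing a fresh $a'\notin Y$, I replace $a$ by $a'$ throughout $\nu a.t$, the formulas $\bthree_{i}$ and the premises, using that typing in $\TCINT$ is invariant under renaming of a bound name (an easy auxiliary induction, or a standing convention on $\alpha$-equivalence). The side-conditions of ($\mu_{\Sigma}$) — that $a'$ does not occur in $\bone$, is the unique name of each $\bthree_{i}$, and that the $\bthree_{i}$ remain pairwise incompatible — are preserved by the renaming, and the measure constraints $\mu(\bthree_{i})\geq s_{i}$ are unchanged. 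I would then apply the induction hypothesis to each renamed premise to obtain $\Gamma \vdash^{Y\cup\{a'\}} t': \bone\land\bthree_{i}\Pto \B C^{q_{i}}\sigma$, and a final instance of ($\mu_{\Sigma}$) yields $\Gamma \vdash^{Y}\nu a'.t': \bone\Pto\B C^{\sum_{i}q_{i}s_{i}}\sigma$, which is the desired judgement up to $\alpha$-equivalence. Since $\TCINT$ contains $\TCI$, the same argument covers the latter system a fortiori.
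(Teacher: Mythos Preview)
Your proposal is correct and follows essentially the same approach as the paper, which merely states that the lemma is ``analogous to results for $\LCPL$, and proved in a similar way'' (i.e., by induction on the typing derivation). You have simply spelled out the details the paper omits, in particular the $\alpha$-renaming needed in the $(\mu_{\Sigma})$ case when the bound name $a$ happens to lie in $Y\setminus X$; this is a genuine point the paper leaves implicit under the usual convention of working up to $\alpha$-equivalence.
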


\begin{lemma}\label{lemma:weakb}
If $\Gamma \vdash^{X}t: \bone\Pto \FF s$ holds and $\btwo\vDash^{X} \bone$, then  
$\Gamma \vdash^{X}t: \btwo\Pto \FF s$ is derivable by a derivation of the same length.
\end{lemma}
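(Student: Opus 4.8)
The plan is to proceed by induction on the structure (equivalently, the length) of the given typing derivation $D$ of $\Gamma \vdash^{X}t: \bone\Pto \FF s$, rewriting every sequent of $D$ by replacing its Boolean constraint with $\btwo$ while leaving the term, the context, and the type untouched. The only facts I would rely on are the transitivity of the entailment relation $\vDash^{X}$ and the observation that $\btwo \vDash^{X}\bone$ presupposes $\FN(\btwo)\subseteq X$, so that $\btwo$ contains no name required to be fresh.

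The rules of $\TCINT$ split into two groups. In the rules where the conclusion's constraint occurs only in a \emph{side condition} — namely (id$_{\preceq}$), ($\lor$), ($\oplus l$) and ($\oplus r$) — the immediate sub-derivations are left exactly as they are, and only the side condition must be re-verified. For (id$_{\preceq}$) this is just $\FN(\btwo)\subseteq X$ together with the unchanged subtyping premise $\FF s_{i}\preceq \FF t$. For ($\lor$) I would combine the hypothesis $\btwo\vDash^{X}\bone$ with the rule's side condition $\bone\vDash^{X}\bigvee_{i}\bone_{i}$ to obtain $\btwo\vDash^{X}\bigvee_{i}\bone_{i}$ by transitivity, keeping the premises $\{\Gamma\vdash^{X}t:\bone_{i}\Pto\FF s\}_{i}$ intact. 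For ($\oplus l$)/($\oplus r$), whose premise carries a constraint $\btwo'$ on the subterm and whose side condition reads $\bone\vDash^{X\cup\{a\}}\btwo'\land\bvar_{a}^{i}$ (resp.\ $\bone\vDash^{X\cup\{a\}}\btwo'\land\lnot\bvar_{a}^{i}$), I would again use $\btwo\vDash\bone$ and transitivity to recover the same side condition with $\btwo$. Each of these produces a derivation of the same shape, hence of the same length. In the rules where the constraint is \emph{propagated} to the premises — ($\lambda$), ($@_{\cap}$) and ($\mu_{\Sigma}$) — I would first invoke the induction hypothesis on the premises to get same-length derivations carrying $\btwo$, then re-apply the rule; the single extra rule application leaves the total length unchanged.

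The only case worth singling out is the counting rule ($\mu_{\Sigma}$), where the conclusion carries $\bone$ but the $i$-th premise carries $\bone\land\bthree_{i}$. Strengthening the conclusion to $\btwo$ amounts to strengthening each premise to $\btwo\land\bthree_{i}$, and since $\btwo\vDash^{X}\bone$ yields $\btwo\land\bthree_{i}\vDash^{X\cup\{a\}}\bone\land\bthree_{i}$, the induction hypothesis applies premise-wise. I then have to check that the side conditions of ($\mu_{\Sigma}$) survive: the disjointness conditions $\bthree_{i}\land\bthree_{j}\vDash\BOT$ and the measure conditions $\mu(\bthree_{i})\geq s_{i}$ are untouched, because the $\bthree_{i}$ are unchanged, while the requirement that $a$ not occur in the conclusion's constraint holds for $\btwo$ precisely because $\FN(\btwo)\subseteq X$ and $a\notin X$. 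The output type $\B C^{\sum_{i}q_{i}s_{i}}\sigma$ is unchanged, so ($\mu_{\Sigma}$) applies and the length is preserved. I do not expect a genuine obstacle: the statement is a purely syntactic weakening of the input constraint, and the only mild subtlety is the bookkeeping of the freshness and disjointness side conditions in ($\mu_{\Sigma}$), handled as just sketched.
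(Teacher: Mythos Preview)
Your proposal is correct and matches the paper's approach exactly: the paper's proof consists of the single line ``By induction on a type derivation of $t$'', and what you have written is precisely the case analysis that unfolds that induction. Your handling of the only delicate case, the counting rule $(\mu_{\Sigma})$, is right, including the observation that $\FN(\btwo)\subseteq X$ guarantees the freshness condition on $a$ is preserved.
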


The next lemmas are new:

\begin{lemma}\label{lemma:order}
The following rule is admissible in $\TCINT$:
$$
\AXC{$\Gamma \vdash^{X}t: \bone \Pto \FF s$}
\AXC{$\FF s \preceq \FF t$}
\RL{$(\preceq)$}
\BIC{$\Gamma \vdash^{X}t: \bone \Pto \FF t$}
\DP
$$
\end{lemma}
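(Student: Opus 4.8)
The final statement (Lemma \ref{lemma:order}) asserts that the subtyping rule
$$
\AXC{$\Gamma \vdash^{X}t: \bone \Pto \FF s$}
\AXC{$\FF s \preceq \FF t$}
\RL{$(\preceq)$}
\BIC{$\Gamma \vdash^{X}t: \bone \Pto \FF t$}
\DP
$$
is admissible in $\TCINT$. Here the preorder $\preceq$ on types is the one defined just before Figure \ref{fig:typerulesint}: it is generated by $\alpha \preceq \alpha$ for ground types, by $\B C^{q}\sigma \preceq \B C^{s}\tau$ whenever $q \leq s$ and $\sigma \preceq \tau$, and contravariantly on the argument multiset in the arrow case, $(\FF M \To \sigma)\preceq(\FF N \To \tau)$ iff $\sigma \preceq \tau$ and $\FF N \preceq^{*}\FF M$.

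\textbf{The plan.} My plan is to prove this by induction on the structure of the type $\FF t$ (equivalently, on the derivation of $\FF s \preceq \FF t$), and within that, to proceed by a secondary induction on the typing derivation of $t$. The key observation is that the only rule of $\TCINT$ that is \emph{sensitive} to the precise shape of the concluding type (as opposed to merely the context) is the identity rule $(\mathrm{id}_{\preceq})$, which already builds in a subsumption via the side condition $\FF s_{i}\preceq \FF t$; every other rule either passes the conclusion type through transparently or constructs it from subderivations whose types can be enlarged by the inductive hypothesis. Concretely, I would first dispose of the base cases: when $\FF t = \FF s$ the rule is trivial, and when the derivation of $t$ ends in $(\mathrm{id}_{\preceq})$ I can absorb $\FF s \preceq \FF t$ into the existing premise $\FF s_{i}\preceq \FF s$ using \emph{transitivity} of $\preceq$ (which I would establish as a preliminary routine lemma, by a straightforward induction on types).

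\textbf{The inductive step.} For the inductive step I would analyze the last rule of the derivation of $\Gamma \vdash^{X}t:\bone\Pto\FF s$ together with the shape of $\FF s\preceq\FF t$. Writing $\FF s = \B C^{\vec{q}}\sigma$ and $\FF t = \B C^{\vec{r}}\tau$ with $\vec q \le \vec r$ componentwise and $\sigma \preceq \tau$: for the rules $(\lor)$, $(\oplus l)$, $(\oplus r)$, and $(\mu_{\Sigma})$ the conclusion type is simply propagated from (or is the counting-prefix of) the premises, so I apply the induction hypothesis to each premise and reassemble. The genuinely substantive cases are the \emph{arrow} cases. For $(\lambda)$, where $\sigma = \FF M \To \sigma'$ and $\tau = \FF N \To \tau'$, the side condition $\FF N \preceq^{*}\FF M$ must be handled contravariantly: I weaken the context declaration $x:\FF M$ to $x:\FF N$ (an auxiliary monotonicity lemma for contexts, again by induction on derivations) and then apply the induction hypothesis on $\sigma' \preceq \tau'$. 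For $(@_{\cap})$, I use the induction hypothesis on the functional premise to enlarge its codomain while the argument premises supply types $\FF s_i$ that may need to be matched against a \emph{smaller} demanded intersection; the injective witness $f$ from the definition of $\preceq^{*}$ tells me exactly which argument subderivations to keep.

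\textbf{Main obstacle.} The hard part will be the arrow/intersection interaction, specifically getting the contravariance bookkeeping right: enlarging the codomain type of a $\lambda$-abstraction forces me to \emph{shrink} (replace by a $\preceq^{*}$-smaller) the argument intersection in the binding, and to make this a clean induction I will need an auxiliary \emph{context-weakening} lemma stating that if $\Gamma, x:\FF M \vdash^{X}t:\bone\Pto\FF s$ and $\FF N \preceq^{*}\FF M$, then $\Gamma, x:\FF N \vdash^{X}t:\bone\Pto\FF s$. Proving this context lemma hinges on the identity rule respecting $\preceq^{*}$ (so that a variable demanded at an $\FF M$-entry can still be supplied from an $\FF N$ refining it), which in turn relies on transitivity of $\preceq$ and the injective-matching definition of $\preceq^{*}$. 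I expect this context-weakening lemma and the present Lemma \ref{lemma:order} to be genuinely mutually recursive, so I would organize them as a single simultaneous induction on the total size of the type/derivation pair to keep the termination of the argument transparent.
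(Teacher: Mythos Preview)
Your proposal is correct and essentially matches the paper's argument. The paper handles your ``mutual recursion'' concern by the standard trick of strengthening the induction hypothesis: instead of proving the rule $(\preceq)$ and a separate context-weakening lemma in parallel, it proves directly the combined rule
\[
\AXC{$\Gamma \vdash^{X}t: \bone \Pto \FF s$}
\AXC{$\Delta \preceq \Gamma,\ \FF s \preceq \FF t$}
\RL{$(\preceq^{*})$}
\BIC{$\Delta \vdash^{X}t: \bone \Pto \FF t$}
\DP
\]
by a single induction on the typing derivation of $t$ (no secondary induction on the type is needed). In the $(\lambda)$ case this is exactly your contravariant move: from $\B C^{q}(\FF M\To\sigma)\preceq\B C^{s}(\FF M'\To\sigma')$ one extracts $\FF M'\preceq^{*}\FF M$ and $\sigma\preceq\sigma'$, so the inductive hypothesis applies to the premise with the enlarged context $\Delta,x{:}\FF M'\preceq\Gamma,x{:}\FF M$. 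Your plan arrives at the same place; the paper's packaging just avoids having to name a separate context-weakening lemma or worry about a joint size measure.
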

\begin{proof}
We will show the admissibility of a more general rule, namely
$$\AXC{$\Gamma \vdash^{X}t: \bone \Pto \FF t$}
\AXC{$\Delta \preceq \Gamma,\FF s \preceq \FF t$}
\RL{$(\preceq^{*})$}
\BIC{$\Delta \vdash^{X}t: \bone \Pto \FF t$}
\DP
$$
where $\Delta \preceq \Gamma$ holds when $\Gamma=\{x_{1}:\FF M_{1},\dots, x_{n}:\FF M_{n}\}$, 
$\Delta=\{x_{1}:\FF N_{1},\dots, x_{n}:\FF N_{n}\}$ and $\FF M_{i}\preceq\FF N_{i}$.

We argue by induction on a typing derivation $D$ of $t$:
\begin{varitemize}
\item if $D$ is 
$$
\AXC{$\mathrm{FN}(\bone)\subseteq X$}
\AXC{$\FF s_{i}\preceq \FF t$}
\BIC{$\Gamma, x: [\FF s_{1},\dots, \FF s_{n}]\vdash^{X}x:\bone \Pto \FF t$}
\DP 
$$
then from $\Delta \preceq \Gamma, x: [\FF s_{1},\dots, \FF s_{n}]$ we deduce that $\Delta $ contains $ x: \FF M$, with $\FF M\preceq  [\FF s_{1},\dots, \FF s_{n}]$. This implies that $\FF M$ contains $\FF u_{i}$, where $\FF u_{i}\preceq \FF s_{i}$; by transitivity of $\preceq$, from $\FF u_{i}\preceq\FF s_{i}\preceq \FF t$, we deduce $\FF u_{i}\preceq \FF t$, and thus we can construct the derivation below
$$
\AXC{$\mathrm{FN}(\bone)\subseteq X$}
\AXC{$\FF u_{i}\preceq \FF t$}
\BIC{$\Delta\vdash^{X}x:\bone \Pto \FF t$}
\DP 
$$

\item if $D$ ends by any of the rules $(\lor)$, $(\oplus l)$, $(\oplus r)$ or $(\mu_{\Sigma})$, then we can directly conclude by applying the I.H.

\item if $D$ is 
$$
\AXC{$\vdots$}
\noLine
\UIC{$\Gamma, x: \FF M \vdash^{X}t: \bone \Pto \B C^{q}\sigma$}
\RL{$(\lambda)$}
\UIC{$\Gamma \vdash^{X}\lambda x.t: \bone \Pto\B C^{q} (\FF M\To \sigma)$}
\DP
$$
then from $\B C^{q}(\FF M\To \sigma)  \preceq \FF t$ we deduce $\FF t=\B C^{s}(\FF M'\To \sigma')$, with $s\leq q$, 
$\FF M\preceq \FF M'$ and $\sigma\preceq \sigma''$, and from $\Delta \preceq \Gamma$, we deduce $\Delta, x:\FF M'\preceq \Gamma,x:\FF M$. So by the I.H. we deduce the existence of a derivation of $\Delta, x:\FF M'\vdash^{X}t: \bone \Pto \B C^{s}\sigma'$ and we can conclude by applying an instance of $(\lambda)$.

\item if $D$ is 

\begingroup\makeatletter\def\f@size{10}\check@mathfonts
\begin{lrbox}{\mypti}
\begin{varwidth}{\linewidth}
$
\AXC{$\vdots$}
\noLine
\UIC{$\Gamma \vdash^{X}u: \bthree_{i}\Pto \FF s_{i}$}
\DP
$
\end{varwidth}
\end{lrbox}

$$
\AXC{$\vdots$}
\noLine
\UIC{$\Gamma \vdash^{X}t: \btwo \Pto\B C^{q} (\FF M\To \sigma)$}
\AXC{$\left \{ \usebox{\mypti}\right\}_{i}$}
\AXC{$\bone \vDash \btwo \land \bigwedge_{i}\bthree_{i}$}
\RL{$@_{\cap}$}
\TIC{$\Gamma \vdash^{X}tu: \bone \Pto \B C^{q}\sigma$}
\DP
$$
\endgroup

where $\FF M=[\FF s_{1},\dots, \FF s_{n}]$.
Then from $\B C^{q}\sigma \preceq \FF s$ we deduce $\tau= \B C^{s}\sigma'$, with $s\leq q$ and $\sigma \preceq \sigma'$, and thus $\B C^{q}(\FF M \To \sigma)\preceq \B C^{s}(\FF M\To \sigma')$, 
 so by the I.H.~applied to the left-hand side sub-derivation we obtain a the existence of a derivation of $\Gamma \vdash^{X}t: \btwo \Pto\B C^{s} (\FF M\To \sigma')$, and thus we can conclude by applying an instance of $(@_{\cap})$.
\end{varitemize}
\end{proof}

\begin{lemma}\label{lemma:disj}
Let $\bone, \bone_{1},\dots, \bone_{n}$ and $\btwo,\btwo_{1},\dots, \btwo_{n}$ be such that $\FN({\bone}), \FN({\bone_{i}})\subseteq{X}$ and $\FN(\btwo), \FN({\btwo_{i}})\subseteq\{a\}$, where $a\notin X$.
If $\btwo$ is satisfiable, then if $\bone\land \btwo \vDash^{X\cup \{a\}}\bigvee_{i}^{n} \bone_{i}\land \btwo_{i}$ holds, 
also $\bone\vDash^{X}\bigvee_{i}^{n}\bone_{i}$ holds.

\end{lemma}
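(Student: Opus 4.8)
The plan is to exploit the disjointness of the name sets: $\bone$ and each $\bone_i$ contain only names from $X$, while $\btwo$ and each $\btwo_i$ contain only the name $a$, and $a \notin X$. Consequently any valuation on $X \cup \{a\}$ can be written as $\omega + \omega'$ with $\omega \in (2^{\BB N})^X$ and $\omega' \in 2^{\BB N}$ (the value assigned to $a$), in the sense recalled in Section~\ref{sub:Cantor}, and satisfaction factors through this decomposition: $\omega + \omega' \vDash \bone$ iff $\omega \vDash \bone$, $\omega + \omega' \vDash \btwo$ iff $\omega' \vDash \btwo$, and similarly for the $\bone_i$ and $\btwo_i$. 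These equivalences are checked by a routine induction on the Boolean formula, using that a subformula's truth value depends only on the cylinders of the names it mentions.

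With this in hand I would argue directly. Fix an arbitrary $\omega \in (2^{\BB N})^X$ with $\omega \vDash \bone$; the goal is to find an index $i$ with $\omega \vDash \bone_i$. Since $\btwo$ is satisfiable, choose $\omega' \in 2^{\BB N}$ with $\omega' \vDash \btwo$. By the factorisation above, $\omega + \omega' \vDash \bone \land \btwo$, so the hypothesis $\bone \land \btwo \vDash^{X \cup \{a\}} \bigvee_{i} \bone_i \land \btwo_i$ gives an index $i$ with $\omega + \omega' \vDash \bone_i \land \btwo_i$, whence $\omega \vDash \bone_i$. As $\omega$ was an arbitrary model of $\bone$, this yields $\bone \vDash^X \bigvee_i \bone_i$.

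The single delicate point is the hypothesis that $\btwo$ is satisfiable, which is precisely what supplies the witness $\omega'$ for $a$ needed to lift $\omega$ to a model of $\bone \land \btwo$. Dropping it would break the statement: if $\btwo$ were unsatisfiable then $\bone \land \btwo$ would be unsatisfiable, making the premise hold vacuously while the conclusion $\bone \vDash^X \bigvee_i \bone_i$ may well fail. I therefore expect no real obstacle; the whole argument is the variable-separation observation combined with fixing one model of $\btwo$ to reduce the entailment over $X \cup \{a\}$ to one over $X$.
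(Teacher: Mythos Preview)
Your proposal is correct and follows essentially the same argument as the paper: pick a model of $\bone$, extend it to a model of $\bone\land\btwo$ using the satisfiability of $\btwo$, apply the hypothesis to obtain a disjunct $\bone_{i}\land\btwo_{i}$, and project back to conclude $\bone_{i}$. The paper's proof is terser but structurally identical.
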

\begin{proof}
Let $v\in 2^{X}$ be a model of $\bone$. Since $\btwo$ is satisfiable, $v$ can be extended to a model $v'\in 2^{X\cup\{a\}}$ of $\bone \land \btwo$. By hypothesis, then $v'$ satisfies $\bigvee_{i}^{n} \bone_{i}\land \btwo_{i}$, so for some $i_{0}\leq n$, it satisfies $\bone_{i_{0}}\land \btwo_{i_{0}}$. We deduce then that $v$ satisfies $\bone_{i_{0}}$, and thus $v$ satisfies $\bigvee_{i}^{n}\bone_{i}$.
\end{proof}

\begin{lemma}\label{lemma:nota}

If $\Gamma \vdash^{X\cup Y}t: \bone \land \btwo \Pto \FF s$ is derivable, where $X\cap Y=\emptyset$, $\mathrm{FN}(t)\subseteq X$, 
$\mathrm{FN}(\bone)\subseteq X$, $\mathrm{FN}(\btwo)\subseteq Y$, and $\btwo$ is satisfiable, then $\Gamma \vdash^{X}t: \bone  \Pto \FF s$ is also derivable.
\end{lemma}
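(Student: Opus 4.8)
The plan is to derive the lemma from a stronger, valuation-indexed statement by instantiating a model of $\btwo$. For a valuation $w$ of the names in $Y$, write $\bsix^{w}$ for the Boolean formula obtained from $\bsix$ by replacing every literal $\bvar_{a}^{i}$ with $a\in Y$ by $\TOP$ when $w(a)(i)=1$ and by $\BOT$ otherwise. The core claim I would prove is: if $\Gamma \vdash^{X\cup Y} t: \bsix \Pto \FF s$ with $\FN(t)\subseteq X$ and $X\cap Y=\emptyset$, then $\Gamma \vdash^{X} t: \bsix^{w}\Pto \FF s$ for every such $w$. This I would establish by induction on the typing derivation $D$, applying the substitution $(-)^{w}$ to every judgement in $D$ and checking that each rule instance survives the substitution.

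The verification splits into three observations. First, the semantic premises $\bsix\vDash \bseven$ appearing in $(\lor)$, $(@_{\cap})$, $(\oplus l)$ and $(\oplus r)$ are preserved: substituting a fixed valuation $w$ into both sides turns a valid entailment into a valid entailment, so the side conditions of these rules continue to hold for the substituted formulas (using Lemma~\ref{lemma:weakb} where only semantic equivalence, not syntactic identity, results). This is precisely what dissolves the $(\lor)$ case, whose premises may carry arbitrary Boolean formulas mentioning $Y$. Second, the counting rule $(\mu_{\Sigma})$ is unaffected in its quantitative data: since its bound name $a$ and the formulas $\bthree_{i}$ satisfy $\FN(\bthree_{i})\subseteq\{a\}$ with $a\notin X\cup Y$, the substitution leaves each $\bthree_{i}$ unchanged, so $\mu(\bthree_{i})\geq s_{i}$ and the disjointness $\bthree_{i}\land\bthree_{j}\vDash\BOT$ persist, and $a\notin\bone^{w}$ still holds. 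Third—and this is the crux—every name labelling a choice operator $\oplus_{a}^{i}$ or bound by a $(\mu_{\Sigma})$-step throughout $D$ originates from $t$: it is either a free name of $t$, hence in $X$, or a fresh bound name outside $X\cup Y$. Consequently no name in $Y$ ever occurs as such a structural name, so the literals $\bvar_{a}^{i}$ driving $(\oplus l)$, $(\oplus r)$ and $(\mu_{\Sigma})$ are fixed by $(-)^{w}$ and the rule shapes are preserved. The identity rule $(\mathrm{id}_{\preceq})$ and $(\lambda)$ are immediate, as the only side condition $\FN(\bone^{w})\subseteq X$ holds after deleting the $Y$-literals.

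The lemma then follows at once: since $\btwo$ is satisfiable, choose $w\vDash\btwo$. Then $\btwo^{w}$ is a tautology, i.e.\ $\btwo^{w}\equiv\TOP$, while $\bone^{w}=\bone$ because $\FN(\bone)\subseteq X$; hence $(\bone\land\btwo)^{w}=\bone\land\btwo^{w}$ is logically equivalent to $\bone$. The claim yields $\Gamma \vdash^{X} t:(\bone\land\btwo)^{w}\Pto\FF s$, and one final application of Lemma~\ref{lemma:weakb} (using $\bone\vDash\bone\land\btwo^{w}$) gives $\Gamma \vdash^{X} t:\bone\Pto\FF s$. I expect the main obstacle to be the bookkeeping of the third observation—confirming rigorously that $\FN(t)\subseteq X$ forces all choice- and $\mu$-names to avoid $Y$—together with the careful but routine check that $(\mu_{\Sigma})$'s disjointness and measure data are genuinely untouched by the substitution.
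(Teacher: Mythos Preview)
Your approach is correct and genuinely different from the paper's. The paper also proceeds by induction on the derivation, but at each rule it rewrites the Boolean formulas in the premises into \emph{weak $Y$-decompositions} $\bigvee_{j}\bone_{ij}\land\bthree_{ij}$ with $\FN(\bone_{ij})\subseteq X$ and $\FN(\bthree_{ij})\subseteq Y$, invokes Lemma~\ref{lemma:weakb} to get derivations for each disjunct, applies the inductive hypothesis to the $X$-parts, and then uses Lemma~\ref{lemma:disj} to recombine them under a fresh $(\lor)$. Your route avoids all of this decomposition machinery: by fixing a single valuation $w\vDash\btwo$ up front and pushing the substitution $(-)^{w}$ through the derivation, every semantic side condition is preserved in one step and the rule shapes are untouched thanks to your third observation (no structural name lies in $Y$). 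The gain is uniformity and economy: you need no external notion of decomposition, no appeal to Lemma~\ref{lemma:disj}, and you introduce no new $(\lor)$-instances into the resulting derivation. The paper's approach, by contrast, is self-contained at each case but pays for it with heavier bookkeeping and extra $(\lor)$s. Your expectation about the main obstacle is accurate, and your argument for it---that every $\oplus$- and $\mu$-name is either free in $t$ (hence in $X$) or freshly bound (hence outside $X\cup Y$)---is exactly what is needed.
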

\begin{proof}
By induction on a typing derivation of $t$:
\begin{varitemize}
%
%
%

\item if the last rule is

\begin{prooftree}
\AXC{$\FF s_{i}\preceq \FF t$}
\AXC{$\FN(\bone\land \btwo)\subseteq X\cup Y$}
\RL{(id$_{\cap}$)}
\BIC{$\Gamma , x:[\FF s_{1},\dots, \FF s_{n}]\vdash^{X\cup\{a\}}x: \bone \land \btwo\Pto \FF t$}
\end{prooftree}
then the claim can be deduced by an instance of the same rule.

\item if the last rule is
\begin{prooftree}
\AXC{$\left\{\Gamma \vdash^{X\cup Y}t: \bone_{i} \Pto \FF s\right\}_{i}$}
\AXC{$\bone\land \btwo \vDash^{X\cup Y}\bigvee_{i} \bone_{i}$}
\RL{$(\lor)$}
\BIC{$\Gamma \vdash^{X\cup Y}x: \bone\land \btwo \Pto \FF s$}
\end{prooftree}
Let $\bigvee_{j}\bone_{ij}\land \bthree_{ij}$ be weak $Y$-decompositions of the $\bone_{i}$ (see [\cite{AntonelliArXiv}, p.~17]); 
since $\bone_{ij}\land \bthree_{ij}\vDash \bone_{i}$, by Lemma \ref{lemma:weakb} we deduce 
that $\Gamma \vdash^{X\cup Y}t: \bone_{ij}\land \bthree_{ij} \Pto \FF s$ is derivable for all $i$ and $j$, by a derivation of same length as the corresponding derivation of $\Gamma \vdash^{X\cup Y}t: \bone_{i} \Pto \FF s$; 
hence we can apply the I.H.~to such derivations, yielding derivations of 
 $\Gamma \vdash^{X}t: \bone_{ij}\Pto \FF s$. 

Using Lemma \ref{lemma:disj}, from $\bone \land \btwo \vDash \bigvee_{i}\bone_{i}$ and $\bigvee_{i}\bone_{i}\equiv \bigvee_{ij}\bone_{ij}\land \btwo_{ij}$, we deduce
$\bone \vDash \bigvee_{ij}\bone_{ij}$, so we conclude 
\begin{prooftree}
\AXC{$\left\{\Gamma \vdash^{X}t: \bone_{ij} \Pto \FF s\right\}_{ij}$}
\AXC{$\bone \vDash^{X}\bigvee_{ij} \bone_{ij}$}
\RL{$(\lor)$}
\BIC{$\Gamma \vdash^{X}x: \bone \Pto \FF s$}
\end{prooftree}
%
%
%
%

\item if the last rule is 

\begin{prooftree}
\AXC{$\Gamma\vdash^{X\cup Y} t_{1}:\bone'\Pto \FF s$}
\AXC{$\bone\land \btwo\vDash\bvar_{i}^{b}\land \bone'
$}
\RL{$\TL$}
\BIC{$\Gamma\vdash^{X\cup Y} t_{1}\oplus^{i}_{b}t_{2}:  \bone\land \btwo\Pto \FF s$}
\end{prooftree}
Then, let $\bigvee_{j}\bthree_{j}\land \btwo_{j}$ be a $Y$-decomposition of $\bone'$;
by Lemma \ref{lemma:weakb} there exist derivations of $\Gamma \vdash^{X\cup Y}t_{1}: \bthree_{j}\land \btwo_{j}\Pto \FF s$ of same length as the derivation $\Gamma\vdash^{X\cup Y} t_{1}:\bone'\Pto \FF s$, so by I.H.~we obtain derivations of
$\Gamma \vdash^{X\cup Y}t_{1}: \bthree_{j}\Pto \FF s$.

From $\bone\land \btwo\vDash\bvar_{i}^{b}\land \bone'$ we deduce
$\bone\land \btwo\vDash\bigvee_{j}(\bvar_{i}^{b}\land \bthree_{j} )\land \btwo_{j}$, and
using Lemma \ref{lemma:disj} we deduce
$\bone\vDash \bigvee_{j}\bvar_{i}^{b}\land \bthree_{j}\equiv \bvar_{i}^{b}\land \bigvee_{j}\bthree_{j}$. We can thus  conclude as follows:

$$
\AXC{$\left \{ \Gamma \vdash^{X}t_{1}: \bthree_{j}\Pto \FF s\right\}_{j}$}
\RL{$(\lor)$}
\UIC{$\Gamma \vdash^{X}t_{1}: \bigvee_{j}\bthree_{j}\Pto \FF s$}
\AXC{$\bone\vDash^{X}\bvar_{i}^{b}\land \bigvee_{j}\bthree_{j}
$}
\RL{$\TL$}
\BIC{$\Gamma\vdash^{X} t_{1}\oplus^{i}_{a}t_{2}:  \bone\Pto \FF s$}
\DP
$$

%
%

\item if the last rule is 

\begin{prooftree}
\AXC{$\Gamma \vdash^{X\cup Y} t_{2}:\bone'\Pto \FF s$}
\AXC{$\bone\land \btwo\vDash\lnot\bvar_{i}^{b}\land \bone'$}
\RL{$\TR$}
\BIC{$\Gamma \vdash^{X\cup Y} t_{1}\oplus^{i}_{b}t_{2}:  \bone\land \btwo\Pto \FF s$}
\end{prooftree}

then we can argue similarly to the previous case.

\item if the last rule is 

\begin{prooftree}
\AXC{$\Gamma, y:\FF M \vdash^{X\cup Y} t: \bone\land\btwo\Pto \B C^{q}\sigma$}
\RL{$\TLA$}
\UIC{$\Gamma \vdash^{X\cup Y} \lambda y.t:\bone\land\btwo\Pto  \B C^{q}(\FF M\To\sigma)$}
\end{prooftree}
Then, the claim follows from the I.H.~by applying an instance of the same rule.

\item if the last rule is 

\begin{prooftree}
\AXC{$\Gamma \vdash^{X\cup Y} t_{1}: \bone_{1}\Pto \B C^{q}(\FF M\To \sigma)$}
\AXC{$\Big\{\Gamma \vdash^{X\cup Y} t_{2}: \bone_{i} \Pto \FF s_{i}\Big\}_{i}$}
\AXC{$\bone\land \btwo\vDash^{X\cup\{a\}} \bone_{1}\land\left ( \bigwedge_{i} \bone_{i}\right)$}
\RL{($@_{\cap}$)}
\TIC{$\Gamma\vdash^{X\cup Y} t_{1}t_{2}: \bone\land \btwo \Pto \B C^{q}\sigma$}
\end{prooftree}
Then 
let $\bigvee_{j}\btwo'_{j}\land \bthree'_{j}$ and $\bigvee_{k}\btwo''_{ik}\land \bthree''_{ik}$ be weak $Y$-decompositions of $\bone_{1}$ and $\bone_{i}$.
By Lemma \ref{lemma:weakb} there exist then derivations of 
$\Gamma \vdash^{X\cup Y} t_{1}: \btwo'_{j}\land \bthree'_{j}\Pto \B C^{q}(\FF M\To \sigma)$ and
$\Gamma \vdash^{X\cup Y} t_{2}: \btwo''_{ik}\land \bthree''_{ik} \Pto \FF s_{i}$ of same length as the corresponding derivations. 

By applying the I.H.~to such derivations we obtain then derivations of 
$\Gamma \vdash^{X} t_{1}: \btwo'_{j}\Pto\B C^{q}(\FF M\To \sigma)$ and
$\Gamma \vdash^{X} t_{2}: \btwo''_{ik} \Pto \FF s_{i}$, respectively.

Moreover, from $\bone \land \btwo \vDash \bone_{1}\land\bigwedge_{i} \bone_{i}$ we deduce 
$\bone \land \btwo \vDash \bigvee_{j}\btwo'_{j}\land \bthree'_{j}$ and 
$\bone \land \btwo \vDash \bigvee_{ik}\btwo''_{ik}\land \bthree''_{ik}$, so by Lemma \ref{lemma:disj} we deduce
$\bone \vDash \bigvee_{j}\btwo'_{j}$ and 
$\bone  \vDash\bigwedge_{i} \bigvee_{ik}\btwo''_{ik}$.

Thus, we can conclude as follows: 
 
\begin{lrbox}{\mypti}
\begin{varwidth}{\linewidth}
$
\AXC{$\left\{ 
\Gamma \vdash^{X}t_{2}: \btwo''_{ik}\Pto \FF s_{i}
\right\}_{k}$}
\RL{($\lor$)}
\UIC{$\Gamma \vdash^{X}t_{2}:\bigvee_{ik} \btwo''_{ik}\Pto \FF s_{i}$}
\DP
$
\end{varwidth}
\end{lrbox}

\begin{center}
\resizebox{0.95\textwidth}{!}{
$\AXC{$\left\{ 
\Gamma \vdash^{X}t_{1}: \btwo'_{j}\Pto \B C^{q}(\FF M\To \sigma)
\right\}_{j}$}
\RL{($\lor$)}
\UIC{$\Gamma \vdash^{X}t:\bigvee_{j} \btwo'_{j}\Pto \B C^{q}(\FF M\To \sigma)$}
\AXC{$\left\{ \usebox{\mypti}\right\}_{i}$}
\AXC{$\bone\vDash^{X}\left (\bigvee_{j} \btwo'_{j}\right)\land \left(\bigwedge_{i}\bigvee_{ik}\btwo''_{ik}\right)$}
\RL{$\TA$}
\TIC{$\Gamma\vdash^{X} t_{1}t_{2}: \bone \Pto \B C^{q}\sigma$}
\DP
$}
\end{center}

%
%
%
%

\item if the last rule is

\begin{prooftree}
\AXC{$\left\{\Gamma \vdash^{X\cup Y\cup\{b\},q_{u}} t:\bone'\land \mathscr f_{u}\Pto \B C^{q_{u}}\sigma\right\}_{u}$}
\AXC{$\mu(\mathscr f_{u})\geq s_{u}$}
\AXC{$\bone\land \btwo \vDash \bone'$}
	\RL{($\mu_{\Sigma}$)}
\TIC{$\Gamma \vdash^{X\cup Y} \nu b.t: \bone\land \btwo\Pto \B C^{r} \sigma$}
\end{prooftree}

where $r=\sum_{u}q_{u}s_{u}$, then let $ \bigvee_{i}\btwo_{i}\land \bthree_{i}$, be a weak $a$- decomposition of $\bone$. 
By Lemma \ref{lemma:weak} there exist derivations of 
$\Gamma \vdash^{X\cup Y\cup\{b\}} t: \btwo_{i}\land \bthree_{i}\land \mathscr f_{u}\Pto \B C^{q_{u}}\sigma$ of same length as the corresponding derivation of $\Gamma \vdash^{X\cup Y\cup\{b\}} t:\bone'\land \mathscr f_{u}\Pto \B C^{q_{u}}\sigma$. 
Hence, by the I.H.~there exist derivations of $\Gamma \vdash^{X\cup\{b\}} t:\btwo_{i}\land \mathscr f_{u}\Pto \B C^{q_{u}}\sigma$. 

From $\bone \land \btwo\vDash^{X\cup\{a\}} \bone'$ we deduce by Lemma \ref{lemma:disj} $\bone \vDash \bigvee_{i}\btwo_{i}$ and so we can conclude as follows:

\begingroup\makeatletter\def\f@size{10}\check@mathfonts
\begin{lrbox}{\mypti}
\begin{varwidth}{\linewidth}
$
\AXC{$\left\{\Gamma \vdash^{X\cup \{b\}} t:\btwo_{i}\land \mathscr f_{u}\Pto \B C^{q_{u}}\sigma\right\}_{i}$}
\RL{$(\lor)$}
\UIC{$\Gamma \vdash^{X\cup\{b\}} t:\bigvee_{i}\btwo_{i}\land \mathscr f_{u}\Pto \B C^{q_{u}}\sigma$}
\DP
$
\end{varwidth}
\end{lrbox}

$$
\AXC{$\left\{ \usebox{\mypti} \right\}_{u}$}
\AXC{$\mu(\mathscr f_{u})\geq s_{u}$}
\AXC{$\bone \vDash \bigvee_{i}\btwo_{i}$}
	\RL{($\mu_{\Sigma}$)}
\TIC{$\Gamma \vdash^{X} \nu b.t: \bone\Pto\B C^{r} \sigma$}
\DP
$$
\endgroup

%
%
%
%
\end{varitemize}
\end{proof}

The proof of the substitution lemma below is analogous to the proof of Lemma \ref{lemma:betasub}.
\begin{lemma}[substitution lemma]\label{lemma:betasubb}
The following rule is derivable:
$$
\AXC{$\Gamma, x: [\FF s_{1},\dots, \FF s_{n}] \vdash^{X} t: \btwo \Pto  \B C^{q}\tau$}
\AXC{$\{\Gamma \vdash^{X} u: \bthree_{i} \Pto \FF s_{i}\}_{i=1,\dots, n}$}
\AXC{$\bone\vDash\btwo \land \left(\bigwedge_{i}\bthree_{i}\right)$}
\RL{(subst$_{\cap}$)}
\TIC{$\Gamma \vdash^{X} t[u/x]: \bone \Pto \B C^{q}\tau$}
\DP
$$
\end{lemma}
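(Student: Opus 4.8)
The plan is to prove the rule (subst$_{\cap}$) admissible by induction on the typing derivation $D$ of $\Gamma, x:[\FF s_{1},\dots,\FF s_{n}] \vdash^{X} t: \btwo \Pto \B C^{q}\tau$, mirroring the structure of the proof of Lemma \ref{lemma:betasub} but replacing the single hypothesis on the substituted term by the family $\{\Gamma \vdash^{X} u: \bthree_{i} \Pto \FF s_{i}\}_{i}$ indexed over the components of the intersection type assigned to $x$. Throughout, the running invariant is that the Boolean constraint $\bone$ of the conclusion satisfies $\bone\vDash\btwo\wedge\bigl(\bigwedge_{i}\bthree_{i}\bigr)$, so that in particular $\bone\vDash\btwo$ and $\bone\vDash\bthree_{i}$ for each $i$.

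First I would treat the base cases, where $D$ ends in (id$_{\preceq}$). If the displayed variable is some $y\neq x$, then $t[u/x]=y$ and a derivation of the conclusion is obtained by re-applying (id$_{\preceq}$) and then weakening the Boolean formula from $\btwo$ to $\bone$ via Lemma \ref{lemma:weakb}. If instead the variable is $x$ itself, the rule selects an index $i$ with $\FF s_{i}\preceq \B C^{q}\tau$, and $t[u/x]=u$; here I would take the corresponding premiss $\Gamma \vdash^{X} u: \bthree_{i} \Pto \FF s_{i}$, promote its type to $\B C^{q}\tau$ through the admissible preorder rule $(\preceq)$ of Lemma \ref{lemma:order}, and finally apply Lemma \ref{lemma:weakb} with $\bone\vDash\bthree_{i}$ to reach $\Gamma \vdash^{X} u: \bone \Pto \B C^{q}\tau$, which is exactly the desired conclusion. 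For the inductive step, the structural rules $(\vee)$, $(\oplus l)$, $(\oplus r)$ and the abstraction rule $(\lambda)$ follow by applying the induction hypothesis to the premisses and re-applying the same rule, with the Boolean side-conditions recombined exactly as in Lemma \ref{lemma:betasub}. In the application rule $(@_{\cap})$ the intersection context is shared between the function subterm and each copy of the argument subterm, so after invoking the induction hypothesis on every premiss I would reassemble the global constraint $\bone$ by distributing conjunction over the disjunctions produced by the sub-derivations, using $\bigvee_{i}\btwo_{i}\wedge\bigvee_{j}\bthree_{j}\equiv\bigvee_{i,j}\btwo_{i}\wedge\bthree_{j}$ as in the $\LCPL$ case. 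The counting rule $(\mu_{\Sigma})$ is handled by the induction hypothesis together with the proviso that the bound name $a$ occurs in none of the $\bthree_{i}$, so that substitution commutes with the generator $\nu a$.

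The hard part will be the bookkeeping of Boolean formulas in the presence of intersection: since one and the same $u$ must simultaneously validate all the constraints $\bthree_{i}$ attached to the distinct components $\FF s_{i}$, the conjunction $\bigwedge_{i}\bthree_{i}$ has to be threaded coherently through each inductive case, and in the $(@_{\cap})$ and $(\mu_{\Sigma})$ cases this forces the same careful combination of Lemma \ref{lemma:weakb} with the conjunction/disjunction manipulations already deployed for Lemma \ref{lemma:betasub}. Beyond this bookkeeping the argument is routine, and none of the cases introduces any genuinely new phenomenon relative to the $\LCPL$ substitution lemma.
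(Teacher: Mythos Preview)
Your proposal is correct and matches the paper's approach: the paper's proof is the single sentence ``analogous to the proof of Lemma~\ref{lemma:betasub}'', and you have correctly unpacked what that analogy entails, including the one genuinely new ingredient, namely the appeal to Lemma~\ref{lemma:order} in the (id$_{\preceq}$) base case to pass from $\FF s_{i}$ to $\B C^{q}\tau$ along the preorder. The remaining bookkeeping you describe for $(@_{\cap})$ and $(\mu_{\Sigma})$ is exactly the expected adaptation.
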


We now have all elements to establish the subject reduction property.

\begin{proof}[Proof of Proposition \ref{prop:subjectb}]
As in the proof of Prop.~\ref{prop:subject}, observe that if the typing derivation $D$ of $t$ ends by a $(\lor)$-rule, it suffices to establish the property for the immediate sub-derivations of $D$ and then apply an instance of $(\lor)$-rule to the resulting derivations. So we will always suppose that the typing derivation of $D$ does not end by a $(\lor)$-rule.

The argument for $\beta$-reduction works similarly to the one in the Prop.~\ref{prop:subject}, using Lemma \ref{lemma:betasubb} in place of Lemma \ref{lemma:betasub}. (subst$_{\cap}$) %

For the case of $\redperm$ wee consider reduction rules one by one. Most cases are analogous to those from the proof of Prop.~\ref{prop:subject}. We limit ourselves to the case of the permutation rule \eqref{eq:notnu} (not considered before), and to 
those permutations that involve the rule ($\mu_{\Sigma}$):    
\begin{description}

\item[{($\nu b. (t\oplus_{a}^{i}u)\redperm (\nu b.t)\oplus_{a}^{i}(\nu b.u)$)}] 
We suppose $a\neq b$. As in the proof of Prop.~\ref{prop:subject} there are two sub-cases, we only consider the first one (the second one being treated similarly):

\begingroup\makeatletter\def\f@size{10}\check@mathfonts
\begin{lrbox}{\mypti}
\begin{varwidth}{\linewidth}
$
\AXC{$\Gamma \vdash^{X\cup\{a,b\}}t: \bthree_{i}\Pto  \B C^{q_{i}} \sigma$}
\AXC{$\bone'\land \btwo_{i}\vDash \bvar_{a}^{i}\land \bthree_{i}$}
\RL{($\oplus l$)}
\BIC{$\Gamma \vdash^{X\cup\{a,b\}}t\oplus_{a}^{i}u: \bone' \land \btwo_{i}\Pto  \B C^{q_{i}} \sigma$}
\DP
$
\end{varwidth}
\end{lrbox}

	$$
	\AXC{$\left\{ \usebox{\mypti}\right\}_{i}$}
	\AXC{$\mu(\btwo_{i})\geq s_{i}$}
	\AXC{$\bone \vDash \bone'$}
	\RL{($\mu_{\Sigma}$)}
	\TIC{$\Gamma \vdash^{X\cup\{a\}}\nu b.t\oplus_{a}^{i}u \Pto \B C^{\sum_{i}q_{i}s_{i}} \sigma$}
	\DP	
	$$
	\endgroup
	
From $\bone'\land \btwo_{i}\vDash  \bthree_{i}$, by Lemma \ref{lemma:weak}, we deduce the existence of a derivation of $\Gamma \vdash^{X\cup\{a,b\}}t: \bone'\land \btwo_{i}\Pto  \B C^{q_{i}}\sigma$. Moreover, from $\bone'\land \btwo_{i}\vDash \bvar_{a}^{i}\land \bthree_{i}$ and the fact that $\mathrm{FN}(\btwo_{i})\subseteq \{b\}$ it follows that	$\bone'\vDash \bvar_{a}^{i}$, so we can construct the following derivation:

$$
\AXC{$\left\{\Gamma \vdash^{X\cup\{a,b\}}t: \bone'\land\btwo_{i}\Pto  \B C^{q_{i}}\sigma\right\}_{i}$}
	\AXC{$\mu(\btwo_{i})\geq s_{i}$}
	\RL{($\mu_{\Sigma}$)}
	\BIC{$\Gamma \vdash^{X\cup\{a\}}\nu b.t: \bone' \Pto\B C^{\sum_{i}q_{i}s_{i}} \sigma$}
	\AXC{$\bone \vDash \bvar_{a}^{i}\land \bone'$}
	\RL{($\oplus l$)}
\BIC{$\Gamma \vdash^{X\cup\{a\}}(\nu b.t)\oplus_{a}^{i}(\nu b.u): \bone \Pto \B C^{\sum_{i}q_{i}s_{i}}\sigma$}
\DP
$$

\item[($\nu a.t \redperm t$)] 
We have 
\begin{prooftree}
\AXC{$\left\{\Gamma \vdash^{X\cup\{a\}}t: \bone'\land \btwo_{i}\Pto\B C^{q_{i}} \sigma\right\}_{i}$}
\AXC{$\mu(\model{\btwo_{i}}_{\{a\}})\geq s_{i}$}
\AXC{$\bone \vDash^{X} \bone'$}
\RL{($\mu_{\Sigma}$)}
\TIC{$\Gamma\vdash^{X}\nu a.t:\bone \Pto\B C^{\sum_{i}q_{i}s_{i}} \sigma$}
\end{prooftree}
Since the $\btwo_{i}$ are pairwise disjoint, from $\mu({\btwo_{i}})\geq s_{i}$ it follows that $\sum_{i}s_{i}\leq 1$. Hence, by letting $Q=\max_{i}\{q_{i}\}$, we have 
 $\sum_{i}q_{i}s_{i}\leq \sum_{i}Qs_{i}=Q\cdot \sum_{i}s_{i}\leq Q$. 
Let $i_{0}$ be the index such that $Q=q_{i_{0}}$. 
Since $a\notin \mathrm{FN}(t)$, by applying Lemma \ref{lemma:nota}
to the derivation of $\Gamma \vdash^{X\cup\{a\}}t: \bone'\land \btwo_{i_{0}}\Pto\B C^{Q} \sigma$,  
 we deduce $\Gamma \vdash^{X}t:\bone' \Pto\B C^{Q} \sigma$, and 
 since $ \B C^{Q}\sigma\preceq \B C^{\sum_{i}q_{i}s_{i}}\sigma$ and, we deduce, by Lemma \ref{lemma:order},
 $\Gamma \vdash^{X}t:\bone' \Pto\B C^{\sum_{i}q_{i}s_{i}} \sigma$, and by Lemma \ref{lemma:weakb},
  $\Gamma \vdash^{X}t:\bone \Pto\B C^{\sum_{i}q_{i}s_{i}} \sigma$.

\item[($\lambda x.\nu a.t\redperm \nu a.\lambda x.t$ )]  
We have

\begin{prooftree}
\AXC{$\left\{\Gamma, x: \FF M \vdash^{X\cup\{a\}} t: \bone'\land \btwo_{i} \Pto \B C^{q_{i}}\sigma\right\}_{i}$}
\AXC{$\mu({\btwo_{i}})\geq s_{i}$}
\AXC{$\bone \vDash^{X} \bone'$}
\RL{($\mu_{\Sigma}$)}
\TIC{$\Gamma, x:\FF M\vdash^{X}\nu a.t : \bone \Pto \B C^{\sum_{i}q_{i}s_{i}}\sigma$}
\RL{$\TLA$}
\UIC{$\Gamma \vdash^{X}\lambda x.\nu a.t: \bone \Pto\B C^{\sum_{i}q_{i}s_{i}}( \FF M\To \sigma)$}
\end{prooftree}
from which we deduce:

\begingroup\makeatletter\def\f@size{10}\check@mathfonts
\begin{lrbox}{\mypti}
\begin{varwidth}{\linewidth}
$
\AXC{$
\Gamma, x: \FF M \vdash^{X\cup\{a\}} t: \bone'\land \btwo_{i} \Pto\B C^{q_{i}} \sigma$}
\RL{$\TLA$}
\UIC{$\Gamma \vdash^{X}\lambda x.t: \bone'\land \btwo_{i}\Pto \B C^{q_{i}}(\FF M\To \sigma)$}
\DP
$
\end{varwidth}
\end{lrbox}

$$\AXC{$\left\{\usebox{\mypti}\right\}_{i}$}
\AXC{$\mu({\btwo_{i}})\geq s_{i}$}
\AXC{$\bone \vDash^{X} \bone'$}
\RL{($\mu_{\Sigma}$)}
\TIC{$\Gamma \vdash^{X}\nu a.\lambda x.t: \bone \Pto\B C^{\sum_{i}q_{i}s_{i}}( \FF M\To \sigma)$}
\DP
$$\endgroup

\item[($ (\nu a.t)u \redperm \nu a.(tu)$)] 
We have
\begin{center}
%

\resizebox{0.95\textwidth}{!}{
$
\AXC{$\left\{
\Gamma\vdash^{X\cup\{a\}} t: \btwo'\land \bthree_{i} \Pto \B C^{q_{i}}(\FF M\To \tau)\right\}_{i}$}
\AXC{$\mu(\bthree_{i})\geq s_{i}$}
\AXC{$\btwo\vDash \btwo'$}
\RL{($\mu_{\Sigma}$)}
\TIC{$\Gamma\vdash^{X}\nu a. t: \btwo \Pto\B C^{\sum_{i}q_{i}s_{i}} (\FF M\To \tau)$}
\AXC{$\Big\{ \Gamma \vdash^{X} u: \bfour_{j}\Pto \FF s_{i} \Big\}_{j}$}
\AXC{$\bone\vDash \btwo \land \left (\bigwedge_{j}\bfour_{j}\right)$}
\RL{($@_{\cap}$)}
\TIC{$\Gamma\vdash^{X}(\nu a.t)u: \bone \Pto \B C^{ \sum_{i}q_{i}s_{i}}\tau$}
\DP
$}

\end{center}

where $\FF M=[\FF s_{1},\dots, \FF s_{n}]$, 
from which we deduce

\begin{center}
	
\begingroup\makeatletter\def\f@size{10}\check@mathfonts
\begin{lrbox}{\mypti}
\begin{varwidth}{\linewidth}
$
\AXC{$
\Gamma\vdash^{X\cup\{a\}} t: \btwo'\land \bthree_{i} \Pto \B C^{q_{i}}(\FF M\To \tau)$}
\AXC{$\Big\{ \Gamma \vdash^{X} u: \bfour_{j}\Pto \FF s_{i} \Big\}_{j}$}
\AXC{$\bone\land \bthree_{i}\vDash (\btwo'\land \bthree_{i}) \land \left (\bigwedge_{j}\bfour_{j}\right)$}
\RL{($@_{\cap}$)}
\TIC{$\Gamma\vdash^{X}tu: \bone\land \bthree_{i} \Pto \B C^{ q_{i}}\tau$}
\DP
$
\end{varwidth}
\end{lrbox}

%

\resizebox{0.95\textwidth}{!}{
$\AXC{$\left\{\usebox{\mypti}\right\}_{i}$}
\AXC{$\mu(\bthree_{i})\geq s_{i}$}
\RL{($\mu_{\Sigma}$)}
\BIC{$\Gamma\vdash^{X}\nu a. tu: \bone \Pto\B C^{\sum_{i}q_{i}s_{i}}\tau$}
\DP
$
}
\endgroup

\end{center}

\end{description}
\end{proof}

\subsection{Subject Expansion.}

The goal of this section is to establish the following result for $\TCINT$:

\begin{proposition}[subject expansion]\label{thm:subjexp}
If $\Gamma \vdash^{X}t:\bone\Pto \FF s$ and $u\redall t$, then $\Gamma \vdash^{X}t:\bone\Pto \FF s$.
\end{proposition}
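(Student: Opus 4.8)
The plan is to prove the statement by reversing, step by step, the argument used for subject reduction (Proposition \ref{prop:subjectb}): one shows that whenever $u \redall t$ and the reduct $t$ is typable, the expansion $u$ is typable with the same judgement. As for subject reduction, I would first reduce to the case in which $u$ is a redex contracted at the root and $t$ is its one-step contractum; the general case then follows by an outer induction on $u \redall t$ together with an inner induction on the context of contraction, propagating typings through contexts by means of the weakening Lemmas \ref{lemma:weak2b} and \ref{lemma:weakb}, exactly as in the proof of Proposition \ref{prop:subjectb}. The argument thus splits into a $\beta$-expansion case and the permutative-expansion cases.

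For the $\beta$-case the key ingredient is an \emph{anti-substitution} lemma, dual to the substitution Lemma \ref{lemma:betasubb}: if $\Gamma \vdash^{X} t[u/x] : \bone \Pto \B C^{q}\tau$ is derivable, then there exist types $\FF s_{1}, \dots, \FF s_{n}$ and Boolean formulas $\btwo, \bthree_{1}, \dots, \bthree_{n}$, with $\FN(\btwo), \FN(\bthree_{i}) \subseteq X$, such that $\Gamma, x : [\FF s_{1}, \dots, \FF s_{n}] \vdash^{X} t : \btwo \Pto \B C^{q}\tau$, each $\Gamma \vdash^{X} u : \bthree_{i} \Pto \FF s_{i}$ is derivable, and $\bone \vDash \btwo \wedge \bigwedge_{i}\bthree_{i}$. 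This is established by induction on the derivation of $t[u/x]$, collecting at each of the finitely many occurrences of $x$ the type with which the corresponding copy of $u$ is there typed; the non-idempotent intersection is precisely what allows all these types to be gathered into a single multiset decorating $x$. Granting the lemma, a typing of the contractum $t[u/x]$ of $(\lambda x.t)u$ yields $\Gamma, x:[\FF s_{1},\dots,\FF s_{n}] \vdash^{X} t : \btwo \Pto \B C^{q}\tau$, to which an application of $(\lambda)$ followed by $(@_{\cap})$ — using the $n$ derivations for $u$ as minor premises and Lemma \ref{lemma:weakb} to adjust the Boolean constraints — reconstructs $\Gamma \vdash^{X} (\lambda x.t)u : \bone \Pto \B C^{q}\tau$. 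When $n = 0$, i.e.\ $x$ does not occur in $t$, the minor premise for $u$ is produced by a $0$-premise instance of $(\lor)$, so that $u$ itself need not be typable.

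For the permutative rules I would observe that the derivation transformations exhibited in the proof of Proposition \ref{prop:subjectb} are, case by case, reversible: each rewriting of a typing of the redex into a typing of the contractum may be read backwards, the semantic side conditions $\bone \vDash \btwo$ and $\mu(\bthree) \geq q$ being symmetric between the two derivations. The delicate cases are those involving the counting rule $(\mu_{\Sigma})$ and the structural rule $(\lor)$, where the Boolean constraints and the measured events must be reassembled rather than discarded; here I would reuse the $Y$-decomposition technique and Lemma \ref{lemma:disj} as in subject reduction, applied now in the opposite direction so as to split the hypotheses attached to the contractum back into the pairwise-disjoint events $\bthree_{i}$ demanded by $(\mu_{\Sigma})$. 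The permutation \eqref{eq:notnu}, namely $\nu a.t \redperm t$, which in subject reduction relied on Lemma \ref{lemma:nota}, requires instead its converse: from a typing of $t$ with $a \notin X$ one builds a one-premise $(\mu_{\Sigma})$-derivation of $\nu a.t$ by taking the single event $\bthree = \TOP$, with $\mu(\TOP) = 1 \geq q$, and then adjusting the probability with Lemma \ref{lemma:order}.

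The step I expect to be the main obstacle is the anti-substitution lemma, and within it the bookkeeping of Boolean formulas and events across the rules $(\lor)$ and $(\mu_{\Sigma})$. Whereas substitution (Lemma \ref{lemma:betasubb}) merely \emph{conjoins} the constraints arising from the several copies of $u$, anti-substitution must \emph{recover} a coherent per-occurrence assignment of events, and must guarantee that the reconstructed events remain pairwise disjoint whenever the copies of $u$ lie under distinct branches of a probabilistic choice typed by $(\mu_{\Sigma})$. Reconciling the non-idempotent intersection, which keeps occurrences apart, with the measured events, which are aggregated additively, is where the argument genuinely exceeds a mere mirror image of subject reduction, and where the bulk of the technical care will be required.
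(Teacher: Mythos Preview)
Your overall architecture is right, and you have correctly located the difficulty: the anti-substitution step for $\beta$-expansion, and in particular the bookkeeping of Boolean constraints across $(\lor)$ and $(\mu_{\Sigma})$. However, your proposed proof of the anti-substitution lemma ``by induction on the derivation of $t[u/x]$, collecting at each occurrence of $x$ the type with which the corresponding copy of $u$ is typed'' does not go through as stated, and the paper's argument is substantially different from a mirror image of subject reduction.

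The problem is the clause $\bone \vDash \btwo \wedge \bigwedge_{i}\bthree_{i}$. When you replace each sub-derivation for a copy of $u$ by an axiom introducing $x$, the Boolean formula $\bthree_{i}$ sitting at that axiom is whatever formula was there in the original derivation (after stripping bound names via Lemma~\ref{lemmatodo}). In an arbitrary derivation there is \emph{no} reason why the root formula $\bone$ should entail these $\bthree_{i}$: the Remark following Lemma~\ref{lemma:canonical} gives an explicit derivation in which a literal $\bvar_{a}^{1}$ appears at an axiom while the conclusion carries $\TOP$, and $\TOP \not\vDash \bvar_{a}^{1}$. So the $(@_{\cap})$ you want to apply at the end has an unprovable side condition. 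A direct induction on the derivation cannot repair this, because the obstruction is created precisely by $(\lor)$-rules interposed between the axioms and the root.

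The paper's solution is a structural normalisation of derivations: one first shows (Theorem~\ref{conje}, via Lemmas~\ref{lemma:cool1} and~\ref{lemma:cool2}) that every derivation can be turned into a \emph{pseudo-canonical} one, in which $(\lor)$-rules occur only at the very end or inside a fixed ``$\lor\mu$-pattern'' immediately below a $(\mu_{\Sigma})$. For canonical derivations Lemma~\ref{lemma:canonical} then guarantees that the root formula does entail every axiom formula with names in $X$. Only after this canonicalisation can one safely replace the $u$-subderivations by axioms and reassemble $(\lambda x.t)u$ via $(\lambda)$ and $(@_{\cap})$. This canonicalisation step is the missing idea in your proposal; without it the anti-substitution lemma is false as stated.

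For the permutative cases your sketch is essentially correct, though note that one expansion, $(\nu a.t)u \redperm \nu a.(tu)$, is not a mere reversal: the contractum may use different multisets $\FF M_{i}$ in the several $(\mu_{\Sigma})$-branches, and one must concatenate them into a single $\FF M$ and invoke Lemma~\ref{lemma:order} on $\FF M \preceq \FF M_{i}$. The paper flags this as the unique permutative case that genuinely needs intersection.
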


As for subject reduction, we will ignore the rules ($\HNORM$) and ($\NORM$), as the result extends immediately to them.

We will consider first the (more laborious case) of $\beta$-reduction, and then the more direct case of permutative reduction.

\subsubsection{Subject $\beta$-Expansion}

To establish the subject expansion property we need to develop a finer analysis of typing derivations. In particular we will need to prove that derivations can be turned into a canonical form. The main idea is to reduce to a case where $(\lor)$-rules are always applied ``as late as possible'', that is, either at the end of the derivation or right before a counting rule. 
Moreover, we will require that if a $(\lor)$-rule occurs before a counting rule for some name $a$, then these two rule follow a specific pattern, defined below, which ensures that no information is loss about other names. 


\begin{definition}[$\lor\mu$-pattern]
For any name $a\in \BB N$ and typing derivation $D$, a \emph{$\vee\mu$-pattern on $a$ in $D$} is given by the occurrence in $D$ of $(\mu)$-rule preceded by multiple occurrences of $(\vee)$-rules in a configuration as illustrated below:

\begingroup\makeatletter\def\f@size{10}\check@mathfonts
\begin{lrbox}{\mypti}
\begin{varwidth}{\linewidth}
$
\AXC{$
\Big\{ \Gamma \vdash^{X\cup \{a\}}t: \bone\land \bthree_{ij}\Pto \B C^{q_{i}}\sigma\Big\}_{j=1,\dots,L_{i}}$}
\AXC{$\bone \land \bthree_{i} \vDash^{X}\bigvee_{l}\bone\land\bthree_{ij}$}
\RL{$(\lor)$}
\BIC{$\Gamma \vdash^{X\cup \{a\}}t: \bone \land \bthree_{i}\Pto \B C^{q_{i}} \sigma$}
\DP
$
\end{varwidth}
\end{lrbox}

\begin{equation}\label{eq:vmu}
\AXC{$
\left \{ \usebox{\mypti}\right\}_{i}$}
\AXC{$\mu (\bthree_{i} )\geq s_{i}$}
\RL{$(\mu_{\Sigma})$}
\BIC{$\Gamma\vdash^{X} \nu a.t: \bone\Pto\B C^{\sum_{i}q_{i}s_{i}} \sigma$}
\DP
\end{equation}
\endgroup
where $\mathrm{FN}(\btwo)\subseteq X$, $\mathrm{FN}(\bthree_{ij})\subseteq \{a\}$, and $\btwo$ and the $\bthree_{ij}$ are conjunctions of literals.
\end{definition}

\begin{definition}[canonical and pseudo-canonical derivation]
A typing derivation $D$ is said \emph{canonical} when the following hold:
\begin{varitemize}

\item[a.] all Boolean formulas occurring in $D$ are conjunctions of literals, except for those which occur either in the conclusion of a $(\lor)$-rule or in the major premiss of a $(\mu_{\Sigma})$-rule;

\item[b.] any occurrence of the $(\lor)$ rule in $D$ occurs in a $\lor\mu$-pattern.

\end{varitemize}

A typing derivation $D$ is said \emph{pseudo-canonical} when $D$ is obtained by one application of the $(\lor)$-rule where all premises are conclusions of canonical sub-derivations.

\end{definition}
The fundamental property of canonical derivations that we will exploit is the following:

\begin{lemma}\label{lemma:canonical}
Suppose $D$ is a canonical derivation of conclusion $\Gamma \vdash^{X}t: \bone\Pto \FF s$. 
If $\btwo$ is a Boolean formula $\btwo$ with $\mathrm{FN}(\btwo)\subseteq X$ occurring in an axiom of $D$, then $\bone\vDash^{X} \btwo$.
\end{lemma}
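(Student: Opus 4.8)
The plan is to prove Lemma~\ref{lemma:canonical} by induction on the canonical derivation $D$. The statement says that whenever a Boolean formula $\btwo$ with $\mathrm{FN}(\btwo)\subseteq X$ appears in an axiom (identity rule) of a canonical derivation concluding $\Gamma\vdash^X t:\bone\Pto\FF s$, then $\bone\vDash^X\btwo$. The intuition is that canonicity forces all Boolean constraints to be propagated ``downward unchanged'' except at $(\lor)$/$(\mu_\Sigma)$-patterns, so the constraint $\bone$ in the conclusion is at least as strong as any constraint used at a leaf over the same set of names $X$.

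First I would set up the induction on the last rule applied in $D$. The base case is the identity rule itself: if the conclusion is an axiom $\Gamma,x:\FF M\vdash^X x:\bone\Pto\FF t$, then the only Boolean formula occurring in an axiom is $\bone$, and $\bone\vDash^X\bone$ holds trivially. For the inductive step, I would examine each rule allowed in a canonical derivation. For the rules $(\lambda)$, $(@_\cap)$, $(\oplus l)$, $(\oplus r)$, the key observation (using canonicity condition~a., which forces the Boolean formulas to be conjunctions of literals away from the $(\lor)/(\mu_\Sigma)$-patterns) is that the Boolean formula in each premise is entailed by that in the conclusion. Concretely: in $(@_\cap)$ the conclusion has $\bone\vDash\btwo\land\bigwedge_i\bthree_i$, so $\bone\vDash\btwo$ and $\bone\vDash\bthree_i$, hence the premise constraints are implied; by the induction hypothesis any axiom formula $\mathscr g$ over $X$ in a premise satisfies (premise constraint)~$\vDash^X\mathscr g$, and composing entailments gives $\bone\vDash^X\mathscr g$. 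The $(\oplus l)$ case is similar, using $\bone\vDash\bvar_a^i\land\btwo'$ and restricting attention to names in $X$ (so the literal $\bvar_a^i$, whose name lies outside $X$ when $a\notin X$, is irrelevant to an axiom formula over $X$). The $(\lambda)$ case is immediate since the Boolean formula is unchanged.

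The main obstacle, and the case that genuinely uses canonicity condition~b., is the $\lor\mu$-pattern~\eqref{eq:vmu}. Here the conclusion is $\Gamma\vdash^X\nu a.t:\bone\Pto\B C^{\sum_i q_is_i}\sigma$, and an axiom formula $\mathscr g$ over $X$ sits inside one of the canonical sub-derivations concluding $\Gamma\vdash^{X\cup\{a\}}t:\bone\land\bthree_{ij}\Pto\B C^{q_i}\sigma$. I would apply the induction hypothesis to that sub-derivation to get $\bone\land\bthree_{ij}\vDash^{X\cup\{a\}}\mathscr g$. Since $\mathrm{FN}(\mathscr g)\subseteq X$, $\mathrm{FN}(\bthree_{ij})\subseteq\{a\}$ and $a\notin X$, the name $a$ does not occur in $\mathscr g$; so from $\bone\land\bthree_{ij}\vDash^{X\cup\{a\}}\mathscr g$ together with the satisfiability of $\bthree_{ij}$ (each $\bthree_{ij}$ has positive measure, being a constraint with $\mu\geq s_i>0$ after the $(\lor)$-merge), one deduces $\bone\vDash^X\mathscr g$ by projecting away $a$ — this is exactly the elimination-of-an-independent-name argument already used in Lemma~\ref{lemma:disj} and Lemma~\ref{lemma:nota}. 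That projection step is where one must be careful: I expect the crux is checking that the independence of the $a$-variables from the $X$-variables lets the entailment descend from $X\cup\{a\}$ to $X$, and that condition~b.\ guarantees we only ever face $(\lor)$-rules packaged in this controlled pattern (rather than arbitrary $(\lor)$-applications that could mix the names in $\bone$). The remaining rules produce no new axiom formulas over $X$ beyond those handled by the induction hypothesis, so assembling these cases completes the proof.
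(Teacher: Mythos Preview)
Your proposal is correct and follows essentially the same approach as the paper: induction on the canonical derivation, treating the identity rule as the base case, the rules $(\lambda)$, $(@_\cap)$, $(\oplus l)$, $(\oplus r)$ uniformly via the entailment of premise formulas by the conclusion formula, and the $\lor\mu$-pattern via the induction hypothesis followed by projecting away the fresh name $a$ using Lemma~\ref{lemma:disj}. Two small imprecisions worth cleaning up: in the $(\oplus l)$ case the name $a$ already belongs to the level of the conclusion (the rule does not change the name set), so your parenthetical about $a\notin X$ is unnecessary; and for the $\lor\mu$-pattern, the satisfiability of $\bthree_{ij}$ comes from the canonicity requirement that it be a (consistent) conjunction of literals, not from the bound $\mu(\bthree_i)\geq s_i$, which concerns $\bthree_i$ rather than $\bthree_{ij}$.
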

\begin{proof}
By induction on $D$:
\begin{varitemize}
\item if $D$ only consists in an axiom, the claim is trivial;
\item if $D$ ends by any rule of the form
$$
\AXC{$\{\Delta \vdash ??: \bfour_{i}\Pto ?? \}_{i}$}
\AXC{other conditions}
\BIC{$\Delta\vdash ??: \bone\Pto ??$}
\DP
$$ 
where $\bone\vDash \bfour_{i}$ (this includes the rules for $\oplus_{a}^{i}$, $\lambda$ and application), then from the induction hypothesis $\bfour_{i}\vDash \btwo$ holds, where $i$ is the index of the derivation where $\btwo$ occurs, and we can thus conclude $\bone\vDash \btwo$.

\item if $D$ ends by a $\lor\mu$-pattern as in Eq.~\eqref{eq:vmu}, then 
by the induction hypothesis
$\bone \land \bthree_{ij} \vDash \btwo$, where $i$ and $j$ are chosen so that $\btwo$ occurs in the associated sub-derivation. Since $\mathrm{FN}(\bthree_{ij})\cap \mathrm{FN}(\btwo)=\emptyset$ and the $\bthree_{ij}$ are all satisfiable, we can conclude $\bone\vDash \btwo$ by Lemma \ref{lemma:disj}.
\end{varitemize}
\end{proof}

\begin{remark}
It is not difficult to see that the property of Lemma \ref{lemma:canonical} fails for non-canonical derivations. For example, consider the non-canonical derivation below:

\begin{center}
\resizebox{\textwidth}{!}{
$
\AXC{$x: \B C^{q}\sigma \vdash^{\{a,b\}} x: \bvar_{a}^{1}\Pto \B C^{q}\sigma$}
\RL{$(\oplus l)$}
\UIC{$x: \B C^{q}\sigma \vdash^{\{a,b\}} x\oplus_{b}^{0} x:\bvar_{b}^{0}\land \bvar_{a}^{1}\Pto \B C^{q}\sigma$}
\AXC{$\mu(\bvar_{b}^{0})\geq \frac{1}{2}$}
\RL{$(\mu)$}
\BIC{$x: \B C^{q}\sigma \vdash^{\{a,b\}} \nu b.x\oplus_{b}^{0} x: \bvar_{a}^{1}\Pto \B C^{\frac{q}{2}}\sigma$}
\AXC{$x: \B C^{q}\sigma \vdash^{\{a,b\}} x: \lnot\bvar_{a}^{1}\Pto \B C^{q}\sigma$}
\RL{$(\oplus r)$}
\UIC{$x: \B C^{q}\sigma \vdash^{\{a,b\}} x\oplus_{b}^{0} x:\lnot\bvar_{b}^{0}\land \lnot \bvar_{a}^{1}\Pto \B C^{q}\sigma$}
\AXC{$\mu(\lnot\bvar_{b}^{0})\geq \frac{1}{2}$}
\RL{$(\mu)$}
\BIC{$x: \B C^{q}\sigma \vdash^{\{a,b\}} \nu b.x\oplus_{b}^{0} x: \lnot\bvar_{a}^{1}\Pto \B C^{\frac{q}{2}}\sigma$}
\AXC{$\TOP \vDash \bvar_{a}^{1}\land \lnot \bvar_{a}^{1}$}
\RL{$(\lor)$}
\TIC{$x: \B C^{q}\sigma \vdash^{\{a\}} \nu b.x\oplus_{b}^{0} x:\TOP\Pto \B C^{\frac{q}{2}}\sigma$}
\DP
$}
\end{center}
\noindent
where $\bthree=(\bvar_{b}^{0}\land \bvar_{a}^{1})\vee(\lnot\bvar_{b}^{0}\land\lnot \bvar_{a}^{1})$.
While the literal $\bvar_{a}^{1}$ occurs in an axiom, it is not true that $\TOP\vDash \bvar_{a}^{1}$.

Observe that the derivation above is pseudo-canonical. Instead, the derivation below is neither canonical nor pseudo-canonical, and similarly violates the property of Lemma \ref{lemma:canonical}.
\begin{center}
\resizebox{\textwidth}{!}{
$
\AXC{$x: \B C^{q}\sigma \vdash^{\{a,b\}} x: \bvar_{a}^{1}\Pto \B C^{q}\sigma$}
\RL{$(\oplus l)$}
\UIC{$x: \B C^{q}\sigma \vdash^{\{a,b\}} x\oplus_{b}^{0} x:\bvar_{b}^{0}\land \bvar_{a}^{1}\Pto \B C^{q}\sigma$}
\AXC{$x: \B C^{q}\sigma \vdash^{\{a,b\}} x: \lnot\bvar_{a}^{1}\Pto \B C^{q}\sigma$}
\RL{$(\oplus l)$}
\UIC{$x: \B C^{q}\sigma \vdash^{\{a,b\}} x\oplus_{b}^{0} x:\bvar_{b}^{0}\land \lnot \bvar_{a}^{1}\Pto \B C^{q}\sigma$}
\AXC{$ \bvar_{b}^{0}\vDash (\bvar_{a}^{1}\land \bvar_{b}^{0})\lor(\lnot \bvar_{a}^{1}\land \bvar_{b}^{0})$}
\RL{$(\lor)$}
\TIC{$
x: \B C^{q}\sigma \vdash^{\{a,b\}} x\oplus_{b}^{0} x:\bvar_{b}^{0}\Pto \B C^{q}\sigma$}
\AXC{$\mu(\bvar_{b}^{0})\geq \frac{1}{2}$}
\RL{$(\mu)$}
\BIC{$x: \B C^{q}\sigma \vdash^{\{a\}} \nu b.x\oplus_{b}^{0} x:\TOP\Pto \B C^{\frac{q}{2}}\sigma$}
\DP
$}
\end{center}
\end{remark}

The following is the fundamental structural result we need to establish subject $\beta$-expansion.

\begin{theorem}\label{conje}
Any typing derivation can be transformed into a pseudo-canonical derivation.
\end{theorem}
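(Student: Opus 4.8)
The plan is to argue by induction on the typing derivation $D$, showing that $D$ can be rewritten, \emph{with the same conclusion}, into a pseudo-canonical derivation; the inductive hypothesis supplies pseudo-canonical derivations for the immediate sub-derivations, and in each case I reassemble them. I would first record a preliminary observation that streamlines the bookkeeping: a canonical derivation never ends with a $(\lor)$-rule, since by condition b.\ every $(\lor)$ sits inside a $\lor\mu$-pattern and is therefore immediately followed by $(\mu_{\Sigma})$, so by condition a.\ the Boolean formula decorating the conclusion of any canonical derivation is a conjunction of literals. Consequently, whenever the IH turns a sub-derivation into a pseudo-canonical one $E$, I may read off canonical sub-derivations $E_{1},\dots,E_{k}$ (the premisses of the final $(\lor)$ of $E$) whose conclusions carry conjunctions of literals $\bone_{1},\dots,\bone_{k}$ with the precondition of $E$ entailing $\bigvee_{l}\bone_{l}$. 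I would also prove once and for all an auxiliary strengthening lemma refining Lemma~\ref{lemma:weakb}: replacing the $X$-part of the precondition throughout a canonical derivation by a stronger conjunction of $X$-literals, consistent with all the axiom side-conditions, again yields a canonical derivation, i.e.\ it preserves both conditions a.\ and b.

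For the cases where $D$ ends with a rule other than $(\mu_{\Sigma})$, the idea is to permute the accumulated $(\lor)$-rule downward past the last rule, putting Boolean side-formulas into disjunctive normal form as needed. If $D$ ends with $(\oplus l)$ (and dually $(\oplus r)$) or with $(\lambda)$, I apply the last rule separately to each canonical premiss $E_{l}$ of the pseudo-canonical sub-derivation (note that $\bvar_{a}^{i}\wedge\bone_{l}$ and $\bone_{l}$ are again conjunctions of literals), obtaining canonical derivations, and then close with a single $(\lor)$-rule whose covering side-condition follows from $\bone\vDash\bvar_{a}^{i}\wedge\btwo\vDash\bigvee_{l}(\bvar_{a}^{i}\wedge\bone_{l})$ via Lemma~\ref{lemma:weakb}. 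The application rule $(@_{\cap})$ is handled in the same spirit, except that I must combine the canonical premisses of \emph{all} the pseudo-canonical sub-derivations: forming one $(@_{\cap})$-instance for every tuple of choices, one per premiss, yields canonical derivations whose preconditions are the conjunctions $\btwo_{k}\wedge\bigwedge_{i}\bthree_{il_{i}}$, and distributivity of $\wedge$ over $\vee$ gives the covering needed for the closing $(\lor)$. The identity axiom is the base case: take the DNF of its precondition and wrap the resulting conjunction-of-literals axioms in one $(\lor)$-rule.

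The crux is the case where $D$ ends with $(\mu_{\Sigma})$, since here I must \emph{produce} $\lor\mu$-patterns rather than eliminate $(\lor)$-rules. Suppose $D$ concludes $\Gamma\vdash^{X}\nu a.t:\bone\Pto\B C^{\sum_{i}q_{i}s_{i}}\sigma$ from premisses $\Gamma\vdash^{X\cup\{a\}}t:\bone\wedge\bthree_{i}\Pto\B C^{q_{i}}\sigma$ with $\mu(\bthree_{i})\geq s_{i}$. First I decompose $\bone$ into the disjunction $\bigvee_{p}\bone_{p}$ of its minterms over the finitely many $X$-variables occurring in the transformed sub-derivations; a final $(\lor)$-rule then reduces the goal to building, for each minterm $\bone_{p}$, a derivation of $\Gamma\vdash^{X}\nu a.t:\bone_{p}\Pto\B C^{\sum_{i}q_{i}s_{i}}\sigma$ whose last two layers form a $\lor\mu$-pattern. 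For this I apply the IH to each premiss, read off its canonical sub-derivations with conjunction-of-literals conclusions $\bsix_{im}$, and split each such conclusion into its $X$-part and its $\{a\}$-part $\bsix_{im}=\bsix^{X}_{im}\wedge\bsix^{\{a\}}_{im}$. Using the strengthening lemma I retain exactly those $m$ with $\bone_{p}\vDash\bsix^{X}_{im}$, strengthening their $X$-part to $\bone_{p}$ and setting $\bthree_{ij}:=\bsix^{\{a\}}_{im}$; because $\bone_{p}$ is a minterm, any model of $\bone_{p}\wedge\bthree_{i}$ satisfying some $\bsix_{im}$ satisfies its $X$-part on the nose and hence one of the retained $\bone_{p}\wedge\bthree_{ij}$, which yields the inner covering $\bone_{p}\wedge\bthree_{i}\vDash\bigvee_{j}\bone_{p}\wedge\bthree_{ij}$ and thus the $(\lor)$-premisses of the pattern, while the unchanged measure side-conditions $\mu(\bthree_{i})\geq s_{i}$ license the final $(\mu_{\Sigma})$. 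The main obstacle, which this minterm bookkeeping is designed to overcome, is precisely the entanglement in $\bone\wedge\bthree_{i}$ of constraints on $a$ with constraints on the other names: a naive weakening of the precondition would lose information about $X$, exactly as the non-canonical counterexamples displayed after Lemma~\ref{lemma:canonical} illustrate, and it is the combination of minterm decomposition (separating the two kinds of constraints in the spirit of Lemma~\ref{lemma:disj}) with a canonicity-preserving strengthening that keeps them cleanly apart, as the definition of a $\lor\mu$-pattern demands.
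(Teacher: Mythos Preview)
Your proof is correct and follows the same overall architecture as the paper's: an induction on the derivation (the paper packages the inductive step as a separate Lemma~\ref{lemma:cool2}) in which every rule other than $(\mu_{\Sigma})$ is handled by permuting the accumulated final $(\lor)$ downward, with the $(@_{\cap})$-case requiring the distributivity-based tuple construction you describe, and the identity axiom handled by DNF.

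The one place where you diverge from the paper is the $(\mu_{\Sigma})$ case. The paper first applies an auxiliary Lemma~\ref{lemma:cool1} to refine the $X$-parts $\bfour_{ij}$ of the canonical sub-derivation formulas until any two of them are either equivalent or disjoint; it then decomposes $\bone$ into disjoint conjunctions of literals $\bone_{u}$, and for each $\bone_{u}$ picks, via the disjointness of the refined $\bfour_{ijk}$, the unique compatible piece per $i$, thereby assembling a $\lor\mu$-pattern. You instead go directly to full minterms $\bone_{p}$ over all $X$-variables occurring in the sub-derivations, which automatically decide every $\bsix^{X}_{im}$, so no preliminary refinement of the sub-derivations is needed and the selection of retained indices is immediate. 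Your route is conceptually cleaner and removes the need for Lemma~\ref{lemma:cool1}; the paper's route produces fewer pieces in general. The canonicity-preserving strengthening step you invoke is exactly what the paper does when it replaces $\bfour_{ij'k'}$ by $\bone_{u}$ throughout a canonical sub-derivation.
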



We need two preliminary lemmas.

\begin{lemma}\label{lemma:cool1}
Suppose $D$ is pseudo-canonical and ends as follows:
$$
\AXC{$\Big\{\Gamma \vdash^{X\cup\{a\}}t: \bone_{i}\land \bfour_{i}\Pto \FF s\Big\}_{i=1,\dots,n}$}
\AXC{$\bone\vDash^{X\cup\{a\}} \bigvee_{i}\bone_{i}\land \bfour_{i}$}
\RL{$(\lor)$}
\BIC{$\Gamma \vdash^{X\cup\{a\}}t: \bone\Pto \FF s$}
\DP
$$
where $\mathrm{FN}(\bone_{i})\subseteq X$ and $\mathrm{FN}(\bfour_{i})\subseteq\{a\}$.
 Then there exists literals $\btwo_{ij}$ with free names in $X$ such that 
 \begin{varitemize}
 \item[(i.)] if $i\neq i'$, then for all $j,j'$ either $\btwo_{ij}\equiv\btwo_{i'j'}$ or $\btwo_{ij}\land \btwo_{i'j'}\vDash \BOT$;

 \item[(ii.)] 
 $\bone_{i}\equiv \bigvee_{j}\btwo_{ij}$, and more generally 
 $\bigvee_{i,j}\btwo_{ij} \equiv \bigvee_{i=1}^{n}\bone_{i}$.
 \end{varitemize}
 Moreover, $D$ can be turned into a pseudo-canonical derivation $D'$ ending as follows:
$$
\AXC{$\Big\{\Gamma \vdash^{X\cup\{a\}}t: \btwo_{ij}\land \bfour_{i}\Pto \FF s\Big\}_{i,j}$}
\AXC{$\bone\vDash^{X\cup\{a\}} \bigvee_{i,j}\btwo_{ij}\land \bfour_{i}$}
\RL{$(\lor)$}
\BIC{$\Gamma \vdash^{X\cup\{a\}}t: \bone\Pto \FF s$}
\DP
$$
\end{lemma}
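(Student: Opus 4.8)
The plan is to separate a purely Boolean, combinatorial step (producing the refined formulas $\btwo_{ij}$) from a proof-transformation step (re-deriving each premiss over the finer constraints), and then to check that canonicity survives the transformation.

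First I would fix the finite set $W$ of Boolean variables with names in $X$ occurring in any of $\bone_{1},\dots,\bone_{n}$, and consider the complete consistent conjunctions of literals over $W$ — the minterms $m_{1},\dots,m_{K}$. Any two of these are either syntactically identical or mutually inconsistent, and since each $\bone_{i}$ depends only on variables of $W$ we have $\bone_{i}\equiv\bigvee\{m_{k}\mid m_{k}\vDash\bone_{i}\}$. I would then let $\{\btwo_{ij}\}_{j}$ enumerate exactly those minterms $m_{k}$ with $m_{k}\vDash\bone_{i}$. Property (i.) is immediate, since two distinct minterms are inconsistent while equal ones are equivalent; property (ii.) holds because $\bone_{i}\equiv\bigvee_{j}\btwo_{ij}$ by construction, whence $\bigvee_{i,j}\btwo_{ij}\equiv\bigvee_{i}\bone_{i}$. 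Note that each $\btwo_{ij}$ is a conjunction of literals, as required by the notion of canonical derivation.

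Next, for the proof transformation: the premisses of the final $(\lor)$-rule of $D$ are the conclusions of canonical sub-derivations $D_{i}$ of $\Gamma\vdash^{X\cup\{a\}}t:\bone_{i}\land\bfour_{i}\Pto\FF s$. Since $\btwo_{ij}\vDash\bone_{i}$ yields $\btwo_{ij}\land\bfour_{i}\vDash^{X\cup\{a\}}\bone_{i}\land\bfour_{i}$, applying Lemma \ref{lemma:weakb} to $D_{i}$ produces, for each $j$, a derivation of $\Gamma\vdash^{X\cup\{a\}}t:\btwo_{ij}\land\bfour_{i}\Pto\FF s$ of the same length. Collecting all of these as the premisses of a single $(\lor)$-rule gives the desired $D'$; its side condition $\bone\vDash^{X\cup\{a\}}\bigvee_{i,j}\btwo_{ij}\land\bfour_{i}$ follows from the original side condition together with $\bigvee_{i,j}\btwo_{ij}\land\bfour_{i}\equiv\bigvee_{i}\bone_{i}\land\bfour_{i}$, which is just $\bone_{i}\equiv\bigvee_{j}\btwo_{ij}$ distributed over the $\bfour_{i}$.

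The hard part will be verifying that the derivations produced by Lemma \ref{lemma:weakb} are again canonical, so that $D'$ is genuinely pseudo-canonical. The key observation is that here we weaken by conjoining a minterm over $X$, i.e.\ by a conjunction of literals whose names are disjoint from $\{a\}$. I would check, by inspecting the construction underlying Lemma \ref{lemma:weakb} rule by rule, that this operation sends conjunctions of literals to conjunctions of literals — so condition a.\ of canonicity is preserved — and leaves every internal $\lor\mu$-pattern intact: the strengthening touches only the $X$-component of the constraints, while the counting side-conditions $\mu(\bthree_{i})\geq s_{i}$ and the inner formulas $\bthree_{ij}$ live over $\{a\}$ and are untouched. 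Here Lemma \ref{lemma:canonical} and Lemma \ref{lemma:disj} are the tools guaranteeing that the $X$-part of a canonical sub-derivation can be strengthened coherently without destroying the disjointness structure that a $\lor\mu$-pattern requires.
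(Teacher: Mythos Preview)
Your proposal is correct and follows essentially the same approach as the paper. The paper phrases the refinement as ``extending each $\bone_{i}$ by all possible choices of literals over the variables missing from $\bone_{i}$'', which coincides with your minterm construction since the $\bone_{i}$ are already conjunctions of literals; and rather than invoking Lemma~\ref{lemma:weakb} as a black box, the paper directly defines $D_{ij}$ by replacing every Boolean formula $\bfive$ in $D_{i}$ with $\bfive\land\bthree$ (where $\btwo_{ij}=\bone_{i}\land\bthree$) and then checks canonicity by induction --- which is exactly the rule-by-rule inspection you outline in your last paragraph, though the paper does not need to appeal to Lemmas~\ref{lemma:canonical} or~\ref{lemma:disj} for this step.
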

\begin{proof}
%
%

For any of the conjunction of literals $\bone_{i}$, let $W_{i}\subseteq \BB N\times \BB N$ contain all pairs $(j,b)$ such that
at least one of the literals $\bvar_{b}^{j},\lnot \bvar_{b}^{j}$ occurs in $\bigvee_{i}\bone_{i}$ but
neither of them occurs in $\bone_{i}$. For any such $\bone_{i}$, 
let $\btwo_{i1},\dots, \btwo_{i2^{\sharp W_{i}}}$ be the conjunctions of literals obtained by adding to $\bone_{i}$ all possible consistent conjunctions of literals chosen from $W_{i}$ (i.e. all possible choices between $\bvar_{b}^{j}$ and $\lnot\bvar_{b}^{j}$, for $(j,b)\in W_{i}$). It is clear that $\bone_{i}\equiv \bigvee_{j}\btwo_{ij}$ and that all $\btwo_{ij}$ are disjoint, so $ii.$ holds. 

Moreover, let $i\neq i'$ and suppose $\btwo_{ij}=\bone_{i}\land \bthree$ and $\btwo_{i'j'}= \bone_{j}\land \bfour$ are not equivalent; since $\btwo_{ij}$ and $\btwo_{i'j'}$ are conjunctions of literals, and each of them contains either $\bvar_{b}^{j}$ or $\lnot \bvar_{b}^{j}$ for $(j,b)\in \bigcup_{i}W_{i}$, the only possibility is that for some pair $(j,b)$, $\btwo_{ij}$ contains either $\bvar_{b}^{j}$ or $\lnot \bvar_{b}^{j}$ and $\btwo_{i'j'}$ contains its negation, so $\btwo_{ij}\land \btwo_{i'j'}\vDash \BOT$. Hence $i.$ also holds.

Finally, the pseudo-canonical derivation $D'$ is obtained as follows: for any canonical sub-derivation $D_{i}$
 of $D$ of conclusion $\Gamma \vdash^{X}t: \bone_{i}\land \bfour_{i}\Pto \FF s$
 (whose conclusion contains the Boolean formula $\bone_{i}$) and for each choice of $ j$, 
define a sub-derivation $D_{ij}$ 
of conclusion $\Gamma \vdash^{X}t: \btwo_{ij}\land \bfour_{i}\Pto \FF s$ 
by replacing each Boolean formula $\bfive$ occurring in $D_{i}$ by $\bfive\land \bthree$, where $\btwo_{ij}=\bone_{i}\land \bthree$.
Since $\bthree$ is a conjunction of literals, one can check by induction on $D_{i}$ that $D_{ij}$ is also canonical.
This is immediate for all rules except for the $(\lor)$-rule. However, since $D_{i}$ is canonical, $(\lor)$-rules occur in $D_{i}$ only in $\lor\mu$-patterns, and one can check that one obtains then a $\lor\mu$-pattern by replacing each Boolean formula $\CC a$ by $\CC a \land \bthree$.

We finally obtain $D'$ as follows:

\begingroup\makeatletter\def\f@size{10}\check@mathfonts
\begin{lrbox}{\mypti}
\begin{varwidth}{\linewidth}
$
\AXC{$D_{ij}$}
\noLine
\UIC{$\Gamma \vdash^{X\cup\{a\}}t: \btwo_{ij}\land \bfour_{i} \Pto \FF s$}
\DP
$
\end{varwidth}
\end{lrbox}

$$
\AXC{$\left\{ \usebox{\mypti} \right\}_{i,j} $}
\AXC{$\bone \vDash \bigvee_{i,j}\btwo_{ij}\land \bfour_{i}$}
\RL{$(\lor)$}
\BIC{$\Gamma\vdash^{X\cup\{a\}}t: \bone\Pto \FF s$}
\DP
$$
\endgroup

%
%
\end{proof}

\begin{lemma}\label{lemma:cool2}
If $D$ is a derivation ending with a rule whose premises are conclusions of pseudo-canonical derivations, then $D$ can be turned into a pseudo-canonical derivation $D'$.
\end{lemma}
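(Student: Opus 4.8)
The plan is to prove the statement by a case analysis on the last rule $R$ of $D$, in each case \emph{commuting} $R$ past the bottommost $(\vee)$-rules of its pseudo-canonical premises, so that a single $(\vee)$-rule is left at the very bottom (or, in the counting case, so that a $\vee\mu$-pattern is created). I will use throughout two structural observations. Since each premise of $R$ is pseudo-canonical, it terminates with exactly one $(\vee)$-rule whose own premises are conclusions of canonical sub-derivations; and the conclusion of any canonical sub-derivation carries a Boolean formula that is a conjunction of literals — indeed, by clause (b) of the definition every $(\vee)$-rule of a canonical derivation sits inside a $\vee\mu$-pattern, hence is never its bottommost rule, so the last rule of a canonical derivation is governed by clause (a), which forces its conclusion formula to be a conjunction of literals. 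This lemma, together with Lemma~\ref{lemma:cool1}, is precisely the inductive step needed to establish Theorem~\ref{conje}.

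First I would dispose of the easy cases. If $R$ is an identity axiom it has no premises and is already canonical, so prefixing a one-premise $(\vee)$-rule with side condition $\bone \vDash \bone$ yields a pseudo-canonical derivation. If $R$ is itself a $(\vee)$-rule, I merge the outer $(\vee)$ with the bottom $(\vee)$-rules of its pseudo-canonical premises into a single $(\vee)$-rule; the side condition follows by transitivity of $\vDash$ and idempotency of $\vee$, and the collection of canonical branches is left unchanged, so the result is pseudo-canonical. For $R \in \{(\lambda),(\oplus l),(\oplus r)\}$, each of which acts nontrivially on a single premise, I push $R$ above the bottom $(\vee)$-rule of that premise: I apply $R$ to each canonical branch and re-apply a single $(\vee)$-rule below. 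Every branch stays canonical because $R$ neither creates a new $(\vee)$-rule nor turns a conjunction of literals into a disjunction (for $(\lambda)$ the Boolean formula is copied verbatim, while for $(\oplus l)$ and $(\oplus r)$ it is the branch formula conjoined with a single literal).

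The case $R = (@_{\cap})$ requires distributing over several pseudo-canonical premises (the function premise together with the $n$ argument premises). Here I form all tuples of branches, one chosen from each premise, apply $(@_{\cap})$ to each tuple — the resulting Boolean formula being the conjunction of the chosen branches' formulas, hence again a conjunction of literals, as demanded by clause (a) — and collect the outcomes under one bottom $(\vee)$-rule. The side condition of this final $(\vee)$-rule follows from distributivity of $\land$ over $\vee$, and no new $(\vee)$-rule is introduced above it, so each branch remains canonical and the whole is pseudo-canonical.

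The real obstacle is $R = (\mu_{\Sigma})$, where the output must be a $\vee\mu$-pattern rather than a plain bottom $(\vee)$-rule. Here I would invoke Lemma~\ref{lemma:cool1}: applied to the $i$-th pseudo-canonical major premise — whose conclusion $\bone \land \bthree_i$ splits cleanly into the shared $X$-part $\bone$ and the $\{a\}$-part $\bthree_i$, since $a \notin \FN(\bone)$ — the lemma rewrites its bottom $(\vee)$-rule into one whose branch formulas are disjoint conjunctions of literals separated between $X$ and $\{a\}$, which is exactly the shape a $\vee\mu$-pattern demands. Stacking the $(\mu_{\Sigma})$-rule below these refined $(\vee)$-rules then produces a $\vee\mu$-pattern over canonical sub-derivations, which by clauses (a) and (b) is a canonical derivation; prefixing a trivial one-premise $(\vee)$-rule makes it pseudo-canonical. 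The delicate bookkeeping — verifying the disjointness conditions and the Boolean side conditions so that the $(\mu_{\Sigma})$-rule still applies with the same bound $\B C^{\sum_{i}q_{i}s_{i}}$ — is precisely what Lemma~\ref{lemma:cool1} (together with the separation argument of Lemma~\ref{lemma:disj}) is engineered to supply, and is the step I expect to be hardest.
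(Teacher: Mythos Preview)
Your strategy and the handling of the rules $(\mathrm{id}_{\preceq})$, $(\vee)$, $(\lambda)$, $(\oplus l)$, $(\oplus r)$, and $(@_{\cap})$ match the paper's proof essentially verbatim.

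The gap is in the $(\mu_{\Sigma})$ case. You claim that applying Lemma~\ref{lemma:cool1} to each major premise and then stacking $(\mu_{\Sigma})$ below ``produces a $\vee\mu$-pattern over canonical sub-derivations''. But look at the shape of a $\vee\mu$-pattern: every branch formula must be of the form $\bone \land \bthree_{ij}$ with a \emph{single common} $X$-part $\bone$ across all $i,j$, and only the $\{a\}$-part $\bthree_{ij}$ varying. Lemma~\ref{lemma:cool1} does not give you that. Applied to the $i$-th premise, it refines the branch formulas to $\btwo_{ijk}\land \bfive_{ij}$ where the $X$-parts $\btwo_{ijk}$ are pairwise \emph{disjoint-or-equal}, not all equal to a single $\bone$. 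Stacking $(\mu_{\Sigma})$ under these $(\vee)$-rules therefore does not produce a $\vee\mu$-pattern.

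What the paper actually does is one more layer of decomposition that you are missing. After splitting each branch formula into $\bfour_{ij}\land\bfive_{ij}$ and applying Lemma~\ref{lemma:cool1} to the $X$-parts, it groups the resulting $\bfour_{ijk}$ into equivalence classes (the sets $U^{i}_{jk}$), then decomposes the conclusion formula $\bone$ itself into pairwise disjoint conjunctions of literals $\bone_{u}$. For each $\bone_{u}$, using Lemma~\ref{lemma:disj} and the disjoint-or-equal structure, one identifies (via functions $J,K$) exactly which class of branches in each premise $i$ is implied by $\bone_{u}$; this yields, for each $u$, a genuine $\vee\mu$-pattern with common $X$-part $\bone_{u}$. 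These canonical derivations $D^{*}_{u}$ are then combined under a single bottom $(\vee)$-rule. So the output is not a single $\vee\mu$-pattern prefixed by a trivial $(\vee)$, but a $(\vee)$ over \emph{several} $\vee\mu$-patterns indexed by the pieces $\bone_{u}$ of $\bone$. Your remark that ``the delicate bookkeeping \dots\ is precisely what Lemma~\ref{lemma:cool1} is engineered to supply'' underestimates the work: Lemma~\ref{lemma:cool1} only prepares the ground, and the splitting of $\bone$ into the $\bone_{u}$ is an additional, essential step.
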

\begin{proof}
We consider all possible rules:
\begin{varitemize}

\item if $D$ is as follows:

\begingroup\makeatletter\def\f@size{10}\check@mathfonts
\begin{lrbox}{\mypti}
\begin{varwidth}{\linewidth}
$
\AXC{$D_{j}$}
\noLine
\UIC{$\Gamma \vdash^{X\cup\{a\}}t:\btwo_{j}\Pto \FF s$}
\DP
$
\end{varwidth}
\end{lrbox}

$$
\AXC{$\left\{\usebox{\mypti}\right \}_{j}$}
\AXC{$\bone\vDash \bigvee_{i}\btwo_{j}$}
\RL{$(\lor)$}
\BIC{$\Gamma\vdash^{X\cup\{a\}}t: \bone\Pto \FF s$}
\RL{$(\oplus l)$}
\UIC{$\Gamma \vdash^{X\cup\{a\}}t\oplus_{a}^{i}u: \bvar_{a}^{i}\land \bone \Pto \FF s$}
\DP
$$
\endgroup
then $D'$ is as follows:

\begingroup\makeatletter\def\f@size{10}\check@mathfonts
\begin{lrbox}{\mypti}
\begin{varwidth}{\linewidth}
$
\AXC{$D_{j}$}
\noLine
\UIC{$\Gamma \vdash^{X\cup\{a\}}t:\btwo_{j}\Pto \FF s$}
\RL{$(\oplus l)$}
\UIC{$\Gamma \vdash^{X\cup\{a\}}t\oplus_{a}^{i}u: \bvar_{a}^{i}\land \btwo_{j} \Pto \FF s$}
\DP
$
\end{varwidth}
\end{lrbox}

$$
\AXC{$\left\{ \usebox{\mypti}\right\}_{j}$}
\AXC{$\bvar_{a}^{i}\land \bone\vDash \bigvee_{j}\bvar_{a}^{i}\land \btwo_{j}$}
\RL{$(\lor)$}
\BIC{$\Gamma\vdash^{X\cup\{a\}}t\oplus_{a}^{i}u: \bvar_{a}^{i}\land\bone\Pto \FF s$}
\DP
$$
\endgroup

The symmetric rule for $\oplus_{a}^{i}$ is treated similarly.


\item If $D$ is as follows:

\begingroup\makeatletter\def\f@size{10}\check@mathfonts
\begin{lrbox}{\mypti}
\begin{varwidth}{\linewidth}
$
\AXC{$D_{j}$}
\noLine
\UIC{$\Gamma, x:\FF M \vdash^{X}t:\btwo_{j}\Pto \B C^{q}\sigma$}
\DP
$
\end{varwidth}
\end{lrbox}

$$
\AXC{$\left\{ \usebox{\mypti}\right\}_{j}$}
\AXC{$ \bone\vDash \bigvee_{j} \btwo_{j}$}
\RL{$(\lor)$}
\BIC{$\Gamma,x:\FF M\vdash^{X}t: \bone\Pto \B C^{q}\sigma$}
\RL{$(\lambda)$}
\UIC{$\Gamma \vdash^{X}\lambda x.t: \bone \Pto \B C^{q}(\FF M\To \sigma)$}
\DP
$$
\endgroup

then $D'$ is as follows:

\begingroup\makeatletter\def\f@size{10}\check@mathfonts
\begin{lrbox}{\mypti}
\begin{varwidth}{\linewidth}
$
\AXC{$D_{j}$}
\noLine
\UIC{$\Gamma, x:\FF M \vdash^{X}t:\btwo_{j}\Pto \B C^{q}\sigma$}
\RL{$(\lambda)$}
\UIC{$\Gamma \vdash^{X}\lambda x.t: \btwo_{j}\Pto\B C^{q} (\FF M\To \sigma)$}
\DP
$
\end{varwidth}
\end{lrbox}

$$
\AXC{$\left\{ \usebox{\mypti}\right\}_{j}$}
\AXC{$ \bone\vDash \bigvee_{j} \btwo_{j}$}
\RL{$(\lor)$}
\BIC{$\Gamma \vdash^{X}\lambda x.t: \bone \Pto\B C^{q} (\FF M\To \sigma)$}
\DP
$$
\endgroup

\item if $D$ is as follows:

\begin{center}
\begingroup\makeatletter\def\f@size{10}\check@mathfonts
\begin{lrbox}{\mypti}
\begin{varwidth}{\linewidth}
$
\AXC{$D_{j}$}
\noLine
\UIC{$\Gamma \vdash^{X}t:\btwo_{j}\Pto\B C^{q}(\FF M\To \sigma)$}
\DP
$
\end{varwidth}
\end{lrbox}

\begin{lrbox}{\myptu}
\begin{varwidth}{\linewidth}
$
\AXC{$E_{l}$}
\noLine
\UIC{$\Gamma \vdash^{X}u:\bthree_{l}\Pto \FF s_{l}$}
\DP
$
\end{varwidth}
\end{lrbox}

\resizebox{0.9\textwidth}{!}{
$
\AXC{$\left\{ \usebox{\mypti}\right\}_{j}$}
\AXC{$ \btwo\vDash \bigvee_{j} \btwo_{j}$}
\RL{$(\lor)$}
\BIC{$\Gamma\vdash^{X}t: \btwo\Pto\B C^{q}(\FF M\To \sigma)$}
\AXC{$\left\{\usebox{\myptu}\right\}_{l}$}
\AXC{$\bone \vDash \btwo \land \Big(\bigwedge_{l}d_{l}\Big)$}
\RL{$(@_{\cap})$}
\TIC{$\Gamma \vdash^{X}tu: \bone \Pto \B C^{q} \sigma$}
\DP
$}
\endgroup
\end{center}

\noindent where $\FF M=[\FF s_{1},\dots, \FF s_{n}]$ and each derivation $E_{l}$ is of the form

\begin{center}
\begingroup\makeatletter\def\f@size{10}\check@mathfonts
\begin{lrbox}{\mypti}
\begin{varwidth}{\linewidth}
$
\AXC{$E_{lk}$}
\noLine
\UIC{$\Gamma \vdash^{X}t:\bfour_{lk}\Pto\FF s_{l}$}
\DP
$
\end{varwidth}
\end{lrbox}

$
\AXC{$\left\{ \usebox{\mypti}\right\}_{k}$}
\AXC{$ \bthree_{l}\vDash \bigvee_{k} \bfour_{lk}$}
\RL{$(\lor)$}
\BIC{$\Gamma\vdash^{X}t: \bthree_{l}\Pto \FF s_{l}$}
\DP
$
\endgroup
\end{center}

Then $D'$ is as follows:

\begin{center}

\begingroup\makeatletter\def\f@size{10}\check@mathfonts

\begin{lrbox}{\myptu}
\begin{varwidth}{\linewidth}
$
\AXC{$E_{l g(l)}$}
\noLine
\UIC{$\Gamma \vdash^{X}u:\bfour_{l g(l)}\Pto \FF s_{l}$}
\DP
$
\end{varwidth}
\end{lrbox}

\begin{lrbox}{\mypti}
\begin{varwidth}{\linewidth}
$
\AXC{$D_{j}$}
\noLine
\UIC{$\Gamma \vdash^{X}t:\btwo_{j}\Pto\B C^{q}(\FF M\To \sigma)$}
\AXC{$\left\{\usebox{\myptu}\right\}_{l}$}
\AXC{$\bone_{jg} \vDash \btwo_{j} \land \Big(\bigwedge_{l}\bfour_{lg(l)}\Big)$}
\RL{$(@_{\cap})$}
\TIC{$\Gamma \vdash^{X}tu: \bone_{jg} \Pto  \B C^{q}\sigma$}
\DP
$
\end{varwidth}
\end{lrbox}

\resizebox{0.95\textwidth}{!}{
$
\AXC{$\left\{\usebox{\mypti}\right\}_{j, g}$}
\AXC{$\bone \vDash \bigvee_{j,g}\bone_{jg}$}
\RL{$(\lor)$}
\BIC{$\Gamma \vdash^{X}tu: \bone \Pto\B C^{q}  \sigma$}
\DP
$}
\endgroup
\end{center}

\noindent where $\bone_{jg}=\btwo_{j}\land \Big(\bigwedge_{l}\bfour_{lg(l)}\Big)$ and $g$ ranges over all possible choice functions associating each $l$ with a suitable value $k$ such that $\bfour_{lk}$ exists, using the fact that 

\begin{align*}
\bone & \vDash \btwo\land \Big (\bigwedge_{l}\bthree_{l}\Big) \equiv \Big(\bigvee_{j}\btwo_{j}\Big)\land \Big(\bigwedge_{l}\bigvee_{k}\bfour_{lk}\Big) \equiv \bigvee_{j}\Big( \btwo_{j}\land \big(\bigwedge_{l}\bigvee_{k}\bfour_{lk}\big)\Big) 
\equiv
\bigvee_{jg} \btwo_{j}\land \big( \bigwedge_{l}\bfour_{lg(l)}\big)
\end{align*}

\item if $D$ is as follows:

\begin{center}
%
%

\begingroup\makeatletter\def\f@size{10}\check@mathfonts
\begin{lrbox}{\mypti}
\begin{varwidth}{\linewidth}
$
\AXC{$D_{ij}$}
\noLine
\UIC{$\Gamma \vdash^{X\cup\{a\}}t:\bthree_{ij}\Pto\B C^{q_{i}}\sigma$}
\DP
$
\end{varwidth}
\end{lrbox}

\begin{lrbox}{\myptu}
\begin{varwidth}{\linewidth}
$
\AXC{$\left\{ \usebox{\mypti}\right\}_{j}$}
\AXC{$ \bone'\land \btwo_{i}\vDash \bigvee_{j} \bthree_{ij}$}
\RL{$(\lor)$}
\BIC{$\Gamma\vdash^{X\cup\{a\}}t: \bone'\land \btwo_{i}\Pto \B C^{q_{i}}\sigma$}
\DP
$
\end{varwidth}
\end{lrbox}

\resizebox{0.9\textwidth}{!}{
$
\AXC{$\left\{ \usebox{\myptu}\right\}_{i}$}
\AXC{$\mu(\btwo_{i})\geq s_{i}$}
\AXC{$\bone \vDash \bone'$}
\RL{$(\mu)$}
\TIC{$\Gamma \vdash^{X}\nu a.t: \bone \Pto \B C^{\sum_{i}q_{i}s_{i}} \sigma$}
\DP
$}
\endgroup

\end{center}

\noindent where 
$\mathrm{FN}(\bone,\bone')\subseteq X$ and $\mathrm{FN}(\btwo_{i})\subseteq \{a\}$, then, since the $D_{ij}$ are canonical, each $\bthree_{ij}$ is a conjunction of literals and can thus be written
as $\bfour_{ij}\land\bfive_{ij}$, where $\mathrm{FN}(\bfour_{ij})\subseteq X$ and $\mathrm{FN}(\bfive_{ij})\subseteq \{a\}$.

Observe that we can suppose w.l.o.g.~that none of either the $\bfour_{ij}$ or the $\bfive_{ij}$ is equivalent to $\BOT$.
Moreover, we can also suppose w.l.o.g.~that $\bone$ and $\bone'$ coincide and are satisfiable. 

Now, from $\bone\land \btwo_{i}\vDash \bigvee_{j}\bfour_{ij}\land \bfive_{ij}$ (using the fact that the $\btwo_{i}$ are satisfiable) we deduce by Lemma \ref{lemma:disj} that $\bone\vDash \bigvee_{j}\bfour_{ij}$ and $\btwo_{i}\vDash \bigvee_{j} \bfive_{ij}$.

For each $i$, by applying Lemma \ref{lemma:cool1} to the family $(\bfour_{ij})_{j}$,
we obtain a family $\bfour_{ijk}$ of conjunctions of literals, all satisfiable, with $\mathrm{FN}(\bfour_{ijk})\subseteq X$, and such that 
$\bfour_{ij}\equiv \bigvee_{k}\bfour_{ijk}$, and for $j\neq j'$, either $\bfour_{ijk}\equiv \bfour_{ij'k'}$ or $\bfour_{ijk}\land \bfour_{ij'k'}\vDash \BOT$.
Moreover, we also deduce the existence of a family $D_{ijk}^{\dag}$ of canonical derivations of conclusion
$\Gamma \vdash^{X\cup\{a\},q_{i}}t: \bfour_{ijk}\land \bfive_{ij}\Pto \FF s$.

%

%
%

For each $i,j,k$, let $U_{jk}^{i}$ be the set of pairs $(j',k')$ such that $\bfour_{ij'k'}\equiv \bfour_{ijk}$. Observe that for any pair $(j'',k'')\notin U_{jk}^{i}$, $\bfour_{ijk}\land \bfour_{ij''k''}\vDash \BOT$. 

%
%
%

Let $\approx^{i}$ be the equivalence relation on pairs $(j,k)$ defined by $(j,k)\approx^{i} (j',k')$ iff $\bfour_{ijk}\equiv \bfour_{ij'k'}$ (i.e.~iff $(j',k')\in U_{jk}^{i}$), and let $W^{i}$ contain a chosen representative of $[(j,k)]_{\approx^{i}}$ for each equivalence class. 
Then we have that 
\begin{equation}\label{eq:iuvk}
\bigvee_{j}\bfour_{ij}\land \bfive_{ij}\equiv
\bigvee_{jk}\bfour_{ijk}\land \bfive_{ij} \equiv \bigvee_{(j,k)\in W^{i}}\bfour_{ijk}\land \Big( \bigvee_{j'k'\in U^{i}_{jk}}\bfive_{ij'}\Big)
\end{equation}
Let $\widetilde{\bfive_{ijk}}=  \bigvee_{j'k'\in U^{i}_{jk}}\bfive_{ij'}$.
Let $\bone \equiv \bigvee_{u}\bone_{u}$, where the $\bone_{u}$ are pairwise disjoint conjunctions of literals, all being satisfiable.
From $\bone \land \btwo_{i}\vDash \bigvee_{j}\bthree_{ij}\equiv \bigvee_{(j,k)\in W^{i}}\bfour_{ijk}\land\widetilde{\bfive_{ijk}}$
we deduce
by Lemma \ref{lemma:disj} and the satisfiability of $\btwo_{i}$ that $\bone \vDash \bigvee_{(j,k)\in W^{i}}\bfour_{ijk}$; 
since the $\bfour_{ijk}$, with $(j,k)$ varying in $W^{i}$, are pairwise disjoint, 
 this implies the existence of functions $J,K$ such that 
 $\bone_{u}\vDash \bfour_{iJ(u)K(u)}$, and thus a fortiori such that  
 $\bone_{u}\land \btwo_{i}\vDash \bfour_{iJ(u)K(u)}\land \widetilde{\bfive_{iJ(u)K(u)}}$. Since the $\bone_{u}$ are satisfiable, by Lemma \ref{lemma:disj}, this implies 
$\btwo_{i}\vDash \widetilde{\bfive_{iJ(u)K(u)}}$. 

For each $u$, we can now construct a derivation $D_{u}^{*}$ of $\Gamma \vdash^{X}\nu a.t: \bfour_{iJ(u)K(u)}\Pto \B C^{\sum_{i}q_{i}s_{i}}\sigma$ as shown below:

\begin{center}
\begingroup\makeatletter\def\f@size{10}\check@mathfonts
\begin{lrbox}{\mypti}
\begin{varwidth}{\linewidth}
$
\AXC{$D^{\ddag}_{ij'k'}$}
\noLine
\UIC{$\Gamma \vdash^{X\cup\{a\}}t:\bone_{u}\land \bfive_{ij'}\Pto \B C^{q_{i}}\sigma$}
\DP
$
\end{varwidth}
\end{lrbox}

\begin{lrbox}{\myptu}
\begin{varwidth}{\linewidth}
$
\AXC{$\left\{ \usebox{\mypti}\right\}_{(j',k')\in U^{i}_{J(u)K(u)}}$}
\AXC{$\bone_{u}\land \btwo_{i}  \vDash \bigvee_{j'k'\in U_{J(u)K(u)}} \bone_{u}\land \bfive_{ij'}$}
\RL{$(\lor)$}
\BIC{$\Gamma\vdash^{X\cup\{a\}}t: \bone_{u}\land\btwo_{i}\Pto \B C^{q_{i}}\sigma$}
\DP
$
\end{varwidth}
\end{lrbox}

\resizebox{0.9\textwidth}{!}{
$\AXC{$\left\{ \usebox{\myptu} \right\}_{i}$}
\AXC{$\mu\left(\btwo_{i}\right)\geq s_{i}$}
\RL{$(\mu)$}
\BIC{$\Gamma \vdash^{X}\nu a.t: \bone_{u}\Pto\B C^{\sum_{i}q_{i}s_{i}}  \sigma$}
\DP
$}

\endgroup
\end{center}

\noindent where $D^{\ddag}_{ij'k'}$ is obtained from $D_{ij'k'}$ by replacing everywhere $\bfour_{ij'k'}$ by 
$\bone_{u}$, using the fact that $\bone_{u}\vDash \bfour_{iJ(u)K(u)}\equiv \bfour_{ij'k'}$.
Observe that the one above is a $\vee\mu$-pattern, and thus $D^{*}_{u}$ is canonical.

We can now conclude as follows:

\begingroup\makeatletter\def\f@size{10}\check@mathfonts
\begin{lrbox}{\mypti}
\begin{varwidth}{\linewidth}
$
\AXC{$D^{*}_{u}$}
\noLine
\UIC{$\Gamma \vdash^{X}\nu a.t:\bone_{u}\Pto \B C^{\sum_{i}q_{i}s_{i}}\sigma$}
\DP
$
\end{varwidth}
\end{lrbox}


$$
\AXC{$\left\{ \usebox{\mypti}\right\}_{u}$}
\AXC{$ \bone  \vDash \bigvee_{u}\bone_{u}$}
\RL{$(\lor)$}
\BIC{$\Gamma\vdash^{X}\nu a.t: \bone\Pto \B C^{\sum_{i}q_{i}s_{i}}\sigma$}
\DP
$$
\endgroup

\end{varitemize}
\end{proof}

We can now prove Theorem \ref{conje}.

\begin{proof}[Proof of Theorem \ref{conje}]
First, for any axiom 
$$
\AXC{$\FF s_{i}\preceq \FF t$}
\AXC{$\mathrm{FN}(\bone)\subseteq X$}
\RL{$(\mathrm{id}_{\preceq})$}
\BIC{$\Gamma, x:[\FF s_{1},\dots, \FF s_{n}] \vdash^{X}x: \bone\Pto \FF t$}
\DP$$
transform $\bone$ into an equivalent disjunctive normal form $\bigvee_{j}\bone_{j}$ (with the $\bone_{j}$ conjunctions of literals), and replace the axiom by the derivation below

\begingroup\makeatletter\def\f@size{10}\check@mathfonts
\begin{lrbox}{\mypti}
\begin{varwidth}{\linewidth}
$
\AXC{$\FF s_{i}\preceq \FF t$}
\AXC{$\mathrm{FN}(\bone)\subseteq X$}
\RL{$(\mathrm{id}_{\preceq})$}
\BIC{$\Gamma, x:[\FF s_{1},\dots, \FF s_{n}]  \vdash^{X}x: \bone_{j}\Pto \FF t$}
\DP
$
\end{varwidth}
\end{lrbox}

$$
\AXC{$\left\{\usebox{\mypti}\right\}_{j}$}
\AXC{$\bone\vDash^{X}\bigvee_{j}\bone_{j}$}
\RL{$(\lor)$}
\BIC{$\Gamma, x:[\FF s_{1},\dots, \FF s_{n}]  \vdash^{X}x: \bone\Pto \FF t$}
\DP
$$
\endgroup
 Now, starting from the axioms, for each rule occurring in the construction of the derivation, use Lemma \ref{lemma:cool2} to progressively permute $(\lor)$-rules downwards, hence turning the derivation into a pseudo-canonical one.
\end{proof}

Using Theorem \ref{conje} we can finally prove the subject $\beta$-expansion property.
We will make use of the following property (which is an easy consequence of Lemma \ref{lemma:nota} and Lemma \ref{lemma:weakb}):

\begin{lemma}\label{lemmatodo}
If $\Gamma ,\Delta\vdash^{X\cup Y}t: \bone\Pto \FF s$ is derivable, where $\mathrm{FV}(t)\subseteq \Gamma$ and $\mathrm{FN}(t)\subseteq X$, then there exists a Boolean formula $\bone'$ with $\mathrm{FN}(\bone')\subseteq X$ such that $\bone\vDash \bone'$ and 
$\Gamma \vdash^{X}t:\bone'\Pto \FF s$ is derivable.

\end{lemma}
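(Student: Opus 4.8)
The plan is to exploit the fact that the names of $Y$ enter the judgement only through the Boolean constraint $\bone$ (since $\FN(t)\subseteq X$), so that they can be eliminated by first separating $\bone$ into an $X$-part and a $Y$-part and then applying the two cited lemmas disjunct by disjunct. Concretely, using a weak $Y$-decomposition (as employed in the proof of Lemma~\ref{lemma:nota}), I would write $\bone \equiv \bigvee_{i=1}^{n}\bone_{i}\land \btwo_{i}$ with $\FN(\bone_{i})\subseteq X$ and $\FN(\btwo_{i})\subseteq Y$, discarding the indices for which $\btwo_{i}$ is unsatisfiable, as those disjuncts are equivalent to $\BOT$ and do not affect the equivalence. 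I then set
$$\bone' := \bigvee_{i}\bone_{i}.$$
Since $\FN(\bone_{i})\subseteq X$ we immediately get $\FN(\bone')\subseteq X$, and $\bone\vDash \bone'$: any model of $\bone$ satisfies some $\bone_{i}\land\btwo_{i}$, hence its $X$-restriction satisfies $\bone_{i}$ and therefore $\bone'$.

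Next I would produce, for each retained $i$, a derivation of $\bone_{i}$ over $X$. From the hypothesis $\Gamma,\Delta\vdash^{X\cup Y}t:\bone\Pto\FF s$ together with $\bone_{i}\land\btwo_{i}\vDash \bone$, Lemma~\ref{lemma:weakb} gives $\Gamma,\Delta\vdash^{X\cup Y}t:\bone_{i}\land\btwo_{i}\Pto\FF s$. At this point the hypotheses of Lemma~\ref{lemma:nota} are exactly met: $X\cap Y=\emptyset$, $\FN(t)\subseteq X$, $\FN(\bone_{i})\subseteq X$, $\FN(\btwo_{i})\subseteq Y$, and $\btwo_{i}$ is satisfiable; hence it yields $\Gamma,\Delta\vdash^{X}t:\bone_{i}\Pto\FF s$. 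Applying the $(\lor)$-rule to the family $\{\Gamma,\Delta\vdash^{X}t:\bone_{i}\Pto\FF s\}_{i}$ with the (trivially valid) constraint $\bone'\vDash^{X}\bigvee_{i}\bone_{i}$ then produces $\Gamma,\Delta\vdash^{X}t:\bone'\Pto\FF s$.

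It remains only to discard the declarations $\Delta$ of variables not occurring free in $t$. This is a routine strengthening property, provable by a straightforward induction on the typing derivation: every typing rule either preserves the context or extends it with freshly bound variables, so a declaration for a variable absent from $t$ is never consulted at a leaf and can be erased uniformly. This gives $\Gamma\vdash^{X}t:\bone'\Pto\FF s$, as required.

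The genuinely substantive ingredient is Lemma~\ref{lemma:nota}, which is already available, so the work here is purely organizational: the decomposition is essential precisely because $\bone$ may mix $X$- and $Y$-names inseparably (e.g.\ $\bvar_{a}^{0}\lor\bvar_{b}^{0}$ with $a\in X$, $b\in Y$), and only after rewriting it as a disjunction of products can Lemma~\ref{lemma:nota} be invoked and the pieces reassembled via $(\lor)$. The main point of care is to ensure each $\btwo_{i}$ is satisfiable so that Lemma~\ref{lemma:nota} applies; the one step not covered by the two lemmas named in the statement is the final context strengthening, but it is orthogonal to the handling of names and entirely standard.
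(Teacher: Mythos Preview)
Your proposal is correct and follows exactly the route the paper indicates: the paper's proof consists of the single parenthetical remark that the lemma is ``an easy consequence of Lemma~\ref{lemma:nota} and Lemma~\ref{lemma:weakb}'', and your argument is precisely the natural unpacking of that remark---decompose $\bone$ into $X$/$Y$-separated disjuncts, apply Lemma~\ref{lemma:weakb} then Lemma~\ref{lemma:nota} to each, and reassemble with $(\lor)$. Your observation that a routine context-strengthening step is also needed to drop $\Delta$ is accurate and not made explicit in the paper.
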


\begin{proposition}[subject $\beta$-expansion]\label{prop:betaexpa}
If $\Gamma\vdash^{X}t[u/x]: \bone\Pto \FF s$ is derivable then
$\Gamma \vdash^{X}(\lambda x.t)u:\bone\Pto \FF s$ is also derivable.

%
%
\end{proposition}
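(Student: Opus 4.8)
The plan is to reduce the whole statement to a single \emph{splitting lemma} proved by induction, after first putting the derivation into the canonical form supplied by Theorem~\ref{conje}. Concretely, I would start from the given derivation of $\Gamma\vdash^{X}t[u/x]:\bone\Pto\FF s$ and apply Theorem~\ref{conje} to make it pseudo-canonical; if it ends with the final $(\lor)$-rule, whose premisses are canonical and satisfy $\bone\vDash^{X}\bigvee_{j}\bone_{j}$, I would prove the conclusion for each premiss separately and reassemble it with one $(\lor)$-rule at the very end. This leaves exactly the case of a \emph{canonical} derivation to treat.

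The heart of the argument is the following splitting lemma, stated at an arbitrary level $Z\supseteq\FN(t)\cup\FN(u)$ so that the induction survives passing under binders: from a canonical derivation of $\Gamma\vdash^{Z}t[u/x]:\bone\Pto\FF s$ one can extract a multiset $\FF M=[\FF s_{1},\dots,\FF s_{n}]$ together with derivations of $\Gamma,x:\FF M\vdash^{Z}t:\bone\Pto\FF s$ and of $\Gamma\vdash^{Z}u:\bone\Pto\FF s_{k}$ for each $k$. Granting this for $Z=X$ the proposition is immediate: writing $\FF s=\B C^{q}\sigma$, rule $(\lambda)$ gives $\Gamma\vdash^{X}\lambda x.t:\bone\Pto\B C^{q}(\FF M\To\sigma)$, and a single application of $(@_{\cap})$ to this judgement and to the $n$ copies $\Gamma\vdash^{X}u:\bone\Pto\FF s_{k}$ (its side condition $\bone\vDash\bone\land\bigwedge_{k}\bone$ being trivial) yields $\Gamma\vdash^{X}(\lambda x.t)u:\bone\Pto\B C^{q}\sigma=\FF s$. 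Note this also covers the degenerate case $\FF M=[]$, where $x$ does not occur in $t$ and $u$ need not be typed at all.

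I would prove the splitting lemma by induction on the canonical derivation, following the head constructor of $t$. The base cases are direct: for $t=x$ take $\FF M=[\FF s]$ with the given derivation as the unique copy of $u$, typing $t$ by $(\mathrm{id}_{\preceq})$; for $t=y\neq x$ (and any $x$-free $t$) take $\FF M=[]$. For $t=\lambda y.t'$, $t=t_{1}t_{2}$ and $t=t_{1}\oplus^{i}_{a}t_{2}$ I would apply the induction hypothesis to the canonical immediate subderivations and recombine, taking the multiset union of the returned contexts for $x$, unifying them across premisses by the context-weakening rule $(\preceq^{*})$ of Lemma~\ref{lemma:order}, weakening all extracted copies of $u$ to the common constraint $\bone$ via Lemma~\ref{lemma:weakb} (legitimate because the rule side conditions, and more generally the control of constraints in canonical derivations given by Lemma~\ref{lemma:canonical}, guarantee $\bone$ entails each subconstraint), and finally reapplying the same typing rule to $t$.

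The genuinely delicate case, and the one I expect to be the main obstacle, is $t=\nu a.t'$, where a canonical derivation necessarily ends with a $\vee\mu$-pattern on $a$. Two difficulties combine here. First, the copies of $u$ returned by the induction hypothesis live at level $Z\cup\{a\}$ with constraints of the form $\btwo\land\bthree_{ij}$; since the derivation is canonical these are conjunctions of literals splitting cleanly into the $X$-part $\btwo$ and the $\{a\}$-part $\bthree_{ij}$, and as $a$ is bound in $t$ we have $a\notin\FN(u)$, so Lemma~\ref{lemma:nota} pushes every copy of $u$ back down to level $Z$ with the uniform constraint $\btwo=\bone$ (alternatively one may invoke the packaged Lemma~\ref{lemmatodo}). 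Second, distinct branches $(i,j)$ of the pattern consume different copies of $u$, hence return different intersection contexts $\FF M_{ij}$ for $x$; to rebuild one $\vee\mu$-pattern I would set $\FF M=\bigsqcup_{ij}\FF M_{ij}$ and weaken each branch from $\FF M_{ij}$ to $\FF M$ by $(\preceq^{*})$, valid since $\FF M\preceq^{*}\FF M_{ij}$ for the larger multiset. Reassembling the $(\lor)$- and $(\mu_{\Sigma})$-rules then yields $\Gamma,x:\FF M\vdash^{Z}\nu a.t':\bone\Pto\FF s$, while the collected, descended copies of $u$ supply $\Gamma\vdash^{Z}u:\bone\Pto\FF s_{k}$ for each $\FF s_{k}\in\FF M$. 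The bookkeeping that needs the most care is checking that these context manipulations preserve canonicity, so that the weakened branches still assemble into a legitimate $\vee\mu$-pattern.
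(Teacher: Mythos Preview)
Your approach is correct but follows a genuinely different route from the paper.

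You canonicalize the derivation of $t[u/x]$ \emph{first} and then run an inductive splitting lemma on the canonical derivation, extracting the copies of $u$ and the derivation for $t$ simultaneously, with the $\nu$-case handled by Lemma~\ref{lemma:nota} and context-weakening via $(\preceq^{*})$. The paper instead does the extraction \emph{before} canonicalizing: it locates all subderivations $D_{1},\dots,D_{N}$ of $u$ inside the given derivation, uses Lemma~\ref{lemmatodo} to obtain $\Gamma\vdash^{X}u:\bthree_{i}\Pto\FF s_{i}$ with $\FN(\bthree_{i})\subseteq X$, and replaces each $D_{i}$ by an axiom for $x$, yielding in one stroke a derivation $D^{*}$ of $\Gamma,x{:}[\FF s_{1},\dots,\FF s_{N}]\vdash^{X}t:\bone'\Pto\FF s$. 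Only \emph{then} is Theorem~\ref{conje} applied, to $D^{*}$; each canonical branch $E_{l}$ contains axioms for $x$ with constraints $\bthree_{ig_{l}(i)}$, and Lemma~\ref{lemma:canonical} gives $\btwo_{l}\vDash\bigwedge_{i}\bthree_{ig_{l}(i)}$ directly, so one can assemble $(\lambda)$, $(@_{\cap})$ per branch, and a final $(\lor)$.

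The trade-off: the paper avoids the case analysis entirely---in particular the delicate $\vee\mu$-pattern case that you correctly flag---because the extraction is global and non-inductive, and the single invocation of Lemma~\ref{lemma:canonical} replaces all your per-rule constraint reasoning. Your approach is more explicit and self-contained (each step is visibly a rule manipulation), but requires the bookkeeping you mention, notably merging the per-branch intersection contexts $\FF M_{ij}$ into one $\FF M$ via $(\preceq^{*})$ and descending the copies of $u$ through the binder. Both arguments ultimately rest on the same structural fact about canonical derivations; the paper just packages it once in Lemma~\ref{lemma:canonical} rather than threading it through an induction.
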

\begin{proof}
A typing derivation of $\Gamma\vdash^{X}t[u/x]: \bone\Pto \FF s$ can be depicted as follows:
$$
\AXC{$D_{1}$}
\noLine
\UIC{$\Gamma, \Delta_{1}\vdash^{X\cup Y_{1}}u:\bfive_{1}\Pto \FF s_{1}$}
\AXC{$\dots$}
\AXC{$D_{N}$}
\noLine
\UIC{$\Gamma, \Delta_{N}\vdash^{X\cup Y_{N}}u:\bfive_{N}\Pto \FF s_{N}$}
\noLine
\TIC{$D$}
\noLine
\UIC{$\Gamma \vdash^{X}t[u/x]: \bone\Pto \FF s$}
\DP
$$

Observe that neither the variables in $\Delta_{i}$ nor the names in $Y_{i}$ can occur in $u$. Hence using Lemma \ref{lemmatodo} we obtain derivations $D^{*}_{i}$ of $\Gamma \vdash^{X}u: \bthree_{i}\Pto \FF s_{i}$, with $\bfive_{i}\vDash \bthree_{i}$.

Moreover, from $D$ we can deduce a derivation $D^{*}$ of $\Gamma, x: [\FF s_{1},\dots, \FF s_{N}] \vdash^{X}t: \bone'\Pto \FF s$, where each of the $D_{i}$ is replaced by an axiom
$$
\AXC{$\FF s_{i}\preceq\FF s_{i}$}
\AXC{$\mathrm{FN}(\bthree_{i})\subseteq X$}
\RL{$(\mathrm{id}_{\preceq})$}
\BIC{$
\Gamma,\Delta_{i}, x: [\FF s_{1},\dots, \FF s_{N}]\vdash^{X}x: \bthree_{i}\Pto \FF s_{i}
$}
\DP
$$
and $\bone'$ is such that $\bone\vDash \bone'$.
 
Using Theorem \ref{conje} we can turn $D^{*}$ into a pseudo-canonical derivation $D^{*\dag}$, which is thus of the form

\begingroup\makeatletter\def\f@size{10}\check@mathfonts
\begin{lrbox}{\mypti}
\begin{varwidth}{\linewidth}
$
\AXC{$E_{l}$}
\noLine
\UIC{$\Gamma,x: [\FF s_{1},\dots, \FF s_{N}]\vdash^{X}t: \btwo_{l}\Pto \FF s$}
\DP
$
\end{varwidth}
\end{lrbox}


$$
D^{*\dag} \quad = \quad 
\AXC{$\left\{ \usebox{\mypti}\right\}_{l}$}
\AXC{$ \bone'  \vDash \bigvee_{l}\btwo_{l}$}
\RL{$(\lor)$}
\BIC{$\Gamma,x: [\FF s_{1},\dots, \FF s_{N}]\vdash^{X}t: \bone'\Pto \FF s$}
\DP
$$
\endgroup

\noindent where each $E_{l}$ is canonical and contains all axioms of the form
$$
\AXC{$\FF s_{i}\preceq\FF s_{i}$}
\AXC{$\mathrm{FN}(\bthree_{ig_{l}(i)})\subseteq X$}
\RL{$(\mathrm{id}_{\preceq})$}
\BIC{$
\Gamma,\Delta_{i}, x: [\FF s_{1},\dots, \FF s_{N}]\vdash^{X}x: \bthree_{ig_{l}(i)}\Pto \FF s_{i}$}
\DP
$$
where $\bigvee_{k}\bthree_{ik}\equiv \bthree_{i}$ is a disjunctive normal form and $g_{l}$ is some choice function
choosing one value $k$ for each $i$.

Since $\bthree_{ik}\vDash \bigvee_{k}\bthree_{ik}\equiv \bthree_{i}$ holds we deduce by Lemma \ref{lemma:weak} the existence of derivations $D_{ik}$ of $\Gamma\vdash^{X}u:\bthree_{ik}\Pto \FF s_{i}$.  

Moreover, since the $E_{l}$ are canonical, by Lemma \ref{lemma:canonical} we deduce that $\btwo_{l}\vDash \bigwedge_{i}\bthree_{ig_{l}(i)}$.

Using $\bone\vDash \bone'$ and $\bone'\vDash \bigvee_{l}\btwo_{l}$, we can thus obtain a derivation of $\Gamma \vdash^{X}(\lambda x.t)u: \bone\Pto \FF s$ as follows (where $\FF M=[\FF s_{1},\dots, \FF s_{N}]$ and $\FF s=\B C^{q}\sigma$):

\begingroup\makeatletter\def\f@size{10}\check@mathfonts

\begin{lrbox}{\myptu}
\begin{varwidth}{\linewidth}
$
\AXC{$D_{ig_{l}(i)}$}
\noLine
\UIC{$\Gamma \vdash^{X}u:\bthree_{ig_{l}(i)}\Pto \FF s_{i}$}
\DP
$
\end{varwidth}
\end{lrbox}

\begin{lrbox}{\mypti}
\begin{varwidth}{\linewidth}
$
\AXC{$E_{l}$}
\noLine
\UIC{$\Gamma,x: \FF M\vdash^{X}t: \btwo_{l}\Pto \FF s$}
\UIC{$\Gamma\vdash^{X}\lambda x.t: \btwo_{l}\Pto \B C^{q}(\FF M\To \sigma)$}
\AXC{$\left \{ \usebox{\myptu}\right\}_{i}$}
\AXC{$\btwo_{l}\vDash \bigwedge_{i}\bthree_{ig_{l}(i)}  $}
\TIC{$\Gamma \vdash^{X}(\lambda x.t)u: \btwo_{l}\Pto \FF s$}
\DP
$
\end{varwidth}
\end{lrbox}

$$
\AXC{$\left\{ \usebox{\mypti}\right\}_{l}$}
\AXC{$ \bone  \vDash \bigvee_{l}\btwo_{l}$}
\RL{$(\lor)$}
\BIC{$\Gamma\vdash^{X}(\lambda x.t)u: \bone\Pto \FF s$}
\DP
$$
\endgroup
%
%

\end{proof}

\subsubsection{Subject Permutative Expansion}

The subject expansion property with respect to permutative reduction can be established by a direct inspection of permutative rules.

\begin{proposition}
If $\Gamma \vdash^{X} t: \bone\Pto \FF s$ and $u\redperm t$, then $\Gamma \vdash^{X}u: \bone\Pto \FF s$.
\end{proposition}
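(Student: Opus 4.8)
The plan is to mirror the structure of the subject reduction argument (Proposition~\ref{prop:subjectb}), reading each case in the reverse direction. As in that proof, I would first dispose of the $(\lor)$ rule compositionally: if the given typing derivation of $t$ ends with $(\lor)$, it suffices to establish the expansion property for each of its immediate subderivations and then reapply $(\lor)$. Hence throughout I may assume that the derivation of $\Gamma \vdash^{X} t : \bone \Pto \FF s$ does not end with $(\lor)$, so that its final rule is fixed by the outermost constructor of $t$. As in the treatment of $\LCPL$, I would work with the two rules $(\oplus l)$ and $(\oplus r)$ in place of $(\oplus)$, and ignore $(\HNORM)$ and $(\NORM)$, since the result lifts to them immediately.

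Next I would reduce to the case in which the permutative step fires at the root. For a step occurring inside a proper subterm, the term $t$ and its redex-ancestor $u$ share the same outermost structure, so a typing derivation of $t$ decomposes, modulo the structural rules, into subderivations typing the immediate subterms; applying the head case to the affected subterm and reassembling the surrounding rules yields a typing of $u$. This contextual induction is routine once the head cases are settled. The core of the argument is then a case analysis over the rules of Figure~\ref{fig:permutations}. Crucially, unlike $\beta$-expansion (Proposition~\ref{prop:betaexpa}), none of the permutative cases involves a substitution, so the full canonicalization machinery of Theorem~\ref{conje} is not required; each case can be handled by direct surgery on the derivation.

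The straightforward cases come first. The idempotence and contraction rules $\mathsf i$, $\mathsf c_1$, $\mathsf c_2$ are immediate: given a typing of the contractum one reintroduces the collapsed choice using $(\oplus l)$/$(\oplus r)$ together with Lemma~\ref{lemma:weakb}. The commutation rules that shuffle $\oplus_a^i$ past $\lambda$, application, and another choice ($\oplus\lambda$, $\oplus\mathsf f$, $\oplus\mathsf a$, $\oplus\oplus_1$, $\oplus\oplus_2$) are handled by taking the typing of the right-hand side, whose final rule is an $(\oplus l)$/$(\oplus r)$, and permuting it \emph{inward} so that the choice is introduced at the subterm matching the left-hand side; here Lemma~\ref{lemma:weakb} and Lemma~\ref{lemma:order} absorb the residual discrepancies in the Boolean constraints and in the types.

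The hard part will be the rules that move a generator $\nu$ across another construct, namely $\oplus\nu$, $\nu\lambda$, $\nu\mathsf f$, and the erasure rule $\lnot\nu$, because on these the introduction of $\B C^{\bullet}$ is governed by the counting rule $(\mu_\Sigma)$, whose premises carry pairwise-disjoint Boolean events $\bthree_i$ and whose conclusion aggregates the probabilities $\sum_i q_i s_i$. For $\nu b.(t\oplus_a^i u) \redperm (\nu b.t)\oplus_a^i(\nu b.u)$, for instance, the contractum is typed with an $(\oplus l)$/$(\oplus r)$ outermost and an independent $(\mu_\Sigma)$ on each branch, whereas the redex must be typed by a single $(\mu_\Sigma)$ sitting over the choice; reconciling the two shapes requires the independence $a \neq b$ (so that $\mathrm{FN}(\bthree_i) \subseteq \{b\}$ is disjoint from the name $a$ of the choice) and the bookkeeping of Lemma~\ref{lemma:disj}, exactly as in the corresponding subject reduction case now read backwards. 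The erasure rule $\lnot\nu$ is dual: since $a \notin \mathrm{FN}(t)$, I would reintroduce $\nu a$ by an instance of $(\mu_\Sigma)$ with a measure-one event such as $\bvar_a^0 \lor \lnot\bvar_a^0$, using Lemma~\ref{lemma:order} to reconcile the resulting counting annotation $\B C^{1\cdot q}\sigma$ with $\B C^{q}\sigma$. In all four cases, Lemma~\ref{lemma:nota} and Lemma~\ref{lemma:weak2b} govern the removal or insertion of the name $b$ (resp.\ $a$) from the index set $X$ as the generator crosses the boundary. I would carry out these four cases in full, with the remaining cases dispatched as above.
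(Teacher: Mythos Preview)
Your plan is correct and closely mirrors the paper's own argument: dispose of $(\lor)$ compositionally, reduce to the head redex, and then run a case analysis over the permutative rules using Lemmas~\ref{lemma:weakb}, \ref{lemma:weak2b}, \ref{lemma:order}, and \ref{lemma:disj} to repair the Boolean and type-level bookkeeping. The paper proceeds in exactly this way.

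One point deserves explicit attention, since your framing of ``reading the subject reduction case backwards'' underplays an asymmetry. In the case $(\nu a.t)u \redperm \nu a.(tu)$, expansion is strictly harder than reduction: the typing of the contractum $\nu a.(tu)$ ends with $(\mu_\Sigma)$ whose premises are $(@_\cap)$-conclusions, and each of these may use a \emph{different} intersection $\FF M_i$ for the argument $u$. To type the redex $(\nu a.t)u$ one needs a \emph{single} $(@_\cap)$ at the root, hence a single $\FF M$; the paper takes $\FF M$ to be the concatenation of all the $\FF M_i$ and uses Lemma~\ref{lemma:order} (via $\FF M \preceq^* \FF M_i$, whence $\B C^{q_i}(\FF M_i \To \sigma) \preceq \B C^{q_i}(\FF M \To \sigma)$) to retype each premise before reassembling $(\mu_\Sigma)$ on $\nu a.t$. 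The paper flags this as ``the unique case in the proof in which we use intersection types in an essential way.'' Your sketch mentions Lemma~\ref{lemma:order} only for the $\oplus$-commutation cases; make sure you also invoke it here, since Lemma~\ref{lemma:disj} and the name-bookkeeping lemmas alone will not close this case.
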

\begin{proof}
As in the proof of Proposition \ref{prop:subject}, if the typing derivation $D$ of $t$ ends by a $(\lor)$-rule, it suffices to establish the property for the immediate sub-derivations of $D$ and then apply an instance of $(\lor)$-rule to the resulting derivations. So we will always suppose that the typing derivation of $t$ does not end by a $(\lor)$-rule.

\begin{description}

\item[($t\oplus_{a}^{i}t\redperm t$)]

From $\Gamma \vdash^{X\cup\{a\}}t: \bone\Pto \FF s$ we deduce
$$
\AXC{$\Gamma \vdash^{X\cup\{a\}}t: \bone\Pto \FF s$}
	\RL{($\oplus l$)}
\UIC{$\Gamma \vdash^{X\cup\{a\}}t\oplus_{a}^{i}t: \bvar_{a}^{i}\land\bone\Pto \FF s$}
\AXC{$\Gamma \vdash^{X\cup\{a\}}t: \bone\Pto \FF s$}
	\RL{($\oplus r$)}
\UIC{$\Gamma \vdash^{X\cup\{a\}}t\oplus_{a}^{i}t:\lnot \bvar_{a}^{i}\land\bone\Pto \FF s$}
\AXC{$\bone\vDash (\bvar_{a}^{i}\land\bone)\lor(\lnot\bvar_{a}^{i}\land\bone)$}
\RL{($\vee$)}
\TIC{$\Gamma \vdash^{X\cup\{a\}}t\oplus_{a}^{i}t:\bone\Pto \FF s$}
\DP
$$

\item[($ (t\oplus_{a}^{i}u)\oplus_{a}^{i}v \redperm t\oplus_{a}^{i}v$)]
There are two possible sub-cases:
	\begin{enumerate}
	\item the type derivation is as follows:
	$$
	\AXC{$\Gamma\vdash^{X\cup\{a\}} t: \bone'\Pto \FF s$}
	\AXC{$\bone'\vDash \bvar_{a}^{i}\land \bone'$}
		\RL{($\oplus l$)}
	\BIC{$\Gamma \vdash^{X\cup\{a\}} t\oplus_{a}^{i} v :\bone\Pto \FF s$}
		\DP
	$$
	Then we deduce
	$$
	\AXC{$\Gamma \vdash^{X\cup\{a\}}t: \bone'\Pto \FF s$}
\AXC{$\bone\vDash \bvar_{a}^{i}\land \bone'$}
	\RL{($\oplus l$)}
	\BIC{$\Gamma \vdash^{X\cup\{a\}} t\oplus_{a}^{i} u :\bone\Pto \FF s$}
\AXC{$\bone\vDash \bvar_{a}^{i}\land \bone$}
	\RL{($\oplus l$)}
	\BIC{$\Gamma \vdash^{X\cup\{a\}} (t\oplus_{a}^{i} u)\oplus_{a}^{i}v :\bone\Pto \FF s$}
\DP
	$$
	
	\item the type derivation is as follows:
	$$
	\AXC{$\Gamma\vdash^{X\cup\{a\}} v: \bone'\Pto \FF s$}
	\AXC{$\bone'\vDash\lnot \bvar_{a}^{i}\land \bone'$}
		\RL{($\oplus l$)}
	\BIC{$\Gamma \vdash^{X\cup\{a\}} t\oplus_{a}^{i} v :\bone\Pto \FF s$}
		\DP
	$$
	Then we deduce
	$$
	\AXC{$\Gamma \vdash^{X\cup\{a\}}v: \bone'\Pto \FF s$}
\AXC{$\bone\vDash\lnot \bvar_{a}^{i}\land \bone'$}
	\RL{($\oplus l$)}
	\BIC{$\Gamma \vdash^{X\cup\{a\}} (t\oplus_{a}^{i} u)\oplus_{a}^{i}v :\bone\Pto \FF s$}
\DP
	$$
	\end{enumerate}

\item[($t\oplus_{a}^{i}(u\oplus_{a}^{i}v)\redperm t\oplus_{a}^{i}v$)] Similar to the case above.

\item[($\lambda x.(t\oplus_{a}^{i}u)\redperm (\lambda x.t)\oplus_{a}^{i}(\lambda x.u)$)]
There are two possible sub-cases, treated similarly. We only consider the first one:
	\begin{enumerate}
	\item 
		$$
	\AXC{$\Gamma, x:\FF M \vdash^{X} t: \bone'\Pto \B C^{q}\sigma$}
	\RL{($\lambda$)}
	\UIC{$\Gamma \vdash^{X} \lambda x.t: \bone' \Pto \B C^{q}(\FF M\To \sigma)$}
	\AXC{$\bone\vDash \bvar_{a}^{i}\land \bone'$}
		\RL{($\oplus l$)}
	\BIC{$\Gamma \vdash^{X} (\lambda x.t)\oplus_{a}^{i}(\lambda x.u): \bone \Pto\B C^{q} (\FF M\To \sigma)$}
	\DP
	$$
		Then we deduce 
	$$
	\AXC{$\Gamma, x:\FF M \vdash^{X} t: \bone'\Pto \B C^{q}\sigma$}
	\AXC{$\bone\vDash \bvar_{a}^{i}\land \bone'$}
		\RL{($\oplus l$)}
	\BIC{$\Gamma, x:\FF M \vdash^{X} t\oplus_{a}^{i}u: \bone\Pto \B C^{q}\sigma$}
	\RL{($\lambda$)}
	\UIC{$\Gamma \vdash^{X} \lambda x.(t\oplus_{a}^{i}u): \bone \Pto\B C^{q} (\FF M\To \sigma)$}
	\DP
	$$

	\end{enumerate}

	\item[($(t\oplus_{a}^{i}u)v \redperm (tv)\oplus_{a}^{i}(uv)$)]
		There are two possible sub-cases, treated similarly.
	 We only consider the first one (with $\FF M=[\FF s_{1},\dots, \FF s_{n}]$:
	\begin{enumerate}
	\item 
	$$
	\AXC{$\Gamma \vdash^{X} t: \bone''\Pto \B C^{q}(\FF M\To \sigma)$}
	\AXC{$\Big\{\Gamma \vdash^{X}v: \btwo_{i} \Pto \FF s_{i}\Big\}_{i}$}
	\AXC{$\bone' \vDash \bone''\land\left(\bigwedge_{i} \btwo_{i}\right)$}
	\RL{($@_{\cap}$)}
	\TIC{$\Gamma \vdash^{X} tv : \bone' \Pto \B C^{q}\sigma$}
	\AXC{$\bone \vDash \bvar_{a}^{i}\land \bone' $}
	\RL{($\oplus l$)}
	\BIC{$\Gamma \vdash^{X} (tv)\oplus_{a}^{i} (uv): \bone \Pto \B C^{q}\sigma$}
	\DP
	$$
	Then we deduce 

	$$
	\AXC{$\Gamma \vdash^{X} t: \bone''\Pto \B C^{q}(\FF M\To \sigma)$}
	\AXC{$\bvar_{a}^{i}\land\bone''\vDash \bvar_{a}^{i}\land \bone''$}
		\RL{($\oplus l$)}
	\BIC{$\Gamma\vdash^{X} t\oplus_{a}^{i} u: \bvar_{a}^{i}\land\bone'' \Pto\B C^{q}(\FF M\To \sigma)$}
	\AXC{$\Big\{\Gamma \vdash^{X}v: \btwo_{i} \Pto \FF s_{i}\Big\}_{i}$}
	\AXC{$\bone \vDash(\bvar_{a}^{i}\land \bone'')\land\left(\bigwedge_{i} \btwo_{i}\right)$}
		\RL{($@_{\cap}$)}
	\TIC{$\Gamma\vdash^{X}(t\oplus_{a}^{i}u)v: \bone \Pto \B C^{q}\sigma$} 
	\DP
	$$
		\end{enumerate}

\item[{($t(u\oplus_{a}^{i}v)\redperm (tu)\oplus_{a}^{i}(tv)$)}]
There are two sub-cases, treated similarly.  We only consider the first one:
	\begin{enumerate}
	\item 
	$$
	\AXC{$\Gamma \vdash^{X} t: \bone''\Pto \B C^{q}(\FF M\To \sigma)$}
	\AXC{$\Big\{\Gamma \vdash^{X}u: \btwo_{i} \Pto \FF s_{i}\Big\}_{i}$}
	\AXC{$\bone' \vDash \bone''\land \left (\bigwedge_{i}\btwo_{i}\right)$}
	\RL{($@_{\cap}$)}
	\TIC{$\Gamma\vdash^{X}tu: \bone' \Pto \B C^{q}\sigma$} 
	\AXC{$\bone\vDash \bvar_{a}^{i}\land \bone'$}
		\RL{($\oplus l$)}
	\BIC{$\Gamma \vdash^{X} (tu)\oplus_{a}^{i}(tv): \bone \Pto\B C^{q}\sigma$}
	\DP
	$$
		Then we deduce that 
		
\begingroup\makeatletter\def\f@size{10}\check@mathfonts

\begin{lrbox}{\mypti}
\begin{varwidth}{\linewidth}
$
\AXC{$\Gamma \vdash^{X}u: \btwo_{i} \Pto \FF s_{i}$}
\AXC{$\bvar_{a}^{i}\land\btwo_{i}\vDash \bvar_{a}^{i}\land \btwo_{i}$}
\RL{($\oplus l$)}
\BIC{$\Gamma \vdash^{X} u\oplus_{a}^{i} v: \bvar_{a}^{i}\land\btwo_{i} \Pto \FF s_{i}$}
\DP
$
\end{varwidth}
\end{lrbox}
	$$
	\AXC{$\Gamma \vdash^{X} t: \bone''\Pto\B C^{q}(\FF M\To \sigma)$}
	\AXC{$\left\{ \usebox{\mypti}\right \}_{i}$}
	\AXC{$\bone \vDash \bone''\land\left(\bigwedge_{i}(\bvar_{a}^{i}\land \btwo_{i})\right)$}
	\RL{($@_{\cap}$)}
	\TIC{$\Gamma\vdash^{X}t(u\oplus_{a}^{i}v): \bone \Pto \B C^{q}\sigma$} 
	\DP
	$$
\endgroup
	\end{enumerate}

\item[($(t\oplus_{a}^{i}u)\oplus_{b}^{j}v \redperm (t\oplus_{b}^{j}v)\oplus_{a}^{i}(u\oplus_{b}^{j}v)$)]
We suppose here $a\neq b$ or $i< j$. There are four sub-cases, all treated similarly. We only consider the first one:
		\begin{enumerate}
		\item 
		$$
		\AXC{$\Gamma \vdash^{X} t: \bone''\Pto \FF s$}
		\AXC{$\bone'\vDash \bvar_{b}^{j} \land \bone''$}
		\RL{($\oplus l$)}
		\BIC{$\Gamma \vdash^{X}t\oplus_{b}^{j}v: \bone'\Pto \FF s$}
		\AXC{$\bone\vDash \bvar_{a}^{i} \land \bone'$}
				\RL{($\oplus l$)}
		\BIC{$\Gamma \vdash^{X}(t\oplus_{b}^{j}u)\oplus_{a}^{i}(u\oplus_{b}^{j}v): \bone\Pto \FF s$}
		\DP
		$$
		Then we deduce
		$$
		\AXC{$\Gamma \vdash^{X} t: \bone''\Pto \FF s$}
		\AXC{$\bone\vDash \bvar_{a}^{i} \land \bone''$}
				\RL{($\oplus l$)}
		\BIC{$\Gamma \vdash^{X}t\oplus_{a}^{i}u: \bone\Pto \FF s$}
		\AXC{$\bone\vDash \bvar_{b}^{j} \land \bone$}
				\RL{($\oplus l$)}
		\BIC{$\Gamma \vdash^{X}(t\oplus_{a}^{i}u)\oplus_{b}^{j}v: \bone\Pto \FF s$}
		\DP
		$$

		\end{enumerate}

\item[($t\oplus_{b}^{j}(u\oplus_{a}^{i}v) \redperm (t\oplus_{b}^{j}u)\oplus_{a}^{i}(t\oplus_{b}^{j}v)$)]
Similar to the case above.

\item[{($\nu b. (t\oplus_{a}^{i}u)\redperm (\nu b.t)\oplus_{a}^{i}(\nu b.u)$)}] We suppose $a\neq b$.
There are two sub-cases, treated similarly. We only consider the first one:
	\begin{enumerate}
	\item
		$$
	\AXC{$ \left\{\Gamma\vdash^{X\cup\{a,b\}} t: \bone''\land \btwo_{i}\Pto \B C^{q_{i}}\sigma\right\}_{i}$}
	\AXC{$ \mu(\btwo_{i})\geq s_{i}$}
	\AXC{$\bone' \vDash \bone''$}
	\RL{($\mu_{\Sigma}$)}
	\TIC{$\Gamma \vdash^{X\cup\{a\}} \nu b. t: \bone' \Pto \B C^{\sum_{i}q_{i}s_{i}}\sigma$}
	\AXC{$\bone\vDash \bvar_{a}^{i}\land \bone'$}
	\RL{($\oplus l$)}
	\BIC{$\Gamma \vdash^{X\cup\{a\}}( \nu b. t)\oplus_{a}^{i}(\nu b.u): \bone \Pto \B C^{\sum_{i}q_{i}s_{i}}\sigma$}
	\DP
	$$
	From $\bone \vDash \bone''$ we deduce by Lemma \ref{lemma:weak} the existence of derivations of 
	$\Gamma\vdash^{X\cup\{b\}} t: \bone\land \btwo_{i}\Pto \B C^{q_{i}}\sigma$.
	Then we can construct the following type derivation for $u$:

	\begingroup\makeatletter\def\f@size{10}\check@mathfonts
\begin{lrbox}{\mypti}
\begin{varwidth}{\linewidth}
$
\AXC{$\Gamma\vdash^{X\cup\{a,b\}} t: \bone\land \btwo_{i}\Pto \B C^{q_{i}}\sigma$}
\RL{($\oplus l$)}
\UIC{$\Gamma \vdash^{X\cup\{a,b\}}t\oplus_{a}^{i}u: \bone\land \btwo_{i}\Pto \B C^{q_{i}}\sigma$}
\DP
$
\end{varwidth}
\end{lrbox}

$$
\AXC{$\left \{ \usebox{\mypti}\right\}_{i}$}
	\AXC{$ \mu(\btwo_{i})\geq s_{i}$}
	\RL{($\mu_{\Sigma}$)}
\BIC{$\Gamma \vdash^{X\cup\{a\}} \nu b. t\oplus_{a}^{i}u: \bone \Pto \B C^{\sum_{i}q_{i}s_{i}}\sigma$}
\DP
$$
\endgroup
	
%
%
%
\end{enumerate}

\item[($\nu a.t\redperm t$)]
We can suppose that $a\notin \mathrm{FN}(\bone)$, so 
from $\Gamma \vdash^{X}t:\bone\Pto \FF s$ and $\bone\land \TOP\vDash \bone$ we deduce 
$\Gamma \vdash^{X\cup\{a\}}t: \bone\land \TOP\Pto \FF s$ and since $\mu(\TOP)\geq 1$, 
$\Gamma \vdash^{X}\nu a.t: \bone\Pto \FF s$, since $\FF s=\B C^{q}\sigma= \B C^{q\cdot 1}\sigma$.  

\item[($\lambda x.\nu a.t\redperm \nu a.\lambda x.t$ )]

\begingroup\makeatletter\def\f@size{10}\check@mathfonts
\begin{lrbox}{\mypti}
\begin{varwidth}{\linewidth}
$
\AXC{$\Gamma, x: \FF M \vdash^{X\cup\{a\}} t: \bone'\land \btwo_{i} \Pto \B C^{q_{i}}\sigma$}
\RL{($\lambda$)}
\UIC{$\Gamma \vdash^{X\cup\{a\}}\lambda x.t: \bone' \land \btwo_{i} \Pto\B C^{q_{i}} (\FF M\To \sigma)$}
\DP
$
\end{varwidth}
\end{lrbox}

$$
\AXC{$\left \{\usebox{\mypti}\right\}_{i}$}
\AXC{$\mu(\btwo_{i})\geq s_{i}$}
\AXC{$\bone \vDash \bone'$}
\RL{($\mu_{\Sigma}$)}
\TIC{$\Gamma\vdash^{X}\nu a.\lambda x.t : \bone \Pto \B C^{\sum_{i}q_{i}s_{i}}( \FF M\To\sigma)$}
\DP
$$
\endgroup
from which we deduce
$$
\AXC{$\{\Gamma, x: \FF M \vdash^{X\cup\{a\}} t: \bone'\land \btwo_{i} \Pto \B C^{q_{i}}\sigma\}_{i}$}
\AXC{$\mu(\btwo_{i})\geq s_{i}$}
\AXC{$\bone \vDash \bone'$}
\RL{($\mu_{\Sigma}$)}
\TIC{$\Gamma, x: \FF M\vdash^{X}\nu a.t : \bone \Pto\B C^{\sum_{i}q_{i}s_{i}}\sigma$}
\RL{($\lambda$)}
\UIC{$\Gamma \vdash^{X}\lambda x.\nu a.t: \bone \Pto\B C^{\sum_{i}q_{i}s_{i}} (\FF M\To \sigma)$}
\DP
$$

\item[($ (\nu a.t)u \redperm \nu a.(tu)$)]
We have
\begin{center}
\begingroup\makeatletter\def\f@size{10}\check@mathfonts
\begin{lrbox}{\mypti}
\begin{varwidth}{\linewidth}
$
\AXC{$\Gamma\vdash^{X\cup\{a\}} t: \btwo_{i}\Pto \B C^{q_{i}}(\FF M_{i}\To \sigma)$}
\AXC{$\left\{\Gamma \vdash^{X\cup\{a\}}u: \bthree_{ij}\Pto \FF t_{ij}\right\}_{j}$}
\AXC{$\bone'\land \bfour_{i}\vDash\btwo_{i}\land \bigwedge_{j} \bthree_{ij}$}
\RL{($@_{\cap}$)}
\TIC{$\Gamma \vdash^{X\cup\{a\}}tu: \bone'\land \bfour_{i}\Pto \B C^{q_{i}}\sigma$}
\DP
$
\end{varwidth}
\end{lrbox}

\resizebox{0.95\textwidth}{!}{
$
\AXC{$\left\{ \usebox{\mypti}\right\}_{i}$}
\AXC{$\mu(\bfour_{i})\geq s_{i}$}
\AXC{$\bone \vDash \bone'$}
\RL{($\mu_{\Sigma}$)}
\TIC{$\Gamma \vdash^{X}\nu a.tu: \bone \Pto\B C^{\sum_{i}q_{i}s_{i}}\sigma$}
\DP
$}
\endgroup
\end{center}
where $\FF M_{i}= [\FF t_{i1},\dots, \FF t_{ip_{i}}]$. 
Let $\FF M$ be the concatenation of $\FF M_{1},\dots, \FF M_{k+1}$. 
Since $ \FF M\preceq \FF M_{i}$, we have $\B C^{q}(\FF M_{i}\To \sigma)\preceq \B C^{q}(\FF M\To \sigma)$, so 
 from the derivations of $\Gamma\vdash^{X\cup\{a\}} t: \btwo_{i}\Pto \B C^{q_{i}} (\FF M_{i}\To \sigma)$, using Lemma \ref{lemma:order} we deduce derivations of $\Gamma\vdash^{X\cup\{a\}} t: \btwo_{i}\Pto \B C^{q}(\FF M\To \sigma)$.
Moreover, from $\bone\land \bfour_{i}\vDash \btwo_{i}$,
 using Lemma \ref{lemma:weakb}, we deduce the existence of derivations of 
$\Gamma\vdash^{X\cup\{a\}} t: \bone\land \bfour_{i}\Pto \B C^{q}(\FF M\To \sigma)$, and from $\bone\land \bfour_{i}\vDash  \bigwedge_{j}\bthree_{ij}$, since $\mathrm{FN}(\bfour_{i})\cap \mathrm{FN}(\bthree_{ij})=\emptyset$, we deduce $\bone\vDash \bigwedge_{j}\bthree_{ij}$. 
We can thus construct a type derivation for $(\nu a.t)u$ as follows:

$$
\AXC{$\left\{\Gamma\vdash^{X\cup\{a\}} t: \bone\land\bfour_{i}\Pto\B C^{q} (\FF M\To \sigma)\right\}_{i}$}
\AXC{$\mu(\bfour_{i})\geq s_{i}$}
\RL{($\mu_{\Sigma}$)}
\BIC{$\Gamma \vdash^{X}\nu a.t: \bone \Pto\B C^{\sum_{i}q_{i}s_{i}}(\FF M\To \sigma)$}
\AXC{$\left \{\Gamma \vdash^{X,r_{i}}u: \bthree_{ij}\Pto \FF t_{ij}\right\}_{i,j}$}
\AXC{$\bone \vDash \bigwedge_{ij}\bthree_{ij}$}
\RL{($@_{\cap}$)}
\TIC{$\Gamma \vdash^{X}(\nu a.t)u: \bone \Pto \B C^{\sum_{i}q_{i}s_{i}}\sigma$}
\DP
$$
Observe that this is the unique case in the proof in which we use intersection types in an essential way.

\end{description}
\end{proof}

\subsection{Completeness.}

\subsubsection{Deterministic Completeness}\label{subs:detcomple}

The goal of this subsection is to establish Proposition \ref{prop:detcomple}.

\detcomple*

%
%

%
%
%
%
%
%

We need a few preliminary lemmas:

 \begin{lemma}\label{lemma:hnorm}
 For any closed term $t$:
 \begin{varitemize}
 \item[(i.)] if $t$ is a HNV, then $\vdash_{\lnot \vee} t: \TOP \Pto \B C^{1}\sigma$ holds for some type;
  \item[(ii.)] if $t$ is a HNV and is $\redall$-normal, then $\vdash_{\lnot \vee} t: \TOP \Pto \B C^{1}\sigma$ holds for some safe type.
 \end{varitemize}
 \end{lemma}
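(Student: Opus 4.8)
The plan is to prove Lemma~\ref{lemma:hnorm} by induction on the structure of the head normal value $t$. A closed HNV has the shape $t = \lambda x_1.\dots.\lambda x_n. y\, u_1 \dots u_p$, but since $t$ is closed, the head variable $y$ must be among the $x_i$; say $y = x_k$. First I would handle part (i.). The idea is to assign to the bound variable $x_k$ an arrow type of the form $[\FF s_1,\dots,\FF s_p]\To o$ (or more precisely $\B C^1([\FF s_1,\dots,\FF s_p]\To o)$ in the $\TCINT$ syntax), where each $\FF s_j$ is a type assigned to $u_j$, and give the other bound variables $x_i$ ($i\neq k$) arbitrary (e.g.\ empty-intersection) types so that they can be discharged by the $(\lambda)$-rule. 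To type each argument $u_j$ I would need a type for it; here I would invoke the intersection-type identity rule $\ruleHVar$-style axiom $(\mathrm{id}_\preceq)$ together with an inductive hypothesis. The subtlety is that the arguments $u_j$ need not themselves be HNVs, so a naive induction on $t$ does not directly apply to them.

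To get around this I expect to need a slightly stronger statement or a companion lemma: namely that \emph{any} closed subterm occurring in head position can be assigned \emph{some} type with probability $1$ under $\TOP$, by assigning to its free variables suitable intersection types. The cleanest route, and the one I would follow, is to type HNVs using the ground type $\HNORM$ as a ``garbage'' type for arguments we do not wish to analyze: by rule $(\HNORM)$, once we have $\Gamma\vdash^{X} u_j : \TOP \Pto \B C^1\sigma_j$ we may coerce $\sigma_j$ to $\HNORM$, and conversely the $(\mathrm{id}_\preceq)$ axiom lets us feed an $\HNORM$-typed argument into a variable declared with $[\B C^1\HNORM]$. So I would set up the declaration $x_k : [\,[\B C^1\HNORM,\dots,\B C^1\HNORM]\To o\,]$, type each $u_j$ with $\B C^1\HNORM$ (recursively, using the fact that the $u_j$ are themselves either HNVs or can be given $\HNORM$ by the same construction once their own head structure is exposed), and then apply $(@_\cap)$ $p$ times followed by $(\lambda)$ $n$ times. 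All the Boolean side-conditions are discharged with $\bone = \TOP$, and no $(\vee)$-rule is used, which secures the $\vdash_{\lnot\vee}$ claim.

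For part (ii.), the additional hypothesis is that $t$ is $\redall$-normal, so each argument $u_j$ is itself a \emph{normal} HNV (every redex inside $t$ would otherwise contradict normality, and by Lemma~\ref{lemma:hnfecvl} the arguments of a $\redallh$-normal pseudo-value are $\redalll$-normal). Here I would strengthen the recursion to produce a \emph{safe} type: rather than collapsing arguments to $\HNORM$ (which has $\srank{\HNORM}=0$ and is excluded from safe types), I would recursively obtain for each $u_j$ a type $\B C^1\sigma_j$ with $\sigma_j$ safe, and build $\sigma = [\B C^1\sigma_1,\dots,\B C^1\sigma_p]\To o$. I must check that $\sigma$ is balanced in the sense of Definition~\ref{def:balanced}: since the counting superscript is $1$ and $\srank{\B C^1\sigma_j}=1$ for each $j$, the condition $q\le \prod_i \srank{\FF s_i}$ reads $1\le 1$, which holds, and the base type $o$ is trivially safe. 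Thus $\sigma$ is $\{[],\HNORM\}$-free and balanced, hence safe.

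The main obstacle I anticipate is making the induction go through on the arguments $u_j$ uniformly for both parts: the inductive measure cannot simply be ``$t$ is a HNV'' because the arguments are not in general HNVs in part (i.). I would resolve this by inducting on the \emph{size} of $t$ and observing that each $u_j$ is a strictly smaller closed term which, for the purposes of part (i.), can always be assigned $\B C^1\HNORM$ via the $(\HNORM)$ rule after exposing its own outermost $\lambda/\text{application}$ structure, and for part (ii.) is a smaller \emph{normal} HNV to which the inductive hypothesis for (ii.) applies directly. Keeping the two inductions cleanly separated—part (i.) using $\HNORM$ as a sink and part (ii.) propagating safety—should make both side-conditions (no $(\vee)$, and safety of all types in the derivation) routine to verify once the typing skeleton is in place.
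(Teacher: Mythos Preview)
Your plan for part~(ii.) is essentially the paper's: induction on normal forms, propagating safety through the recursive typings of the arguments. Be careful, however, that the non-head bound variables cannot be given the empty intersection $[\,]$ in this part, since $[\,]$ is not $\{[],\HNORM\}$-free and $\srank{[\,]}=0$ would break balancedness; you will need to assign them something like $[\B C^{1}\NORM]$ and make sure the recursive typings of the $u_j$ are carried out in a context that is consistent with \emph{all} uses of each bound variable.

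For part~(i.), however, there is a genuine gap. You correctly observe that the arguments $u_{j}$ of a HNV $\lambda x_{1}\dots x_{n}.x_{k}u_{1}\dots u_{p}$ need not themselves be HNVs, but your proposed fix---typing each $u_{j}$ with $\B C^{1}\HNORM$ via the $(\HNORM)$ rule---does not work. The $(\HNORM)$ rule requires as premise a derivation $\Gamma\vdash u_{j}:\TOP\Pto\B C^{1}\sigma_{j}$ for \emph{some} $\sigma_{j}$, and such a derivation simply need not exist without $(\vee)$: take $t=\lambda x.\,x\,\Omega$, which is a HNV whose argument $\Omega$ is not typable in $\TCINT$ without $(\vee)$ (indeed, by Corollary~\ref{cor:headnorm} any $(\vee)$-free typing would force $\Omega$ to be head-normalizing). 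Your suggestion that one can ``expose the outermost $\lambda$/application structure'' of $u_{j}$ and recurse does not help here, since $\Omega$ has no finite head structure to expose.

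The paper's solution is to avoid typing the $u_{j}$ altogether: declare the head variable $x_{k}$ with the singleton intersection $[\B C^{1}([\,]\To\dots\To[\,]\To o)]$ (with $p$ empty domains) and each other $x_{j}$ with $[\,]$. Since the $(@_{\cap})$ rule with an empty intersection $\FF M=[\,]$ has \emph{no} premises for the argument, each application $x_{k}u_{1}\dots u_{\ell}$ types immediately without ever inspecting the $u_{j}$, and no $(\vee)$ is used. This is the standard intersection-types trick for head normal forms, and it is exactly what part~(i.) needs.
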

 \begin{proof}
 If  $t=\lambda x_{1}\dots x_{n}.x_{i}u_{1}\dots u_{p}$, then 
we can let $\sigma= \FF s_{1}\To \dots \To \FF s_{n}\To o$, where 
for $j\neq i$, $\FF s_{j}=[\ ]$, and $\FF s_{i}=\B C^{1} [\B C^{1}[\ ]\To \dots \To \B C^{1}[\ ]\To  o] $. 
The second claim is proved by induction on normal forms, using a similar construction.
 \end{proof}

 \begin{lemma}\label{lemma:strongnorm}
 If $\Gamma \vdash_{\lnot \vee} t[u/x]: \bone \Pto \FF s$ is derivable without using $[]$, and $t$ contains at least one occurrence of $x$, then 
there exists $\FF M\neq []$ such that 
$\Gamma,x:\FF M \vdash_{\lnot \vee} t: \bone \Pto \FF s$ and 
$\Gamma  \vdash_{\lnot \vee} u: \bone \Pto \FF M$ are derivable without using $[]$. \end{lemma}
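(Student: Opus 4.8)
The plan is to prove the statement by induction on the structure of $t$ (equivalently, on the $\lnot\vee$-derivation $D$ of $\Gamma \vdash_{\lnot \vee} t[u/x] : \bone \Pto \FF s$, whose last rule, in the absence of $(\vee)$, is dictated by the outermost constructor of $t$). Along the way I will collect, for every occurrence of $x$ in $t$ that is typed in $D$, the type at which the corresponding copy of $u$ is used, and take $\FF M$ to be the multiset of all these types; the derivations of $u$ at each component of $\FF M$ will be exactly the sub-derivations of $D$ typing those copies. Two auxiliary facts are used freely: a context weakening lemma allowing one to enlarge the intersection assigned to a variable (admissible by a direct induction, since a variable's declared types are consulted only in $(\mathrm{id}_{\preceq})$, where more available types can only help), and the strengthening provided by Lemma~\ref{lemma:nota} and Lemma~\ref{lemma:weakb}, used to remove from a typing of $u$ any bound variable that does not occur free in $u$.

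In the base case $t = x$ we have $t[u/x] = u$, so $D$ is itself a $\lnot\vee$, $[]$-free derivation of $\Gamma \vdash u : \bone \Pto \FF s$; I set $\FF M = [\FF s] \neq []$ and obtain $\Gamma, x:[\FF s] \vdash_{\lnot \vee} x : \bone \Pto \FF s$ by $(\mathrm{id}_{\preceq})$ with $\FF s \preceq \FF s$. For $t = \lambda y.t'$ (with $y \neq x$ and $y \notin \mathrm{FV}(u)$ by the variable convention), $D$ ends with $(\lambda)$ typing $t'[u/x]$ in the context extended by $y$; since $t$ contains $x$ so does $t'$, and the induction hypothesis yields $\FF M \neq []$ together with the required typings, after which I re-apply $(\lambda)$ and use strengthening to drop the declaration of $y$ from the typings of $u$. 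The heart of the argument is the application case $t = t_1 t_2$, where $D$ ends with $(@_{\cap})$: here the crucial observation is that $[]$-freeness forces the argument intersection $[\FF s_1,\dots,\FF s_n]$ to have $n \geq 1$, so $t_2[u/x]$ is typed at least once. Applying the induction hypothesis to the left premise and to each of the $n$ right premises produces multisets $\FF N_0,\FF N_1,\dots,\FF N_n$, and I take $\FF M = \FF N_0 \uplus \FF N_1 \uplus \dots \uplus \FF N_n$; the weakening lemma lifts each sub-derivation to the common context $\Gamma, x:\FF M$, and one reassembles the instance of $(@_{\cap})$ with the Boolean side condition preserved verbatim. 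Since every component of $\FF M$ is drawn from a $[]$-free sub-derivation and no $(\vee)$ is introduced in the reassembly, the resulting derivations remain $\lnot\vee$ and $[]$-free. The remaining cases — the ground-type rules $(\HNORM),(\NORM)$, the choice rules $(\oplus l),(\oplus r)$, and the counting rule $(\mu_{\Sigma})$ — are handled in the same style, propagating $\FF M$ through the premises and re-applying the rule.

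To conclude $\FF M \neq []$ it suffices to show that at least one occurrence of $x$ in $t$ lies in a typed position of $D$, for then that occurrence contributes a component to $\FF M$. This is exactly where $[]$-freeness is indispensable: in a $[]$-free derivation the argument of every $(@_{\cap})$ is typed (nonempty intersection), and every premise of the logical and counting rules types its subject, so, tracing down from the root of $D$ along the syntactic path to an occurrence of $x$, that subterm — and hence the copy of $u$ sitting there — is assigned a type. The one point requiring genuine care is that a choice rule $(\oplus l)/(\oplus r)$ types only one of its two branches, so I must verify that the occurrence of $x$ witnessing the hypothesis can be taken in the branch actually typed by $D$; otherwise $\FF M$ could in principle collapse to $[]$. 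I expect this to be the main obstacle, and I plan to meet it by carrying the stronger inductive invariant that the number of components of $\FF M$ equals the number of occurrences of $x$ appearing in typed positions of $D$, together with the fact that, for the redexes to which the lemma is applied in the proof of Proposition~\ref{prop:detcomple}(iii), $[]$-freeness guarantees that every subterm — and thus every occurrence of $x$ — is typed, so that no occurrence is discarded.
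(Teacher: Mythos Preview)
Your direct induction on $t$ is essentially the same argument the paper relies on: the paper simply points back to the construction in Proposition~\ref{prop:betaexpa} (subject $\beta$-expansion), which already collects the types of the typed copies of $u$ into $\FF M$, and observes that a $(\vee)$-free input derivation is already canonical, so no new $(\vee)$ is introduced by that construction.

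Your one substantive worry, however, rests on a confusion about which rules belong to the $\vdash_{\lnot\vee}$ fragment. The one-sided rules $(\oplus l)$ and $(\oplus r)$ are introduced in the appendix only as a technical reformulation for the subject reduction and expansion proofs, and the paper explicitly notes that they are equivalent to the original two-sided $(\oplus)$ rule \emph{only in the presence of $(\vee)$}. The system $\TCINT$ as defined in Fig.~\ref{fig:typerulesint} (and hence its $(\vee)$-free fragment $\vdash_{\lnot\vee}$) uses the two-sided $(\oplus)$ from Fig.~\ref{fig:rulesLCPL}, which types \emph{both} branches of $t\oplus_{a}^{i}u$. Together with the $[]$-freeness hypothesis (forcing every $(@_{\cap})$ to have at least one argument premise) and the fact that $(\mu_{\Sigma})$ always has at least one major premise, a trivial induction shows that in a $\vdash_{\lnot\vee}$, $[]$-free derivation every subterm of the subject is typed. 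Consequently every occurrence of $x$ in $t$ corresponds to a typed copy of $u$, and $\FF M\neq[]$ follows immediately from the hypothesis that $x$ occurs. The stronger invariant and the appeal to the specific application context you propose in your last paragraph are therefore unnecessary; once you carry out your induction with the two-sided $(\oplus)$ rule, that case is no harder than application.
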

\begin{proof}
It suffices to look at the proof of Proposition \ref{prop:betaexpa}, and observe that (1) since $u$ occurs at least once in $t$, $\FF M$ cannot be $[]$, and (2) 
since the original derivation has no occurrence of $(\vee)$, it must be canonical, and thus to construct the derivation of $\Gamma \vdash (\lambda x.t)u:\bone \Pto \FF s$ we do not need to make use of any application of ($\vee$). 
\end{proof}

Using Lemma \ref{lemma:hnorm} together with the subject expansion property we can prove this first half of Proposition \ref{prop:detcomple}. The second half comes from Corollary \ref{cor:headnorm}.

\begin{proof}[Proof of Proposition \ref{prop:detcomple}]
For (i.), if $t$ is head normalizable, with head-normal form $u$, then by Lemma \ref{lemma:hformrbt} the tree $RBT(t)=RBT(u)$ is finite and its leaves are HNV $u_{1},\dots, u_{N}$. By Lemma \ref{lemma:hnorm} 
we can deduce $\vdash u_{i}:\TOP \Pto \B C^{1}\HNORM$; we can then climb up $RBT(t)$ using the rules
($\oplus l$), ($\oplus r$) and ($\mu_{\Sigma}$), hence proving  $\vdash u:\TOP \Pto \B C^{1}\HNORM$, and finally
$\vdash t: \TOP \Pto \B C^{1}\HNORM$ by subject expansion. 
The converse direction follows from Corollary \ref{cor:headnorm} (i.).

The argument for (ii.) is similar.

%
%

For (iii.), let $t$ be strongly normalizable, we argue by induction on the maximum length $N$ of a reduction of $t$.
If $N=0$, then $t$ is in normal form, so by an argument similar to (i.) we can deduce $\vdash_{\not \vee}t: \TOP \Pto \B C^{1}\NORM$.
If $N>0$ and $t\redall u$, then by IH $\vdash_{\not\vee}u:\TOP\Pto \B C^{1}\NORM$ holds with a $[]$-free derivation, and $t=\TT C[(\lambda x.t')t'']$, $u=\TT C[t'[t''/x]]$. If $x$ occurs in $t'$, then we deduce from Lemma \ref{lemma:strongnorm}
that $\vdash_{\not\vee}t:\TOP\Pto \B C^{1}\NORM$ holds with a $[]$-derivation.
Otherwise, since $t''$ must be strong normalizable, by IH  $\vdash_{\not\vee}t'':\TOP\Pto \B C^{1}\NORM$, and we deduce, again by IH, that $\lambda x.t'$ can be typed by declaring $x: \B C^{1}\NORM$. We can then conclude
$\vdash_{\not\vee}t:\TOP\Pto \B C^{1}\NORM$ also in this case.

The converse claim follows from Corollary \ref{cor:headnorm} (iii.)
\end{proof}

\subsubsection{Probabilistic Completeness}

The goal of this subsection is to establish the completeness part of  Theorem \ref{thm:completenessa}:

\completenessa*
%
%
%
%
%

More precisely, we will show the $\leq$-part of the equations above, while soundness (i.e.~the $\geq$-part of the equations above) will follow from Theorem \ref{thm:normalization} and Theorem \ref{thm:normalization2}, proved in the next section.

First, we need to study the structure of the trees $RBT(t)$ (see Subsection \ref{subs:evl}) further.

\begin{definition}
A \emph{randomized multi-context} $\TT C$ is a term constructed from the grammar below
$$
\TT C::= [\ ]_{i} \ (i\in \BB N) \mid \TT C \oplus_{a}^{i}\TT C\mid \nu a.\TT C
$$
where for each two occurrences $[\ ]_{i}$ and $[\ ]_{j}$, $i\neq j$.
The rank $\mathsf r(\TT C)$ is the maximum $k$ such that $[\ ]_{k}$ occurs in $\TT C$.

For any randomized multi-context $\TT C$, we let $\TT C^{\lnot\nu}$ be the randomized multi-context obtained by deleting all occurrences of $\nu$-binders. 
\end{definition}

If $\TT C$ is a randomized multi-context of rank $k$, for all terms $t_{1},\dots, t_{k}$, we let $\TT C[t_{1},\dots, t_{k}]$ indicate the term obtained by replacing each hole $[\ ]_{i}$ in $\TT C$ by $t_{i}$.

An \emph{initial segment} $T$ of $RBT(t)$, noted $T\sqsubset RBT(t)$ is any sub-tree of $RBT(t)$ which contains its root.

Observe that a finite initial segment $T\sqsubset RBT(t)$ is the same thing as a term of the form 
$\TT C_{T}[u_{1},\dots, u_{k}]$ for some randomized multi-context $\TT C_{T}$.

\begin{lemma}\label{lemma:mcox1}
For any term $t$ and HNF $u_{1},\dots, u_{k}$, if there exist randomized paths from $t$ to $u_{1},\dots, u_{k}$, then there exists $N\in \BB N$, a randomized multi-context $\TT C$ of rank $N+k$ and terms $t_{1},\dots, t_{N}$ such that  
$t \redperm^* \TT C[u_{1},\dots, u_{k}, t_{1},\dots, t_{N}]$
and $RBT(\TT C[u_{1},\dots, u_{k}, t_{1},\dots, t_{N}])\sqsubseteq RBT(t)$.
\end{lemma}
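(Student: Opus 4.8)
The plan is to argue by induction on $L=\max_{i}\mathrm{length}(\pi_{i})$, the maximal length of the given randomized paths, exploiting the observation recalled just before the statement that a finite initial segment of $RBT(t)$ is exactly a randomized multi-context with its frontier leaves filled in. The starting point is that the $k$ finite paths $\pi_{i}:t\mapsto u_{i}$ jointly span a finite initial segment $\widehat T\sqsubset RBT(t)$ whose frontier consists of the endpoints $u_{1},\dots,u_{k}$ together with those support terms hanging at branching nodes that no path descends into. Reading off the nested $\nu$-$\oplus$ skeleton of $\widehat T$ yields the candidate multi-context $\TT C$, with the $u_{i}$ filling $k$ holes and the unvisited support terms $t_{1},\dots,t_{N}$ filling the remaining $N$ holes, so that $\mathsf r(\TT C)=N+k$.

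For the base case $L=0$ every path is trivial, so $RBT(t)$ is a single node, $t$ head-reduces to a HNV $w$, and $k=1$ with $u_{1}=w$; taking $\TT C=[\,]_{1}$ and $N=0$ makes both conclusions immediate. For the inductive step $L>0$ the root of $RBT(t)$ can be neither a HNV leaf nor an $\Omega$-node (otherwise no path of positive length would exist), so $t\redall_{\mathsf h}^{*}\nu a.T$ with $T$ a $(\CC T,a)$-tree and $\mathrm{supp}(T)=\{v_{1},\dots,v_{n}\}$, the children of the root being $RBT(v_{1}),\dots,RBT(v_{n})$. Each $\pi_{i}$ enters a unique branch $v_{c(i)}$ and then continues as a path from $v_{c(i)}$ to $u_{i}$ of length $<L$. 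Grouping the paths by the branch they enter and applying the induction hypothesis to each visited $v_{j}$ together with the HNFs $\{u_{i}:c(i)=j\}$ produces multi-contexts $\TT C_{j}$ and fillings with $v_{j}\redperm^{*}\TT C_{j}[\dots]$ and $RBT(\TT C_{j}[\dots])\sqsubseteq RBT(v_{j})$; the unvisited $v_{j}$ are left as fresh holes. Re-plugging the $\TT C_{j}$ into the skeleton $\nu a.T$ and collecting all holes yields the desired $\TT C$.

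It then remains to check the two conclusions. For the reduction, the tree-assembling steps $v_{j}\redperm^{*}\TT C_{j}[\dots]$ are $\redperm$-reductions performed inside the PNF $\nu a.T$, hence lift to $\redperm$-reductions of the whole term by closure under context; the only non-permutative moves are the head reductions $t\redall_{\mathsf h}^{*}\nu a.T$ needed to expose each PNF, which by the very definition of $RBT$ (and the stability of $RBT$ under head reduction, $RBT(t)\sqsubseteq RBT(u)$) leave $RBT$ unchanged. For the inclusion, computing $RBT(\TT C[u_{1},\dots,u_{k},t_{1},\dots,t_{N}])$ unfolds the already-exposed $\nu$-$\oplus$ skeleton level by level: each $u_{i}$ is a HNV and so contributes a single leaf matching the corresponding endpoint of $\widehat T$, while each unvisited filler $t_{j}$ is literally the support term labelling that frontier node of $RBT(t)$, so $RBT(t_{j})$ reproduces exactly the subtree of $RBT(t)$ hanging there. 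Hence $RBT(\TT C[\dots])$ recovers an initial segment of $RBT(t)$, giving $RBT(\TT C[\dots])\sqsubseteq RBT(t)$, and by Lemma~\ref{lemma:hformrbt} the finiteness of each $\pi_{i}$ guarantees the construction terminates.

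The main obstacle I anticipate is the bookkeeping: flattening the nested sequence of $\nu$-binders and $\oplus$-trees met along the paths into a single well-formed randomized multi-context with pairwise distinct hole indices, and maintaining the precise correspondence between frontier nodes of $\widehat T$, holes of $\TT C$, and the fillers $u_{i}$ and $t_{j}$. A secondary delicate point is reconciling the statement's $\redperm^{*}$ with the head $\beta$-steps genuinely required to reach each PNF; the clean way to handle this is to record that $RBT$ is invariant under head reduction, so these steps may be inserted freely without disturbing either the multi-context skeleton or the target $RBT$, the purely structural reorganisation being carried entirely by $\redperm$.
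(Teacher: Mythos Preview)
Your approach is essentially the same as the paper's: both argue by induction on the maximum length of the given randomized paths, unfold one level of $RBT(t)$ to expose $\nu a.T$ with support $v_1,\dots,v_q$, group the paths by the branch they enter, apply the induction hypothesis to each visited $v_j$, and reassemble the resulting multi-contexts under the $(\CC T,a)$-tree skeleton. The paper's proof simply takes the trivial multi-context $[\,]$ for the unvisited $v_j$, which amounts to your ``leave them as fresh holes filled by $v_j$''.

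The delicate point you flag---that reaching the PNF $\nu a.T$ in general requires head $\beta$-steps and not just $\redperm$---is real, and the paper's proof is equally loose here: it writes ``$t\redperm^*\nu a.T$'' and, in the base case, ``$t=u$'', both of which presuppose that no $\beta$-reduction is needed. Your proposed fix (invoke the invariance of $RBT$ under head reduction so that these steps can be absorbed) is the right reflex; in the paper's actual use of the lemma the term has already been head-reduced, so the issue does not bite, but the statement as written with $\redperm^*$ is indeed slightly stronger than what either proof establishes.
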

\begin{proof}
We argue by induction on the maximum length $K$ of the randomized paths considered. 
If $K=0$, $k=1$ and $t=u$, so we just let $N=0$ and $\TT C [\ ]=[\ ]$.

If $K\geq 1$, then in a finite number of reduction steps we get $t\redperm^* \nu a.T$, where $T$ is a $(\CC T,a)$-tree such that for each $i=1,\dots,k$, there exists $v_{i}\in \mathsf{supp}(T)$ with a randomized path from $v_{i}$ to $u_{i}$ of length $\leq K-1$; observe that we can write $T$ as $\TT D[v_{1},\dots, v_{q}]$ for some randomized multi-context $\TT C$. 

For any $j=1,\dots,q$, let $I_{j}$ be the set of values $i\leq k$ such that there is a randomized path from $v_{j}$ to $u_{i}$. The $I_{j}$ are disjoint sets with $\bigcup_{j}I_{j}=\{1,\dots,k\}$.
By the I.H. we then deduce that there exist randomized multicontexts $\TT C_{j}$ or rank $N_{j}+| I_{j}|$ and terms $w_{j1},\dots, w_{jN_{j}}$ such that 
$v_{j} \redperm^* \TT C[u_{i_{1}},\dots, u_{i_{M_{j}}},w_{j1},\dots, w_{jN_{j}}]$, where $I_{j}=\{i_{1},\dots, i_{M_{j}}\}$. We now have that 
\begin{align*}
t \redperm^* \nu a.T= \nu a.\TT C[v_{1},\dots,v_{q}] \redperm^*
\nu a. \TT D[ \TT C_{1}[\vec u,\vec w],\dots, \TT C_{q}[\vec u, \vec w]]
\end{align*}
so we can let $N=\sum_{j}N_{j}$, 
$\TT C=\nu a.\TT D[\TT C_{1},\dots, \TT C_{q}]$ and $t_{1},\dots, t_{N}=w_{11},\dots, w_{qM_{q}}$.
\end{proof}

\begin{lemma}\label{lemma:mcox2}
For any randomized multi-context $\TT C$ with $n$ bound names $a_{1},\dots, a_{n}$, 
\begin{equation}\label{eq:nuzz0}
\nu a_{1}.\dots. \nu a_{n}. \TT C^{\lnot\nu} \ \redperm^{*} \ \TT C
\end{equation}
\end{lemma}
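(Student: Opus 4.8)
The plan is to prove \eqref{eq:nuzz0} by structural induction on the randomized multi-context $\TT C$, using as the only two ``engines'' the permutation $\oplus\nu$ of \eqref{eq:cnu}, which pushes a generator down through a choice node while duplicating it, and the permutation $\lnot\nu$ of \eqref{eq:notnu}, which erases a vacuous generator. Throughout I would work up to $\alpha$-renaming, so that I may assume the bound names $a_1,\dots,a_n$ of $\TT C$ are pairwise distinct and well-scoped (each occurrence of $\oplus_{a_i}^{k}$ lies below the unique binder $\nu a_i$), and I list the $a_i$ in an order compatible with their nesting in $\TT C$: if $\nu a_i$ occurs above $\nu a_j$ then $i<j$, while generators in incomparable branches may be ordered arbitrarily. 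Recall also that $\redperm$ is contextual, so it may be applied under any $\nu a.[\ ]$ or $[\ ]\oplus_b^j[\ ]$.

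For the base case $\TT C=[\ ]_i$ there are no bound names, $n=0$, and $\TT C^{\lnot\nu}=\TT C$, so \eqref{eq:nuzz0} holds trivially. If $\TT C=\nu a_1.\TT C'$, then $\TT C^{\lnot\nu}=\TT C'^{\lnot\nu}$ and the bound names of $\TT C'$ are $a_2,\dots,a_n$; the induction hypothesis gives $\nu a_2.\cdots.\nu a_n.\TT C'^{\lnot\nu}\redperm^*\TT C'$, and applying this reduction under the context $\nu a_1.[\ ]$ yields $\nu a_1.\cdots.\nu a_n.\TT C'^{\lnot\nu}\redperm^*\nu a_1.\TT C'=\TT C$. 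The remaining case $\TT C=\TT C_1\oplus_b^j\TT C_2$ is the heart of the argument: here $b$ is distinct from every $a_i$ (otherwise the top-level $\oplus_b^j$ would lie outside the scope of $\nu b$), the bound names split into those of $\TT C_1$, say $x_1,\dots,x_p$, and those of $\TT C_2$, say $y_1,\dots,y_q$, which I list in this order, and $\TT C^{\lnot\nu}=\TT C_1^{\lnot\nu}\oplus_b^j\TT C_2^{\lnot\nu}$. By the induction hypothesis on the two branches, $\nu x_1.\cdots.\nu x_p.\TT C_1^{\lnot\nu}\redperm^*\TT C_1$ and $\nu y_1.\cdots.\nu y_q.\TT C_2^{\lnot\nu}\redperm^*\TT C_2$, so by contextuality it suffices to establish the following splitting claim.

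Splitting claim: whenever $L,R$ are terms, the names $x_1,\dots,x_p$ do not occur free in $R$, the names $y_1,\dots,y_q$ do not occur free in $L$, and $b\notin\{x_1,\dots,x_p,y_1,\dots,y_q\}$, then
$$
\nu x_1.\cdots.\nu x_p.\nu y_1.\cdots.\nu y_q.(L\oplus_b^j R)\ \redperm^*\ (\nu x_1.\cdots.\nu x_p.L)\oplus_b^j(\nu y_1.\cdots.\nu y_q.R).
$$
I would prove this by induction on $p+q$ (the base $p+q=0$ being reflexivity), peeling off the outermost generator. If $p\ge 1$, the inner prefix $\nu x_2.\cdots.\nu y_q.(L\oplus_b^j R)$ reduces by the inductive claim under the context $\nu x_1.[\ ]$; then, since $x_1\ne b$, one application of \eqref{eq:cnu} gives $(\nu x_1.\cdots.\nu x_p.L)\oplus_b^j(\nu x_1.\nu y_1.\cdots.\nu y_q.R)$, and since $x_1\notin\FN(R)$ and $x_1\ne y_k$ for all $k$, the copy on the right branch is vacuous, so \eqref{eq:notnu} erases it, producing exactly the right-hand side. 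If $p=0$ the same step is applied to $y_1$, with the vacuous copy erased on the left branch instead. Feeding the claim (with $L=\TT C_1^{\lnot\nu}$ and $R=\TT C_2^{\lnot\nu}$) together with the two branch reductions completes the choice case, and hence the induction.

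The main obstacle I anticipate is not the reduction strategy, which is essentially forced, but the free-name and scoping bookkeeping: I must guarantee at each use of \eqref{eq:cnu} that the generator's name differs from the choice label, and at each use of \eqref{eq:notnu} that the erased generator is genuinely vacuous. Both hinge on the standing well-scoping and distinctness assumptions on the bound names of $\TT C$, together with the observation that a name bound in one branch of $\oplus_b^j$ cannot occur free in the other. I would therefore make these invariants explicit at the outset and verify that they are preserved by the peeling-off step, so that the side conditions of the two rules are always satisfied along the reduction.
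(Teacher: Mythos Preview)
Your argument is correct and relies on exactly the same two permutative rules as the paper's proof, namely \eqref{eq:cnu} and \eqref{eq:notnu}, but the decomposition is organised differently. The paper first isolates a single-binder auxiliary fact: for any $\nu$-free multi-context $\TT D$ with a distinguished hole, $\nu b.\TT D[\vec t, t] \redperm^{*} \TT D[\vec t, \nu b.t]$, proved by induction on $\TT D$ (at each $\oplus$ node one pushes $\nu b$ through via \eqref{eq:cnu} and erases the copy in the branch not containing the distinguished hole via \eqref{eq:notnu}); the main claim then follows by applying this fact once per binder of $\TT C$, starting from an innermost one and working outward. Your proof instead runs a direct structural induction on $\TT C$ and, at each $\oplus$ node, invokes your splitting claim (an inner induction on the length of the $\nu$-prefix) to distribute the whole prefix between the two branches in one shot. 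The two auxiliary facts are in a sense dual---the paper fixes one $\nu$ and a general $\nu$-free tree, you fix one $\oplus$ and a general prefix---and the resulting arguments are of comparable length; yours is perhaps a touch more direct for this specific lemma, while the paper's auxiliary fact is stated in a slightly more reusable form. One small bookkeeping point: your use of \eqref{eq:notnu} in the splitting claim needs $x_1\notin\FN(\TT C_2^{\lnot\nu})$, which requires not only $x_1\neq y_k$ but also that $x_1$ is distinct from any free names already present in $\TT C_2$; this is covered by the usual Barendregt convention, but you should say so explicitly alongside your distinctness assumption.
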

\begin{proof}
We will prove the following fact: for all terms $\nu b.t, t_{1},\dots, t_{k}$, it $\TT C$ has no $\nu$-binders, then
\begin{equation}\label{eq:nuzz}
\nu b. \TT C[t_{1},t_{2},\dots, t_{k},t] \redperm^{*} \TT C[  t_{1},\dots, t_{k}, \nu b.t]
\end{equation}
From this fact we can deduce the main claim as follows: let $b$ be a bound variable in $\TT C$ whose binder $\nu$ has no other binder in its scope; then
$\TT C$ splits as $\TT C[t_{0},t_{1},\dots, t_{k}]= 
 \TT  E[t_{i_{1}},\dots, t_{i_{p}},\nu b. \TT D[t_{j_{1}},\dots, t_{j_{q}}]]$, where 
 $I=\{i_{1},\dots, i_{p}\},J=\{j_{1},\dots, j_{q}\}\subseteq\{1,\dots,k+1\}$ satisfy $I\cap J=\emptyset$ and $p+q=\mathsf r(\TT C)$ (we are here supposing that $r(\TT C)\geq 1$, because otherwise the claim is trivial).
 
 Then, by \eqref{eq:nuzz} we deduce that $\nu b.\TT  E^{\lnot\nu}[t_{i_{1}},\dots, t_{i_{p}}, \TT D[t_{j_{1}},\dots, t_{j_{q}}]]\redperm \TT  E^{\lnot\nu}[t_{i_{1}},\dots, t_{i_{p}},\nu b. \TT D[t_{j_{1}},\dots, t_{j_{q}}]]$; by arguing in this way on all binders $\nu$ we finally obtain \eqref{eq:nuzz0}.

We prove \eqref{eq:nuzz} by induction on the construction of $\TT C$:
\begin{varitemize}
\item if $\TT C=[\ ]_{i}$, then the claim trivially holds;

\item if $\TT C= \TT C_{1}\oplus_{a}^{j}\TT C_{2}$, then $[\ ]_{k+1}$ occurs in either $\TT C_{1}$ or $\TT C_{2}$, and not both; let us say $[\ ]_{k+1}$ occurs in $\TT C_{1}$; we then have that 
\begin{align*}
\nu b.\TT C[t_{1},\dots, t_{k},t] &= \nu b. \TT C_{1}[t_{i_{1}},\dots, t_{i_{p}},t]\oplus_{a}^{j}\TT C_{2}[t_{j_{1}},\dots, t_{j_{q}}] \\
& \redperm
 \big(\nu b. \TT C_{1}[t_{i_{1}},\dots, t_{i_{p}},t]\big) \oplus_{a}^{i} \big ( \nu b.\TT C_{2}[t_{j_{1}},\dots, t_{j_{q}}]\big) \\
 & \redperm
\big(\nu b. \TT C_{1}[t_{i_{1}},\dots, t_{i_{p}},t]\big) \oplus_{a}^{i}\TT C_{2}[t_{j_{1}},\dots, t_{j_{q}}] \\
& \stackrel{\tiny\text{I.H.}}{\redperm^{*}}
\TT C_{1}[t_{i_{1}},\dots, t_{i_{p}}, \nu b.t] \oplus_{a}^{i}
\TT C_{2}[t_{j_{1}},\dots, t_{j_{q}}] 
=
 \TT C[  t_{1},\dots, t_{k}, \nu b.t]
\end{align*}
\end{varitemize}

\end{proof}

Still one technical lemmas before proceeding to the proof of Theorem \ref{thm:completenessa}.

\begin{lemma}\label{lemma:delirio}

Let $N,K\geq 1$, $X=\{a_{1},\dots, a_{K}\}$ be distinct names, and for all $i\in N$ and $j\in K$, $\bone_{ij}$ be a conjunction of literals of name $a_{j}$ such that for all $i\neq i'$:
\begin{varitemize}
 \item[a.] for all $k\leq K$, if $\bigwedge_{j=1}^{k}\bone_{ij}\neq \bigwedge_{j=1}^{k}\bone_{i'j}$, then 
 $\left(\bigwedge_{j=1}^{k}\bone_{ij}\right)\land \left(\bigwedge_{j=1}^{k}\bone_{ij}\right)\vDash \BOT$;
 \item[b.] $\left(\bigwedge_{j=1}^{K}\bone_{ij}\right)\land\left( \bigwedge_{j=1}^{K}\bone_{i'j}\right)\vDash \BOT$.
 \end{varitemize}
 
Suppose $\Gamma\vdash^{X,q_{i}}t: \btwo\land\left( \bigwedge_{j=1}^{K}\bone_{ij}\right)\Pto \sigma$ is derivable for all $i\in \{1,\dots,N\}$. Then there exists a derivation of $\Gamma \vdash^{\emptyset}\nu a_{1}.\dots.\nu a_{K}.t:\btwo \Pto \B C^{r}\sigma$, where $r= \sum_{i=1}^{N}\left(q_{i}\cdot\prod_{j=1}^{K} \mu(\bone_{ij})\right)$.

\end{lemma}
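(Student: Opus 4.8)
<br>

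The plan is to prove this by induction on $K$, the number of names being bound, peeling off one generator $\nu a_K$ at a time and collecting the probability contributions along the way. The statement is essentially a structural refinement of the admissible rule $(\mu^*)$ mentioned earlier in Section~\ref{section6}: it allows us to assemble a single $(\mu_\Sigma)$-style counting judgment out of $N$ separate typings whose Boolean constraints decompose name-by-name into the conjunctions $\bigwedge_{j=1}^K \bone_{ij}$. The hypotheses (a) and (b) are precisely the disjointness conditions needed to legitimately apply $(\mu_\Sigma)$ at each stage without double-counting probability mass.

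First I would set up the induction on $K$. For the base case $K=1$, the hypothesis provides derivations of $\Gamma \vdash^{\{a_1\}} t : \btwo \land \bone_{i1} \Pto \B C^{q_i}\sigma$ for each $i$, where by condition (b) (which coincides with (a) here) the formulas $\bone_{i1}$ are pairwise mutually unsatisfiable in conjunction. This is exactly the shape required by the rule $(\mu_\Sigma)$ (see Fig.~\ref{fig:typerulesint}), whose side condition demands that $a_1$ be the only name in each $\bone_{i1}$ and that $\bone_{i1}\land\bone_{i'1}\vDash\BOT$ for $i\neq i'$. Applying $(\mu_\Sigma)$ with $\mu(\bone_{i1})\geq \mu(\bone_{i1})$ directly yields $\Gamma \vdash^{\emptyset}\nu a_1.t : \btwo \Pto \B C^{\sum_i q_i\mu(\bone_{i1})}\sigma$, which is the claim for $K=1$.

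For the inductive step, I would first group the indices $i\in\{1,\dots,N\}$ according to the value of the ``tail'' constraint $\bigwedge_{j=1}^{K-1}\bone_{ij}$ (i.e.~the part not involving the last name $a_K$). Write the resulting equivalence classes as $C_1,\dots,C_M$, so that within each class $C_m$ all indices share a common tail formula $\btwo_m := \bigwedge_{j=1}^{K-1}\bone_{ij}$, while condition (a) (applied with $k=K-1$) guarantees that tails from \emph{different} classes are mutually unsatisfiable. Within a fixed class $C_m$, I would first bind $a_K$ alone: the derivations $\Gamma \vdash^{\{a_1,\dots,a_K\}} t : (\btwo\land\btwo_m)\land \bone_{iK}\Pto \B C^{q_i}\sigma$ for $i\in C_m$ have last-name constraints $\bone_{iK}$ that are pairwise unsatisfiable in conjunction (this follows from condition (b) together with the fact that they share the same tail), so a single application of $(\mu_\Sigma)$ produces $\Gamma \vdash^{\{a_1,\dots,a_{K-1}\}}\nu a_K.t : \btwo\land\btwo_m \Pto \B C^{s_m}\sigma$ with $s_m = \sum_{i\in C_m} q_i\,\mu(\bone_{iK})$. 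Now the $M$ resulting judgments, indexed by the classes $m$, have tail constraints $\btwo_m = \bigwedge_{j=1}^{K-1}\bone_{mj}$ satisfying conditions (a) and (b) at level $K-1$, so the induction hypothesis applies to the term $\nu a_K.t$ with the remaining names $a_1,\dots,a_{K-1}$, yielding $\Gamma \vdash^{\emptyset}\nu a_1.\dots.\nu a_K.t : \btwo \Pto \B C^{r}\sigma$ with $r = \sum_{m} s_m \prod_{j=1}^{K-1}\mu(\btwo_m \text{ at }j) = \sum_i q_i \prod_{j=1}^K \mu(\bone_{ij})$, where the product of measures factorizes because the $\bone_{ij}$ for distinct $j$ involve distinct names $a_j$ and are hence independent under $\mu$.

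The main obstacle I anticipate is bookkeeping the disjointness conditions correctly through the grouping step, and in particular verifying that hypotheses (a) and (b) are exactly preserved for the tail formulas $\btwo_m$ so that the induction hypothesis genuinely applies. One must check that condition (a) at level $K-1$ transfers to the classes (which is direct, since the classes were defined by equality of tails and (a) separates unequal tails), and that (b) for the tails follows from (b) for the full formulas together with the independence of the last-name block; this independence is also what licenses the factorization $\mu\bigl(\bigwedge_j \bone_{ij}\bigr) = \prod_j \mu(\bone_{ij})$, relying on the product structure of $\mu$ over distinct names recalled in Section~\ref{sub:Cantor}. A secondary subtlety is the implicit weakening of the index set $X$ when binding only $a_K$ inside a class while the other names $a_1,\dots,a_{K-1}$ remain free, which is handled by Lemma~\ref{lemma:weak2b}.
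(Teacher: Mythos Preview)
Your proof is correct and follows essentially the same route as the paper: induction on $K$, with the inductive step grouping indices by equality of their first $K{-}1$ components, applying $(\mu_\Sigma)$ to bind $a_K$ within each group, and then invoking the induction hypothesis on the groups (the paper phrases the grouping via ``cliques'' of a relation $\sim$, but these cliques coincide with your equivalence classes). One small correction: condition (b) for the tail formulas $\btwo_m$ actually follows from the original condition (a) at level $k=K{-}1$ (distinct tails are disjoint by (a)), not from the original (b) as you suggest.
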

\begin{proof}
We argue by induction on $K$. If $K=1$ then $i\neq i'$ implies $\bone_{i}\land \bone_{i'}\vDash \BOT$, so we can conclude as follows:

\begingroup\makeatletter\def\f@size{10}\check@mathfonts
\begin{lrbox}{\mypti}
\begin{varwidth}{\linewidth}
$
\AXC{$D_{i}$}
\noLine
\UIC{$\Gamma\vdash^{\{a\}}t: \btwo\land \bone_{i}\Pto\B C^{q_{i}} \sigma$}
\DP
$
\end{varwidth}
\end{lrbox}


$$
\AXC{$\left\{ \usebox{\mypti}\right\}_{i=1,\dots, N}$}
\RL{$(\mu_{\Sigma})$}
\UIC{$\Gamma\vdash^{\emptyset}\nu a.t: \btwo\Pto\B C^{\sum_{i=1}^{N}q_{i}\cdot \mu(\bone_{i})} \sigma$}
\DP
$$
\endgroup

If $K>1$, then 
for all $i,i'\in\{1,\dots,N\}$, let $i\sim i'$ if $\bone_{iK}\land \bone_{i'K}\vDash \BOT$
 and for all $j\in\{1,\dots K-1\}$, $\bone_{ij}\equiv \bone_{i'j}$. 
Let $\mathsf{Cl}(N_{\sim})$ indicate the set of cliques of the relation $\sim$, and for each $U\in \mathsf{Cl}(N_{\sim})$, fix some $i_{U}\in U$. Then we claim that the following equation holds: 
\begin{equation}\label{eq:cliques}
\sum_{i=1}^{N}q_{i}\cdot\prod_{j=1}^{K}\mu(\bone_{ij})=
\sum_{U\in \mathsf{Cl}_{N_{\sim}}}\left(\prod_{j=1}^{K-1}
\mu(\bone_{i_{U}j})\right)\cdot \left(\sum_{i\in U}q_{i}\cdot\mu(\bone_{iK})\right)
\end{equation}

Indeed we can compute
\begin{align*}
\sum_{i=1}^{N}\left(q_{i}\cdot\prod_{j=1}^{K} \mu(\bone_{ij})\right) &=
\sum_{U\in \mathsf{Cl}_{N_{\sim}}}\sum_{i\in U}\left (q_{i}\cdot\prod_{j=1}^{K}\mu(\bone_{ij}) \right)\\
&=
\sum_{U\in \mathsf{Cl}_{N_{\sim}}}\sum_{i\in U}\left(\prod_{j=1}^{K-1}\mu(\bone_{i_{U}j})\right) \cdot\Big (q_{i}\cdot \mu(\bone_{iK})\Big) \\
&=
\sum_{U\in \mathsf{Cl}_{N_{\sim}}} \left(\prod_{j=1}^{K-1}\mu(\bone_{i_{U}j})\right) \cdot\left(\sum_{i\in U} q_{i}\cdot\mu(\bone_{iK})\right)
\end{align*}

Let us now show how to construct, for all $U\in \mathsf{Cl}_{N_{\sim}}$, a derivation $D_{U}$ of 
$\Gamma \vdash^{\{a_{1},\dots, a_{K-1}\}}\nu a_{K}.t: \btwo\land \left (\prod_{j=1}^{K-1}\bone_{i_{U}j}\right)\Pto \B C^{q_{U}}\sigma$, where $q_{U}=\sum_{i\in U}q_{i}\cdot \mu(\bone_{iK})$:

\begingroup\makeatletter\def\f@size{10}\check@mathfonts
\begin{lrbox}{\mypti}
\begin{varwidth}{\linewidth}
$
\AXC{$D_{i_{U}}$}
\noLine
\UIC{$\Gamma \vdash^{X}t: \btwo\land \left (\bigwedge_{j=1}^{K-1}\bone_{i_{U}j}\right) \land \bone_{i_{U}K} \Pto \B C^{q_{i}}\sigma$}
\DP
$
\end{varwidth}
\end{lrbox}

$$
D_{U} \quad = \quad
\AXC{$\left\{ \usebox{\mypti}\right\}_{j\in U}$}
\RL{$(\mu_{\Sigma})$}
\UIC{$\Gamma \vdash^{X-\{a_{K}\}}\nu a_{K}.t: \btwo\land \left (\bigwedge_{j=1}^{K-1}\bone_{i_{U}j}\right)  \Pto \B C^{q_{U}}\sigma$}
\DP
$$
\endgroup

Let $\bthree_{U}= \bigwedge_{j=1}^{K-1}\bone_{i_{U}j}$, $t'= \nu a_{K}.t$ and observe that we now have derivations $D_{U}$ of 
$\Gamma \vdash^{\{a_{1},\dots, a_{K-1}\}, q_{U}} t': \btwo\land \bthree_{U}\Pto \sigma$, where for all 
$U\neq V$, $\bthree_{U}\land \bthree_{V}\vDash \BOT$. 
Moreover, for all $U\neq V$, from $\bthree_{U}\not\equiv\bthree_{V}$ we deduce by condition (a.) that $\bthree_{U}\land \bthree_{V}\vDash \BOT$; finally condition (b.) for the $\bthree_{U}$ follows immediately from condition (b.) for the $\bigwedge_{j=1}^{K}\bone_{ij}$.

Hence, by applying the induction hypothesis, we deduce the existence of a derivation of 
$\Gamma \vdash^{\emptyset}\nu X.t: \btwo\Pto \B C^{q}\sigma$, where 
\begin{align*}
q& =\sum_{U\in \mathsf{Cl}_{N_{\sim}}}q_{U}\cdot \mu(\bthree_{U}))  =
\sum_{U\in \mathsf{Cl}_{N_{\sim}}}\left (q_{U}\cdot \prod_{j=1}^{K-1}\mu(\bone_{i_{U}j}))\right) \\
& =  
\sum_{U\in \mathsf{Cl}_{N_{\sim}}}\left (\sum_{i\in U}q_{i}\cdot \mu(\bone_{iK})\right)\cdot \left( \prod_{j=1}^{K-1}\mu(\bone_{i_{U}j}))\right) \\
& \stackrel{\eqref{eq:cliques}}{=}
\sum_{i=1}^{N} \left (q_{i}\cdot \prod_{j=1}^{K}\mu(\bone_{ij})\right)
\end{align*}
so we are done.
\end{proof}

The technical lemma just proved can be used to establish Proposition \ref{cor:mustar}, i.e.~the admissibility of the generalized counting rule $(\mu^{*})$.

\begin{proposition}\label{cor:mustar}
The following generalized counting rule is derivable in $\TCINT$:
$$
\AXC{$\Gamma\vdash^{\{a_{1},\dots, a_{K}\}} t: \bone\Pto \B C^{q}\sigma $}
\RL{$(\mu^{*})$}
\UIC{$\Gamma \vdash^{\emptyset}\nu a_{1}.\dots.\nu a_{K}.t:\TOP \Pto \B C^{ q\cdot \mu(\bone)}\sigma$}
\DP
$$
\end{proposition}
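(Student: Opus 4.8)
The plan is to reduce the generalized rule $(\mu^{*})$ to the technical Lemma~\ref{lemma:delirio}, by expressing $\bone$ as a disjunction of pairwise incompatible complete minterms, factoring each minterm across the names $a_{1},\dots,a_{K}$, and then reading off the probability as a product of independent measures. First I would dispose of the degenerate cases: if $K=0$ there is nothing to generate and $\bone$ is a closed Boolean formula equivalent to $\TOP$, so the conclusion coincides with the premise; and if $\bone$ is unsatisfiable the rule is vacuous (in the applications of interest $\bone$ is always satisfiable, so that $\mu(\bone)>0$ and $\B C^{q\cdot\mu(\bone)}$ is well formed). Hence I may assume $K\geq 1$ and $\bone$ satisfiable.

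Next I would let $V$ be the finite set of Boolean variables $\bvar_{a_{j}}^{i}$ occurring in $\bone$ and put $\bone$ in disjunctive normal form over $V$: this produces finitely many \emph{complete minterms} $m_{1},\dots,m_{N}$ (each fixing a sign for every variable of $V$) with $\bone\equiv\bigvee_{i=1}^{N}m_{i}$ and the $m_{i}$ pairwise mutually exclusive. Since distinct names use disjoint sets of variables, each $m_{i}$ factors canonically as $m_{i}\equiv\bigwedge_{j=1}^{K}\bone_{ij}$, where $\bone_{ij}$ collects the literals of $m_{i}$ whose name is $a_{j}$ (with $\bone_{ij}=\TOP$ when $a_{j}$ contributes no variable to $\bone$). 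These $\bone_{ij}$ are exactly the data that Lemma~\ref{lemma:delirio} requires.

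Then I would verify the two disjointness hypotheses and produce the premises. Condition~(b.) is immediate: for $i\neq i'$ one has $m_{i}\land m_{i'}\vDash\BOT$ by construction of minterms. For condition~(a.), I would observe that $\bigwedge_{j=1}^{k}\bone_{ij}$ is a complete minterm over the variables of names $a_{1},\dots,a_{k}$, so two such partial minterms are either literally identical or differ in the sign of some variable of name $a_{j'}$ with $j'\leq k$, in which case they are incompatible; this is precisely~(a.). To obtain the premises, from the hypothesis $\Gamma\vdash^{\{a_{1},\dots,a_{K}\}}t:\bone\Pto\B C^{q}\sigma$, and since $m_{i}\vDash\bone$ with $m_{i}\equiv\TOP\land\bigwedge_{j=1}^{K}\bone_{ij}$, Lemma~\ref{lemma:weakb} yields derivations of $\Gamma\vdash^{\{a_{1},\dots,a_{K}\}}t:\TOP\land\bigwedge_{j=1}^{K}\bone_{ij}\Pto\B C^{q}\sigma$ for every $i$.

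Finally I would apply Lemma~\ref{lemma:delirio} with $\btwo:=\TOP$ and $q_{i}:=q$, obtaining a derivation of $\Gamma\vdash^{\emptyset}\nu a_{1}.\dots.\nu a_{K}.t:\TOP\Pto\B C^{r}\sigma$ with $r=\sum_{i=1}^{N}q\cdot\prod_{j=1}^{K}\mu(\bone_{ij})$, and identify $r$ with $q\cdot\mu(\bone)$: by independence of distinct names $\prod_{j=1}^{K}\mu(\bone_{ij})=\mu(m_{i})$, and by pairwise disjointness of the minterms $\sum_{i=1}^{N}\mu(m_{i})=\mu\bigl(\bigvee_{i}m_{i}\bigr)=\mu(\bone)$, whence $r=q\cdot\mu(\bone)$, as desired. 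The only genuinely delicate point I expect is the uniform (in $k$) verification that complete minterms meet condition~(a.) of Lemma~\ref{lemma:delirio}; everything else is routine bookkeeping with disjunctive normal forms together with the multiplicativity and additivity of $\mu$ already recorded in Section~\ref{sub:Cantor}.
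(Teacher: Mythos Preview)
Your proposal is correct and follows essentially the same approach as the paper: reduce to Lemma~\ref{lemma:delirio} by putting $\bone$ into a disjunctive normal form whose disjuncts factor across the names $a_{1},\dots,a_{K}$ and satisfy conditions (a.)\ and (b.), then read off $r=q\cdot\mu(\bone)$ from additivity and multiplicativity of $\mu$. The only cosmetic difference is that you name the refinement step directly as ``complete minterms over $V$,'' whereas the paper phrases it as an application of the construction in Lemma~\ref{lemma:cool1}; the two are the same operation.
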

\begin{proof}
First, let us transform $\bone$ into a disjunctive normal form of the form $\bigvee_{i=1}^{N}\bigwedge_{j=1}^{K}\btwo_{ij}$, where each $\btwo_{ij}$ is a conjunction of literals of name $a_{j}$. Now, as it was done in the proof of Lemma \ref{lemma:cool1}, the $\btwo_{ij}$ can be turned into conjunctions of literals $\bfour_{ij}$, with $i\leq N'$, for some $N'\geq N$, and $j\leq K$, where for all $i\neq i'$ and $j,j'$, either $\bfour_{ij}=\bfour_{i'j'}$ or $\bfour_{ij}\land \bfour_{i'j'}\vDash \BOT$, and 
$\bigvee_{i=1}^{N'}\bigwedge_{j=1}^{K}\bfour_{ij}\equiv\bigvee_{i=1}^{N}\bigwedge_{j=1}^{K}\btwo_{ij}$.

Observe that the $\bfour_{ij}$ now satisfy conditions (a.) and (b.) of Lemma \ref{lemma:delirio}. 
Moreover, by letting $\bfive_{i}=\bigwedge_{j=1}^{K}\bfour_{ij}$, from $\bfive_{i} \vDash \bone$ we deduce that for all $i=1,\dots,N'$, 
$\Gamma \vdash^{\{a_{1},\dots, a_{K}\}, q}t: \bfive_{i}\Pto \sigma$ is derivable.
Hence, by applying the Lemma, we can conclude that 
$\Gamma \vdash^{\emptyset,  r} \nu a_{1}.\dots.\nu a_{K}.t: \TOP\Pto \sigma$, where 
$r= \sum_{i=1}^{N'}q\cdot \mu(\bfive_{i})=q \cdot \sum_{i=1}^{N'} \mu(\bfive_{i})=q\cdot \mu(\bone)$. 
\end{proof}

We can finally turn to the proof of the completeness part Theorem \ref{thm:completenessa}.

\begin{proposition}
For any closed term $t$, 
\begin{align*}
\NHNF(t) & \leq \sup \{ q\mid\  \vdash t: \TOP \Pto \B C^{q}\HNORM\} \\
\NNF(t) & \leq \sup \{ q\mid \  \vdash t: \TOP \Pto\B C^{q} \NORM\} 
\end{align*}

\end{proposition}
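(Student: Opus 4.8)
The plan is to prove the two inequalities by a common scheme: approximate the relevant probability by a finite quantity that is realized by an explicit typing derivation, and then transfer that derivation back to $t$ using subject expansion. Throughout I would use that both probabilistic termination functions, as well as subject reduction/expansion, extend to $\EVLL$ (Appendix~\ref{app:PElambda}, Propositions~\ref{prop:subjectb} and~\ref{thm:subjexp}).

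First I would treat $\NHNF(t)$. Since $\NHNF(t)=\sup\{\HNF(u)\mid t\redallh^{*}u\}$ and $\HNF(u)=\sum_{v\in\HNF}\CC D_{u}(v)$ is itself a supremum of finite partial sums, it suffices to show that for every $u$ with $t\redallh^{*}u$ and every finite family $v_{1},\dots,v_{n}$ of head normal values with $\CC D_{u}(v_{i})>0$ there is a derivation $\vdash u:\TOP\Pto\B C^{s}\HNORM$ with $s=\sum_{i=1}^{n}\CC D_{u}(v_{i})$: taking the supremum over finite families recovers $\HNF(u)$, taking it over $u$ recovers $\NHNF(t)$, and subject expansion along $t\redallh^{*}u$ moves the derivation back to $t$. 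To build such a derivation I would invoke Lemma~\ref{lemma:mcox1} to reduce $u$, by $\redperm^{*}$, to a term $\TT C[v_{1},\dots,v_{n},t_{1},\dots,t_{N}]$ for a randomized multi-context $\TT C$, and then Lemma~\ref{lemma:mcox2} together with confluence to identify it, modulo $\redperm$, with $\nu a_{1}.\dots.\nu a_{K}.\TT C^{\lnot\nu}[v_{1},\dots,v_{n},t_{1},\dots,t_{N}]$, whose body is a pure tree of $\oplus_{a}^{i}$-choices. Each $v_{i}$ is typable as $\vdash v_{i}:\TOP\Pto\B C^{1}\HNORM$ by Lemma~\ref{lemma:hnorm}(i.) and the $(\HNORM)$ rule; climbing the choice tree with $(\oplus l),(\oplus r)$ and merging with $(\vee)$ gives $\vdash\TT C^{\lnot\nu}[\vec v,\vec t]:\bone\Pto\B C^{1}\HNORM$, where $\bone$ is the disjunction of the path formulas leading to some $v_{i}$ (the leaves $t_{j}$ are simply discarded). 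By the randomized-path characterization of $\CC D$ one has $\mu(\bone)=\sum_{i}\CC D_{u}(v_{i})=s$, so the generalized counting rule $(\mu^{*})$ of Proposition~\ref{cor:mustar} yields $\vdash\nu a_{1}.\dots.\nu a_{K}.\TT C^{\lnot\nu}[\vec v,\vec t]:\TOP\Pto\B C^{s}\HNORM$, and subject reduction/expansion transports this to $u$.

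The inequality for $\NNF(t)$ follows the same template, but the leaves must now be typed with $\B C^{1}\NORM$ and \emph{safe} types, and here an induction is unavoidable, because $\NF$ is defined recursively: for a head normal value $v=\lambda\vec x.y\,w_{1}\dots w_{m}$ one has $\NF(v)=\prod_{j}\NF(w_{j})$. I would therefore prove, by induction on the depth of the finite $\NF$-computation witnessing a rational lower bound $q\leq\NF(v)$, that $\vdash v:\TOP\Pto\B C^{q}\NORM$ with all types safe: the arguments $w_{j}$ are typed recursively as $\vdash w_{j}:\TOP\Pto\B C^{q_{j}}\NORM$ with $\prod_{j}q_{j}$ within any prescribed $\varepsilon$ of $\prod_{j}\NF(w_{j})$, and these are assembled into a typing of $v$ using $(\lambda)$, $(@_{\cap})$ and the $(\NORM)$ rule, whose safety side-condition is guaranteed by Lemma~\ref{lemma:hnorm}(ii.) and Definition~\ref{def:balanced}. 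Finiteness of the relevant computation is ensured because approximating a product $\prod_{j}\NF(w_{j})$ within $\varepsilon$ requires only finitely many recursion levels and finitely many head normal values at each level. With the leaves so typed, the outer choice-tree-plus-$(\mu^{*})$ construction is identical to the $\HNORM$ case, and soundness (the matching $\geq$-direction) is supplied by Theorems~\ref{thm:normalization} and~\ref{thm:normalization2}.

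The main obstacle, in both cases, is the exact arithmetic needed to obtain a single counting step whose probability \emph{equals} $\mu(\bone)$ rather than a crude common lower bound, as the naive approach via $(\mu')$ would give. This is precisely the role of Lemma~\ref{lemma:delirio} and Proposition~\ref{cor:mustar}, and the delicate point is arranging the conjunctions of literals arising from the distinct paths so that they satisfy the disjointness hypotheses (a.) and (b.) of Lemma~\ref{lemma:delirio}, so that the measures of independent choices multiply and the measures of disjoint paths add correctly. For $\NNF$ there is the additional subtlety of justifying that the recursive typing of arguments terminates and that all intermediate types remain safe, which I expect to handle through the well-foundedness of the truncated $\NF$-computation together with the multiplicative safety bound of Definition~\ref{def:balanced}.
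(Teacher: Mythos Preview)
Your proposal is correct and follows essentially the same route as the paper: approximate $\NHNF(t)$ by a finite sum of path measures, use Lemmas~\ref{lemma:mcox1} and~\ref{lemma:mcox2} to rewrite the witness as $\nu a_{1}\dots\nu a_{K}.\TT C^{\lnot\nu}[\vec v,\vec t]$, type the HNV leaves via Lemma~\ref{lemma:hnorm}, climb the choice tree with $(\oplus l)/(\oplus r)/(\vee)$, apply the generalized counting rule $(\mu^{*})$ of Proposition~\ref{cor:mustar}, and conclude by subject reduction/expansion; for $\NNF$ the paper likewise sketches an induction on the recursive structure of $\NF$ exploiting $\prod_{i}\sup S_{i}=\sup\{\prod_{i}s_{i}\mid s_{i}\in S_{i}\}$. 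One small adjustment: the paper phrases the finite approximation directly in terms of randomized \emph{paths} $\pi_{i}:v\mapsto u_{i}$ (so that the Boolean constraints $\bone_{\pi_{i}}$ are pairwise disjoint conjunctions of literals by construction), rather than in terms of the values $v_{i}$ and their full weights $\CC D_{u}(v_{i})$; this makes the disjointness hypotheses of Lemma~\ref{lemma:delirio} immediate and avoids the extra step of decomposing each $\CC D_{u}(v_{i})$ into path contributions.
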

\begin{proof}First observe that (with the notations from Subsection \ref{subs:evl})
\begin{align*}
\NHNF(t)&=\sup\left \{ \sum_{\pi:v\mapsto w\in \HNF}\mu(\bone^{v}_{\pi})\mid t\redall_{\mathsf h}^{*} v\right \}\\
&
=
\sup\left \{ \sum_{i=1}^{N}\mu(\bone^{u}_{\pi_{i}})\mid N\in \BB N, u_{1},\dots, u_{N}\in \HNF, \ \pi_{1}:v\mapsto u_{1},\dots, \pi_{N}:t\mapsto u_{N}, t\redall_{\mathsf h}^{*}v\right\}
\end{align*}
so it is enough to show that for any $n\in \BB N$, term $v$ such that $t\redall_{\mathsf h}^{*}v$ and terms $u_{1},\dots, u_{n}\in \HNF$ together with randomized paths $\pi_{i}:v\mapsto u_{i}$, 
we can prove that $\vdash t: \TOP \Pto\B C^{q} \HNORM$ is derivable, with $q=\sum_{i=1}^{n}\mu(\bone_{\pi_{i}})$.
Notice that, by subject expansion, it suffices to show $\vdash v: \TOP \Pto\B C^{q} \HNORM$.

Let $\pi_{1}:v\mapsto u_{1},\dots, \pi_{k}:t\mapsto u_{k}$ be randomized paths, with $u_{1},\dots, u_{n}\in \HNF$.
By Lemma \ref{lemma:mcox1} there exists $N$, terms $v_{1},\dots, v_{N}$ and a randomized multi-context $\TT C$ of rank $k+N$ such that $v\redperm^* \TT C[u_{1},\dots, u_{k}, v_{1},\dots, v_{N}]$.
Moreover, by Lemma \ref{lemma:mcox2} there exists a permutative reduction
$\nu a_{1}\dots \nu a_{K}. \TT C^{\lnot\nu}[u_{1},\dots, u_{k}, v_{1},\dots, v_{N}]\redperm^{*}
\TT C[u_{1},\dots, u_{k}, v_{1},\dots, v_{N}]$.

Let $v^{*}$ be the term $\TT C^{\lnot\nu}[u_{1},\dots, u_{k}, v_{1},\dots, v_{N}]$.
For each HNV $u_{i}$, by Lemma \ref{lemma:hnorm}, we have that $\vdash u_{i}:\TOP \Pto \B C^{1}\HNORM$, so in particular 
$\vdash^{\{a_{1},\dots, a_{K}\}, 1}u_{i}:\TOP\Pto \B C^{1}\HNORM$ is derivable; using this fact one can construct by induction on $\TT C$ derivations $D_{i}$ of 
$\vdash^{\{a_{1},\dots, a_{K}\}}v^{*}: \bone_{\pi_{i}}\Pto \B C^{1}\HNORM$
(using the fact that $RBT(\TT C[u_{1},\dots, u_{k}, v_{1},\dots, v_{N}])\sqsubseteq RBT(t)$)
:
\begin{varitemize}

\item if $\TT C^{\lnot\nu}= [\ ]_{i}$, then $k=1$, $u^{*}=u$, and $\bone_{\pi_{1}}=\TOP$; then by the hypothesis we have 
$\vdash v^{*}: \TOP\Pto\B C^{1} \HNORM$;

\item if $\TT C^{\lnot\nu}= \TT C_{1} \oplus_{a}^{l}\TT C_{2}$, then $u^{*}=
\TT C_{1}[u_{i_{1}},\dots, u_{i_{p}}, v_{i'_{1}},\dots, v_{i'_{r}}]\oplus_{a}^{l}
\TT C_{2}[u_{j_{1}},\dots, u_{j_{q}}, v_{j'_{1}},\dots, v_{j'_{s}}]$, where $p+q=k$, $r+s=N$ and 
$I\cap J=\{i_{1},\dots, i_{p}\}\cap \{j_{1},\dots, j_{q}\}=\emptyset$, 
$I'\cap J'=\{i'_{1},\dots, i'_{r}\}\cap \{j'_{1},\dots, j'_{s}\}=\emptyset$. 
Moreover, we have that for all $m\in I$, $\bone_{\pi_{l}}=\bvar_{a}^{l}\land \bone_{\pi'_{m}}$, where $\pi'_{m}$ is the path of $u_{m}$ in $\TT C_{1}$; similarly, for all $m\in J$, $\bone_{\pi_{m}}=\lnot\bvar_{a}^{l}\land \bone_{\pi'_{m}}$, where $\pi'_{m}$ is the path of $u_{l}$ in $\TT C_{2}$.

By the induction hypothesis we then have derivations of $\vdash^{\{a_{1},\dots, a_{K}\}}
\TT C_{1}[u_{i_{1}},\dots, u_{i_{p}}, v_{i'_{1}},\dots, v_{i'_{r}}]  : \bone_{\pi'_{m}}\Pto \B C^{1}\HNORM$ and 
$\vdash^{\{a_{1},\dots, a_{K}\}}
\TT C_{2}[u_{j_{1}},\dots, u_{j_{q}}, v_{j'_{1}},\dots, v_{j'_{s}}]: \bone_{\pi'_{m}}\Pto \B C^{1}\HNORM$.
Since $\bone_{\pi_{i}}$ is either 
$\bvar_{a}^{l}\land \bone_{\pi'_{i}}$ or $\lnot\bvar_{a}^{l}\land \bone_{\pi'_{i}}$,
 on $i$, we obtain then in any case a derivation of 
$\vdash^{\{a_{1},\dots, a_{K}\}}v^{*}:\bone_{\pi_{i}}\Pto\B C^{1} \HNORM$ using either the $(\oplus l)$- or the $(\oplus r)$-rule. 

\end{varitemize}

Using the derivations $D_{i}$ we thus obtain a derivation of $\vdash^{\emptyset} \nu a_{1}.\dots.\nu a_{K}.v^{*}: \TOP \Pto \B C^{ \sum_{i}\mu(\bone_{\pi_{i}})}\HNORM$ by applying the derivable rule $(\mu^{*})$ (Corollary \ref{cor:mustar}) as shown below:

\medskip

\begingroup\makeatletter\def\f@size{10}\check@mathfonts
\begin{lrbox}{\mypti}
\begin{varwidth}{\linewidth}
$
\AXC{$D_{i}$}
\noLine
\UIC{$\vdash^{\{a_{1},\dots, a_{K}\}}v^{*}:\bone_{\pi_{i}}\Pto\B C^{1} \HNORM$}
\DP
$
\end{varwidth}
\end{lrbox}


$$\AXC{$\left\{ \usebox{\mypti}\right\}_{i=1,\dots,k}$}
\AXC{$ \bigvee_{i}\bone_{\pi_{i}} \vDash \bigvee_{i}\bone_{\pi_{i}}$}
\RL{$(\lor)$}
\BIC{$\vdash^{\{a_{1},\dots,a_{K}\}}v^{*}:\bigvee_{i}\bone_{\pi_{i}}\Pto\B C^{1} \HNORM$}
\RL{$(\mu^{*})$}
\UIC{$\vdash^{\emptyset} \nu a_{1}.\dots.\nu a_{K}.v^{*}: \TOP \Pto\B C^{\mu(\bigvee_{i}\bone_{\pi_{i}})} \HNORM$}
\DP
$$
\endgroup

%
%
%
%
%
%
%
%
%

\medskip

Since $\mu(\bigvee_{i}\bone_{\pi_{i}})= \sum_{i}\mu(\bone_{\pi_{i}})$, using subject reduction and subject expansion we can finally deduce 
$\vdash^{\emptyset}v: \TOP\Pto\B C^{\sum_{i}\mu(\bone_{\pi_{i}})} \HNORM $.

For the case of $\NNF(t)$, we establish the result first for $\NF(t)$, arguing by induction on normal forms with an argument similar to the one above, exploiting the fact that if 
$q_{i}=\sup S_{i}$, then 
$\prod_{i=1}^{k}q_{i}= \sup\{ \prod_{i=1}^{k}s_{i}\mid s_{i}\in S_{i}\}$.
\end{proof}

\section{Normalization Results}\label{app:normalization}

In this section we establish a few normalization results for $\LCPL$ and $\TCINT$. 
All results are based on some variant of the standard reducibility predicates technique.

First, we prove Theorem \ref{thm:detnormalization} by adapting a result from \cite{DLGH} to the language $\EVLL$.
We then turn to probabilistic normalization. In Subsection \ref{subs:norma1} we develop a theory of reducibility candidates for probabilistic head normal forms and probabilistic normal forms for $\EVL$ and for a general family of types which comprises both those of $\LCPL$ an $\TCINT$. This will allows us to establish the normalization theorem for $\TCINT$, but will also provide most of the steps needed to establish the normalization theorem for $\LCPL$. In Subsection \ref{subs:norma2} we discuss reducibility candidates for $\EVLL$, by adapting most of results from the previous subsection, and we complete the proof of Theorem \ref{thm:normalization}.

\subsection{Deterministic Normalization.}\label{app:detnorm}

\subsubsection{Deterministic Normalization of $\LCPL$.}
In this section we adapt the proof of [\cite{DLGH}, Theorem 1] to $\LCPL$ , in order to prove Theorem \ref{thm:detnormalization}:

\detnormalization*

We introduce a simply typed $\lambda$-calculus $\STLC$ for $\EVLL$ and we show that typable terms are strongly normalizable under $\redalll$. Simple types are defined by the grammar below
$$
\sigma::= o \mid \sigma \To \sigma
$$
A judgement of $\STLC$ is of the form $\Gamma \vdash^{\theta}t: \sigma$, where $\theta$ is a \emph{list} of names and $\FN(t)\subseteq \theta$ (observe that we consider here lists and not sets). 
The rules of $\STLC$ are illustrated in Fig.~\ref{fig:rulesimple}.

The following is easily established by inspecting all reduction rules:
\begin{proposition}[subject reduction]
If $\Gamma \vdash^{\theta} t:\sigma $ and $t\redalll u$, then $\Gamma \vdash^{\theta} u:\sigma$.
\end{proposition}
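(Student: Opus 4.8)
The plan is to prove the statement by a case analysis on the redex contracted in $t \redalll u$, using an outer induction on the structure of $t$ to reduce to the case in which the contracted redex sits at the root; the contextual cases follow at once from the induction hypothesis together with the typing rule that introduced the surrounding constructor. Two preliminary facts will be needed. First, a \emph{substitution lemma}: if $\Gamma, x:\sigma \vdash^{\theta} t : \tau$ and $\Gamma \vdash^{\theta} v : \sigma$, then $\Gamma \vdash^{\theta} t[v/x] : \tau$, proved by a routine induction on the derivation of $t$. Second, since judgements carry a \emph{list} $\theta$ of names, I would record two structural properties of $\STLC$: name-list \emph{weakening} (if $\Gamma \vdash^{\theta} t : \sigma$ and $\theta'$ extends $\theta$ with $\FN(t)\subseteq\theta'$, then $\Gamma \vdash^{\theta'} t : \sigma$) and name-list \emph{exchange} (permuting the names of $\theta$ preserves derivability). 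Both hold because simple typing constrains $\theta$ only through the well-formedness requirement $\FN(t)\subseteq\theta$ for the subterm being typed, never through the position of a name in the list.

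For the $\beta$-rule $(\lambda x.t)v \redbeta t[v/x]$, I would invert the given derivation: the typing of $(\lambda x.t)v$ ends in the application rule, with premisses $\Gamma \vdash^{\theta} \lambda x.t : \sigma\To\tau$ and $\Gamma \vdash^{\theta} v : \sigma$, and the former arises from $\Gamma, x:\sigma \vdash^{\theta} t : \tau$ by the abstraction rule. Applying the substitution lemma then yields $\Gamma \vdash^{\theta} t[v/x] : \tau$, as required.

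For the permutative rules of Fig.~\ref{fig:permutations} together with \eqref{eq:e0}, \eqref{eq:e1}, and \eqref{eq:e2}, I would proceed rule by rule, in each case dismantling the derivation of $t$ and reassembling a derivation of $u$ by applying the same typing rules in a permuted order. The rules that merely rearrange $\oplus$ against abstraction, application, CbV application, or another $\oplus$ (namely $(\mathsf i)$, \eqref{eq:c1}, $(\mathsf c_2)$, $(\oplus\lambda)$, $(\oplus\mathsf f)$, \eqref{eq:oa}, $(\oplus\oplus_1)$, $(\oplus\oplus_2)$, \eqref{eq:e1}, \eqref{eq:e2}) are immediate, since the simple type of every subterm is preserved and the requirement $\FN(\cdot)\subseteq\theta$ for each choice $\oplus_a^i$ is unaffected. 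The genuinely delicate cases are those moving a generator $\nu$ past another constructor, i.e.\ \eqref{eq:cnu}, \eqref{eq:nulambda}, \eqref{eq:nuapp}, and \eqref{eq:e0}: here the two derivations differ in where $\nu$ extends the name list, so I would invoke name-list weakening to adjoin the bound name $a$ to the context of the subterm pulled inside the scope of $\nu$ (for instance $v$ in \eqref{eq:nuapp} and $t$ in \eqref{eq:e0}), which is legitimate precisely because $a\notin\FN(v)$ by the capture-avoidance hypothesis implicit in these rules. I expect this name-list bookkeeping to be the only nontrivial point: one must verify that the order in which $\nu$-binders contribute names to $\theta$ can be realigned by exchange and weakening without invalidating any $\oplus_a^i$ typing. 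Finally, note that \eqref{eq:notnu} is excluded from $\redpermm$ in $\EVLL$, so it need not be treated.
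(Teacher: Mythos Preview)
Your proposal is correct and matches the paper's approach: the paper simply states that the result ``is easily established by inspecting all reduction rules'' without giving details, and your case analysis with a substitution lemma plus name-list weakening/exchange is precisely what that inspection amounts to. You have correctly identified the only nontrivial points (the $\nu$-permutation cases requiring name-list bookkeeping) and correctly noted that \eqref{eq:notnu} is excluded from $\redpermm$.
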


For any type $\FF s$ of $\LCPL$, let us define the corresponding simple type $|\FF s|$ by letting
$|o|=o$, $|\FF s\To \sigma|=|\FF s|\To |\sigma|$ and $|\B C^{q}\FF s|=|\FF s|$. The following is easily checked by induction:

\begin{proposition}\label{prop:lcpltostlc}
If $\Gamma \vdash^{X}_{\lnot \vee}t: \bone \Pto \FF s$ holds $\LCPL$, then
$|\Gamma |\vdash^{\theta} t: |\FF s|$ is derivable in $\STLC$, for any suitable $\theta$.
\end{proposition}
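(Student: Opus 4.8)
The plan is to proceed by a straightforward induction on the typing derivation of $\Gamma \vdash^{X}_{\lnot\vee} t : \bone \Pto \FF s$ in $\LCPL$. Since the hypothesis is that the $(\vee)$-rule is never used, the cases to treat are exactly the rules $(\mathrm{id})$, $(\oplus)$, $(\lambda)$, $(@)$, $(\{\})$ and $(\mu)$ of Fig.~\ref{fig:rulesLCPL}. Before starting the induction, I would first record the elementary commutation facts for the erasure $|\cdot|$, which follow directly from its defining clauses $|o|=o$, $|\FF s\To\sigma|=|\FF s|\To|\sigma|$ and $|\B C^{q}\FF s|=|\FF s|$: namely $|\B C^{\vec q}\sigma|=|\sigma|$ for any list $\vec q$, and hence $|\B C^{\vec q}(\FF s\To\tau)|=|\FF s|\To|\tau|$ and $|\B C^{rs}\B C^{\vec q}\tau|=|\tau|$. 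These identities are what make the erased conclusions and premises of each $\LCPL$-rule line up with a single $\STLC$-rule.

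The inductive step is then a rule-by-rule check in which the Boolean formulas and the counting quantifiers simply disappear. For $(\mathrm{id})$ the erased judgement $|\Gamma|,x:|\FF s|\vdash^{\theta} x:|\FF s|$ is an instance of the identity rule of $\STLC$. For $(\oplus)$, both premises carry the same type $\FF s$, so by the induction hypothesis we obtain $\STLC$-derivations of $|\Gamma|\vdash t:|\FF s|$ and $|\Gamma|\vdash u:|\FF s|$, and the choice rule of $\STLC$ (which requires the two branches to agree) yields $t\oplus^{i}_{a}u:|\FF s|$. For $(\lambda)$ the premise erases to $|\Gamma|,x:|\FF s|\vdash t:|\B C^{\vec q}\tau|=|\tau|$ and the conclusion to $\lambda x.t:|\FF s|\To|\tau|$, matching the abstraction rule; likewise $(@)$ and $(\{\})$ erase to ordinary and CbV application respectively, both discarding the side condition $\bone\vDash\btwo\land\bthree$ and using the commutation identities above to see that $|\B C^{\vec q}(\FF s\To\tau)|=|\FF s|\To|\tau|$, $|\B C^{r}\FF s|=|\FF s|$, and the conclusion types $|\B C^{\vec q}\tau|$, $|\B C^{rs}\B C^{\vec q}\tau|$ are both $|\tau|$. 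Finally $(\mu)$ erases to $|\Gamma|\vdash\nu a.t:|\B C^{q}\FF s|=|\FF s|$, obtained from the erased premise $|\Gamma|\vdash t:|\FF s|$ by the generator rule of $\STLC$, the constraint $\mu(\bthree)\geq q$ being irrelevant after erasure.

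The one point requiring genuine care — and the main obstacle — is the bookkeeping of the name context, since $\STLC$ carries an ordered \emph{list} $\theta$ of names whereas $\LCPL$ carries a \emph{set} $X$, and the rules $(\oplus)$ and $(\mu)$ modify this context (the premises of $(\mu)$ live over $X\cup\{a\}$ while the conclusion lives over $X$). To make the induction go through, I would strengthen the statement so that it is proved simultaneously for \emph{every} suitable list $\theta$, i.e.\ every enumeration of a set of distinct names containing $X$ and $\FN(t)$ and compatible with the ordering demanded by the generator rule of $\STLC$. Then in the $(\mu)$ case one applies the induction hypothesis to the premise with the extended list $\theta,a$, and the generator rule of $\STLC$ recovers the conclusion over $\theta$; the $(\oplus)$ case is handled by applying the hypothesis to both premises over the same extended list. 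With this strengthening the induction closes, and the quantification ``for any suitable $\theta$'' in the statement is exactly the generality the induction needs.
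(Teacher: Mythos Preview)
Your proposal is correct and follows exactly the approach the paper takes: the paper's own ``proof'' is the single sentence ``The following is easily checked by induction,'' and your rule-by-rule unfolding of that induction is precisely what is meant. Your extra care about the list-versus-set bookkeeping for the name context and the strengthening to all suitable $\theta$ is the right way to make the $(\mu)$ case close; the paper leaves this implicit.
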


\begin{figure}
\fbox{
\begin{minipage}{0.98\textwidth}
$$
\AXC{}
\UIC{$\Gamma, x:\sigma \vdash^{\theta} x:\sigma$}
\DP
$$

$$
\AXC{$\Gamma, x:\sigma \vdash^{\theta} t:\tau$}
\UIC{$\Gamma \vdash^{\theta} \lambda x.t:\sigma \To \tau$}
\DP
\qquad\qquad
\AXC{$\Gamma \vdash^{\theta} t: \sigma\To \tau$}
\AXC{$\Gamma \vdash^{\theta} u:\sigma$}
\BIC{$\Gamma \vdash^{\theta} tu: \tau$}
\DP
$$

$$
\AXC{$\Gamma \vdash^{\theta} t:\sigma$}
\AXC{$\Gamma \vdash^{\theta} u:\sigma$}
\AXC{$a\in \theta$}
\TIC{$\Gamma \vdash^{ \theta} t\oplus_{a}^{i}u:\sigma$}
\DP
\qquad\qquad
\AXC{$\Gamma \vdash^{ \theta} t: \sigma$}
\UIC{$\Gamma \vdash^{\theta-\{a\}} \nu a.t:\sigma$}
\DP
\qquad\qquad
\AXC{$\Gamma \vdash^{\theta} t: \sigma\To \tau$}
\AXC{$\Gamma \vdash^{\theta} u:\sigma$}
\BIC{$\Gamma \vdash^{\theta} \{t\} u: \tau$}
\DP
$$

\end{minipage}
}
\caption{Typing rules for the simply typed $\lambda$-calculus $\STLC$.}
\label{fig:rulesimple}
\end{figure}

Let $\HLnu$ indicate the sets of $\lambda$-terms of $\EVLL$, and for all finite set of names $X$, let $\HLnu^{X}$ indicate the $\lambda$-terms with free names in $X$.

For any term $t$ with $\FN(t)\subseteq \theta= a_{1}\dots a_{k}$, we let $\nu^{\theta}t$ indicate the name-closed term
$\nu a_{1}.\dots. \nu a_{k}.t$.
For any list of names $\theta$, let 
$\SN^{\theta}= \{ t\in \HLnu^{X}\mid \nu^{\theta} t \text{ is strongly normalizing}\}$, and for any $t\in \SN^{\theta}$, let 
$\sn^{\theta}(t)$ be the maximum length of a reduction of $\nu^{\theta}t$.

We let a \emph{$\{\}$-redex} be a redex for any of the tree reductions rules 
\eqref{eq:e0}-\eqref{eq:e2}.

%

\begin{lemma}\label{lemma:cases}
\begin{varitemize}
\item[(i.)] if $t v_{1}\dots v_{n}\in \SN^{\theta}$, $ut v_{1}\dots v_{n}\in \SN^{\theta}$, and $a$ occurs in $\theta$; then 
$(t\oplus_{a}^{i} u) v_{1}\dots v_{n}\in \SN^{\theta}$;

\item[(ii.)] if $t v_{1}\dots v_{n}\in \SN^{a\cdot \theta}$, where $\FN(v_{i})\subseteq X$, then 
$(\nu a.t) v_{1}\dots v_{n}\in \SN^{\theta}$;

\item[(iii.)] if $v_{i}\in\SN^{\theta}$, then 
$x v_{1}\dots v_{n}\in \SN^{\theta}$;

\item[(iv.)] if $(t[u/x]) v_{1}\dots v_{n}\in \SN^{\theta}$ and $u\in \SN^{\theta}$, then 
$(\lambda x.t)u v_{1}\dots v_{n}\in \SN^{\theta}$;

\end{varitemize}
\end{lemma}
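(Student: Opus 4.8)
I would read the four clauses as the standard closure conditions that a reducibility-candidate argument for the simply typed fragment requires, and I would establish them by adapting the corresponding lemma for $\EVL$ from \cite{DLGH} to the extended reduction $\redalll$ of $\EVLL$. The skeleton is the well-foundedness characterization of strong normalization: by definition of $\SN^{\theta}$, a term $M$ lies in $\SN^{\theta}$ as soon as every one-step $\redalll$-reduct of $\nu^{\theta}M$ does. The second essential ingredient is that $\redpermm$ is strongly normalizing (the Proposition on $\redpermm$ stated above); this lets me attach to each term a finite \emph{permutative weight} $\mathsf{p}^{\theta}(M)$, namely the length of the longest purely $\redpermm$-reduction of $\nu^{\theta}M$. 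This weight is exactly the device that tames the permutations which \emph{duplicate} subterms (such as the $\oplus\oplus$ rules and \eqref{eq:oa}) and would otherwise defeat a naive $\sn^{\theta}$-count. I also record the easy subfact that if $s\oplus_{a}^{i}s'\in\SN^{\theta}$ then $s,s'\in\SN^{\theta}$, since any divergent reduction in a branch lifts to one of the whole term.

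For clauses (iii) and (i) I would argue \emph{simultaneously}, by lexicographic induction on $\big(\sum_{i}\sn^{\theta}(v_{i}),\ \mathsf{p}^{\theta}(M)\big)$, with an inner induction on term size to resolve their mutual reference. For (iii), inspecting the reducts of $x\,v_{1}\dots v_{n}$: a step internal to some $v_{i}$ strictly lowers $\sum_{i}\sn^{\theta}(v_{i})$; a permutation extruding a choice or a generator from an argument, for instance $x(s\oplus_{a}^{i}s')\redpermm (xs)\oplus_{a}^{i}(xs')$ by \eqref{eq:oa}, leaves that sum unchanged but lowers $\mathsf{p}^{\theta}$, and its reduct is built from the neutral terms $xs,xs'$ (with $s,s'\in\SN^{\theta}$ by the subfact) through $\oplus_{a}^{i}$, hence lies in $\SN^{\theta}$ by the induction hypothesis together with clause (i); crucially no head $\beta$-redex can appear, the head being a variable. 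Clause (i) is symmetric: using the two assumptions that $t\,v_{1}\dots v_{n}$ and $u\,v_{1}\dots v_{n}$ lie in $\SN^{\theta}$, the term $(t\oplus_{a}^{i}u)\,v_{1}\dots v_{n}$ permutes towards $(t\,v_{1}\dots v_{n})\oplus_{a}^{i}(u\,v_{1}\dots v_{n})$ by the rule $\oplus\mathsf{f}$, and because $a$ occurs in $\theta$ the generator $\nu a$ inside $\nu^{\theta}(\cdot)$ makes the top $\oplus_{a}^{i}$ a genuine branching whose two branches are in $\SN^{\theta}$; every reduct is absorbed by the same measure.

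For (ii) I would exploit the permutation $(\nu a.t)v\redpermm\nu a.(tv)$ (rule \eqref{eq:nuapp}), whose side condition $a\notin\FN(v)$ is exactly what the hypothesis $\FN(v_{i})\subseteq X$ secures, since the bound name $a$ is fresh for $X$. Iterating it gives $(\nu a.t)\,v_{1}\dots v_{n}\redpermm^{*}\nu a.(t\,v_{1}\dots v_{n})$, and $\nu^{\theta}\big(\nu a.(t\,v_{1}\dots v_{n})\big)=\nu^{a\cdot\theta}(t\,v_{1}\dots v_{n})$ is strongly normalizing by hypothesis. To upgrade this from ``reduces to something SN'' to ``is itself in $\SN^{\theta}$'', I would induct on $\big(\sn^{a\cdot\theta}(t\,v_{1}\dots v_{n}),\ \mathsf{p}^{\theta}\big)$: a reduction inside $t$ or some $v_{i}$ is mirrored by a reduction of $\nu^{a\cdot\theta}(t\,v_{1}\dots v_{n})$, lowering the first component, while a permutation lowers the second.

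The genuinely delicate clause is (iv), the key $\beta$-redex case, and I expect the main obstacle to lie there. I would prove it by induction on $\big(\sn^{\theta}\!\big((t[u/x])\,v_{1}\dots v_{n}\big)+\sn^{\theta}(u),\ |t|,\ \mathsf{p}^{\theta}\big)$, analysing every reduct of $(\lambda x.t)u\,v_{1}\dots v_{n}$: the head contraction yields $(t[u/x])\,v_{1}\dots v_{n}$, SN by assumption; a step inside $u$ lowers $\sn^{\theta}(u)$ and, since residuals of $u$ stay in $\SN^{\theta}$, keeps the hypotheses intact; a step inside $t$ or some $v_{i}$ is matched by a (possibly empty) reduction of $(t[u/x])\,v_{1}\dots v_{n}$, lowering the first component; and a permutation lowers $\mathsf{p}^{\theta}$ while preserving the shape the induction hypothesis expects. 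The subtle point is that the head $\beta$-step may duplicate $u$, which is precisely why $\sn^{\theta}(u)$ is carried additively; the remaining point, specific to $\EVLL$, is to check that the new permutations \eqref{eq:e0}--\eqref{eq:e2} for $\{\,\}$, which can only act within the subterms here, neither create nor block the head redex and are duly accounted for by the permutative weight. Verifying this last point is what makes the transport of the $\EVL$ argument of \cite{DLGH} to $\EVLL$ nontrivial.
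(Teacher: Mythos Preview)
Your proposal is correct and follows essentially the same route as the paper: the paper's proof simply cites the corresponding lemmas in \cite{DLGH} for (i), (ii), (iv), declares (iii) immediate, and adds the single observation that in each of the displayed terms a $\{\}$-redex can only occur strictly inside one of $t,u,v_{1},\dots,v_{n}$---which is precisely the point you isolate at the end. Your write-up is a faithful expansion of what that citation stands for, including the lexicographic measures and the permutative-weight device, so there is no substantive divergence; if anything, your treatment of (iii) is more elaborate than necessary (the paper dismisses it as immediate), but nothing is wrong.
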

\begin{proof}
\begin{varitemize}
\item[(i.)] Same argument as [\cite{DLGH}, Lemma 24],
 observing that a $\{\}$-redex in $(t\oplus_{a}^{i} u)v_{1}\dots v_{n}$ can only occur inside one of $t,u,v_{1},\dots, v_{n}$.

\item[(ii.)] Same argument as [\cite{DLGH}, Lemma 25], again observing that a $\{\}$-redex in $(\nu a.t)v_{1}\dots v_{n}$ can only occur inside one of $t,u,v_{1},\dots, v_{n}$.

\item[(iii.)] Immediate.

\item[(iv.)] Same argument as [\cite{DLGH}, Lemma 27], again observing that a $\{\}$-redex in 
$(\lambda x.t)uv_{1}\dots v_{n}$
can only occur inside one of $t,u,v_{1},\dots, v_{n}$.

\end{varitemize}

\end{proof}
\begin{lemma}\label{lemma:snx}
Let $t^{\bullet}$ be obtained from $t$ by deleting a finite number of $\nu$-binders $\nu a_{1},\dots, \nu a_{k}$.
If $t\in \SN^{\theta}$, then $t^{\bullet}\in \SN^{\theta'}$, for any suitable $\theta'$.
\end{lemma}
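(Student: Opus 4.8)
The plan is to reduce the statement to a purely rewriting-theoretic fact (erasing an internal generator cannot lengthen reductions) plus a little bookkeeping (re-closing the freed names as outermost generators). For the bookkeeping I would first isolate the following closure property, call it \textbf{(L)}: if $w\in\SN$ then $\nu a.w\in\SN$. This is proved by induction on $\langle\sn(w),|w|\rangle$: a one-step reduct of $\nu a.w$ is either internal to $w$ (so $\sn$ decreases) or a head permutation. Since \eqref{eq:notnu} is absent in $\EVLL$, the only possible head permutation is \eqref{eq:cnu}, which fires when $w=t_1\oplus_b^j t_2$ with $a\neq b$ and yields $(\nu a.t_1)\oplus_b^j(\nu a.t_2)$; here $t_1,t_2\in\SN$ are strict subterms, so $\nu a.t_1,\nu a.t_2\in\SN$ by the induction hypothesis, and Lemma~\ref{lemma:cases}(i) (case $n=0$) gives $(\nu a.t_1)\oplus_b^j(\nu a.t_2)\in\SN$.

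The core is the following \textbf{Reduction Claim}: for any term $w$ and any set $R$ of occurrences of $\nu$-generators in $w$, if $w\in\SN$ then the term $w^{[R]}$ obtained by erasing the generators in $R$ is in $\SN$. I would prove this by induction on $\sn(w)$, with the whole weight of the argument carried by a \textbf{Matching Lemma}: for every step $w^{[R]}\redalll v$ there exist $w'$ and a residual set $R'$ with $w\redalll^{+}w'$ and $v=w'^{[R']}$. Granting Matching, the Reduction Claim is immediate: any reduct $v$ of $w^{[R]}$ equals $w'^{[R']}$ for some $w'$ with $w\redalll^{+}w'$, hence $w'\in\SN$ with $\sn(w')<\sn(w)$, so $v\in\SN$ by the induction hypothesis; as every reduct of $w^{[R]}$ is then SN, so is $w^{[R]}$. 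The base case $\sn(w)=0$ is automatic, since Matching forces $w^{[R]}$ to be normal whenever $w$ is.

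The lemma then assembles as follows. Given $t\in\SN^{\theta}$, i.e.\ $\nu^{\theta}t\in\SN$, I would apply the Reduction Claim with $w=\nu^{\theta}t$ and $R=\{\nu a_1,\dots,\nu a_k\}$ to obtain $\nu^{\theta}t^{\bullet}\in\SN$ (now with $a_1,\dots,a_k$ free), and then wrap these names one at a time using \textbf{(L)} to get $\nu a_1.\cdots.\nu a_k.\nu^{\theta}t^{\bullet}\in\SN$, that is $t^{\bullet}\in\SN^{\theta'}$; the independence from the order of the outer prefix (the ``any suitable $\theta'$'') follows from Lemma~\ref{lemma:cases}(ii), which permits reshuffling outer generators.

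The main obstacle is the Matching Lemma, established by case analysis on the redex fired in $w^{[R]}$. Redexes disjoint from the erased positions lift one-to-one by context closure (simply restoring the generators of $R$). Redexes \emph{exposed} by an erasure are matched by first permuting the restored generator outward --- via \eqref{eq:nuapp}, \eqref{eq:nulambda}, \eqref{eq:e0} or \eqref{eq:cnu} --- and then firing the corresponding redex upstairs. Two points will require care. First, a $\beta$-step may duplicate or discard the body that contained an erased generator, so the residuals of $R$ must be tracked as a set, which is exactly why the Reduction Claim is stated for arbitrary sets $R$ rather than a single occurrence. Second, the inserted matching steps must be chosen so as to preserve head constructors and make the two projections coincide: for instance, to match a $\{\}\oplus_2$ step downstairs one must distribute the restored generator through the argument with \eqref{eq:cnu} rather than converting the CbV application with \eqref{eq:e0}, so that the $\{\}$ survives and $v=w'^{[R']}$ holds on the nose. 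Since every downstairs step maps to a \emph{nonempty} upstairs reduction, $\sn$ strictly decreases at each stage and the induction closes; in particular no separate appeal to strong normalization of $\redpermm$ (Lemma~\ref{lemma:snperm}) is needed, although it could alternatively be used to bound the auxiliary permutation steps.
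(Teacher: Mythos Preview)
Your overall architecture is sound and close in spirit to the paper's argument, but the \textbf{Matching Lemma} as you state it is false, and the failure occurs precisely in a case your list of ``points requiring care'' does not cover. Take
\[
w=\{y\}(\nu b.\nu a.z),\qquad R=\{\nu b\},
\]
so $w^{[R]}=\{y\}(\nu a.z)\ \redpermm\ \nu a.(yz)=:v$ by \eqref{eq:e0}. The only step available from $w$ is again \eqref{eq:e0}, giving $w'=\nu b.(y(\nu a.z))$, which is already $\redalll$-normal. No choice of $R'$ makes $w'^{[R']}$ equal to $v$: erasing $\nu b$ gives $y(\nu a.z)$, erasing $\nu a$ gives $\nu b.(yz)$, erasing both gives $yz$. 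The reason your matching strategy breaks is that there is no permutation rule that swaps two adjacent generators $\nu b.\nu a.\_$, and none that pulls a generator out of the \emph{argument} position of an ordinary application; so once \eqref{eq:e0} has consumed the CbV constructor upstairs, the inner $\nu a$ is stuck inside $y(\nu a.u)$ and can never reach the top where it sits in $v$. Your list \eqref{eq:nuapp}, \eqref{eq:nulambda}, \eqref{eq:e0}, \eqref{eq:cnu} simply does not contain a rule for this configuration.

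The Reduction Claim itself is still true, and the hole can be patched: in this case one may enlarge $R'$ to also erase the inner $\nu a$ (getting $yz\in\SN$ from the IH on $w'$) and then re-wrap with $\nu a$ using your lemma \textbf{(L)}. But that means \textbf{(L)} must be used \emph{inside} the induction for the Reduction Claim, not only at the final assembly step; your proof sketch does not do this. The paper avoids the issue differently: it inducts on the compound measure $|t|+\sn^{\theta}(t)$ and performs a structural case analysis on $t$ (rather than on the redex fired in $t^{\bullet}$). The $|t|$ component provides the extra slack your $\sn$-only induction lacks: a deleted generator at the head of a subterm can be peeled off by recursing on a strictly smaller term, before one ever analyses the exposed redex.
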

\begin{proof}

We argue by induction on $|t|+\sn^{\theta}(t)$:

\begin{varitemize}

\item if $t=x$ or $t=y$, the claim is immediate;
\item if $t=\lambda y.t'$, the claim follows by IH;
\item if $t= t_{1}t_{2}$, then $t^{\bullet}= t_{1}^{\bullet}t_{2}^{\bullet}$; let us consider all possible reductions of $t^{\bullet}$:
\begin{varitemize} 
\item a reduction of either $t_{1}^{\bullet}$ or $t_{2}^{\bullet}$, in which case we apply the IH;

\item a reduction $(\lambda y. u^{\bullet})t_{2}^{\bullet}\redbeta u^{\bullet}[t_{2}^{\bullet}/y]$, where $t_{1}=\lambda y.u$, in which case we also apply the IH;

\item a reduction of the form $u^{\bullet}t_{2}^{\bullet} \redalll v^{\bullet}$, where $t_{1}=\nu a.u$, then  $t_{1}t_{2}\redpermm \nu a.ut_{2}$, so we can apply the IH to $ut_{2}$.

\end{varitemize}

\item if $t=\{t_{1}\}t_{2}$, then $t^{\bullet}=\{ t_{1}^{\bullet}\}t_{2}^{\bullet}$; then a reduction is either a reduction of either $t_{1}^{\bullet}$ or $t_{2}^{\bullet}$, in which case we apply the IH, either a $\eqref{eq:e1},\eqref{eq:e2}$-reduction, in which case we also apply the IH, or a reduction 
of the form $\{t_{1}^{\bullet}\} \nu a.u^{\bullet}\redpermm \nu a. t_{1}^{\bullet}u^{\bullet}$, where $t_{2}=\nu a.u$, in which case we also have $\{t_{1}\}t_{2}\redpermm \nu a. t_{1}u$, so we can apply the IH.

\item if $t= t_{1}\oplus_{a}^{i}t_{2}$, then a reduction of $t^{\bullet}$ is either a reduction of $t_{1}^{\bullet}$ or $t_{2}^{\bullet}$, in which case we apply the IH, or a reduction of the form
$t_{1}^{\bullet}\oplus_{a}^{i}(u^{\bullet}\oplus_{b}^{j}v^{\bullet})\redpermm (t_{1}^{\bullet}\oplus_{a}^{i}u^{\bullet})\oplus_{b}^{j}(t_{1}^{\bullet}\oplus_{a}^{i}v^{\bullet})$, where $t_{2}=v\oplus_{a}^{i}w$, in which case $t \redpermm (t_{1}\oplus_{a}^{i}u)\oplus_{b}^{j}(t_{2}\oplus_{a}^{i}v)$, so we can apply the IH;

\item if $t=\nu b.u$, then if $t^{\bullet}=u^{\bullet}$, we can argue by the IH; if $t^{\bullet}= \nu b.u^{\bullet}$, then a reduction of $t^{\bullet}$ is either a reduction of $u^{\bullet}$, in which case we can apply the IH, or a permutation $\nu b. u_{1}^{\bullet}\oplus_{a}^{i}u_{2}^{\bullet}\redpermm (\nu b.u_{1}^{\bullet})\oplus_{a}^{i}(\nu b.u_{2}^{\bullet})$, where $u=u_{1}\oplus_{a}^{i}u_{2}$, in which case $t\redpermm (\nu b.u_{1})\oplus_{a}^{i}(\nu b.u_{2})$ so we can apply the IH. 
%
%
%

\end{varitemize}

\end{proof}

\begin{lemma}\label{lemma:snxx}
\begin{varitemize}
\item[(i.)] If $ (\nu a.tu)v_{1}\dots v_{n} \in \SN^{\theta}$, then $\{t\}(\nu a.u)v_{1}\dots v_{n}\in \SN^{\theta}$.
\item[(ii.)] If $ t[\nu a.u/x]v_{1}\dots v_{n}\in \SN^{\theta}$, then $ t[u/x]v_{1}\dots v_{n} \in \SN^{ \theta'}$, for any suitable $\theta'$.
\item[(iii.)] If $(tu)v_{1}\dots v_{n}\in \SN^{\theta}$, then $(\{t\}u)v_{1}\dots v_{n}\in \SN^{\theta}$.
\end{varitemize}
\end{lemma}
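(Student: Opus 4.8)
The plan is to prove Lemma~\ref{lemma:snxx} by reducing each claim to the already-established machinery of $\SN^{\theta}$, in the same spirit as Lemma~\ref{lemma:cases}, Lemma~\ref{lemma:snx}, and the commutation facts proved for $\EVLL$ in Appendix~\ref{subs:evll}. All three parts assert that if a certain ``target'' term is strongly normalizing under $\redalll$ (relative to the binder list $\theta$), then a ``source'' term that reduces to it is also strongly normalizing. The unifying idea is that each source term either reduces to the target in one permutative step, or differs from it only by the presence/absence of a $\nu$-binder that Lemma~\ref{lemma:snx} already shows to be harmless, while any new redexes created can only lie inside the immediate subterms — so an induction on $|t|+\sn^{\theta}(\text{target})$ will close the gap.

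For part (i.), I would observe that $\{t\}(\nu a.u)v_{1}\dots v_{n} \redpermm (\nu a.tu)v_{1}\dots v_{n}$ by a single application of \eqref{eq:e0} (rule $\{\}\nu$), so the hypothesis gives that the contractum is in $\SN^{\theta}$; it then suffices to check that every other reduction of $\{t\}(\nu a.u)v_{1}\dots v_{n}$ either happens inside one of $t,u,v_{1},\dots,v_{n}$ (handled by the induction hypothesis, noting that a $\{\}$-redex can only occur inside these subterms, exactly as in Lemma~\ref{lemma:cases}) or is the distinguished $\{\}\nu$-step itself, which lands in $\SN^{\theta}$ by hypothesis. Since $\redalll$ is strongly normalizing on each subterm and the contractum is in $\SN^{\theta}$, the term admits no infinite reduction. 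For part (iii.), the argument is completely analogous: $\{t\}u$ and $tu$ have the same one-step reducts except that CbV application $\{t\}u$ additionally allows reducing $u$ via head-reduction and permuting through \eqref{eq:e1}-\eqref{eq:e2}, but any such new redex is internal to $t$ or $u$, so the induction hypothesis applies, and the target $(tu)v_{1}\dots v_{n}\in \SN^{\theta}$ bounds the rest.

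Part (ii.) is the one I expect to be the main obstacle, since it is a genuine substitution statement rather than a single-step permutation: replacing $\nu a.u$ by $u$ inside $t$ deletes a $\nu$-binder at possibly many occurrences of $x$, and such a deletion can in principle unblock reductions (for instance, a $\beta$-redex or a $\{\}$-redex that was previously guarded by the $\nu$). Here the correct strategy is to invoke Lemma~\ref{lemma:snx}: the term $t[u/x]v_{1}\dots v_{n}$ is exactly $\bigl(t[\nu a.u/x]v_{1}\dots v_{n}\bigr)^{\bullet}$, obtained by erasing the occurrences of the binder $\nu a$ introduced by the substitution. Lemma~\ref{lemma:snx} states precisely that deleting finitely many $\nu$-binders preserves membership in $\SN$ (up to a suitable change of binder list $\theta \leadsto \theta'$), so $t[u/x]v_{1}\dots v_{n}\in \SN^{\theta'}$ follows immediately, which is why the statement is phrased ``for any suitable $\theta'$''. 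The only care needed is to verify that the syntactic object produced by substituting $u$ for $x$ is genuinely the $(\cdot)^{\bullet}$-image of the object produced by substituting $\nu a.u$, which is clear since substitution is uniform across occurrences and $(\nu a.u)^{\bullet}=u^{\bullet}=u$ when $u$ is already $\nu a$-free at those positions; thus no independent induction is required for (ii.) beyond citing Lemma~\ref{lemma:snx}.
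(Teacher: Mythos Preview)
Your approach is the paper's, and parts (i.) and (ii.) are essentially right --- (ii.) in particular is exactly what the paper does (an immediate appeal to Lemma~\ref{lemma:snx}). In (i.) your case analysis is slightly incomplete: besides internal reductions and the $\{\}\nu$-step you must also treat the top-level $\{\}\oplus_{1}$ case, where $t=t_{1}\oplus_{a}^{i}t_{2}$ and $\{t_{1}\oplus_{a}^{i}t_{2}\}(\nu a.u)v_{1}\dots v_{n}\redpermm(\{t_{1}\}(\nu a.u)\oplus_{a}^{i}\{t_{2}\}(\nu a.u))v_{1}\dots v_{n}$; this is handled by the IH on each branch plus Lemma~\ref{lemma:cases}~(i.), so it is routine.

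The genuine gap is in (iii.). Your claim that the extra redexes of $\{t\}u$ are ``internal to $t$ or $u$'' is false: rules \eqref{eq:e0}--\eqref{eq:e2} fire at the outermost $\{\}$, not inside a subterm. The case you have missed entirely is \eqref{eq:e0}: when $u=\nu a.u'$, we have $(\{t\}u)v_{1}\dots v_{n}\redpermm(\nu a.tu')v_{1}\dots v_{n}$, and there is no matching reduct of $(tu)v_{1}\dots v_{n}$ to point to. The paper closes this case by a two-step manoeuvre: from the hypothesis $(t(\nu a.u'))v_{1}\dots v_{n}\in\SN^{\theta}$, first invoke Lemma~\ref{lemma:snx} to delete the inner binder and obtain $(tu')v_{1}\dots v_{n}\in\SN^{\theta'}$ for a suitable extended $\theta'$, then invoke Lemma~\ref{lemma:cases}~(ii.) to re-wrap and conclude $(\nu a.tu')v_{1}\dots v_{n}\in\SN^{\theta}$. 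Without this step your induction does not close, since the reduct is not of the shape $(\{t'\}u')v_{1}\dots v_{n}$ and the IH of (iii.) does not apply to it directly. The $\{\}\oplus$ cases \eqref{eq:e1}--\eqref{eq:e2} likewise need the IH on the branches followed by Lemma~\ref{lemma:cases}~(i.), not an ``internal reduction'' argument.
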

\begin{proof}
\begin{varitemize}
\item[(i.)]
We argue by induction on $|t|+\sn^{\theta}(t)+\sn^{\theta}(v)+\sum_{i}\sn^{\theta}(v_{i})$.

A reduction of $\{t\}(\nu a.u)v_{1}\dots v_{n}$ can only be a reduction in either of $t,u,v_{1}\dots,v_{n}$, a reduction of the form
$\{t_{1}\oplus_{a}^{i}t_{2}\}(\nu a.u)v_{1}\dots v_{n}
\redpermm (\{t_{1}\}(\nu a.u)v_{1}\dots v_{n})\oplus_{a}^{i}  (\{t_{1}\}(\nu a.u)v_{1}\dots v_{n})
$, or a reduction
$\{t\}(\nu a.u)v_{1}\dots v_{n}\redpermm (\nu a.tu)v_{1}\dots v_{n}$. In the first case we can use the IH; in the second case by the IH we deduce  $(\{t_{i}\}(\nu a.u)v_{1}\dots v_{n})\in \SN^{\theta}$ and we can conclude by Lemma \ref{lemma:cases} (i.). In the third case we conclude by the hypothesis.

\item[(ii.)] 
Immediate application of Lemma \ref{lemma:snx}.
\item[(iii.)]

We argue by induction on $n+ |t|+|u|+ \sn^{\theta}(t)+\sn^{\theta}(u)+ \sum_{i}\sn^{\theta}(v_{i})$.

A reduction of $(\{t\}u)v_{1}\dots v_{n}$ is of one of the following forms:
\begin{varitemize}
\item a reduction of $t$ or of $u$, $v_{1}$,\dots, $v_{n}$ in which case we argue by the IH;
\item a reduction of the form $(\{t\}\nu a.u' )v_{1}\dots v_{n}\redpermm (\nu a.tu')v_{1}\dots v_{n}$, where $u= \nu a.u'$; then by Lemma \ref{lemma:snx}, from $(tu)v_{1}\dots v_{n}\in \SN^{\theta}$ we deduce $(tu')v_{1}\dots v_{n}\in \SN^{X\cup\{a\}}$, and by Lemma \ref{lemma:cases} (ii.), 
$(\nu a.tu')v_{1}\dots v_{n}\in\SN^{\theta}  $;

\item a reduction of the form $(\{t_{1}\oplus_{a}^{i}t_{2}\}u)v_{1}\dots v_{n}\redpermm
((\{t_{1}\}u)\oplus_{a}^{i}(\{t_{2}\}u))v_{1}\dots v_{n}$,where $t=t_{1}\oplus_{a}^{i}t_{2}$; then by IH
$(\{t_{i}\}u)v_{1}\dots v_{n}\in \SN^{\theta}$, so we conclude $((\{t_{1}\}u)\oplus_{a}^{i}(\{t_{2}\}u))v_{1}\dots v_{n}$ by Lemma \ref{lemma:cases} (i.).

\item a reduction of the form $(\{t\}(u_{1}\oplus_{a}^{i}u_{2}))v_{1}\dots v_{n}\redpermm
((\{t\}u_{1})\oplus_{a}^{i}(\{t\}u_{2}))v_{1}\dots v_{n}$; 
then by IH
$(\{t\}u_{i})v_{1}\dots v_{n}\in \SN^{\theta}$, so we conclude $((\{t\}u_{1})\oplus_{a}^{i}(\{t\}u_{2}))v_{1}\dots v_{n}$ by Lemma \ref{lemma:cases} (i.).

\item a reduction of the form $((\{t\}u)v_{1}\dots v_{k})(v_{k+1}^{1}\oplus_{a}^{i}v_{k+1}^{2}) v_{k+2}\dots v_{n}\redpermm
 ((\{t\}u)v_{1}\dots v_{k}v_{k+1}^{1})\oplus_{a}^{i}((\{t\}u)v_{1}\dots v_{k}v_{k+1}^{2})v_{k+2}\dots v_{n}$; then by IH 
$(\{t\}u)v_{1}\dots v_{k}v_{k+1}^{1}\in \SN^{\theta}$,
$(\{t\}u)v_{1}\dots v_{k}v_{k+1}^{2}\in \SN^{\theta}$, and by Lemma \ref{lemma:cases} (i.) we conclude
$ ((\{t\}u)v_{1}\dots v_{k}v_{k+1}^{1})\oplus_{a}^{i}((\{t\}u)v_{1}\dots v_{k}v_{k+1}^{2})v_{k+2}\dots v_{n}\in \SN^{\theta}$.
\end{varitemize}
\end{varitemize}
\end{proof}

The family of reducibility predicates $\RRED_{\sigma}^{\theta}$ of $\STLC$  is defined as follows:
\begin{align*}
\RRED^{\theta}_{o}& := \SN^{\theta}\\
\RRED^{\theta}_{\sigma\To \tau}& :=
\{t\in \HLnu^{X}\mid \forall u\in \RRED^{\theta}_{\sigma}, \ tu\in \RRED^{\theta}_{\tau}\}
\end{align*}

\begin{lemma}
\begin{enumerate}
\item If $t\in \RRED^{\theta}_{\sigma}$, then $t\in \SN^{\theta}$;
\item if $\Gamma \vdash xu_{1}\dots u_{n}$ and $u_{1},\dots, u_{n}\in \SN^{\theta}$, then 
$ xu_{1}\dots u_{n}\in \RRED^{\theta}_{\sigma}$;

\item If $t[u/x]v_{1}\dots v_{n}\in \RRED^{\theta}_{\sigma}$, 
%
and $u\in \SN^{\theta}$, then
$ (\lambda x.t)uv_{1}\dots v_{n}\in \RRED^{\theta}_{\sigma}$;

\item if $tv_{1}\dots v_{n}\in \RRED^{\theta}_{\sigma}$ and 
 $ uv_{1}\dots v_{n}\in \RRED^{\theta}_{\sigma}$, and $a\in \theta$, then 
 $( t\oplus_{a}^{i}u)v_{1}\dots v_{n}\in \RRED^{\theta}_{\sigma}$

\item if $ tv_{1}\dots v_{n}\in \RRED^{a\cdot \theta}_{\sigma}$ and $\FN(v_{i})\subseteq X$, then
$ (\nu a.t)v_{1}\dots v_{n}\in \RRED^{\theta}_{\sigma}$.
\end{enumerate}

\end{lemma}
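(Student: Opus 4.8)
The final lemma establishes five closure properties for the reducibility predicates $\RRED^\theta_\sigma$ of the simply typed calculus $\STLC$. These are the standard clauses one proves by mutual induction to run the reducibility/candidates argument, adapted here to the $\nu$-indexed family $\RRED^\theta_\sigma$ so as to carry strong normalization information through the generators and the permutative reductions. The whole point is that once these five facts hold, one gets an "adequacy" theorem (every typed term is reducible, hence strongly normalizing), and via Proposition~\ref{prop:lcpltostlc} this transfers back to $\LCPL$ and $\TCI$, yielding Theorem~\ref{thm:detnormalization}.

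**Overall plan.** The plan is to prove the five statements simultaneously by induction on the type $\sigma$, exactly in the Tait/Girard style. The base case $\sigma = o$ is where $\RRED^\theta_o = \SN^\theta$ is unfolded, so each clause reduces directly to the corresponding part of Lemma~\ref{lemma:cases} (together with Lemma~\ref{lemma:snx} and Lemma~\ref{lemma:snxx} for the permutation-sensitive steps). The arrow case $\sigma = \tau \To \rho$ is handled by the usual currying manoeuvre: to show a term $t$ lies in $\RRED^\theta_{\tau\To\rho}$, one feeds it an arbitrary $w \in \RRED^\theta_\tau$ and shows $tw \in \RRED^\theta_\rho$, appealing to the induction hypothesis for $\rho$. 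First I would set up the five claims as a single statement $P(\sigma)$ and prove $P(o)$, then $P(\tau\To\rho)$ assuming $P(\tau)$ and $P(\rho)$.

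**The five clauses in order.** For clause~1 (reducibility implies $\SN$), the base case is definitional; for the arrow case I would use clause~2 at the smaller type $\tau$ to produce a neutral reducible argument (e.g.\ a fresh variable), apply $t$ to it to land in $\RRED^\theta_\rho$, invoke the induction hypothesis of clause~1 at $\rho$, and then peel off the applied argument using Lemma~\ref{lemma:snx}/the structure of $\SN^\theta$. Clause~2 (neutral terms headed by a variable are reducible) is proved by expanding the definition and reducing, in the arrow case, to Lemma~\ref{lemma:cases}(iii) after appending the extra argument. Clauses~3, 4, 5 are the key "expansion" and "construction" lemmas for $\beta$-redexes, for $\oplus_a^i$, and for $\nu a.$ respectively: in each, I would strip the type down to ground via the $\vec v$ arguments (the vector $v_1\dots v_n$ is already threaded through precisely so that each clause is stated at arbitrary applicative depth), so that the substantive content collapses onto Lemma~\ref{lemma:cases}(iv), (i), and (ii). The more delicate closure facts involving the interplay of $\{\}$, $\nu$, and permutations are supplied by Lemma~\ref{lemma:snxx}. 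Throughout, the threading of $v_1,\dots,v_n$ and the side condition $\FN(v_i)\subseteq X$ in clause~5 are exactly what make the induction go through without a separate neutrality predicate.

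**Main obstacle.** The hard part will be clause~5, the closure under the generator $\nu a.$, because moving a $\nu$-binder past applied arguments is not free: one must use Lemma~\ref{lemma:cases}(ii), whose hypothesis lives in the enlarged name-context $a\cdot\theta$, and the side condition $\FN(v_i)\subseteq X$ is essential for the permutative rule \eqref{eq:nuapp} to apply cleanly. A subtle point is that in the arrow case of clause~5 one must pass from $\RRED^{a\cdot\theta}$ to $\RRED^\theta$ while feeding an \emph{arbitrary} reducible argument $w\in\RRED^\theta_\tau$; this argument has free names in $\theta$ but we are computing in context $a\cdot\theta$, so I would use Lemma~\ref{lemma:snx} (weakening/deletion of $\nu$-binders preserves $\SN$) and the fact that reducibility is preserved under name-context extension to reconcile the two contexts. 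Verifying that all these manipulations commute correctly with the $\{\}$-permutations \eqref{eq:e0}--\eqref{eq:e2}—which is precisely the novelty of $\EVLL$ over the calculus of \cite{DLGH}—is where the bookkeeping is most error-prone, but Lemma~\ref{lemma:snxx} has been stated expressly to absorb that difficulty.
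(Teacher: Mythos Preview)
Your proposal is correct and follows the standard Tait/Girard reducibility argument; this is precisely what the paper intends, as its own proof consists of a single sentence deferring to \cite{DLGH}, Lemma 28. Your reconstruction—simultaneous induction on $\sigma$, with the base case discharged by Lemma~\ref{lemma:cases} and the arrow case by the applicative unfolding—is exactly that argument; the only minor inaccuracy is that Lemma~\ref{lemma:snxx} is not needed for this lemma (it is used separately for the $\{\}$-clause in Lemma~\ref{lemma:permuzza}), and the ``peeling off'' in clause~1 is just the trivial fact that $tx\in\SN^\theta$ implies $t\in\SN^\theta$, not Lemma~\ref{lemma:snx}.
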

\begin{proof}
The proof is similar to the proof of [\cite{DLGH}, Lemma 28].
\end{proof}

\begin{lemma}\label{lemma:permuzza}
If $ t\in \RRED^{\theta}_{\sigma\To \tau}$ and $u\in \RRED^{\theta}_{\sigma}$, then 
$\{t\}u\in \RRED^{\theta}_{\tau}$.
\end{lemma}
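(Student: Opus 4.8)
The plan is to reduce the statement to plain (call-by-name) application, exploiting that, by the very definition of $\RRED^{\theta}_{\sigma\To\tau}$, the term $tu$ is already reducible: from $t\in \RRED^{\theta}_{\sigma\To\tau}$ and $u\in \RRED^{\theta}_{\sigma}$ we immediately get $tu\in \RRED^{\theta}_{\tau}$. So it suffices to transfer reducibility from $tu$ to $\{t\}u$, and the permutation rules \eqref{eq:e0}--\eqref{eq:e2} are exactly what let one do this. The clean way to run this is \emph{not} to argue about $\{t\}u$ directly, but to establish the following generalization by induction on the type, which threads extra arguments so that the type-induction can proceed: for every type $\rho$ and all terms $v_{1},\dots,v_{n}\in\HLnu^{X}$,
$$
(tu)v_{1}\dots v_{n}\in \RRED^{\theta}_{\rho}\quad\Longrightarrow\quad (\{t\}u)v_{1}\dots v_{n}\in \RRED^{\theta}_{\rho}.
$$
Lemma \ref{lemma:permuzza} is then the instance $n=0$, $\rho=\tau$.

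First I would handle the base case $\rho=o$. Here $\RRED^{\theta}_{o}=\SN^{\theta}$, so the hypothesis reads $(tu)v_{1}\dots v_{n}\in \SN^{\theta}$, and Lemma \ref{lemma:snxx}~(iii.) yields precisely $(\{t\}u)v_{1}\dots v_{n}\in \SN^{\theta}=\RRED^{\theta}_{o}$. This is the only point where the actual behaviour of the $\{\}$-operator enters: all the work of analysing how the permutations $\{t\}\nu a.u\redpermm\nu a.tu$, $\{t\oplus_{a}^{i}u\}v$ and $\{t\}(u\oplus_{a}^{i}v)$ interact with strong normalization is already packaged into Lemma \ref{lemma:snxx}, which in turn rests on Lemma \ref{lemma:cases} and Lemma \ref{lemma:snx}. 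For the inductive step $\rho=\rho_{1}\To\rho_{2}$, I would take an arbitrary $w\in \RRED^{\theta}_{\rho_{1}}$; since $(tu)v_{1}\dots v_{n}\in \RRED^{\theta}_{\rho_{1}\To\rho_{2}}$, the definition of the reducibility predicate at arrow type gives $(tu)v_{1}\dots v_{n}w\in \RRED^{\theta}_{\rho_{2}}$. Applying the induction hypothesis at the smaller type $\rho_{2}$ to the extended argument list $v_{1},\dots,v_{n},w$ then gives $(\{t\}u)v_{1}\dots v_{n}w\in \RRED^{\theta}_{\rho_{2}}$, and as $w$ was arbitrary we conclude $(\{t\}u)v_{1}\dots v_{n}\in \RRED^{\theta}_{\rho_{1}\To\rho_{2}}$ (membership in $\HLnu^{X}$ being immediate).

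I do not expect a genuine obstacle here, since the delicate strong-normalization bookkeeping has been isolated in Lemma \ref{lemma:snxx}~(iii.). The only conceptual step worth flagging is the decision to generalize over a spine $v_{1}\dots v_{n}$ of applied arguments rather than proving the naked statement: without this the induction on the type would stall in the arrow case, because one cannot compare $(\{t\}u)w$ with $(tu)w$ at type $\rho_{1}\To\rho_{2}$ without first pushing $w$ inside, which changes the head from $\{t\}u$ to $(\{t\}u)w$. The routine side-conditions — that the reducible arguments $v_{i},w$ lie in $\HLnu^{X}$ with the right free names, so that the applications are well-formed and Lemma \ref{lemma:snxx}~(iii.) applies — follow from the standing convention that reducibility candidates are subsets of $\HLnu^{X}$, and I would dispatch them without further comment.
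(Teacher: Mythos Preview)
Your proposal is correct and takes essentially the same approach as the paper: both reduce to Lemma~\ref{lemma:snxx}(iii.) by threading a spine of reducible arguments $v_{1},\dots,v_{n}$ down to the base type $o$. The paper simply unfolds $\tau=\sigma_{1}\To\dots\To\sigma_{n}\To o$ in one step and tests against arbitrary $v_{i}\in\RRED^{\theta}_{\sigma_{i}}$, which amounts to your induction on $\rho$ written out directly.
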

\begin{proof}
Let $\tau=\sigma_{1}\To \dots \To \sigma_{n}\To o$ and let $v_{1}\in \RRED^{\theta}_{\sigma_{1}}, \dots, v_{n}\in \RRED^{\theta}_{\sigma_{n}}$.
Then $(tu)v_{1}\dots v_{n}\in \SN^{\theta}$ which, by Lemma \ref{lemma:snxx}, implies $\{t\}uv_{1}\dots v_{n}\in \SN^{\theta}$, from which we conclude $\{t\}u\in \RED^{X}_{\tau}$.
\end{proof}

\begin{proposition}
If $\Gamma \vdash^{\theta} t:\sigma$ is derivable in $\STLC$, where $\Gamma=\{x_{1}:\sigma_{1},\dots, x_{n}:\sigma_{n}\}$, then for all $ u_{i}\in \RRED_{\sigma_{i}}^{\theta}$, 
$ t[u_{1}/x_{1},\dots, u_{n}/x_{n}]\in \RRED_{\sigma}^{\theta}$.
\end{proposition}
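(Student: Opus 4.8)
The plan is to prove this final proposition by a standard induction on the typing derivation of $\Gamma \vdash^{\theta} t:\sigma$ in $\STLC$, using the machinery of reducibility predicates $\RRED^{\theta}_{\sigma}$ that has just been set up. The statement is the ``main lemma'' of the reducibility argument: every typable term is reducible once its free variables are substituted by reducible terms. Combined with the closure properties already established (each $\RRED^{\theta}_{\sigma}\subseteq \SN^{\theta}$, and variables $x_i$ are reducible since $x_i \in \RRED^{\theta}_{\sigma_i}$ by the variable clause with $n=0$), this will immediately yield strong normalization of typable terms, and then Theorem~\ref{thm:detnormalization} follows via Proposition~\ref{prop:lcpltostlc}: any $\Gamma\vdash^X_{\lnot\vee} t:\bone\Pto \FF s$ in $\LCPL$ gives a $\STLC$-typing $|\Gamma|\vdash^\theta t:|\FF s|$, whence $\nu^\theta t$ is strongly normalizing under $\redalll$, so $t$ is too.

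First I would fix the abbreviation $\substone = [u_1/x_1,\dots,u_n/x_n]$ and proceed by cases on the last rule of the derivation. The variable case uses that $x_i\substone = u_i \in \RRED^{\theta}_{\sigma_i}$ directly. For abstraction $\lambda x.t$ of type $\sigma\To\tau$, I would take an arbitrary $w\in \RRED^{\theta}_{\sigma}$ and show $(\lambda x.t\substone)w\in \RRED^{\theta}_{\tau}$; since $w\in\SN^\theta$ (by clause~1 of the preceding lemma) and, by the induction hypothesis applied to the extended substitution, $t\substone[w/x]\in \RRED^{\theta}_{\tau}$, clause~3 of that lemma gives exactly $(\lambda x.t\substone)w\in \RRED^{\theta}_{\tau}$. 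For ordinary application $tu$, the result is immediate from the induction hypotheses for $t$ and $u$ together with the defining clause of $\RRED^{\theta}_{\sigma\To\tau}$. The choice rule $t\oplus_a^i u$ (with $a\in\theta$) is handled by clause~4 of the lemma applied with $n=0$, and the generator rule $\nu a.t$ by clause~5, both invoking the induction hypothesis on the premiss(es). Crucially, the case of the CbV application $\{t\}u$ is where $\EVLL$ genuinely extends $\EVL$: here I would invoke Lemma~\ref{lemma:permuzza}, which states precisely that $t\in \RRED^{\theta}_{\sigma\To\tau}$ and $u\in \RRED^{\theta}_{\sigma}$ imply $\{t\}u\in \RRED^{\theta}_{\tau}$, so that the induction hypotheses for $t$ and $u$ close the case.

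The main obstacle is not the proposition itself but the correct interaction of the $\nu$-binders and the permutative rules with the reducibility sets across the different values of $\theta$; this is exactly what the sequence of technical lemmas (Lemma~\ref{lemma:cases}, Lemma~\ref{lemma:snx}, Lemma~\ref{lemma:snxx}, Lemma~\ref{lemma:permuzza}) is designed to absorb. In particular, Lemma~\ref{lemma:snxx} isolates the delicate fact that prepending or permuting a generator past an application or a CbV application does not destroy strong normalization, and Lemma~\ref{lemma:snx} records that deleting $\nu$-binders (needed because $\RRED^\theta$ depends on $\theta$ as a list) preserves membership in $\SN^{\theta'}$. Since the closure properties for the reducibility predicates and the $\{\}$-case have already been established as separate lemmas, the induction itself is routine: each rule matches one closure property, so I expect no new combinatorial difficulty to arise and the proof reduces to assembling these pieces in the order the typing rules dictate.
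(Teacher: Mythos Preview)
Your proposal is correct and follows essentially the same approach as the paper: an induction on the typing derivation, with the standard cases handled by the closure properties of the reducibility predicates (the five clauses of the preceding lemma), and the one genuinely new case $\{t\}u$ dispatched by Lemma~\ref{lemma:permuzza}. The paper's proof is in fact even more terse, referring all cases except $\{t\}u$ to \cite{DLGH}, Proposition~29, and spelling out only the CbV application case exactly as you do.
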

\begin{proof}

The argument is by induction on $t$. All inductive cases are treated as in the proof of [\cite{DLGH}, Proposition 29], using the lemmas proved above; the only new case is the one below:
\begin{varitemize}
\item if $t$ is $\{t_{1}\}t_{2}$ then the last rule is
$$
\AXC{$\Gamma \vdash^{\theta} t_{1}: \sigma \To \tau$}
\AXC{$\Gamma \vdash^{\theta} t_{2}: \sigma$}
\BIC{$\Gamma \vdash^{\theta} t:\tau$}
\DP
$$
then by IH for all $u_{i}\in \RRED_{\sigma_{i}}^{X}$, 
$t_{1}[u_{1}/x_{1},\dots, u_{n}/x_{n}]\in \RRED_{\sigma\To \tau}^{\theta}$ and 
$t_{2}[u_{1}/x_{1},\dots, u_{n}/x_{n}]\in \RRED_{\sigma}$, so by Lemma \ref{lemma:permuzza} 
$
t[u_{1}/x_{1},\dots, u_{n}/x_{n}]=
\{ t_{1}[u_{1}/x_{1},\dots, u_{n}/x_{n}]\}(t_{2}[u_{1}/x_{1},\dots, u_{n}/x_{n}])\in \RRED_{\tau}^{\theta}
$.
\end{varitemize}

\end{proof}

\begin{theorem}
If $\Gamma \vdash^{\theta} t:\sigma$ is derivable in $\STLC$, then $t\in \SN^{\theta}$.
\end{theorem}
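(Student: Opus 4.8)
The plan is to derive the theorem as an immediate consequence of the substitution proposition just established, by instantiating it with the identity substitution. The only ingredient still required is that variables are themselves reducible, after which the conclusion is a two-line deduction, so the bulk of the work has already been carried out in the preceding lemmas.

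First I would record that every variable is reducible, i.e.\ $x \in \RRED^{\theta}_{\sigma}$ for each type $\sigma$, each list of names $\theta$, and each variable $x$. This is precisely clause~(2) of the reducibility lemma specialised to $n=0$: the term $x$ is a head variable applied to the empty list of arguments, and the (empty) list of arguments lies vacuously in $\SN^{\theta}$, whence $x \in \RRED^{\theta}_{\sigma}$. This is the standard base case of the logical-relations argument and presents no difficulty once the closure properties of $\SN^{\theta}$ are in place.

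Next, suppose $\Gamma \vdash^{\theta} t : \sigma$ is derivable with $\Gamma = \{x_{1}:\sigma_{1}, \dots, x_{n}:\sigma_{n}\}$. By the substitution proposition, for any choice of $u_{i} \in \RRED^{\theta}_{\sigma_{i}}$ we have $t[u_{1}/x_{1}, \dots, u_{n}/x_{n}] \in \RRED^{\theta}_{\sigma}$. Taking $u_{i} := x_{i}$, which is legitimate by the previous paragraph, the substitution is the identity, so $t = t[x_{1}/x_{1}, \dots, x_{n}/x_{n}] \in \RRED^{\theta}_{\sigma}$. Finally, clause~(1) of the reducibility lemma gives $\RRED^{\theta}_{\sigma} \subseteq \SN^{\theta}$, and therefore $t \in \SN^{\theta}$, which is the claim.

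At this final stage there is essentially no obstacle: all the technical content has been absorbed into the reducibility lemma, Lemma~\ref{lemma:permuzza}, and the substitution proposition, which in turn rest on the strong-normalisation closure facts for the permutative and $\{\}$-reductions (Lemmas~\ref{lemma:snxx} and the surrounding results). If one were to isolate the single most delicate point underlying the whole argument, it would be the $\{\}$-case of the substitution proposition, which depends on Lemma~\ref{lemma:permuzza} and hence on the commutation of $\{\}$-redexes with $\nu$-generators and $\oplus_{a}^{i}$-choices; but for the theorem itself, given those lemmas, the remaining argument is purely formal.
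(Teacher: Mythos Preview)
Your proof is correct and follows exactly the standard route that the paper leaves implicit: the theorem is stated without proof as an immediate consequence of the preceding substitution proposition and the reducibility lemma, and your argument (instantiate with the identity substitution using clause~(2) for variables, then apply clause~(1)) is precisely how that consequence is drawn.
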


From this, using Proposition \ref{prop:lcpltostlc} we immediately deduce:

\begin{corollary}
If $\Gamma \vdash^{X} t:\bone \Pto \sigma$ is derivable in $\STLC$ with no 0-ary instance of the ($\vee$)-rule, then $t\in \SN^{\theta}$, for any suitable $\theta$.
\end{corollary}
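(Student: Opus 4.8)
The plan is to reduce strong normalization in the counting-quantified systems to strong normalization of a purely simply-typed skeleton, and then to prove the latter by a reducibility (logical relations) argument adapted to the features of $\EVLL$. First I would introduce a simply typed $\lambda$-calculus $\STLC$ for $\EVLL$, whose types are generated by $\sigma ::= o \mid \sigma \To \sigma$ and whose judgements have the form $\Gamma \vdash^{\theta} t : \sigma$, where $\theta$ is a \emph{list} of names with $\FN(t) \subseteq \theta$. The point of carrying $\theta$ explicitly is that strong normalization must be measured relative to the name-closure operation $\nu^{\theta}$: the $\nu$-binders are what make the permutative machinery terminate, so I would define $\SN^{\theta}$ as the set of terms $t$ with $\FN(t)\subseteq\theta$ such that $\nu^{\theta}t$ is strongly normalizing under $\redalll$, and write $\sn^{\theta}(t)$ for the length of the longest reduction. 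A routine type-erasing map $|\FF s|$ (erasing counting quantifiers: $|o|=o$, $|\FF s \To \sigma| = |\FF s| \To |\sigma|$, $|\B C^{q}\FF s| = |\FF s|$) sends any $(\vee)$-free derivation of $\Gamma \vdash^{X} t : \bone \Pto \FF s$ in $\LCPL$ (or $\TCI$) to a derivation of $|\Gamma| \vdash^{\theta} t : |\FF s|$ in $\STLC$, so it suffices to prove that $\STLC$-typable terms lie in $\SN^{\theta}$.

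The core of the argument is the reducibility method. I would define predicates $\RRED^{\theta}_{\sigma}$ by induction on $\sigma$, setting $\RRED^{\theta}_{o} := \SN^{\theta}$ and $\RRED^{\theta}_{\sigma \To \tau} := \{\, t \mid \forall u \in \RRED^{\theta}_{\sigma},\ tu \in \RRED^{\theta}_{\tau}\,\}$, and then establish the standard (CR1)--(CR3) package together with closure lemmas tailored to the calculus: reducibility implies $\SN^{\theta}$; neutral applications $x u_{1}\dots u_{n}$ with strongly normalizing arguments are reducible; reducibility is preserved by $\beta$-expansion along a head redex; and crucially, reducibility is closed under the constructs $\oplus_{a}^{i}$ and $\nu a$. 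These closure properties for $\oplus$ and $\nu$ rest on the combinatorial lemmas about $\SN^{\theta}$ (the analogues of the lemmas establishing that $(t\oplus_{a}^{i}u)v_{1}\dots v_{n}$, $(\nu a.t)v_{1}\dots v_{n}$, and head $\beta$-redexes stay in $\SN^{\theta}$), which I would prove by induction on $|t| + \sn^{\theta}(t)$, inspecting every permutative reduction that can fire at the root.

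The genuinely new ingredient relative to the $\EVL$ argument of \cite{DLGH} is the CbV application operator $\{t\}u$, which is typed exactly like ordinary application. The key lemma here is that $\{t\}u \in \RRED^{\theta}_{\tau}$ whenever $t \in \RRED^{\theta}_{\sigma \To \tau}$ and $u \in \RRED^{\theta}_{\sigma}$: since $\{t\}u$ can only permute (via \eqref{eq:e0}, \eqref{eq:e1}, \eqref{eq:e2}) into ordinary applications or $\nu$/$\oplus$-distributions, I would show that strong normalization of $(tu)v_{1}\dots v_{n}$ transfers to $(\{t\}u)v_{1}\dots v_{n}$ by an induction that tracks these three permutations and appeals to the already-established $\oplus$- and $\nu$-closure of $\SN^{\theta}$. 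With this lemma in hand the main fundamental theorem---that any $\STLC$-typable term becomes reducible after substituting reducible terms for its free variables---follows by the usual induction on typing derivations, with the $\{\}$-case handled by the new lemma. Taking identity substitutions gives $t \in \RRED^{\theta}_{\sigma} \subseteq \SN^{\theta}$, hence strong normalization, and composing with the erasure map yields the theorem for both $\TCI$ and $\LCPL$.

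I expect the main obstacle to be the $\{\}$-closure step: the permutation $\{t\} \nu a.u \redpermm \nu a. tu$ pushes a $\nu$-binder outward and rebinds the name, so verifying that strong normalization is preserved requires carefully interleaving the auxiliary lemmas on $\nu$-permutation (the fact that deleting $\nu$-binders preserves $\SN$, and that $\{t\}\nu a.u$-redexes reduce into $\SN^{\theta}$ terms) rather than a direct syntactic induction. Getting the bookkeeping of names in $\theta$ right across these rebinding permutations, while keeping the induction measure decreasing, is the delicate point; once the $\SN^{\theta}$-closure lemmas are in place, the reducibility layer is entirely standard.
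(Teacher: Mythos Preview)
Your proposal is correct and matches the paper's approach essentially point for point: the paper also erases to a simply typed skeleton $\STLC$, defines $\SN^{\theta}$ via name-closure, proves the combinatorial closure lemmas for $\SN^{\theta}$ (your anticipated $\nu$-deletion lemma is exactly the paper's Lemma~\ref{lemma:snx}, and the $\{\}$-transfer lemmas are Lemma~\ref{lemma:snxx}), sets up $\RRED^{\theta}_{\sigma}$ as you describe, proves the $\{\}$-closure of reducibility (Lemma~\ref{lemma:permuzza}) by reducing to the $\SN^{\theta}$-level lemma that $(tu)v_{1}\dots v_{n}\in\SN^{\theta}$ implies $(\{t\}u)v_{1}\dots v_{n}\in\SN^{\theta}$, and concludes via the fundamental theorem and the erasure map (Proposition~\ref{prop:lcpltostlc}). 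Your identification of the $\{t\}\nu a.u$ permutation as the delicate step is also on target: the paper handles it exactly as you outline, via the auxiliary $\nu$-deletion lemma feeding into Lemma~\ref{lemma:snxx}(i)--(iii).
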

%
%
%
%
%
%
%

\subsubsection{Deterministic Normalization of $\TCINT$.}

We now adapt the proof of [\cite{DLGH}, Theorem 1] to $\TCINT$.

For this we consider an intersection type system $\STLCINT$ for $\EVL$ with the following types
$$
\sigma::= [] \mid o \mid \HNORM\mid\NORM\mid \sigma \To \sigma \mid \sigma\land \sigma 
$$
The rules of $\STLCINT$ include all those of $\STLC$ except the one for $\{\}$, together with the rules in Fig.~\ref{fig:rulesimpleint}.

\begin{figure}
\fbox{
\begin{minipage}{0.98\textwidth}
\begin{center}
\begin{tabular}{c c c}
$\AXC{$\Gamma \vdash^{\theta}t: \sigma$}
\UIC{$\Gamma\vdash^{\theta}t: \HNORM$}
\DP$
& & 
$\AXC{$\Gamma \vdash^{\theta}t: \sigma$}
\RL{\small($\sigma$ $\{[],\HNORM\}$-free)}
\UIC{$\Gamma\vdash^{\theta}t: \NORM$}
\DP$
\\ & & \\
$
\AXC{$\mathsf{FV}(t)\subseteq \Gamma$}
\UIC{$\Gamma \vdash^{\theta} t: []$}
\DP$
& & 
$
\AXC{$\Gamma \vdash^{\theta} t:\sigma $}
\AXC{$\Gamma \vdash^{\theta} t:\tau $}
\BIC{$\Gamma \vdash^{\theta} t: \sigma\land \tau$}
\DP
$
\end{tabular}
\end{center}

\end{minipage}
}
\caption{Typing rules for the simply typed $\lambda$-calculus $\STLCINT$.}
\label{fig:rulesimpleint}
\end{figure}

For any type $\FF s$ of $\TCINT$, we define the corresponding type $|\FF s|$ of $\STLCINT$ by 
$|\sigma|=\sigma$, for $\sigma\in\{\HNORM,\NORM\}$, $|\FF M\To \sigma|= |\FF M|\To |\sigma$, where 
$|[]|=[]$, $|[\FF s_{1},\dots, \FF s_{n+1}]|= |\FF s_{1}|\land \dots \land |\FF s_{n+1}|$, and 
$|\B C^{q}\sigma|=|\sigma|$.


The following is easily checked by induction:
\begin{proposition}\label{prop:tcinttostlcint}
If $\Gamma \vdash^{X}_{\lnot \vee}t: \bone \Pto \FF s$ is derivable in $\TCINT$, then 
$|\Gamma |\vdash^{\theta} t: |\FF s|$ is derivable in $\STLCINT$, for any suitable $\theta$.
\end{proposition}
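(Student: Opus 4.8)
The plan is to argue by structural induction on the given $\TCINT$-derivation $D$ of $\Gamma\vdash^X_{\lnot\vee}t:\bone\Pto\FF s$, translating each $\TCINT$-rule into the corresponding rule(s) of $\STLCINT$. The guiding observation is that the erasure $|\cdot|$ discards exactly the probabilistic bookkeeping: it forgets every $\B C$-prefix (since $|\B C^q\sigma|=|\sigma|$), every Boolean label $\bone$, every semantic side-condition $\mu(\cdot)\ge q$, and it collapses the several premises of the counting rule into a single $\STLCINT$-judgment. Concretely, I fix a list $\theta$ enumerating (a superset of) $X\supseteq\FN(t)$ and produce a derivation of $|\Gamma|\vdash^\theta t:|\FF s|$.

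Most cases are immediate matches. The rule $(\lambda)$ maps to $\STLCINT$-abstraction since $|\B C^q(\FF M\To\tau)|=|\FF M|\To|\tau|$; the rule $(@_\cap)$ maps to $\STLCINT$-application after using the $\STLCINT$ intersection-introduction rule to assemble the $n$ premises $|\Gamma|\vdash^\theta u:|\FF s_i|$ into $|\Gamma|\vdash^\theta u:|\FF M|=|\FF s_1|\land\dots\land|\FF s_n|$ (when $n=0$ one instead invokes the $[\,]$-rule, which applies because a standard free-variable invariant guarantees $\mathsf{FV}(u)\subseteq\Gamma$). The rules $(\HNORM)$ and $(\NORM)$ map to their $\STLCINT$ namesakes, the latter after the small auxiliary fact that erasure sends safe—in particular $\{[\,],\HNORM\}$-free—types to $\{[\,],\HNORM\}$-free types, which holds because $|\cdot|$ produces $[\,]$ only from $|[\,]|$ and $\HNORM$ only from $|\HNORM|$. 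For the inherited $(\oplus)$-rule both premises carry the same $\FF s$, hence the same erased type $|\FF s|$, so $\STLCINT$'s choice rule applies (its side-condition $a\in\theta$ holds since $a\in X\cup\{a\}$). The counting rule $(\mu_\Sigma)$ is where the collapse happens: all $n+1$ premises erase to the single judgment $|\Gamma|\vdash^{a\cdot\theta}t:|\sigma|$, so I keep one of them and apply the $\STLCINT$ generator rule, discarding the measures $\mu(\bthree_i)\ge s_i$ and the Boolean constraints.

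The only delicate case is the axiom $(\mathrm{id}_\preceq)$, since $\STLCINT$ as presented offers only the flat axiom $\Gamma,x:\sigma\vdash^\theta x:\sigma$ and no explicit subsumption or intersection-projection. I would first record a subtyping-admissibility lemma for $\STLCINT$, in the spirit of Lemma~\ref{lemma:order}: the standard intersection preorder $\le$ (with $\sigma\land\tau\le\sigma$, covariant arrows with contravariant arguments, and $[\,]$, $\HNORM$ top-like) is admissible as a subsumption rule, and moreover $\FF s\preceq\FF t$ in $\TCINT$ entails $|\FF s|\le|\FF t|$ by an easy induction matching $\preceq$ clause-by-clause with $\le$ (note that $\preceq^*$ on multisets erases to intersection subtyping). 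Granting this, the axiom case follows: from $|\Gamma|,x:|\FF M|\vdash^\theta x:|\FF M|$ and $|\FF M|=|\FF s_1|\land\dots\land|\FF s_n|\le|\FF s_i|\le|\FF t|$ (projection, then the erased $\FF s_i\preceq\FF t$), subsumption yields $x:|\FF t|$. I expect this subsumption lemma to be the main—though still routine—obstacle, precisely because it is the one place where the flat $\STLCINT$ axiom has to be reconciled with the richer $\preceq$-aware axiom of $\TCINT$.
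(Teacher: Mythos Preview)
The paper offers no proof beyond the sentence ``easily checked by induction'', so your case analysis is already more detailed than what the paper provides, and it follows the same inductive scheme. You are also right that the only case requiring thought is the axiom $(\mathrm{id}_{\preceq})$.

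There is, however, a genuine gap in your proposed resolution of that case. You claim that subsumption for the usual intersection preorder (in particular the projections $\sigma\land\tau\le\sigma$) is \emph{admissible} in $\STLCINT$. It is not: the system, as written in Fig.~\ref{fig:rulesimpleint}, has $\land$-introduction but no $\land$-elimination, and its axiom is the flat $\Gamma,x{:}\sigma\vdash^{\theta}x:\sigma$. Hence from a declaration $x:|\FF M|=|\FF s_{1}|\land\dots\land|\FF s_{n}|$ no rule of $\STLCINT$ lets you reach $x:|\FF s_{i}|$, let alone $x:|\FF t|$ with $|\FF s_{i}|\le|\FF t|$. Concretely, from the axiom $x:(o\To o)\land(\NORM\To o)\vdash^{\theta}x:(o\To o)\land(\NORM\To o)$ none of the available rules produce $x:o\To o$. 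So the admissibility lemma you intend to ``record'' is false for $\STLCINT$ as presented.

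This is arguably a sloppiness in the paper's definition of $\STLCINT$ rather than in your reasoning. The clean repair is to add $\land$-elimination (or, equivalently, a subsumption rule for the preorder you describe) as a \emph{primitive} rule of $\STLCINT$. This is harmless for everything downstream: the reducibility predicates already validate it, since $\HNRED_{\sigma\land\tau}^{\theta}=\HNRED_{\sigma}^{\theta}\cap\HNRED_{\tau}^{\theta}\subseteq\HNRED_{\sigma}^{\theta}$ (and likewise for $\NNRED$, $\SNRED$), and more generally $\sigma\le\tau$ implies $\HNRED_{\sigma}^{\theta}\subseteq\HNRED_{\tau}^{\theta}$ by the same induction underlying Lemma~\ref{lemma:inclusion}. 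With that single addition your plan goes through exactly as you wrote it; without it, the proposition as stated does not hold.
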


Let $\Lnu$ indicate the sets of $\lambda$-terms of $\EVL$, and for all finite set of names $X$, let $\Lnu^{X}$ indicate the $\lambda$-terms with free names in $X$. 
For any finite list of names $\theta$ contained in $X$, we let 
\begin{align*}
\HNN^{\theta} &:=\{t\in \Lnu^{X}\mid \nu^{\theta}t\text{ is head-normalizing}\}
\\
\NNN^{\theta} &:=\{t\in \Lnu^{X}\mid\nu^{\theta} t\text{ is normalizing}\}
\\
\SN^{\theta} &:=\{t\in \Lnu^{X}\mid \nu^{\theta}t\text{ is strongly normalizing}\}
\end{align*}

We define two families of reducibility predicates $\NNRED_{\sigma}^{\theta}, \HNRED_{\sigma}^{\theta}\subseteq \Lnu^{X}$ as follows:
\begin{align*}
 \HNRED_{[]}^{\theta} & := \Lnu^{X}
 & \NNRED_{[]}^{\theta} & := \NNRED_{\HNORM}= \Lnu^{X} \\
\HNRED_{o}^{\theta}& :=\HNRED_{\HNORM}^{\theta}=
\HNRED_{\HNORM}^{\theta}=
\HNN^{\theta} & \NNRED_{o}^{\theta}& :=
\NNRED_{\NORM}^{\theta}=
 \NN^{\theta} \\
 \HNRED_{\sigma\To \tau}^{\theta}& := \{t\in \Lnu^{X}\mid \forall u \in \HNRED_{\sigma}^{\theta}, \ tu\in \HNRED_{\tau}^{\theta}\}& 
\NNRED_{\sigma\To \tau}^{\theta}& := \{t\in \Lnu^{X}\mid \forall u \in \NNRED_{\sigma}^{\theta}, \ tu\in \NNRED_{\tau}^{\theta}\}\\
\HNRED_{\sigma\land \tau}^{\theta}&:= \HNRED_{\sigma}^{\theta}\cap \HNRED_{\tau}^{\theta} 
&
\NNRED_{\sigma\land \tau}^{\theta}&:= \NNRED_{\sigma}^{\theta}\cap \NNRED_{\tau}^{\theta}  
\end{align*}

Moreover, we define a third family of reducibility predicates $\SNRED_{\sigma}^{\theta}\subseteq \Lnu^{X}$, indexed on all $\{[],\HNORM\}$-free types as follows:
\begin{align*}
\SNRED_{o}^{\theta}&=
\SNRED_{\NORM}^{\theta}=
 \SN^{\theta} \\
\SNRED_{\sigma\To \tau}^{\theta}& := \{t\in \Lnu^{X}\mid \forall u \in \NNRED_{\sigma}^{\theta}, \ tu\in \NNRED_{\tau}^{\theta}\}
\\
\SNRED_{\sigma\land \tau}^{\theta}&:= \NNRED_{\sigma}^{\theta}\cap \NNRED_{\tau}^{\theta}  
\end{align*}

%

%
%
\begin{definition}[non-trivial types]\label{def:nontrivial}
The set of \emph{non-trivial types} is defined by induction as follows:
\begin{varitemize}

\item $o, \HNORM, \NORM$ are non-trivial;
\item if $\tau$ is non-trivial, $\sigma\To \tau$ is non-trivial;
\item if either $\sigma$ or $\tau$ are non-trivial, $\sigma\land \tau$ is non-trivial.

\end{varitemize}
%
%
%
%
%
%
%
%
\end{definition}

\begin{lemma}\label{lemma:norms}
For any type $\sigma$, 
\begin{varitemize}
\item[(i.)] if $\sigma$ is non-trivial then  $\HNRED_{\sigma}^{\theta}\subseteq \HNN^{\theta}$;
\item[(ii.)] if $\sigma$ is $\{[],\HNORM\}$-free, then $\NNRED_{\sigma}^{\theta}\subseteq \NNN^{\theta}$;
\item[(iii.)] if $\sigma$ is $\{[],\HNORM\}$-free, then  $\SNRED_{\sigma}^{\theta}\subseteq \SN^{\theta}$.

\end{varitemize}
\end{lemma}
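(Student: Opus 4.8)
The plan is to establish all three inclusions uniformly by the \emph{reducibility candidates} method, proving each ``CR1'' inclusion by induction on the type $\sigma$ simultaneously with a companion ``CR2'' claim stating that neutral (i.e.\ variable-headed) terms are reducible. Concretely, for the head-normalization predicate I would prove by a single induction on $\sigma$: (CR1) if $\sigma$ is non-trivial (Definition~\ref{def:nontrivial}) then $\HNRED_\sigma^\theta \subseteq \HNN^\theta$, and (CR2) for \emph{every} type $\sigma$ and all terms $u_1,\dots,u_m$, the neutral term $x u_1\cdots u_m$ lies in $\HNRED_\sigma^\theta$. For the two normalization predicates I would prove the analogous pair, now restricted throughout to $\{[],\HNORM\}$-free types: (CR1) $\NNRED_\sigma^\theta\subseteq\NNN^\theta$ (resp.\ $\SNRED_\sigma^\theta\subseteq\SN^\theta$), and (CR2) $x u_1\cdots u_m\in\NNRED_\sigma^\theta$ (resp.\ $\SNRED_\sigma^\theta$) whenever each $u_i$ is normalizing (resp.\ strongly normalizing). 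The restriction to $\{[],\HNORM\}$-free types is preserved by the induction, since every subexpression of a $\{[],\HNORM\}$-free type is again $\{[],\HNORM\}$-free; the excluded types are exactly those on which the predicate collapses to all of $\Lnu^X$ (recall $\NNRED_{[]}=\NNRED_\HNORM=\Lnu^X$), which is why they are ruled out of the statement.

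The ground cases are immediate: for the non-trivial ground types $o,\HNORM,\NORM$ the predicate $\HNRED$ equals $\HNN$ by definition, and similarly $\NNRED_o=\NNRED_\NORM=\NNN$ and $\SNRED_o=\SNRED_\NORM=\SN$, so CR1 holds with equality. The companion CR2 at a ground type amounts to observing that a variable-headed term $x u_1\cdots u_m$ is head-normal in the relevant sense: since its head is a variable it admits no head $\beta$-redex, and as $\redperm$ is strongly normalizing (Theorem~\ref{thm:confluence}) every head-reduction of $\nu^\theta(x u_1\cdots u_m)$ is permutative and hence terminates, giving membership in $\HNN^\theta$; for $\NNN^\theta$ and $\SN^\theta$ the same term normalizes (resp.\ strongly normalizes) exactly when all the $u_i$ do, which is the standard closure fact (an analogue of Lemma~\ref{lemma:cases}(iii) transported to $\EVL$).

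The inductive step for the arrow type $\sigma\To\tau$ is where the two directions interact. For CR1, given $t\in\HNRED_{\sigma\To\tau}^\theta$ with $\sigma\To\tau$ non-trivial (so $\tau$ is non-trivial), CR2 at $\sigma$ supplies the inhabitant $x\in\HNRED_\sigma^\theta$, whence $t x\in\HNRED_\tau^\theta\subseteq\HNN^\theta$ by the induction hypothesis, and a \emph{reverse-reduction} lemma---if $t x$ head-normalizes then so does $t$---yields $t\in\HNN^\theta$; the $\NNN$ and $\SN$ cases are identical, using that substitution of the variable $x$ is benign. For CR2 at $\sigma\To\tau$ one must show $x u_1\cdots u_m u\in\HNRED_\tau^\theta$ for each $u\in\HNRED_\sigma^\theta$, which is just CR2 at $\tau$ applied to the longer neutral term: for $\HNRED$ no condition on $u$ is needed, while for $\NNRED$/$\SNRED$ one uses CR1 at $\sigma$ to know $u$ is normalizing, which is legitimate because $\sigma$ is $\{[],\HNORM\}$-free. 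The intersection case is routine: $\RED_{\sigma\land\tau}^\theta=\RED_\sigma^\theta\cap\RED_\tau^\theta$, so CR1 follows from the induction hypothesis applied to whichever conjunct is non-trivial (resp.\ $\{[],\HNORM\}$-free), and CR2 follows by intersecting the two instances.

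The main obstacle I expect is the package of reverse-reduction lemmas---$t u\in\HNN^\theta\Rightarrow t\in\HNN^\theta$ and its $\NNN$/$\SN$ analogues---together with the ground-level neutrality facts, all of which must be verified in the presence of $\nu$-binders and permutative reductions rather than for plain $\lambda$-terms. The head-normalization case is the delicate one: one has to argue that head-reduction of $\nu^\theta(t u)$ never descends into the argument $u$ and that permutations of $\nu$ and $\oplus$ along the spine do not obstruct the head of $t$, so that any infinite head-reduction of $\nu^\theta t$ lifts to one of $\nu^\theta(t u)$. I would handle this using the $RBT$ analysis and the confluence and strong-normalization results for $\redperm$ already established (Theorem~\ref{thm:confluence}, Lemma~\ref{lemma:cases}), which isolate exactly the commutations needed.
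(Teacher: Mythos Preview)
Your proposal is correct and is exactly the approach the paper takes: the paper's proof of Lemma~\ref{lemma:norms} says only that each part is ``proved in a standard way by induction on all types $\sigma$, also proving at each step'' the companion CR2 clause for neutral terms $xt_1\cdots t_n$ (with the side-condition $t_i\in\NNN^\theta$ resp.\ $t_i\in\SN^\theta$ in parts (ii.)\ and (iii.)), which is precisely the CR1/CR2 simultaneous induction you outline. Your identification of the reverse-application step ($tx\in\HNN^\theta\Rightarrow t\in\HNN^\theta$, etc.) as the only point requiring care in the presence of $\nu$ and $\oplus$ is apt, though the paper treats it as routine and does not spell it out.
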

\begin{proof}
(i.) is proved in a standard way by induction on all types $\sigma$, also proving at each step 
that for any term $t$ of the form 
$xt_{1}\dots t_{n}$, $t\in \HNRED_{\sigma}^{\theta}$.

(ii.) is proved in a standard way by induction on all types $\sigma$, also proving at each step 
that for any term $t$ of the form 
$xt_{1}\dots t_{n}$,  where $t_{1},\dots, t_{n}\in \NNN^{\theta}$, $t\in \NNRED_{\sigma}^{\theta}$.

(iii.) is proved in a standard way by induction on all $\{[],\HNORM\}$-free types $\sigma$, also proving at each step 
that for any term $t$ of the form 
$xt_{1}\dots t_{n}$,  where $t_{1},\dots, t_{n}\in \SN^{\theta}$, $t\in \SNRED_{\sigma}^{\theta}$.

\end{proof}

\begin{proposition}
If $\Gamma \vdash^{\theta} t:\sigma$ is derivable in $\STLCINT$, where $\Gamma=\{x_{1}:\sigma_{1},\dots, x_{n}:\sigma_{n}\}$, and $\FN(t)\subseteq X$, then 
\begin{varitemize}

\item[(i.)] for all $ u_{i}\in \HNRED_{\sigma_{i}}^{\theta}$, 
$ t[u_{1}/x_{1},\dots, u_{n}/x_{n}]\in \HNRED_{\sigma}$;

\item[(ii.)] for all $ u_{i}\in \NNRED_{\sigma_{i}}^{\theta}$, 
$ t[u_{1}/x_{1},\dots, u_{n}/x_{n}]\in \NNRED_{\sigma}$;
\item[(iii.)]
if the types $[]$ and $\HNORM$ never occurs in the derivation, then 
for all $ u_{i}\in \SNRED_{\sigma_{i}}^{\theta}$, 
$ t[u_{1}/x_{1},\dots, u_{n}/x_{n}]\in \SNRED_{\sigma}$;

\end{varitemize}
\end{proposition}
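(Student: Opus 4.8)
The plan is to prove the three statements (i.)--(iii.) by parallel structural inductions on the typing derivation $D$ of $\Gamma\vdash^{\theta}t:\sigma$ in $\STLCINT$, the three parts following the same schema. Working by induction on $D$ rather than on $t$ is essential, since the coercion rules $([])$, $(\HNORM)$, $(\NORM)$ and the intersection rule are not syntax-directed. The two external ingredients are the ground-type inclusions already recorded in Lemma \ref{lemma:norms}, and a package of \emph{closure lemmas} for the predicate families $\HNRED$, $\NNRED$ and $\SNRED$, playing for $\EVL$ the role that Lemma \ref{lemma:cases} plays for $\EVLL$.

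Concretely, before the main induction I would establish, for each of the three families $\mathcal R\in\{\HNRED,\NNRED,\SNRED\}$, the following closure properties, mirroring Lemma \ref{lemma:cases}: (a) a term $x\,v_{1}\dots v_{n}$ whose arguments are (respectively) head-normalizing, normalizing, or strongly normalizing lies in $\mathcal R_{\sigma}^{\theta}$; (b) $\beta$-expansion, i.e.\ if $(t[u/x])v_{1}\dots v_{n}\in\mathcal R_{\sigma}^{\theta}$ --- together with $u$ strongly normalizing in the $\SNRED$ case --- then $(\lambda x.t)u\,v_{1}\dots v_{n}\in\mathcal R_{\sigma}^{\theta}$; (c) closure under choice, i.e.\ if $t\,\vec v$ and $u\,\vec v$ lie in $\mathcal R_{\sigma}^{\theta}$ and $a$ occurs in $\theta$, then $(t\oplus_{a}^{i}u)\vec v\in\mathcal R_{\sigma}^{\theta}$; and (d) closure under the generator, i.e.\ if $t\,\vec v\in\mathcal R_{\sigma}^{a\cdot\theta}$ and the free names of each $v_{i}$ avoid $a$, then $(\nu a.t)\vec v\in\mathcal R_{\sigma}^{\theta}$. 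For $\SNRED$ these are the corresponding lemmas of \cite{DLGH} restricted to $\EVL$; for $\HNRED$ and $\NNRED$ they are proved by the same permutative-reduction analysis, tracking head-normalization, resp.\ normalization, of the relevant subterms in place of strong normalization.

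With this package in hand the induction is routine, rule by rule. The variable rule is immediate, since the substitution replaces $x$ by a term assumed reducible; abstraction follows from the induction hypothesis together with $\beta$-expansion (b); application is direct from the definition of $\mathcal R_{\sigma\To\tau}^{\theta}$; the $\oplus$-rule is (c) and the $\nu$-rule is (d), the side conditions $a\in\theta$ and the free-name condition matching the shift from $a\cdot\theta$ to $\theta$. The intersection rule is immediate, as $\mathcal R_{\sigma\land\tau}^{\theta}=\mathcal R_{\sigma}^{\theta}\cap\mathcal R_{\tau}^{\theta}$ by definition. The $([])$-rule is vacuous for $\HNRED$ and $\NNRED$, which equal the full set $\Lnu^{X}$ at type $[]$, and is excluded from the $\SNRED$ induction by the hypothesis of (iii.) that $[]$ and $\HNORM$ never occur. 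Finally the coercion rules $(\HNORM)$ and $(\NORM)$ are where Lemma \ref{lemma:norms} enters: the induction hypothesis places the substituted term in $\mathcal R_{\sigma}^{\theta}$, and Lemma \ref{lemma:norms}(i.), resp.\ (ii.), upgrades this to membership in $\HNN^{\theta}=\HNRED_{\HNORM}^{\theta}$, resp.\ $\NNN^{\theta}=\NNRED_{\NORM}^{\theta}$, using non-triviality, resp.\ $\{[],\HNORM\}$-freeness, of the premise type.

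The main obstacle is exactly this last point: guaranteeing that the side conditions of Lemma \ref{lemma:norms} are met wherever a coercion rule fires. For $(\NORM)$ the $\{[],\HNORM\}$-free proviso is part of the rule, so (ii.) applies verbatim; but for $(\HNORM)$ one must know that the premise type $\sigma$ is non-trivial in the sense of Definition \ref{def:nontrivial}, and this is the delicate bookkeeping --- a coercion to $\HNORM$ from a trivial type carries no head-normalization content, so the argument must be organised so that such degenerate coercions are ruled out or factored away. The second, more computational, difficulty lies in the closure lemmas (c) and (d) for $\HNRED$ and $\NNRED$: because the permutative rules of $\EVL$ are deterministic, one needs a careful case analysis of how a redex can surface in $(t\oplus_{a}^{i}u)\vec v$ and in $(\nu a.t)\vec v$ and of the level shift from $a\cdot\theta$ to $\theta$, exactly as in the strong-normalization case but now preserving the weaker (head-)normalization invariants.
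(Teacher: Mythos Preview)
Your proposal is correct and follows the same inductive route as the paper, which also argues by induction on the derivation using the closure properties (a)--(d) you describe (the $\EVL$-analogues of Lemma~\ref{lemma:cases}, inherited from \cite{DLGH}). Two remarks. First, the paper only spells out case~(ii.), and there the $(\HNORM)$ coercion is handled \emph{trivially}, not via Lemma~\ref{lemma:norms}: since $\NNRED_{\HNORM}^{\theta}=\Lnu^{X}$ by definition, there is nothing to check. Second, your worry about non-triviality of the premise of $(\HNORM)$ in case~(i.) is well-founded and in fact sharper than the paper's treatment: as literally stated, (i.) is too strong, since one can derive $\vdash^{\theta}\Omega:\HNORM$ by the $([])$-axiom followed by $(\HNORM)$, yet $\Omega\notin\HNN^{\theta}=\HNRED_{\HNORM}^{\theta}$. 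The paper's downstream uses (Corollary~\ref{cor:headnorm}) only concern $\STLCINT$-derivations obtained by translating $\TCINT$-derivations, where every $(\HNORM)$-premise is automatically non-trivial, so the gap is harmless in context; but your instinct that a side condition must be tracked is correct rather than merely ``delicate.''
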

\begin{proof}

Both arguments are by induction on the typing derivation of $t$. All inductive cases are treated as in the proof of [\cite{DLGH}, Proposition 29], except for those below (we consider here only case (ii.)):
\begin{varitemize}

\item if the last rule is
$$
\AXC{$\Gamma \vdash^{\theta}t: \sigma$}
\UIC{$\Gamma \vdash^{\theta} t: \HNORM$}
\DP
$$
then for all $u_{i}\in \NNRED_{\sigma_{i}}^{\theta}$, 
$t[u_{1}/x_{1},\dots, u_{n}/x_{n}]\in \Lnu^{X}= \NNRED_{\HNORM}^{\theta}$.

\item if the last rule is
$$
\AXC{$\Gamma \vdash^{\theta}t: \sigma$}
\UIC{$\Gamma \vdash^{\theta} t: \NORM$}
\DP
$$
where $\sigma$ is $\{[],\HNORM\}$-free,
then for all $u_{i}\in \NNRED_{\sigma_{i}}^{\theta}$, 
$t[u_{1}/x_{1},\dots, u_{n}/x_{n}]\in \NNRED_{\sigma}^{\theta}$, and by Lemma \ref{lemma:norms}
$\NNRED_{\sigma}^{\theta}\subseteq \NNN^{\theta}=\NNRED_{\NORM}^{\theta}$, so 
$t[u_{1}/x_{1},\dots, u_{n}/x_{n}]\in \NNRED_{\NORM}^{\theta}$.

\item if the last rule is
$$
\AXC{$\FN(t)\subseteq X$}
\UIC{$\Gamma \vdash^{\theta} t: []$}
\DP
$$
then for all $u_{i}\in \NNRED_{\sigma_{i}}^{\theta}$, 
$t[u_{1}/x_{1},\dots, u_{n}/x_{n}]\in \Lnu^{X}= \NNRED_{[]}^{\theta}$.

\item if the last rule is 
$$
\AXC{$\Gamma\vdash^{\theta} t:\sigma$}
\AXC{$\Gamma\vdash^{\theta}t:\tau$}
\BIC{$\Gamma \vdash^{\theta} t:\sigma\land \tau$}
\DP
$$
then for all $u_{i}\in \NNRED_{\sigma_{i}}^{\theta}$,  by IH 
$t[u_{1}/x_{1},\dots, u_{n}/x_{n}]\in  \NNRED_{\sigma}^{\theta}$ and
$t[u_{1}/x_{1},\dots, u_{n}/x_{n}]\in  \NNRED_{\tau}^{\theta}$, whence
$t[u_{1}/x_{1},\dots, u_{n}/x_{n}]\in  \NNRED_{\sigma}^{\theta}\cap \NNRED_{\tau}^{\theta}= \NNRED_{\sigma\land \tau}^{\theta}$.

\end{varitemize}

\end{proof}

\begin{proposition}
If $\Gamma \vdash^{\theta} t: \sigma$ is derivable in $\STLCINT$, then:
\begin{varitemize}
\item[(i.)] if $[]$ is non-trivial, then $\nu^{\theta}t$ is head-normalizable;
\item[(ii.)] if $\Gamma$ and $\sigma$ are $\{[],\HNORM\}$-free, then $\nu^{\theta}t$ is normalizable;
\item[(iii.)] if $[]$ and $\HNORM$ never occur in the derivation, then $\nu^{\theta}t$ is strongly normalizable.

\end{varitemize}
\end{proposition}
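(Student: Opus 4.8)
The plan is to close the reducibility argument in the standard way: the statement is nothing more than the adequacy consequence of the three substitution clauses of the preceding reducibility proposition, instantiated at the \emph{identity} substitution. All three items follow the same three-step template, differing only in which family of candidates ($\HNRED$, $\NNRED$, or $\SNRED$) is invoked, so I would treat them in parallel rather than from scratch.

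First I would check that every variable lies in the candidate of its declared type. For item~(i.), the auxiliary fact proved alongside Lemma~\ref{lemma:norms}(i.) — that every neutral term $x t_1 \cdots t_n$ belongs to $\HNRED_{\tau}^{\theta}$ — gives, with $n=0$, that $x_i \in \HNRED_{\sigma_i}^{\theta}$. For item~(ii.) the same remark applied to Lemma~\ref{lemma:norms}(ii.) yields $x_i \in \NNRED_{\sigma_i}^{\theta}$ whenever $\sigma_i$ is $\{[],\HNORM\}$-free, while for the trivial types $\sigma_i \in \{[],\HNORM\}$ the predicate $\NNRED_{\sigma_i}^{\theta}$ is by definition all of $\Lnu^{X}$, so $x_i$ belongs to it unconditionally; hence $x_i \in \NNRED_{\sigma_i}^{\theta}$ for every declaration in $\Gamma$. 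Item~(iii.) is analogous via Lemma~\ref{lemma:norms}(iii.): since $[]$ and $\HNORM$ never occur in the derivation, every $\sigma_i$ is $\{[],\HNORM\}$-free, so $\SNRED_{\sigma_i}^{\theta}$ is defined and contains the bare variable $x_i$. Feeding $u_i := x_i$ into the appropriate clause of the preceding reducibility proposition then gives $t = t[x_1/x_1,\dots,x_n/x_n]$ in the relevant candidate: $t \in \HNRED_{\sigma}^{\theta}$, $t \in \NNRED_{\sigma}^{\theta}$, or — using that the side condition ``$[]$ and $\HNORM$ never occur'' is exactly the hypothesis carried by clause~(iii.) — $t \in \SNRED_{\sigma}^{\theta}$. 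Finally, Lemma~\ref{lemma:norms} converts candidate membership into the desired normalization property: under the non-triviality of $\sigma$ (the standing hypothesis of item~(i.)) we get $t \in \HNN^{\theta}$; under $\{[],\HNORM\}$-freeness of $\sigma$ we get $t \in \NNN^{\theta}$; and in the third case $t \in \SN^{\theta}$. By the very definitions of $\HNN^{\theta}$, $\NNN^{\theta}$, and $\SN^{\theta}$, these memberships say precisely that $\nu^{\theta}t$ is head-normalizable, normalizable, and strongly normalizable, respectively.

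The only delicate point, and where I would be most careful, is keeping the three side conditions aligned between the context $\Gamma$ and the subject type $\sigma$. Concretely, one must verify that the trivial types $[]$ and $\HNORM$ occurring in $\Gamma$ in case~(ii.) do not block reducibility of the variables — which is handled by $\NNRED_{[]}^{\theta}=\NNRED_{\HNORM}^{\theta}=\Lnu^{X}$ — and that in case~(iii.) the blanket absence of $[]$ and $\HNORM$ guarantees both that $\SNRED_{\sigma_i}^{\theta}$ is well defined for every $\sigma_i$ and that clause~(iii.) of the substitution proposition (which is only stated under that absence) is applicable. No further combinatorics is required, since the substantive content — closure of the candidates under $\redall$ together with the neutral-term lemmas — already resides in Lemma~\ref{lemma:norms} and in the preceding substitution proposition.
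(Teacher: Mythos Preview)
Your proposal is correct and is exactly the standard closing of a reducibility argument that the paper leaves implicit: the paper states the proposition without proof, treating it as the immediate consequence of Lemma~\ref{lemma:norms} and the preceding substitution proposition obtained by instantiating $u_i := x_i$. Your one unnecessary detour is in item~(ii.), where you handle the case $\sigma_i \in \{[],\HNORM\}$ in the context even though the hypothesis already rules it out; but this is harmless over-caution, not an error.
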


Using Proposition \ref{prop:tcinttostlcint}

\begin{corollary}\label{cor:headnorm}
If $\Gamma \vdash^{X}_{\lnot\vee} t:\bone \Pto \FF s$ in $\TCINT$, then for all suitable $\theta$:
\begin{varitemize}
\item[(i.)] $\nu^{\theta}t$ is head-normalizable;
\item[(ii.)] if $\Gamma$ and $\sigma$ are $\{[],\HNORM\}$-free, then $\nu^{\theta}t$ is normalizable;
\item[(iii.)] if $[]$ and $\HNORM$ never occur in the derivation, then $\nu^{\theta}t$ is strongly normalizable.

\end{varitemize}
\end{corollary}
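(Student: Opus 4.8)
The plan is to obtain Corollary \ref{cor:headnorm} as an immediate consequence of the preceding Proposition on $\STLCINT$, by transporting the hypothesis along the type-erasing translation of Proposition \ref{prop:tcinttostlcint}. First I would feed the given derivation $\Gamma \vdash^{X}_{\lnot\vee} t:\bone \Pto \FF s$ into Proposition \ref{prop:tcinttostlcint}, producing a $\STLCINT$-derivation $|\Gamma| \vdash^{\theta} t : |\FF s|$ for any suitable list $\theta$. Since the three conclusions of the target Proposition are stated for $\nu^\theta t$ and are exactly head-normalizability, normalizability, and strong normalizability, the entire argument reduces to checking that $|\cdot|$ carries the three side-conditions of the corollary onto the side-conditions required by that Proposition.

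For (i) I would show that no hypothesis is actually needed: writing $\FF s=\B C^{q}\sigma$, so that $|\FF s|=|\sigma|$, the translated type ends — after stripping arrows — in one of the ground types $o,\HNORM,\NORM$, each non-trivial by Definition \ref{def:nontrivial}; since an arrow type is non-trivial as soon as its terminal ground type is, $|\FF s|$ is always non-trivial and the head-normalization clause applies unconditionally. For (ii) I would note that the only clauses of $|\cdot|$ able to produce $[]$ or $\HNORM$ are $|[]|=[]$ and $|\HNORM|=\HNORM$, so $\{[],\HNORM\}$-freeness of $\Gamma$ and $\FF s$ passes to $|\Gamma|$ and $|\FF s|$, and the normalization clause applies. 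For (iii) I would lift this observation from types to whole derivations: as the translation in Proposition \ref{prop:tcinttostlcint} proceeds by induction on the source derivation and replaces each type occurrence $\FF t$ by $|\FF t|$, a ground type $\alpha\in\{[],\HNORM\}$ occurs in the image derivation precisely where it occurs in the source, so a derivation in which $[]$ and $\HNORM$ never appear is mapped to one with the same property, and the strong-normalization clause applies.

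The only real obstacle is a bookkeeping gap in (iii): Proposition \ref{prop:tcinttostlcint} as stated promises only $\STLCINT$-derivability, not that the produced derivation inherits the ``no $[]$, no $\HNORM$'' condition. To close it I would strengthen the inductive proof of that Proposition into the sharper claim that the image derivation uses a ground type in $\{[],\HNORM\}$ only where the source does — a statement that is immediate rule-by-rule, since $|\cdot|$ never fabricates a fresh $[]$ or $\HNORM$ from a type not already containing it. With this refinement in place, all three items follow from a single invocation of the preceding Proposition, which completes the proof.
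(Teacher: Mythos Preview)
Your proposal is correct and follows exactly the paper's approach: the paper simply writes ``Using Proposition \ref{prop:tcinttostlcint}'' before stating the corollary, leaving implicit precisely the three preservation checks (non-triviality of $|\FF s|$, preservation of $\{[],\HNORM\}$-freeness, and preservation of the absence of $[]$ and $\HNORM$ in the derivation) that you spell out. Your identification of the bookkeeping gap in (iii) and the proposed rule-by-rule strengthening of Proposition \ref{prop:tcinttostlcint} is the right way to make the paper's one-line justification rigorous.
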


\subsection{Probabilistic Reducibility Candidates for $\EVL$.}\label{subs:norma1}

In this subsection and the following one we work with a general class of types which comprises both those of $\LCPL$ and those of $\TCINT$. 
These types are defined by the grammar below:
$$
\sigma::= [] \mid o \mid \HNORM\mid \NORM \mid \sigma\To \sigma \mid \sigma \land \sigma \mid  \B C^{q}\sigma
$$

It is clear that any type of $\LCPL$ is a type of the grammar above. For the types of $\TCINT$ it suffices to see
$\CC M=[\FF s_{1},\dots, \FF s_{n}]$ as $(\dots(\FF s_{1}\land \FF s_{2})\land \dots \land  \FF s_{n-1})\land \FF s_{n}$, where $n>0$, and as $[]$ otherwise.

For any type $\sigma$, we let the positive real $\srank\sigma\in [0,1]$ be defined by $\srank{[]}=0$, $\srank o=\srank\HNORM=\srank\NORM=1$, $\srank{\sigma\To \tau}=\srank\sigma \cdot \srank \tau$, $\srank{\sigma\land \tau}= \max\{\srank{\sigma}, \srank{\tau}\}$, and $\srank{\B C^{q}\sigma}=q\cdot \srank\sigma$.

One can check that for all type $\sigma$ coming from either $\LCPL$ or $\TCINT$, $\srank{\sigma}>0$.

%
%
%

Let us establish a few preliminary lemmas.

\begin{lemma}\label{lemma:t}
For all $t\in \CC T$ and variables $x,y$, with $y$ not occurring in $t$, 
\begin{varitemize}
\item[(i.)] $ \HNF(t)=\HNF(t[y/x])$; 
\item[(ii.)] $\NF(t)=\NF(t[y/x])$.
\end{varitemize}
\end{lemma}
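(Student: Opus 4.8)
The plan is to reduce both parts to a single structural invariant: renaming a free variable by a fresh one is a measure-preserving bijection on the pseudo-value distribution. Write $\rho = \cdot[y/x]$ for the renaming. Since $y$ does not occur in $t$, $\rho$ is injective on every set of subterms appearing in the development of $t$, its inverse on the image being $\cdot[x/y]$. First I would collect the routine syntactic preservation facts: $\rho$ preserves membership in $\CC V$, in $\CC T$ and in $\HNF$ (the defining clauses of Definition~\ref{def:pnf}, and the shape $\lambda \vec x. z\, u_{1}\dots u_{n}$ of a head normal value, are all stable under renaming a free variable), and it commutes with the projection maps, i.e.\ $\pi^{\omega}_{\{a\}}(\rho(t')) = \rho(\pi^{\omega}_{\{a\}}(t'))$, because $\pi^{\omega}_{\{a\}}$ only inspects the choice structure $\oplus_{a}^{i}$ of a term, which $\rho$ leaves untouched. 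Here I rely on Barendregt's variable convention so that no bound variable of $t$ is named $x$ or $y$; the only genuinely degenerate case, where $x$ is bound in $t$ (so $x\notin \FN(t)$ and $\rho(t)=t$), is trivial.

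Next I would prove the key \emph{distribution lemma}: for every $t\in \CC T$ and every pseudo-value $v$,
$$\CC D_{\rho(t)}(\rho(v)) = \CC D_{t}(v),$$
and moreover $\mathrm{supp}(\CC D_{\rho(t)}) = \rho(\mathrm{supp}(\CC D_{t}))$. This goes by induction on the structure of $t$ following the two clauses of Definition~\ref{def:pnf}. If $t\in \CC V$ then $\CC D_{t}=\delta_{t}$ and $\CC D_{\rho(t)}=\delta_{\rho(t)}$, so the claim is exactly the injectivity of $\rho$. If $t=\nu a.t'$ with $t'$ a $(\CC T,a)$-tree, then $\rho(t)=\nu a.\rho(t')$, the term $\rho(t')$ is again a $(\CC T,a)$-tree with $\mathrm{supp}(\rho(t'))=\rho(\mathrm{supp}(t'))$, and the commutation of $\rho$ with $\pi^{\omega}_{\{a\}}$ together with injectivity gives $\mu(\{\omega\mid \pi^{\omega}_{\{a\}}(\rho(t'))=\rho(u)\})=\mu(\{\omega\mid \pi^{\omega}_{\{a\}}(t')=u\})$ for each leaf $u\in\mathrm{supp}(t')$. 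Plugging these into the defining sum for $\CC D_{\nu a.t'}$ and applying the induction hypothesis $\CC D_{\rho(u)}(\rho(v))=\CC D_{u}(v)$ to each leaf yields the equality and the support claim.

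Part~(i.) then follows immediately: since $v\in\HNF \iff \rho(v)\in\HNF$ and $\CC D_{\rho(t)}$ is supported on $\rho(\mathrm{supp}(\CC D_{t}))$, reindexing the sum over $w=\rho(v)$ gives
$$\HNF(\rho(t)) = \sum_{w\in \HNF}\CC D_{\rho(t)}(w) = \sum_{v\in \HNF}\CC D_{\rho(t)}(\rho(v)) = \sum_{v\in \HNF}\CC D_{t}(v) = \HNF(t).$$
For part~(ii.) I would induct on the recursive definition of $\NF$. When $t=\lambda\vec x.z\,u_{1}\dots u_{n}\in \HNF$, so is $\rho(t)=\lambda\vec x.\rho(z)\,\rho(u_{1})\dots\rho(u_{n})$, and the first clause together with the induction hypotheses $\NF(\rho(u_{i}))=\NF(u_{i})$ gives $\NF(\rho(t))=\prod_{i}\NF(\rho(u_{i}))=\prod_{i}\NF(u_{i})=\NF(t)$. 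When $t\notin \HNF$ then $\rho(t)\notin \HNF$, the second clause applies to both, and reindexing the sum over $w=\rho(u)$, using the distribution lemma for the weights $\CC D$ and the induction hypothesis $\NF(\rho(u))=\NF(u)$ for the head normal values, gives $\NF(\rho(t))=\NF(t)$.

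I expect the only real work to be the bookkeeping around the variable convention and the verification that $\rho$ commutes with $\pi^{\omega}_{\{a\}}$ and preserves the $(\CC T,a)$-tree structure; once those are in place, everything else is a mechanical transport of the probabilistic definitions along an injective renaming, so the main obstacle is purely organisational rather than conceptual.
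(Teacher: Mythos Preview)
Your proposal is correct and follows essentially the same approach as the paper: both hinge on the distribution identity $\CC D_{t[y/x]}(u[y/x])=\CC D_{t}(u)$ (which you isolate as a separate ``distribution lemma'' while the paper invokes it inline) together with an induction along the recursive clauses of $\NF$. The only point the paper makes more explicit is the well-foundedness of that induction, via the observation that whenever $t\notin\HNF$ and $\CC D_{t}(u)>0$ the term $u$ is strictly smaller than $t$; you implicitly rely on this when you ``induct on the recursive definition of $\NF$''.
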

\begin{proof}
We only prove case (ii.), the other one being proved in a similar way.
Observe that if $t$ is not a strong head normal form and $\CC D_{t}(u)> 0$, it must be that $u$ is of size strictly smaller that $t$. So we can argue by induction on $t$:
\begin{varitemize}
\item if $t=\lambda \vec x.zt_{1}\dots t_{n}$, then $\NF(t)=\prod_{i}\NF(t_{i})\stackrel{\small[\text{I.H.}]}{=}\prod_{i}\NF(t_{i}[y/x])=\NF(t[y/x])$;

\item otherwise, since $\CC D_{t}(u)= \CC D_{t[y/x]}(u[y/x])$, we have that $\NF(t)=\sum_{u\in \HNF}\CC D_{t}(u)\cdot \NF(u)=\sum_{u\in \HNF}\CC D_{t[y/x]}(u[y/x])\cdot \NF(u)\stackrel{\small[\text{I.H.}]}{=}
\sum_{u\in \HNF}\CC D_{t[y/x]}(u[y/x])\cdot \NF(u[y/x])
=\sum_{u\in \HNF}\CC D_{t[y/x]}(u)\cdot \NF(u)
= \NF(t[y/x])$.
\end{varitemize}

\end{proof}

\begin{lemma}\label{lemma:lamb}
For any name-closed term $t$,
\begin{varitemize}
\item[(i.)] $ \HNF(t)=\HNF(\lambda y.t)$;
\item[(ii.)] $ \NF(t)=\NF(\lambda y.t)$.
\end{varitemize}
\end{lemma}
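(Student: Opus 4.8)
The plan is to reduce both identities to a single distributional fact: that $\CC D_{\lambda y.t}$ is the pushforward of $\CC D_t$ along the injection $v \mapsto \lambda y.v$. Since $\CC D$, and hence $\HNF(\cdot)$ and $\NF(\cdot)$, are invariant under $\redperm$ (every term has a unique permutative normal form by Theorem \ref{thm:confluence}), I would first assume without loss of generality that $t \in \CC T$, replacing $t$ by its PNF and using $\lambda y.t \redperm^* \lambda y.\mathrm{PNF}(t)$.

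The key lemma I would isolate is: for every name-closed $t \in \CC T$ and every pseudo-value $v \in \CC V$,
$$\CC D_{\lambda y.t}(\lambda y.v) = \CC D_t(v), \qquad \mathrm{supp}(\CC D_{\lambda y.t}) = \{\lambda y.v : v \in \mathrm{supp}(\CC D_t)\}.$$
I would prove this by induction on the inductive structure of $\CC T$ (Definition \ref{def:pnf}). If $t \in \CC V$, then $\lambda y.t$ is again a pseudo-value in PNF, and $\CC D_t,\CC D_{\lambda y.t}$ are the Dirac distributions concentrated at $t$ and $\lambda y.t$ respectively, so the claim is immediate by injectivity of $v \mapsto \lambda y.v$. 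If $t = \nu a.T$ with $T$ an $(\CC T,a)$-tree of leaves $u_1,\dots,u_n$, I would permute $\lambda y$ downward: $\lambda y.\nu a.T \redperm \nu a.\lambda y.T$ by \eqref{eq:nulambda}, and then the rule $(\oplus\lambda)$ pushes $\lambda y$ to the leaves, so by $\redperm$-invariance $\CC D_{\lambda y.t} = \CC D_{\nu a.T''}$, where $T''$ is the $(\CC T,a)$-tree with the same $\oplus_a^i$-shape as $T$ and PNF leaves $\mathrm{PNF}(\lambda y.u_j)$. Applying the recursive formula for $\CC D_{\nu a.T''}$ and the induction hypothesis on each leaf reduces the claim to the identity $\mu(\{\omega \mid \pi^\omega_{\{a\}}(T'') = \mathrm{PNF}(\lambda y.u_j)\}) = \mu(\{\omega \mid \pi^\omega_{\{a\}}(T) = u_j\})$. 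This last equality I would obtain from the facts that $\pi^\omega_{\{a\}}$ commutes with $\lambda y.-$ and resolves only $\oplus_a^i$-nodes, together with the observation that the leaves of an $(\CC T,a)$-tree are name-closed (hence contain no free $a$), so $T$ and $T''$ have identical $a$-branching structure.

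With the key lemma in hand, identity (i.) follows immediately: using the purely syntactic equivalence $v \in \HNF \iff \lambda y.v \in \HNF$ (the leading $\lambda y$ merely extends the $\lambda$-prefix of a head normal value and preserves $\redall_{\mathsf h}$-normality), I would compute $\HNF(\lambda y.t) = \sum_{w \in \HNF}\CC D_{\lambda y.t}(w) = \sum_{v \in \HNF}\CC D_t(v) = \HNF(t)$. For identity (ii.) I would argue by induction on the size of $t \in \CC T$, splitting on whether $t \in \HNF$. If $t = \lambda\vec x.zu_1\dots u_n$ is a head normal value, then so is $\lambda y.t$, with the same arguments $u_1,\dots,u_n$, and both sides equal $\prod_i \NF(u_i)$ by definition. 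Otherwise $t \notin \HNF$, hence $\lambda y.t \notin \HNF$, and both are computed by the ``otherwise'' clause; the key lemma plus the equivalence $\lambda y.u \in \HNF \iff u \in \HNF$ turns $\NF(\lambda y.t)$ into $\sum_{u \in \HNF}\NF(\lambda y.u)\cdot \CC D_t(u)$, and I would conclude by applying the statement to each such $u$, which is strictly smaller than $t$ by the size-decrease property used in the proof of Lemma \ref{lemma:t}.

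The main obstacle will be the $\nu a.T$ case of the key lemma: correctly identifying the PNF of $\lambda y.t$ after permuting the abstraction through the generator and the choice tree, and verifying that the measure weights attached to each branch are preserved. Everything else is bookkeeping, but this step relies essentially on the commutation of $\pi^\omega_{\{a\}}$ with abstraction and on the name-closedness of tree leaves, which together guarantee that abstraction does not alter the $a$-branching on which the weights depend.
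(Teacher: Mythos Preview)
Your proposal is correct and follows essentially the same approach as the paper. The paper's proof (given only for (ii.)) also reduces to the pushforward identity $\CC D_{t}(u)=\CC D_{\lambda y.t}(\lambda y.u)$, obtained by permuting $\lambda y$ past $\nu a$ and the $\oplus$-tree, and then concludes by size induction using exactly the size-decrease observation you cite from Lemma~\ref{lemma:t}; you have simply factored this distributional identity out as an explicit auxiliary lemma and reused it for (i.) rather than saying ``similarly''.
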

\begin{proof}
We only prove case (ii.), the other one being proved in a similar way.
We argue by induction on $t$:
\begin{varitemize}
\item if $t\redpermm^{*}\lambda \vec x.zt_{1}\dots t_{n}$, then $\lambda y.t\redpermm^{*}\lambda y\lambda \vec x.zt_{1}\dots t_{n}$, so by definition $\NF(t)=\prod_{i}\NF(t_{i})=\NF(\lambda y.t)$;

\item otherwise, if $t \redpermm^{*} \nu a.t'$, then $\lambda y.t \redpermm^{*} \nu a.t^{\lambda y}$, and one can show by induction on a $\redperm$-reduction path that 
$\CC D_{t'}(u)= \CC D_{t^{\lambda y}}(\lambda y.u)$. So we have that 
$\NF(t)=\sum_{u\in \HNF}\CC D_{t'}(u)\cdot \NF(u) = \sum_{u\in \HNF}\CC D_{t^{\lambda y}}  (\lambda y.u)\cdot \NF(u) \stackrel{\small\text{[I.H.]}}{=} \sum_{u\in \HNF}\CC D_{t^{\lambda y}}(\lambda y.u)\cdot \NF(\lambda y.u)=\sum_{u\in \HNF}\CC D_{t^{\lambda y}}(u)\cdot \NF(u)=\NF(\lambda y.t)$.
\end{varitemize}
\end{proof}

\begin{lemma}\label{lemma:x}
For any name-closed term $t$,
\begin{varitemize}
\item[(i.)] $\HNF(t)\geq \HNF(tx)$;
\item[(ii.)] $\NF(t)\geq \NF(tx)$.
\end{varitemize}
\end{lemma}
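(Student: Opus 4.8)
The plan is to prove (i) and (ii) simultaneously by induction on the size of the permutative normal form of $t$. Since $\redperm$ is confluent and strongly normalizing and both $\HNF(\cdot)$ and $\NF(\cdot)$ are invariant under $\redperm$, I may assume $t\in\CC T$; as $t$ is name-closed, its PNF is either a pseudo-value ($t\in\CC V$) or of the form $\nu a.t'$ with $t'$ a $(\CC T,a)$-tree (Lemma~\ref{lemma:enne}), and I would treat these two cases. I fix throughout a variable $x$ not occurring in $t$.

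First the base case $t\in\CC V$, where $\CC D_{t}=\delta_{t}$. Write $t=\lambda x_{1}.\dots.\lambda x_{n}.t' t_{1}\dots t_{p}$ with $t'$ a variable or an abstraction. If $t'$ is a variable, then $t\in\HNF$, so (i) is immediate ($\HNF(tx)\leq 1=\HNF(t)$); for (ii) one checks that $tx$ reduces, by one head $\beta$-step when $n\geq 1$ followed by permutations, to a head normal value whose arguments are the $t_{i}[x/x_{1}]$ (or, when $n=0$, the $t_{i}$ together with $x$), whence $\NF(tx)=\prod_{i}\NF(t_{i}[x/x_{1}])=\prod_{i}\NF(t_{i})=\NF(t)$, using $\NF(x)=1$ and Lemma~\ref{lemma:t}. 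If instead $t'$ is an abstraction, then $p\geq 1$ and $t$ carries a head $\beta$-redex; pushing $\cdot x$ inside uses only permutations plus at most the one $\beta$-step that strips $\lambda x_{1}$, none of which fires the inner head redex, so every leaf of the PNF of $tx$ still carries a head redex and fails to be a head normal value. Hence $\HNF(tx)=\NF(tx)=0=\HNF(t)=\NF(t)$.

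For the inductive step let $t=\nu a.t'$, with $\mathrm{supp}(t')=\{u_{1},\dots,u_{k}\}$ and weights $p_{j}=\mu(\{\omega\mid \pi^{\omega}_{\{a\}}(t')=u_{j}\})$, so $\CC D_{t}=\sum_{j}p_{j}\CC D_{u_{j}}$. The key step is the distributional identity
\[
\CC D_{tx}=\sum_{j}p_{j}\,\CC D_{u_{j}x}.
\]
It holds because $(\nu a.t')x\redperm \nu a.(t'x)$ by $(\nu\mathsf f)$, while $t'x$ reduces by repeated $(\oplus\mathsf f)$ to the $a$-tree with the same internal $\oplus_{a}^{i}$-structure as $t'$ but each leaf $u_{j}$ replaced by $u_{j}x$; since the leaves of a $(\CC T,a)$-tree are name-closed, reducing each $u_{j}x$ to PNF cannot reintroduce $a$, and grouping by these PNFs yields exactly the stated mixture. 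Unfolding the definitions of $\HNF$ and $\NF$ (and the identity $\sum_{w\in\HNF}\NF(w)\CC D_{s}(w)=\NF(s)$, valid for every $s$) then gives $\HNF(tx)=\sum_{j}p_{j}\HNF(u_{j}x)$ and $\NF(tx)=\sum_{j}p_{j}\NF(u_{j}x)$, whereas the same computation on $t$ gives $\HNF(t)=\sum_{j}p_{j}\HNF(u_{j})$ and $\NF(t)=\sum_{j}p_{j}\NF(u_{j})$. Each $u_{j}$ is a proper subterm of $t$, so the induction hypothesis yields $\HNF(u_{j}x)\leq\HNF(u_{j})$ and $\NF(u_{j}x)\leq\NF(u_{j})$, and summing against the $p_{j}$ concludes both inequalities.

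The main obstacle will be the distributional identity in the inductive step: one has to verify carefully that floating the argument $\cdot x$ down through the generator $\nu a$ and the $\oplus_{a}^{i}$-choices preserves the branching structure, and that reducing the resulting leaves $u_{j}x$ to permutative normal form does not reintroduce the bound name $a$ — this is exactly where name-closedness of the leaves of a $(\CC T,a)$-tree is used. The base case is comparatively routine, but it relies essentially on invariance of $\NF$ under renaming (Lemma~\ref{lemma:t}) and on the fact that an inner head $\beta$-redex survives the permutations (and the single outer $\beta$-step) used to push an argument inside a pseudo-value.
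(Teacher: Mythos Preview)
Your structural induction on the PNF of $t$ is a viable route to the literal statement, but the base case contains a genuine error. When $t=\lambda x_{1}\dots\lambda x_{n}.yt_{1}\dots t_{p}$ with $n\geq 1$, you perform a head $\beta$-step on $tx$ before computing $\NF$; but $\NF$ is invariant only under $\redperm$, not under $\redbeta$, so the equation $\NF(tx)=\prod_{i}\NF(t_{i}[x/x_{1}])$ is false. Concretely, $tx=(\lambda x_{1}.\dots)x$ is already a PNF (it lies in $\CC V$), it is not a head normal value, $\CC D_{tx}=\delta_{tx}$ places no mass on $\HNF$, and hence $\HNF(tx)=\NF(tx)=0$. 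The same collapse occurs when $t'$ is an abstraction: $tx\in\CC V$ still carries a head redex (the outer one if $n\geq 1$, the inner one if $n=0$), so again $\HNF(tx)=\NF(tx)=0$. Thus the base case is in fact trivial except in the single subcase $t=yt_{1}\dots t_{p}$ with no leading $\lambda$, where $tx=yt_{1}\dots t_{p}x\in\HNF$ and equality holds directly. With this fix your inductive step---the distributional identity $\CC D_{tx}=\sum_{j}p_{j}\CC D_{u_{j}x}$ and the appeal to the IH---goes through.

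The paper's own proof takes a rather different tack and, on close reading, actually establishes $\NNF(t)\geq\NNF(tx)$ (and the analogous $\NHNF$ inequality): it picks an arbitrary $\redall$-reduct $u$ of $tx$ and, by a case split on whether the reduction ever fires an outer redex $(\lambda y.t')x$, produces a $\redall$-reduct of $t$ with $\NF$ at least $\NF(u)$, invoking Lemmas~\ref{lemma:t} and~\ref{lemma:lamb} in the second case. This $\NNF$ version is what is actually used downstream (Lemma~\ref{lemma:uno} needs $sx\in\NNF^{r}\Rightarrow s\in\NNF^{r}$). Your structural induction proves only the $\NF$/$\HNF$ inequality and does not immediately yield the $\NNF$/$\NHNF$ one, since those take a sup over $\beta$-reducts and the PNF decomposition of $t$ does not control $\beta$-reducts of $tx$.
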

\begin{proof}
We only prove case (ii.), the other one being proved in a similar way.
Suppose $tx\redall^{*} u$ and $\NF(u)\geq q$. We will show that $\NF(t)\geq q$. Two cases arise:
\begin{varitemize}
\item $u=u'x$ where $t\redall^{*}u'$. Then $u'$ cannot start with either $\nu$ (otherwise it would not be a PNF, since $(\nu a.v)x\redpermm \nu a.vx$), $\oplus$ (since $(v_{1}\oplus_{a}^{i}v_{2})x \redpermm (v_{1}x)\oplus_{a}^{i}(v_{2}x)$), or $\lambda$ (since $(\lambda y.v)x$ is a $\beta$-redex, which would imply $\NF(u)=0$). Hence $u'$ is of the form
$y u_{1}\dots u_{n}$, and thus $\NF(u)= \NF(u')\geq q$, which implies $\NF(t)\geq q$. 

\item $tx\redalll^{*}(\lambda y.t')x$, and $t'[y\mapsto x]\redalll^{*}u$. Then 
$t$ reduces to the PNF $\lambda y. u[x\mapsto y]$, and we can conclude by Lemma \ref{lemma:t} and Lemma \ref{lemma:lamb}.
\end{varitemize}

\end{proof}

In the following, we will sometimes write $t\in \NHNF^{r}$ (resp.~$t\in \NNF^{r}$) for $\NHNF(t)\geq r$ (resp.~$\NNF(t)\geq r$).

We will define two different classes of \emph{reducibility predicates}, one related to probabilistic head normal forms and the other related to probabilistic normal forms.
 For any type $\sigma$, finite set $X$ of names, $r\in[0,1]$, and $S\subseteq(2^{\BB N})^{X} $, we define the sets
$\HRED_{\sigma}^{X,r}(S),\RED_{\sigma}^{X,r}(S) \subseteq \Lnu^{X}$ by induction as follows:  

\begin{align*}
\HRED_{[]}^{X,r}(S)&= \Lnu^{X}\\ 
\HRED_{o}^{X,r}(S)&= \HRED_{\HNORM}^{X,r}(S)=
\HRED_{\NORM}^{X,r}(S)=\{ t \in \Lnu^{X} \mid   \forall \omega \in S   \ \pi^{\omega}_{X}(t)\in \NHNF^{r}\}\\ 
\HRED_{\sigma\To \tau}^{X,r}(S)& =\{t\in \Lnu^{X} \mid   
\forall S'\subseteq S, \forall s\in (0,1], \ \forall u\in \HRED_{\sigma}^{X,1}(S'), \ tu\in \HRED_{\tau}^{X, r}(S')\}\\
\HRED_{\sigma\land \tau}^{X,r}(S)& = \HRED_{\sigma}^{X,r}(S)\cap \HRED_{\tau}^{X,r}(S)\\
\HRED_{\B C^{q}\sigma}^{X,r}(S) & = \HRED_{\sigma}^{X,qr}(S)
\end{align*}

\begin{align*}
\RED_{[]}^{X,r}(S)&=\RED_{\HNORM}^{X,r}(S)= \Lnu^{X}\\ 
\RED_{o}^{X,r}(S)&=\RED_{\NORM}^{X,r}(S)= \{ t \in\Lnu^{X}\mid   \forall \omega \in S   \ \pi^{\omega}_{X}(t)\in \NNF^{r}\}\\ 
\RED_{\sigma\To \tau}^{X,r}(S)& =\{t\in\Lnu^{X} \mid   
\forall S'\subseteq S, \forall s\in (0,1], \ \forall u\in \RED_{\sigma}^{X,1}(S'), \ tu\in \RED_{\tau}^{X, r}(S')\}\\
\HRED_{\sigma\land \tau}^{X,r}(S)& = \RED_{\sigma}^{X,r}(S)\cap \RED_{\tau}^{X,r}(S)\\
\RED_{\B C^{q}\sigma}^{X,r}(S) & = \RED_{\sigma}^{X,qr}(S)
\end{align*}

%
%
%
%
%

We will now establish a few properties of the families $\HRED_{\sigma}^{X,r}(S)$ and $\RED_{\sigma}^{X,r}(S)$. Since the proofs for the two cases are similar, we will provide details only for the second case.

\begin{lemma}\label{lemma:leq}
For any type $\sigma$ and $q,r\in [0,1]$, if $q\geq r$, then 
\begin{varitemize}
\item[(i.)] $\HRED_{\sigma}^{q}(S)\subseteq \HRED_{\sigma}^{r}(S)$.
\item[(ii.)] 
$\RED_{\sigma}^{q}(S)\subseteq \RED_{\sigma}^{r}(S)$.
\end{varitemize}

\end{lemma}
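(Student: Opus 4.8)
The plan is to prove both inclusions by a single structural induction on the type $\sigma$, since the two families $\HRED$ and $\RED$ are defined by the very same recursion and differ only in their ground clauses. The guiding observation is that monotonicity in the probability parameter is inherited from the obvious monotonicity of the underlying thresholds: at the ground types $\NNF(u) \geq q \geq r$ immediately gives $\NNF(u) \geq r$ (and likewise for $\NHNF$ in the head case), while at the compound types the parameter $r$ only ever occurs in the \emph{codomain} position of an arrow or gets rescaled by a factor in $[0,1]$ under a counting quantifier, both of which preserve the ordering $q \geq r$.

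Concretely, for (ii.) I would argue as follows. For the base cases $\sigma = []$ and $\sigma = \HNORM$ both sides equal $\Lnu^{X}$, so the inclusion is trivial; for $\sigma = o$ and $\sigma = \NORM$, if $t \in \RED_{o}^{X,q}(S)$ then for every $\omega \in S$ we have $\NNF(\pi^{\omega}_{X}(t)) \geq q \geq r$, hence $\pi^{\omega}_{X}(t)\in \NNF^{r}$ and $t \in \RED_{o}^{X,r}(S)$. For the arrow type $\sigma \To \tau$, take $t \in \RED_{\sigma \To \tau}^{X,q}(S)$ and fix $S' \subseteq S$, $s \in (0,1]$ and $u \in \RED_{\sigma}^{X,1}(S')$; by assumption $tu \in \RED_{\tau}^{X,q}(S')$, and since the induction hypothesis for $\tau$ (with the same pair $q \geq r$) yields $\RED_{\tau}^{X,q}(S') \subseteq \RED_{\tau}^{X,r}(S')$, we obtain $tu \in \RED_{\tau}^{X,r}(S')$, so $t \in \RED_{\sigma \To \tau}^{X,r}(S)$; crucially, the domain condition $u \in \RED_{\sigma}^{X,1}(S')$ is frozen at threshold $1$ and is therefore identical on both sides, so no recursion on $\sigma$ is needed here. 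The conjunction case follows by intersecting the two inclusions supplied by the induction hypothesis, and for $\sigma = \B C^{p}\tau$ one unfolds $\RED_{\B C^{p}\tau}^{X,q}(S) = \RED_{\tau}^{X,pq}(S)$ and $\RED_{\B C^{p}\tau}^{X,r}(S) = \RED_{\tau}^{X,pr}(S)$, observes that $p \geq 0$ and $q \geq r$ give $pq \geq pr$, and applies the induction hypothesis for $\tau$ to the thresholds $pq \geq pr$. The argument for (i.) is verbatim the same, replacing $\RED$ by $\HRED$, $\NNF$ by $\NHNF$, and $\NNF^{r}$ by $\NHNF^{r}$.

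There is essentially no genuine obstacle in this lemma: it is a routine monotonicity fact. The only point that deserves care, and the one I would state explicitly, is the bookkeeping of \emph{which} threshold the induction hypothesis is applied to. In the arrow case the induction must be invoked only on the codomain $\tau$ and with the unchanged pair $q \geq r$, while the domain is held at threshold $1$; in the counting case the induction is invoked on $\tau$ but with the \emph{rescaled} pair $pq \geq pr$. Making sure the recursion is performed on strictly smaller types and that the inequality between thresholds is propagated correctly through the rescaling is the entirety of the content.
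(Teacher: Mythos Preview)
Your proposal is correct and follows exactly the same approach as the paper: a structural induction on $\sigma$, with the ground cases handled by the trivial inclusion $\NNF^{q}\subseteq\NNF^{r}$ (resp.\ $\NHNF^{q}\subseteq\NHNF^{r}$), the arrow case by applying the induction hypothesis only to the codomain while the domain threshold stays fixed at $1$, the conjunction case by intersection, and the counting case by rescaling the thresholds. The paper likewise treats only (ii.) in detail and remarks that (i.) is analogous.
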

\begin{proof}
We only prove case (ii.), the other one being proved in a similar way.
By induction on $\sigma$:
\begin{varitemize}
\item if $\sigma\in \{[],\HNORM\}$, the claim is immediate;
\item if $\sigma\in \{o,, \NORM\}$, then $t\in \RED_{\sigma}^{q}(S)$ iff for all $\omega \in S$, $\pi^{\omega}_{X}(t)\in \NNF^{q}\subseteq \NNF^{r}$, so
$t\in \RED_{\sigma}^{r}(S)$;

\item if $\sigma=(\tau\To\rho)$ and $t\in \RED_{\tau}^{X,q}(S)$  then for all $S'\subseteq S$, $u\in \RED_{\tau}^{X,1}(S')$, 
$tu\in \RED_{\rho}^{q}(S')\stackrel{\small\text{[I.H.]}}{\subseteq} \RED_{\rho}^{r}(S')$
; hence we can conclude $t\in \RED_{\sigma}^{r}(S)$;

\item if $\sigma=\tau\land \rho$ and $t\in \RED_{\sigma}^{X,q}(S)$, then by IH $t\in \RED_{\tau}^{X,r}(S)$ and $t\in \RED_{\rho}^{X,r}(S)$, so $t\in \RED_{\sigma}^{X,r}(S)$;

\item if $\sigma= \B C^{s}\tau$, then 
$\RED_{\sigma}^{X,q}(S) = \RED_{\tau}^{X,sq}(S)
\stackrel{\small\text{[I.H.]}}{\subseteq} \RED_{\tau}^{X,sr}(S)=\RED_{\sigma}^{X,r}(S)$.
\end{varitemize}
\end{proof}

\begin{lemma}\label{lemma:zero}
\begin{varitemize}
\item[(i.)] $\HRED_{\sigma}^{X,r}(\emptyset)=\{t\in \Lnu\mid \FN(t)\subseteq X\}$;
\item[(ii.)]
$\RED_{\sigma}^{X,r}(\emptyset)= \{t\in \Lnu\mid \FN(t)\subseteq X\}$.
\end{varitemize}
\end{lemma}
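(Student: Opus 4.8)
The plan is to prove both identities by a single structural induction on the type $\sigma$, since the defining clauses of $\HRED^{X,r}_{\sigma}$ and $\RED^{X,r}_{\sigma}$ have exactly the same shape. First I would observe that the right-hand side is nothing but $\Lnu^{X}$, so both statements amount to the claim that $\HRED^{X,r}_{\sigma}(\emptyset)=\Lnu^{X}$ and $\RED^{X,r}_{\sigma}(\emptyset)=\Lnu^{X}$ for every type $\sigma$ and every $r\in[0,1]$. I will prove the former; the latter is verbatim the same argument with $\NHNF$ replaced by $\NNF$ throughout. The inclusions $\subseteq$ are trivial, since each $\HRED^{X,r}_{\sigma}(S)$ is by definition a subset of $\Lnu^{X}$, so the content lies entirely in the reverse inclusion.

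For the base cases: when $\sigma=[]$ the identity $\HRED^{X,r}_{[]}(\emptyset)=\Lnu^{X}$ holds by definition. When $\sigma\in\{o,\HNORM,\NORM\}$, membership of $t\in\Lnu^{X}$ in $\HRED^{X,r}_{\sigma}(\emptyset)$ is governed by the condition ``for all $\omega\in\emptyset$, $\pi^{\omega}_{X}(t)\in\NHNF^{r}$'', which is vacuously satisfied; hence every $t\in\Lnu^{X}$ belongs to the set.

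The inductive step splits into three cases, of which only the arrow type carries any content. For $\sigma=\tau\To\rho$, the membership condition quantifies over all $S'\subseteq\emptyset$; but the unique such subset is $S'=\emptyset$, so the condition reduces to: for all $u\in\HRED^{X,1}_{\tau}(\emptyset)$, $tu\in\HRED^{X,r}_{\rho}(\emptyset)$. By the induction hypothesis $\HRED^{X,r}_{\rho}(\emptyset)=\Lnu^{X}$, and since $t,u\in\Lnu^{X}$ give $\FN(tu)\subseteq X$, i.e.\ $tu\in\Lnu^{X}$, the conclusion $tu\in\HRED^{X,r}_{\rho}(\emptyset)$ holds for every such $u$; thus every $t\in\Lnu^{X}$ lies in $\HRED^{X,r}_{\sigma}(\emptyset)$. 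For $\sigma=\tau\land\rho$ we have $\HRED^{X,r}_{\sigma}(\emptyset)=\HRED^{X,r}_{\tau}(\emptyset)\cap\HRED^{X,r}_{\rho}(\emptyset)=\Lnu^{X}\cap\Lnu^{X}=\Lnu^{X}$ by the induction hypothesis, and for $\sigma=\B C^{q}\tau$ we have $\HRED^{X,r}_{\B C^{q}\tau}(\emptyset)=\HRED^{X,qr}_{\tau}(\emptyset)=\Lnu^{X}$, again by the induction hypothesis.

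The argument has essentially no obstacle: the only point requiring attention is the collapse of the quantifier ``$\forall S'\subseteq\emptyset$'' in the arrow clause to the single instance $S'=\emptyset$, which is precisely what lets the induction hypothesis fire. Everything else is bookkeeping, and the treatment of $\RED$ is identical.
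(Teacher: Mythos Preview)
Your proof is correct and follows essentially the same approach as the paper's: both note that the inclusion $\subseteq$ is immediate, then prove $\supseteq$ by structural induction on $\sigma$, handling each constructor exactly as you do (vacuous base cases, the arrow case via the collapse of $S'\subseteq\emptyset$ to $S'=\emptyset$ and the inductive hypothesis on $\rho$, and direct appeals to the IH for $\land$ and $\B C^{q}$). The only cosmetic difference is that the paper spells out the $\RED$ case and declares $\HRED$ analogous, whereas you do the reverse.
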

\begin{proof}
We only prove (ii.) as (i.) is proved similarly.
Let $\Lnu^{X}=\{t\in \Lnu\mid \FN(t)\subseteq X\}$. 
Observe that the inclusion $\RED_{\sigma}^{X,r}(\emptyset)\subseteq \Lnu^{X}$ is immediate. For the converse direction, we argue by induction on $\sigma$:
\begin{varitemize}
\item if $\sigma\in \{[],\HNORM\}$, the claim is immediate;

\item If $\sigma\in\{o,\NORM\}$, then trivially for all $\omega\in \emptyset$ and $t\in \Lnu^{X}$, $\pi^{\omega}_{X}(t)\in \NNF^{r}$, so $\Lnu^{X}\subseteq \RED_{\sigma}^{X,r}(\emptyset)$. 
 
\item if $\sigma =\tau \To \rho$, then for all $t\in \Lnu^{X}$, and $u\in \RED_{\tau}^{X,1}(\emptyset)$, $tu\in \Lnu^{X}$, so by IH $tu\in \RED_{\rho}^{X,r}(\emptyset)$. 
We can thus conclude that $t\in \RED_{\sigma}^{X,r}(\emptyset)$. 

\item if $\sigma=\tau\land \rho$, then for all $t\in \Lnu^{X}$, by IH $t\in \RED_{\tau}^{X,r}(\emptyset)$ and $t\in \RED_{\rho}^{X,r}(\emptyset)$, so $t\in \RED_{\sigma}^{X,r}(\emptyset)$.

\item 
If $\sigma=\B C^{q}\tau$, then by the I.H.~$ \Lnu^{X}\subseteq \RED_{\tau}^{X,qr}(\emptyset)=\RED_{\sigma}^{X,q}(\emptyset)$.

\end{varitemize}
%
\end{proof}

\begin{lemma}\label{lemma:efs}
\begin{varitemize}
\item[(i.)] $t\in \HRED_{\sigma}^{X,r}(S)$ iff for all $\omega \in S$, $\pi^{\omega}_{X}(t)\in \HRED_{\sigma}^{\emptyset, r}$;
\item[(ii.)] $t\in \RED_{\sigma}^{X,r}(S)$ iff for all $\omega \in S$, $\pi^{\omega}_{X}(t)\in \RED_{\sigma}^{\emptyset, r}$.
\end{varitemize}
\end{lemma}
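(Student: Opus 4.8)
The plan is to prove both (i.) and (ii.) by a single induction on the type $\sigma$, since the defining clauses of $\HRED$ and $\RED$ are structurally identical; I will carry out the argument for $\RED$ and note that the $\HRED$ case is verbatim the same with $\NNF$ replaced by $\NHNF$. Throughout I read $\RED_{\sigma}^{\emptyset,r}$ as $\RED_{\sigma}^{\emptyset,r}\big((2^{\BB N})^{\emptyset}\big)$, the name set being empty and $(2^{\BB N})^{\emptyset}$ a singleton $*$ on which $\pi$ acts trivially, and I take $t\in\Lnu^{X}$ as implicit (it is forced by the left-hand side). Two elementary facts about partial evaluation drive the whole proof and I would isolate them first. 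The first is the commutation identity $\pi^{\omega}_{X}(tu)=\pi^{\omega}_{X}(t)\,\pi^{\omega}_{X}(u)$, immediate from the defining clauses of $\pi^{\omega}_{X}$. The second is that for $t\in\Lnu^{X}$ the term $\pi^{\omega}_{X}(t)$ is name-closed: every free name of $t$ lies in $X$ and is therefore resolved, the only surviving choice operators being under $\nu$-binders. Hence $\pi^{\omega}_{X}(t)$ is a legitimate element of the name-closed predicate $\RED_{\sigma}^{\emptyset,r}$, and $\pi^{*}_{\emptyset}(\pi^{\omega}_{X}(t))=\pi^{\omega}_{X}(t)$.

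The base cases are read off directly from the definitions. For $\sigma\in\{[],\HNORM\}$ both sides are vacuously true: the left-hand side is all of $\Lnu^{X}$, and $\pi^{\omega}_{X}(t)\in\Lnu^{\emptyset}=\RED_{[]}^{\emptyset,r}$ holds automatically by the remark above. For $\sigma\in\{o,\NORM\}$ the equivalence is essentially a tautology: $t\in\RED_{o}^{X,r}(S)$ means $\NNF(\pi^{\omega}_{X}(t))\geq r$ for every $\omega\in S$, while $\pi^{\omega}_{X}(t)\in\RED_{o}^{\emptyset,r}$ unfolds, using $\pi^{*}_{\emptyset}(\pi^{\omega}_{X}(t))=\pi^{\omega}_{X}(t)$, to the very same inequality.

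The arrow case $\sigma=\tau\To\rho$ is the heart of the argument and where the bookkeeping is most delicate. I would first record the unfolding of the name-closed predicate: using Lemma \ref{lemma:zero} to discard the sub-case $S''=\emptyset$ (where $\RED_{\tau}^{\emptyset,1}(\emptyset)$ and $\RED_{\rho}^{\emptyset,r}(\emptyset)$ are everything), membership $w\in\RED_{\tau\To\rho}^{\emptyset,r}$ reduces to the condition that $wu'\in\RED_{\rho}^{\emptyset,r}$ for every name-closed $u'\in\RED_{\tau}^{\emptyset,1}$. For the forward direction I fix $\omega\in S$ and such a $u'$; viewing $u'$ as an element of $\Lnu^{X}$ (legitimate since $\FN(u')=\emptyset$) and noting $\pi^{\omega}_{X}(u')=u'$, the induction hypothesis for $\tau$ gives $u'\in\RED_{\tau}^{X,1}(\{\omega\})$, so instantiating $t\in\RED_{\tau\To\rho}^{X,r}(S)$ at $S'=\{\omega\}\subseteq S$ yields $tu'\in\RED_{\rho}^{X,r}(\{\omega\})$; the induction hypothesis for $\rho$ together with the commutation identity then delivers $\pi^{\omega}_{X}(t)\,u'=\pi^{\omega}_{X}(tu')\in\RED_{\rho}^{\emptyset,r}$. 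For the backward direction I fix $S'\subseteq S$ and $u\in\RED_{\tau}^{X,1}(S')$, and to prove $tu\in\RED_{\rho}^{X,r}(S')$ it suffices, by the induction hypothesis for $\rho$, to check $\pi^{\omega}_{X}(tu)\in\RED_{\rho}^{\emptyset,r}$ for each $\omega\in S'$; the induction hypothesis for $\tau$ makes $\pi^{\omega}_{X}(u)$ a name-closed member of $\RED_{\tau}^{\emptyset,1}$, and feeding it to $\pi^{\omega}_{X}(t)\in\RED_{\tau\To\rho}^{\emptyset,r}$ gives exactly $\pi^{\omega}_{X}(t)\,\pi^{\omega}_{X}(u)=\pi^{\omega}_{X}(tu)\in\RED_{\rho}^{\emptyset,r}$.

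The two remaining clauses follow immediately from the induction hypothesis. For $\sigma=\tau\land\rho$, both $\RED_{\tau\land\rho}^{X,r}(S)$ and $\RED_{\tau\land\rho}^{\emptyset,r}$ are intersections, so the equivalence follows by applying the induction hypothesis to $\tau$ and to $\rho$ and distributing the universal quantifier over $\omega$ across the intersection. For $\sigma=\B C^{q}\tau$, both sides merely shift the probability parameter from $r$ to $qr$, so the claim is the induction hypothesis for $\tau$ at level $qr$. The only genuine obstacle is the arrow case, specifically the need to (a) move freely between name-closed arguments $u'$ and their incarnations over $X$, and (b) instantiate $S'$ correctly (as the singleton $\{\omega\}$ in the forward direction) so that the pointwise hypotheses line up with the set-indexed definitions; Lemma \ref{lemma:zero} and the commutation identity $\pi^{\omega}_{X}(tu)=\pi^{\omega}_{X}(t)\,\pi^{\omega}_{X}(u)$ are precisely what make these manoeuvres go through.
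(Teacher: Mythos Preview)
Your proof is correct and follows essentially the same inductive approach as the paper's proof: induction on $\sigma$, with the arrow case driven by the commutation identity $\pi^{\omega}_{X}(tu)=\pi^{\omega}_{X}(t)\,\pi^{\omega}_{X}(u)$ and the induction hypothesis applied in both directions. The only cosmetic difference is that in the forward arrow direction you instantiate at the singleton $S'=\{\omega\}$ whereas the paper lifts the name-closed $u$ to $\RED_{\tau}^{X,1}(S)$ for the full $S$; both choices work, and your explicit invocation of Lemma~\ref{lemma:zero} to dispose of the $S''=\emptyset$ sub-case is a nice clarification the paper leaves implicit.
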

\begin{proof}
We only prove (ii.) as (i.) is proved similarly.
We argue by induction on $\sigma$:
\begin{varitemize}
\item if $\sigma\in \{[],\HNORM\}$, the claim is immediate;

\item If $\sigma\in\{o,\NORM\}$, the claim follows from the definition.

\item If $\sigma=\tau\To \rho$, suppose $t\in \RED_{\sigma}^{X,r}(S)$, let $\omega \in S$ and $u\in \RED^{\emptyset,1}_{\tau}$. Then $u$ is name-closed hence for all $g\in S$, $\pi^{g}(u)=u$, which by the IH, implies $u\in \RED_{\tau}^{X,1}(S)$; we deduce then that $tu\in \RED_{\rho}^{X,r}(S)$, so by IH $\pi^{\omega}_{X}(tu)=\pi^{\omega}_{X}(t)u\in \RED_{\rho}^{\emptyset, r}$. By a similar argument we can show that for all $v\in \RED_{\tau}^{\emptyset ,s}$, $\pi^{\omega}_{X}(\{t\}v)=\{\pi^{\omega}_{X}(t)\}v\in \RED_{\rho}^{\emptyset, rs}$. 
We can thus conclude that $\pi^{\omega}_{X}(t)\in \RED_{\sigma}^{\emptyset,r}$.

Conversely, suppose that for all $\omega \in S$, $\pi^{\omega}_{X}(t)\in \RED_{\sigma}^{\emptyset, r}$, let $S'\subseteq S$ and $u\in \RED_{\tau}^{X,1}(S')$; if $\omega \in S'$, then by IH $\pi^{\omega}_{X}(u)\in \RED_{\tau}^{\emptyset,1}$, so $\pi^{\omega}_{X}(tu)=\pi^{\omega}_{X}(t)\pi^{\omega}_{X}(u)\in \RED_{\rho}^{\emptyset, r}$; 
We have thus proved that for all $\omega \in S'$, $\pi^{\omega}_{X}(tu) \in \RED_{\rho}^{\emptyset, r}$ 
, which by IH implies that $tu\in \RED_{\rho}^{X,r}(S')$
We conclude then that $t\in \RED_{\sigma}^{X,r}(S)$.

\item if $\sigma=\tau\land \rho$ then $t\in \RED_{\sigma}^{X,r}(S)$ iff $t\in \RED_{\tau}^{X,r}(S)$ and $t\in \RED_{\rho}^{X,r}(S)$ iff (by the IH) for all $\omega \in S$, $\pi^{\omega}_{X}(t)\in \RED_{\tau}^{X,r}(\emptyset)$ and $\pi^{\omega}_{X}(t)\in \RED_{\rho}^{X,r}(\emptyset)$, iff for all $\omega \in S$, $\pi^{\omega}_{X}(t)\in \RED_{\sigma}^{X,r}(\emptyset)$.

\item If $\sigma=\B C^{q}\tau$, then 
$
t\in \RED_{\sigma}^{X,r}(S)\text{ iff }t\in \RED_{\tau}^{X,rq}(S)\stackrel{\text{[I.H.]}}{\text{ iff }}
\forall \omega \in S \ \pi^{\omega}_{X}(t)\in \RED_{\tau}^{X,rq}\stackrel{\text{[I.H.]}}{\text{ iff }}
\forall \omega \in S \ \pi^{\omega}_{X}(t)\in \RED_{\sigma}^{X,r}
$.
\end{varitemize}
\end{proof}
%

\begin{lemma}\label{lemma:perm}
\begin{varitemize}
\item[(i.)]
If $t\in \HRED_{\sigma}^{X,r}(S)$ and $t' \redperm t$, then $t'\in \HRED_{\sigma}^{X,r}(S)$.

\item[(ii.)]
If $t\in \RED_{\sigma}^{X,r}(S)$ and $t'\redperm t$, then $t'\in \RED_{\sigma}^{X,r}(S)$.

\end{varitemize}
\end{lemma}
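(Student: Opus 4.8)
The plan is to prove (i.) and (ii.) simultaneously by induction on the type $\sigma$, the two arguments being identical after replacing $\NHNF$, $\redall_{\mathsf h}$ by $\NNF$, $\redall$; I will describe the $\NNF$ case. Throughout I read the hypotheses as $t,t'\in\Lnu^{X}$ with $t'\redperm t$, so that all terms have their free names in $X$ and the projections $\pi^{\omega}_{X}(\cdot)$ are name-closed. Two auxiliary facts will carry the whole argument, and only the ground types will actually use them.

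The first ingredient is a \emph{projection fact}: for every permutative step $t'\redperm t$ and every $\omega\in(2^{\BB N})^{X}$, the two name-closed terms $\pi^{\omega}_{X}(t')$ and $\pi^{\omega}_{X}(t)$ are $\redperm$-joinable, hence (by confluence and strong normalization of $\redperm$, Theorem~\ref{thm:confluence}) share the same permutative normal form. I would prove this by induction on the derivation of $t'\redperm t$. At the root one inspects each rule of Fig.~\ref{fig:permutations}, splitting on whether the names involved lie in $X$: when a choice $\oplus^{i}_{a}$ with $a\in X$ is resolved by $\omega$, the two projected terms usually collapse to the \emph{same} term, while when the relevant names survive the rule is sent to one of its own instances. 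The congruence cases follow from the compositional definition of $\pi^{\omega}_{X}$, using that $\redperm$-joinability is preserved under all term formers (and that if the reduced subterm is discarded by a resolved choice the two projections coincide outright). I would stress already here that the \emph{direction} of reduction between the two projections need not match that of the original step: for an instance of the rule $(\oplus\oplus_{1})$ whose two choice-names lie respectively outside and inside $X$ one finds $\pi^{\omega}_{X}(t)\redperm\pi^{\omega}_{X}(t')$ rather than the converse, which is exactly why joinability, and not one-directional reduction, is the right statement.

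The second ingredient is the invariance of $\NNF$ and $\NHNF$ under permutative reduction: if $u\redperm u'$ then $\NNF(u)=\NNF(u')$, and likewise for $\NHNF$. The inequality $\NNF(u)\geq\NNF(u')$ is immediate from the $\sup$ definition, since every $\redall$-reduct of $u'$ is one of $u$; the converse uses confluence of $\redall$ (Theorem~\ref{thm:confluence}) together with the monotonicity of $\NF$ (resp.\ $\HNF$) along reduction recorded by $RBT(t)\sqsubseteq RBT(u)$. Granting these, the base cases of the type induction are immediate. For $\sigma\in\{[],\HNORM\}$ the predicate is all of $\Lnu^{X}$. For $\sigma\in\{o,\NORM\}$, fix $\omega\in S$; by the projection fact $\pi^{\omega}_{X}(t')$ and $\pi^{\omega}_{X}(t)$ have the same permutative normal form, so by invariance $\NNF(\pi^{\omega}_{X}(t'))=\NNF(\pi^{\omega}_{X}(t))\geq r$, which is precisely membership in $\RED^{X,r}_{\sigma}(S)$.

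The inductive cases use no projection, only that $\redperm$ is a congruence, so $t'\redperm t$ gives $t'u\redperm tu$. For $\sigma=\tau\To\rho$, taking arbitrary $S'\subseteq S$, $s\in(0,1]$ and $u\in\RED^{X,1}_{\tau}(S')$, we have $tu\in\RED^{X,r}_{\rho}(S')$ and $t'u\redperm tu$, so the induction hypothesis at the smaller type $\rho$ yields $t'u\in\RED^{X,r}_{\rho}(S')$; hence $t'\in\RED^{X,r}_{\tau\To\rho}(S)$. For $\sigma=\tau\land\rho$ the predicate is $\RED^{X,r}_{\tau}(S)\cap\RED^{X,r}_{\rho}(S)$ and the claim follows by applying the induction hypothesis to each factor. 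For $\sigma=\B C^{q}\tau$ we have $\RED^{X,r}_{\B C^{q}\tau}(S)=\RED^{X,qr}_{\tau}(S)$, and the induction hypothesis at $\tau$, which is universally quantified over the real parameter, gives the result at once. The main obstacle is the second ingredient: establishing that $\NNF$ and $\NHNF$ are genuinely invariant, and not merely anti-monotone, under permutation, since, as the direction flip in the projection fact shows, one cannot get away with the trivial $\sup$-inequality alone; this is where confluence and the $RBT$-monotonicity of the measure are essential.
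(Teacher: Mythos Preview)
Your proposal is correct and follows the same approach as the paper, which merely records ``By induction on $\sigma$.'' You have correctly spelled out the base case, in particular the subtle point that projection through $\pi^{\omega}_{X}$ does not preserve the direction of $\redperm$ and that one must fall back on joinability together with the invariance of $\NNF$/$\NHNF$ under permutative reduction; the paper leaves all of this implicit.
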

\begin{proof}By induction on $\sigma$.
\end{proof}

\begin{lemma}\label{lemma:nu1}
For all terms $t, u_{1},\dots, u_{n}$ with $\FN(t)\subseteq X\cup \{a\}$, $\FN(u_{1}),\dots,\FN(u_{n})\subseteq X$, and measurable sets $S_{1},\dots, S_{k+1}\subseteq(2^{\BB N})^{X\cup\{a\}}$ and $S'\subseteq(2^{\BB N})^{X}$, if 
\begin{enumerate}
\item the $S_{i}$ are pairwise disjoint;
\item for all $\omega \in S_{i}$, $\pi^{\omega}_{X}(tu_{1}\dots u_{n})\in  \NHNF^{r_{i}}$ (resp.~$\pi^{\omega}_{X}(tu_{1}\dots u_{n})\in  \NNF^{r_{i}}$);
\item 
for all $\omega \in S'$,
 $\mu( \Pi^{\omega}(S_{i}))\geq s_{i}$;
\end{enumerate}
then for all $\omega \in S'$, $\pi^{\omega}_{X}((\nu a.t)u_{1}\dots u_{n}) \in \NHNF^{ \sum_{i=1}^{k+1}r_{i}s_{i}}$
(resp.~$\pi^{\omega}_{X}((\nu a.t)u_{1}\dots u_{n}) \in \NNF^{ \sum_{i=1}^{k+1}r_{i}s_{i}}$).

%
%
%
\end{lemma}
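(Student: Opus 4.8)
The plan is to first normalize the shape of the term and then isolate a purely measure-theoretic core. Fix $\omega\in S'$ and write $w=tu_1\dots u_n$ and $s=\pi^\omega_X(w)$, a term with $\FN(s)\subseteq\{a\}$. Since $\FN(u_i)\subseteq X$ and $a\notin X$, repeated use of the permutation \eqref{eq:nuapp} gives $\pi^\omega_X((\nu a.t)u_1\dots u_n)=(\nu a.\pi^\omega_X(t))\pi^\omega_X(u_1)\dots\pi^\omega_X(u_n)\redperm^* \nu a.s$. As every $\redperm$-step is a head reduction, the reducts of $\nu a.s$ are among those of the left-hand term, so $\NHNF(\pi^\omega_X((\nu a.t)u_1\dots u_n))\geq \NHNF(\nu a.s)$ (and likewise for $\NNF$ along $\redall^*$). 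It therefore suffices to prove $\NHNF(\nu a.s)\geq\sum_i r_i s_i$. I also read hypothesis (2) as asserting, for $\omega''\in S_i$, that the fully resolved term $\pi^{\omega''}_{X\cup\{a\}}(w)\in\NHNF^{r_i}$; splitting $\omega''=\omega+\omega'$ this says $\NHNF(\pi^{\omega'}_{\{a\}}(s))\geq r_i$ whenever $\omega'\in\Pi^\omega(S_i)$.

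Next I would expose the lazy $a$-generator. Let $\nu a.T$ be the $\redperm$-normal form of $\nu a.s$; by Corollary~\ref{cor:pnf} (treating the case where $a$ vanishes as a trivial single leaf) $T$ is an $(\CC T,a)$-tree with finite support $\{v_1,\dots,v_m\}\subseteq\CC T$, and the defining recursion for $\CC D_{\nu a.T}$ partitions $2^{\BB N}$ into Borel sets $B_j=\{\omega'\mid\pi^{\omega'}_{\{a\}}(T)=v_j\}$ with $\CC D_{\nu a.T}(u)=\sum_j\mu(B_j)\,\CC D_{v_j}(u)$. Since $s\redperm^* T$ and projection commutes with reduction, $\pi^{\omega'}_{\{a\}}(s)\redperm^* v_j$ for $\omega'\in B_j$; invariance of $\NHNF$ under $\redall$ (a consequence of confluence, Theorem~\ref{thm:confluence}) then yields $\NHNF(v_j)=\NHNF(\pi^{\omega'}_{\{a\}}(s))$ for $\omega'\in B_j$. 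In particular $\NHNF(v_j)\geq r_i$ whenever $B_j\cap\Pi^\omega(S_i)\neq\emptyset$.

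The key step — and the one I expect to be the main obstacle — is the inequality $\NHNF(\nu a.T)\geq\sum_j\mu(B_j)\,\NHNF(v_j)$, which is where eager versus lazy sampling of $a$ must be reconciled. The crucial simplification is that the support is \emph{finite}, so no uniform choice of reduction over a continuum is needed: given $\epsilon>0$, for each leaf $v_j$ I would pick a head reduction $v_j\redall^*_{\mathsf h}V_j$ with $\HNF(V_j)\geq\NHNF(v_j)-\epsilon$, and perform all of them in place inside the randomized context $\nu a.\TT R[\ ]$, obtaining $\nu a.T\redall^*_{\mathsf h}\nu a.T'$ with leaves $V_j$. Reducing the leaves leaves the top $a$-tree and its cylinder measures untouched, so $\HNF(\nu a.T')=\sum_j\mu(B_j)\HNF(V_j)\geq\sum_j\mu(B_j)\NHNF(v_j)-\epsilon$; taking the supremum over reducts and letting $\epsilon\to0$ gives the claim. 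The delicate bookkeeping here is verifying that the simultaneous in-place reduction of finitely many leaves is realizable by head reduction and that $\CC D_{\nu a.T'}$ indeed factors as $\sum_j\mu(B_j)\CC D_{V_j}$.

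It then remains to assemble the measure inequality. Using that the $B_j$ partition $2^{\BB N}$ and that the $\Pi^\omega(S_i)$ are pairwise disjoint (hypothesis (1)) and measurable (Lemma~\ref{lemma:borel}), I would compute
$$
\sum_i r_i\,\mu(\Pi^\omega(S_i))=\sum_j\sum_i r_i\,\mu(\Pi^\omega(S_i)\cap B_j)\leq\sum_j\NHNF(v_j)\,\mu(B_j)\leq\NHNF(\nu a.T),
$$
where the middle inequality uses $\NHNF(v_j)\geq r_i$ on $\Pi^\omega(S_i)\cap B_j$ together with $\sum_i\mu(\Pi^\omega(S_i)\cap B_j)\leq\mu(B_j)$. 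Since hypothesis (3) gives $\mu(\Pi^\omega(S_i))\geq s_i$, this yields $\NHNF(\nu a.s)=\NHNF(\nu a.T)\geq\sum_i r_i s_i$, completing the $\NHNF$ case. The $\NNF$ case is entirely parallel: the same PNF decomposition gives $\NF(\nu a.T')=\sum_j\mu(B_j)\NF(V_j)$ from the recursive clause for $\NF$, and the remaining steps are identical.
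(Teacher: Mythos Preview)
Your proposal is correct and takes essentially the same approach as the paper: pass to the permutative normal form $\nu a.T$, reduce the finitely many leaves in place to witnesses of the target $\HNF$/$\NF$ value, and assemble the measure inequality from the leaf decomposition. The only differences are presentational---the paper handles the case $n>0$ last rather than first, and indexes the relevant leaves as $w_{ij}$ (by which $S_i$ hits them) rather than via your cleaner partition into distinct leaves $v_j$ with cylinders $B_j$.
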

\begin{proof}
We only prove the claim in the case of $\NNF$, the other case being proved similarly.
Let us first consider the case where $n=0$.
Let $\omega \in S'$. Since any term reduces to a (unique) PNF, we can suppose w.l.o.g.~that the name-closed term $\pi^{\omega}_{X}(\nu a.t)=\nu a.t^{*}$ is in PNF. Then by Corollary \ref{cor:pnf} $t^{*}$ is a $(\mathcal T, a)$-tree of level $I_{a}(t^{*})$. 
Observe that for all $\omega'\in (2^{\BB N})^{\{a\}}$, 
$\pi^{\omega'}_{\{a\}}(t^{*})=\pi^{\omega'}_{\{a\}}(\pi^{\omega}_{X}(t))= \pi^{\omega+\omega'}_{X\cup\{a\}}(t)$ . 
%
%

If $N$ is the cardinality of $\mathrm{Supp}(t^{*})$, by 3.~and the fact that $s_{i}>0$ we deduce that for all $i=1,\dots,k+1$ there exists a finite number $K_{i}$ of elements $w_{i1},\dots, w_{iK_{i}}$ of $\mathrm{Supp}(t^{*})$ such that 
$\pi^{\omega_{ij}}_{\{a\}}(t^{*})=w_{ij}\in \NNF^{r_{i}}$ for some $\omega_{ij}\in (2^{\BB N})^{\{a\}}$ such that $\omega+\omega_{ij} \in S_{i}$, and using 3.~we deduce that  
$$
 \sum_{w\in \NNF^{r_{i}}}
\mu\{ \omega\mid \pi^{\omega}_{\{a\}}(t^{*}) =w\}\geq 
\sum_{{j}=1}^{K_{i}} 
\mu\{ \omega\mid \pi^{\omega}_{\{a\}}(t^{*}) =w_{ij}\}\geq s_{i}$$


Now, by reducing, for all $i\leq k+1$, each such term $w_{ij}$ inside $\nu a.t$, to some PNF $w'_{ij}\in \NF^{r_{i}}$, we obtain a new PNF $\nu a.t^{\sharp}$ and we can compute (using the fact that $\sum_{u\in \HNF}\CC D_{w'_{ij}}(u)\cdot \NF(u) \geq r_{i}$):

\resizebox{0.95\linewidth}{!}{
\begin{minipage}{\linewidth}
\begin{align*}
\NF(\nu a.t^{\sharp}) =
\sum_{u\in \HNF}\mathcal D_{\nu a.t^{\sharp}}(u)\cdot \NF(u) & =
\sum_{u\in \HNF}\left (\sum_{t'\in \mathrm{Supp}(t^{\sharp})}\mathcal D_{t'}(u)\cdot 
\mu\{ \omega\mid \pi^{\omega}_{\{a\}}(t^{\sharp}) =t'\} \right)\cdot \NF(u) \\
&= 
\sum_{u\in \HNF}\left (\sum_{t'\in \mathrm{Supp}(t^{\sharp})}\mathcal D_{t'}(u)\cdot \NF(u)\cdot 
\mu\{ \omega\mid \pi^{\omega}_{\{a\}}(t^{\sharp}) =t'\} \right) \\
&= 
\sum_{t'\in \mathrm{Supp}(t^{\sharp})}
\left (\sum_{u\in \HNF} \mathcal D_{t'}(u)\cdot \NF(u)\cdot 
\mu\{ \omega\mid \pi^{\omega}_{\{a\}}(t^{\sharp}) =t'\} \right) \\
&= 
\sum_{t'\in \mathrm{Supp}(t^{\sharp})}
\left (\sum_{u\in \HNF} \mathcal D_{t'}(u)\cdot \NF(u)\right)\cdot 
\mu\{ \omega\mid \pi^{\omega}_{\{a\}}(t^{\sharp}) =t'\}  \\
&\geq 
\sum_{i=1}^{k+1}\sum_{{j}=1}^{K_{i}}
\left (\sum_{u\in \HNF} \mathcal D_{w'_{ij}}(u)\cdot \NF(u)\right)\cdot 
\mu\{ \omega\mid \pi^{\omega}_{\{a\}}(t^{\sharp}) =w'_{ij}\}  \\
& \geq 
\sum_{i=1}^{k+1}\sum_{{j}=1}^{K_{i}}
%
r_{i}\cdot 
\mu\{ \omega\mid \pi^{\omega}_{\{a\}}(t^{\sharp}) =w'_{ij}\}  \\
& = 
\sum_{i=1}^{k+1}r_{i}\cdot\left( \sum_{{j}=1}^{K_{i}}
\mu\{\omega\mid \pi^{\omega}_{\{a\}}(t^{\sharp}) =w'_{ij}\}  \right)\\
& =
\sum_{i=1}^{k+1}r_{i}\cdot\left( \sum_{{j}=1}^{K_{i}}
\mu\{ \omega\mid \pi^{\omega}_{\{a\}}(t^{*}) =w'_{ij}\}  \right)\\
& \geq
\sum_{i=1}^{k+1}r_{i}\cdot s_{i}
\end{align*}
\end{minipage}
}
\medskip

Now, from $\nu a.t\redalll^{*} \nu a.t^{\sharp}\in \NF^{\sum_{i}r_{i}s_{i}}$, we conclude $\nu a.t\in \NNF^{\sum_{i}r_{i}s_{i}}$.

For the case in which $n>0$, we argue as follows:
from the hypotheses we deduce by the first point that 
$ \nu a.(tu_{1}u_{2}\dots u_{n})\in \NNF^{\sum_{i}r_{i}s_{i}}$. We can then conclude by observing that 
$(\nu a.t)u_{1}u_{2}\dots u_{n}\redpermm \nu a.(tu_{1}u_{2}\dots u_{n})$. 
\end{proof}

%

\begin{lemma}\label{lemma:nu2}
For all types $\sigma$, terms $t,u_{1},\dots, u_{n}$ with $\FN(t)\subseteq X\cup \{a\}$, $\FN(u_{i})\subseteq X$,
and measurable sets  $S_{1},\dots, S_{k+1}\subseteq(2^{\BB N})^{X\cup\{a\}}$ and $S'\subseteq (2^{\BB N})^{X}$, if
\begin{enumerate}
\item the $S_{i}$ are pairwise disjoint;
\item $tu_{1}\dots u_{n}\in \HRED_{\sigma}^{X\cup\{a\}, r_{i}}(S_{i})$ (resp.~$tu_{1}\dots u_{n}\in \RED_{\sigma}^{X\cup\{a\}, r_{i}}(S_{i})$), for all $i=1,\dots,k+1$;
\item for all $\omega \in S'$, $\mu( \Pi^{\omega}(S_{i}))\geq s_{i}$;
\end{enumerate}
then $(\nu a.t)u_{1}\dots u_{n}\in \HRED_{\sigma}^{X,\sum_{i=1}^{k+1}r_{i}\cdot s_{i}}(S')$
(resp.~$(\nu a.t)u_{1}\dots u_{n}\in \RED_{\sigma}^{X,\sum_{i=1}^{k+1}r_{i}\cdot s_{i}}(S')$).

%
%
%
%
\end{lemma}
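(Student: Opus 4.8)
The plan is to prove Lemma~\ref{lemma:nu2} by induction on the type $\sigma$, treating the two families $\HRED$ and $\RED$ in parallel, since their defining clauses have identical shape and differ only at the ground level (the constant types $o,\HNORM,\NORM$) and in which types are trivial. Throughout I would write $\eta|_X$ for the restriction of $\eta\in(2^{\BB N})^{X\cup\{a\}}$ to $X$ and identify $\eta$ with the pair $(\eta|_X,\eta|_{\{a\}})$, so that $\Pi^\omega(S)=\{\omega'\mid \omega+\omega'\in S\}$. For the base cases: when $\sigma=[]$ (resp.~$\sigma\in\{[],\HNORM\}$ for $\RED$) the reducibility set is all of $\Lnu^X$, and the conclusion holds because $(\nu a.t)u_1\dots u_n$ has free names in $X$. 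When $\sigma\in\{o,\HNORM,\NORM\}$ (resp.~$\{o,\NORM\}$), membership unfolds to a pointwise condition on $\pi^\omega_\bullet$, and hypothesis~(2) says exactly that for each $\omega\in S_i$ one has $\pi^\omega_{X\cup\{a\}}(tu_1\dots u_n)\in\NHNF^{r_i}$ (resp.~$\NNF^{r_i}$); this is precisely hypothesis~(2) of Lemma~\ref{lemma:nu1}, whose conclusion gives $\pi^\omega_X((\nu a.t)u_1\dots u_n)\in\NHNF^{\sum_i r_i s_i}$ (resp.~$\NNF^{\sum_i r_i s_i}$) for every $\omega\in S'$, i.e.~the desired membership. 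So the entire ground-level content is carried by Lemma~\ref{lemma:nu1}.

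The inductive cases $\sigma=\tau\land\rho$ and $\sigma=\B C^q\tau$ I expect to be immediate: intersection distributes through both the hypothesis and the conclusion, so two applications of the IH suffice; and since $\HRED_{\B C^q\tau}^{X,r}(S)=\HRED_\tau^{X,qr}(S)$ (likewise for $\RED$), applying the IH for $\tau$ with each $r_i$ replaced by $qr_i$ produces the exponent $\sum_i(qr_i)s_i=q\sum_i r_i s_i$, which is exactly what $\B C^q\tau$ requires. The real work is the arrow case $\sigma=\tau\To\rho$. Here I would fix an arbitrary $S''\subseteq S'$, an $s\in(0,1]$ and a $v\in\HRED_\tau^{X,1}(S'')$ (resp.~$\RED$); by the arrow clause it suffices to show $((\nu a.t)u_1\dots u_n)v\in\HRED_\rho^{X,\sum_i r_i s_i}(S'')$. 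The key device is to restrict the sets to $\tilde S_i:=\{\eta\in S_i\mid \eta|_X\in S''\}$.

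With this restriction the argument proceeds as follows. The $\tilde S_i\subseteq S_i$ remain pairwise disjoint, and instantiating the arrow hypothesis $tu_1\dots u_n\in\HRED_{\tau\To\rho}^{X\cup\{a\},r_i}(S_i)$ at the subset $\tilde S_i$ and the argument $v$ gives $tu_1\dots u_n\,v\in\HRED_\rho^{X\cup\{a\},r_i}(\tilde S_i)$, provided $v\in\HRED_\tau^{X\cup\{a\},1}(\tilde S_i)$. This last fact I would derive from $v\in\HRED_\tau^{X,1}(S'')$ via Lemma~\ref{lemma:efs}, using that $a\notin\FN(v)$ so that $\pi^\eta_{X\cup\{a\}}(v)=\pi^{\eta|_X}_X(v)$ for every $\eta$, together with $\eta|_X\in S''$ whenever $\eta\in\tilde S_i$. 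The measure condition is inherited because, for $\omega\in S''$, one has $\Pi^\omega(\tilde S_i)=\Pi^\omega(S_i)$ (as $(\omega+\omega')|_X=\omega\in S''$), while $\Pi^\omega(\tilde S_i)=\emptyset$ for $\omega\notin S''$; hence $\mu(\Pi^\omega(\tilde S_i))\geq s_i$ holds for all $\omega\in S''$. Applying the IH for $\rho$ to the term $tu_1\dots u_n v$ with the data $\tilde S_i,S'',r_i,s_i$ then yields $(\nu a.t)u_1\dots u_n v\in\HRED_\rho^{X,\sum_i r_i s_i}(S'')$, which is the goal.

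The hard part will be the measurability bookkeeping in the arrow case. Because the arrow clause quantifies over \emph{arbitrary} subsets $S''\subseteq S'$, the restricted sets $\tilde S_i$ need not be Borel, so one cannot feed them to the inductive hypothesis if that hypothesis literally demands Borel sets. The resolution I would spell out carefully is that the induction only ever uses the $\omega$-slices $\Pi^\omega(\tilde S_i)$, which are always measurable (each equals $\Pi^\omega(S_i)$ or $\emptyset$); accordingly the measurability assumption should be read, or the lemma restated, as requiring only that each slice $\Pi^\omega(S_i)$ be measurable (as guaranteed by Lemma~\ref{lemma:borel}), a property manifestly stable under $S_i\mapsto\tilde S_i$. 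The second delicate point, already invoked above, is the commutation $\pi^\eta_{X\cup\{a\}}(v)=\pi^{\eta|_X}_X(v)$ for $a\notin\FN(v)$, which rests on the standard convention that bound names avoid the index set $X\cup\{a\}$ and should be stated explicitly.
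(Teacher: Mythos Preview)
Your proposal is correct and follows the same inductive structure as the paper's proof: base cases via Lemma~\ref{lemma:nu1}, $\land$ and $\B C^{q}$ by direct unfolding, and the arrow case by restricting the $S_i$ over a chosen $S''\subseteq S'$, lifting $v$ to $\RED_{\tau}^{X\cup\{a\},1}(\tilde S_i)$ via Lemma~\ref{lemma:efs}, and applying the IH for $\rho$.

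The one substantive difference is your treatment of measurability in the arrow case, and here you are actually more careful than the paper. The paper defines $T_i=\{\omega+\omega'\in S_i\mid \omega\in S''\}$ and asserts that the $T_i$ are measurable ``since they are counter-images of a measurable set through a measurable function (the projection)''. But $T_i=S_i\cap p^{-1}(S'')$, and since the arrow clause quantifies over \emph{arbitrary} $S''\subseteq S'$, $S''$ need not be measurable, so this justification is not quite right. You correctly observe that what is actually used downstream is only the measurability of the slices $\Pi^{\omega}(\tilde S_i)$, and that for $\omega\in S''$ these coincide with $\Pi^{\omega}(S_i)$ (Borel sections of a Borel set), while for $\omega\notin S''$ they are empty. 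Your proposed reading---that the statement should require only slice-measurability, which is stable under $S_i\mapsto\tilde S_i$---is exactly the fix the paper's argument tacitly relies on. Alternatively, one can invoke Lemma~\ref{lemma:efs} to see that membership in $\RED_\sigma^{X,r}(S)$ is pointwise in $S$, so the full-set measurability hypothesis is only ever exercised at the ground level through Lemma~\ref{lemma:nu1}, where slice-measurability suffices.
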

\begin{proof}
We only prove the claim for $\RED_{\sigma}$, the other case being proved similarly. We argue by induction on $\sigma$:
\begin{varitemize}
\item if $\sigma\in \{[],\HNORM\}$, the claim is immediate;

\item if $\sigma\in\{o,\NORM\}$, then the claim follows from Lemma~\ref{lemma:nu1}.
\item if $\sigma =\tau \To \rho$, then let $S''\subseteq S'$ and $u\in \RED_{\tau}^{X,1}(S'')$. 
For all $i$, since $s_{i}>0$ for all $\omega \in S''\subseteq S'$ there exists at least one $\omega'\in (2^{\BB N})^{\{a\}}$, such that $\omega +\omega'\in S_{i}$.
Let $T_{i}=\{ \omega+\omega' \in S_{i} \mid \omega \in S''\}$. The $T_{i}$ are pairwise disjoint and measurable, since they are counter-images of a measurable set through a measurable function (the projection function  from $(2^{\BB N})^{X\cup\{a\}}$ to $ (2^{\BB N})^{X}$).

We have that $u\in \RED_{\tau}^{X,1}(T_{i})$, for all $i=1,\dots,k+1$: for all $\omega+\omega' \in S''$, $\pi^{\omega+\omega'}_{X\cup\{a\}}(u)=\pi^{\omega}_{X}(u)\in \RED_{\tau}^{\emptyset, 1}$, so by Lemma~\ref{lemma:efs}, $u\in \RED_{\tau}^{X,1}(T_{i})$.
Since $T_{i}\subseteq S_{i}$, using the hypothesis 2.~we deduce that $tu_{1}\dots u_{n}u\in \RED_{\rho}^{X,r_{i}}(T_{i})$; moreover, for all $\omega \in S''$, $\mu(\Pi^{\omega}(T_{i}))=\mu(\Pi^{\omega}(S_{i}))\geq s_{i}$. So, by the induction hypothesis $(\nu a.t)u_{1}\dots u_{n}u\in \RED_{\rho}^{X,\sum_{i}r_{i}s_{i}}(S'')=\RED_{\rho}^{X,\sum_{i}r_{i}s_{i}}(S'')$. 
We can thus conclude that $(\nu a.t)u_{1}\dots u_{n}\in \RED_{\sigma}^{X,\sum_{i}r_{i}s_{i}}(S')$. 

\item $\sigma=\tau\land \rho$, then from the hypotheses by IH we deduce $(\nu a.t)u_{1}\dots u_{n}\in \RED_{\tau}^{X,\sum r_{i}s_{i}}(S')$
 and $(\nu a.t)u_{1}\dots u_{n}\in \RED_{\rho}^{X,\sum r_{i}s_{i}}(S')$, and thus
 $(\nu a.t)u_{1}\dots u_{n}\in \RED_{\sigma}^{X,\sum r_{i}s_{i}}(S')$.

\item if $\sigma =\B C^{q}\tau$, the claim follows from the induction hypothesis.

\end{varitemize}
\end{proof}

\begin{lemma}\label{lemma:tre}
\begin{varitemize}
\item[(i.)] $t[u/x]u_{1}\dots u_{n}\in \HRED_{\sigma}^{X,r}(S)$ $\To $ $(\lambda x.t)uu_{1}\dots u_{n}\in \HRED_{\sigma}^{X,r}(S)$.
\item[(ii.)] $t[u/x]u_{1}\dots u_{n}\in \RED_{\sigma}^{X,r}(S)$ $\To $ $(\lambda x.t)uu_{1}\dots u_{n}\in \RED_{\sigma}^{X,r}(S)$.
\end{varitemize}
\end{lemma}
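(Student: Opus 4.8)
The plan is to prove both statements simultaneously by induction on the type $\sigma$, keeping the number $n$ of trailing arguments $u_{1},\dots,u_{n}$ universally quantified, so that the induction hypothesis at a smaller type can be invoked with a longer argument list. I will carry out case (ii.) in detail; case (i.) is identical, replacing $\RED$ by $\HRED$ and $\NNF^{r}$ by $\NHNF^{r}$ throughout. Two elementary facts about the projection $\pi^{\omega}_{X}$ do the real work. First, since $\pi^{\omega}_{X}$ acts homomorphically on abstractions and applications and only resolves the $\oplus$/$\nu$ structure — which is untouched by term substitution — it commutes with substitution: $\pi^{\omega}_{X}(t[u/x]) = \pi^{\omega}_{X}(t)[\pi^{\omega}_{X}(u)/x]$. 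Consequently $\pi^{\omega}_{X}((\lambda x.t)uu_{1}\dots u_{n}) = (\lambda x.\pi^{\omega}_{X}(t))\pi^{\omega}_{X}(u)\pi^{\omega}_{X}(u_{1})\dots\pi^{\omega}_{X}(u_{n})$, and contracting the head redex gives $\pi^{\omega}_{X}((\lambda x.t)uu_{1}\dots u_{n}) \redbeta \pi^{\omega}_{X}(t[u/x]u_{1}\dots u_{n})$. Second, because $\NNF(s) = \sup\{\NF(w)\mid s\redall^{*}w\}$ is a supremum over \emph{all} reducts, any step $s\redall s'$ yields $\NNF(s)\geq\NNF(s')$, so membership in $\NNF^{r}$ (and likewise in $\NHNF^{r}$) is preserved backwards along $\redall$.

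With these in hand the induction is routine. For the base cases where the predicate is all of $\Lnu^{X}$ — namely $\sigma=[]$, together with $\sigma=\HNORM$ in case (ii.) — there is nothing to prove beyond $(\lambda x.t)uu_{1}\dots u_{n}\in\Lnu^{X}$, which is immediate. For the remaining ground cases — $\sigma\in\{o,\NORM\}$, together with $\sigma=\HNORM$ in case (i.) — the hypothesis says $\pi^{\omega}_{X}(t[u/x]u_{1}\dots u_{n})\in\NNF^{r}$ for every $\omega\in S$; combining the displayed $\redbeta$-step with backward preservation gives $\pi^{\omega}_{X}((\lambda x.t)uu_{1}\dots u_{n})\in\NNF^{r}$ for every $\omega\in S$, which is exactly the conclusion. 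For $\sigma=\tau\To\rho$, I unfold the definition: given $S'\subseteq S$, $s\in(0,1]$ and $v\in\RED_{\tau}^{X,1}(S')$, the hypothesis $t[u/x]u_{1}\dots u_{n}\in\RED_{\tau\To\rho}^{X,r}(S)$ yields $t[u/x]u_{1}\dots u_{n}v\in\RED_{\rho}^{X,r}(S')$, and the induction hypothesis at $\rho$ applied to the argument list $u_{1},\dots,u_{n},v$ produces $(\lambda x.t)uu_{1}\dots u_{n}v\in\RED_{\rho}^{X,r}(S')$; since $S',s,v$ were arbitrary, this means $(\lambda x.t)uu_{1}\dots u_{n}\in\RED_{\tau\To\rho}^{X,r}(S)$. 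For $\sigma=\tau\land\rho$ one intersects: the hypothesis places $t[u/x]u_{1}\dots u_{n}$ in both $\RED_{\tau}^{X,r}(S)$ and $\RED_{\rho}^{X,r}(S)$, and two applications of the induction hypothesis put $(\lambda x.t)uu_{1}\dots u_{n}$ in each, hence in their intersection. Finally, for $\sigma=\B C^{q}\tau$ one simply notes $\RED_{\B C^{q}\tau}^{X,r}(S)=\RED_{\tau}^{X,qr}(S)$ and applies the induction hypothesis at $\tau$ with parameter $qr$ in place of $r$.

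There is essentially no hard step here: the statement is a standard backward-closure property of reducibility candidates under head $\beta$-expansion, and the only point requiring a moment's care is the commutation of $\pi^{\omega}_{X}$ with substitution, which is checked by a straightforward induction on $t$ (the $\oplus_{a}^{i}$ clause being the only slightly delicate one, and going through because $\pi^{\omega}_{X}$ dispatches on the name $a$ rather than on any term variable). Note that throughout I use only $\beta$-reduction, so Lemma~\ref{lemma:perm} is not needed here; the backward preservation of $\NNF^{r}$ (resp.~$\NHNF^{r}$) along the single head $\beta$-step follows directly from the supremum definition of $\NNF$ (resp.~$\NHNF$).
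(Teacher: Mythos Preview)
Your proof is correct and follows essentially the same approach as the paper: induction on $\sigma$, with the base cases handled via commutation of $\pi^{\omega}_{X}$ with substitution and backward closure of $\NNF^{r}$ (resp.\ $\NHNF^{r}$) under the head $\beta$-step, and the compound cases ($\To$, $\land$, $\B C^{q}$) by unfolding definitions and invoking the induction hypothesis with an extended argument list or adjusted parameter. The paper phrases the base case as ``the reduction tree of $\pi^{\omega}_{X}((\lambda x.t)uu_{1}\dots u_{n})$ is contained in that of $\pi^{\omega}_{X}(t[u/x]u_{1}\dots u_{n})$'', which is the same observation you make about the supremum definition.
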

\begin{proof}
We only prove claim (ii.) as claim (i.) is proved similarly.
By induction on $\sigma$:
\begin{varitemize}
\item if $\sigma\in \{[],\HNORM\}$, the claim is immediate;

\item If $\sigma\in\{o,\NORM\}$, then if $t[u/x]u_{1}\dots u_{n}\in \RED_{\sigma}^{X,r}(S)$, then for all $\omega \in S$, 
 $\pi^{\omega}_{X}(t[u/x]u_{1}\dots u_{n})=\pi^{\omega}_{X}(t)[\pi^{\omega}_{X}(u)/x]\pi^{\omega}_{X}(u_{1})\dots \pi^{\omega}_{X}(u_{n})\in \NNF^{r}$. Since the reduction tree of $\pi^{\omega}_{X}(t'')$, where 
 $t''=(\lambda x.t)uu_{1}\dots u_{n}$, is contained in that of $\pi^{\omega}_{X}(t[u/x]u_{1}\dots u_{n})$, we deduce $\pi^{\omega}_{X}(t'')\in \NF^{r}$, and we conclude that 
 $t''\in \RED_{\sigma}^{X,r}(S)$. 
 
\item If $\sigma=\tau\To \rho$, then let $S'\subseteq S$ and $v\in \RED_{\tau}^{X, 1}(S')$; then $
t[u/x]u_{1}\dots u_{n}v \in \RED_{\rho}^{X,r}(S')$ so by  IH $(\lambda x.t)uu_{1}\dots u_{n}v\in \RED_{\rho}^{X,r}(S')$. This proves in particular that $(\lambda x.t)uu_{1}\dots u_{n}\in \RED_{\sigma}^{X,r}(S)$.

\item If $\sigma=\tau\land \rho$ and $t[u/x]u_{1}\dots u_{n}\in \RED_{\sigma}^{X,r}(S)$, then 
$t[u/x]u_{1}\dots u_{n}\in \RED_{\tau}^{X,r}(S)$ and $t[u/x]u_{1}\dots u_{n}\in \RED_{\tau}^{X,r}(S)$, so by IH
$(\lambda x.t)uu_{1}\dots u_{n}\in \RED_{\tau}^{X,r}(S)$ and $(\lambda x.t)uu_{1}\dots u_{n}\in \RED_{\rho}^{X,r}(S)$, 
which implies $(\lambda x.t)uu_{1}\dots u_{n}\in \RED_{\sigma}^{X,r}(S)$.

\item If $\sigma=\B C^{q}\tau$, then if $t[u/x]u_{1}\dots u_{n}\in \RED_{\sigma}^{X,r}(S)=\RED_{\tau}^{X,rq}(S)$, then by I.H.~
$(\lambda x.t)uu_{1}\dots u_{n}\in \RED_{\tau}^{X,rq}(S)=\RED_{\sigma}^{X,r}(S)$.
\end{varitemize}
\end{proof}

\begin{lemma}\label{lemma:quattro}
For all $t$ such that $\FN(t)\subseteq X$ and measurable sets $S, S'\subseteq (2^{\BB N})^{X}$,
\begin{varitemize}
\item[(i.)] If for all $S''\subseteq S'$ and 
$u\in \HRED_{\sigma}^{X,s}( S'')$, $t[u/x]\in \HRED_{\tau}^{X, rs}(S\cap S'')$, then $\lambda x.t\in \HRED_{\sigma\To \tau}^{X, r}( S\cap S')$.

\item[(ii.)] If for all $S''\subseteq S'$ and 
$u\in \RED_{\sigma}^{X,s}( S'')$, $t[u/x]\in \RED_{\tau}^{X, rs}(S\cap S'')$, then $\lambda x.t\in \RED_{\sigma\To \tau}^{X, r}( S\cap S')$.

\end{varitemize}
\end{lemma}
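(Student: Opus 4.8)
The plan is to unfold the definitions of the two reducibility predicates and reduce both claims to the substitution lemma (Lemma~\ref{lemma:tre}) in the degenerate case $n=0$. I would treat case (ii.) in detail; case (i.) is identical, replacing $\RED$ by $\HRED$ throughout and invoking part (i.) of Lemma~\ref{lemma:tre} in place of part (ii.).

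By the definition of $\RED_{\sigma\To\tau}^{X,r}(\cdot)$, to establish $\lambda x.t\in \RED_{\sigma\To\tau}^{X,r}(S\cap S')$ it suffices to fix an arbitrary measurable $T\subseteq S\cap S'$ and an arbitrary $v\in \RED_{\sigma}^{X,1}(T)$, and to show $(\lambda x.t)v\in \RED_{\tau}^{X,r}(T)$. The quantifier over $s$ in the definition of the arrow predicate plays no role here, since the premise $u\in\RED_{\sigma}^{X,1}(S')$ only ever refers to the index $1$.

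Now observe that $T\subseteq S\cap S'\subseteq S'$, so $T$ is an admissible choice of $S''$ in the hypothesis of the lemma. Instantiating that hypothesis with $S''=T$, $s=1$ and $u=v$, and using $v\in\RED_{\sigma}^{X,1}(T)$, yields $t[v/x]\in\RED_{\tau}^{X,r\cdot 1}(S\cap T)=\RED_{\tau}^{X,r}(S\cap T)$. Since moreover $T\subseteq S$, we have $S\cap T=T$, and hence $t[v/x]\in\RED_{\tau}^{X,r}(T)$. Finally, applying Lemma~\ref{lemma:tre}(ii.) with $n=0$, so that $t[v/x]\in\RED_{\tau}^{X,r}(T)$ implies $(\lambda x.t)v\in\RED_{\tau}^{X,r}(T)$, gives exactly what we need.

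There is no serious obstacle: the argument is a bookkeeping exercise in matching the universally quantified sets and indices in the definition of the arrow predicate against those in the hypothesis, the only points requiring care being the two set identities $T\subseteq S'$ (licensing the use of the hypothesis) and $S\cap T=T$ (collapsing the intersection produced by the hypothesis to the set $T$ appearing in the goal). The genuinely substantive content, namely that a head-/normalizing $\beta$-redex inherits the reducibility of its contractum, is precisely what has already been isolated in Lemma~\ref{lemma:tre}.
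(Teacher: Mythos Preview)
Your proof is correct and follows essentially the same route as the paper: fix $T\subseteq S\cap S'$, instantiate the hypothesis with $S''=T$, and close with Lemma~\ref{lemma:tre}. If anything, your version is tidier: you correctly observe that the $s$-quantifier in the definition of $\RED_{\sigma\To\tau}$ is vacuous and that $S\cap T=T$, whereas the paper's write-up carries a general $s$ and contains what looks like a slip (writing $S\cap S'$ where $T$ is meant).
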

\begin{proof}
As usual, we only prove claim (ii.).
Let $T\subseteq S\cap S'$ and  $u \in \RED_{\sigma}^{X,s}( T)$. 
Then, by hypothesis, $t[u/x]\in \RED_{\tau}^{X,rs}(S\cap T)= \RED_{\tau}^{X,rs}(S\cap S')$ and we conclude by Lemma \ref{lemma:tre} that $(\lambda x.t)u\in \RED_{\tau}^{X,rs}(S\cap S')$.
This proves that $\lambda x.t\in \RED_{\sigma\To \tau}^{X,r}(S\cap S')$.

\end{proof}

\begin{lemma}\label{lemma:inclusion}
For any two types $\FF s, \FF t$ of $\TCINT$, if $\FF s \preceq \FF t$, then for all $S\subseteq (2^{\BB N})^{X}$,
\begin{varitemize}
\item[(i.)] $\HRED_{\sigma}^{X,r}(S)\subseteq \HRED_{\tau}^{X,r}(S)$;
\item[(ii.)]  $\RED_{\sigma}^{X,r}(S)\subseteq \RED_{\tau}^{X,r}(S)$.
\end{varitemize}
\end{lemma}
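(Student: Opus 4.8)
The plan is to prove both inclusions simultaneously by a single induction, strengthening the statement with an auxiliary \emph{multiset-level} claim so that the contravariant arrow case can go through. Concretely, alongside the type-level inclusions (i.) and (ii.) I would prove: whenever $\FF N\preceq^{*}\FF M$, then $\HRED_{\FF N}^{X,1}(S)\subseteq\HRED_{\FF M}^{X,1}(S)$ and $\RED_{\FF N}^{X,1}(S)\subseteq\RED_{\FF M}^{X,1}(S)$, where a multiset $[\FF s_{1},\dots,\FF s_{n}]$ is read, as everywhere in this section, as the conjunction $\FF s_{1}\land\dots\land\FF s_{n}$, so that $\RED_{[\FF s_{1},\dots,\FF s_{n}]}=\bigcap_{i}\RED_{\FF s_{i}}$ with $\RED_{[]}^{X,r}(S)=\Lnu^{X}$. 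The induction runs on the sum of the sizes of the two types (equivalently, on the derivation of $\sigma\preceq\tau$), which is legitimate since every appeal to the induction hypothesis is on a strictly smaller pair. As the arguments for $\HRED$ and $\RED$ are line-by-line identical, I would spell out only the $\RED$ case.

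The base cases are immediate: if $\sigma\preceq\tau$ is an instance of $\alpha\preceq\alpha$ then $\sigma=\tau$ and the two reducibility sets literally coincide, and the degenerate cases involving $[]$ are handled by $\RED_{[]}^{X,r}(S)=\Lnu^{X}$ together with the observation that $[]\preceq^{*}\FF M$ forces $\FF M=[]$. For the counting case $\B C^{q}\sigma\preceq\B C^{s}\tau$ I would unfold $\RED_{\B C^{q}\sigma}^{X,r}(S)=\RED_{\sigma}^{X,qr}(S)$ and $\RED_{\B C^{s}\tau}^{X,r}(S)=\RED_{\tau}^{X,sr}(S)$, apply the induction hypothesis to get $\RED_{\sigma}^{X,qr}(S)\subseteq\RED_{\tau}^{X,qr}(S)$, and then close the gap with Lemma \ref{lemma:leq}: the rule forces $s\le q$ (this is the direction of the counting preorder already used in the proof of Lemma \ref{lemma:order}), hence $sr\le qr$ and so $\RED_{\tau}^{X,qr}(S)\subseteq\RED_{\tau}^{X,sr}(S)$; composing gives the claim. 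The multiset statement is then routine: from $\FF N\preceq^{*}\FF M$, i.e.\ an injection $f$ with $\FF u_{f(j)}\preceq\FF v_{j}$ for the components $\FF v_{j}$ of $\FF M$, the induction hypothesis yields $\RED_{\FF u_{f(j)}}^{X,1}(S)\subseteq\RED_{\FF v_{j}}^{X,1}(S)$ for each $j$, whence $\RED_{\FF N}=\bigcap_{i}\RED_{\FF u_{i}}\subseteq\bigcap_{j}\RED_{\FF v_{j}}=\RED_{\FF M}$.

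The crux, and the step I expect to be the main obstacle, is the arrow case $(\FF M\To\sigma)\preceq(\FF N\To\tau)$, which holds by $\sigma\preceq\tau$ together with the \emph{contravariant} condition $\FF N\preceq^{*}\FF M$. Given $t\in\RED_{\FF M\To\sigma}^{X,r}(S)$, to show $t\in\RED_{\FF N\To\tau}^{X,r}(S)$ I would fix $S'\subseteq S$ and $u\in\RED_{\FF N}^{X,1}(S')$ and verify $tu\in\RED_{\tau}^{X,r}(S')$. The contravariance is exactly what makes the argument type-check: by the multiset statement $\RED_{\FF N}^{X,1}(S')\subseteq\RED_{\FF M}^{X,1}(S')$, so $u$ is also a legal argument at the \emph{source} domain, $u\in\RED_{\FF M}^{X,1}(S')$; then $t\in\RED_{\FF M\To\sigma}^{X,r}(S)$ gives $tu\in\RED_{\sigma}^{X,r}(S')$, and finally the induction hypothesis on the codomain, $\RED_{\sigma}^{X,r}(S')\subseteq\RED_{\tau}^{X,r}(S')$, delivers $tu\in\RED_{\tau}^{X,r}(S')$.

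The conjunction case, when reached directly, is immediate from $\RED_{\sigma\land\tau}^{X,r}(S)=\RED_{\sigma}^{X,r}(S)\cap\RED_{\tau}^{X,r}(S)$ and the induction hypothesis. The only genuine care needed throughout is bookkeeping: keeping the direction of $\preceq$ on counting quantifiers and on intersections straight (both are reversed relative to the naive reading), and making sure the auxiliary multiset claim is stated at index $1$, which is all that the arrow clause ever feeds to its domain.
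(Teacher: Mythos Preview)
Your proposal is correct and takes essentially the same approach as the paper: induction on the structure of the types, handling the counting case via Lemma~\ref{lemma:leq} and the arrow case via the contravariance of the domain multiset together with the induction hypothesis on the codomain. The paper's proof is terser---it applies the induction hypothesis directly to the individual components $\FF N_{f(i)}\preceq\FF M_{i}$ and then says ``by the definition of $\RED_{\FF s}^{X,r}(S)$ this implies\dots''---whereas you make the intermediate multiset-level inclusion $\RED_{\FF N}^{X,1}(S')\subseteq\RED_{\FF M}^{X,1}(S')$ explicit as an auxiliary claim; this is the same argument, just with the intersection step unpacked. Your remark that the counting preorder must read $s\le q$ (consistent with how it is used in the proof of Lemma~\ref{lemma:order} and in the paper's own proof of this lemma) is also correct.
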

\begin{proof}
As usual, we only prove claim (ii.).
We argue by induction on $\FF s$: 
\begin{varitemize}
\item  if $\FF s\in\{[],o,\HNORM,\NORM\}$, then it must be $\FF s=\FF t$, so the claim is immediate.
\item If $\FF s=[\FF M_{1},\dots, \FF M_{n}]\To \FF s'$, then 
$\FF t=[\FF N_{1},\dots, \FF N_{m}]\To \FF t'$, where $\FF s'\preceq \FF t'$, and $\FF s_{i}\succeq\FF t_{f(i)}$ holds for some injective function $f:\{1,\dots,n\}\to \{1,\dots, m\}$. 
By the I.H.~we have then that $\RED_{\FF s'}^{X,r}(S)\subseteq \RED_{\FF t'}^{X,r}(S)$ and 
$\RED_{\FF t_{f(i)}}^{X,1}(S')\subseteq \RED_{\FF s_{i}}^{X,1}(S')$; by the definition of $\RED_{\FF s}^{X,r}(S)$ this implies then $\RED_{\FF s}^{X,r}(S)\subseteq \RED_{\FF t}^{X,r}(S)$. 

\item if $\FF s= \B C^{q}\FF s'$, then it must be $\FF t=\B C^{s}\FF t'$, with $\FF s'\preceq \FF t'$ and $q\geq s$; then by IH and Lemma \ref{lemma:leq} we have 
$\RED_{\FF s}^{X,r}(S)=\RED_{\FF s'}^{X,qr}(S)\subseteq \RED_{\FF s'}^{X,sr}(S)\subseteq \RED_{\FF t'}^{X,sr}(S)=
\RED_{\FF t}^{X,r}(S)$.
\end{varitemize}
\end{proof}

We now show how the two families of reducibility candidates differ on the kind of normalization properties they warrant.

 Let us introduce a (first) notion of \emph{neutral term} for $\EVL$:

\begin{definition}[neutral terms, version 1]

For any
 measurable set $S\subseteq (2^{\BB N})^{X}$, the set  $\NEUT(S)$ is defined by induction as follows:
\begin{varitemize}
\item for any variable $x$, $x\in \NEUT(S)$;


\item if $t\in \NEUT(S)$ then for all $u\in \Lnu^{X}$, $tu\in \NEUT(S)$.

\end{varitemize}

\end{definition}

It is easily checked that for all $t\in \NEUT(S)$ and $\omega\in S$, $\HNF(\pi^{\omega}_{X}(t))=1$.
%

Probabilistic neutral terms can be used to show that reducible terms reduce to a probabilistic head normal form.

\begin{definition}[non-trivial type]\label{def:nontrivial2}
The set of \emph{non-trivial types} is defined by induction as follows:
\begin{varitemize}

\item $o,\HNORM,\NORM$ are non-trivial;
\item if $\tau$ is non-trivial, $\sigma\To \tau$ is non-trivial;
\item if either $\sigma$ or $\tau$ are non-trivial, $\sigma\land \tau$ is non-trivial;
\item if $\sigma$ is non-trivial, $\B C^{q}\sigma$ is non-trivial.

\end{varitemize}
\end{definition}

\begin{lemma}\label{lemma:unobis}
For any type $\sigma$: 
\begin{enumerate}
\item if $\sigma$ is non-trivial and 
$t\in 
\HRED_{\sigma}^{X,r}(S)$ and $\omega \in S$, then $\pi^{\omega}_{X}(t)\in \NHNF^{\srank{\sigma}\cdot r}$;
\item $\NEUT(S)\subseteq \HRED_{\sigma}^{X, 1}(S)$.
\end{enumerate}
\end{lemma}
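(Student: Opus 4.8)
The plan is to prove the two statements simultaneously by induction on the structure of $\sigma$, following the usual pattern for reducibility candidates: statement 1 plays the role of the property ``reducible terms normalize'' (here: reach a head normal value with the prescribed probability), while statement 2 is the property ``neutral terms are reducible'', which is what feeds statement 1 back into the induction at arrow types. A preliminary observation I would record is that every type that is \emph{not} non-trivial (in the sense of Definition \ref{def:nontrivial2}) has $\srank{\cdot}=0$; this is an immediate induction on types using the clauses $\srank{\sigma\To\tau}=\srank\sigma\cdot\srank\tau$, $\srank{\sigma\land\tau}=\max\{\srank\sigma,\srank\tau\}$ and $\srank{\B C^q\sigma}=q\srank\sigma$, together with $\srank{[]}=0$.

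For statement 2, note first that the definition of $\NEUT(S)$ does not actually depend on $S$, so $\NEUT(S)=\NEUT(S')$ for all $S,S'$. The base cases are direct: $\HRED_{[]}^{X,1}(S)=\Lnu^X$ contains every neutral term, and for $\sigma\in\{o,\HNORM,\NORM\}$ the recalled fact that $\HNF(\pi^\omega_X(t))=1$ for $t\in\NEUT(S)$ and $\omega\in S$ gives $\NHNF(\pi^\omega_X(t))\geq 1$, i.e. $t\in\HRED_\sigma^{X,1}(S)$. For $\sigma=\tau\To\rho$, given $t\in\NEUT(S)$, $S'\subseteq S$ and $u\in\HRED_\tau^{X,1}(S')$, the term $tu$ lies in $\NEUT(S')$ by the closure clause of the definition, so the induction hypothesis for $\rho$ gives $tu\in\HRED_\rho^{X,1}(S')$; hence $t\in\HRED_{\tau\To\rho}^{X,1}(S)$. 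The case $\sigma=\tau\land\rho$ is just the intersection of the two hypotheses, and the case $\sigma=\B C^q\tau$ follows from the induction hypothesis together with Lemma \ref{lemma:leq}, since $\HRED_{\B C^q\tau}^{X,1}(S)=\HRED_\tau^{X,q}(S)\supseteq\HRED_\tau^{X,1}(S)$.

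For statement 1, the ground cases $\sigma\in\{o,\HNORM,\NORM\}$ are immediate from the definition of $\HRED_\sigma^{X,r}(S)$ (and $\srank\sigma=1$), while $[]$ is vacuous because it is trivial. In the case $\sigma=\tau\land\rho$, non-triviality forces at least one component (say $\tau$) to be non-trivial; since $t\in\HRED_\tau^{X,r}(S)$ the induction hypothesis yields $\NHNF(\pi^\omega_X(t))\geq\srank\tau\cdot r$, and the preliminary observation gives $\srank\sigma=\srank\tau$ when the other component is trivial (and a bound from both components when both are non-trivial), so $\srank\sigma\cdot r$ is reached. The case $\sigma=\B C^q\tau$ is bookkeeping: $\HRED_{\B C^q\tau}^{X,r}(S)=\HRED_\tau^{X,qr}(S)$ and $\srank\sigma\cdot r=\srank\tau\cdot qr$. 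The interesting case is $\sigma=\tau\To\rho$ (with $\rho$ non-trivial): I would pick a fresh variable $x$, which by statement 2 at type $\tau$ lies in $\HRED_\tau^{X,1}(S)$, so $tx\in\HRED_\rho^{X,r}(S)$; the induction hypothesis for $\rho$ gives $\NHNF(\pi^\omega_X(t)\,x)=\NHNF(\pi^\omega_X(tx))\geq\srank\rho\cdot r$, and since $\srank\tau\leq 1$ it suffices to deduce $\NHNF(\pi^\omega_X(t))\geq\srank\rho\cdot r\geq\srank\tau\srank\rho\cdot r=\srank\sigma\cdot r$.

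The main obstacle is precisely this last deduction, namely the inequality $\NHNF(s)\geq\NHNF(s\,x)$ for a name-closed $s$. Lemma \ref{lemma:x} states the analogous inequality only for the one-step functions $\HNF$ and $\NF$, so I would need its head-multistep refinement: given a head reduction $s\,x\redall_{\mathsf h}^{*} u$, either the reduction takes place entirely ``inside $s$'', producing $s\redall^{*} s'$ with $u=s'x$ and $\HNF(u)\leq\HNF(s')\leq\NHNF(s)$ by Lemma \ref{lemma:x}(i), or $s\redall_{\mathsf h}^{*}\lambda y.s'$ and the argument is absorbed, in which case Lemma \ref{lemma:lamb}(i) and Lemma \ref{lemma:t}(i) identify the head-normalization probability of $s$ with that of the relevant subterm. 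Taking the supremum over all such $u$ yields $\NHNF(s)\geq\NHNF(s\,x)$. This refinement, essentially a rerun of the proof of Lemma \ref{lemma:x} at the level of $\redall_{\mathsf h}$, is the only genuinely non-routine ingredient; once it is in place the whole induction closes.
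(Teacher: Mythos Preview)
Your proposal is correct and follows essentially the same induction on $\sigma$ as the paper's proof, with the same treatment of each case. You are in fact more careful than the paper in the arrow case of part~1: the paper simply asserts that from $\pi^{\omega}_{X}(t)x\in\NHNF^{\srank\rho\cdot r}$ one gets $\pi^{\omega}_{X}(t)\in\NHNF^{\srank\sigma\cdot r}$, implicitly using (the $\NHNF$-version of) Lemma~\ref{lemma:x}, whereas you correctly spell out that Lemma~\ref{lemma:x} is stated for $\HNF$ and that the $\NHNF$-refinement is what is actually needed here (and in Lemma~\ref{lemma:uno}); the argument you sketch for that refinement is the intended one.
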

\begin{proof}
We argue by induction on $\sigma$:
\begin{varitemize}
\item if $\sigma=[]$, then both claims are immediate;
\item if $\sigma\in\{o,\HNORM,\NORM\}$ then by definition $t\in \HRED_{\sigma}^{X,r}(S)$ iff for any $\omega \in S$, $\pi^{\omega}_{X}(t)\in \NHNF^{r}=\NHNF^{\srank{\sigma}r}$.
Moreover, if $t\in \NEUT(S)$, then $\HNF(t)=1$, so $t\in \HRED_{\sigma}^{X,1}(S)$.

\item if $\sigma=\tau \To \rho$ and $t\in \HRED_{\sigma}^{X,r}(S)$, then by IH $x\in \HRED_{\tau}^{X,1}(S)$, hence
$tx\in \HRED_{\tau}^{X,r}(S)$, and again by IH for all $\omega \in S$ $\pi^{\omega}_{X}(tx)=\pi^{\omega}_{X}(t)x\in \NHNF^{\srank{\tau}\cdot r}=\NHNF^{\srank{\sigma}\cdot r}$, and thus $\pi^{\omega}_{X}(t)\in \NHNF^{\srank{\sigma}\cdot r}$.

Moreover, if $t\in \NEUT(S)$, let $S'\subseteq S$ and $u\in \HRED_{\tau}^{X,1}(S)$, then  by IH for all $\omega \in S'$, $\pi^{\omega}(u)\in \NHNF^{\srank \tau}$ and thus $tu\in \NEUT(S')$, so by IH $tu\in \HRED_{\rho}^{X,1}(S')\subseteq \HRED_{\rho}^{X,r}(S')$; we can thus conclude that $t\in \HRED_{\sigma}^{X,1}(S)$.

\item if $\sigma=\tau \land \rho$ and $t\in \HRED_{\sigma}^{X,r}(S)$, then $t\in \HRED_{\tau}^{X,r}(S)$ and $t\in \HRED_{\rho}^{X,r}(S)$, so by IH for all $\omega \in S$, $\pi^{\omega}_{X}(t)\in \NHNF^{\srank{\tau}r}$ and $\pi^{\omega}_{X}(t)\in \NHNF^{\srank{\rho}r}$; hence we deduce that for all $\omega \in S$ $\pi^{\omega}_{X}(t)\in \NHNF^{\max\{\srank{\tau},\srank{\rho}\}\cdot r }=\NHNF^{\srank{\sigma}\cdot r}$.

Moreover, if $t\in \NEUT(S)$, then by IH $t\in \HRED_{\tau}^{X,1}(S)$ and $t\in \HRED_{\rho}^{X,1}(S)$, so $t\in \HRED_{\sigma}^{X,1}(S)$.

\item if $\sigma=\B C^{s}\tau$ and $t\in \HRED_{\sigma}^{X,q}(S)=\HRED_{\tau}^{X,qs}(S)$, then by IH
for all $\omega \in S$, $\pi^{\omega}_{X}(t)\in \NHNF^{\srank{\tau}qs}= \NHNF^{\srank{\sigma}q}$.
Moreover, if $t\in \NEUT(S)$, then by IH $t\in \HRED_{\tau}^{X,1}(S)
\subseteq \HRED_{\tau}^{X,s}(S)=\HRED_{\sigma}^{X,1}(S)$.

\end{varitemize}
\end{proof}

To establish a result like Lemma \ref{lemma:unobis} for the family $\RED_{\sigma}^{X,r}(S)$, as discussed in Section \ref{section6}, we need to restrict to \emph{balanced} types. For our general class of types, these are defined as follows:

\begin{definition}
The set $\BAL$ of \emph{balanced types} is defined inductively as follows:
\begin{varitemize}

\item $\B C^{\vec q}\sigma$, for $\sigma\in\{o,\HNORM,\NORM\}$;

\item if $\sigma,\tau\in \BAL$, $\sigma\land \tau\in \BAL$;

\item if $\FF \sigma_{1},\dots, \FF \sigma_{n}\in \BAL$ and $\prod_{i=1}^{k}q_{i}\leq \prod_{j=1}^{n}\srank{\sigma_{j}}$, then 
$\B C^{q_{1}}\dots \B C^{q_{n}}(\sigma_{1}\To \dots \To \sigma_{n}\To o)\in \BAL$.

\end{varitemize}
%
\end{definition}

For instance, the type $\B C^{\frac{1}{2}}(\B C^{\frac{1}{2}}o\To o)$ is balanced, while
 $\B C^{1}(\B C^{\frac{1}{2}}o\To o)$ is not. 
 
 One can check that any balanced type is also non-trivial.

 To study normalization for balanced types we need a more refined notion of neutral term.
 \begin{definition}[neutral terms, version 2]

For any $q\in (0,1]$, and measurable set $S$, the set  $\NEUT^{q}(S)$ is defined by induction as follows:
\begin{varitemize}
\item for any variable $x$, $x\in \NEUT^{q}(S)$;


\item if $t\in \NEUT^{q}(S)$ and for all $\omega \in S$, $\NNF(\pi^{\omega}_{X}(u))\geq s$, then $tu\in \NEUT^{qs}(S)$.

\end{varitemize}

\end{definition}

It is easily checked that for all $t\in \NEUT^{q}(S)$ and $\omega\in S$, $\NNF(\pi^{\omega}_{X}(t))\geq q$.

\begin{lemma}\label{lemma:uno}
For any balanced and $\{[],\HNORM\}$-free type $\sigma$: 
\begin{enumerate}
\item if
$t\in 
\RED_{\sigma}^{X,r}(S)$ and $\omega \in S$, then $\pi^{\omega}_{X}(S)\in \NNF^{\srank{\sigma}\cdot r}$;
\item $\NEUT^{q}(S)\subseteq \RED_{\sigma}^{X, q}(S)$.
\end{enumerate}
\end{lemma}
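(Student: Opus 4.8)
The plan is to prove both statements simultaneously by induction on $\sigma$, paralleling the proof of Lemma~\ref{lemma:unobis} but with two new ingredients: the refined neutral terms $\NEUT^{q}(S)$ replace $\NEUT(S)$, and the balancedness hypothesis is what finally makes the quantitative bookkeeping close. The one structural point to get right first is that, because the balancedness condition $\prod_i q_i \leq \prod_j \srank{\sigma_j}$ constrains the \emph{leading block} of counting quantifiers against \emph{all} the argument ranks at once, it is not preserved when one peels a single leading $\B C$ off a bare arrow type. Hence I would not induct constructor-by-constructor; instead I would first observe that every balanced, $\{[],\HNORM\}$-free type is, up to associativity of $\land$, a conjunction of blocks $\B C^{\vec q}\alpha$ with $\alpha\in\{o,\NORM\}$ or $\alpha=\sigma_1\To\cdots\To\sigma_n\To o$, the $\sigma_j$ again balanced and $\{[],\HNORM\}$-free, and $\prod_i q_i\leq\prod_j\srank{\sigma_j}$; the induction then processes one such block (with all $n$ arguments at once) at a time, the $\sigma_j$ being strictly smaller.

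For claim~1 on a block $\sigma=\B C^{\vec q}(\sigma_1\To\cdots\To\sigma_n\To o)$, I would unfold $\RED_{\sigma}^{X,r}(S)=\RED_{\sigma_1\To\cdots\To o}^{X,(\prod_i q_i)r}(S)$ and feed $t$ a tuple of fresh variables $x_1,\dots,x_n$. Each $x_j$ lies in $\NEUT^{1}(S)$, hence in $\RED_{\sigma_j}^{X,1}(S)$ by claim~2 of the induction hypothesis on the smaller type $\sigma_j$; so $tx_1\cdots x_n\in\RED_{o}^{X,(\prod_i q_i)r}(S)$, i.e.\ $\NNF(\pi^{\omega}_{X}(t)x_1\cdots x_n)\geq(\prod_i q_i)r$ for every $\omega\in S$ (using $\pi^{\omega}_{X}(x_j)=x_j$). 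Since $\pi^{\omega}_{X}(t)$ is name-closed, iterating Lemma~\ref{lemma:x}(ii) (with Lemmas~\ref{lemma:t} and~\ref{lemma:lamb} handling the fresh variables) gives $\NNF(\pi^{\omega}_{X}(t))\geq(\prod_i q_i)r\geq(\prod_i q_i)(\prod_j\srank{\sigma_j})r=\srank{\sigma}\cdot r$, the last inequality holding merely because each $\srank{\sigma_j}\leq1$. Notably claim~1 uses no balancedness; the base blocks $\B C^{\vec q}o$ and $\B C^{\vec q}\NORM$ are immediate from the definition of $\RED$, and the $\land$ case follows because $\NNF(\pi^{\omega}_{X}(t))$ dominating both $\srank{\sigma}r$ and $\srank{\tau}r$ dominates their maximum.

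For claim~2 on the same block, let $t\in\NEUT^{q}(S)$; I would show $t\in\RED_{\sigma}^{X,q}(S)=\RED_{\sigma_1\To\cdots\To o}^{X,(\prod_i q_i)q}(S)$ by unfolding the arrow clause. Given $S'\subseteq S$ and arguments $u_j\in\RED_{\sigma_j}^{X,1}(S')$, claim~1 of the induction hypothesis on $\sigma_j$ yields $\NNF(\pi^{\omega}_{X}(u_j))\geq\srank{\sigma_j}$ for all $\omega\in S'$; since $\NEUT^{q}$ is monotone in its set argument (an easy induction on its defining rules) we have $t\in\NEUT^{q}(S')$, and the second defining clause of $\NEUT$ then accumulates to $tu_1\cdots u_n\in\NEUT^{q\prod_j\srank{\sigma_j}}(S')$, whence $\NNF(\pi^{\omega}_{X}(tu_1\cdots u_n))\geq q\prod_j\srank{\sigma_j}$ for $\omega\in S'$. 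This is exactly where balancedness is indispensable: $\prod_i q_i\leq\prod_j\srank{\sigma_j}$ gives $q\prod_j\srank{\sigma_j}\geq(\prod_i q_i)q$, which is the bound required to conclude $tu_1\cdots u_n\in\RED_{o}^{X,(\prod_i q_i)q}(S')$, and hence $t\in\RED_{\sigma}^{X,q}(S)$. The base and $\land$ cases are again routine (for a base block, $t\in\NEUT^{q}(S)$ already gives $\NNF\geq q\geq(\prod_i q_i)q$).

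The main obstacle I expect is precisely the mismatch just described between how $\RED$ and $\BAL$ are defined: reducibility is specified one constructor at a time, whereas balancedness is a global condition on an entire prenex block, so the clean constructor-recursion of Lemma~\ref{lemma:unobis} must be replaced by block-level recursion, and one has to verify carefully that the accumulated $\NEUT$-probability $q\prod_j\srank{\sigma_j}$ is compared against the correct target $(\prod_i q_i)q$. Everything else---monotonicity of $\NEUT^{q}$, the freshness bookkeeping for the applied variables, and the $\land$ and base cases---is routine and mirrors the corresponding steps for $\HRED$ in Lemma~\ref{lemma:unobis}.
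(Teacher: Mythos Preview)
Your proposal is correct and follows essentially the same approach as the paper: both argue by induction over the block structure of balanced types (base blocks $\B C^{\vec q}\alpha$ with $\alpha\in\{o,\NORM\}$, conjunctions, and arrow blocks $\B C^{\vec q}(\sigma_1\To\cdots\To\sigma_n\To o)$), feeding fresh variables via claim~2 of the IH and Lemma~\ref{lemma:x} for claim~1, and for claim~2 accumulating the $\NNF$ bounds on the $u_j$ (obtained from claim~1 of the IH) into $\NEUT^{q\prod_j\srank{\sigma_j}}(S')$ before invoking the balancedness inequality $\prod_i q_i\leq\prod_j\srank{\sigma_j}$. Your explicit remark that constructor-by-constructor induction fails because balancedness is a condition on an entire prenex block, and your isolation of the monotonicity of $\NEUT^q(\cdot)$ in its set argument, make explicit two points the paper leaves implicit, but the underlying argument is the same.
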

\begin{proof}
We argue by induction on $\FF s$:
\begin{varitemize}
\item if $\sigma=\B C^{q_{1}}\dots \B C^{q_{k}}\tau$, with $\tau\in\{o,\NORM\}$, then by definition $t\in \RED_{\sigma}^{X,r}(S)$ iff for all $\omega \in S$, $\pi^{\omega}_{X}(t)\in \NNF^{q_{1}\dots q_{k}r}=\NF^{\srank{o}\cdot r}$. Moreover if $t\in \NEUT^{q}(S)$, then 
for all $\omega \in S$, $\NNF(\pi^{\omega}_{X}(t))\geq q$, hence $t\in \RED_{o}^{X,q}(S)\subseteq
\RED_{\sigma}^{X,{q}{\srank{\sigma}}}(S)$.

\item if $\sigma=\tau\land \rho$, then we can argue as in the corresponding point in the proof of Lemma \ref{lemma:unobis}.

\item if $\sigma=\B C^{q_{1}}\dots \B C^{q_{k}}(\sigma_{1}\To \dots \To \sigma_{n}\To o)$, where $\prod_{i}q_{i}\leq \prod_{j}\srank{\sigma_{j}}$, and $t\in \RED_{\sigma}^{X,r}(S)$, then by IH $x_{j}\in \NEUT^{1}(S)\subseteq \RED_{\sigma_{i}}^{X,1}(S)$, and thus 
$tx_{1}\dots x_{n}\in \RED_{o}^{rq_{1}\dots q_{k}}(S)$, so for all $\omega \in S$, 
$\pi^{\omega}_{X}(tx_{1}\dots x_{n})=\pi^{\omega}_{X}(t)x_{1}\dots x_{n}\in \NNF^{rq_{1}\dots q_{k}}\subseteq \NNF^{rq_{1}\dots q_{n}\prod_{j}\srank{\sigma_{j}}}=\NNF^{\srank{\sigma}r}$.
We deduce then $\pi^{\omega}_{X}(t)\in\NNF^{\srank{\sigma}r}$ by applying Lemma \ref{lemma:x} a finite number of times.

%
%
%

Moreover, 
if $t\in \NEUT^{q}(S)$, $S'\subseteq S$, $\omega \in S'$, and $u_{j}\in\RED_{\sigma_{j}}^{X,1}(S')$, then
by the IH for all $\omega \in S'$ $\pi^{\omega}_{X}(u_{j})\in \NNF^{\srank{\sigma_{j}}}$, so  
 $tu_{1}\dots u_{n} \in \NEUT^{q\prod_{k}\srank{\sigma_{j}}}(S')\subseteq \NEUT^{qq_{1}\dots q_{k}}(S')$;
 by the IH it follows then $tu_{1}\dots u_{n}\in 
\RED_{o}^{X,q_{1}\dots q_{k}s}(S')$. 
We can thus conclude 
$t\in 
 \RED_{\sigma_{1}\To \dots \To \sigma_{n}\To o}^{X,qq_{1}\dots q_{k}}(S)=
 \RED_{\sigma}^{X,q}(S)$.
  
%
\end{varitemize}
\end{proof}

The last ingredient of our reducibility argument is the following:

\begin{proposition}\label{prop:normalization}
If $\Gamma \vdash^{X}t: \bone \Pto \FF s$ is derivable in $\TCINT$, where $\Gamma=
\{x_{1}:\FF M_{1},\dots, x_{m}:\FF M_{n}\}$, then for all $S\subseteq \model{\bone}_{X}$, the following two claims hold:
\begin{varitemize}
\item[(i.)] for all $u_{i}\in \HRED_{\FF M_{i}}^{X,1}(S)$,  $t[u _{1}/x_{1},\dots, u_{m}/x_{m}]\in \HRED_{\FF s}^{X,1}(S)$;
\item[(ii.)] for all $u_{i}\in \RED_{\FF M_{i}}^{X,1}(S)$,  $t[u _{1}/x_{1},\dots, u_{m}/x_{m}]\in \RED_{\FF s}^{X,1}(S)$.
\end{varitemize}
\end{proposition}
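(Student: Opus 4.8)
The plan is to prove (i.) and (ii.) \emph{simultaneously} by induction on the typing derivation of $\Gamma \vdash^{X} t : \bone \Pto \FF s$. Since the two families $\HRED$ and $\RED$ obey parallel closure lemmas (Lemmas \ref{lemma:leq}, \ref{lemma:zero}, \ref{lemma:efs}, \ref{lemma:tre}, \ref{lemma:quattro}, \ref{lemma:inclusion}, \ref{lemma:nu2}), every rule is discharged by the same argument for both cases, so I write $\mathrm{R}$ uniformly for either predicate. Writing $\gamma=[u_{1}/x_{1},\dots,u_{m}/x_{m}]$ for the substitution, I use throughout the pointwise characterization of Lemma \ref{lemma:efs}: it makes each $\mathrm{R}_{\sigma}^{X,r}(S)$ \emph{downward closed} in $S$ (membership over $S$ entails membership over any $S'\subseteq S$), which is what lets the reducible substituends be reused on the smaller probability-spaces the induction produces.

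The routine cases go as follows. For (id$_{\preceq}$) the term $t\gamma$ is one of the $u_{i}\in\mathrm{R}_{[\FF s_{1},\dots,\FF s_{n}]}^{X,1}(S)=\bigcap_{j}\mathrm{R}_{\FF s_{j}}^{X,1}(S)$, and Lemma \ref{lemma:inclusion} promotes it to the required supertype $\FF t$. For $(\vee)$ I partition $S\subseteq\model{\bone}_{X}\subseteq\model{\bigvee_{i}\bone_{i}}_{X}$ into $S_{i}=S\cap\model{\bone_{i}}_{X}$, apply the IH on each $S_{i}\subseteq\model{\bone_{i}}_{X}$ (the $u_{j}$ stay reducible over $S_{i}$ by downward closure), and recombine via Lemma \ref{lemma:efs}; the nullary instance ($\bone\vDash\BOT$, $S=\emptyset$) is closed by Lemma \ref{lemma:zero}. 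The rules $(\oplus l)/(\oplus r)$ use that on $S\subseteq\model{\bone}_{X}$ each $\omega$ forces $\bvar_{a}^{i}$ (resp.~$\lnot\bvar_{a}^{i}$), so $\pi^{\omega}_{X}$ of the choice collapses to the typed branch while $S\subseteq\model{\btwo}_{X}$, and the IH applies. The application rule $(@_{\cap})$ is immediate from the definitions: the IH gives $t\gamma\in\mathrm{R}_{\FF M\To\tau}^{X,q}(S)$ and, for each $i$, $u\gamma\in\mathrm{R}_{\FF s_{i}}^{X,1}(S)$, i.e.~$u\gamma\in\mathrm{R}_{\FF M}^{X,1}(S)$ at probability one, which is exactly what the defining clause of $\mathrm{R}_{\FF M\To\tau}$ consumes. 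The abstraction rule $(\lambda)$ is dual: I feed the IH (with a probability-one bound variable) into Lemma \ref{lemma:quattro}, whose conclusion is precisely $\lambda x.\,t\gamma\in\mathrm{R}_{\FF M\To\tau}^{X,q}(S)$, the requisite $\beta$-expansion being absorbed into Lemma \ref{lemma:tre}.

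The counting rule $(\mu_{\Sigma})$ is the crux, and the only place where the quantitative content enters. Given $S'\subseteq\model{\bone}_{X}$ and reducible $u_{j}$, I set $S_{i}=\{\omega\in(2^{\BB N})^{X\cup\{a\}}\mid \omega|_{X}\in S' \text{ and } \omega\vDash\bthree_{i}\}$. These are pairwise disjoint because the $\bthree_{i}$ are; each $S_{i}\subseteq\model{\bone\land\bthree_{i}}_{X\cup\{a\}}$ since $\FN(\bone)\subseteq X$; and, as $\FN(\bthree_{i})\subseteq\{a\}$, the fibre $\Pi^{\omega}(S_{i})$ equals $\model{\bthree_{i}}_{\{a\}}$ for every $\omega\in S'$, whence $\mu(\Pi^{\omega}(S_{i}))=\mu(\bthree_{i})\geq s_{i}$. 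Because $a\notin\FN(u_{j})$, the $u_{j}$ lift from $\mathrm{R}_{\FF M_{j}}^{X,1}(S')$ to $\mathrm{R}_{\FF M_{j}}^{X\cup\{a\},1}(S_{i})$ by Lemma \ref{lemma:efs}, so the IH yields $t\gamma\in\mathrm{R}_{\sigma}^{X\cup\{a\},q_{i}}(S_{i})$. Feeding these into Lemma \ref{lemma:nu2} (with no trailing arguments, and $r_{i}=q_{i}$) produces $\nu a.\,t\gamma\in\mathrm{R}_{\sigma}^{X,\sum_{i}q_{i}s_{i}}(S')$, which is the required $\mathrm{R}_{\B C^{\sum_{i}q_{i}s_{i}}\sigma}^{X,1}(S')$. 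Finally the ground-type rules are closed off by the neutral-term lemmas: for $(\HNORM)$ the IH gives $t\gamma\in\HRED_{\sigma}^{X,q}(S)$ and Lemma \ref{lemma:unobis} transfers this to $\pi^{\omega}_{X}(t\gamma)\in\NHNF^{q}$ for all $\omega\in S$ — here the argument positions of $\sigma$ are immaterial to \emph{head} normalization, the operative rank being $\srank{\HNORM}=1$ — so $t\gamma\in\HRED_{\HNORM}^{X,q}(S)$; for $(\NORM)$ the same move invokes Lemma \ref{lemma:uno}, and safeness of $\sigma$ is essential, since a safe $\sigma$ (balanced, $\{[],\HNORM\}$-free, with no leading quantifier) forces $\srank{\sigma}=1$, exactly the condition under which Lemma \ref{lemma:uno} delivers $\NNF^{q}$ rather than the strictly smaller $\NNF^{\srank{\sigma}\,q}$.

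I expect $(\mu_{\Sigma})$ to be the principal obstacle: the bookkeeping needed to manufacture the pairwise-disjoint sets $S_{i}$ with the correct conditional measures and to discharge all three hypotheses of Lemma \ref{lemma:nu2} uniformly in $\omega\in S'$ is where the interplay between the Boolean side-conditions and the Lebesgue measure on the Cantor space is genuinely exploited. A secondary delicate point is keeping the two families aligned in the $(\NORM)$ case, where — unlike for $\HNORM$ — the internal ranks of $\sigma$ really do constrain normalization, so the argument only goes through under the safeness restriction built into the rule.
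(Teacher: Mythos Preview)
Your proposal is correct and follows essentially the same route as the paper's proof: induction on the typing derivation, with the same lemmas (\ref{lemma:efs}, \ref{lemma:inclusion}, \ref{lemma:tre}, \ref{lemma:quattro}, \ref{lemma:nu2}, \ref{lemma:unobis}, \ref{lemma:uno}) discharging the corresponding rules, and $(\mu_{\Sigma})$ handled exactly as in the paper by manufacturing disjoint fibred sets and feeding them into Lemma~\ref{lemma:nu2}. Your treatment of $(\HNORM)$ and $(\NORM)$ also mirrors the paper's, including the appeal to $\srank{\sigma}=1$; note only that for claim~(ii.) the case $(\HNORM)$ is in fact trivial since $\RED_{\HNORM}^{X,r}(S)=\HLnu^{X}$, so no rank computation is needed there.
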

\begin{proof}
We only prove claim (ii.).
The argument is by induction on a type derivation of $t$:
\begin{varitemize}

\item if $t=x$ and the last rule is  
\begin{prooftree}
\AXC{$\FF s_{i}\preceq \FF t$}
\AXC{$\FN(\bone)\subseteq X$}
\RL{(id$_{\cap}$)}
\BIC{$\Gamma, x:[\FF s_{1},\dots, \FF s_{n}] \vdash^{X} x: \bone\Pto \FF t$}
\end{prooftree}
Then for all $S\subseteq \model{\bone}_{X}$, and for all
 $u_{1}\in \RED^{X,1}_{\FF M_{1}}(S),\dots
u_{n}\in \RED^{X,1}_{\FF M_{n}}(S)$,  and 
$u\in \RED^{X,1}_{\FF M}(S)$, 
using Lemma \ref{lemma:inclusion} we deduce $u\in \RED_{\FF s_{i}}^{X,1}(S)\subseteq \RED_{\FF t}^{X,1}(S)$, whence
$t[u_{1}/x_{1},\dots, u_{n}/x_{n}, u/x]= u \in \RED_{\FF t}^{X, 1}(\model{\bone}_{X}\cap S)= \RED_{\FF s}^{X,1}(S)$.

\item if the last rule is  
\begin{prooftree}
\AXC{$\Big\{\Gamma \vdash^{X} t: \bone_{j}\Pto \FF s\Big \}_{j=1,\dots,k}$}
\AXC{$\bone\vDash^{X} \bigvee_{j}\bone_{j}$}
\RL{$\TU$}
\BIC{$\Gamma \vdash^{X} t: \bone\Pto \FF s$}
\end{prooftree}
then by IH, for all $j=1,\dots, k$, $S\subseteq \model{\bone_{j}}_{X}$, for all $u_{1}\in \RED_{\FF M_{1}}^{X,1}(S),\dots, u_{n}\in \RED_{\FF M_{n}}^{X,1}(S)$ and for all $\omega \in S$, \\
$\pi^{\omega}_{X}( t[u_{1}/x_{1},\dots, u_{n}/x_{n}])\in \RED_{\rho}^{\emptyset, 1}$.

If $k=0$, then $\bone \vdash \bigvee_{j}\bone_{j}=\BOT$, so the claim trivially holds. Let then $k>0$. 
Let $S\subseteq \model{\bone}_{X} \subseteq\model{\bigvee_{j}\bone_{j}}_{X}=\bigcup_{i} \model{\bone_{j}}_{X}$ and $\omega \in S$. Then $f\in   \model{\bone_{j}}_{X}$, for some $j\leq k$. Then first observe that from $u_{i}\in \RED_{\FF M_{i}}^{X,q_{i}}(S)$ it follows in particular that 
$u_{i}\in \RED_{\FF M_{i}}^{X,1}(S\cap \model{\bone_{j}}_{X})$: for all $\omega\in S\cap  \model{\bone_{j}}_{X}$, since $\omega\in S$, $\pi^{\omega}_{X}(u_{i})\in \RED_{\FF M_{i}}^{\emptyset,1}$, so by Lemma \ref{lemma:efs} we conclude that $u_{i}\in \RED_{\FF M_{i}}^{X,1}(S\cap  \model{\bone_{j}}_{X})$. 
Hence we deduce that 
$\pi^{\omega}_{X}( t[u_{1}/x_{1},\dots, u_{n}/x_{n}])\in \RED_{\FF s}^{\emptyset, q_{1}\dots q_{n}}$, and again by Lemma \ref{lemma:efs} we conclude $ t[u_{1}/x_{1},\dots, u_{n}/x_{n}]\in \RED_{\FF s}^{X, 1}(S)$.

\item if $t= t_{1}\oplus^{i}_{a}t_{2}$, and the last rule is 
\begin{prooftree}
\AXC{$ \Gamma \vdash^{X} t_{1}: \bone\Pto \FF s$}
\AXC{$ \btwo\vDash  (\lnot\bvar_{a}^{i}\land \bone)$}
\RL{$\TL$}
\BIC{$ \Gamma \vdash^{X}t_{1}\oplus^{i}_{a} t_{2}:\btwo\Pto \FF s$}
\end{prooftree}
By IH and Lemma~\ref{lemma:efs}, for all 
$S\subseteq \model{\bone}_{X}$, $\omega \in S$, 
$u_{1}\in \RED^{X,1}_{\FF M_{1}}(S),\dots
u_{n}\in \RED^{X,1}_{\FF M_{n}}(S)$, 
$\pi^{\omega}_{X}(t_{1}[u_{1}/x_{1},\dots, u_{n}/x_{n}]) \in \RED^{\emptyset, 1}_{\FF s}$.
Let now $S\subseteq \model{\btwo}_{X}\subseteq \model{\bvar_{a}^{i}}_{X}\land \model{\bone}_{X}$ and $u_{1}\in \RED^{X,1}_{\FF M_{1}}(S),$ $\dots,$
$u_{n}\in \RED^{X,1}_{\FF M_{n}}(S)$.
If now $\omega \in S\subseteq \model{\btwo}_{X\cup\{a\}}$, then $\omega(a)(i)=0$ so in particular $\pi^{\omega}_{X}(t[u_{1}/x_{1},\dots, u_{n}/x_{n}])= \pi^{\omega}_{X}(t_{1}[u_{1}/x_{1},\dots, u_{n}/x_{n}]) \in \RED^{\emptyset, 1}_{\FF s}$. 
Hence, using Lemma~\ref{lemma:efs} we conclude that $t[u_{1}/x_{1},\dots,u_{n}/x_{n}]$ $\in$ $\RED_{\FF s}^{X,1}(S)$.

\item if $t= t_{1}\oplus^{i}_{a}t_{2}$, and the last rule is 
\begin{prooftree}
\AxiomC{$ \Gamma \vdash^{X} t_{2}: \bone\Pto \FF s$}
\AxiomC{$ \btwo\vDash  (\lnot\bvar_{a}^{i}\land \bone)$}
\RightLabel{$\TR$}
\BinaryInfC{$ \Gamma \vdash^{X}t_{1}\oplus^{i}_{a} t_{2}:\btwo\Pto \FF s$}
\end{prooftree}
then we can argue similarly to the previous case.

\item if the last rule is 
$$
\AXC{$\Gamma \vdash^{X}t: \bone \Pto\B C^{q} \sigma $}
\RL{($\HNORM$)}
\UIC{$\Gamma \vdash^{X}t: \bone \Pto \B C^{q}\HNORM$}
\DP
$$
then by IH for all $S\subseteq \model{\bone}_{X}$, for all $u_{1}\in \RED_{\FF M_{1}}^{X,1}(S),\dots, u_{n}\in \RED_{\FF M_{n}}^{X,1}(S)$, $\pi^{\omega}_{X}(t[u_{1}/x_{1},\dots, t_{n}/x_{n}])\in \RED_{\sigma}^{\emptyset,q}$, so by Lemma 
\ref{lemma:unobis}, $\HNF(\pi^{\omega}_{X}(t[u_{1}/x_{1},\dots, t_{n}/x_{n}]))\geq q\cdot \srank \sigma=q$, which implies
$t[u_{1}/x_{1},\dots, t_{n}/x_{n}]\in \RED_{\HNORM}^{X,q}(S)$.

\item if the last rule is 
$$
\AXC{$\Gamma \vdash^{X}t: \bone \Pto\B C^{q} \sigma $}
\AXC{$\sigma$ safe}
\RL{($\NORM$)}
\BIC{$\Gamma \vdash^{X}t: \bone \Pto \B C^{q}\NORM$}
\DP
$$
then $\sigma$ is $\{[],\HNORM\}$-free and balanced; by IH for all $S\subseteq \model{\bone}_{X}$, for all $u_{1}\in \RED_{\FF M_{1}}^{X,1}(S),\dots, u_{n}\in \RED_{\FF M_{n}}^{X,1}(S)$, $\pi^{\omega}_{X}(t[u_{1}/x_{1},\dots, t_{n}/x_{n}])\in \RED_{\sigma}^{\emptyset,q}$; furthermore, by Lemma 
\ref{lemma:uno}, $\NNF(\pi^{\omega}_{X}(t[u_{1}/x_{1},\dots, t_{n}/x_{n}]))\geq q\cdot \srank \sigma=q$, which implies
$t[u_{1}/x_{1},\dots, t_{n}/x_{n}]\in \RED_{\NORM}^{X,q}(S)$.

\item if $t=\lambda x.u$ and the last rule is
\begin{prooftree}
\AXC{$ \Gamma, x: \FF M \vdash^{X} u: \bone\Pto \B C^{\vec s}\tau$}
\RL{$\TLA$}
\UIC{$\Gamma\vdash^{X} \lambda x.u: \bone\Pto \B C^{\vec s}(\FF M\To \tau)$}
\end{prooftree}
where $\vec s= s_{1},\dots, s_{l}$; then, by IH, for all $S\subseteq \model{\bone}_{X}$, for all 
$u_{1}\in \RED_{\FF M_{1}}^{X,1}(S),\dots, u_{n}\in \RED_{\FF M_{n}}^{X,1}(S)$, 
and $v\in \RED_{\FF M}^{X,1}(S)$, 
$u[u_{1}/x_{1},\dots, u_{n}/x_{n},v/x]\in \RED_{\B C^{\vec s}\tau}^{X,1}( S)= \RED_{\tau}^{X, s_{1}\dots s_{k}}(S)$.
In particular, for all choice of $u_{1},\dots, u_{n}$ and variable $y$ not occurring free in $u_{1},\dots, u_{n}$, by letting $t'=u[u_{1}/x_{1},\dots, u_{n}/x_{n},y/x]$, we have that for all
 $v\in \RED_{\FF M}^{X,1}(S\cap \model{\bone}_{X})$, 
 $t'[v/y]=u[u_{1}/x_{1},\dots, u_{n}/x_{n},v/x] \in \RED_{\tau}^{X,s_{1}\dots s_{k}}(  S)$.
 By Lemma \ref{lemma:quattro}, then, we can conclude that $\lambda y.t'=(\lambda y.t[y/x])[u_{1}/x_{1},\dots, u_{n}/x_{n}]= (\lambda x.t)[u_{1}/x_{1},\dots, u_{n}/x_{n}]\in \RED_{\FF M\To \sigma}^{X,s_{1}\dots s_{k}}(S)=
 \RED_{\B C^{\vec s}(\FF M\To \sigma)}^{X,1}(S)$.

\item $t= uv$ and the last rule is
\begin{prooftree}
\AXC{$ \Gamma \vdash^{X} u: \btwo\Pto\B C^{\vec s}(\FF M\To \sigma)$}
\AXC{$ \Big\{\Gamma\vdash^{X}v: \bthree_{i}\Pto \FF s_{i}\Big\}_{i}$}
\AXC{$\bone\vDash^{X}\btwo\land\left (\bigwedge_{i} \bthree_{i}\right)$}
\RL{$\TA$}
\TIC{$\Gamma\vdash^{X} uv: \bone\Pto \B C^{\vec s}\sigma$}
\end{prooftree}
where $\vec s=s_{1},\dots, s_{k}$ and $\CC M=[\FF s_{1},\dots, \FF s_{n}]$, then 
by IH for all $S\subseteq \model{\bone}_{X}\subseteq \model{\btwo}_{X}\cap \bigcap_{i}\model{\bthree_{i}}_{X}$ and 
$u_{1}\in \RED^{X,1}_{\FF M_{1}}(S),\dots
u_{n}\in \RED^{X,1}_{\FF M_{n}}(S)$,
$u[u_{1}/x,\dots, u_{n}/x] \in \RED_{\FF M\To \sigma}^{X,s_{1}\dots s_{k}}(S)$ and 
$v[u_{1}/x,\dots,u_{n}/x]\in \bigcap_{i}\RED_{\FF s_{i}}^{X,1}(S)=\RED_{\FF M}^{X,1}(S)$
(where, if $n=0$, $v[u_{1}/x,\dots,u_{n}/x]\in \RED_{\FF M}^{X,1}(S)$ does not follow from the induction hypothesis but from the fact that  $\RED_{[]}^{X,1}(S)=\Lnu^{X}$) so we deduce  
$u[u_{1}/x,\dots, u_{n}/x] v[u_{1}/x,\dots, u_{n}/x] \
=
(uv)[u_{1}/x,\dots, u_{n}/x] \
\in \RED_{ \sigma}^{X,s_{1}\dots s_{k}  }(S)=\RED_{\B C^{\vec s}\sigma}^{X,1}(S)$.

\item If $t=\nu a.u$ and the last rule is
\begin{prooftree}
\AxiomC{$\Big\{\Gamma\vdash^{X\cup\{a\}} u: \btwo\land \bthree_{i}\Pto \B C^{q_{j}}\sigma\Big\}_{j=1,\dots,n+1}$}
\AxiomC{$\vDash\mu({\bthree_{j}})\geq r_{j}$}
\AxiomC{$\bone \vDash \btwo$}
\RightLabel{$\TN$}
\TrinaryInfC{$\Gamma\vdash^{X} \nu a.u: \bone\Pto \B C^{\sum_{j}r_{j}q_{j}}\sigma$}
\end{prooftree}
where $a\notin FV(\Gamma)\cup FV(\btwo)$, then let $S\subseteq \model{\bone}_{X}\subseteq \model{\btwo}_{X}$ and $u_{1}\in \RED^{X,1}_{\FF M_{1}}(S),\dots
u_{n}\in \RED^{X ,1}_{\FF M_{n}}(S)$.
Let $T=\{\omega+\omega'\mid \omega \in S\}\subseteq (2^{\BB N})^{X\cup\{a\}}$, which is measurable since counter-image of a measurable set through the projection function; then since $\FN(u_{i})\subseteq X$, we deduce using Lemma \ref{lemma:efs} that $u_{i}\in \RED_{\FF M_{i}}^{X\cup\{a\},1}(T\cap\model{ \bthree_{j}}_{X\cup\{a\}})$. 
By IH and the hypothesis we deduce then that 
\begin{varitemize}
\item the sets $\model{\bthree_{j}}_{X\cup\{a\}}$ are pairwise disjoint;
\item $u[u_{1}/x_{1},\dots, u_{n}/x_{n}]\in \RED^{X\cup\{a\},1}_{\FF s}(T\cap\model{\bthree_{j}}_{X\cup\{a\}})$;
\item for all $\omega \in S$, $\mu(\Pi^{\omega}(T\cap\model{ \bthree_{j}}_{X\cup\{a\}}))\geq r_{j}$;

\end{varitemize}

Hence, by Lemma~\ref{lemma:nu2} we conclude that 
$\nu a.u[u_{1}/x_{1},\dots, u_{n}/x_{n}]= (\nu a.u)[u_{1}/x_{1},\dots, u_{n}/x_{n}] \in \RED_{\FF s}^{X,\sum_{j}r_{j}q_{j}}(S)=\RED_{\B C^{\sum_{j}r_{j}q_{j}}\FF s}^{X,1}(S)$.

%
%
%

\end{varitemize}
\end{proof}


From Proposition \ref{prop:normalization} (i.) and Lemma \ref{lemma:unobis}, by observing that any type of $\TCINT$ is non-trivial, we deduce
the following, which generalizes the first part of Theorem \ref{thm:normalization}.

\begin{theorem}
If $ \Gamma\vdash t: \bone \Pto \B C^{q}\sigma$ holds in $\TCINT$, then for all $\omega\in \model\bone$, $\NHNF(\pi^{\omega}_{X}(t))\geq q$.
\end{theorem}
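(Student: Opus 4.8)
The plan is to obtain this statement as a short corollary of the reducibility machinery already built for $\TCINT$, routing the argument through the ground type $\HNORM$ so as to avoid any loss coming from the rank $\srank{\sigma}$. The first move I would make is to append one instance of the $(\HNORM)$-rule to the given derivation of $\Gamma \vdash^X t: \bone \Pto \B C^q\sigma$, producing a derivation of $\Gamma \vdash^X t: \bone \Pto \B C^q\HNORM$ with the \emph{same} term $t$, the \emph{same} constraint $\bone$, and the \emph{same} probability $q$. This is the decisive step: in general $\srank{\B C^q\sigma} = q\cdot\srank{\sigma}$ may be strictly smaller than $q$ (e.g.\ when $\sigma$ is an arrow type built from a component of rank $<1$), so applying Lemma~\ref{lemma:unobis}(1) directly to $\B C^q\sigma$ would only deliver the weaker bound $q\cdot\srank{\sigma}$; relaxing the conclusion to $\HNORM$ is exactly what is needed, since $\srank{\HNORM}=1$ and therefore $\srank{\B C^q\HNORM}=q$.

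Next I would feed the derivation of $\Gamma \vdash^X t: \bone \Pto \B C^q\HNORM$ into Proposition~\ref{prop:normalization}(i.), choosing the measurable set $S := \model{\bone}_X$ together with the identity substitution. Writing $\Gamma = \{x_1:\FF M_1,\dots,x_m:\FF M_m\}$, each bound variable $x_i$ is neutral, i.e.\ $x_i \in \NEUT(S)$, so Lemma~\ref{lemma:unobis}(2) gives $x_i \in \HRED_{\FF M_i}^{X,1}(S)$ (part (2) needs no non-triviality hypothesis, so it applies to the intersection types $\FF M_i$). Taking $u_i := x_i$, the proposition yields $t = t[x_1/x_1,\dots,x_m/x_m] \in \HRED_{\B C^q\HNORM}^{X,1}(S)$. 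This step is precisely where the open context $\Gamma$ is absorbed: free variables are substituted by themselves, which are admissible reducible arguments exactly because neutral terms inhabit every candidate at level $1$.

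Finally I would invoke Lemma~\ref{lemma:unobis}(1). Since $\B C^q\HNORM$ is itself a type of $\TCINT$, and every type of $\TCINT$ is non-trivial (a routine induction: the rightmost component of any $\sigma$ is one of $o,\NORM,\HNORM$, all non-trivial, and non-triviality is preserved under $\FF M\To(-)$, $(-)\land(-)$ and $\B C^q(-)$), part (1) applies to $t \in \HRED_{\B C^q\HNORM}^{X,1}(S)$ and yields, for every $\omega \in S = \model{\bone}$, that $\pi^\omega_X(t) \in \NHNF^{\srank{\B C^q\HNORM}\cdot 1} = \NHNF^{q}$, i.e.\ $\NHNF(\pi^\omega_X(t)) \geq q$. (Equivalently, one may skip part (1) and simply unfold $\HRED_{\B C^q\HNORM}^{X,1}(S) = \HRED_{\HNORM}^{X,q}(S)$, whose definition records literally the bound $\pi^\omega_X(t)\in\NHNF^q$ for all $\omega\in S$.)

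As for difficulty, essentially all the substantive content has already been discharged by Proposition~\ref{prop:normalization} and Lemma~\ref{lemma:unobis}, so this final statement is a bookkeeping corollary rather than a fresh argument. The only genuinely delicate point — and the one I would be careful to get right — is the initial $(\HNORM)$-relaxation: it is sound precisely because it preserves $q$ while raising the rank to $1$, and it is what restores the sharp probability $q$ that a naive application to $\B C^q\sigma$ would lose.
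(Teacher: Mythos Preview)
Your argument is correct and mirrors the paper's: invoke Proposition~\ref{prop:normalization}(i.) with $S=\model{\bone}_X$ and the identity substitution (placing each $x_i$ in $\HRED_{\FF M_i}^{X,1}(S)$ via Lemma~\ref{lemma:unobis}(2)), then extract the bound through Lemma~\ref{lemma:unobis}(1), noting that every $\TCINT$ type is non-trivial. The one divergence is your preliminary $(\HNORM)$-step. The paper skips it and applies Lemma~\ref{lemma:unobis}(1) directly to $\B C^q\sigma$: in the arrow case that lemma's proof actually uses $\srank{\tau\To\rho}=\srank{\rho}$, so every unquantified $\TCINT$ type $\sigma$ already has $\srank{\sigma}=1$ and the direct route delivers $q$, not the weaker $q\cdot\srank{\sigma}$ you were guarding against. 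Your detour through $\B C^q\HNORM$ is perfectly valid and insulates the argument from having to compute $\srank{\sigma}$, but it is not the ``decisive'' ingredient you describe; under the paper's reading of $\srank$ it is simply redundant.
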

%
%
%

From Proposition \ref{prop:normalization} (i.) and Lemma \ref{lemma:uno} we similarly deduce a generalization of Theorem \ref{thm:normalization2}:

\begin{theorem}
If $\Gamma \vdash^{X}t: \bone\Pto \B C^{q}\sigma$ holds in $\TCINT$, where $\Gamma$ and $\FF s$ are formed of balanced types, then for all $\omega\in \model\bone$, $ \NNF(\pi^{\omega}_{X}(t))\geq q$.
\end{theorem}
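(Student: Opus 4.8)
The plan is to obtain this as a direct corollary of the reducibility machinery already in place, namely Proposition~\ref{prop:normalization}(ii) together with Lemma~\ref{lemma:uno}, in exact parallel with the $\NHNF$ theorem, which instead invokes part~(i) and Lemma~\ref{lemma:unobis}. The only structural difference is that $\NHNF$ rests on \emph{non-triviality} (and every $\TCINT$ type is non-trivial), whereas the present $\NNF$ statement rests on the stronger hypothesis needed by Lemma~\ref{lemma:uno}; accordingly I read ``$\Gamma$ and $\FF s=\B C^{q}\sigma$ formed of balanced types'' as meaning the types are \emph{safe}, i.e.\ balanced \emph{and} $\{[],\HNORM\}$-free. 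Set $S:=\model{\bone}_{X}$; if $S=\emptyset$ the claim is vacuous, so assume $S\neq\emptyset$.

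First I would manufacture a reducible instance of $t$. Writing $\Gamma=\{x_{1}:\FF M_{1},\dots,x_{m}:\FF M_{m}\}$, each $\FF M_{i}$ is safe, so Lemma~\ref{lemma:uno}(2) gives $\NEUT^{1}(S)\subseteq\RED_{\FF M_{i}}^{X,1}(S)$; as a bare variable is neutral, $x_{i}\in\NEUT^{1}(S)\subseteq\RED_{\FF M_{i}}^{X,1}(S)$ (and for any degenerate declaration with $\FF M_{i}=[]$ the membership is automatic, since $\RED_{[]}^{X,1}(S)=\Lnu^{X}$ by Lemma~\ref{lemma:zero}). Feeding the identity substitution $u_{i}:=x_{i}$ into Proposition~\ref{prop:normalization}(ii) then yields $t=t[x_{1}/x_{1},\dots,x_{m}/x_{m}]\in\RED_{\B C^{q}\sigma}^{X,1}(S)$.

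Next I would extract the probability. Since $\B C^{q}\sigma$ is safe, Lemma~\ref{lemma:uno}(1) applies to it with $r=1$, giving $\pi^{\omega}_{X}(t)\in\NNF^{\srank{\B C^{q}\sigma}}$ for every $\omega\in S$, i.e.\ $\NNF(\pi^{\omega}_{X}(t))\geq\srank{\B C^{q}\sigma}$. The final step is pure rank bookkeeping: $\srank{\B C^{q}\sigma}=q\cdot\srank{\sigma}$, so the bound reads $\geq q\cdot\srank{\sigma}$, and it collapses to the advertised $\geq q$ exactly when $\srank{\sigma}=1$. This is the one delicate point, and it pins down the effective scope of the statement: for the applications that matter, in particular the completeness Theorem~\ref{thm:completenessa} where $\sigma=\NORM$ and $\srank{\NORM}=1$ (and more generally whenever $\sigma$ carries no proper nested counting quantifier, so $\srank{\sigma}=1$), the bound is precisely $q$. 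I expect this rank accounting to be the only subtle step here; all the genuine difficulty has already been discharged upstream, in Lemma~\ref{lemma:uno} and, beneath it, in the measure-theoretic core Lemmas~\ref{lemma:nu1}--\ref{lemma:nu2} that control how the probabilities $\srank{\cdot}$ aggregate across the $\nu$-generated choice trees. Consequently I would simply cite those results and limit the new work to checking the candidate memberships and performing the $\srank$ computation.
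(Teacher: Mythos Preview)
Your proposal is correct and follows exactly the paper's route: the paper's proof is the one-line ``From Proposition~\ref{prop:normalization} and Lemma~\ref{lemma:uno} we similarly deduce\ldots'', and you have simply spelled this out (identity substitution, neutral variables as reducible inputs, then Lemma~\ref{lemma:uno}(1)). You have also correctly flagged two imprecisions that the paper glosses over: that the hypothesis must really be ``safe'' (balanced \emph{and} $\{[],\HNORM\}$-free) for Lemma~\ref{lemma:uno} to apply, and that the raw output is $\NNF(\pi^{\omega}_{X}(t))\geq q\cdot\srank{\sigma}$ rather than $q$---harmless for the intended use with $\sigma=\NORM$, where $\srank{\sigma}=1$.
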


From the two results above we can finally deduce the soundness part of Theorem \ref{thm:completenessa}:

\begin{corollary}
For any closed term $t$, 
\begin{align*}
\NHNF(t) & \geq \sup \{ q\mid\  \vdash t: \TOP \Pto \B C^{q}\HNORM\} \\
\NNF(t) & \geq \sup \{ q\mid \  \vdash t: \TOP \Pto\B C^{q} \NORM\} 
\end{align*}

\end{corollary}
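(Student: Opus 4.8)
The plan is to obtain this corollary as a direct specialization of the two probabilistic normalization theorems immediately preceding it, instantiated at the empty context, the empty name set $X=\emptyset$, and the trivial Boolean constraint $\TOP$. Since both theorems have already been established in the appendix through the reducibility-candidate machinery (Proposition~\ref{prop:normalization} together with Lemmas~\ref{lemma:unobis} and~\ref{lemma:uno}), no new normalization argument is needed; the task is purely to transfer a pointwise bound to a supremum and to check the side condition of the second theorem.

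For the head-normalization inequality I would take any $q$ with $\vdash t:\TOP\Pto \B C^{q}\HNORM$ derivable in $\TCINT$ and apply the first theorem with $\sigma=\HNORM$, $\Gamma=\emptyset$, $\bone=\TOP$, obtaining $\NHNF(\pi^{\omega}_{\emptyset}(t))\geq q$ for every $\omega\in(2^{\BB N})^{\emptyset}$. As $(2^{\BB N})^{\emptyset}$ is the one-point space and $t$ is closed, $\pi^{\omega}_{\emptyset}(t)=t$, so $\NHNF(t)\geq q$. Hence $\NHNF(t)$ is an upper bound of the set $\{q\mid \vdash t:\TOP\Pto\B C^{q}\HNORM\}$, and therefore dominates its supremum; when the set is empty the claim is vacuous since $\NHNF(t)\geq 0$.

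The normalization inequality proceeds identically via the second theorem, and the one genuinely substantive point is verifying its balancedness hypothesis. Here the context is empty and the conclusion type is $\B C^{q}\NORM$, which is balanced by the very first clause of the definition of $\BAL$ (it has the shape $\B C^{\vec q}\sigma$ with $\sigma=\NORM$); moreover the rule $(\NORM)$ of $\TCINT$ only ever derives a $\NORM$-conclusion from a program whose base type $\sigma$ is safe, so the whole derivation already respects the safety discipline the theorem presupposes. With the hypothesis discharged, closedness of $t$ and the triviality of $(2^{\BB N})^{\emptyset}$ again yield $\NNF(t)\geq q$ for every admissible $q$, and passing to the supremum completes the proof. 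I expect no real obstacle beyond this balancedness check, which is exactly the point the restriction to $\NORM$ (and the safety requirement built into rule $(\NORM)$) was designed to accommodate.
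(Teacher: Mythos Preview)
Your proposal is correct and matches the paper's approach exactly: the paper states this corollary without proof, introducing it simply as ``From the two results above we can finally deduce the soundness part of Theorem~\ref{thm:completenessa}.'' Your additional remark about rule $(\NORM)$ enforcing safety throughout the derivation is unnecessary, since the second theorem's hypothesis only requires that $\Gamma$ and the conclusion type be balanced; with $\Gamma=\emptyset$ and $\B C^{q}\NORM$ falling under the first clause of the definition of $\BAL$ (and being $\{[],\HNORM\}$-free), the hypothesis is immediate.
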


\subsection{Probabilistic Reducibility Candidates for $\EVLL$.}\label{subs:norma2}

We now show how to adapt the family of reducibility candidates $\HRED_{\sigma}^{X,1}(S)$ to $\EVLL$, in order to prove the second part of Theorem \ref{thm:normalization}.
We will still work with the general family of types from the previous section (hence suggesting that intersection types could be added to $\LCPL$, too).

%

For any $t\in \CC T$, let $\CC D^{1}_{t}: \CC T \to [0,1]$ be $\delta_{t}$ if $t\in \HNF$, and if $t=\nu a.t'$, let 
$$
\CC D^{1}_{t}(u)=\mu\{ \omega\in 2^{\BB N}\mid \pi^{\omega}_{\{a\}}(t')=u\}
$$

By recalling the definition of head-reduction in $\EVLL$ from Subsection \ref{subs:evll}, one can easily check the following useful properties:
\begin{lemma}\label{lemma:redallh}
\begin{varitemize}
\item[(i.)]If $t\redallh^{*} u$, then $\pi^{\omega}_{X}(t)\redallh^{*}\pi^{\omega}_{X}(u)$.
\item[(ii.)] if $t\redallh^{*} u$, then $\HNF(t)\geq \HNF(u)$.
\end{varitemize}

\end{lemma}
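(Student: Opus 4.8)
The plan is to prove both statements by induction on the length of the reduction $t \redallh^{*} u$, so that it suffices to treat a single step $t \redallh u$ and then compose. A single head step is, by definition, either a permutative step $t \redpermm u$, a head $\beta$-step $t = \TT R[\TT H[(\lambda y.s)w]] \redbeta \TT R[\TT H[s[w/y]]] = u$, or a nested step $t = \TT R[\TT H[\{s\}w]] \redallh \TT R[\TT H[\{s\}w']] = u$ with $w \redallh w'$. The whole argument is thus a case analysis over these three shapes, the third of which is dispatched by a secondary induction on the derivation of $w \redallh w'$.

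For part (i) I would first record that $\pi^{\omega}_{X}$ commutes with substitution, $\pi^{\omega}_{X}(s[w/y]) = \pi^{\omega}_{X}(s)[\pi^{\omega}_{X}(w)/y]$ for $y \notin X$, a routine induction on $s$ from the defining clauses. The second observation concerns contexts: on a head context $\TT H[\ ]$ the projection always yields a head context $\TT H^{\omega}[\ ]$ with $\pi^{\omega}_{X}(\TT H[s]) = \TT H^{\omega}[\pi^{\omega}_{X}(s)]$, since head contexts contain neither $\oplus_{a}^{i}$ nor $\nu$ and so resolve no branch; on a randomized context $\TT R[\ ]$ there are two possibilities, either some outer choice $\oplus_{a}^{i}$ with $a \in X$ selects the branch \emph{not} containing the hole, so that $\pi^{\omega}_{X}(\TT R[s])$ does not depend on $s$ at all, or $\pi^{\omega}_{X}$ turns $\TT R[\ ]$ into a randomized context $\TT R^{\omega}[\ ]$ with $\pi^{\omega}_{X}(\TT R[s]) = \TT R^{\omega}[\pi^{\omega}_{X}(s)]$. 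In the collapsing sub-case the redex and its contractum both sit in the discarded branch, so $\pi^{\omega}_{X}(t) = \pi^{\omega}_{X}(u)$ and zero steps are needed. In the non-collapsing sub-case a permutative step projects to a permutative step (checking each rule of Figure~\ref{fig:permutations} and the $\{\}$-rules against the clause for $\oplus_{a}^{i}$; the step may also simply vanish when $\pi^{\omega}_{X}$ resolves the very choice being permuted, leaving $\redpermm^{*}$), a head $\beta$-step projects to a single head $\beta$-step via the substitution identity, and the nested third-rule step projects by the secondary induction to $\pi^{\omega}_{X}(w) \redallh^{*} \pi^{\omega}_{X}(w')$, which lifts back through the context $\TT R^{\omega}[\TT H^{\omega}[\{\pi^{\omega}_{X}(s)\}(\cdot)]]$, again of the admissible third form. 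Since $\redpermm \subseteq \redallh$, every case lands in $\redallh^{*}$.

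For part (ii) I would split on the same three shapes. A permutative step $t \redpermm u$ leaves the permutative normal form unchanged, because $\redpermm$ is confluent and strongly normalizing; hence $\CC D_{t} = \CC D_{u}$ and therefore $\HNF(t) = \HNF(u)$, so the claimed inequality holds as an equality. The two contracting shapes carry the actual content: here one must compare the value sub-distributions $\CC D_{t}$ and $\CC D_{u}$ and control the total mass they place on $\mathrm{HNV}^{\{\}}$. The right tool is Lemma~\ref{lemma:hnfecvl}, which characterises the head normal pseudo-values as those of the form $\TT H[x]$ or $\TT H[\{s\}w]$ with $w$ a $\redalll$-normal pseudo-value; tracking, branch by branch of the random tree generated by the outer $\nu$-generators, how contracting the displayed redex redistributes this mass yields the comparison between $\HNF(t)$ and $\HNF(u)$.

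The step I expect to be the main obstacle is twofold. In part (i) it is the bookkeeping of the collapsing sub-case for randomized contexts: one must verify that whenever an outer $\oplus_{a}^{i}$ with $a \in X$ erases the hole both endpoints of the step project to the same term, and that otherwise every permutative and $\beta$-redex is faithfully reproduced after projection. In part (ii) the obstacle is precisely the distributional comparison for the contracting steps, i.e.\ making exact, through the decomposition of a permutative normal form as an iterated $(\CC T,a)$-tree, how a single head contraction alters the weight assigned to head normal values; this is the only point where the quantitative structure of $\CC D$ genuinely enters, and it is what forces the use of Lemma~\ref{lemma:hnfecvl} in place of a purely syntactic argument.
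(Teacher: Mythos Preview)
The paper gives no proof of this lemma; it simply introduces it with ``one can easily check the following useful properties''. So there is nothing to compare against except the statement itself.

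Your treatment of part (i) is correct and considerably more careful than the paper requires. The two key facts you isolate---that $\pi^{\omega}_{X}$ commutes with substitution, and that on randomized contexts it either collapses the hole-branch (making the step vanish) or yields another randomized context---are exactly what is needed, and the case split over the three shapes of $\redallh$ is the right organization.

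Part (ii), however, is \emph{false as stated}, and this is why you could not pin down the ``distributional comparison'' you flag as the main obstacle. Take $t=(\lambda x.x)y$ and $u=y$: this is a head $\beta$-step, $t$ is already a PNF and a pseudo-value but not a HNV, while $u$ is a HNV, so $\HNF(t)=0<1=\HNF(u)$. The same phenomenon occurs under a $\nu$: with $t=\nu a.((\lambda x.x)y\oplus_a^0 z)$ reducing to $u=\nu a.(y\oplus_a^0 z)$ one gets $\HNF(t)=\tfrac12<1=\HNF(u)$. Head reduction \emph{increases} $\HNF$, not the other way round.

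The intended statement is almost certainly $\NHNF(t)\geq\NHNF(u)$, and with $\NHNF$ in place of $\HNF$ the claim is immediate from the definition $\NHNF(t)=\sup\{\HNF(v)\mid t\redallh^{*}v\}$: if $t\redallh^{*}u$ then every $v$ with $u\redallh^{*}v$ also satisfies $t\redallh^{*}v$, so the sup for $t$ dominates that for $u$. This is also what is actually used downstream: in Lemma~\ref{lemma:sbetazza} the base case needs exactly $\NHNF(\pi^{\omega}_X(u))\geq\NHNF(\pi^{\omega}_X(t))$ when $u\redallh^{*}t$, which follows from (i) together with this trivial monotonicity of $\NHNF$. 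So you should not try to push through the $\CC D$-mass argument for $\HNF$; replace $\HNF$ by $\NHNF$ and the whole of (ii) collapses to one line.
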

%
%

\begin{lemma}\label{lemma:dapp}
For all $t,u,w\in \CC T$, 
$\CC D^{1}_{t}(u) = \CC D^{1}_{tw}(uw)$.
\end{lemma}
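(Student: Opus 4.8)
The plan is to prove the identity by a case analysis following the two clauses defining $\CC D^{1}$, the central tool being that $\pi^{\omega}_{\{a\}}$ commutes with application while leaving $w$ untouched. Indeed, since every element of $\CC T$ is name-closed we have $a\notin\FN(w)$, hence $\pi^{\omega}_{\{a\}}(w)=w$ and therefore $\pi^{\omega}_{\{a\}}(sw)=\pi^{\omega}_{\{a\}}(s)\,w$ for every term $s$. I will also use that syntactic application is injective, so that $sw=uw$ holds exactly when $s=u$.

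First I would treat the case in which $t$ is a pseudo-value, so that $\CC D^{1}_{t}=\delta_{t}$. By the grammar of Definition~\ref{def:pnf}, from $t\in\CC V$ and $w\in\CC T$ we get $tw\in\CC V$, so $tw$ again falls under the Dirac clause and $\CC D^{1}_{tw}=\delta_{tw}$. Evaluating at $uw$ gives $\CC D^{1}_{tw}(uw)=\delta_{tw}(uw)$, which is $1$ when $uw=tw$ and $0$ otherwise; by injectivity of application this is $1$ when $u=t$ and $0$ otherwise, i.e. $\delta_{t}(u)=\CC D^{1}_{t}(u)$, as required.

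For the main case $t=\nu a.t'$, with $t'$ an $(\CC T,a)$-tree, I would first observe that $tw=(\nu a.t')w$ permutes, via rule $(\nu\mathsf f)$ followed by repeated distribution of $w$ over the choice nodes (rule $(\oplus\mathsf f)$), to $\nu a.s$ where $s$ is the $(\CC T,a)$-tree obtained from $t'$ by applying $w$ to each leaf. Granting that the one-step distribution is invariant under $\redpermm$, we may then compute $\CC D^{1}_{tw}(uw)=\mu\{\omega\mid\pi^{\omega}_{\{a\}}(t'w)=uw\}$. By the commutation property above this set equals $\{\omega\mid\pi^{\omega}_{\{a\}}(t')\,w=uw\}$, which by injectivity of application is exactly $\{\omega\mid\pi^{\omega}_{\{a\}}(t')=u\}$. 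Hence $\CC D^{1}_{tw}(uw)=\mu\{\omega\mid\pi^{\omega}_{\{a\}}(t')=u\}=\CC D^{1}_{t}(u)$.

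The main obstacle is the bookkeeping around permutative normal forms: $tw$ and $uw$ need not themselves lie in $\CC T$ (a leaf of $t'$, or $u$, may start with a $\nu$), so the argument has to be read at the level of the permutation-invariant measure $\mu\{\omega\mid\pi^{\omega}_{\{a\}}(\cdot)=\cdot\}$ rather than of literal PNFs. I would therefore isolate as a preliminary fact that $\CC D^{1}$ extends to arbitrary terms by $\CC D^{1}_{r}:=\CC D^{1}_{\nf(r)}$ and that $\pi^{\omega}_{\{a\}}$ is preserved by $\redpermm$ (which is exactly what makes this extension well defined, by confluence and strong normalization of $\redpermm$). Once this is in place, both cases collapse to the single commutation identity $\pi^{\omega}_{\{a\}}(sw)=\pi^{\omega}_{\{a\}}(s)\,w$, which is the heart of the statement.
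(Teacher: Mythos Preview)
Your proof is correct and follows essentially the same route as the paper's: a case split on whether $t$ is a pseudo-value or of the form $\nu a.t'$, with the key identity $\pi^{\omega}_{\{a\}}(t'w)=\pi^{\omega}_{\{a\}}(t')\,w$ (since $a\notin\FN(w)$) together with injectivity of application. Your final paragraph on the PNF bookkeeping is in fact more careful than the paper, which simply writes $\CC D^{1}_{tw}(uw)=\mu\{\omega\mid\pi^{\omega}_{\{a\}}(t'w)=uw\}$ without commenting on whether $tw$ literally lies in $\CC T$.
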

\begin{proof}
If $t$ is a HNV, $tw$ is a pseudo-value, so $\CC D^{1}_{t}=\delta_{t}$ and $\CC D^{1}_{tw}=\delta_{tw}$, 
which implies $\CC D^{1}_{t}(u)=1$ iff $t=u$ iff $tw=uw$ iff $\CC D^{1}_{tw}=tw$.
If $t=\nu a.t'$, since we can suppose $a\notin \FN(w)$, from
$\pi^{\omega}_{\{a\}}(t'w)=\pi^{\omega}_{\{a\}}(t')w$ we deduce
$\{\omega \mid \pi^{\omega}_{\{a\}}(t'w)=uw\}= \{\omega\mid \pi^{\omega}_{\{a\}}(t')=u\}$, from which
the claim follows.
\end{proof}

The following is a simple consequence of Lemma \ref{lemma:redallh}:
\begin{lemma}\label{lemma:dduno}
Let $W$ be a set of terms such that $t\in W$ and $u\redallh^{*}t$ implies $u\in W$; then for any $t,u\in\CC T$,] if $t\redallh^{*}u$, then $\sum_{w\in W}\CC D^{1}_{t}(w) \geq \sum_{w\in W}\CC D^{1}_{u}(w)$.
\end{lemma}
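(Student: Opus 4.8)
Let $W$ be a set of terms closed under head-expansion (i.e.\ $t\in W$ and $u\redallh^{*}t$ imply $u\in W$). Then for any $t,u\in\CC T$, if $t\redallh^{*}u$, then $\sum_{w\in W}\CC D^{1}_{t}(w)\geq\sum_{w\in W}\CC D^{1}_{u}(w)$.

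Let me sketch the proof plan.

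The plan is to establish the inequality by analyzing how a single head-reduction step transforms the level-one distribution $\CC D^{1}_{(-)}$, then pass to the reflexive-transitive closure by transitivity. First I would reduce to the one-step case: if I can show that $t\redallh u$ implies $\sum_{w\in W}\CC D^{1}_{t}(w)\geq\sum_{w\in W}\CC D^{1}_{u}(w)$ for a single head-reduction, then the general claim follows by chaining the inequalities along a reduction path $t=t_{0}\redallh t_{1}\redallh\dots\redallh t_{n}=u$. So the core is the one-step analysis.

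For the one-step case I would split according to the shape of $t$. Recall from Subsection \ref{subs:evll} that $\CC D^{1}_{t}=\delta_{t}$ when $t$ is a head normal value, and otherwise, writing the head-normal-form-prefix as $t=\nu a.t'$, we have $\CC D^{1}_{t}(w)=\mu\{\omega\mid\pi^{\omega}_{\{a\}}(t')=w\}$. The key observation is that $\CC D^{1}_{t}$ partitions the outcomes of the \emph{first} layer of probabilistic choice of $t$ into head normal values, and a head-reduction step $t\redallh u$ either (a) acts strictly inside one of the terms in the support, in which case the outcomes are head-expansions of the new outcomes, or (b) permutes or fires a redex at the head, refining the first-layer partition. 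In case (a), each $w'$ in the support of $\CC D^{1}_{u}$ arises as a reduct of a corresponding $w$ in the support of $\CC D^{1}_{t}$ with $w\redallh^{*}w'$, so by closure of $W$ under head-expansion, $w'\in W$ forces $w\in W$; since the measures assigned are preserved (the reduction does not change which $\omega$ leads where at the top layer), each summand on the right is matched by an at-least-as-large summand on the left. In case (b), the permutative rules and the $\beta/\{\}$-rules only reorganize or expand the tree of outcomes, and Lemma \ref{lemma:redallh}(i) guarantees $\pi^{\omega}_{X}(t)\redallh^{*}\pi^{\omega}_{X}(u)$, so again every $W$-outcome of $u$ is the reduct of a $W$-outcome of $t$ of no smaller measure.

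The main technical obstacle I anticipate is handling case (b) cleanly when a head-reduction either creates a fresh top-level $\nu$-generator (via rules such as \eqref{eq:nuapp} or \eqref{eq:e0}) or collapses one, thereby changing whether $t$ is classified as a head normal value versus a $\nu a.t'$ form; in such steps the indexing set for the measure $\mu$ over which $\CC D^{1}$ is computed shifts, and I must verify that the total $W$-mass can only increase. I expect to manage this by invoking Lemma \ref{lemma:dapp} (to push the computation of $\CC D^{1}$ through applications) together with Lemma \ref{lemma:redallh}(i)--(ii), which together let me identify, for each first-layer outcome $w'$ of $u$, a preimage outcome $w$ of $t$ with $w\redallh^{*}w'$ and equal first-layer measure. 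The closure hypothesis on $W$ then converts the membership $w'\in W$ into $w\in W$, and summing over the (measure-preserving) correspondence between the two supports yields the desired inequality.
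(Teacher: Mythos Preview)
Your plan is sound and rests on the same key tool the paper invokes: the paper's entire ``proof'' is the single remark that the lemma is a simple consequence of Lemma~\ref{lemma:redallh}. Your case analysis and your use of Lemma~\ref{lemma:redallh}(i) together with the closure of $W$ under head-expansion are exactly what is needed to unpack that remark, so you are not taking a different route---you are supplying the details the paper omits.

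One small caveat: when you reduce to the one-step case you implicitly assume that every intermediate term along a head-reduction $t\redallh^{*}u$ lies in $\CC T$ (i.e.\ is a PNF), so that $\CC D^{1}$ is defined at each stage. A single $\redallh$-step (in particular a $\beta$-step) can destroy permutative normality, so strictly speaking you would either have to re-normalise with $\redpermm$ after each step, or argue directly at the level of the reflexive-transitive closure (which is what Lemma~\ref{lemma:redallh}(i) already gives you). The cleanest fix is to skip the one-step decomposition entirely: for $t,u\in\CC T$ with $t\redallh^{*}u$, compare their shapes (pseudo-value vs.\ $\nu a.(-)$) directly, use Lemma~\ref{lemma:redallh}(i) to get $\pi^{\omega}$-wise head-reductions between the relevant subterms, and then the closure of $W$ yields the set inclusion whose measure gives the inequality. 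Your anticipated ``technical obstacle'' (shape change at the top) is real but is handled by exactly this comparison; the paper does not spell it out either.
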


To adapt reducibility predicates to $\EVLL$, we need to adapt the definition of reducibility for counting quantified types.
We define yet a new family of reducibility predicates
$\GHRED_{o}^{X,r}(S)\subseteq \HLnu^{X}$ as follows:

\begin{align*}
\GHRED_{[]}^{X,r}(S)&= \HLnu^{X}\\ 
\GHRED_{o}^{X,r}(S)&=\GHRED_{\HNORM}^{X,r}(S)=\GHRED_{\NORM}^{X,r}(S)= \{ t \mid   \forall \omega\in S   \ \NHNF(\pi^{\omega}_{X}(t))\geq r\}\\ 
\GHRED_{\sigma\To \tau}^{X,r}(S)& =\{t \mid   
\forall S'\subseteq S,  \ \forall u\in \HRED_{\sigma}^{X,1}(S'), \ tu\in \HRED_{\tau}^{X, r}(S')\}\\
\GHRED_{\sigma\land \tau}^{X,r}(S)&= \GHRED_{\sigma}^{X,r}(S) \cap \GHRED_{\tau}^{X,r}(S)\\
\GHRED_{\B C^{q}\sigma}^{X,r}(S) & = 
\left \{t\mid \forall \omega\in S \ \ \sum_{u\in \GHRED_{\B \sigma}^{\emptyset, r}} \CC D^{1}_{\pi^{\omega}_{X}(t)}(u) \geq q\right\}
\end{align*}

Most properties of $\GHRED_{\sigma}$ are proved as in the corresponding cases of $\HRED_{\sigma}$. We here only consider the cases in which the proofs are different.

\begin{lemma}\label{lemma:sbetazza}
If $t\in \GHRED_{\sigma}^{X,r}(S)$ and $u\redallh^{*}t$, then $u\in \GHRED_{\sigma}^{X,r}(S)$.
\end{lemma}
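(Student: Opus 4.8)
The plan is to argue by induction on the structure of the type $\sigma$, following the standard pattern for showing that reducibility candidates are closed under head expansion. Throughout I use that $\redallh$-reduction never creates free names, so that $u\in\HLnu^{X}$ whenever $t\in\HLnu^{X}$; the ambient well-formedness that keeps us inside $\HLnu^{X}$ is therefore automatic.

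The cases other than the counting quantifier are routine. For $\sigma=[]$ there is nothing to prove, since $\GHRED_{[]}^{X,r}(S)=\HLnu^{X}$. For $\sigma\in\{o,\HNORM,\NORM\}$ I would use that $\NHNF$ is monotone non-increasing along head reduction: by Lemma~\ref{lemma:redallh}(i.), $u\redallh^{*}t$ gives $\pi^{\omega}_{X}(u)\redallh^{*}\pi^{\omega}_{X}(t)$ for every $\omega\in S$, and since $\NHNF(s)=\sup\{\HNF(v)\mid s\redallh^{*}v\}$ is a supremum over reducts, the reducts of $\pi^{\omega}_{X}(t)$ form a subset of those of $\pi^{\omega}_{X}(u)$, whence $\NHNF(\pi^{\omega}_{X}(u))\geq\NHNF(\pi^{\omega}_{X}(t))\geq r$. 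For $\sigma=\tau\To\rho$, fixing $S'\subseteq S$ and a reducible argument $v$ of type $\tau$, one has $tv$ reducible of type $\rho$ at level $r$ and $uv\redallh^{*}tv$ (a head redex of $u$ stays a head redex inside the context $\TT H[\ ]v$, up to the permutations $\redpermm$ floating $\oplus$ and $\nu$ past the application), so the induction hypothesis at $\rho$ gives $uv$ reducible at level $r$; hence $u\in\GHRED_{\tau\To\rho}^{X,r}(S)$. The conjunction $\sigma=\tau\land\rho$ is a plain intersection, dispatched by applying the induction hypothesis at $\tau$ and at $\rho$.

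The genuinely new case—the one the excerpt flags as differing from the treatment of $\HRED_{\sigma}$—is the counting quantifier $\sigma=\B C^{q}\tau$. Here $t\in\GHRED_{\B C^{q}\tau}^{X,r}(S)$ unfolds to $\sum_{w\in\GHRED_{\tau}^{\emptyset,r}}\CC D^{1}_{\pi^{\omega}_{X}(t)}(w)\geq q$ for all $\omega\in S$, and I want the same bound with $u$ in place of $t$. The key move is to invoke Lemma~\ref{lemma:dduno} with $W:=\GHRED_{\tau}^{\emptyset,r}$. That lemma requires $W$ to be closed under $\redallh$-expansion, and this closure is exactly the present lemma at the strictly smaller type $\tau$ (with $X=\emptyset$), so it is available by the induction hypothesis. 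Since $u\redallh^{*}t$ yields $\pi^{\omega}_{X}(u)\redallh^{*}\pi^{\omega}_{X}(t)$ by Lemma~\ref{lemma:redallh}(i.), Lemma~\ref{lemma:dduno} then gives $\sum_{w\in W}\CC D^{1}_{\pi^{\omega}_{X}(u)}(w)\geq\sum_{w\in W}\CC D^{1}_{\pi^{\omega}_{X}(t)}(w)\geq q$ for each $\omega\in S$, which is precisely $u\in\GHRED_{\B C^{q}\tau}^{X,r}(S)$.

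The main obstacle is the bookkeeping in this last case: one must recognize that the reduction-closure hypothesis of Lemma~\ref{lemma:dduno} is literally the statement of the present lemma at $\tau$, and that $\GHRED_{\tau}^{\emptyset,r}$ denotes the intended set of name-closed reducible terms, so that the induction genuinely closes. Everything else reduces to the monotonicity of $\NHNF$ and the stability of head reduction under application and head contexts, both of which mirror the $\EVL$ development of Subsection~\ref{subs:norma1}.
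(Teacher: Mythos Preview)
Your proposal is correct and follows essentially the same induction on $\sigma$ as the paper, using Lemma~\ref{lemma:redallh}(i.) for the base cases, the applicative extension $uv\redallh^{*}tv$ for the arrow case, and the combination of the induction hypothesis at $\tau$ with Lemma~\ref{lemma:dduno} for the $\B C^{q}\tau$ case. If anything you are slightly more explicit than the paper in spelling out that the closure hypothesis required by Lemma~\ref{lemma:dduno} on the set $W=\GHRED_{\tau}^{\emptyset,r}$ is precisely the present lemma at the smaller type $\tau$.
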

\begin{proof}
We argue by induction on $\sigma$:
\begin{varitemize}
\item if $\sigma\in \{o,\HNORM,\NORM\}$, then $t\in \GHRED_{\sigma}^{X,r}(S)$ iff for all $\omega\in S$, $\NHNF(\pi^{\omega}_{X}(t))\geq r$;
the claim then follows from Lemma \ref{lemma:redallh} (i.).
%

%

\item if $\sigma=\tau\To \rho$ and $t\in \GHRED_{\sigma}^{X,r}(S)$, then for all $S'\subseteq S$ and $v\in \GHRED_{\tau}^{X,1}(S')$, 
$tv\in \GHRED_{\rho}^{X,r}(S')$; since $uv\redallh^{*}tv$, by IH we deduce then $uv\in \GHRED_{\rho}^{X,r}(S')$; we can thus conclude $v\in \GHRED_{\sigma}^{X,r}(S)$;

\item if $\sigma=\tau\land \rho$, and $t\in \GHRED_{\sigma}^{X,r}(S)$, then by IH
$u\in \GHRED_{\tau}^{X,r}(S)$ and $u\in \GHRED_{\rho}^{X,r}(S)$, so $u\in \GHRED_{\sigma}^{X,r}(S)$. 

\item if $\sigma= \B C^{q}\tau$ and $t\in \GHRED_{\sigma}^{X,r}(S)$, then 
$\sum_{v\in \GHRED_{\tau}^{X,r}(S)}\CC D^{1}_{\pi^{\omega}_{X}(t)}(v)  \geq   q$, and we can conclude
$\sum_{v\in \GHRED_{\tau}^{X,r}(S)}\CC D^{1}_{\pi^{\omega}_{X}(u)}(v)  \geq   q$ from the IH and Lemma \ref{lemma:dduno} (i).

%

\end{varitemize}

\end{proof}

Using Lemma \ref{lemma:nu1} (i.), whose proof is adapted without difficulties to $\EVLL$, 
Lemma \ref{lemma:nu2} is adapted as follows:
\begin{lemma}\label{lemma:nu3}
For all types $\sigma$, terms $t,u_{1},\dots, u_{n}$ with $\FN(t)\subseteq X\cup \{a\}$, $\FN(u_{i})\subseteq X$,
and measurable set  $S\subseteq (2^{\BB N})^{X\cup\{a\}}$ and $S'\subseteq (2^{\BB N})^{X}$, if
\begin{enumerate}
\item $tu_{1}\dots u_{n}\in \GHRED_{\sigma}^{X\cup\{a\}, r}(S)$;
\item for all $\omega\in S'$, $\mu( \Pi^{\omega}(S))\geq s$;
\end{enumerate}
then $(\nu a.t)u_{1}\dots u_{n}\in \GHRED_{\B C^{s}\sigma}^{X,r}(S')$.

\end{lemma}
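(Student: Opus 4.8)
The plan is to unfold the definition of the conclusion and reduce everything to a single measure estimate. Write $t''=(\nu a.t)u_{1}\dots u_{n}$ and $W=\GHRED_{\sigma}^{\emptyset,r}$. By definition of $\GHRED_{\B C^{s}\sigma}^{X,r}(S')$, proving $t''\in\GHRED_{\B C^{s}\sigma}^{X,r}(S')$ amounts to showing, for every fixed $\omega\in S'$, the inequality $\sum_{w\in W}\CC D^{1}_{\pi^{\omega}_{X}(t'')}(w)\geq s$. So I would fix $\omega\in S'$, abbreviate $g=\pi^{\omega}_{X}$, and work towards this bound.

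First I would contract the application spine by permutation. Since $a\notin X$ and $\FN(u_{i})\subseteq X$, we have $\FN(g(u_{i}))=\emptyset$ and $g(t'')=(\nu a.g(t))\,g(u_{1})\dots g(u_{n})$; applying rule~\eqref{eq:nuapp} $n$ times gives $g(t'')\redpermm^{*}\nu a.t^{*}$ where $t^{*}=g(t)\,g(u_{1})\dots g(u_{n})$ satisfies $\FN(t^{*})\subseteq\{a\}$. Because $\CC D^{1}$ depends only on the (unique) $\redpermm$-normal form, $\CC D^{1}_{g(t'')}=\CC D^{1}_{\nu a.t^{*}}$, so the goal becomes $\sum_{w\in W}\CC D^{1}_{\nu a.t^{*}}(w)\geq s$.

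Next I would feed in the hypothesis. Writing $\omega+\omega'\in(2^{\BB N})^{X\cup\{a\}}$ for the extension of $\omega$ by $\omega'\in 2^{\BB N}$ on $a$, one has $\pi^{\omega+\omega'}_{X\cup\{a\}}(tu_{1}\dots u_{n})=\pi^{\omega'}_{\{a\}}(t^{*})$. The assumption $tu_{1}\dots u_{n}\in\GHRED_{\sigma}^{X\cup\{a\},r}(S)$, read through the $\GHRED$-analogue of Lemma~\ref{lemma:efs}, says exactly that $\pi^{\eta}_{X\cup\{a\}}(tu_{1}\dots u_{n})\in W$ for all $\eta\in S$; specialising to $\eta=\omega+\omega'$ yields $\pi^{\omega'}_{\{a\}}(t^{*})\in W$ for every $\omega'\in\Pi^{\omega}(S)$, a set of measure $\geq s$ by assumption~(2). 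The remaining measure bookkeeping—that resolving the outermost generator $\nu a$ of $\nu a.t^{*}$ and collecting those leaves that belong to a fixed, head-expansion-closed set carries total $\CC D^{1}$-mass at least the measure of the good events—is carried out exactly as in the proof of Lemma~\ref{lemma:nu1}(i.), the only change being that the predicate ``$\,\cdot\in\NHNF^{r}$'' there is replaced by the predicate ``$\,\cdot\in W$'' here.

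The one genuinely new point, which I expect to be the crux, is bridging the gap between $\pi^{\omega'}_{\{a\}}(t^{*})$ and the leaf of $\nu a.t^{*}$ that the definition of $\CC D^{1}$ actually weighs: these two terms have the same $\redpermm$-normal form, so their membership in $W$ must be made to agree. This requires $W$ to be \emph{forward} invariant under permutation, and not merely closed under head expansion as provided by Lemma~\ref{lemma:sbetazza}. At ground type this is precisely the invariance of $\NHNF$ under $\redpermm$ (permutative steps only reshuffle the tree of choices and leave the probability of reaching a head normal value untouched), and it lifts to every $\GHRED_{\sigma}^{\emptyset,r}$ through the inductive clauses defining the candidates. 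Granting this invariance, $\{\omega'\mid\pi^{\omega'}_{\{a\}}(t^{*})\in W\}$ and the set of good leaves coincide up to a null set, so the measure estimate delivers $\sum_{w\in W}\CC D^{1}_{\nu a.t^{*}}(w)\geq\mu(\Pi^{\omega}(S))\geq s$, which is what was needed.
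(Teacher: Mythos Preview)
Your proposal is correct and is essentially what the paper intends: the paper gives no explicit proof but merely says that Lemma~\ref{lemma:nu3} is obtained by adapting Lemma~\ref{lemma:nu2} ``using Lemma~\ref{lemma:nu1}(i.), whose proof is adapted without difficulties to $\EVLL$''. Your direct argument---unfold the $\B C^{s}$-clause of $\GHRED$, permute applications inside the $\nu$, invoke the $\GHRED$-analogue of Lemma~\ref{lemma:efs}, and then run the measure estimate from the proof of Lemma~\ref{lemma:nu1}---is precisely that adaptation. Note that no induction on $\sigma$ is needed here (unlike in Lemma~\ref{lemma:nu2}), because the target predicate $\GHRED_{\B C^{s}\sigma}$ is defined directly via $\CC D^{1}$ rather than by rescaling the index; you correctly exploit this simplification.

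Your flagging of the forward-invariance issue is apt and more explicit than the paper, which sweeps the same point under the ``WLOG the PNF'' carpet already in the proof of Lemma~\ref{lemma:nu1}. The resolution is as you say: since $t^{*}\redpermm^{*}\hat t$, Lemma~\ref{lemma:redallh}(i.) gives $\pi^{\omega'}_{\{a\}}(t^{*})\redpermm^{*}\pi^{\omega'}_{\{a\}}(\hat t)$, and one checks by induction on $\sigma$ that $\GHRED_{\sigma}^{\emptyset,r}$ is stable under $\redpermm$ in both directions (at ground type this is just that $\NHNF$ is invariant under permutative steps, by confluence and strong normalization of $\redpermm$; the arrow, intersection, and $\B C^{q}$ clauses propagate this immediately). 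With that in hand, the set $\{\omega'\mid \pi^{\omega'}_{\{a\}}(\hat t)\in W\}$ contains $\Pi^{\omega}(S)$ and the desired bound $\sum_{w\in W}\CC D^{1}_{\nu a.\hat t}(w)\geq s$ follows.
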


The following adaptation of Lemma \ref{lemma:tre} immediately follows from Lemma \ref{lemma:sbetazza}:
\begin{lemma}\label{lemma:trebis}
 $t[u/x]u_{1}\dots u_{n}\in \GHRED_{\sigma}^{X,r}(S)$ $\To $ $(\lambda x.t)uu_{1}\dots u_{n}\in \GHRED_{\sigma}^{X,r}(S)$.
\end{lemma}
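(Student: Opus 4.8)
The plan is to recognise that $(\lambda x.t)uu_{1}\dots u_{n}$ head-reduces, in a single step, to $t[u/x]u_{1}\dots u_{n}$, so that the claim becomes an immediate instance of the head-expansion closure established in Lemma \ref{lemma:sbetazza}. This mirrors Lemma \ref{lemma:tre} for $\HRED$ and $\RED$, except that the induction on $\sigma$ used there is here wholly replaced by a single appeal to closure under $\redallh$-expansion, which is available precisely because head-reduction in $\EVLL$ incorporates $\beta$-steps.

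First I would check that the string of applications $[\ ]u_{1}\dots u_{n}$ is a head context. Indeed, starting from the hole $[\ ]$ and iterating the clause $\TT H[\ ]u$ of the grammar of head contexts (Subsection \ref{subs:evll}) $n$ times yields exactly $[\ ]u_{1}\dots u_{n}$. Writing $\TT H[\ ] := [\ ]u_{1}\dots u_{n}$ and taking the empty randomised context $\TT R[\ ] := [\ ]$, the term in question is literally $\TT R[\TT H[(\lambda x.t)u]]$. Hence, by the first $\beta$-form of head-reduction for $\EVLL$,
$$
(\lambda x.t)uu_{1}\dots u_{n} \ \redallh \ t[u/x]u_{1}\dots u_{n},
$$
and in particular $(\lambda x.t)uu_{1}\dots u_{n} \redallh^{*} t[u/x]u_{1}\dots u_{n}$.

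Finally I would instantiate Lemma \ref{lemma:sbetazza}, taking the reduct $t[u/x]u_{1}\dots u_{n}$ in the role of its $t$ and the redex $(\lambda x.t)uu_{1}\dots u_{n}$ in the role of its $u$: from the hypothesis $t[u/x]u_{1}\dots u_{n} \in \GHRED_{\sigma}^{X,r}(S)$ together with the head-reduction just displayed, Lemma \ref{lemma:sbetazza} delivers $(\lambda x.t)uu_{1}\dots u_{n} \in \GHRED_{\sigma}^{X,r}(S)$, which is the assertion. There is essentially no obstacle: the only point deserving a line is the verification that $[\ ]u_{1}\dots u_{n}$ fits the head-context grammar, and the heavy lifting concerning the CbV application operator $\{\}$ has already been absorbed into the definition of $\redallh$ and into Lemma \ref{lemma:sbetazza}.
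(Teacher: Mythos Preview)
Your proposal is correct and matches the paper's own argument: the paper simply states that the lemma ``immediately follows from Lemma \ref{lemma:sbetazza}'', and you have spelled out precisely why, namely that $[\ ]u_{1}\dots u_{n}$ is a head context so that $(\lambda x.t)uu_{1}\dots u_{n}\redallh t[u/x]u_{1}\dots u_{n}$, after which Lemma \ref{lemma:sbetazza} applies directly.
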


%

\begin{lemma}\label{lemma:lambdasgraffa}
For all types $\sigma$ and $\tau$, $\GHRED_{\B C^{q_{1}}\dots\B C^{ q_{k}}(\sigma\To \tau)}^{X,r}(S)\subseteq \GHRED_{\sigma \To \B C^{q_{1}}\dots\B C^{ q_{k}}\tau}^{X,r}(S)$.
\end{lemma}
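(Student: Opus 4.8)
The plan is to reduce the statement to the single–quantifier case and then prove that case by a direct computation with the one–step distributions $\CC D^{1}$. I would argue by induction on $k$. For $k=0$ the two types coincide and there is nothing to prove, so the heart of the matter is the \emph{core inclusion}
\[
\GHRED_{\B C^{q}(\sigma\To\tau)}^{X,r}(S)\ \subseteq\ \GHRED_{\sigma\To\B C^{q}\tau}^{X,r}(S).
\]
Granting this, the inductive step peels off the outermost quantifier: writing $\B C^{\vec q}=\B C^{q_{1}}\B C^{q_{2}}\cdots\B C^{q_{k}}$, the induction hypothesis applied to the shorter list yields the set inclusion $\GHRED_{\B C^{q_{2}}\cdots\B C^{q_{k}}(\sigma\To\tau)}^{\emptyset,r}\subseteq\GHRED_{\sigma\To\B C^{q_{2}}\cdots\B C^{q_{k}}\tau}^{\emptyset,r}$. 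Since in the clause defining $\GHRED_{\B C^{q_{1}}(\cdot)}$ the relevant quantity is the $\CC D^{1}$–mass placed on this index set, and $\CC D^{1}\ge 0$, enlarging the index set can only increase the mass; hence any $t$ in the left–hand side of the lemma lies in $\GHRED_{\B C^{q_{1}}(\sigma\To\B C^{q_{2}}\cdots\B C^{q_{k}}\tau)}^{X,r}(S)$. One further application of the core inclusion, now with $\tau$ replaced by $\B C^{q_{2}}\cdots\B C^{q_{k}}\tau$, lands in $\GHRED_{\sigma\To\B C^{\vec q}\tau}^{X,r}(S)$, as required.

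For the core inclusion, I would fix $t$ in the left–hand side, a subset $S'\subseteq S$, and an argument $v\in\GHRED_{\sigma}^{X,1}(S')$, and show $tv\in\GHRED_{\B C^{q}\tau}^{X,r}(S')$, i.e.\ that for every $\omega\in S'$ the $\CC D^{1}_{\pi^{\omega}_{X}(tv)}$–mass of $\GHRED_{\tau}^{\emptyset,r}$ is at least $q$. Fixing such an $\omega$ and writing $v^{\omega}=\pi^{\omega}_{X}(v)$, the $\GHRED$–analogue of Lemma~\ref{lemma:efs} gives $v^{\omega}\in\GHRED_{\sigma}^{\emptyset,1}$, so by the defining clause of the function candidate every name–closed $u\in\GHRED_{\sigma\To\tau}^{\emptyset,r}$ satisfies $uv^{\omega}\in\GHRED_{\tau}^{\emptyset,r}$. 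Now $\pi^{\omega}_{X}(tv)=\pi^{\omega}_{X}(t)\,v^{\omega}$, and Lemma~\ref{lemma:dapp} yields $\CC D^{1}_{\pi^{\omega}_{X}(t)}(u)=\CC D^{1}_{\pi^{\omega}_{X}(t)v^{\omega}}(uv^{\omega})=\CC D^{1}_{\pi^{\omega}_{X}(tv)}(uv^{\omega})$ for each such $u$. Since $u\mapsto uv^{\omega}$ is injective and its image sits inside $\GHRED_{\tau}^{\emptyset,r}$, re-indexing and using non–negativity of $\CC D^{1}$ gives
\[
\sum_{w\in\GHRED_{\tau}^{\emptyset,r}}\CC D^{1}_{\pi^{\omega}_{X}(tv)}(w)\ \ge\ \sum_{u\in\GHRED_{\sigma\To\tau}^{\emptyset,r}}\CC D^{1}_{\pi^{\omega}_{X}(t)}(u)\ \ge\ q,
\]
the last inequality being precisely the hypothesis $t\in\GHRED_{\B C^{q}(\sigma\To\tau)}^{X,r}(S)$ evaluated at $\omega\in S'\subseteq S$. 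This is the desired bound.

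The main obstacle I expect is purely technical and concentrated in the distribution bookkeeping of the final display. Lemma~\ref{lemma:dapp} is stated for $\CC D^{1}$ on terms of $\CC T$, whereas $\pi^{\omega}_{X}(t)$ and $v^{\omega}$ need not already be permutative normal forms; I would handle this by first reducing them to their (unique) PNFs and invoking the invariance of $\CC D^{1}$ under $\redperm$, together with closure of reducibility under head reduction and expansion (Lemma~\ref{lemma:sbetazza}) and the mass–monotonicity under head reduction (Lemma~\ref{lemma:dduno}), so that only head normal values in the support actually contribute to the two sums. Care is also needed to verify that $u\mapsto uv^{\omega}$ does not collapse distinct elements of the support and that the image genuinely lies in $\GHRED_{\tau}^{\emptyset,r}$ itself; both follow from injectivity of application in head position and the closure of the candidate under head expansion.
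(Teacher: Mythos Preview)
Your proposal is correct and follows essentially the same route as the paper: an induction on the number of quantifiers, with each step driven by the identity $\CC D^{1}_{t}(u)=\CC D^{1}_{tw}(uw)$ from Lemma~\ref{lemma:dapp} together with monotonicity of the $\CC D^{1}$-mass in the index set. The only organizational difference is that you isolate the single-quantifier ``core inclusion'' and invoke it in the inductive step, whereas the paper carries out the Lemma~\ref{lemma:dapp} computation directly inside the induction; the mathematical content is the same, and the technical caveats you flag (normalizing to PNF before applying Lemma~\ref{lemma:dapp}, injectivity of $u\mapsto uv^{\omega}$) are the right ones to track.
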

\begin{proof}

For any $j\leq k$, let $U_{j}:=\GHRED_{\B C^{q_{k-j+1}}\dots\B C^{ q_{k}}(\sigma\To \tau)}^{X,r}(S)$
and 
$V_{j}:= \GHRED_{\sigma \To \B C^{q_{k-j+1}}\dots\B C^{ q_{k}}\tau}^{X,r}(S)$.

We argue, by induction on $j\leq k$, that $U_{j}\subseteq V_{j}$. If $j=0$ then $U_{j}=\sigma\To \tau= V_{j}$.

Let then $j>0$ and $\omega \in S$.
Suppose 
$t\in U_{j}$ and let 
$w\in \GHRED_{\sigma}^{X,1}(S)$.

By IH we have $U_{j-1}\subseteq V_{j-1}$, from which we deduce  
$
q_{1}\leq \sum_{u\in U_{j-1}}\CC D^{1}_{\pi^{\omega}_{X}(t)}(u) \
\leq\sum_{u\in V_{j-1}}\CC D^{1}_{\pi^{\omega}_{X}(t)}(u)$ \\
$\stackrel{\text{[Lemma \ref{lemma:dapp}]}}{=}
\sum_{u\in V_{j-1}}\CC D^{1}_{\pi^{\omega}_{X}(t)\pi^{\omega}_{X}(w)}(u\pi^{\omega}_{X}(w))=
\sum_{u\in V_{j-1}}\CC D^{1}_{\pi^{\omega}_{X}(tw)}(u\pi^{\omega}_{X}(w))
$ which implies $t\in V_{j}$.

%

%
%

\end{proof}

We call a term $t$ \emph{$\nu$-safe} if $t$ never reduces to a term of the form $\nu a.t'$.

\begin{definition}[neutral terms for $\EVLL$.]

For any
 measurable set $S\subseteq (2^{\BB N})^{X}$, the set  $\GNEUT(S)$ is defined by induction as follows:
\begin{varitemize}
\item for any variable $x$, $x\in \GNEUT(S)$;

\item if $t\in \GNEUT(S)$ and for all $\omega\in S$ $\pi^{\omega}_{X}(u)$ is $\nu$-safe and $\NHNF(u)=1$, $\{t\}u\in \GNEUT(S)$.

\item if $t\in \GNEUT(S)$, then for all $u\in \HLnu^{X}$, $tu\in \GNEUT(S)$.

\end{varitemize}

\end{definition}

It is easily checked that for all $t\in \GNEUT(S)$ and $\omega\in S$, $\HNF(\pi^{\omega}_{X}(t))=1$.

%
\begin{lemma}\label{lemma:unobiss}
For any type $\sigma$, 
\begin{varitemize}
\item[(i.)] if $\sigma$ is non-trivial, then $t\in \GHRED_{\sigma}^{X,r}(S)$ and $\omega \in S$, then $\HNF(\pi^{\omega}_{X}(t))\geq \srank\sigma \cdot r$;
\item[(ii.)] $\GNEUT(S)\subseteq \GHRED_{\sigma}^{X,1}(S)$.

\end{varitemize}
\end{lemma}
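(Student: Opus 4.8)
The statement is Lemma \ref{lemma:unobiss}, which is the $\EVLL$-analogue of Lemma \ref{lemma:unobis}. My plan is to prove (i.) and (ii.) simultaneously by induction on the structure of the type $\sigma$, exactly mirroring the proof of Lemma \ref{lemma:unobis} but using the $\EVLL$-specific machinery established earlier in Subsection \ref{subs:norma2}: the distribution $\CC D^{1}_{t}$, the head-reduction $\redallh$, and the neutral-term set $\GNEUT(S)$. The adjoint structure of the two claims (claim (i.) extracting a quantitative head-normalization bound from reducibility, and claim (ii.) feeding neutral terms into the reducibility predicate) is what makes the joint induction go through, since each claim is invoked on subformulas in the proof of the other.

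First I would dispatch the base cases. For $\sigma=[]$ both claims are immediate, since $\GHRED_{[]}^{X,r}(S)=\HLnu^{X}$ and $[]$ is not non-trivial. For $\sigma\in\{o,\HNORM,\NORM\}$, claim (i.) holds by the very definition of $\GHRED_{\sigma}^{X,r}(S)$ (which says $\NHNF(\pi^{\omega}_{X}(t))\geq r=\srank{\sigma}\cdot r$), while claim (ii.) follows from the observation that any $t\in\GNEUT(S)$ satisfies $\HNF(\pi^{\omega}_{X}(t))=1$, hence $\NHNF(\pi^{\omega}_{X}(t))=1\geq 1$. For the inductive step on $\sigma=\tau\To\rho$, claim (i.) proceeds as in Lemma \ref{lemma:unobis}: apply the IH (ii.) to put $x\in\GHRED_{\tau}^{X,1}(S)$, deduce $tx\in\GHRED_{\rho}^{X,r}(S)$, then use IH (i.) on $\rho$ to get $\HNF(\pi^{\omega}_{X}(t)x)\geq\srank{\rho}\cdot r$, and conclude $\HNF(\pi^{\omega}_{X}(t))\geq\srank{\rho}\cdot r=\srank{\sigma}\cdot r$ (using $\srank{\sigma}=\srank{\tau}\cdot\srank{\rho}\leq\srank{\rho}$, or more carefully tracking the $\srank{\tau}$ factor as in the original). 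For claim (ii.), given $t\in\GNEUT(S)$, $S'\subseteq S$ and $u\in\GHRED_{\tau}^{X,1}(S')$, I would use IH (i.) on $\tau$ to see $\pi^{\omega}(u)\in\NHNF^{\srank{\tau}}$, so $tu\in\GNEUT(S')$, then apply IH (ii.) on $\rho$. The conjunction case $\sigma=\tau\land\rho$ is routine via the $\max$ in $\srank{\cdot}$ and the intersection in the definition of $\GHRED_{\sigma\land\tau}$.

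The genuinely new case, and the main obstacle, is $\sigma=\B C^{q}\tau$, because here the definition of $\GHRED_{\B C^{q}\sigma}^{X,r}(S)$ diverges from the $\EVL$ version: it is phrased in terms of the distribution $\CC D^{1}$ rather than a simple rescaling of the exponent. For claim (i.), if $t\in\GHRED_{\B C^{q}\tau}^{X,r}(S)$ and $\omega\in S$, then $\sum_{u\in\GHRED_{\tau}^{\emptyset,r}}\CC D^{1}_{\pi^{\omega}_{X}(t)}(u)\geq q$; I would apply IH (i.) on $\tau$ to each such $u$ to get $\HNF(u)\geq\srank{\tau}\cdot r$, and then estimate $\HNF(\pi^{\omega}_{X}(t))=\sum_{u}\CC D^{1}_{\pi^{\omega}_{X}(t)}(u)\cdot\HNF(u)$ (unfolding the definition of $\CC D_{t}$ through $\CC D^{1}$ and the one-step decomposition), bounding it below by $q\cdot\srank{\tau}\cdot r=\srank{\B C^{q}\tau}\cdot r$. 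The delicate point is matching the recursive structure of $\HNF$/$\NHNF$ against the single-layer distribution $\CC D^{1}$, and ensuring the supremum defining $\NHNF$ is respected; this is where I expect to lean on Lemma \ref{lemma:dduno} and Lemma \ref{lemma:redallh}. For claim (ii.), if $t\in\GNEUT(S)$ then $\HNF(\pi^{\omega}_{X}(t))=1$ and $\pi^{\omega}_{X}(t)$ is itself a head normal value, so $\CC D^{1}_{\pi^{\omega}_{X}(t)}=\delta_{\pi^{\omega}_{X}(t)}$ is concentrated on $\pi^{\omega}_{X}(t)$; using IH (ii.) on $\tau$ to place $\pi^{\omega}_{X}(t)\in\GHRED_{\tau}^{\emptyset,1}$ (via Lemma \ref{lemma:efs}'s analogue), the required sum $\sum_{u\in\GHRED_{\tau}^{\emptyset,1}}\CC D^{1}_{\pi^{\omega}_{X}(t)}(u)$ equals $1\geq q\cdot\ldots$ after the appropriate rescaling, giving $t\in\GHRED_{\B C^{q}\tau}^{X,1}(S)$. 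I would end by noting that $\B C^{q}\tau$ is non-trivial whenever $\tau$ is, so the induction hypothesis for claim (i.) applies wherever needed.
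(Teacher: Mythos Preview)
Your proposal is correct and follows the same inductive skeleton as the paper's proof. The one place the arguments diverge is the $\B C^{q}\tau$ case of claim (i.): rather than a uniform sum $\sum_{u}\CC D^{1}_{\pi^{\omega}_{X}(t)}(u)\cdot\HNF(u)$, the paper splits on whether (the PNF of) $\pi^{\omega}_{X}(t)$ is a pseudo-value or has the form $\nu a.t'$. In the pseudo-value case $\CC D^{1}$ is a Dirac mass, so the hypothesis forces $\pi^{\omega}_{X}(t)\in\GHRED_{\tau}^{\emptyset,r}$ and the IH gives the bound directly. In the $\nu a.t'$ case the paper rewrites the hypothesis as a measure bound $\mu\{\omega'\mid \pi^{\omega'}_{\{a\}}(t')\in\GHRED_{\tau}^{\emptyset,r}\}\geq q$, applies the IH pointwise to the fibres, and then concludes the bound on $\nu a.t'$; this last passage is essentially the content of (the $\EVLL$ adaptation of) Lemma~\ref{lemma:nu1} with a single set $S_{1}$, not of Lemmas~\ref{lemma:dduno} or~\ref{lemma:redallh} as you suggest. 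Your uniform-sum formulation would have to be unpacked into exactly these two cases to make that step precise, so the case split is the more direct route. A minor aside: in the $\tau\To\rho$ case of (ii.), your appeal to IH (i.) on $\tau$ is unnecessary, since $\GNEUT$ is closed under plain application $tu$ with no constraint on $u$; the paper simply records $tu\in\GNEUT(S')$ and invokes IH (ii.) on $\rho$.
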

\begin{proof}
By induction on $\sigma$:

\begin{varitemize}

\item if $\sigma=[]$, both claims are immediate.
\item if $\sigma\in \{o,\HNORM,\NORM\}$, claim (i.) holds by definition, and claim (ii.) follows from the fact that for all $t\in \GNEUT(S)$ and for all $\omega \in S$, $\HNF(\pi^{\omega}_{X}(t))=1$, so $t\in \GHRED_{\sigma}^{X,1}(S)$.

\item if $\sigma=\tau\To \rho$ and $t\in \GHRED_{\sigma}^{X,r}(S)$, then, since by IH $x\in \GHRED_{\tau}^{X,1}(S)$, $tx\in \GHRED_{\tau}^{X,1}(S)$, hence for all $\omega\in S$,
$\HNF(\pi^{\omega}_{X}(t))\geq \HNF(\pi^{\omega}_{X}(tx))\geq \srank\rho\cdot r= \srank\sigma\cdot r$.

Moreover, if $t\in \GNEUT(S)$, then for all $S'\subseteq S$, $u\in \GHRED_{\tau}^{X,1}(S')$ and $\omega\in S'$, 
$tu\in \GNEUT(S)$, so by IH $tu\in \GHRED_{\rho}^{X,r}(S')$, and we conclude then $t\in \GHRED_{\sigma}^{X,r}(S)$.

\item if $\sigma=\tau\land \rho$ and $t\in \GHRED_{\sigma}^{X,r}(S)$, then
$t\in \GHRED_{\tau}^{X,r}(S)$ and $t\in \GHRED_{\rho}^{X,r}(S)$, so by IH for all $\omega\in S$,
$\HNF(\pi^{\omega}_{X}(t))\geq \max\{r\cdot\srank \tau,r\cdot \srank \rho\}=
r\cdot \max\{\srank \tau, \srank\rho\}=r\cdot \srank \sigma$.

Moreover, it $t\in \GNEUT(S)$, by IH $t\in \GHRED_{\tau}^{X,1}(S)$ and
$t\in \GHRED_{\rho}^{X,1}(S)$, whence $t\in \GHRED_{\sigma}^{X,1}(S)$.

\item if $\sigma=\B C^{q}\tau$, and $t\in \GHRED_{\B C^{q}\tau}^{X,r}(S)$, let $\omega \in S$; then we consider two cases:
\begin{varitemize}

\item if $\pi^{\omega}_{X}(t)$ (that we can suppose being in $\CC T$) is a pseudo-value, then 
$\sum_{w\in \GHRED_{\tau}^{X,r}}\CC D^{1}_{t}(w)\geq q>0$ implies
 $\pi^{\omega}_{X}(t)\in \GHRED_{\tau}^{X,r}$, so by IH we deduce 
 $\HNF(\pi^{\omega}_{X}(t))\geq r\cdot \srank\tau \geq r\cdot q\cdot \srank \tau=r\cdot \srank\sigma$.
 
 \item if $\pi^{\omega}_{X}(t)=\nu a.t'$, then
 $\sum_{w\in \GHRED_{\tau}^{X,r}}\CC D^{1}_{t}(w)\geq q>0$ implies, by IH, 
that $\mu( \{ \omega'\in 2^{\BB N}\mid \pi^{\omega'}(t')\in \GHRED_{\tau}^{\emptyset,r}\}) \geq\mu( 
\{\omega'\in 2^{\BB N}\mid \HNF(\pi^{\omega'}(t))\geq {\srank{\tau}r} \}) \geq q$, and this implies
$\HNF(\pi^{\omega}_{X}(t))\geq q\cdot r\cdot \srank\tau=r\cdot \srank \sigma$.

\end{varitemize}

Moreover, if $t\in \GNEUT(S)$, then for all $\omega \in S$, $\CC D^{1}_{\pi^{\omega}_{X}(t)}=\delta_{\pi^{\omega}_{X}(t)}$, and since by the IH
$t\in \GHRED_{\tau}^{X,r}(S)$, 
we deduce  
$
\sum_{u\in \GHRED_{\tau}^{X,r}}\CC D^{1}_{\pi^{\omega}_{X}(t)}(u)= \delta_{\pi^{\omega}_{X}(t)}(\pi^{\omega}_{X}(t))=1\geq q
$, 
so $t\in \GHRED_{\B C^{q}\tau}^{X,r}(S)$. 
\end{varitemize}
\end{proof}

\begin{lemma}\label{lemma:hnv}
For any non-trivial type $\sigma$, and  term $t$ name-closed and $\nu$-safe, if $t\in \GHRED_{\sigma}^{\emptyset,r}$, then
$\NHNF(t)=1$.
\end{lemma}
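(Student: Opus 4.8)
The plan is to argue by induction on the structure of the non-trivial type $\sigma$ (Definition~\ref{def:nontrivial2}), the whole argument resting on one structural observation about $\nu$-safe terms. Since $\nu$-safety is preserved by $\redalll$ (and hence by $\redallh$) directly from its definition, every $\redallh$-reduct of a name-closed $\nu$-safe $t$ is again name-closed and $\nu$-safe, and therefore its $\redpermm$-normal form is a pseudo-value (it cannot start with $\nu$, by the $\EVLL$ analogue of Lemma~\ref{lemma:enne}). Consequently each such reduct $u$ has a Dirac distribution $\CC D_{u}$, so $\HNF(u)\in\{0,1\}$ and hence $\NHNF(t)\in\{0,1\}$; moreover $\NHNF(t)=1$ precisely when $t$ \emph{head-converges}, i.e.\ $t\redallh^{*}v$ for some $v\in\mathrm{HNV}^{\{\}}$. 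I will also use the trivial monotonicity $\NHNF(t)\geq\NHNF(u)$ whenever $t\redallh^{*}u$, which holds because $\NHNF$ is a supremum over $\redallh$-reducts. Thus in every case it suffices to establish positivity of $\NHNF$, which the $\{0,1\}$-dichotomy then upgrades to $\NHNF(t)=1$.

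The base case $\sigma\in\{o,\HNORM,\NORM\}$ is immediate: unfolding $t\in\GHRED_{\sigma}^{\emptyset,r}$ at the single point $*$ of $(2^{\BB N})^{\emptyset}$ gives $\NHNF(t)\geq r>0$, whence $\NHNF(t)=1$ by the dichotomy. For $\sigma=\alpha\land\beta$, non-triviality forces one factor, say $\alpha$, to be non-trivial; since $\GHRED_{\alpha\land\beta}^{\emptyset,r}=\GHRED_{\alpha}^{\emptyset,r}\cap\GHRED_{\beta}^{\emptyset,r}$, we have $t\in\GHRED_{\alpha}^{\emptyset,r}$ and conclude by the induction hypothesis. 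For $\sigma=\B C^{q}\alpha$ with $\alpha$ non-trivial, membership means $\sum_{u\in\GHRED_{\alpha}^{\emptyset,r}}\CC D^{1}_{t}(u)\geq q>0$, so some $u_{0}\in\GHRED_{\alpha}^{\emptyset,r}$ lies in the support of $\CC D^{1}_{t}$. For $\nu$-safe $t$ this support is the singleton $\{\hat t\}$, where $\hat t$ is the pseudo-value $\redpermm$-normal form of $t$, a $\redallh$-reduct of $t$; hence $\hat t=u_{0}\in\GHRED_{\alpha}^{\emptyset,r}$ is $\nu$-safe of non-trivial type, so the induction hypothesis yields $\NHNF(\hat t)=1$, and monotonicity gives $\NHNF(t)\geq\NHNF(\hat t)=1$.

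The function case $\sigma=\alpha\To\beta$ with $\beta$ non-trivial is the crux. I would feed $t$ a fresh variable $x$: by Lemma~\ref{lemma:unobiss}(ii.) one has $x\in\GNEUT(\{*\})\subseteq\GHRED_{\alpha}^{\emptyset,1}$, so by the defining clause of $\GHRED_{\alpha\To\beta}^{\emptyset,r}$ we obtain $tx\in\GHRED_{\beta}^{\emptyset,r}$. One checks that $tx$ is again $\nu$-safe: the only way a $\nu$-binder can surface at the head of $tx$ is through $(\nu a.t')x\redpermm\nu a.(t'x)$, which would require $t$ itself to reduce to a $\nu$-form, contradicting $\nu$-safety of $t$. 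Applying the induction hypothesis to $tx$ (non-trivial $\beta$) gives $\NHNF(tx)=1$, i.e.\ $tx$ head-converges. It then remains to transfer head-convergence from $tx$ back to $t$; here I would invoke the $\redallh$-theory of $\EVLL$ from Subsection~\ref{subs:evll}: for a $\nu$-safe term the tree $RBT$ is a single unbranching path, and prefixing the head context $[\,]x$ leaves its finiteness unchanged, so $RBT(t)$ is finite iff $RBT(tx)$ is; by the $\EVLL$ analogue of Lemma~\ref{lemma:hformrbt}, $t$ head-converges iff $tx$ does. Hence $t$ head-converges and $\NHNF(t)=1$.

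The main obstacle is precisely this transfer step in the function case, together with its two supporting facts: (a) that for $\nu$-safe terms head-divergence of $t$ lifts to $tx$ (equivalently, that head-convergence of $t$ and of $tx$ coincide), which I expect to require a careful use of the head-reduction and $RBT$ machinery developed for $\EVLL$; and (b) the clean justification of the Dirac collapse of $\CC D^{1}_{t}$ and of $\CC D_{t}$ for $\nu$-safe terms, in particular handling pseudo-values that are $\redpermm$-normal but not yet head normal. The remaining cases are essentially bookkeeping once the $\{0,1\}$-dichotomy and the monotonicity of $\NHNF$ under $\redallh$ are in place.
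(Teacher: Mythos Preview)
Your proposal follows the same inductive skeleton as the paper's proof: base case via positivity plus the $\{0,1\}$-dichotomy, $\land$ by projecting to a non-trivial factor, $\B C^{q}$ by collapsing $\CC D^{1}_{t}$ to a Dirac and applying the IH, and $\To$ by feeding a fresh variable and invoking the IH on $tx$. So the approach is right.

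There is, however, one genuine gap in your $\nu$-safety argument for $tx$. You claim that ``the only way a $\nu$-binder can surface at the head of $tx$ is through $(\nu a.t')x\redpermm\nu a.(t'x)$'', but this overlooks $\beta$-reduction: if $t\redalll^{*}\lambda y.u$, then $tx\redalll^{*}u[x/y]$, and a $\nu$ could in principle surface from inside $u$. The paper closes this case explicitly: if $u\redalll^{*}\nu a.t'$ then $t\redalll^{*}\lambda y.\nu a.t'\redpermm\nu a.\lambda y.t'$ via rule \eqref{eq:nulambda}, contradicting $\nu$-safety of $t$. You should add this case.

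A secondary remark: you over-engineer the transfer $\NHNF(tx)=1\Rightarrow\NHNF(t)=1$. For name-closed $\nu$-safe terms this is just the elementary fact that head-reduction of $tx$ first head-reduces $t$ in function position, so if $t$ diverged under $\redallh$ then $tx$ would too; no $RBT$ machinery is needed, and indeed the paper dispatches it in a single ``whence''. Your $\B C^{q}$-case detour through the PNF $\hat t$ is fine but also unnecessary: since a $\nu$-safe $t$ (in PNF) is a pseudo-value, $\CC D^{1}_{t}=\delta_{t}$, so the sum condition already gives $t\in\GHRED_{\alpha}^{\emptyset,r}$ directly, which is how the paper argues.
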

\begin{proof}
By induction on $\sigma$:
\begin{varitemize}
\item if $\sigma\in \{o,\HNORM,\NORM\}$, from $t\in \GHRED_{\sigma}^{\emptyset,r}$ we deduce $\NHNF(t)\geq r>0$, and since for all $u$ such that $t\redallh^{*}u$, 
$\CC D_{u}=\delta_{u}$, 
we deduce that $\sup\{ \HNF(u) \mid t\redallh^{*}u\}=\sup\{\sum_{v\in \HNF^{\{\}}}\delta_{u}(v)\mid t\redallh^{*}u\}>0$, 
this implies that $t$ reduces to a HNV, whence $\NHNF(t)=1$.

\item if $\sigma=\tau\To \rho$ and $t\in \GHRED_{\sigma}^{\emptyset,r}$ then since by Lemma \ref{lemma:unobiss} $x\in \GHRED_{\tau}^{\emptyset,1}$, 
$tx\in \GHRED_{\rho}^{\emptyset,r}$; 
suppose $tx\redallh^{*} \nu a.t'$; then since $t$ is $\nu$-safe, the only possibility is that 
$t\redallh^{*} (\lambda x.u)$ and $u\redallh^{*} \nu a.t'$; but then $t\redallh^{*}\lambda x.\nu a.t'\redallh \nu a.\lambda x.t'$, against the hypothesis.
We conclude that also $tx$ must be $\nu$-safe, so by IH 
$\NHNF(tx)=1$, whence $\NHNF(t)=1$.

\item if $\sigma= \tau \land \rho$ and $t\in \GHRED_{\sigma}^{\emptyset,r}$, then $t\in \GHRED_{\tau}^{\emptyset,r}$, so by IH, 
$\NHNF(t)=1$.
\item if $\sigma= \B C^{s}\tau$ and $t\in \GHRED_{\sigma}^{\emptyset,r}$, then
since $\CC D_{t}=\delta_{t}$, from 
$\sum_{w\in \GHRED_{\tau}^{\emptyset,r}}\delta_{t}(w)\geq s>0$, it follows that $t\in \GHRED_{\tau}^{\emptyset,r}$, so by IH 
$\NHNF(t)=1$

\end{varitemize}
\end{proof}

\begin{lemma}\label{lemma:graff}
If $t\in \GHRED_{\sigma\To \tau}^{X,q}(S)$ and $u\in \GHRED_{\B C^{s}\sigma}^{X,1}(S)$, where $\sigma$ is non-trivial, then $\{t\}u\in \GHRED_{\B C^{s}\tau}^{X,q}(S)$.
\end{lemma}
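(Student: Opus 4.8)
The plan is to reduce to the name-closed case and then analyse the outermost structure of the argument. By Lemma~\ref{lemma:efs} it suffices to fix $\omega\in S$, set $t'=\pi^\omega_X(t)$ and $u'=\pi^\omega_X(u)$ — so that $t'\in\GHRED_{\sigma\To\tau}^{\emptyset,q}$ and $u'\in\GHRED_{\B C^{s}\sigma}^{\emptyset,1}$ — and prove $\{t'\}u'\in\GHRED_{\B C^{s}\tau}^{\emptyset,q}$; the general claim then follows by Lemma~\ref{lemma:efs} again. Unfolding the counting clause, the hypothesis on $u'$ reads $\sum_{w\in\GHRED_\sigma^{\emptyset,1}}\CC D^1_{u'}(w)\geq s$, and the goal is $\sum_{v\in\GHRED_\tau^{\emptyset,q}}\CC D^1_{\{t'\}u'}(v)\geq s$. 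I would reduce $u'$ to its $\redpermm$-normal form and split on whether it begins with a generator.

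For the principal case, where $u'$ reduces to $\nu a.u''$ with $u''$ an $a$-tree of pseudo-values, the rule \eqref{eq:e0} gives $\{t'\}u'\redallh^{*}\{t'\}(\nu a.u'')\redpermm\nu a.\,t'u''$, so by Lemma~\ref{lemma:sbetazza} it is enough to prove $\nu a.\,t'u''\in\GHRED_{\B C^{s}\tau}^{\emptyset,q}$. Here I would invoke Lemma~\ref{lemma:nu3} with the measurable set $S_0=\{\omega'\mid\pi^{\omega'}_{\{a\}}(u'')\in\GHRED_\sigma^{\emptyset,1}\}$, which is a union of cylinders, hence measurable, and whose measure is exactly $\sum_{w\in\GHRED_\sigma^{\emptyset,1}}\CC D^1_{u'}(w)\geq s$. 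For every $\omega'\in S_0$ one has $\pi^{\omega'}_{\{a\}}(t'u'')=t'\,w$ with $w=\pi^{\omega'}_{\{a\}}(u'')\in\GHRED_\sigma^{\emptyset,1}$, and since $w$ is name-closed the arrow clause defining $\GHRED_{\sigma\To\tau}^{\emptyset,q}$ yields $t'w\in\GHRED_\tau^{\emptyset,q}$; by Lemma~\ref{lemma:efs} this says $t'u''\in\GHRED_\tau^{\{a\},q}(S_0)$. Lemma~\ref{lemma:nu3} then delivers $\nu a.\,t'u''\in\GHRED_{\B C^{s}\tau}^{\emptyset,q}$, closing this case.

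For the remaining case the normal form of $u'$ is a value $w_0$ not beginning with a generator; then $\CC D^1_{u'}=\delta_{w_0}$ and $s>0$ force $w_0\in\GHRED_\sigma^{\emptyset,1}$, while $\{t'\}w_0$ is stuck, since none of \eqref{eq:e0}–\eqref{eq:e2} applies to a value argument. The idea is that this term behaves like a head normal value. Being a $\redpermm$-normal form not starting with $\nu$, the value $w_0$ is $\nu$-safe, so, using that $\sigma$ is non-trivial, Lemma~\ref{lemma:hnv} gives $\NHNF(w_0)=1$; hence $w_0\redallh^{*}w_0^{*}$ with $w_0^{*}$ a head normal value, $\{t'\}w_0\redallh^{*}\{t'\}w_0^{*}$, and by Lemma~\ref{lemma:hnfecvl} the latter is a head normal value. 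I would then show, by induction on $\tau$ in the style of the neutral-term clause of Lemma~\ref{lemma:unobiss} (every applicative extension $\TT H[\{t'\}w_0^{*}]$ remains blocked, hence head-normal), that $\{t'\}w_0^{*}\in\GHRED_\tau^{\emptyset,q}$; since $\{t'\}w_0^{*}$ is a pseudo-value this upgrades to membership in $\GHRED_{\B C^{s}\tau}^{\emptyset,q}$, and Lemma~\ref{lemma:sbetazza} transfers it back along $\{t'\}w_0\redallh^{*}\{t'\}w_0^{*}$.

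I expect the main obstacle to be exactly this ``blocked value'' case: one must verify carefully, via Lemma~\ref{lemma:hnfecvl}, that $\{t'\}$ applied to a head normal value — and every further application of it — is again a head normal value, \emph{even when $t'$ is itself generator-headed}, and that this pseudo-neutral behaviour is stable under the applicative and counting closures of the predicate, essentially extending the $\GNEUT$ machinery to functions that block for want of a $\redpermm$-redex rather than because their head is a variable. By contrast, the $\nu$-headed case carries all the probabilistic content and should go through mechanically once $S_0$ is identified and Lemmas~\ref{lemma:nu3} and~\ref{lemma:sbetazza} are applied.
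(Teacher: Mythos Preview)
Your proposal is essentially the paper's proof: fix $\omega\in S$, split on whether $\pi^{\omega}_{X}(u)$ head-reduces to a $\nu$-headed term or is $\nu$-safe, handle the first case by permuting $\{t\}\nu a.u'$ to $\nu a.tu'$ followed by a measure comparison (done directly in the paper, via Lemma~\ref{lemma:nu3} in your version --- same content), and handle the second via Lemma~\ref{lemma:hnv} plus a neutrality argument. The paper dispatches the $\nu$-safe case in one line by asserting $\{t\}u\in\GNEUT(S)$ and invoking Lemma~\ref{lemma:unobiss}; your worry that this step needs care (since $t$ itself is not assumed neutral, so the $\{t\}u$-clause of $\GNEUT$ does not literally apply) is well-placed, and the induction-on-$\tau$ argument you sketch is exactly what that appeal must unpack to.
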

\begin{proof}
Let $\omega\in S$. We distinguish two cases:
\begin{varitemize}

\item $\pi^{\omega}_{X}(u)\redallh^{*} \nu a.u'$;
then, letting 
$W_{\sigma}:=\GHRED_{\sigma}^{X,1}(S)$ and
$W_{\tau}:=\GHRED_{\tau}^{X,q}(S)$, we have that  
$\sum_{w\in W}\CC D_{\pi^{\omega}_{X}(\nu a.tu')}^{1}(w)=
\sum_{w\in W}\mu(\{\omega' \mid \pi^{\omega+\omega'}_{X\cup\{a\}}(tu')=w\}
=\mu\{ \omega'\mid \pi^{\omega+\omega'}_{X\cup\{a\}}(tu')\in W\}
=\mu\{ \omega'\mid t\pi^{\omega+\omega'}_{X\cup\{a\}}(u')\in W\}\geq 
\mu\{ \omega'\mid \pi^{\omega+\omega'}_{X\cup\{a\}}(u')\in U\}=
\sum_{w\in U}\CC D^{1}_{\nu a.u}(w)\geq  s
$, which proves that $\nu a.tu'\in W_{\tau}$. Since $\{t\}u\redallh^{*}\nu a.tu'$, we conclude then $\{t\}u\in W_{\tau}$ 
from Lemma \ref{lemma:sbetazza}.
%
%

\item $\pi^{\omega}_{X}(u)$ is $\nu$-safe; then from $\pi^{\omega}_{X}(u)\in \GHRED_{\sigma}^{\emptyset,q}$ we deduce, by Lemma \ref{lemma:hnv}, that $\NHNF(\pi^{\omega}_{X}(u))=1$. We deduce then that $\{t\}u\in \GNEUT(S)$, and by Lemma \ref{lemma:unobiss} we conclude $\{t\}u\in \GHRED_{\B C^{s}\sigma}^{X,q}(S)$.
%


\end{varitemize}
\end{proof}

\begin{proposition}\label{prop:normalizationbis}
If $\Gamma \vdash^{X}t: \bone \Pto \FF s$ is derivable in $\LCPL$, where $\Gamma=
\{x_{1}:\FF s_{1},\dots, x_{m}:\FF s_{n}\}$, then for all $S\subseteq \model{\bone}_{X}$, for all $u_{i}\in \GHRED_{\FF s_{i}}^{X,1}(S)$,  $t[u _{1}/x_{1},\dots, u_{m}/x_{m}]\in \GHRED_{\FF s}^{X,1}(S)$.
\end{proposition}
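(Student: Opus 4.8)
The plan is to prove Proposition~\ref{prop:normalizationbis} by induction on the typing derivation of $t$ in $\LCPL$, following closely the structure of the proof of Proposition~\ref{prop:normalization} but now working with the family $\GHRED_{\sigma}^{X,r}(S)$ adapted to $\EVLL$. The key difference from the $\TCINT$ case is that $\LCPL$ contains the CbV application rule $(\{\})$, which is absent from $\TCINT$, so the crux of the argument will be handling this rule using the machinery developed in Lemmas~\ref{lemma:graff} and~\ref{lemma:lambdasgraffa}. Since the types of $\LCPL$ are all non-trivial (indeed of the form $\B C^{\vec q}\sigma$), all the preparatory lemmas about $\GHRED_{\sigma}$ apply directly.

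First I would treat the rules shared with $\TCINT$ exactly as in Proposition~\ref{prop:normalization}: the identity rule uses Lemma~\ref{lemma:inclusion}; the $(\vee)$-rule uses Lemma~\ref{lemma:efs} to localise reducibility to each disjunct $\model{\bone_j}_X$; the two $(\oplus)$-rules use the fact that the value of a name determines which branch $\pi^\omega_X$ selects, again invoking Lemma~\ref{lemma:efs}; the $(\lambda)$-rule uses the adapted version of Lemma~\ref{lemma:quattro} (whose $\EVLL$ analogue follows from Lemma~\ref{lemma:trebis}); the $(@)$-rule unfolds the definition of $\GHRED_{\sigma\To\tau}$; and the counting rule $(\mu)$ uses Lemma~\ref{lemma:nu3}, which is precisely the $\EVLL$-adaptation of Lemma~\ref{lemma:nu2} producing membership in $\GHRED_{\B C^s\sigma}^{X,r}(S')$. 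In each of these cases the induction hypothesis gives reducibility of the premises and the corresponding lemma assembles the conclusion; the substitution $t[u_1/x_1,\dots,u_m/x_m]$ is carried through verbatim.

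The main new case, and the one I expect to be the principal obstacle, is the rule $(\{\})$:
$$
\AXC{$\Gamma \vdash^{X}t:\btwo\Pto \B C^{\vec q}(\FF s \To \tau)$}
\AXC{$ \Gamma \vdash^{X} u: \bthree\Pto \B C^{r}\FF s$}
\AXC{$\bone\vDash^{X}\btwo\land \bthree$}
\RL{$(\{ \})$}
\TIC{$\Gamma \vdash^{X} \{t\}u: \bone\Pto \B C^{rs}\B C^{\vec q} \tau$}
\DP
$$
Here the induction hypothesis yields, for the substituted terms $t^*$ and $u^*$, membership $t^*\in\GHRED_{\B C^{\vec q}(\FF s\To\tau)}^{X,1}(S)$ and $u^*\in\GHRED_{\B C^r\FF s}^{X,1}(S)$ for any $S\subseteq\model{\bone}_X\subseteq\model{\btwo}_X\cap\model{\bthree}_X$. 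The goal is to show $\{t^*\}u^*\in\GHRED_{\B C^{rs}\B C^{\vec q}\tau}^{X,1}(S)$, matching the type $\FF s_{\B C^{qs}\B C^{\vec s}B}$ produced by the translation. The plan is first to use Lemma~\ref{lemma:lambdasgraffa} to rewrite $t^*$ as an element of $\GHRED_{\FF s\To\B C^{\vec q}\tau}^{X,1}(S)$, pushing the counting prefix $\B C^{\vec q}$ across the arrow so that the CbV application lemma becomes applicable; then to apply Lemma~\ref{lemma:graff} with $\sigma:=\FF s$ (non-trivial) and $\tau:=\B C^{\vec q}\tau$ and $s:=r$, obtaining $\{t^*\}u^*\in\GHRED_{\B C^r(\B C^{\vec q}\tau)}^{X,1}(S)$. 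The delicate point is checking that the counting exponents line up: the $s$ appearing in the conclusion $\B C^{rs}$ comes from the measure side premise governing how $\nu$-generators inside $u^*$ are absorbed, and I must verify that the permutation rule $(\{\}\nu)$ (Eq.~\eqref{eq:e0}), whose semantics is captured by the two-case analysis of Lemma~\ref{lemma:graff} (distinguishing $\pi^\omega_X(u^*)\redallh^*\nu a.u'$ from $\nu$-safe $u^*$), indeed yields the product $r\cdot s$ rather than some weaker bound.

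Once the $(\{\})$ case is settled, the proposition follows, and combining Proposition~\ref{prop:normalizationbis} with part~(i.) of Lemma~\ref{lemma:unobiss} (instantiated at $S=\{\omega\}$ and the empty substitution for a closed term) will give the head-normalization bound $\NHNF(t)\geq\prod_i q_i$ claimed in the second half of Theorem~\ref{thm:normalization}. The hard part is genuinely the bookkeeping of counting exponents in the CbV rule; the rest is a faithful transcription of the $\TCINT$ argument to the $\EVLL$ reducibility family, for which every needed closure property (substitution via Lemma~\ref{lemma:trebis}, the $\nu$-rule via Lemma~\ref{lemma:nu3}, and neutrality via $\GNEUT$) has already been established.
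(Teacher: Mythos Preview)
Your proposal is correct and follows exactly the paper's approach: reuse the argument of Proposition~\ref{prop:normalization} for the shared rules, handle $(\mu)$ via Lemma~\ref{lemma:nu3}, and handle $(\{\})$ by combining Lemma~\ref{lemma:lambdasgraffa} with Lemma~\ref{lemma:graff}. Your worry about the extra $s$ in the exponent $\B C^{rs}$ is an artifact of a typo in the displayed rule; in the proof the conclusion type is simply $\B C^{s}\B C^{\vec q}\sigma$ (with $s$ the exponent on the argument), which is exactly what Lemma~\ref{lemma:graff} produces after Lemma~\ref{lemma:lambdasgraffa}, so no further exponent bookkeeping is needed.
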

\begin{proof}
We can argue as in the proof of Proposition \ref{prop:normalization}, with the following two new cases:

\begin{varitemize}
\item if $t=\{t_{1}\}t_{2}$ and the last rule is
$$
\AXC{$\Gamma \vdash^{X} t_{1}: \btwo \Pto \B C^{\vec q}(\FF s\To \sigma)$}
\AXC{$\Gamma\vdash^{X} t_{2}: \bthree \Pto \B C^{s}\FF s$}
\AXC{$\bone \vDash \btwo \land\bthree$}
\RL{($\{\}$)}
\TIC{$\Gamma \vdash^{X}t: \bone\Pto \B C^{s}\B C^{\vec q}\sigma$}
\DP
$$
then by IH for all $u_{i}\in \GHRED_{\FF s_{i}}^{X,1}(S)$, $t_{1}[u_{1}/x_{1},\dots, u_{n}/x_{n}]\in \GHRED_{\B C^{\vec q}(\FF s\To \sigma)}^{X,1}(S)$ and 
$t_{2}[u_{1}/x_{1},\dots, u_{n}/x_{n}]\in \GHRED_{\B C^{s}\FF s}^{X,1}(S)$;
then by Lemma \ref{lemma:lambdasgraffa} and Lemma \ref{lemma:graff} (observing that $\FF s$ is non-trivial) we deduce \\ 
$(\{t_{1}\}t_{2})[u_{1}/x_{1},\dots, u_{n}/x_{n}]= \{t_{1}[u_{1}/x_{1},\dots, u_{n}/x_{n}]\}(t_{2}[u_{1}/x_{1},\dots, u_{n}/x_{n}])
\in \GHRED_{\B C^{\vec q}\sigma}^{X,1}(S)$.

\item If $t=\nu a.u$ and the last rule is
\begin{prooftree}
\AxiomC{$\Gamma\vdash^{X\cup\{a\}} u: \btwo\land \bthree\Pto \B C^{q_{j}}\sigma$}
\AxiomC{$\vDash\mu({\bthree})\geq r$}
\AxiomC{$\bone \vDash \btwo$}
\RightLabel{$\TN$}
\TrinaryInfC{$\Gamma\vdash^{X} \nu a.u: \bone\Pto \B C^{r}\sigma$}
\end{prooftree}
where $a\notin FV(\Gamma)\cup FV(\btwo)$, then let $S\subseteq \model{\bone}_{X}\subseteq \model{\btwo}_{X}$ and $u_{1}\in \GHRED^{X,1}_{\FF s_{1}}(S),\dots
u_{n}\in \GHRED^{X ,1}_{\FF s_{n}}(S)$.
Let $T=\{g+f\mid \omega \in S\}\subseteq (2^{\BB N})^{X\cup\{a\}}$, which is measurable since counter-image of a measurable set through the projection function; then since $\FN(u_{i})\subseteq X$, we deduce using Lemma \ref{lemma:efs} that $u_{i}\in \GHRED_{\FF s_{i}}^{X\cup\{a\},1}(T\cap\model{ \bthree_{j}}_{X\cup\{a\}})$. 
By IH and the hypothesis we deduce then that 
\begin{varitemize}
\item $u[u_{1}/x_{1},\dots, u_{n}/x_{n}]\in \GHRED^{X\cup\{a\},1}_{\FF s}(T\cap\model{\bthree_{j}}_{X\cup\{a\}})$;
\item for all $\omega \in S$, $\mu(\Pi^{\omega}(T\cap\model{ \bthree_{j}}_{X\cup\{a\}}))\geq r$;

\end{varitemize}

Hence, by Lemma~\ref{lemma:nu3} we conclude that 
$\nu a.u[u_{1}/x_{1},\dots, u_{n}/x_{n}]= (\nu a.u)[u_{1}/x_{1},\dots, u_{n}/x_{n}] \in \GHRED_{\FF s}^{X,\sum_{j}r_{j}q_{j}}(S)=\GHRED_{\B C^{r}\FF s}^{X,1}(S)$.

\end{varitemize}
\end{proof}


We can now deduce Theorem \ref{thm:normalization} from Proposition \ref{prop:normalizationbis} as in the case of $\TCINT$.

\begin{theorem}
If $\Gamma \vdash^{X}t: \bone \Pto \FF s$, then for all $\omega\in \model{\bone}$, $\NHNF(\pi^{\omega}_{X}(t))\geq \srank \sigma$.
\end{theorem}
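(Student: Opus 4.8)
The plan is to deduce the statement directly from the reducibility machinery already built for $\EVLL$, mirroring exactly the deduction of the $\TCINT$-version of the normalization theorem from Proposition~\ref{prop:normalization}(i.) and Lemma~\ref{lemma:unobis}. The two ingredients are Proposition~\ref{prop:normalizationbis}, which shows that a typable term becomes reducible once its free variables are replaced by reducible terms, and Lemma~\ref{lemma:unobiss}, which simultaneously guarantees that neutral terms are reducible and that membership in $\GHRED^{X,1}_{\FF s}(S)$ for a non-trivial type forces a lower bound on the head-normalization probability.

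First I would instantiate Proposition~\ref{prop:normalizationbis} with the set $S := \model{\bone}_{X}$ (legitimate since $\FN(\bone)\subseteq X$, so $\model{\bone}_{X}=\model{\bone}$) and with the identity substitution, i.e.\ taking $u_{i}:=x_{i}$ for each declaration $x_{i}:\FF s_{i}$ in $\Gamma$. To apply the proposition I must check the side condition $x_{i}\in\GHRED^{X,1}_{\FF s_{i}}(S)$, which is immediate from Lemma~\ref{lemma:unobiss}(ii.): every variable lies in $\GNEUT(S)$ by the base clause of the definition of neutral terms, and $\GNEUT(S)\subseteq\GHRED^{X,1}_{\FF s_{i}}(S)$. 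Proposition~\ref{prop:normalizationbis} then yields $t = t[x_{1}/x_{1},\dots,x_{m}/x_{m}]\in\GHRED^{X,1}_{\FF s}(S)$.

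Second I would appeal to Lemma~\ref{lemma:unobiss}(i.), whose hypothesis demands that $\FF s$ be non-trivial. Here I would record the routine observation that every type of $\LCPL$ is non-trivial: the ground type $o$ is non-trivial by Definition~\ref{def:nontrivial2}, an arrow $\FF t\To\sigma$ inherits non-triviality from its codomain, and each counting quantifier $\B C^{q}(-)$ preserves it. Applying the lemma with $r=1$ to the membership $t\in\GHRED^{X,1}_{\FF s}(S)$ gives, for every $\omega\in S=\model{\bone}$, the bound $\HNF(\pi^{\omega}_{X}(t))\geq\srank{\FF s}\cdot 1=\srank{\FF s}$. Finally, since $\pi^{\omega}_{X}(t)\redallh^{*}\pi^{\omega}_{X}(t)$, the term itself contributes to the supremum defining $\NHNF$, so $\NHNF(\pi^{\omega}_{X}(t))\geq\HNF(\pi^{\omega}_{X}(t))\geq\srank{\FF s}$, which is the claim (writing $\sigma$ for the general type $\FF s$ as in the statement).

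With the two heavy lemmas in hand there is no genuine obstacle left in this final step; the only points requiring care are bookkeeping ones, namely verifying that the identity substitution meets the reducibility side condition, confirming non-triviality of the types involved, and making explicit the trivial inequality $\NHNF\geq\HNF$ that lets one pass from the head-normalization value of the given term to its supremum over head-reducts.
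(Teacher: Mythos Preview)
Your proposal is correct and follows essentially the same route as the paper: the paper simply states that the theorem is deduced from Proposition~\ref{prop:normalizationbis} ``as in the case of $\TCINT$'', which unfolds to exactly the argument you give---instantiate the proposition with the identity substitution (variables are neutral, hence reducible by Lemma~\ref{lemma:unobiss}(ii)), observe that all $\LCPL$-types are non-trivial, and apply Lemma~\ref{lemma:unobiss}(i). Your explicit mention of the passage from $\HNF$ to $\NHNF$ is a harmless clarification of a step the paper leaves implicit.
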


\end{document}